\newcommand{\p}{\partial}
\newcommand{\dd}{{\rm d}}
\begin{document}

\markboth{E. Minguzzi}{Causality theory for closed cone structures with applications}

%
\catchline{}{}{}{}{}
%

\title{Causality theory for closed cone structures with applications
}

\author{Ettore Minguzzi}

\address{Dipartimento di Matematica e Informatica ``U. Dini'', \\ Universit\`a
degli Studi di Firenze, \\Via S. Marta 3,  I-50139 Firenze, Italy.\\
\email{ettore.minguzzi@unifi.it}}
%

\maketitle


\begin{abstract}
We develop causality theory for upper semi-continuous distributions of cones over manifolds generalizing results from mathematical relativity in two directions: non-round cones and non-regular differentiability assumptions. We prove the validity of most results of the regular Lorentzian causality theory including: causal ladder, Fermat's principle, notable singularity theorems in their causal formulation, Avez-Seifert theorem, characterizations of stable causality and global hyperbolicity by means of (smooth) time functions. For instance, we give the first proof for these structures of the equivalence between stable causality, $K$-causality and existence of a time function. The result implies that closed cone structures that admit continuous increasing functions also admit smooth ones. We also study proper cone structures, the fiber bundle analog of proper cones. For them we obtain most results on domains of dependence. Moreover, we prove that horismos and Cauchy horizons are generated by lightlike geodesics, the latter being defined through the achronality property. Causal geodesics and steep temporal functions are obtained with a powerful product trick. The paper also contains  a  study of Lorentz-Minkowski spaces under very weak regularity conditions. Finally, we introduce the concepts of stable distance and stable spacetime solving two well known problems (a) the characterization of Lorentzian manifolds embeddable in Minkowski spacetime, they turn out to be the stable spacetimes, (b) the proof that topology, order and distance (with a formula a la Connes) can be represented by the smooth steep temporal functions. The paper is self-contained, in fact we do not  use any advanced result from mathematical relativity.
\end{abstract}



\newpage
%
\section*{Contents}
\contentsline {section}{\numberline {1}Introduction}{2}
\contentsline {subsection}{\numberline {1.1}Lorentzian embeddings into Minkowski spacetime}{6}
\contentsline {subsection}{\numberline {1.2}The distance formula}{9}
\contentsline {subsection}{\numberline {1.3}Notations and conventions}{10}
\contentsline {section}{\numberline {2}Causality for cone structures}{10}
\contentsline {subsection}{\numberline {2.1}Causal and chronological relations}{16}
\contentsline {subsection}{\numberline {2.2}Notions of increasing functions}{26}
\contentsline {subsection}{\numberline {2.3}Limit curve theorems}{27}
\contentsline {subsection}{\numberline {2.4}Peripheral properties and lightlike geodesics}{29}
\contentsline {subsection}{\numberline {2.5}Future sets and achronal boundaries}{33}
\contentsline {subsection}{\numberline {2.6}Imprisoned causal curves}{36}
\contentsline {subsection}{\numberline {2.7}Stable causality}{37}
\contentsline {subsection}{\numberline {2.8}Reflectivity and distinction}{41}
\contentsline {subsection}{\numberline {2.9}Domains of dependence and Cauchy horizons}{43}
\contentsline {subsection}{\numberline {2.10}Global hyperbolicity and its stability}{46}
\contentsline {subsection}{\numberline {2.11}The causal ladder}{57}
\contentsline {subsection}{\numberline {2.12}Fermat's principle}{59}
\contentsline {subsection}{\numberline {2.13}Lorentz-Finsler space}{62}
\contentsline {subsection}{\numberline {2.14}Stable distance and stable spacetimes}{71}
\contentsline {subsection}{\numberline {2.15}Singularity theorems}{76}
\contentsline {section}{\numberline {3}Special topics}{87}
\contentsline {subsection}{\numberline {3.1}Proper Lorentz-Minkowski spaces and Legendre transform}{87}
\contentsline {subsection}{\numberline {3.2}Stable recurrent set}{97}
\contentsline {subsection}{\numberline {3.3}Hawking's averaging for closed cone structures}{99}
\contentsline {subsection}{\numberline {3.4}Anti-Lipschitzness and the product trick}{102}
\contentsline {subsection}{\numberline {3.5}Smoothing anti-Lipschitz functions}{107}
\contentsline {subsection}{\numberline {3.6}Equivalence between $K$-causality and stable causality}{112}
\contentsline {subsection}{\numberline {3.7}The regular ($C^{1,1}$) theory}{119}
\contentsline {section}{\numberline {4}Applications}{120}
\contentsline {subsection}{\numberline {4.1}Functional representations and the distance formula}{120}
\contentsline {subsection}{\numberline {4.2}Lorentzian embeddings}{130}

\section{Introduction}

In this work we shall generalize causality theory, a by now well known chapter of mathematical relativity \cite{hawking73,beem96,minguzzi06c,chrusciel12}, in two directions: non-round cones and weak differentiability assumptions. Ultimately we use the generalized theory to prove results in {\em Lorentzian} geometry: namely we characterize the Lorentzian submanifolds of (flat) Minkowski spacetime, they turn out to be the stable spacetimes, and prove the smooth Lorentzian distance formula.




 Concerning the weakening of differentiability conditions, Hawking and Ellis \cite[Sec.\ 8.4]{hawking73} already discussed the validity of singularity theorems under a $C^{1,1}$ assumption on the Lorentzian cone distribution. They were concerned that the (geodesic) singularities predicted by  the singularity theorems could  just  signal a violation of the assumed differentiability conditions. If so the spacetime continuum would survive the singularity in a rougher form. Since the optimal differentiability condition for the existence and uniqueness of geodesics is $C^{1,1}$ it was particularly important to weaken the differentiability assumption from $C^2$ to $C^{1,1}$. Furthermore, since the Einstein's equations relate the Ricci tensor to the stress-energy tensor, and since the energy density is  discontinuous at the surface of a gravitational body, say a planet, mathematically one would naturally consider metrics with  second derivative in $L^\infty_{loc}$ which suggests again to consider $C^{1,1}$ metrics.
 Senovilla \cite{senovilla97} stressed this point emphasizing that the $C^2$ condition enters at several key places of causality theory. In fact, the existence of convex neighborhoods, which was continuously used in local arguments, seemed to require that assumption.

 The problem was solved in \cite{minguzzi13d,kunzinger13,kunzinger13b} where it was shown that under a $C^{1,1}$ differentiability assumption convex neighborhoods do exist and the exponential map provides a local lipeomorphism. From here most results of causality theory follow \cite{minguzzi13d}; Kunzinger and collaborators explored the validity of the singularity theorems under weak differentiability assumption \cite{kunzinger15,kunzinger15b,graf17}, while the author  considered the non-isotropic case \cite{minguzzi15}.

At the time some results had already signaled these possibilities. It was clear that causality  theory had to be quite robust. Most arguments were topological in nature, and it was understood that several results really belonged to more abstract theories. For instance, we used Auslander-Levin's theorem on closed relations to infer the existence of time functions, or to prove the equivalence between $K$-causality and stable causality \cite{minguzzi09c}.
Time functions had little to do with Lorentzian cones, rather they were a byproduct of the Seifert relation $J_S$ being closed.
 Meanwhile, Fathi and Siconolfi \cite{fathi12,fathi15} showed that some results of causality theory connected to the existence of smooth time functions in stably causal or globally hyperbolic spacetimes really could be generalized to $C^0$ cone structures. They used methods from weak KAM theory. Recently, Bernard and Suhr \cite{bernard16} have used methods from dynamical system theory, particularly Conley theory, to prove similar results under upper semi-continuity assumptions on the cone distribution.

 Different smoothing techniques which reached the same results in the $C^2$ theory had been developed by Chru\'sciel, Grant and the author \cite{chrusciel13}. They were in line with the traditional strategy associated to the names of Geroch, Seifert and Hawking, who used volume functions to build time functions \cite{geroch70,hawking68,seifert77,hawking73} (Seifert's paper is generally regarded as flawed, but  our work which is much in his spirit, showed the usefulness of some of his ideas on the smoothing problem). The main strategy was to smooth anti-Lipschitz time functions where anti-Lipschitzness was a property naturally shared by Hawking's average time function.

 A first question that we wish to ask in this work is the following: are these volume functions methods still valuable under low differentiability assumptions? We shall prove that they are. We shall obtain all the standard result of the $C^2$ theory under an upper semi-continuity assumption on the cone distribution using volume functions. In fact we shall prove some important results that so far have not appeared in the literature, such as the equivalence between (i) stable causality, (ii)  $K$-causality, and (iii) the existence of a time function, cf.\ Th.\ \ref{nio}. We shall also obtain some known equivalences for global hyperbolicity clarifying the role of Cauchy hypersurfaces cf.\ Th.\ \ref{xxy}.

 The proofs will require some modifications since we met the following difficulties.
 Hawking's average time function is no more anti-Lipschitz, in fact its anti-Lipschitzness was proved using the existence of convex neighborhoods which now are no more at our disposal.
 The problem is solved  constructing an averaged volume function in $M\times \mathbb{R}$, showing that one level set $S_0$ intersects every $\mathbb{R}$ fiber, and taking the graphing function of $S_0$ as time function. This product trick will prove to be extremely powerful, giving optimal conditions for the existence of steep time function and leading to the solution of some other problems that we present in the last section. Another difficulty that might be mentioned is the following: in the globally hyperbolic case the simpler Geroch's time function construction does not work anymore. In order to get the equivalence of global hyperbolicity with the existence of  a Cauchy smooth steep time functions, we improved the proof of the stability of global hyperbolicity and the smoothing technique for anti-Lipschitz functions, which now provides a bound on the derivative of the smooth approximation.

Of course, causality theory is not just time functions. We have mentioned that it is possible to make sense of  most of the theory under a $C^{1,1}$ assumption. Even before a satisfactory theory for the $C^{1,1}$  case was available Chru\'sciel and Grant \cite{chrusciel13}  approached Lorentzian causality theory under a $C^0$ assumption. They met some important difficulties connected to the failure of some standard results of causality theory, such as the result  $I\circ J\cup J\circ I\subset I$, on the composition of the chronological and causal relations. Their theory seemed to work well only under locally Lipschitz regularity and did not include results involving lightlike geodesics. It was an important limitation since many interesting results of causality theory are connected with the study of lightlike geodesics, particularly those running on the Cauchy horizons. Some of the questions were addressed by S\"amann \cite{samann16} who obtained results on global hyperbolicity and stable causality for $C^0$ Lorentzian structures and proved a version of the Avez-Seifert theorem. Related applications also followed, for instance with the $C^0$ inextendibility studies \cite{sbierski15,galloway17}.  However, most questions, particularly those connected to  geodesics, remained unanswered.

The present work solves many of these problems by showing that most of causality theory holds for {\em  closed }(upper semi-continuous) {\em cone structures}.
Probably, the most characteristic  result of causality theory concerns the validity of the causal ladder of spacetimes \cite{hawking73,beem96,minguzzi06c}. This classical result confers the theory an order and beauty which would justify by itself interest in causality. We prove that the whole causal ladder holds true for closed cone structures. Of course, many proofs differ from the Lorentzian $C^2$ ones.



Next we define the lightlike geodesics using the local achronality property (which in the $C^{1,1}$ theory is derived \cite[Th.\ 6]{minguzzi13d}) and show that horismos are indeed generated by lightlike geodesic.

The study of achronal boundaries suggests to work with {\em proper cone structures}. They are slightly more restrictive than closed cone structures, and represent the bundle analog of proper cones (sharp convex closed cones with non-empty interior). We show that  most classical result on Cauchy developments pass to the proper cone structure case, for instance Cauchy horizons are generated by lightlike geodesics. These results seem remarkable since proper cone structures are again upper semi-continuous cone distributions and several properties which were believed to be essential for causality theory, including  $I\circ J\cup J\circ I\subset I$, still fail for them.

So far we did not mention how to introduce the metrical properties, and have been concerned with just the causal (one would say conformal in the Lorentzian setting)  properties. Here we use repeatedly this idea: the metrical theory can be regarded as a causality theory  on a manifold with one additional dimension $M^\times=M\times \mathbb{R}$. The so called Lorentz-Finsler function $\mathscr{F}\colon C\to [0,+\infty)$, which provides the length of causal vectors, is regarded as  defining a
cone structure $C^\times$ or $C^\downarrow$ on $M^\times$, cf.\ Eqs.\ (\ref{nhz}) and (\ref{xyb}). A Lorentz-Finsler space (spacetime) is just a cone structure on $M^\times$. So we do not need to develop some new theory, rather we work out a causality theory on $M^\times$. For instance, causal geodesics are defined as the projections of the lightlike geodesics defined through the local $C^\times$-achronality property on $M^\times$.

Using these ideas we are able to  give  a version of the Avez-Seifert theorem  and of Fermat's principle, and also to prove causal versions of Penrose's, Hawking's, and Hawking and Penrose's  singularity theorems. The differentiability assumptions for the validity of these causality results are really much weaker than those to be found in previous literature and, furthermore, they hold for anisotropic cones, see Sec.\ \ref{sin} for a discussion.


Some important more specific topics require many pages for their proper study. We have placed them in  Chap.\ \ref{ca3} where they do not distract from the  main line of development devoted to causality theory. The first section concerns the study of Lorentz-Minkowski spaces and the proof that the reverse triangle inequality,  reverse Cauchy-Schwarz inequality, and the duality between Finsler Lagrangian and Hamiltonian  hold under very weak regularity conditions. These results motivate some of our terminology which refers to Lorentz-Finsler spaces. The subsequent sections are devoted to the smoothing techniques and to the construction of anti-Lipschitz and steep time functions. Here Sec.\ \ref{fir}-\ref{las} must be read in this order. The last section \ref{xxo} summarizes what is gained by passing to the  regular theory, but can be skipped on first reading.

In the last section we show that causality theory for anisotropic cones has something important to say on apparently unrelated questions. We shall use it as a tool to solve two well known problems in whose formulations anisotropic cones do not appear. They are the problem of characterizing the Lorentzian submanifolds of Minkowski spacetime, and the problem of proving the Lorentzian distance formula. We devote the next two subsections of this Introduction to their presentation, here we just mention that their solutions use the notions of stable distance and stable spacetime  which we introduce in Sec.\ \ref{mvb}. We shall show that the stable distance is the most convenient distance for stably causal spacetimes.




As a last observation, this work is  self-contained. References are provided mostly for acknowledgment, so the work could be used as an introduction, though advanced, to causality theory.

%
%


\subsection{Lorentzian embeddings into Minkowski spacetime}
The Nash embedding theorem for non-compact manifolds states
\begin{theorem}
Any Riemannian $n$-dimensional manifold with $C^k$ metric, $k \ge 3$, admits a $C^k$ isometric imbedding  into some $N$-dimensional Euclidean space $E^{N}$.
\end{theorem}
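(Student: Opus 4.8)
The plan is to follow Nash's strategy, reducing the existence of an exact isometric embedding to a perturbation problem around a \emph{free short} embedding. First I would invoke Whitney's embedding theorem to realize $M$ as a closed $C^k$ submanifold of some $\mathbb{R}^{N_0}$, and after rescaling obtain a \emph{short} embedding $u_0$, i.e.\ one whose induced metric $g_0=u_0^*\langle\cdot,\cdot\rangle$ satisfies $g-g_0>0$. Appending a suitable auxiliary map into extra coordinates, one arranges $u_0$ to be \emph{free}: at each point the $n+\binom{n+1}{2}$ vectors $\partial_i u_0$ and $\partial_i\partial_j u_0$ are linearly independent. Freeness is generic once $N$ is large enough, and it is the structural property that makes the whole scheme work.

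The isometric condition is the nonlinear first-order system $Q(u)_{ij}:=\langle\partial_i u,\partial_j u\rangle=g_{ij}$, which I would solve by a Newton-type iteration $u_{q+1}=u_q+v_q$. The decisive point is that, for a free map, the linearized operator $L_u v=\langle\partial_i v,\partial_j u\rangle+\langle\partial_i u,\partial_j v\rangle$ admits a right inverse \emph{of order zero}: imposing the normal ansatz $\langle v,\partial_i u\rangle=0$ and differentiating yields $\langle\partial_i v,\partial_j u\rangle=-\langle v,\partial_i\partial_j u\rangle$, so that the correction equation $L_u v=h$ collapses to the purely pointwise linear system $\langle v,\partial_i u\rangle=0$, $\langle v,\partial_i\partial_j u\rangle=-\frac{1}{2}h_{ij}$, solvable algebraically from the nondegenerate Gram matrix of the free frame. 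Thus producing $v_q$ costs no derivatives in $v_q$ itself, and the residual defect $g-Q(u_{q+1})$ is \emph{quadratic} in $v_q$, exactly as in Newton's method.

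The main obstacle is the loss of derivatives concealed in this iteration: although the inversion is algebraic, $v_q$ still depends on the current metric defect $h_q=g-Q(u_q)$ and on second derivatives of $u_q$, so naive iteration does not close in a fixed function space. I would resolve this in Nash's way, inserting smoothing operators $S_{\theta_q}$ for a rapidly growing sequence $\theta_q\to\infty$ and running the argument in a scale of $C^{k,\alpha}$ (or Sobolev) norms equipped with tame estimates; the geometric-rate smoothing regains the lost derivatives while the quadratic convergence dominates, producing a $C^k$ limit $u=\lim u_q$ with $Q(u)=g$. This Nash--Moser hard-implicit-function argument is the technical heart, and the delicate work is keeping all constants uniform. (One could alternatively bypass the smoothing entirely by Günther's reformulation, which recasts the equation with an inverse-Laplacian principal part and solves it by a contraction mapping in Hölder spaces.)

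Finally, for the \emph{non-compact} case I would not attempt a single global iteration but localize: choose a locally finite cover of $M$ by relatively compact charts subordinate to a partition of unity, carry out the smooth iteration on each piece with uniform control, adding a locally finite, controlled number of extra Euclidean factors so that the total dimension $N$ remains finite. Assembling the local corrections, one obtains a global $C^k$ map inducing exactly $g$. Ensuring uniformity of the estimates across pieces and verifying that the patched map stays injective and immersive — hence an embedding — is the second place where genuine care is required.
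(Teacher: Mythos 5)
There is no proof in the paper to compare against: the statement you were given is quoted there as classical background (it is Nash's embedding theorem for non-compact manifolds), accompanied only by references and by the historical remark that Nash's original treatment of the non-compact case contained an error, acknowledged by Nash and corrected by Solovay at the cost of a worse dimension bound. So your proposal can only be judged on its own merits. For the compact-chart part of the argument your outline is the standard and correct Nash route: short free embedding, the zero-order algebraic right inverse of the linearization coming from freeness, quadratic residual, and Nash--Moser smoothing (or G\"unther's contraction reformulation) to defeat the loss of derivatives. As written, though, it is a plan rather than a proof: the tame estimates and uniformity of constants, which you yourself flag as ``the technical heart,'' are entirely deferred.

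The genuine gap is the final, non-compact step, and it is not a coincidence that this is exactly where Nash himself slipped. Your patching scheme --- solve on relatively compact pieces and ``assemble the local corrections'' via a partition of unity --- cannot work as stated, because the isometry condition $\langle\partial_i u,\partial_j u\rangle=g_{ij}$ is quadratic in $u$: a partition-of-unity combination of maps that are isometric on their own pieces does not induce $g$, and cross terms between neighboring pieces do not cancel. The correct strategy (Nash's, and Greene's and Gromov--Rokhlin's refinements) instead decomposes the \emph{metric} as a locally finite sum $g=\sum_\alpha g_\alpha$ of semi-metrics, each supported in a chart, realizes each $g_\alpha$ exactly by a map into a fixed block of coordinates using bounded multiplicity of the cover to keep $N$ finite, and takes the direct sum of these maps; injectivity and immersivity of the resulting map are then arranged by appending an auxiliary proper embedding, not by checking the patched map directly. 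Without this reorganization, both the claim that the assembled map induces exactly $g$ and the claim that it remains an embedding are unsupported.
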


The optimal value $N_0(n)$ will be referred as Nash dimension.
It must be mentioned that according to Bob Solovay, and as acknowledged by Nash, the proof of the original bound $N_0\le \frac{1}{2}(3 n^3 + 7 n^2 + 11n)$ for the non-compact case contained a small error. Once amended Solovay obtained the slightly worse bound $N_0\le \frac{1}{2}(3 n^3 + 7 n^2 + 11n) + 2n+1$. Greene \cite{greene70},  Gromov and Rokhlin \cite{gromov70},  and G{\"u}nther \cite{gunther89} obtained better bounds under stronger differentiability assumptions.

One could have expected the embedding to be $C^{k+1}$, however it is really $C^k$, see the review by Andrews for a discussion of this subtle point \cite{andrews02}.

It was also proved by Clarke \cite{clarke70}, Greene \cite{greene70}, Gromov and Rokhlin \cite{gromov70}, and Sokolov \cite{sokolov71,sokolov71b}, that  pseudo-Riemannian manifolds $(M,g)$ with metrics $g$ of signature $(p,q)$ can be  isometrically emebedded into pseudo-Euclidean space $E^{p',q'}$, for some $p'>p$, $q'>q$.

The Lorentzian signature $(-,+, \cdots, +)$ has peculiar properties.
Any pseudo-Riemannian metric splits the tangent space $T_pM\backslash 0$, into what might be called the {\em causal} $g(y,y)\le 0$, $y\ne 0$, and the {\em spacelike} $g(y,y)> 0$ vectors, however only under Lorentzian signature the set of causal vectors is disconnected in the induced topology. In fact it is the union of two convex sharp cones. The Lorentzian manifold is said to be time orientable if it admits the existence of a continuous  causal vector field $V$. In that case one can call the cone containing $V$, {\em future} (denoted $C$ by us) while calling  {\em past} the opposite one. In so doing the  Lorentzian manifold gets  {\em time oriented}. Connected time oriented Lorentzian manifolds are called {\em spacetimes}.  Thus the Lorentzian signature brings into the manifold a causal order induced by a distribution of convex sharp cones. Of course this peculiarity stays at the very foundation of Einstein's general relativity where connected time oriented Lorentzian manifolds are used as model spacetimes.

In this work we shall be concerned with the existence of isometric embeddings of Lorentzian manifolds into the {\em Lorentzian} space $E^{N,1}$. The latter space is connected and can be trivially given a time orientation, in which case it is called {\em Minkowski spacetime}.   We shall solve the problem of characterizing those Lorentzian manifolds that can be regarded as submanifolds of $E^{N,1}$. Equivalently, we shall solve the problem of characterizing those spacetimes $(M,g)$ which can be regarded as submanifolds of Minkowski spacetime. Clearly, not all spacetimes can be so embedded, for instance, those that admit closed causal curves cannot. As a consequence, the  solution will call for metric and causality conditions on $(M,g)$.
 Given the relevance of Lorentzian spacetimes for general relativity, it has to be expected that the class of spacetimes isometrically embeddable in Minkowski could play a significative role in Physics.

Our final result can be formulated in a very simple way:
\begin{center}
A spacetime is  isometrically embeddable in Minkowski iff it is stable.
\end{center}
 Here a spacetime is {\em stable} if (a) its causality and (b) the finiteness of the Lorentzian distance, are stable under small perturbations of the metric i.e.\ in the $C^0$ topology on metrics.
This is a rather large class of spacetimes, much larger than that of globally hyperbolic spacetimes. For instance, we shall prove that the stably causal spacetimes for which the Lorentzian distance is finite and continuous are of this type.

The problem of isometrically embedding a spacetime into a Minkowski spacetime of a certain dimension is an old one. Clarke \cite{clarke70} proved that globally hyperbolic manifolds can be so embedded. The proof relied on some smoothness issues that had yet to be fully settled at the time, so a complete proof was really obtained only recently by M\"uller and S\'anchez \cite{muller11} through a different strategy.

As a preliminary step they observed that the embedding of $(M,g)$ into  Minkowski spacetime is equivalent to the existence of a steep temporal function on $(M,g)$. In particular, $(M,g)$ has to be stably causal.
We recall that  a spacetime is stably causal if causality is stable in the  $C^0$ topology on metrics. Moreover, a smooth steep temporal function is just  a function $t$ such that $\dd t$ is positive on the future cone $C$, and $-g^{-1}(\dd t,\dd t)\ge 1$. Using the reverse Cauchy-Schwarz inequality the latter condition can be replaced by $\dd t (y)\ge \sqrt{-g(y,y)}$ for every $y\in C$. In short, they are functions which increase sufficiently fast over causal curves.

The argument for the mentioned equivalence is simple. Let $\{x^0,x^1,\cdots x^N\}$ be the canonical coordinates on $E^{N,1}$, $\dd s^2=-(\dd x^0)^2+\sum_{i\ge 1}(\dd x^i)^2$. 
One direction follows observing that the restriction of $x^0$ to the submanifold provides the steep temporal function (the steepness condition for a function passes to submanifolds as can be easily seen from its second characterization given above). For the converse, let $\bar g$ be the semi-definite metric coincident with $g$ on $\ker \dd t$, and which annihilates $\nabla^g t$. Then $g=-\beta \dd t^2+\bar g$, with $\beta^{-1}=-g^{-1}(\dd t, \dd t)\ge 1$. Consider the Riemannian metric $g_R=(4-\beta^2) \dd t^2+ \bar g$. If the Nash embedding of $(M,g_R)$ is $i_n\colon M\to E^{N}$, then the map $i\colon M\to E^{N, 1}$, $p\mapsto (2t(p), i_n(p))$ is an isometric embedding of $(M,g)$ on Minkowski space.

This result moves the problem to that of characterizing those spacetimes which admit a smooth steep temporal function. In the same article M\"uller and S\'anchez proved that globally hyperbolic spacetimes do admit such functions, thus establishing the embedding result foreseen by Clarke (another existence proof can be found in \cite{minguzzi16a}). However, it is easy to convince oneself that global hyperbolicity is just a sufficient condition, and certainly not the optimal one. In fact, consider a submanifold $M$ of  Minkowski spacetime, globally hyperbolic in its induced metric $g$. Then the submanifold $(N,g\vert_N)$ obtained by removing a point from $M$ will still be a Lorentzian submanifold of Minkowski but no more globally hyperbolic in the induced metric (see also the more interesting   Examples \ref{cuw} and \ref{syu}).
One might naively hope that globally hyperbolic spacetimes could be characterized as the {\em closed} submanifolds of some Minkowski spacetime. This is not the case,  a simple counterexample has been  provided by M\"uller  \cite[Example 1]{muller13}. Thus through the notion of  embedding the natural objects that are singled out are the stable spacetimes rather than the globally hyperbolic ones.

Summarizing one can contemplate two natural ways of adding a metric structure to a manifold. In the extrinsic approach the manifold is embedded in a reference affine space, say $E^N$ or $E^{N,1}$, while in the intrinsic approach the associated reference vector space is used just as a model for the tangent space of the manifold. In positive signature both methods lead to the same structure, that of Riemannian manifold, this is the content of Nash's theorem, but in the Lorentzian signature the former leads to the notion of stable spacetime while the latter leads to that of general spacetime.

Our idea for constructing a steep time function over the larger class of stable spacetimes is as follows. We introduce a (non-Lorentzian) cone structure $C^\downarrow$ on the product spacetime $M^\times=M\times \mathbb{R}$, and show that every temporal function $F$ on $M^\times$, whose zero level set $F^{-1}(0)$ intersects every $\mathbb{R}$-fiber, provides a steep time function $f$ on $M$ whose graph is $F^{-1}(0)$. The problem is moved to the construction of a temporal function on the product, and there the main difficulty is connected to the proof that the zero level set intersects every $\mathbb{R}$-fiber exactly once.
 We solve  this problem by constructing the function through an averaging procedure reminiscent, though not exactly coincident,  to that first employed by Hawking (in fact we do not open the cones in the direction of the fiber). Here the stability condition on the finite  Lorentzian distance comes into play to guarantee that every fiber is intersected at least once. Actually, the averaging procedure produces just a continuous anti-Lipschitz function so we apply to it a smoothing argument to get the desired steep function.

  A peculiar feature of the proof is that it uses a causality result for non-isotropic cone structures to infer results for Lorentzian spacetimes. This fact confirms  that the most convenient framework for causality theory is indeed that of general cone structures as it is  proved in the first sections of this work.

\subsection{The distance formula}
As it is well known Connes developed a program for the unification of fundamental forces based on non-commutative geometry \cite{connes94,landi97,connes08}. He
 focused on the so called interior geometry and was able to recover much of the Standard Model of particle physics within that framework. The derivation of the spacetime geometry was not as successful. The idea was to use an approach a la Gelfand, by regarding the manifold as the spectra of a certain algebra of functions. The family of functions to be considered had to encode the topology and more generally the distance. This was made possible through Connes' distance formula which, however, was really proved for Riemannian rather than Lorentzian manifolds.

Parfionov and Zapatrin \cite{parfionov00} proposed to consider the more physical Lorentzian version and for that purpose they introduced the notion of steep time function which we already met in the embedding problem.
 Let $d$ denote the Lorentzian distance,  and let $\mathscr{S}$ be the family of $C^1$ steep time functions. The Lorentzian version of the Connes distance formula would be, for every $p,q\in M$
\begin{equation} \label{dap}
 d(p,q)=\mathrm{inf} \big\{[f(q)-f(p)]^+\colon \ f \in \mathscr{S}\big\}.
\end{equation}
where $c^+=\max\{0, c\}$.  There arises the fundamental problem of finding the conditions that a spacetime should satisfy for (\ref{dap}) to hold true.
They called such spacetimes, {\em simple}, but did not provide any characterization for them.

Moretti \cite[Th.\ 2.2]{moretti03} proved a version of the formula for globally hyperbolic spacetimes in which the functions on the right-hand side are steep almost everywhere and only inside some compact set, not being defined outside the compact set.

Rennie  and Whale  gave a version with no causality assumption \cite{rennie16}, however the family of functions on the right-hand side of their Lorentzian distance formula includes  discontinuous functions. In order to have any chance to represent  also the topology, the representing functions must be continuous.
Moreover, due to the continuity of the representing functions the causality condition in the distance formula cannot be too weak, as we shall see (cf.\ Th.\ \ref{xhg}).

 For globally hyperbolic spacetimes  the most interesting version so far available is due to Franco \cite[Th.\ 1]{franco10}. It holds on globally hyperbolic spacetimes and on the right-hand side one finds  globally defined continuous causal functions differentiable and steep almost everywhere.
However, since in Connes' recipe one acts over the representing functions with the Dirac operator their $C^1$ differentiability is important.

In this work we shall prove not only that the formula holds for globally hyperbolic spacetimes  in the {\em smooth} version, but that, more generally, the formula holds precisely for the  stably causal spacetimes which admit a continuous and finite Lorentzian distance (hence they are causally continuous).
The continuity requirement on the Lorentzian distance  might seem strong. However, in Lorentzian geometry Equation (\ref{dap}) imposes the continuity of $d$ since the left-hand side is lower semi-continuous while the right-hand side is upper semi-continuous. So the mentioned result is really optimal.

Still, stably causal spacetimes  are  central in causality theory so it could be disappointing that the formula does not hold for them. All this suggests that a further improvement of the formula could be possible but that it should   pass through the improvement of the very definition of Lorentzian distance. We shall show that there is a better definition of distance which we call {\em stable distance}. This novel distance $D$ has wider applicability, and then the spacetimes for which the distance formula holds are precisely the stable ones met in the embedding problem. We shall also prove that for these spaces the family of steep time functions allows one to recover not only the distance, but also the causal order and topology of the spacetime and that such results hold for the general Lorentz-Finsler theory under very weak differentiability conditions.

These results should be useful for the development of any genuine Lorentzian version of Connes' program. Among the mathematical physics works that have explored such a direction we  mention \cite{strohmaier06,besnard09,franco13,vandendungen13,franco14b,besnard15}.

\subsection{Notations and conventions}

The manifold $M$ has dimension $n+1$. A bounded subset $S\subset M$, is one with compact closure. Greek indices run from $0$ to $n+1$. Latin indices from $1$ to $n$. The Lorentzian signature is $(-,+,\cdots,+)$. The Minkoski metric is denoted $\eta_{\alpha \beta}$, so $\eta_{00}=-1$, $\eta_{ii}=1$. The subset symbol $\subset$ is reflexive. The boundary of a set $S$ is denoted $\p S$.
``Arbitrarily small'' referred to a neighborhood $U$ of $p\in M$, means that for every neighborhood $V\ni p$ we can find $U$ inside $V$. A coordinate open neighborhood of $M$  is an element of the atlas, namely one diffeomorphic with some open set of $\mathbb{R}^{n+1}$.
Sometimes a subsequence  of a sequence $x_n$ is denoted with a change of index, e.g.\ $x_k$ instead of $x_{n_k}$. In order to simplify the notation we often use the same symbol for a curve or its image. Many statements of this work admit, often without notice, time dual versions obtained by reversing the time orientation of the spacetime.

\section{Causality for  cone structures} \label{wda}

In this work the manifold $M$
 is assumed to be connected, Hausdorff, second countable (hence paracompact) and of dimension $n+1$. Furthermore, it is $C^r$, $1\le r\le \infty$.

 The differentiability degree of the manifold determines the maximum degree of differentiability of the  objects living on $M$, and conversely it makes sense to speak of certain differentiable object only provided the manifold has a sufficient degree of differentiability.  So whenever considering Lipschitz vector fields or Lipschitz Riemannian metrics, the manifold has to be assumed  $C^{1,1}$. It is worth recalling that every $C^r$ manifold, $1\le r< \infty$, is $C^r$ diffeomorphic to a $C^\infty$ manifold \cite[Th.\ 2.10]{hirsch76}, so in proofs one can choose a smooth atlas whenever convenient. Of course at the end of the argument one has to return to the original atlas. If the adjective {\em smooth} is used in some statement, then it should be understood as the maximal degree of differentiability compatible with the original manifold atlas.

Let $V$ be a finite $n$+1-dimensional vector space, e.g.\ $V=T_xM$, for $x\in M$.
A cone $C$ is  a subset of $V\backslash 0$ which satisfies: if $s>0$ and $y\in C$ then $s y\in C$. The topological notions, such as the closure operator, will refer to the topology induced by $V$ on $V\backslash 0$.
 In particular, our closed cones do not contain the origin and $\p C$ does not  contain the origin.
All our cones will be closed and convex. Since $C$ does not include the origin, convexity implies sharpness of $C\cup \{0\}$, namely this set does not contain any line passing through the origin. So all our cones will be sharp. Although redundant according to our definitions, for clarity we shall add the adjective {\em sharp} in many statements.


\begin{definition}
A {\em proper cone} is a closed sharp  convex cone with non-empty interior.
\end{definition}

\begin{remark}
Notice that for a proper cone $C-C=V$ in the sense of Minkowski sum, namely $C$ is a {\em generating cone}.
We mention that in Banach space theory sharp convex cones are simply called cones. In finite dimension the generating cones are precisely those with non-empty interior \cite[Lemma 3.2,Th.\ 3.5]{aliprantis07}. Moreover, the cones with non-empty interior are closed iff they are Archimedean \cite[Lemma 2.4]{aliprantis07}.
\end{remark}

We write $C_1<C_2$ if $C_1 \subset \mathrm{Int} C_2$ and $C_1\le C_2$ if $C_1\subset C_2$. For a proper cone  $C=\overline{\mathrm{Int} C}$ and any compact section of $C$ is homeomorphic to an $n$-dimensional closed ball.

We mention a  property   which introduces the concept of convex combination of cones relative to a hyperplane.  Its straightforward proof is omitted.
Let $C_i\subset V$, $i=1,\cdots, m$ be  proper cones and suppose that  there is an affine hyperplane $P$ cutting them in compact convex sets with non-empty interior (convex bodies) $\tilde C_i$  (there is always such hyperplane if $\sum_i C_i$ is sharp). The  combination of $\{C_i\}$ relative to the weights $w_i\in[0,1]$, $\sum_i w_i=1$, and hyperplane $P$ is the cone $C_{(P,\{w_i\})}$  whose intersection with $P$ is given by $\sum_i w_i \tilde C_i:=\{\sum_i w_i c_i\colon \forall i, \ c_i\in \tilde C_i\}$.
\begin{proposition} \label{doo}
The convex combination $C_{(P,\{w_i\})}$ is itself a proper cone which coincides with $C_1$ for $w_1=1$. Moreover, let $C$ be a convex closed cone, let $C'$ be a proper cone and let $C_i$ be proper cones such that for all $i$, $C<C_i<C'$, then $C<C_{(P,\{w_i\})}<C'$. Finally, a strict convex combination of two proper cones $C_1$, $C_2$, $w_1,w_2>0$, with $C_1<C_2$ is such that   $C_1<C_{(P,\{w_1,w_2\})}<C_2$.
\end{proposition}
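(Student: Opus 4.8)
The plan is to reduce every assertion to planar convex geometry by slicing with $P$. Choose a covector $\nu$ with $P=\{\nu=1\}$ and $\nu>0$ on each $C_i$; such $\nu$ exists precisely because $\sum_i C_i$ is sharp (this is the remark that a common transversal hyperplane is available). Then $\tilde C_i=C_i\cap P$ is a convex body and $C_i=\mathbb{R}_{>0}\tilde C_i$. The map $\Phi\colon(0,\infty)\times P\to\{\nu>0\}$, $(s,k)\mapsto sk$, is a diffeomorphism with inverse $v\mapsto(\nu(v),\,v/\nu(v))$, under which the cone $\mathbb{R}_{>0}\tilde D$ over any convex body $\tilde D\subset P$ corresponds to $(0,\infty)\times\tilde D$. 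This yields the dictionary I will use throughout: $\mathrm{Int}(\mathbb{R}_{>0}\tilde D)=\mathbb{R}_{>0}\,\mathrm{int}_P\tilde D$ (so $\mathrm{Int}(\mathbb{R}_{>0}\tilde D)\cap P=\mathrm{int}_P\tilde D$), and consequently, for cones sliced by $P$, the relation $D_1<D_2$ is equivalent to $\tilde D_1\subset\mathrm{int}_P\tilde D_2$.

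\textbf{First assertion.}
I first note that $\tilde K:=\sum_i w_i\tilde C_i$ lies in $P$: since $\nu(c_i)=1$ and $\sum_i w_i=1$, one has $\nu(\sum_i w_ic_i)=1$, so the Minkowski combination is an honest affine convex combination relative to $P$. It is compact and convex as a continuous image of $\prod_i\tilde C_i$, and it has nonempty relative interior: fixing $i_0$ with $w_{i_0}>0$ and interior points $\bar c_i\in\mathrm{int}_P\tilde C_i$ for $i\ne i_0$, the map $c_{i_0}\mapsto w_{i_0}c_{i_0}+\sum_{i\ne i_0}w_i\bar c_i$ is an affine homothety of $P$ of ratio $w_{i_0}>0$, hence carries $\mathrm{int}_P\tilde C_{i_0}$ onto a relatively open subset of $\tilde K$. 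Since $\nu>0$ on $\tilde K$ and $\tilde K$ is a compact body in $\{\nu=1\}$, the cone $C_{(P,\{w_i\})}=\mathbb{R}_{>0}\tilde K$ is closed and sharp, and by the dictionary it has nonempty interior; thus it is a proper cone. For $w_1=1$ all other weights vanish, $\tilde K=\tilde C_1$, so $C_{(P,\{w_i\})}=C_1$.

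\textbf{Monotonicity.}
The two inclusions are handled asymmetrically. For $C_{(P,\{w_i\})}<C'$ no slicing of $C'$ is needed: each generator $\sum_i w_ic_i$ of $\tilde K$ is an affine convex combination of points $c_i\in\tilde C_i\subset C_i\subset\mathrm{Int}C'$, and $\mathrm{Int}C'$ is convex, so $\tilde K\subset\mathrm{Int}C'$, whence $\mathbb{R}_{>0}\tilde K\subset\mathrm{Int}C'$. For $C<C_{(P,\{w_i\})}$ I use the dictionary on $C$, which is sliced by $P$ since its generators lie in $\mathrm{Int}C_i\subset\{\nu>0\}$ (if $C=\emptyset$ there is nothing to prove), reducing the claim to $\tilde C\subset\mathrm{int}_P\tilde K$. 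Given $x\in\tilde C$, from $x\in\mathrm{int}_P\tilde C_i$ for every $i$ I take a single $\varepsilon>0$ (the minimum over the finitely many $i$) with $B_P(x,\varepsilon)\subset\tilde C_i$ for all $i$; then for $|v|<\varepsilon$ the identity $x+v=\sum_i w_i(x+v)$ exhibits $x+v\in\tilde K$, since each factor $x+v$ lies in $\tilde C_i$. Hence $B_P(x,\varepsilon)\subset\tilde K$ and $x\in\mathrm{int}_P\tilde K$.

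\textbf{Strict combination and main difficulty.}
The last assertion uses the same mechanism, but the weights are split so that strictness pushes the combination off the boundary. For $C_1<C_{(P,\{w_1,w_2\})}$, given $x\in\tilde C_1\subset\mathrm{int}_P\tilde C_2$ pick $\varepsilon>0$ with $B_P(x,\varepsilon)\subset\tilde C_2$ and write, for $|v|<w_2\varepsilon$, $x+v=w_1x+w_2(x+v/w_2)$ with $x\in\tilde C_1$ and $x+v/w_2\in\tilde C_2$; for $C_{(P,\{w_1,w_2\})}<C_2$, given $y=w_1c_1+w_2c_2$ with $c_1\in\tilde C_1\subset\mathrm{int}_P\tilde C_2$ and $c_2\in\tilde C_2$, pick $\delta>0$ with $B_P(c_1,\delta)\subset\tilde C_2$ and write, for $|v|<w_1\delta$, $y+v=w_1(c_1+v/w_1)+w_2c_2$, a convex combination of two points of the convex set $\tilde C_2$. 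Both steps need $w_1,w_2>0$, which is exactly where strictness enters. I expect no serious obstacle; the only points demanding care are the foundational dictionary relating the cone order to relative-interior inclusion of sections, and the observation—which renders the upper inclusions effortless—that a weighted Minkowski sum with weights summing to one is, fiberwise over $P$, an ordinary convex combination of points.
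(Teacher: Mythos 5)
The paper does not actually contain a proof of this proposition: it introduces the convex combination of cones with the remark that ``Its straightforward proof is omitted.'' So there is nothing to compare against; your argument supplies exactly the details the paper leaves out, and it is correct. The slicing dictionary (the diffeomorphism $(s,k)\mapsto sk$ identifying $\mathrm{Int}(\mathbb{R}_{>0}\tilde D)$ with $\mathbb{R}_{>0}\,\mathrm{int}_P\tilde D$, hence $D_1<D_2$ with $\tilde D_1\subset\mathrm{int}_P\tilde D_2$), the identity $x=\sum_i w_i x$ to get lower strict inclusions, and the weight-splitting perturbations $x+v=w_1x+w_2(x+v/w_2)$ and $y+v=w_1(c_1+v/w_1)+w_2c_2$ for the two-cone case all check out.

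Two small points of hygiene. First, in the verification that $C_{(P,\{w_i\})}$ is \emph{proper} you list closed, sharp and nonempty interior but never say that $\mathbb{R}_{>0}\tilde K$ is \emph{convex}; this follows from the convexity of $\tilde K$ (for $s_1k_1,s_2k_2$ in the cone, $\lambda s_1k_1+(1-\lambda)s_2k_2=t\,k$ with $t=\lambda s_1+(1-\lambda)s_2>0$ and $k\in\tilde K$ a convex combination of $k_1,k_2$), but it should be stated since convexity is part of the definition of proper. Second, the positivity $\nu>0$ on each $C_i$ need not be invoked via sharpness of $\sum_i C_i$: it already follows from the proposition's hypothesis that $P\cap C_i$ is a nonempty \emph{compact} convex body together with sharpness of $C_i$ (a direction $y\in C_i$ with $\nu(y)\le 0$ would force $\tilde C_i$ to be unbounded), which is the cleaner route given that the existence of $P$ is assumed, not constructed. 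Neither point affects the validity of the proof.
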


In this work we shall study the global properties of distributions of cones over manifolds.
\begin{definition}
A {\em cone structure}  is a multivalued map $x \mapsto C_x$, where $C_x\subset T_xM\backslash 0$ is a closed sharp convex non-empty  cone.
\end{definition}




The cone structures might enjoy various degrees of regularity.
Causality theory for cone structures under $C^{1,1}$ regularity has been already investigated. The reader can find a summary in  Sec.\ \ref{xxo}.
This work will be devoted to weaker assumptions for whose formulation we need some  local considerations.

Let $x\mapsto F(x)\subset \mathbb{R}^l$ be a set valued map defined on some open set  $D\subset\mathbb{R}^k$. It is said to be  {\em upper semi-continuous} if for every $x\in D$ and for every neighborhood  $U\supset F(x)$ we can find a neighborhood $N \ni x$ such that $F(N):=\cup_{x\in N} F(x)\subset U$, cf.\ \cite{aubin84}.

It is said to be {\em lower semi-continuous} if for every $x$, and open set $V\subset\mathbb{R}^l$, intersecting $F(x)$, $V\cap F(x)\ne \emptyset$,  the inverse image $F^{-1}(V):=\{w\in D: F(w)\cap V\ne \emptyset\}$ is a neighborhood of $x$. Equivalently, \cite[Prop.\ 1.4.4]{aubin84} for any $y\in F(x)$ and for any sequence of elements $x_n \to x$, there exists a sequence $y_n\in F(x_n)$ converging to $y$.
The map is continuous if it is both upper and lower semi-continuous.

We say that $F$ has convex values if $F(x)$ is convex for every $x$. We shall need the following result.
\begin{proposition} \label{low}
Suppose that $F$ has convex values.
If $F$ is lower semi-continuous  then for every $x$ and for every compact set $K\subset \mathrm{Int} F(x)$ we can find a neighborhood $N\ni x$, such that for every $w\in N$, $K\subset \mathrm{Int} F(w)$. The converse holds provided $F$ is also closed and $\mathrm{Int} F(x)\ne \emptyset$ for every $x$.
\end{proposition}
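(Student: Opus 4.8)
The plan is to reduce both implications to the elementary geometry of convex bodies in $\mathbb{R}^l$, exploiting that the interior of a convex set is convex and that a convex set with nonempty interior is the closure of its interior. For the direct implication I would first replace $K$ by a polytope. Since $\mathrm{Int}\, F(x)$ is open and convex and contains the compact set $K$, around each $p\in K$ I can place a nondegenerate simplex whose $l+1$ vertices lie in $\mathrm{Int}\, F(x)$ and whose interior contains $p$. The interiors of these simplices cover $K$, so by compactness finitely many of them suffice; collecting all their vertices $y_1,\dots,y_m\in \mathrm{Int}\, F(x)$, their convex hull $P$ satisfies $K\subset \mathrm{Int}\, P$ and $P\subset \mathrm{Int}\, F(x)\subset F(x)$.

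Next I would perturb the vertices using lower semi-continuity. Each $y_j$ lies in $F(x)$, so for the open ball $V_j=B(y_j,\delta)$ we have $V_j\cap F(x)\neq\emptyset$, and lower semi-continuity makes $F^{-1}(V_j)$ a neighborhood of $x$; intersecting these finitely many neighborhoods produces a neighborhood $N\ni x$ such that for every $w\in N$ there exist $y_1',\dots,y_m'\in F(w)$ with $|y_j'-y_j|<\delta$. By convexity of $F(w)$ the convex hull $P'$ of the $y_j'$ lies in $F(w)$, so it remains only to guarantee $K\subset \mathrm{Int}\, P'$. This is the one quantitative point of the argument: I would prove a stability lemma asserting that if $K\subset \mathrm{Int}\, P$ then for $\delta$ small enough any polytope $P'$ obtained by displacing the vertices by less than $\delta$ still satisfies $K\subset \mathrm{Int}\, P'$. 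Concretely, choosing $\rho>0$ with the open $\rho$-neighborhood $B(K,\rho)\subset P$, one checks via support functions that a vertex displacement of size $\delta$ moves $P$ by at most $\delta$ in Hausdorff distance and that the resulting inclusion $P\subset P'+\bar B_\delta$ (with $\bar B_\delta$ the closed $\delta$-ball) forces $B(K,\rho-2\delta)\subset P'$; taking $\delta<\rho/2$ fixes the $\delta$ used above and closes the direct implication. I expect this stability estimate — the robustness of the relation ``$K$ lies in the interior of the convex hull'' under vertex perturbation — to be the main obstacle, the rest being soft.

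For the converse I would verify lower semi-continuity in its sequential form: given $y\in F(x)$ and $x_n\to x$, I must produce $y_n\in F(x_n)$ with $y_n\to y$. Because $F(x)$ is convex with nonempty interior, the segment property yields interior points approximating $y$: fixing $z_0\in \mathrm{Int}\, F(x)$, the points $(1-t)y+t z_0$ lie in $\mathrm{Int}\, F(x)$ for $t\in(0,1]$ and tend to $y$ as $t\to 0^+$ (here the closedness hypothesis guarantees $F(x)=\overline{\mathrm{Int}\, F(x)}$, so no point of the value is missed). Thus for each $m$ I select $y^{(m)}\in \mathrm{Int}\, F(x)$ with $|y^{(m)}-y|<1/m$; applying the assumed interior property to the compact singleton $K=\{y^{(m)}\}$ yields a neighborhood $N_m\ni x$ with $y^{(m)}\in \mathrm{Int}\, F(w)\subset F(w)$ for all $w\in N_m$. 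A diagonal choice — setting $y_n=y^{(m)}$ for those $n$ large enough that $x_n\in N_m$, with $m\to\infty$ — then produces a sequence $y_n\in F(x_n)$ converging to $y$, which is exactly lower semi-continuity. This direction is essentially formal once the segment property is in hand.
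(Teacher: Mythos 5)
Your proposal is correct and takes essentially the same route as the paper: in the forward direction both proofs place simplices/polytopes with vertices in $\mathrm{Int}\,F(x)$, perturb the vertices into $F(w)$ via lower semi-continuity (taking $N=\bigcap_j F^{-1}(V_j)$), and conclude by convexity of $F(w)$ together with stability of the relation ``$K$ lies in the interior of the hull'' under small vertex displacements; in the converse, both reduce to approximating a point of $F(x)$ by interior points. The differences are only organizational --- you build one global polytope around $K$ with a quantitative Hausdorff/cancellation estimate where the paper uses one simplex per point of $K$ followed by a finite subcover, and you verify the sequential form of lower semi-continuity where the paper checks the topological definition directly.
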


\begin{proof}
Let $F$ be lower semi-continuous and with convex values.
By compactness it is sufficient to prove that for every $y\in \mathrm{Int} F(x)$ we can find  neighborhoods $V\ni y$ and $N\ni x$, such that for every $w\in N$, $V\subset F(w)$. Indeed, with obvious meaning of the notation,  we can cover $K$ with a finite selection of open sets $\{V_1, \cdots, V_j\}$, hence $N=\cap_i N_i$ has the desired property. In fact,  for every $w\in N$, $K\subset \cup_i V_i\subset \cup_i F(w)=F(w)$.
By convexity, given $y \in \mathrm{Int} F(x)$ we can find $l+1$ points $y^i\in \mathrm{Int} F(x)$ such that $y$ belongs to the interior of a simplex with vertices $\{y^i\}$. By continuity we can find a neighborhood $V\ni y$ and neighborhoods $O_i\ni y^i$, $O_i\subset \mathrm{Int} F(x)$, such that $V$ is contained in any simplex obtained by replacing the original vertices with the  perturbed vertices  $y'{}^i\in O_i$. Let $N=\cap_i F^{-1}(O_i)$, then by the lower semi-continuity of $F$, for every $w\in N$, $F(w)$ has non-empty intersection with every $O_i$ and so contains one perturbed simplex and hence $V$.

For the converse, it is well known that for a closed convex set $C=\overline{\mathrm{Int}(C)}$. If $V$ is an open set such that $V\cap F(x)\ne \emptyset$, then $V$ includes some point $y\in \mathrm{Int} F(x)$. We can find a compact neighborhood  $K\ni y$,  such that $K\subset V\cap\mathrm{Int} F(x)$ thus there is a neighborhood $N\ni x$ such that for every $w\in N$, $K\subset \mathrm{Int} F(w)$, in particular $V\cap F(w)\ne \emptyset$, that is $F^{-1}(V)\supset N$, which proves that $F$ is lower semi-continuous. $\empty \ \ $
\end{proof}

Finally, we shall say that $F$ is locally Lipschitz if for every $x$, we can find a neighborhood $D\ni x$ and a constant $l>0$, such that
\begin{equation} \label{jsw}
\forall \ x_1,x_2\in D, \qquad F(x_1)\subset F(x_2)+l \Vert x_1-x_2\Vert B ,
\end{equation}
 where $B$ is the unit ball of  $\mathbb{R}^l$. It is easy to check that local Lipschitzness implies continuity.

Let us return to the continuity properties of our cone structure. At every $p\in M$ we have a  local coordinate system $\{x^\alpha\}$ over a neighborhood $U\ni p$. The local coordinate system induces a splitting $U\times \mathbb{R}^{n+1}$ of the tangent bundle $TU$   by which sets over different tangent spaces can be compared. Let $F(x)=[C_x\cup \{0\}]\cap \bar B$ where $\bar B$ is the closed unit ball of $\mathbb{R}^{n+1}$, then the notions of upper/lower semi-continuous, continuous and locally Lipschitz cone structures follow from the previous definitions. Of course, they do not depend on  the coordinate system chosen  (they make sense if the manifold is $C^1$ in the former cases, and $C^{1,1}$ in the latter Lipschitz case).

An equivalent approach is as follows.  We have the coordinate sphere subbundle $U\times \mathbb{S}^{n}$, so when it comes to compare $C_q$ with $C_r$, $q,r\in U$, we can just compare $\hat C_q:=C_q \cap\mathbb{S}^{n}$ with $\hat C_r:= C_r \cap\mathbb{S}^{n}$. Since $\mathbb{S}^{n}$ with its canonical distance is a metric space, we can define a notion of Hausdorff distance $\hat d_H$ for its closed subsets and a related topology. The distribution of cones is continuous on $U$ if the map $q\mapsto \hat C_q$ is continuous, and it is locally Lipschitz if the map is   locally Lipschitz \cite{fathi12,fathi15}.


 We are now going to define more specific cone structures. The most natural approach seems that of defining them through properties of the cone bundle as follows.  We recall that we use the topology of the slit tangent bundle and that our cones do not contain the origin.

\begin{definition}
A {\em closed cone structure} $(M,C)$ is a cone structure which is a closed subbundle of the slit tangent bundle.
\end{definition}
The previous definition does not coincide with that given by Bernard and Suhr \cite{bernard16}. Indeed our condition on the cone structure is more restrictive since our cones are non-empty and sharp (non-degenerate and regular in their terminology). One reason is that we shall be mostly interested in causality theory, where it is customary to assume that spacetime is locally non-imprisoning, cf.\ Prop.\ \ref{iiu}. This assumption brings some simplifications, for instance the parametrization of curves is less relevant in our treatment than in theirs.

\begin{proposition} \label{yhh}
A multivalued map $x \mapsto C_x$, where $C_x\subset T_xM\backslash 0$ is a {\em closed cone structure} iff for all $x\in M$, $C_x$ is closed, sharp, convex, non-empty cone and the multivalued map is upper semi-continuous (namely, it is an upper semi-continuous cone structure).
\end{proposition}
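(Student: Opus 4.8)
The plan is to recognize the claimed equivalence as the standard correspondence between closedness of the graph and upper semi-continuity for a compact-valued map, once one reduces to a local statement. Since the fibrewise conditions (closed, sharp, convex, non-empty cone) appear on both sides of the biconditional, it suffices to prove that, for such a cone structure, being a closed subset of the slit tangent bundle is equivalent to upper semi-continuity of the associated map. Both conditions are local (upper semi-continuity is defined through local coordinates and, as already noted, is coordinate independent; closedness is checked chart by chart), so I would fix $p\in M$, a chart $U\ni p$ with the induced trivialization $TU\cong U\times \mathbb{R}^{n+1}$, and work with the compact-valued map $F(x)=[C_x\cup\{0\}]\cap \bar B$, $\bar B$ the closed unit ball. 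Each $F(x)$ is compact because $C_x\cup\{0\}$ is a closed cone intersected with $\bar B$, and $0\in F(x)$ for every $x$; these two features are precisely what reconciles the u.s.c.\ condition with the fact that the cones live in the \emph{slit} bundle.

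For the direction ``closed subbundle $\Rightarrow$ upper semi-continuous'' I would argue by contradiction. If $F$ fails u.s.c.\ at $x_0$, there are an open $O\supset F(x_0)$ and sequences $x_n\to x_0$, $y_n\in F(x_n)\setminus O$. By compactness of $\bar B$ we may assume $y_n\to y$, and since $0\in F(x_0)\subset O$ while $y\notin O$, we get $y\ne 0$; hence $y_n\ne 0$ for large $n$, so $y_n\in C_{x_n}$ and $(x_n,y_n)\in C$. As $y\ne 0$, the convergence $(x_n,y_n)\to (x_0,y)$ takes place in the slit tangent bundle, and closedness of $C$ gives $y\in C_{x_0}\subset F(x_0)\subset O$, contradicting $y\notin O$.

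For the converse, suppose $F$ is u.s.c.\ and take $(x_n,y_n)\in C$ converging to $(x_0,y_0)$ in the slit bundle, so $y_0\ne 0$. Using positive homogeneity of the cones I would rescale to $z_n=y_n/\max(1,\Vert y_n\Vert)\in C_{x_n}\cap \bar B=F(x_n)\setminus\{0\}$, so $z_n\to z_0=y_0/\max(1,\Vert y_0\Vert)\ne 0$, with $z_0\in C_{x_0}$ iff $y_0\in C_{x_0}$. If $y_0\notin C_{x_0}$ then $z_0\notin F(x_0)$; as $F(x_0)$ is compact, $\varepsilon:=d(z_0,F(x_0))>0$, and taking $O$ the $\varepsilon/2$-neighborhood of $F(x_0)$, u.s.c.\ yields $N\ni x_0$ with $F(N)\subset O$. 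For large $n$, $x_n\in N$ forces $z_n\in O$, contradicting $z_n\to z_0$ with $d(z_0,O)\ge \varepsilon/2$. Hence $y_0\in C_{x_0}$, i.e.\ $C$ is closed.

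The only delicate point — and thus the main obstacle — is the bookkeeping at the origin: the cones sit in the slit tangent bundle, whereas upper semi-continuity is most naturally phrased for compact-valued maps. The device of replacing $C_x$ by $F(x)=[C_x\cup\{0\}]\cap \bar B$ resolves this, but one must verify each time that the candidate limit is nonzero (so that closedness in the slit bundle can be invoked) and exploit positive homogeneity to pass freely between an arbitrary causal vector and its normalization inside $\bar B$. Sharpness and convexity play no role in this particular equivalence; they are merely inherited along the cone structure. The argument is, in essence, the well-known equivalence between having a closed graph and being upper semi-continuous for maps with values in a compact set, cf.\ \cite{aubin84}.
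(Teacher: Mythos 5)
Your proof is correct and follows essentially the same route as the paper: localize to a chart, replace the cones by the compact-valued map $F(x)=[C_x\cup\{0\}]\cap \bar B$, and reduce the statement to the equivalence between closedness of the graph and upper semi-continuity for compact-valued maps. The only difference is that the paper simply cites this equivalence from Aubin--Cellina (Prop.\ 1.1.2 and Th.\ 1.1.1 of the differential-inclusions reference), whereas you prove it directly, with the bookkeeping at the origin and the rescaling into $\bar B$ carried out explicitly.
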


\begin{proof}
It is sufficient to prove that the result holds true in any local coordinate chart of $M$. We need to  consider the continuity properties of the cone bundle cut by the unit coordinate balls. That is, we are left with a compact convex distribution for which  the equivalence  follows from well known results, in fact one direction follows from \cite[Prop.\ 1.1.2]{aubin84}, while the other follows from \cite[Th.\ 1.1.1]{aubin84} by letting $F$ be the distribution of unit coordinate closed balls.
\end{proof}



\begin{example}
A time oriented Lorentzian manifold  $(M,g)$ has an associated canonical cone structure given by the distribution of causal cones
\[
C_x=\{y\in T_xM\backslash\{0\}: g(y,y)\le 0, \ y \ \textrm{future directed} \}.
\]
The next results clarifies that some notable regularity properties of the metric $g$ pass to  the cone structure.
\begin{proposition} \label{jss}
Let $(M,g)$ be a time oriented Lorentzian manifold. If $g$ is continuous (locally Lipschitz) then $x\mapsto C_x$ is continuous (resp.\ locally Lipschitz).
\end{proposition}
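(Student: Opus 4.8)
The plan is to work in a local coordinate chart, since both continuity and local Lipschitzness of a cone structure are local and chart-independent notions defined through the ball-cut map $F(x)=[C_x\cup\{0\}]\cap\bar B$ (equivalently the sphere-section map $x\mapsto \hat C_x=C_x\cap\mathbb{S}^{n}$ with its Hausdorff distance $\hat d_H$). Fix $p\in M$ and a continuous future-directed timelike field $V$ defining the time orientation. At $p$ one may choose coordinates so that $g_{00}(p)<0$ and the spatial block $(g_{ij}(p))$ is positive definite; by continuity of $g$ these conditions persist on a coordinate neighborhood $U\ni p$. On $U$ the future cone is $C_x=\{y\neq0:\ g_x(y,y)\le0,\ g_x(y,V_x)<0\}$, each $C_x$ is sharp, convex, transverse to the affine plane $P=\{y^0=1\}$, and is cut by $P$ in the compact convex body $B_x=\{y=(y^1,\dots,y^n):\ q_x(y)\le0\}$, where $q_x(y)=g_{ij}(x)y^iy^j+2g_{0i}(x)y^i+g_{00}(x)$. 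Because the cones stay uniformly transverse to $P$ on a compact subneighborhood, radial projection from $P$ to $\mathbb{S}^{n}$ is uniformly bi-Lipschitz, so it suffices to study the cross-section map $x\mapsto B_x$ in Hausdorff distance.

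For the \emph{continuity} statement I would argue upper and lower semi-continuity separately. For upper semi-continuity, note that the cone bundle $\{(x,y):y\neq0,\ g_x(y,y)\le0,\ g_x(y,V_x)<0\}$ is closed in the slit tangent bundle: if $(x_n,y_n)\to(x,y)$ with $y\neq0$ lie in it, then $g_x(y,y)\le0$ and $g_x(y,V_x)\le0$ by continuity of $g$, and the value $g_x(y,V_x)=0$ is excluded since a nonzero causal vector cannot be $g$-orthogonal to the timelike $V_x$; hence $(x,y)$ lies in the bundle. Being a closed sharp convex non-empty subbundle, Proposition~\ref{yhh} gives that the structure is upper semi-continuous. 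For lower semi-continuity I would verify the interior criterion of Proposition~\ref{low}: if $y\in\mathrm{Int}\,C_x$, i.e.\ $g_x(y,y)<0$ and $g_x(y,V_x)<0$, then by continuity of $g$ both strict inequalities persist for $w$ near $x$, so $y\in\mathrm{Int}\,C_w$; by compactness this passes uniformly to any compact $K\subset\mathrm{Int}\,C_x$. Since the structure is also closed with non-empty interior fibers, the converse direction of Proposition~\ref{low} yields lower semi-continuity, and together with the above we obtain continuity.

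For the \emph{Lipschitz} statement I would pass to the ellipsoid parametrization of $B_x$. Completing the square, with $A(x)=(g_{ij}(x))$, $b(x)=(g_{0i}(x))_i$, $c(x)=g_{00}(x)$, one has $B_x=\{y:(y-\mu(x))^{\top}A(x)(y-\mu(x))\le\rho(x)\}$ where $\mu(x)=-A(x)^{-1}b(x)$ and $\rho(x)=b(x)^{\top}A(x)^{-1}b(x)-c(x)$. On a compact subneighborhood the matrices $A(x)$ lie in a compact set of positive-definite symmetric matrices (eigenvalues bounded below by some $\lambda_->0$) and $\rho(x)$ is bounded below by a positive constant; since matrix inversion is Lipschitz there and $g$ is locally Lipschitz, the data $A(x),\mu(x),\rho(x)$ all depend locally Lipschitz on $x$. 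Then the Hausdorff distance of the two solid ellipsoids is controlled linearly by the perturbation of center, shape and radius,
\begin{equation}\label{jss-est}
\hat d_H(B_{x_1},B_{x_2})\le L\big(\Vert\mu(x_1)-\mu(x_2)\Vert+\Vert A(x_1)-A(x_2)\Vert+|\rho(x_1)-\rho(x_2)|\big),
\end{equation}
with $L$ uniform on the subneighborhood, whence $x\mapsto B_x$ is locally Lipschitz; transferring through the uniformly bi-Lipschitz radial projection gives local Lipschitzness of $x\mapsto\hat C_x$, i.e.\ of the cone structure in the sense of \eqref{jsw}.

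The main obstacle is the quantitative ellipsoid perturbation bound \eqref{jss-est} with \emph{uniform} constant: one must ensure that the uniform positive-definiteness of $A(x)$ and the uniform lower bound on $\rho(x)$ (both available only after shrinking to a compact subneighborhood) really convert Lipschitz variation of $(\mu,A,\rho)$ into Lipschitz variation of the ellipsoids in Hausdorff distance, and that the radial-projection transfer between the cross-section metric on $P$ and the metric on $\mathbb{S}^{n}$ is bi-Lipschitz with constants independent of $x$. The continuity half, by contrast, is soft and needs only Propositions~\ref{low} and~\ref{yhh} together with continuity of $g$.
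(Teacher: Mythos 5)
Your proposal is correct, and its overall skeleton (localize, establish continuity softly, then reduce the Lipschitz claim to the distribution of sliced ellipsoids at $y^0=1$) matches the paper's; the continuity half in particular is essentially the paper's argument, which also packages the two conditions into the single continuous function $f(x,y)=\max[g_{\alpha\beta}(x)y^\alpha y^\beta,\,g_{\alpha\beta}(x)w^\alpha(x)y^\beta]$, proves lower semi-continuity of $\mathrm{Int}\,C_x$ from persistence of strict inequalities, and gets upper semi-continuity from closedness of the bundle via Proposition~\ref{yhh}. Where you genuinely diverge is the Lipschitz half. The paper argues geometrically: it normalizes coordinates so $g(\bar x)=\eta$, takes the pair of points $y_1,y_2$ realizing the Hausdorff distance between the two ellipsoids, uses that $\delta y=y_1-y_2$ is orthogonal to one ellipsoid (hence proportional to the gradient $2y_2^\alpha g_{\alpha i}$), and expands the difference of the two quadratic-form identities to absorb the quadratic term $g_{\alpha\beta}\delta y^\alpha\delta y^\beta$ asymptotically, yielding $\Vert\delta y\Vert\le \tfrac{CL}{2}\Vert x_2-x_1\Vert$ for any $C>1$ near $\bar x$. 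You instead parametrize algebraically: completing the square gives $B_x=\{(y-\mu(x))^{\top}A(x)(y-\mu(x))\le\rho(x)\}$, and the Hausdorff-Lipschitz dependence follows once $(\mu,A,\rho)$ are Lipschitz, via the representation $B_x=\mu(x)+\sqrt{\rho(x)}\,A(x)^{-1/2}\bar B$, so that
\[
\hat d_H(B_{x_1},B_{x_2})\le \Vert\mu(x_1)-\mu(x_2)\Vert+\big\Vert \sqrt{\rho(x_1)}\,A(x_1)^{-1/2}-\sqrt{\rho(x_2)}\,A(x_2)^{-1/2}\big\Vert_{\mathrm{op}} ,
\]
which is the clean way to close the estimate \eqref{jss-est} you flagged as the main obstacle: it reduces to the standard facts that matrix inversion and the matrix square root are Lipschitz on sets of symmetric matrices with spectrum in a fixed interval $[\lambda_-,\lambda_+]$, $\lambda_->0$, and that $\rho\ge -g_{00}>0$ uniformly on a compact subneighborhood; you should either cite or include this two-line computation, since as written the bound is asserted rather than proved. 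What each route buys: your algebraic argument is uniform and avoids the paper's ``for every $C>1$'' asymptotic bookkeeping and its case distinction over which ellipsoid $\delta y$ is normal to, and it transparently isolates exactly which quantitative inputs (uniform ellipticity of $(g_{ij})$, lower bound on $\rho$) are needed; the paper's geometric argument, on the other hand, produces an explicit Lipschitz constant tied directly to the Lipschitz constant $L$ of the metric coefficients and, as the paper notes, adapts immediately to other moduli of continuity (e.g.\ H\"older), whereas your route would require checking that $A\mapsto A^{-1/2}$ preserves the relevant modulus.
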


The proof in the locally Lipschitz case can be adapted to different regularities, say H\"older, provided the corresponding regularity is defined for cone structures, e.g.\ by generalizing Eq.\ (\ref{jsw}).

\begin{proof}
Let $w$ be a global continuous future directed timelike vector field. Let $\bar x\in M$, and let $U$ be a  coordinate neighborhood of $\bar x$. Let us consider the trivialization of the bundle $T U$, as induced by the coordinates.
The function $f(x,y)=\max [g_{\alpha \beta}(x) y^\alpha y^\beta, g_{\alpha \beta}(x) w^\alpha(x)  y^\beta]$ is continuous on $U\times \mathbb{R}^{n+1}$ and is negative precisely on future timelike  vectors.

Let us prove the lower semi-continuity of the cone structure. Since $\overline{\mathrm{Int} C_x}=C_x\cup \{0\}$ it is sufficient to prove the lower semi-continuity of $F(x)=\mathrm{Int} C_x$. Let $(\bar x,y)$ be a future directed timelike vector, hence $f(\bar x,y)<-\eta<0$ for some $\eta>0$, and let $x_n\to \bar x$, then there is an integer $N>0$ such that  for $n>N$, $\vert f(x_n,y)- f(\bar x,y)\vert <\eta$, thus $f(x_n,y)<0$, which implies $(x_n,y)\in \mathrm{Int} C_{x_n}$. Now redefine the sequence $\{y_k=y\}$ for $k\le N$, so that it is timelike for every $n$.

%

For the upper semi-continuity, notice that $[C\cup \{0\} ]\cap TU=\{(x,y)\colon x\in U, f(x,y)\le 0\}$ which by the continuity of $f$ is closed in the topology of $TU$. From here closure of  $C\cup \{0\}$ follows easily and hence upper semi-continuity of the cone structure, cf.\ Prop.\ \ref{yhh}.

For the locally Lipschitz property, let us choose coordinates such that $g_{\alpha \beta}(\bar x)=\eta_{\alpha \beta}$, i.e.\ the Minkowski metric. We are going to focus on the subbundle of $TU$ of vectors that in coordinates read as follows $(x^\alpha, y^\alpha)$ where $y^0=1$, i.e. we are going to work on $U\times \mathbb{R}^n$. It will be sufficient to prove the locally Lipschitz property for this distribution of sliced cones, namely for a distribution of ellipsoids determined by the equation $0=g_{\alpha \beta}(x) y^\alpha y^\beta=g_{00}(x)+2g_{0i}(x) y^i+g_{ij}(x) y^i y^j$, where $i,j=1,\ldots, n$. The ellipsoid is a unit circle for $x=\bar x$. Let $\Vert \cdot \Vert$ be the Euclidean norm on $\mathbb{R}^n$. Let us consider two ellipsoids relative to the points $x_1$ and $x_2$. Let $y_1$ and $y_2$ be two points that realize the Hausdorff distance $D(x_1,x_2)$ between the ellipsoids, i.e. $D(x_1,x_2)= \Vert \delta y \Vert$, $\delta y=y_1-y_2$, where the vector $\delta y= y_1-y_2$ can be identified with a vector of $\mathbb{R}^n$ since its 0-th component vanishes. The definition of Hausdorff distance easily implies that $\delta y$ is orthogonal to one of the ellipsoids  which we assume, without loss of generality, to be that relative to $x_2$, (otherwise switch the labels 1 and 2).
 Then $\delta y$  is proportional to the gradient of the function $h(w^i)=g_{\alpha \beta}(x_2)w^\alpha w^\beta=g_{00}(x_2)+2g_{0i}w^i+g_{ij} w^i w^j$ at $y_2$, namely $2y_2^\alpha g_{\alpha i} $, hence $\vert y_2^\alpha g_{\alpha \beta} \delta y^\beta \vert=\Vert y_2^\alpha g_{\alpha i} \Vert \Vert \delta y\Vert$. Now we observe that
 \begin{align*}
 0&=g_{\alpha \beta}(x_1) y^\alpha_1y^\beta_1-g_{\alpha \beta}(x_2) y^\alpha_2y^\beta_2=[g_{\alpha \beta}(x_1)-g_{\alpha \beta}(x_2)] y_1^\alpha y_1^\beta+g_{\alpha \beta}(x_2) [ y^\alpha_1y^\beta_1- y^\alpha_2 y^\beta_2]\\
 &=[g_{\alpha \beta}(x_1)-g_{\alpha \beta}(x_2)] y_1^\alpha y_1^\beta+ 2 g_{\alpha \beta}(x_2) y_2^\alpha \delta y^\beta+g_{\alpha \beta}(x_2) \delta y^\alpha \delta y^\beta.
 \end{align*}
By the already proved continuity property, as $x_2,x_1\to \bar x$, we have $\delta y\to 0$, $y_1^i$ and $y_2^i$ have norms that converge to one  and (by assumption) $g_{\alpha \beta}(x_i)\to \eta_{\alpha \beta}$, so we have also $\Vert y_2^\alpha g_{\alpha i} \Vert\to 1$. We conclude that the last term on the right-hand side becomes negligible with respect to the penultimate term. Moreover, provided $x_1,x_2$ belong to  a small neighborhood of $\bar x$ where $\Vert y_1^i\Vert \le c$, for some $c>1$ (we already have $y_1^0=1$) we have  $\vert [g_{\alpha \beta}(x_2)-g_{\alpha \beta}(x_1)] y_1^\alpha y_1^\beta \vert\le c^2 \sum_{\alpha,\beta} \vert g_{\alpha \beta}(x_2)-g_{\alpha \beta}(x_1) \vert \le c^2 L \Vert x_2-x_1\Vert$, where $L$ is the Lipschitz constant of the metric.
In conclusion, for every $C >1$ we can find a neighborhood of $\bar x$ such that for $x_1,x_2$ in the neighborhood
\[
\Vert \delta y\Vert \le \frac{C L}{2} \Vert x_2-x_1\Vert,
\]
which proves that the cone distribution is locally Lipschitz.
\end{proof}
\end{example}

\begin{definition}
A {\em proper cone structure} is a closed cone structure in which, additionally, the cone bundle  is proper, in the sense that $(\mathrm{Int} \,C)_x\ne \emptyset$  for every $x$.
\end{definition}
The terminology is justified in that the adjectives entering ``proper'' (that is, sharp, convex, closed and with non-empty interior) are applied fiberwise, whereas those mentioning topological properties have to be interpreted using the topology of the cone bundle, e.g.\ ${(\bar C)_x}=C_x$ for every $x$ which is equivalent to $C$ being closed.
The non-emptyness condition should not be confused with $\mathrm{Int} \,(C_x)\ne \emptyset$ for every $x$, see also Prop.\ \ref{cob} and subsequent examples.

As for cones, given two cone structures we write $C_1<C_2$ if $C_1 \subset \mathrm{Int} C_2$ and $C_1\le C_2$ if $C_1\subset C_2$, where the interior is with respect to the topology of the slit tangent bundle $TM\backslash 0$. Notice that for a proper cone structure $C=\overline{\mathrm{Int} C}$ does not necessarily hold.

\begin{proposition}
A multivalued map $x \mapsto C_x\subset T_xM\backslash 0$ is a {\em proper cone structure} iff $C_x$ is proper and the multivalued map is upper semi-continuous and such that the next property holds true
\begin{quote}
(*): \ $C$ contains a continuous field of proper cones.
\end{quote}
\end{proposition}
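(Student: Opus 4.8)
The plan is to prove both implications by reducing to the already established Propositions \ref{yhh}, \ref{low} and \ref{doo}, after isolating one elementary fact about the bundle interior. Working in a local trivialization $U\times\mathbb{R}^{n+1}$ of $TM$, if $(x,v)\in\mathrm{Int}\, C$ then some product neighborhood $N\times W$ of $(x,v)$ lies in $C$, whence $W\subset C_x$ is open in $\mathbb{R}^{n+1}$ and $v\in\mathrm{Int}\,(C_x)$; thus $(\mathrm{Int}\, C)_x\subset\mathrm{Int}\,(C_x)$ for every $x$. I also record the obvious monotonicity: if $C'\le C$ then $C'\subset C$ as subsets of $TM\backslash 0$, so $\mathrm{Int}\, C'\subset\mathrm{Int}\, C$ and hence $(\mathrm{Int}\, C')_x\subset(\mathrm{Int}\, C)_x$ for every $x$.

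For the direction ``proper cone structure $\Rightarrow$ the three conditions'', the upper semi-continuity and the fact that each $C_x$ is closed, sharp, convex and non-empty are precisely the content of $(M,C)$ being a closed cone structure, cf.\ Prop.\ \ref{yhh}; that each $C_x$ is moreover proper follows from $(\mathrm{Int}\, C)_x\ne\emptyset$ together with the inclusion $(\mathrm{Int}\, C)_x\subset\mathrm{Int}\,(C_x)$ just noted. The only real work is to produce the continuous field of proper cones required by (*). Here I would use that $\mathrm{Int}\, C$ is open in $TM\backslash 0$ with non-empty slices: for each $\bar x$ choose $v_{\bar x}\in(\mathrm{Int}\, C)_{\bar x}$ and, by openness, a chart neighborhood $U_{\bar x}\ni\bar x$ together with a closed coordinate ball $\bar B_{\bar x}$ around $v_{\bar x}$ not meeting the origin, such that $U_{\bar x}\times\bar B_{\bar x}\subset\mathrm{Int}\, C$. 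The proper cone $K_{\bar x}$ generated by $\bar B_{\bar x}$ then satisfies $K_{\bar x}\subset\mathrm{Int}\,(C_x)$, i.e.\ $K_{\bar x}<C_x$, for every $x\in U_{\bar x}$, because $\mathrm{Int}\,(C_x)$ is itself a cone.

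I would then globalize by a convex-combination partition of unity. Passing to a locally finite refinement $\{U_i\}$ with associated coordinate proper cones $K_i$ and a subordinate partition of unity $\{\rho_i\}$, I define $C'_x$ to be the convex combination of the cones $\{K_i : x\in U_i\}$ with weights $\{\rho_i(x)\}$ relative to a continuously varying hyperplane field in $T_xM$, which exists since every cone in play is contained in the sharp cone $C_x$. By Prop.\ \ref{doo} each $C'_x$ is a proper cone, and since every active $K_i<C_x$ one gets $C'_x<C_x$, in particular $C'_x\subset C_x$; continuity of $C'$ follows from the continuity of the weights, of the coordinate cones and of the hyperplane field, using that a cone of weight zero drops out of the combination (Prop.\ \ref{doo}), which matches the combinations across the loci where indices enter or leave. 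This gluing step is the point I expect to be the main obstacle: one must keep each local cone genuinely inside $C_x$ while convex-combining cones that are ``constant'' only within their own charts, and one must check that continuity is preserved where the partition-of-unity functions vanish. Everything else is bookkeeping on top of the quoted propositions.

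For the converse, upper semi-continuity together with each $C_x$ being proper (hence closed, sharp, convex and non-empty) gives, by Prop.\ \ref{yhh}, that $(M,C)$ is a closed cone structure, so it remains only to verify $(\mathrm{Int}\, C)_x\ne\emptyset$ for every $x$. Let $C'\le C$ be the continuous field of proper cones furnished by (*). Being continuous it is lower semi-continuous with convex values, so Prop.\ \ref{low} applies: fixing $\bar x$ and $v\in\mathrm{Int}\,(C'_{\bar x})$ and a small closed coordinate ball $K$ with $v\in\mathrm{Int}\, K$ and $K\subset\mathrm{Int}\,(C'_{\bar x})$, there is a neighborhood $N\ni\bar x$ with $K\subset\mathrm{Int}\,(C'_w)\subset C'_w$ for all $w\in N$. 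Hence $N\times\mathrm{Int}\, K$ is an open subset of $TM\backslash 0$ contained in $C'$ and containing $(\bar x,v)$, so $(\bar x,v)\in\mathrm{Int}\, C'$ and $(\mathrm{Int}\, C')_{\bar x}\ne\emptyset$. By the monotonicity recorded at the start, $(\mathrm{Int}\, C)_{\bar x}\supset(\mathrm{Int}\, C')_{\bar x}\ne\emptyset$, which is exactly the defining property of a proper cone structure.
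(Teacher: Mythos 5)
Your proposal is correct and takes essentially the same route as the paper: for the hard direction you build local coordinate-product proper cones inside $\mathrm{Int}\, C$ and glue them with a partition of unity via convex combinations relative to a transverse hyperplane field (Prop.\ \ref{doo}), exactly as the paper does, while for the converse you spell out via Prop.\ \ref{low} the step the paper dismisses as ``clear''. The one spot where you are thinner than the paper is the existence of the continuous hyperplane field: you attribute it to sharpness of $C_x$ alone, whereas the paper obtains it from a locally constant $1$-form positive on $C$, whose positivity propagates to nearby fibers precisely because of \emph{upper semi-continuity}, and which is then globalized by the same partition of unity -- a minor omission, not a flaw in the argument.
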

\begin{proof}
It is clear that (*) implies  $(\mathrm{Int} \,C)_x\ne \emptyset$ for every $x$.
The converse follows from the fact that $(\mathrm{Int} \,C)_x\ne \emptyset$ at $x$ implies, recalling the definition of product topology, that there is a local continuous cone structure at $x$ contained in $C$ (actually a product in a splitting induced by local coordinates). By sharpness and upper semi-continuity one can find a local smooth 1-form field $\omega$ positive on $C$. Such field can be globalized using a partition of unity, thus providing a distribution of hyperplanes $P=\omega^{-1}(1)$. Still using the partition of unity the local $C^0$ cone structures can be used to form a global continuous field of proper cones by means of Prop.\ \ref{doo} (see also the proof of Prop.\ \ref{ohg} or Th.\ \ref{ddo} for a similar argument).
\end{proof}

Fathi and Siconolfi \cite{fathi12,fathi15}  investigated the problem of the existence of increasing functions for proper cone fields under a $C^0$ assumption. It is clear that a $C^0$ proper cone structure is just a $C^0$ distribution of proper  cones.

For a distribution of proper  cones
\begin{quote}
locally Lipschitz  $\Rightarrow$ continuous $\Rightarrow$ (*) and upper semi-continuous (proper cone structure) $\Rightarrow$ upper semi-continuous (closed).
\end{quote}

The condition (*) is a kind of selection property. Observe that a Lorentzian manifold is time orientable if  there exists a continuous selection on the bundle of timelike vectors. Since reference frames (observers) are modeled with such selections, their existence is fundamental for the physical interpretation of the theory. The condition (*) might be regarded in a similar fashion as it implies that there are continuous selections which can be perturbed remaining selections. Another view on condition (*) is obtained by passing to the dual cone bundle. Then (*) can be read as a continuous sharpness condition.
\begin{example}
On the manifold $\mathbb{R}^2$ endowed with coordinates $(x,t)$, let us consider the cone distribution $\mathbb{R}_+(\dot x,1)$ where $ \dot x \in [-2,-k]$ for $x<0$,  $ \dot x \in [-2,2]$ for $x=0$ and $ \dot x \in [k,2]$ for $x>0$. It is upper semi-continuous for $-2\le k\le 2$, but it does not admit a continuous selection for $0<k\le 2$. For $k=0$ it admits the continuous selection $\p_t$ but it still does not satisfy (*). For $-2\le k<0$ it satisfies (*).
\end{example}

\begin{remark}
Most results of causality theory require two tools for their derivation. The {\em limit curve theorem} and the (*) condition.
The limit curve theorem holds under upper semi-continuity and its usefulness will be pretty clear. As for the (*) condition, many  arguments  use the fact that for $p\in M$ an arbitrarily close point  $q$ can be found in the causal future of $p$ such that the causal past of $q$ contains $p$ in its interior. This property holds under (*). In other arguments one needs to show that some achronal boundaries are Lipschitz hypersurfaces. This result holds again under (*).

Insistence upon upper semi-continuity is justified not only on mathematical grounds; discontinuities have to be taken into account, for instance, in the study of light propagation in presence of a discontinuous refractive index, e.g.\ at the interface of two different media, cf.\ Sec.\ \ref{fer}.

Moreover, upper semi-continuity turns out to be the natural assumption for the validity of most results. Assuming better differentiability properties might obscure part of the theory. For instance, at this level of differentiability the chronological relation loses some of its good properties but most results can be
proved anyway by using the causal relation, a fact which clarifies that the latter relation is indeed more fundamental. Hopefully the exploration of the mathematical limits of causality theory  might eventually tell us something on the very nature of gravity.
\end{remark}

\subsection{Causal and chronological relations}

Causality theory concerns the study of the global qualitative properties of solutions to the differential inclusion
\begin{equation} \label{zoq}
\dot x (t)\in C_{x(t)} \, ,
\end{equation}
where $x\colon I \to M$, $I$ interval of the real line.
If $x\in C^1(I)$ and (\ref{zoq}) is satisfied everywhere we speak of {\em classical solution}.

Of course, a key point is the identification of a more general and convenient notion of solution.
It has to be sufficiently weak to behave well under a  suitable notion of limit, however not too weak since it should   retain much of the qualitative behavior of $C^1$ solutions. The correct choice turns out to be the following: a {\em solution} is a map  $x$ which is  locally absolutely continuous, namely for every connected compact interval $[a,b]=:K\subset I$, $x\vert_K\in AC(K)$. The inclusion (\ref{zoq}) must be satisfied almost everywhere, that is in a subset of the differentiability points of $x$.

The notion of absolute continuity  can be understood in two equivalent ways, given $t \in I$ either we introduce a coordinate  neighborhood $U\ni x(t)$, and demand that the component maps $t \mapsto x^\alpha (t)$ be absolutely continuous real functions, or we introduce a Riemannian metric $h$ on $\bar  U$, and regard the notion of absolute continuity as that of maps to the metric space $(U,d^h)$. (It can be useful to recall that every  manifold admits a complete Riemannian metric \cite{nomizu61}. The Riemannian metric can be found Lipschitz provided  the manifold is $C^{1,1}$.)
Since on compact subsets any two Riemannian metrics  are Lipschitz  equivalent, the latter notion of absolute continuity is independent of the chosen Riemannian metric. Similarly, the former notion is independent of the coordinate system, as the changes of coordinates are $C^1$ and the composition $f \circ g$ with $f$ Lipschitz and $g$ absolutely continuous is absolutely continuous.

A solution to (\ref{zoq}) will also be called a {\em parametrized continuous causal curve}.
The image of a solution to (\ref{zoq}) will also be called a {\em continuous causal curve}.

\begin{remark}{\em Convenient reparametrizations.}
Over every compact set $A\subset U$ we can find a constant $a>0$ such that for every $x\in A$, $y\in T_xM$, $\Vert y\Vert_h=\sqrt{h_{\alpha \beta} y^\alpha y^\beta} \le a \sum_\mu \vert y^\mu\vert$. As each component $x^\mu(t)$ is absolutely continuous, each derivative $\dot x^\alpha$ is integrable and so $\Vert \dot x\Vert_h$ is integrable. The integral
\[
s(t)=\int_0^t \Vert \dot x\Vert_h(t') \dd t' \, ,
\]
is the Riemannian $h$-arc length. Observe that our condition (\ref{zoq}) together with the fact that $C$ does not contain the origin imply that the argument is positive almost everywhere so the map $t\mapsto s(t)$ is increasing and absolutely continuous. Its inverse $s\mapsto t(s)$ is differentiable wherever $t\mapsto s(t)$ is with $\dot s\ne 0$, in fact $t'= \dot s^{-1}=\Vert \dot x\Vert_h^{-1}$ at those points, where a prime denotes differentiation with respect to $s$. By Sard's theorem for  absolutely continuous functions \cite{montesinos15} and by the Luzin N property of absolutely continuous functions, a.e.\ in the $s$-domain  the map   $s\mapsto t(s)$ is differentiable and $ \dot x(t(s))\in C_{x(t(s))}$.
 At those points $ x'= \dot x/\Vert \dot x\Vert_h\in C_{x(t(s))}$ so $\Vert x'\Vert_h=1$ and the map $s
\mapsto x(t(s))$ is really Lipschitz.
Thus, by a change of parameter we can pass from absolutely continuous solutions to Lipschitz solutions parametrized with respect to $h$-arc length (see also the discussion in \cite[Sec.\ 5.3]{petersen06}).
\end{remark}

 In causality theory the parametrization is not that important; most often one uses the $h$-arc length where $h$ is a complete Riemannian metric, for that way the inextendibility of the solution is reflected in the unboundedness of the domain, cf.\ Cor.\ \ref{dox}. However, general absolutely continuous parametrizations are better behaved under limits, as we shall see. Finally, since the parametrization is not that important, we can replace the original cone $x\mapsto C_x$ structure with the compact convex replacements
\[
\check C_x= \{y\in C_x\cup 0\colon \Vert y\Vert_h\le 1\}.
\]

As we shall see, we shall be able to import several results from the theory of differential inclusion, by considering the distribution  $\check C_x$ in place of $C_x$.
In fact, we shall need some important results on differential inclusions under low regularity due to Severi, Zaremba, Marchaud,  Filippov, Wa\v zewski, and other mathematicians. As far as I know this is the first work which applies systematically differential inclusion theory to causality theory. Good accounts of the general theory of differential inclusions can be found in the books  by Clarke
\cite[Chap.\ 3]{clarke83}, Aubin and Cellina \cite{aubin84}, Filippov \cite{filippov88}, Tolstonogov \cite{tolstonogov00} and Smirnov \cite{smirnov02}. For a review see also \cite{kikuchi67,davy72}.

For every  subset $U$ of $M$ we define the causal relation
\begin{align*}
J(U)&=\{(p,q)\in U\times U\colon p=q \textrm{ or there is a continuous causal} \\ & \qquad \qquad \qquad  \quad  \ \quad  \textrm{
  curve contained in }  U \textrm{  from } p \textrm{ to } q\}.
\end{align*}
For $p\in U$ we write
\[
J^+(p, U)=\{q\in U\colon (p,q)\in J(U)\}, \ \ \textrm{ and } \ \ J^{-}(p, U)=\{q\in U\colon (q,p)\in J(U)\}. \]
 For  $S\subset U$, we write $J^+(S, U)=\cup_{p\in S}  J^+(p, U)$, and similarly in the past case.   For every set $S$ we  introduce the {\em horismos}
 \[
 E^\pm (S, U)=J^\pm (S, U)\backslash \mathrm{Int}_U (J^\pm (S, U)),
 \]
  where the interior uses the topology induced on $U$.

An element of $T_xM$ is a {\em timelike vector} if it belongs to $(\mathrm{Int} C)_x$. It is easy to prove that the cone of timelike vectors $(\mathrm{Int} C)_x$ is an open convex cone.  A {\em timelike curve} is the image of a piecewise  $C^1$ solution to the differential inclusion
\begin{equation} \label{tim}
\dot x(t)\in (\mathrm{Int} C)_{x(t)} \, .
\end{equation}
The chronological relation of $U\subset M$ is defined as follows
\begin{align*}
I(U)&=\{(p,q)\in U\times U\colon \textrm{ there is a  timelike }   \textrm{curve contained in } U  \textrm{ from } p \textrm{ to } q\}.
\end{align*}
 The bundle of lightlike vectors is $\p C$, thus
a {\em lightlike vector} at $x$ is an element of $\p C\cap \pi^{-1}(x) =(\p C)_x=C_x\backslash (\mathrm{Int} C)_x$, where $\pi\colon TM\to M$.


Thus a vector is timelike if sufficiently small perturbations of the vector preserve its causal character, i.e.\ timelike vectors are elements of $\mathrm{Int} C$. In general, a vector $v\in \mathrm{Int} C_x$ for some $x\in M$ might not have this property, for the perturbation changes the base point. For instance, consider the  Minkowski spacetime  with its canonical  cone distribution, but replace the cone at the origin with a wider cone, then for the modified cone structure $(\mathrm{Int} C)_{o}\subsetneq \mathrm{Int} C_o$ where $o$ is the origin, $\mathrm{Int} C_o$ is the wider timelike cone and $(\mathrm{Int} C)_{o}$ is the original timelike cone.
\begin{proposition} \label{cob}
For a $C^0$ proper cone structure $(\mathrm{Int} C)_{x}=\mathrm{Int} C_x$ for every $x$.
\end{proposition}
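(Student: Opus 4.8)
The plan is to establish the two inclusions separately, noting that one is immediate and holds for an arbitrary cone structure, while the other is exactly where the $C^0$ hypothesis enters, through Proposition~\ref{low}.

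First I would dispose of the inclusion $(\mathrm{Int}\,C)_x\subseteq \mathrm{Int}\,C_x$, which needs no regularity assumption at all. Since $\mathrm{Int}\,C$ is by definition open in the slit tangent bundle $TM\backslash 0$ and $(\mathrm{Int}\,C)_x=(\mathrm{Int}\,C)\cap T_xM$, the fiber $(\mathrm{Int}\,C)_x$ is open in the subspace $T_xM\backslash 0$; being contained in $C_x$, it therefore lies in the fiberwise interior $\mathrm{Int}\,C_x$.

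For the reverse inclusion $\mathrm{Int}\,C_x\subseteq (\mathrm{Int}\,C)_x$ I would pass to a coordinate trivialization $TU\cong U\times\mathbb{R}^{n+1}$ and set $F(w)=[C_w\cup\{0\}]\cap \bar B$, the truncation of the cone by the closed unit ball. Because the cone structure is $C^0$ it is in particular lower semi-continuous, and properness makes $F$ convex-valued, so Proposition~\ref{low} is applicable. Given $v\in\mathrm{Int}\,C_x$, I would use the cone property to rescale $v$ so that $\Vert v\Vert<1$, whence $v\in\mathrm{Int}\,F(x)$; then I pick a closed ball $K=\{u:\Vert u-v\Vert\le \varepsilon\}$ with $\varepsilon$ small enough that $K\subset\mathrm{Int}\,F(x)$ and $\varepsilon<\Vert v\Vert$, so that $0\notin K$. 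By Proposition~\ref{low} there is a neighborhood $N\ni x$ with $K\subset\mathrm{Int}\,F(w)$ for every $w\in N$, and since $K$ avoids the origin this gives $K\subset C_w$ for every $w\in N$. Consequently the set $N\times\{u:\Vert u-v\Vert<\varepsilon\}$ is an open subset of $TU\subseteq TM\backslash 0$ containing $(x,v)$ and contained in $C$, which shows $(x,v)\in\mathrm{Int}\,C$, i.e.\ $v\in(\mathrm{Int}\,C)_x$.

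Combining the two inclusions yields the claimed equality. I do not expect a serious obstacle: essentially all the work is carried by Proposition~\ref{low}, and the only points requiring a little care are the translation between the fiber topology of $T_xM\backslash 0$ and the ambient topology of $TM\backslash 0$, together with the harmless rescaling that places $v$ inside the truncated cone $F(x)$ while keeping the entire ball $K$ off the zero section.
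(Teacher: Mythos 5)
Your proof is correct and follows essentially the same route as the paper's: the easy inclusion $(\mathrm{Int}\,C)_x\subset \mathrm{Int}\,C_x$ is handled identically, and the reverse inclusion is obtained, exactly as in the paper, by applying Proposition~\ref{low} in a coordinate trivialization to produce a product neighborhood $N\times K$ of $(x,v)$ inside $C$. The only difference is that you are more explicit about the truncation by the unit ball and the rescaling needed to keep $K$ off the zero section, details the paper leaves implicit.
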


  So the naive definition of timelike cone as $\mathrm{Int} C_x$ works in the continuous case. Also for a  $C^0$ proper cone structure  the lightlike vectors at $x$ are the elements of  $\p C_x$.

\begin{proof}
Let $v\in \mathrm{Int} C$, $\pi(v)=x$, then $\mathrm{Int} C\cap T_xM\subset C_x$ is a neighborhood of $v$ for the topology of $T_xM$,  thus $v\in  \mathrm{Int} (C_x)$. Conversely, let us introduce a coordinate neighborhood $U\ni p$, so that $TU$ can be identified with $U\times \mathbb{R}^{n+1}$ and hence different fibers can be compared. Let $v\in \mathrm{Int} C_x$ and let $K\subset \mathrm{Int} C_x$  be a compact neighborhood of  $v$, then by  Prop.\ \ref{low} there is a neighborhood $N\ni x$ such that $K\subset C_w$ for every $w\in N$, namely $N\times K$ is a neighborhood of $v$ contained in $C$, hence $v\in \mathrm{Int} C$.
\end{proof}
For a proper cone structure we have
\begin{equation} \label{mxl}
I(U)=\cup_{\tilde C\le C} \tilde I(U) ,
\end{equation}
where $\tilde C$ runs over the   $C^0$ proper cone structures $\tilde C \le C$.
 This family is non-empty thanks to the (*) condition. Equation (\ref{mxl}) can be obtained by noticing that  any $C$-timelike curve is a $\tilde C$-timelike curve  for some $C^0$ proper cone structure, $\tilde C \le C$.
In general a proper cone structure $C$ will not contain a maximal $C^0$ cone structure.



For $p\in U$ we write
\[
I^+(p, U)=\{q\in U\colon (p,q)\in I(U)\}, \ \  \textrm{ and } \ \  I^{-}(p, U)=\{q\in U\colon (q,p)\in I(U)\}.
\]
 For $S\subset U$, we write $I^+(S, U)=\cup_{p\in S}  I^+(p, U)$, and similarly in the past case.
If $U=M$, the argument $U$ is dropped in the previous notations, so the causal relation is $J$ and the chronological relation is $I$. They will also be denoted $\le_J$ or just $\le$ and $\ll$. Also, we write $p<q$ if there is a continuous causal curve joining $p$ to $q$.

\begin{proposition} \label{oaw}
Let $(M,C)$ be a  proper cone structure, then the corners in a timelike curve can be rounded off so as to make it a $C^1$ solution to (\ref{tim}) connecting the same endpoints.  As a consequence, $I$ can be built from $C^1$ solutions.
\end{proposition}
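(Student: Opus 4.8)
The plan is to localize the problem to a single corner in a chart, and there to remove the kink by prescribing the new velocity as a convex combination of the two one-sided velocities, tuned so that both endpoints and both velocities match the unaltered curve.

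First I would reduce to rounding one corner. A timelike curve is the image of a piecewise $C^1$ solution $\gamma\colon[a,b]\to M$ of (\ref{tim}), so on each compact parameter interval it has finitely many corners $a<t_1<\cdots<t_m<b$; since these sit in pairwise disjoint parameter neighborhoods, it suffices to smooth $\gamma$ across a single corner $t_0$ while keeping the endpoints and the $C^1$ matching with the rest of the curve. Fix a chart $U\ni p:=\gamma(t_0)$ and set $v_-=\dot\gamma(t_0^-)$, $v_+=\dot\gamma(t_0^+)$. Because each $C^1$ piece is a solution of (\ref{tim}) up to and including its endpoint, both one-sided velocities lie in the open convex cone $(\mathrm{Int}\,C)_p$, and hence so does the whole segment $\mathrm{conv}(v_-,v_+)$.

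The key geometric input is a uniform version of the cone condition near $p$. Since $\mathrm{Int}\,C$ is open in the slit tangent bundle and $\{p\}\times\mathrm{conv}(v_-,v_+)$ is a compact subset of it, the tube lemma provides, in the chart's trivialization $TU\cong U\times\mathbb{R}^{n+1}$, a neighborhood $N\ni p$ and an $\epsilon>0$ such that every vector within $\epsilon$ of $\mathrm{conv}(v_-,v_+)$ belongs to $(\mathrm{Int}\,C)_w$ for all $w\in N$. (Alternatively one may first pass, by (\ref{mxl}), to a $C^0$ proper cone structure $\tilde C\le C$ whose interior still contains $\dot\gamma$, and invoke Prop.\ \ref{low}; the openness argument is more direct.) This tube is what lets the rounded velocity stray slightly off the segment without leaving the cone, even as the base point moves. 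Then I would carry out the rounding on a short interval $[t_0-\delta,t_0+\delta]$ whose image lies in $N$, prescribing the new velocity as the convex combination $\dot\sigma(t)=\tfrac12(1-\psi(t))\,a+\tfrac12(1+\psi(t))\,b$ of the boundary velocities $a=\dot\gamma(t_0-\delta)$, $b=\dot\gamma(t_0+\delta)$, where $\psi$ is smooth on $[t_0-\delta,t_0+\delta]$ with $|\psi|\le1$, $\psi(t_0-\delta)=-1$, $\psi(t_0+\delta)=+1$, and $\psi$ odd about $t_0$ so that $\int\psi=0$. By construction $\dot\sigma$ is valued in $\mathrm{conv}(a,b)$ and equals $a$ and $b$ at the two ends, so the velocity glues continuously to the neighboring pieces; the endpoint positions match because $\int_{t_0-\delta}^{t_0+\delta}\dot\sigma\,\dd t=\delta(a+b)$ while $\gamma(t_0+\delta)-\gamma(t_0-\delta)=\delta(a+b)+o(\delta)$. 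The residual $o(\delta)$ discrepancy coming from the genuinely non-linear $C^1$ pieces is absorbed by adding a small smooth correction $\eta$ supported in the interior, vanishing at the endpoints, with $\int\eta$ equal to the mismatch and $\sup\Vert\eta\Vert\to0$ as $\delta\to0$. For $\delta$ small one has $a\to v_-$, $b\to v_+$ and $\eta$ tiny, so $\dot\sigma(t)$ stays within $\epsilon$ of $\mathrm{conv}(v_-,v_+)$; by the tube, $\dot\sigma(t)\in(\mathrm{Int}\,C)_{\sigma(t)}$, so $\sigma$ is a genuine $C^1$ solution of (\ref{tim}). Smoothing each corner this way yields a $C^1$ timelike curve with the same endpoints, and the consequence for $I$ is then immediate: a pair lies in $I$ precisely when it is joined by such a $C^1$ solution.

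The hard part will be the simultaneous matching of positions and velocities at both ends of the rounding interval while never leaving the base-point-dependent timelike cone. The convex-combination ansatz forces the velocity to remain in $\mathrm{conv}(v_-,v_+)$, the mean-zero (odd) transition reduces the position constraint to a single identity that is exact in the linear model, and the openness tube together with the $o(\delta)$ correction absorb the discrepancy produced by the actual, non-linear, $C^1$ pieces.
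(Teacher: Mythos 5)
Your proposal is correct, and its geometric core coincides with the paper's: you localize to a single corner and use compactness of the velocity data inside the open set $\mathrm{Int}\,C$ (in the slit tangent bundle) to extract a uniform product neighborhood on which the cone condition can be checked pointwise; this is exactly the paper's step of finding an open round cone $R_p\subset(\mathrm{Int}\,C)_p$ containing both one-sided velocities and a chart $U$ with $U\times R_p\subset\mathrm{Int}\,C$. Where the two arguments part ways is in the final step: the paper uses that product neighborhood to install a flat Minkowski metric on $U$ whose timelike cone is $R_p$ and then simply \emph{cites} the classical fact that corners can be rounded off in Minkowski spacetime (Lerner, Hawking--Ellis), whereas you prove that fact inline via the convex-combination velocity ansatz with an odd transition function and an $o(1)$ bump correction $\eta$ absorbing the $o(\delta)$ position mismatch. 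Your version is therefore self-contained and makes the quantitative mechanism visible (the correction stays small relative to the $\epsilon$-tube, which is why the rounded velocity never leaves the moving cone), at the cost of the bookkeeping in the last paragraph; the paper's version is shorter and delegates exactly that bookkeeping to the literature. Both are complete proofs, and your tube-lemma formulation with the segment $\mathrm{conv}(v_-,v_+)$ is an equivalent substitute for the paper's round cone $R_p$, since any compact convex set in the open convex cone $(\mathrm{Int}\,C)_p$ serves the same uniformization purpose.
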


\begin{proof}
Let $\sigma$ be a $C^1$ timelike curve ending at $p$ and $\gamma$ a $C^1$ timelike curve starting from $p$, then they can be modified in an arbitrarily small neighborhood  of $p$ to join into a $C^1$ timelike curve. In fact, let $\dot \sigma, \dot \gamma\in (\mathrm{Int} C)_p$ be the tangent vectors to the curves at $p$ in some parametrizations. We can find an open round cone  $\bar R_p\subset (\mathrm{Int} C)_p$ containing $\dot \sigma, \dot \gamma$ in its interior and a coordinate neighborhood $U\ni p$ such that $U\times R_p\subset \mathrm{Int} C$, where the product comes from the splitting of the tangent bundle induced by the coordinates. Thus we can find a Minkowski metric in a neighborhood of $p$ with cones narrower than $(\mathrm{Int} C)_q$, $q\in U$. But as it is well known the corner can be rounded off in Minkowski spacetime, \cite{lerner72,hawking73,minguzzi18b} and the modified curve has tangent contained in the Minkowski cone and hence in $\mathrm{Int} C$ in a neighborhood of $p$, as we wished to prove.
\end{proof}

\begin{proposition}
Let $(M,C)$ be a proper cone structure, then $I$ is open, transitive and contained in $J$.
\end{proposition}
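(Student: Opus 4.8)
The plan is to establish the three properties in order of increasing difficulty; the inclusion $I\subset J$ and transitivity are essentially immediate, while openness requires a local comparison with a constant round cone. For $I\subset J$: a timelike curve is the image of a piecewise $C^1$ solution of $\dot x\in(\mathrm{Int}\,C)_x$, and since $(\mathrm{Int}\,C)_x\subset C_x$ and a piecewise $C^1$ map is locally absolutely continuous, every timelike curve is in particular a continuous causal curve; thus $(p,q)\in I$ gives $(p,q)\in J$. For transitivity, given $(p,q),(q,r)\in I$ I would concatenate the two timelike curves at $q$; the result is piecewise $C^1$ with tangent in $\mathrm{Int}\,C$ away from the junction, hence a timelike curve from $p$ to $r$, so $(p,r)\in I$. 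No corner-rounding is needed here since the definition of timelike curve already allows corners, although Prop.\ \ref{oaw} could be invoked if a genuinely $C^1$ representative were wanted.

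The substance is openness, and I would first prove that $I^+(m)$ (and dually $I^-(m)$) is open for every $m$. Let $q\in I^+(m)$ be joined to $m$ by a timelike curve $\gamma$ with arrival tangent $v=\dot\gamma\in(\mathrm{Int}\,C)_q$. Reusing the construction of Prop.\ \ref{oaw}, I pick a convex coordinate neighborhood $U\ni q$, with $q$ at the coordinate origin, and an open round cone $R$ with $v\in R$ and $U\times R\subset\mathrm{Int}\,C$ in the coordinate splitting; then every straight segment in $U$ with direction in $R$ is $C$-timelike. A short Taylor estimate along $\gamma$ near its endpoint shows that for small $\delta>0$ the point $q^-:=\gamma(1-\delta)$ lies in $U$ and in the $R$-past of $q$, i.e.\ $-q^-\in R$. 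Because $R$ is open, $q^-+R$ is a neighborhood of $q$, and for every $b$ in a small enough neighborhood $V\ni q$ the segment $[q^-,b]$ stays in $U$ with direction in $R$ and hence is a $C$-timelike curve; thus $V\subset I^+(q^-,U)\subset I^+(q^-)$. Since $q^-=\gamma(1-\delta)$ gives $(m,q^-)\in I$, transitivity yields $I^+(q^-)\subset I^+(m)$, so $V\subset I^+(m)$ and $q\in\mathrm{Int}\,I^+(m)$. As $q$ was arbitrary, $I^+(m)$ is open, and the time-dual argument gives $I^-(m)$ open.

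Finally, to show $I$ is open in $M\times M$, let $(p,q)\in I$ with timelike curve $\gamma$ and set $m=\gamma(1/2)$, so that $p\in I^-(m)$ and $q\in I^+(m)$. By the previous step there are neighborhoods $U\ni p$ with $U\subset I^-(m)$ and $V\ni q$ with $V\subset I^+(m)$. For any $(p',q')\in U\times V$ we have $(p',m),(m,q')\in I$, whence $(p',q')\in I$ by transitivity; therefore $U\times V\subset I$ and $(p,q)\in\mathrm{Int}\,I$. The use of the midpoint $m$ rather than $q$ itself is what makes the argument uniform even in the presence of closed timelike curves, where $q$ need not belong to $I^+(q)$.

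I expect the only delicate point to be the local reachability claim, namely that a full neighborhood of $q$ is reached from $q^-$ by $C$-timelike segments. This is exactly where properness enters: it provides the open round cone $R\subset(\mathrm{Int}\,C)_q$ around $v$ together with the product inclusion $U\times R\subset\mathrm{Int}\,C$, which reduces the question to the evident openness of the chronological future for a constant Minkowski cone. Everything else is bookkeeping with the definitions and with transitivity.
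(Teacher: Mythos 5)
Your proof is correct, but it follows a genuinely different route from the paper's. The paper disposes of the statement in three lines: transitivity is declared clear, and for openness it invokes Eq.\ (\ref{mxl}) to reduce to the case of a $C^0$ proper cone structure, then observes that any timelike curve is timelike for a narrower \emph{round} (Lorentzian) cone structure $R<C$ and cites the classical openness of the chronological relation in Lorentzian geometry. You instead prove everything from scratch: the key local ingredient is the product inclusion $U\times R\subset \mathrm{Int}\, C$ obtained from properness (the same tube-lemma mechanism the paper itself uses in Prop.\ \ref{oaw} and Th.\ \ref{mmz}), which reduces the question to the evident openness of the future of a constant Minkowski cone; you then combine the endpoint-tangent Taylor estimate with transitivity to get openness of $I^{\pm}(m)$, and the midpoint trick to get openness of $I$ as a subset of $M\times M$. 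What the paper's route buys is brevity and consistency with its general strategy of importing Lorentzian results; what it hides is the work buried in Eq.\ (\ref{mxl}) (which rests on the (*) selection property and a globalization of local round cones along the curve) and in the citation. What your route buys is self-containedness and locality: you never need to construct a global narrower cone structure, you actually establish the slightly finer local statement $V\subset I^+(q^-,U)$, and your midpoint device makes the argument insensitive to the possible presence of closed timelike curves. All the delicate points in your write-up check out: the arrival tangent of a piecewise $C^1$ timelike curve lies in $(\mathrm{Int}\,C)_q$, the open round cone $R$ with $U\times R\subset\mathrm{Int}\,C$ exists by compactness of a section of a slightly larger closed round cone inside the open set $\mathrm{Int}\,C$, and straight segments in a coordinate-convex $U$ with direction in $R$ are honest timelike curves.
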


\begin{proof}
Transitivity is clear. By Eq.\ (\ref{mxl}) it is sufficient to prove openness under the $C^0$ assumption.  $I$ is open because any $C$-timelike curve is also a timelike curve for a round cone structure $R$ with smaller cones, $R<C$, where the openness of the chronological relation is well known in  Lorentzian geometry \cite{hawking73}.
\end{proof}

\begin{example} \label{ekg}
In a closed cone structure the causal future of a point might have empty interior though $\mathrm{Int} C_x\ne 0$ for every $x$.  Consider a manifold $\mathbb{R}^2$ of coordinates $(x,t)$, endowed with the stationary (i.e.\ independent of $t$) cone structure $\mathbb{R}_+(\dot x,1)$ given by $\vert \dot x\vert\le 1$ at $x=0$ and $\vert \dot x\vert\le \vert x\vert$, for $\vert x\vert >0$. On the region $\vert x\vert >0$ the fastest continuous causal curves satisfy $\log [ x(t_1)/x(t_0)]=\pm (t_1-t_0)$, thus since the left-hand side diverges for $x(t_0)\to 0$, no solution starting from $(0,t_0)$ can reach the region $x>0$ and similarly the region $x<0$. Thus $J^+((0,t_0))=\{(0,t)\colon t\ge t_0\}$, which has empty interior. Notice that in this example it is not true that $(\mathrm{Int} C)_x\ne 0$ for every $x$, thus this is not a proper cone structure.
\end{example}
\begin{example} \label{mik}
In a proper cone structure a $C^1$ curve can be non-timelike even if $\dot x(t)\in \mathrm{Int} C_{x(t)}$.  Consider a manifold $\mathbb{R}^2$ of coordinates $(x,t)$, endowed with the stationary round cone structure $\mathbb{R}_+(\dot x,1)$: $ x \le \dot x\le  -x+ 1$ for $x<0$, $\vert \dot x\vert\le 1$ for $x=0$;  $  -x\le \dot x \le x+ 1$ for $x>0$. Notice that the $C^0$ proper cone structure $\tilde C$ defined by $\dot x \in [0,1]$ is contained in the given one. The curve $t \mapsto (0,t)$ is not timelike.
\end{example}

\begin{example} \label{mik2}
As another example, consider the manifold $\mathbb{R}^2$ of coordinates $(x,t)$, endowed with the stationary round cone structure $\mathbb{R}_+(\dot x,1)$: $\dot x\in [1,3]$ for $x<0$, $\dot x\in [-4,4]$ for $x=0$ and $\dot x\in [2,4]$ for $x>0$. Then the  $C^0$ proper cone structure $\tilde C$ defined by $\dot x \in [2,3]$ is contained in the given one. The curve $t \mapsto (0,t)$ is not timelike.
\end{example}

It is interesting to explore the properties of the relation  $\mathring{J}:=\mathrm{Int} J$ which will be used to define the notion of geodesic.

\begin{proposition} \label{jon}
 The relation $\mathring{J}$ is open, transitive and contained in $J$.  Moreover, in a proper cone structure $I\subset \mathring{J}$, $\overline{\mathring{J}}=\bar J$ and $\p \mathring{J}=\p J$.
\end{proposition}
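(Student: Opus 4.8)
The plan is to dispose of the three elementary assertions first and then concentrate on $\overline{\mathring{J}}=\bar J$, which is the only substantial point. Openness of $\mathring{J}=\mathrm{Int}\,J$ and the inclusion $\mathring{J}\subset J$ are immediate from the definition of interior. For transitivity I would take $(p,q),(q,r)\in\mathring{J}$ and choose product neighborhoods $A\times B\subset J$ and $B'\times D\subset J$ with $p\in A$, $q\in B\cap B'$, $r\in D$; picking any $q'\in B\cap B'$ we get $A\times\{q'\}\subset J$ and $\{q'\}\times D\subset J$, so for $p'\in A$, $r'\in D$ transitivity of $J$ (concatenation of continuous causal curves) gives $p'\le q'\le r'$, i.e.\ $A\times D\subset J$ and hence $(p,r)\in\mathring{J}$. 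The inclusion $I\subset\mathring{J}$ is also cheap: $I$ is open and contained in $J$, hence contained in the largest open subset $\mathrm{Int}\,J$ of $J$.

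For $\overline{\mathring{J}}=\bar J$ the inclusion $\overline{\mathring{J}}\subset\bar J$ is clear from $\mathring{J}\subset J$, and since $\overline{\mathring{J}}$ is closed it suffices to prove $J\subset\overline{\mathring{J}}$. The tempting shortcut — reduce to the standard fact $J\subset\bar I$ and invoke $I\subset\mathring{J}$ — is \emph{not} available here, because the composition rule $I\circ J\cup J\circ I\subset I$ fails for cone structures (cf.\ Examples \ref{mik}, \ref{mik2}), so $J\subset\bar I$ is genuinely false. I would therefore argue directly. Given $(p,q)\in J$, properness of the cone structure, $(\mathrm{Int}\,C)_x\ne\emptyset$ for all $x$, furnishes through $p$ and through $q$ a local timelike field (a vector $v\in(\mathrm{Int}\,C)_p$ stays in $(\mathrm{Int}\,C)_x$ for nearby $x$ by openness of $\mathrm{Int}\,C$ in $TM$, so the coordinate segment through $p$ in direction $\pm v$ is a $C^1$ timelike curve by Prop.\ \ref{oaw}); its past and future portions give points of $I^-(p)$ and $I^+(q)$ arbitrarily close to $p$ and $q$, so $p\in\overline{I^-(p)}$ and $q\in\overline{I^+(q)}$. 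Pick $b\in I^-(p)$ and $a\in I^+(q)$ in prescribed neighborhoods of $p$ and $q$.

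The key observation is that $I^-(p)\times I^+(q)\subset J$: for $p'\in I^-(p)$ and $q'\in I^+(q)$ one has $p'\ll p\le q\ll q'$, and \emph{mere transitivity of $J$} — not the failing composition property — yields $p'\le q'$. Since $I^-(p)$ and $I^+(q)$ are open, $I^-(p)\times I^+(q)$ is an open subset of $J$ containing $(b,a)$, whence $(b,a)\in\mathrm{Int}\,J=\mathring{J}$. Letting $b\to p$ and $a\to q$ produces points of $\mathring{J}$ converging to $(p,q)$, so $(p,q)\in\overline{\mathring{J}}$; this proves $J\subset\overline{\mathring{J}}$ and therefore $\overline{\mathring{J}}=\bar J$. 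Finally, using that $\mathring{J}$ is open, $\p\mathring{J}=\overline{\mathring{J}}\setminus\mathring{J}=\bar J\setminus\mathrm{Int}\,J=\p J$.

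The main obstacle is conceptual rather than computational: one must recognize that the classical reduction through $\bar I$ breaks down because $I\circ J\subset I$ fails, and that the correct approximating pairs must have \emph{both} endpoints pushed chronologically outward (to $b\in I^-(p)$ and $a\in I^+(q)$), so that the relation $(b,a)\in J$ — indeed $(b,a)\in\mathring{J}$ — is secured using only the transitivity of $J$ together with the openness of the chronological relation.
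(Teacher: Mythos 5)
Your proof is correct and follows essentially the same route as the paper's: openness and transitivity via product neighborhoods plus transitivity of $J$, the inclusion $I\subset\mathring{J}$ from openness of $I$, and $J\subset\overline{\mathring{J}}$ by pushing both endpoints chronologically outward (to $b\ll p$ and $a\gg q$) so that openness of $I$ and transitivity of $J$ place the displaced pair in $\mathring{J}$, exactly as in the paper. The only nitpick is a citation slip: the existence of timelike curves through a point of a proper cone structure is Th.\ \ref{mmz}, not Prop.\ \ref{oaw} (which concerns rounding corners), though your inline derivation of that fact is correct in any case.
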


One should be careful because in general $\mathring{J}^+(p)\subsetneq \mathrm{Int}( J^+(p))$.

\begin{proof}
It is open by definition, so let us prove its transitivity. Let $(p,q)\in \mathring{J}$ and $(q,r)\in \mathring{J}$, then there are is a product neighborhood which satisfies $(p,q)\in U\times V_1\subset J$, and a product neighborhood which satisfies $(q,r)\in V_2\times W\subset J$. Since $U\times \{q\} \cup \{q\}\times W\subset J$ we have by composition $U\times W\subset J$, thus $(p,r)\in \mathring{J}$. For the last statement of the proposition we need only to prove $J\subset \overline{\mathring{J}}$. Let $(p,q)\in J$ and let $p'\ll p$, $q'\gg q$, then since $I$ is open and contained in  $\mathring{J}$, $(p',q') \in \mathring{J}$. Since $p'$ can be taken arbitrarily close to $p$, and analogously, $q'$ can be taken arbitrarily close to $q$, we have $(p,q)\in \overline{\mathring{J}}$.
\end{proof}

The local causality of closed cone structures  is no different from that of Minkowski spacetime due to the next observation.

\begin{proposition} \label{iiu}
Let $(M,C)$ be a  closed cone structure. For every $x\in M$ we can find a relatively compact coordinate open neighborhood $U\ni x$, and a flat Minkowski metric $g$ on $U$ such that at every $y\in U$, $C_y\subset (\mathrm{Int} C^g)_y$ (that is $C\vert_U<C^g\vert_U$). Furthermore, for every Riemannian metric $h$ there is a constant $\delta_h(U)>0$ such that all continuous causal curves in $\bar U$  have $h$-arc length smaller than $\delta_h$.
\end{proposition}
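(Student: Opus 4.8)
The plan is to establish the two assertions separately: the first by enclosing the sharp cone $C_x$ in a round Minkowski cone and propagating the strict inclusion by upper semi-continuity, the second by using a Minkowski coordinate as a time function whose oscillation controls the $h$-arc length.

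First I would fix a coordinate chart at $x$, inducing a splitting $TU\cong U\times\mathbb{R}^{n+1}$ in which cones over different points can be compared. In the fibre over $x$ the cone $C_x$ is closed, sharp and convex, so its section $\hat C_x=C_x\cap\mathbb{S}^{n}$ is a compact subset of an open hemisphere; consequently $C_x$ lies in the interior of some round (Lorentzian) cone. I let $g$ be the flat Minkowski metric on $U$ having that cone as its (constant-in-coordinates) future cone, oriented so that $C_x\subset(\mathrm{Int}\,C^g)_x$. Because $g$ is flat in these coordinates, $\mathrm{Int}\,C^g_y$ is one and the same open set $W:=\mathrm{Int}\,C^g_x$ in every fibre. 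Applying upper semi-continuity of the closed cone structure (Prop.\ \ref{yhh}) to the neighborhood $W\supset C_x$ produces a neighborhood $N\ni x$ with $C_y\subset W=\mathrm{Int}\,C^g_y$ for all $y\in N$; shrinking to a relatively compact $U$ with $\bar U\subset N$ gives $C\vert_{\bar U}<C^g\vert_{\bar U}$, which is the first claim.

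For the second claim I would take the Minkowski time coordinate $t=x^0$, chosen so that $\dd t>0$ on the future cones, and consider the set $K=\{(y,v):y\in\bar U,\ v\in C_y,\ \Vert v\Vert_h=1\}$. Since $C$ is a closed subbundle and $\bar U$ is compact, $K$ is a closed subset of the (compact) $h$-unit sphere bundle over $\bar U$, hence compact; and since $C\vert_{\bar U}<C^g\vert_{\bar U}$, every such $v$ lies in the open Minkowski cone, where the continuous function $\dd t(v)$ is strictly positive. Thus $\dd t$ attains a minimum $c>0$ on $K$, which yields the pointwise bound $\dd t(w)\ge c\,\Vert w\Vert_h$ for every $w\in C_y$ and $y\in\bar U$.

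Finally, for any continuous causal curve $\sigma$ in $\bar U$, defined on a parameter interval $[\tau_0,\tau_1]$, one has $\dot\sigma\in C_\sigma$ almost everywhere, so integrating the pointwise bound gives
\[
t(\sigma(\tau_1))-t(\sigma(\tau_0))=\int_{\tau_0}^{\tau_1}\dd t(\dot\sigma)\,\dd\tau\ge c\int_{\tau_0}^{\tau_1}\Vert\dot\sigma\Vert_h\,\dd\tau,
\]
whence the $h$-arc length of $\sigma$ is at most $c^{-1}[t(\sigma(\tau_1))-t(\sigma(\tau_0))]$. As $t$ is continuous on the compact set $\bar U$ it has finite oscillation, so taking $\delta_h:=c^{-1}\,(\sup_{\bar U}t-\inf_{\bar U}t)+1$ works. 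The routine enclosure of a sharp cone in a round one and the upper semi-continuity step are standard; I expect the main point to be the compactness argument producing the uniform constant $c>0$, since it is exactly this uniform lower bound that turns the flat coordinate $t$ into a global-on-$\bar U$ control of arc length.
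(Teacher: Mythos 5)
Your proof is correct and follows essentially the same route as the paper: enclose the sharp cone $C_x$ strictly inside a round cone, realize that round cone as the constant cone of a flat Minkowski metric in suitable coordinates, and propagate the strict inclusion $C<C^g$ to a whole neighborhood by upper semi-continuity of the closed cone structure. For the arc-length bound the paper merely observes that continuous $C$-causal curves in $\bar U$ are $g$-causal and cites the classical Lorentzian result \cite[p.\ 75]{beem96}; your compactness argument producing the uniform bound $\dd x^0(v)\ge c\,\Vert v\Vert_h$ on the cone over $\bar U$ is a correct, self-contained proof of exactly that cited fact (the only other difference, immaterial for this statement but exploited later, e.g.\ in Th.\ \ref{dao}, is that the paper takes $U$ to be a $g$-chronological diamond).
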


We shall see later that the constructed neighborhood is really globally hyperbolic, (Remark \ref{roc}). Particularly important will be the local non-imprisoning property of this neighborhood which will follow by joining the last statement with Corollary \ref{dox}.

\begin{proof}
Since $C_x$ is sharp we can find a round cone $R_x$ in $T_xM$ containing $C_x$ in its interior. Thus we can find coordinates $\{ x^\alpha \}$ in a neighborhood $\tilde U\ni x$ such that the cone $R_x$ is that  of the Minkowski metric $g=-(\dd x^0)^2+\sum_i (\dd x^i)^2$, where $\dd x^0$ is positive on $C_x$. By upper semi-continuity all these properties are preserved in a sufficiently small neighborhood of the form $I_g^+(p,\tilde U)\cap I_g^-(q,\tilde U):=U\ni x$, $\bar U\subset \tilde U$, in particular the timelike cones of $g$ contain the causal cones of $C$. The continuous causal curves for $C$ in $\bar U$ are continuous causal curves for $C^g$, thus  the last statement follows from the Lorentzian version \cite[p.\ 75]{beem96}.
\end{proof}

Since every continuous  $C$-causal curve is continuous $g$-causal, there cannot be closed continuous $C$-causal curves in $U$.

\begin{remark} \label{nff}
Using standard arguments \cite{minguzzi06c} one can show that the closed cone structure admits at every point a basis  for the topology $\{U_k\}$, $\overline U_{k+1}\subset U_k$, with the properties mentioned by the previous proposition. In fact, the neighborhoods can be set to be nested chronological diamonds for $C^g>C$, so that $U_{k}$ is $C$-causally convex  in $U_1$ for each $k$  (furthermore, they are globally hyperbolic for both $g$ and $C$).
\end{remark}

\begin{proposition} \label{ohg}
Let $(M,C)$ be a closed cone structure. Then there is a locally Lipschitz 1-form $\omega$ such that $C$ is contained in $\omega>0$. Moreover, there is a locally Lipschitz proper cone structure $C'>C$  contained in $\omega>0$.
\end{proposition}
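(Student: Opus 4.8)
The plan is to build both $\omega$ and $C'$ by the usual ``localize, then glue with a partition of unity'' scheme, carrying out the gluing at the level of cones by means of the convex combination of Prop.\ \ref{doo}. First I would produce local data. Fix $x\in M$. Since $C_x$ is sharp, exactly as in the proof of Prop.\ \ref{iiu} I can choose coordinates $\{x^\alpha\}$ on a neighborhood of $x$ in which a round Minkowski cone $R^x$, constant in these coordinates, contains $C_x$ in its interior, and in which $\omega_x:=\dd x^0$ is positive on $R^x$, hence on $C_x$. By upper semi-continuity of the cone structure (Prop.\ \ref{yhh}) I may shrink this to a relatively compact neighborhood $U_x$ on which $C_y<R^x$ (i.e.\ $C_y\subset\mathrm{Int}\,R^x$) for every $y\in U_x$; in particular $\omega_x$ remains positive on $C_y$ for all $y\in U_x$.

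Next I would globalize the form. Choosing a locally finite refinement $\{U_i\}$ of $\{U_x\}$ and a subordinate partition of unity $\{\chi_i\}$ with $\mathrm{supp}\,\chi_i\subset U_i$, I set $\omega=\sum_i\chi_i\,\omega_i$. At any $y$ only finitely many terms survive, and each index $i$ with $\chi_i(y)>0$ satisfies $y\in U_i$, so $\omega_i(y)$ is positive on $C_y$; as the weights are nonnegative and sum to $1$, $\omega(y)$ is positive on $C_y$. Working in a smooth atlas (permitted as recalled in the text, with the $\omega_i$ and $\chi_i$ taken smooth there) the resulting $\omega$ is locally Lipschitz. This settles the first assertion.

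For the second assertion I would adapt the round cones to the now-fixed $\omega$ and glue \emph{those}. Since $\omega(x)$ is positive on the compact section of $C_x$ and positivity is an open condition, I can select a round cone $R_x$ with $C_x<R_x$ and $R_x\subset\{\omega(x)>0\}$; shrinking $U_x$ using upper semi-continuity of $C$ and continuity of $\omega$ I arrange that for all $y\in U_x$ both $C_y<R_x$ and $R_x\subset\{\omega(y)>0\}$ hold. Relabelling these as $\{U_i,R_i\}$ (refining the cover and keeping the partition $\{\chi_i\}$ if needed), I note that for each $y$ the affine hyperplane $P_y=\{\omega(y)=1\}$ cuts every $R_i$ with $\chi_i(y)>0$ in a convex body, so Prop.\ \ref{doo} produces the convex combination $C'_y:=C_{(P_y,\{\chi_i(y)\})}$. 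By construction its $P_y$-section consists of vectors $\sum_i\chi_i(y)c_i$ with $c_i\in R_i\cap P_y$, on which $\omega(y)=\sum_i\chi_i(y)=1$, so $C'_y\subset\{\omega(y)>0\}$; and since $C_y<R_i$ for every surviving $i$, the monotonicity in Prop.\ \ref{doo} yields $C_y<C'_y$, that is $C'>C$, while the same proposition guarantees that each $C'_y$ is a proper cone.

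The step I expect to be the main obstacle is the local Lipschitz regularity of the glued field $C'$. The ingredients $R_i$ (constant in coordinates), $\chi_i$ and $\omega$ are smooth, and the $P_y$-section of $C'_y$ is the finite Minkowski sum $\sum_i\chi_i(y)\,(R_i\cap P_y)$ of smoothly varying convex bodies with smoothly varying weights; one must check that this operation is locally Lipschitz in $y$ in the Hausdorff distance, which requires controlling that the hyperplane field $P_y$ stays uniformly transverse to the $R_i$ so that the sections $R_i\cap P_y$ vary Lipschitz-continuously. Granting this routine but slightly delicate estimate, local finiteness of the cover promotes it to local Lipschitzness of $C'$, and the containments $C<C'$ and $C'\subset\{\omega>0\}$ are immediate from the choices above.
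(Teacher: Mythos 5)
Your proposal is correct and follows essentially the same route as the paper: local data as in Prop.~\ref{iiu} (coordinates in which $\dd x^0$ is positive on the cones and a constant round Minkowski cone contains them), a Lipschitz partition of unity to glue the $1$-forms, and Prop.~\ref{doo} applied with the hyperplane field $\omega^{-1}(1)$ to glue the round cones into a locally Lipschitz proper cone structure $C'>C$ inside $\{\omega>0\}$. The Hausdorff--Lipschitz estimate you flag at the end is exactly what the paper also treats as routine when invoking Prop.~\ref{doo}, so your extra care there (and in re-selecting the round cones after $\omega$ is fixed so that $\omega$ stays positive on them) only makes explicit what the paper's two-line proof leaves implicit.
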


\begin{proof}
In Prop.\ \ref{iiu} we have shown that every $x\in M$ admits an open coordinate neighborhood $U$ such that $\omega^U:=\dd x^0$ is positive on $C\vert_U$, and the round cone $R^U$ of the Minkowski metric contains $C$ and is also in the positive domain of $\omega^U$. The use of a Lipschitz partition of unity and Prop.\ \ref{doo} gives the desired global result.
\end{proof}

A consequence of the Hopf-Rinow theorem and Prop.\ \ref{iiu} is
\begin{corollary} \label{dox}
Let $(M,C)$ be a closed cone structure and let $h$ be a complete Riemannian metric. A continuous causal curve $x\colon [0,a)\to M$ is future inextendible iff its $h$-arc length is infinite.
\end{corollary}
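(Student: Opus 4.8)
The plan is to prove the equivalent biconditional that $x$ \emph{fails} to be future inextendible---i.e.\ it possesses a future endpoint $p$ (a point with $x(t)\to p$ as $t\to a^-$)---if and only if its $h$-arc length $L=\int_0^a\Vert\dot x\Vert_h\,\dd t$ is \emph{finite}. Both implications of the corollary then follow by contraposition, once future inextendibility is identified with the absence of a future endpoint. The two tools are exactly those advertised: the completeness of the Riemannian distance $d^h$ (Hopf--Rinow, since $h$ is complete) for one direction, and the local arc-length bound of Prop.\ \ref{iiu} for the other.

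First I would treat the case $L<\infty$. Using the reparametrization by $h$-arc length described in the ``Convenient reparametrizations'' remark (which turns a solution into a Lipschitz, unit-$h$-speed solution; here the fact that $C$ avoids the origin forces $\Vert\dot x\Vert_h>0$ a.e.\ so that $s\mapsto t(s)$ is well defined), I pass to $\tilde x\colon[0,L)\to M$ with $\Vert\tilde x'\Vert_h=1$ a.e., whence $\tilde x$ is $1$-Lipschitz for $d^h$, i.e.\ $d^h(\tilde x(s_1),\tilde x(s_2))\le\vert s_1-s_2\vert$. As $s\to L^-$ with $L<\infty$ the images form a Cauchy family, and completeness of $(M,d^h)$ yields a limit $p=\lim_{s\to L^-}\tilde x(s)$, which is a future endpoint of $x$. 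Hence finite length implies the existence of an endpoint, i.e.\ a future inextendible curve must have infinite $h$-arc length.

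For the converse I assume $x$ has a future endpoint $p$ and bound the length. By Prop.\ \ref{iiu} choose a relatively compact coordinate neighborhood $U\ni p$ carrying a constant $\delta_h(U)>0$ such that every continuous causal curve contained in $\bar U$ has $h$-arc length below $\delta_h$. Since $x(t)\to p$, there is $t_0<a$ with $x([t_0,a))\subset U$; thus the tail $x\vert_{[t_0,a)}$ is a causal curve in $\bar U$, and its $h$-length is $\le\delta_h$ (the half-open tail is handled by taking the supremum over its compact subintervals, each bounded by $\delta_h$). The initial segment $x\vert_{[0,t_0]}$ is absolutely continuous on a compact interval, so $\int_0^{t_0}\Vert\dot x\Vert_h\,\dd t<\infty$. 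Adding the two contributions gives a finite total $h$-arc length, which is the contrapositive of the remaining implication.

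The argument is essentially a packaging of the two cited results, so I do not expect a genuine obstacle; the only points requiring care are (i) the identification of future inextendibility with the absence of a future endpoint, which is the working definition here, and (ii) verifying that the uniform bound of Prop.\ \ref{iiu} applies to the \emph{half-open} tail of $x$ rather than to a curve with two endpoints---resolved by the compact-exhaustion observation above. No properness of the cones enters, so the statement holds for arbitrary closed cone structures.
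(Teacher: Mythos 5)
Your proof is correct and follows exactly the route the paper intends: the corollary is stated there as "a consequence of the Hopf-Rinow theorem and Prop.\ \ref{iiu}" with no further detail, and your two directions (completeness of $d^h$ plus the unit-speed Lipschitz reparametrization for one implication, the uniform local length bound $\delta_h(U)$ of Prop.\ \ref{iiu} applied to the tail for the other) are precisely the intended filling-in of that outline.
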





Concerning the existence of solutions we have the next results. Under upper semi-continuity we have \cite[Cor.\ 4.4]{smirnov02}
\cite[Th.\  2.1.3,4]{aubin84}
\begin{theorem} \label{zar} (Zaremba, Marchaud)
Let $(M,C)$ be a closed cone structure. Every point $p\in M$ is the starting point of  an inextendible continuous causal curve. Every continuous causal curve can be made inextendible through extension.
\end{theorem}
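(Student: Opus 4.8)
The plan is to realize this as an existence-and-extension statement for the differential inclusion $\dot x\in C_x$ and to reduce it to the classical Zaremba--Marchaud theory applied to the compact convex replacement $\check C$. First I would fix a complete Riemannian metric $h$ and pass from $C$ to the set-valued map $x\mapsto \check C_x$, which by Prop.~\ref{yhh} is upper semi-continuous with non-empty compact convex values (it contains $0$, and is bounded by construction). These are precisely the hypotheses under which the Zaremba--Marchaud existence theorem \cite[Th.~2.1.3]{aubin84}, \cite[Cor.~4.4]{smirnov02} guarantees, through each $p$, a locally absolutely continuous solution of $\dot x(t)\in \check C_{x(t)}$.

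The first delicate point is that the relaxed inclusion admits the trivial stationary solution, whereas I need a genuine causal curve with $\dot x\in C_x$ (so $\dot x\ne 0$) a.e. I would obtain a non-trivial solution from the Euler polygonal construction underlying the cited theorem, choosing the polygonal velocities in the coordinate unit-sphere section $\hat C_x=C_x\cap \mathbb{S}^{n}$ so that the approximants have unit coordinate speed; the uniform Lipschitz bound then yields, via Arzel\`a--Ascoli and the upper semi-continuity of $C$, a limit curve with $\dot x\in C_x$ a.e. That this limit genuinely progresses is controlled by Prop.~\ref{ohg}: the locally Lipschitz $1$-form $\omega$ is positive on $C$, so $\omega(\dot x)>0$ a.e.\ and the curve cannot stall. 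Reparametrizing by $h$-arc length as in the remark on convenient reparametrizations turns it into a unit-speed Lipschitz causal curve. This establishes local existence, and in particular the first claim once combined with the extension step.

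Next I would extend. Ordering the set of causal-curve prolongations of a given continuous causal curve (or of the short curve just produced at $p$) by extension and applying Zorn's lemma, equivalently invoking the maximal-solution form \cite[Th.~2.1.4]{aubin84}, one obtains a solution defined on a maximal half-open arc-length interval $[0,a)$. The heart of the argument is then inextendibility, which I would settle using Cor.~\ref{dox}: a maximal solution must have infinite $h$-arc length. Indeed, suppose the $h$-arc length were finite; since the curve is unit-speed Lipschitz and $h$ is complete, it would be uniformly continuous and hence converge to a point $x(a)$ as $s\to a$. Local existence at $x(a)$ would then produce a strict prolongation, contradicting maximality. Therefore $a=\infty$ in the arc-length parametrization, and by Cor.~\ref{dox} the curve is future inextendible. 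The past case is time-dual, and the same extension-plus-maximality argument applied to an arbitrary given causal curve yields the second statement.

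The step I expect to be the main obstacle is the reconciliation of the relaxed inclusion $\dot x\in\check C_x$ (which tolerates zero velocity) with the requirement of a genuinely causal, non-constant curve: one must carry the unit-speed normalization through the limiting process while keeping the limit velocity inside $C_x$ rather than merely in $\check C_x$, and then verify non-degeneracy using the positivity of $\omega$ on $C$ from Prop.~\ref{ohg}. Once this is handled, the extension and inextendibility are routine consequences of Cor.~\ref{dox} together with local existence at the limiting endpoint.
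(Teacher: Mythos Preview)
The paper does not give its own proof here: it imports the statement from the differential-inclusion literature, citing \cite[Cor.~4.4]{smirnov02} and \cite[Th.~2.1.3,4]{aubin84} and moving on. Your write-up is therefore an elaboration of what those references contain, and the overall plan---local existence for the relaxed inclusion, non-triviality, then Zorn plus Cor.~\ref{dox} for inextendibility---is correct.

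One point deserves tightening. In the non-stalling step your sentence ``$\omega(\dot x)>0$ a.e.'' is slightly circular: positivity of $\omega$ on $C$ yields $\omega(\dot x)>0$ only once you already know $\dot x\in C_x$, whereas the limit of Euler polygonals lands a priori only in $\check C_x\ni 0$. Two clean fixes are available. The first bypasses $\check C$ altogether: apply Zaremba--Marchaud directly to the hyperplane section $F_x:=C_x\cap\omega_x^{-1}(1)$ (or, locally, $C_x\cap\{\dd x^0=1\}$ with $x^0$ from Prop.~\ref{iiu}). This $F$ is upper semi-continuous with non-empty compact convex values not containing $0$, so the cited existence theorem produces directly $\dot x\in F_x\subset C_x$, a genuine continuous causal curve, with no reparametrization needed. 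The second keeps your sphere-section polygonals but runs the positivity argument on the local \emph{function} $x^0$ rather than on the $1$-form $\omega$: the approximants satisfy $\dot x_k^0\ge\delta>0$ on the relevant compact set, so the uniform limit satisfies $x^0(t)-x^0(0)\ge\delta t$ and is non-constant; since $\check C_x\setminus\{0\}\subset C_x$, the reparametrization remark then upgrades it to a causal curve. Either way, the maximality-plus-Cor.~\ref{dox} argument for inextendibility goes through exactly as you describe.
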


For a proper cone structure we have also
\begin{theorem} \label{mmz}
Let $(M,C)$ be a proper cone structure. For every $x_0\in M$ and timelike vector $y_0\in (\mathrm{Int} C)_{x_0} $, there is a timelike curve passing through $x_0$ with velocity $y_0$.
\end{theorem}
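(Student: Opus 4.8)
The plan is to reduce the statement to the openness of the interior cone bundle $\mathrm{Int}\,C\subset TM\setminus 0$ and then exhibit an explicit short arc. By hypothesis $y_0\in(\mathrm{Int}\,C)_{x_0}$ means precisely that the pair $(x_0,y_0)$ lies in $\mathrm{Int}\,C$, which is open in the slit tangent bundle. Thus the entire content is local and one may argue within a single coordinate chart.

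First I would fix a coordinate neighborhood $U\ni x_0$ and use the induced splitting $TU\cong U\times\mathbb{R}^{n+1}$ to compare different fibers. Since $\mathrm{Int}\,C$ is open and $(x_0,y_0)\in\mathrm{Int}\,C$ with $y_0\ne 0$, the definition of the product topology furnishes open sets $U'\ni x_0$ (with $U'\subset U$) and $W\ni y_0$ in $\mathbb{R}^{n+1}$ such that $U'\times W\subset\mathrm{Int}\,C$.

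Then I would take the coordinate straight line $x(t)=x_0+t\,y_0$. Its velocity is the constant vector $\dot x(t)\equiv y_0$, so for $|t|$ small enough that $x(t)\in U'$ one has $(x(t),\dot x(t))=(x(t),y_0)\in U'\times W\subset\mathrm{Int}\,C$, that is $\dot x(t)\in(\mathrm{Int}\,C)_{x(t)}$. Hence $x$ is a $C^1$ (indeed classical) solution of the differential inclusion (\ref{tim}) on a small interval about $0$, passing through $x_0=x(0)$ with velocity $\dot x(0)=y_0$; its image is the desired timelike curve.

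The argument uses nothing beyond upper semi-continuity: openness of $\mathrm{Int}\,C$ is automatic, and properness of the structure enters only through the standing hypothesis that a timelike vector $y_0$ is available at $x_0$. The sole point requiring care is that the neighborhood of $(x_0,y_0)$ can be chosen of product form in the coordinate splitting, so that moving the base point along $x(t)$ does not push the fixed velocity $y_0$ out of the cone; this is exactly the content of the product topology together with the openness of $\mathrm{Int}\,C$ (compare the mechanism used in the proofs of Prop.\ \ref{cob} and Prop.\ \ref{low}). There is therefore no substantial obstacle, and the chief temptation to resist is overcomplicating a construction that openness of $\mathrm{Int}\,C$ renders immediate.
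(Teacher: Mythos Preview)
Your proof is correct and follows essentially the same approach as the paper's: both exploit openness of $\mathrm{Int}\,C$ in the slit tangent bundle, pass to a coordinate chart, and run a coordinate straight line through $x_0$. The paper interposes a round (Minkowski) cone $R_{x_0}\ni y_0$ inside $(\mathrm{Int}\,C)_{x_0}$ and argues that this whole cone persists over a neighborhood, whereas you go directly to a product neighborhood $U'\times W$ of the single point $(x_0,y_0)$; your version is a little more streamlined and also makes the conclusion ``with velocity $y_0$'' explicit, while the paper ends with the integral line of $\partial_0$, leaving the alignment with $y_0$ implicit.
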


\begin{proof}
Since $(\mathrm{Int} C)_{x_0}$ is open there is a closed round cone $R_{x_0}\ni y_0$ contained in $(\mathrm{Int} C)_{x_0}$. Thus we can find coordinates $\{ x^\alpha \}$ in a neighborhood $U\ni x_0$ such that the cone $R_{x_0}$ is that of the Minkowski metric $g=-(\dd x^0)^2+\sum_i (\dd x^i)^2$, where  $\p_0 \in (\mathrm{Int} C)_{x_0}$. By continuity all these properties are preserved in a sufficiently small neighborhood $U\ni x_0$, in particular the timelike cones of $g$ are contained in $\mathrm{Int} C$. Then the integral line of $\p_0$ passing through $x_0$ is a timelike curve.
\end{proof}

Under stronger regularity conditions it can be improved as follows \cite[Th.\ 4]{filippov67} (the non-convex valued version in \cite[p.\ 118]{aubin84} has to assume Lipschitzness).
\begin{theorem} \label{sbs}
Let $(M,C)$ be a $C^0$ closed cone structure. For every $x_0\in M$ and $y_0\in C_{x_0} $, there is a $C^1$ causal curve passing through $x_0$ with velocity $y_0$. If the cone structure is proper and $y_0$ is timelike the curve can be found timelike.
\end{theorem}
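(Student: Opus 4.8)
The plan is to reduce the existence of a $C^1$ solution of (\ref{zoq}) with prescribed initial velocity to a \emph{continuous selection} problem, exploiting that a $C^0$ cone structure is in particular lower semi-continuous. First I would work in a coordinate neighborhood $U\ni x_0$, trivializing $TU\simeq U\times \mathbb{R}^{n+1}$, so that $x\mapsto C_x$ becomes a continuous multivalued map with closed convex non-empty values in $\mathbb{R}^{n+1}\backslash 0$. Lower semi-continuity of this cone map is immediate from the Hausdorff continuity of the sphere sections $\hat C_x$: given $y\in C_x$, $y\ne 0$, and $x_n\to x$, one writes $y=\Vert y\Vert\,\hat y$ with $\hat y\in\hat C_x$, lifts a sequence $\hat y_n\in\hat C_{x_n}\to\hat y$, and sets $y_n=\Vert y\Vert\,\hat y_n\in C_{x_n}\to y$.

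With this in hand, the key step is to produce a continuous vector field $V$ on a neighborhood of $x_0$ with $V(x)\in C_x$ for every $x$ and, crucially, $V(x_0)=y_0$. This is exactly the extension form of Michael's continuous selection theorem: the map $x\mapsto C_x$ is lower semi-continuous with closed convex values on the paracompact $U$, and $x_0\mapsto y_0$ is a continuous selection of its restriction to the closed set $\{x_0\}$, hence it extends to a continuous selection $V$. (Equivalently one may invoke directly Filippov's Theorem 4 \cite{filippov67}, whose content is precisely the existence of a $C^1$ solution with prescribed admissible initial velocity for a continuous convex-valued inclusion.) Once $V$ is available, Peano's existence theorem applied to $\dot x=V(x)$, $x(0)=x_0$, yields a $C^1$ curve with $\dot x(0)=V(x_0)=y_0$ and $\dot x(t)=V(x(t))\in C_{x(t)}$, i.e.\ the desired causal curve through $x_0$ with velocity $y_0$.

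For the timelike refinement, assume the structure proper and $y_0$ timelike, so that by Proposition \ref{cob} one has $y_0\in(\mathrm{Int}\,C)_{x_0}=\mathrm{Int}\,C_{x_0}$, i.e.\ $(x_0,y_0)\in\mathrm{Int}\,C$. Since $\mathrm{Int}\,C$ is open in the slit tangent bundle and $x\mapsto(x,V(x))$ is continuous, the preimage of $\mathrm{Int}\,C$ is an open neighborhood of $x_0$ on which $V$ is a timelike selection; the same integral curve then solves (\ref{tim}) on a small interval and is therefore timelike. Alternatively, this case follows at once from Theorem \ref{mmz}.

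The main obstacle is the selection step, and it is here that the continuity hypothesis is indispensable: for a merely upper semi-continuous (closed) cone structure a continuous selection through a boundary velocity $y_0\in\p C_{x_0}$ need not exist at all, as the examples of non-selectable upper semi-continuous structures in this section show. Thus the genuinely new content beyond Theorem \ref{mmz} is the lightlike case $y_0\in\p C_{x_0}$, and the precise role of the $C^0$ regularity is to restore lower semi-continuity, and hence the Michael/Filippov selection, which then feeds a standard Peano argument.
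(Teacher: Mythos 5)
Your proposal is correct, but it is not quite the route the paper takes: the paper offers no argument at all for this statement, presenting it as the manifold translation of Filippov's Theorem 4 of \cite{filippov67} on classical solutions of differential inclusions with continuous, convex-valued right-hand side (your own parenthetical aside is exactly the paper's ``proof''), with the timelike refinement available from Prop.\ \ref{cob} and Th.\ \ref{mmz}. What you do differently is to unpack that black box: you recover lower semi-continuity of $x\mapsto C_x$ from the $C^0$ (Hausdorff) hypothesis, feed it into the extension form of Michael's selection theorem to obtain a continuous selection $V$ with $V(x_0)=y_0$, and then run Peano on $\dot x=V(x)$. This buys a self-contained argument that makes transparent exactly where continuity is indispensable --- your closing remark, together with the paper's Example \ref{ekg}, correctly identifies that the lightlike initial-velocity case genuinely fails for merely upper semi-continuous structures --- whereas the paper's citation buys brevity and is consistent with its systematic importing of differential-inclusion results. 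One technical point you should tighten: Michael's theorem requires values that are closed (hence complete) in the ambient Banach space $\mathbb{R}^{n+1}$, and $C_x$ is not, since $0$ is a missing limit point; apply the theorem instead to $x\mapsto C_x\cup\{0\}$, or to the truncation $\{y\in C_x\cup\{0\}\colon \Vert y\Vert_h\le R\}$ with $R>\Vert y_0\Vert_h$, which is still lower semi-continuous with closed convex values, and then observe that the selection is nonzero, hence valued in $C_x$, on a neighborhood of $x_0$ by continuity; restricting the Peano solution to a small interval leaves the rest of your argument, including the timelike case via openness of $\mathrm{Int}\,C$, unchanged.
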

Continuous causal curves can be characterized using the local causal relation, in fact we have the following  manifold translation of 
\cite{ghuila65}\cite[p.\ 99, Lemma 1]{aubin84}.
\begin{theorem}
Let $(M,C)$ be a closed cone structure.  A continuous curve $\sigma$ is a continuous causal curve if and only if for  every $p\in \sigma$ there is a coordinate neighborhood $U\ni p$, such that for every $t\le t'$ with $\sigma([t,t'])\subset U$ we have $\sigma(t')\in J^+(\sigma(t),U)$.
\end{theorem}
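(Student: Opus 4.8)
My plan is to treat the two implications separately, since the forward direction is immediate while the converse contains all the difficulty.

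For the easy direction I would argue as follows. If $\sigma$ is a continuous causal curve, it is the image of an absolutely continuous solution $x$ of (\ref{zoq}); given $p=x(s)\in\sigma$ I simply take $U$ to be any coordinate neighborhood of $p$. Whenever $t\le t'$ and $\sigma([t,t'])\subset U$, the restriction $x|_{[t,t']}$ is itself a solution of (\ref{zoq}) with image in $U$ joining $\sigma(t)$ to $\sigma(t')$, so $\sigma(t')\in J^+(\sigma(t),U)$ holds by the very definition of $J(U)$.

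For the converse I would assume the local reachability condition and that $\sigma$ is non-constant. Since being a solution of (\ref{zoq}) is a local property invariant under monotone reparametrization, it suffices to work on a compact subinterval and, after passing to $h$-arc length for a fixed Riemannian metric $h$, to assume $\sigma$ is $1$-Lipschitz with $\Vert\sigma'\Vert_h=1$ a.e. Fixing $s_0$ and $x_0=\sigma(s_0)$, I would use Prop.\ \ref{iiu} to take the neighborhood $U\ni x_0$ to be a flat convex Minkowski diamond with $C|_U<C^g|_U$ and $\dd x^0>0$ on $C$. In these coordinates the hypothesis $\sigma(s')\in J^+(\sigma(s),U)$ for $s\le s'$ forces $\sigma(s')-\sigma(s)$ into the closed future $g$-cone (every $C$-causal curve being $g$-timelike, and $U$ being flat and convex); this already yields monotonicity of $\sigma^0$, the Lipschitz bound, and rectifiability.

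The heart of the argument, and the step I expect to be the main obstacle, is the differentiation. At a point $s_0$ where $\sigma'(s_0)$ exists (a.e.\ by Rademacher) with $\Vert\sigma'(s_0)\Vert_h=1$, the local hypothesis only gives $g$-causality of the secants $\sigma(s)-x_0$, which is far weaker than the desired $\sigma'(s_0)\in C_{x_0}$. To bridge this gap I would fix $\epsilon>0$ and, by upper semi-continuity (Prop.\ \ref{yhh}), choose a neighborhood $W\ni x_0$ so small that the closed convex cone $C_\epsilon$ generated by $\bigcup_{y\in W}C_y$ is sharp and satisfies $\bigcap_\epsilon C_\epsilon=C_{x_0}$. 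The key geometric observation is a confinement: since $C$-causal curves are $g$-causal they are monotone in $x^0$, so a $C$-causal curve from $x_0$ to a nearby point $b$ is trapped in the diamond $J^+_g(x_0,U)\cap J^-_g(b,U)$, whose diameter is $O(\vert b-x_0\vert)$. Hence for $s>s_0$ close to $s_0$ the curve $\gamma$ realizing $\sigma(s)\in J^+(\sigma(s_0),U)$ lies entirely in $W$, so $\dot\gamma\in C_\gamma\subset C_\epsilon$ a.e.; integrating over the closed convex cone $C_\epsilon$ gives $\sigma(s)-x_0\in C_\epsilon$. Dividing by $s-s_0$ and letting $s\downarrow s_0$ yields $\sigma'(s_0)\in C_\epsilon$, and then $\epsilon\to0$ gives $\sigma'(s_0)\in C_{x_0}$. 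As this holds at almost every $s_0$, $\sigma$ solves (\ref{zoq}) and is a continuous causal curve. The difficulty throughout is exactly this confinement-plus-upper-semicontinuity mechanism that upgrades mere $g$-causality of the secants to membership of the derivative in the true cone $C_{x_0}$.
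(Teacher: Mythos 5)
Two remarks before the main point. First, there is no proof in the paper to compare yours against: the theorem is stated as a ``manifold translation'' of a differential-inclusion result of Ghouila-Houri (Aubin--Cellina, Lemma 1, p.~99), so your argument is necessarily an independent one. Second, your easy direction is correct, and the analytic engine of your converse --- comparison with a flat cone $C^g$, confinement of connecting curves in shrinking $g$-diamonds, the sharp enlarged cones $C_\epsilon$ obtained from upper semi-continuity with $\bigcap_\epsilon C_\epsilon=C_{x_0}$, integration of $\dot\gamma$ over a closed convex cone, and the two limits $s\downarrow s_0$ and $\epsilon\to 0$ --- is sound; it is essentially the proof of the cited lemma transported to the manifold. (A small ordering slip: you pass to $h$-arc length before rectifiability is known, but since rectifiability follows from the $g$-causality of the secants this is easily rearranged.)

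The genuine gap is the first move of your converse: you ``use Prop.~\ref{iiu} to take the neighborhood $U\ni x_0$'' of the hypothesis to be a flat Minkowski diamond. The hypothesis only gives you \emph{some} coordinate neighborhood $U_0$, and the reachability property does not pass to smaller neighborhoods: when $\sigma([t,t'])$ lies in a diamond $U\subset U_0$, the hypothesis yields a connecting $C$-causal curve lying in $U_0$, not in $U$, and nothing confines it to $U$. Your confinement mechanism needs the flat metric $g$ with $C<C^g$ to exist on the whole set in which the connecting curves live, i.e.\ on $U_0$ itself, which you cannot arrange because $U_0$ is handed to you. This is not a technicality. On $M=\mathbb{R}\times S^1$ take the constant cone structure $C=\{a\p_t+b\p_\theta\colon a\ge 0,\ b\ge 0\}\setminus\{0\}$ (the future cones of $-\dd t\,\dd\theta$) and the past-directed lightlike arc $\sigma(s)=(0,\theta_0-s)$, $s\in[0,3]$. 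No orientation-preserving reparametrization of $\sigma$ solves (\ref{zoq}), since its derivative is a positive multiple of $-\p_\theta\notin C$; yet the hypothesis holds with $U_0=(-1,1)\times S^1$, a perfectly good coordinate neighborhood (it is diffeomorphic to a planar annulus): for $s\le s'$ the $C$-causal curve winding the \emph{other} way around the circle $t=0$ joins $\sigma(s)$ to $\sigma(s')$ inside $U_0$. Consistently, no flat $g$ with $C<C^g$ exists on this $U_0$, because its time coordinate would have to increase along the closed $C$-causal circle $t=0$. (Reading the conclusion purely at the level of images would save this example --- the arc is the image of the oppositely oriented solution --- but would break the theorem's forward direction for past-directed parametrizations, so the orientation-sensitive reading you adopt is the relevant one.) Hence, with the literal ``there exists a coordinate neighborhood'' hypothesis, the implication you are proving is false, and the neighborhood swap is not merely unjustified but unjustifiable. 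The statement, and your proof, are rescued by reading the hypothesis as holding for \emph{arbitrarily small} neighborhoods of each point (equivalently, for every neighborhood contained in the given one); that is exactly what causal curves satisfy, as your own easy direction shows, and in that form your invocation of Prop.~\ref{iiu} is legitimate and the rest of your argument closes without further gaps.
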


It turns out that upper semi-continuity and Lipschitz continuity are the most interesting weak differentiability conditions that can be placed on the cone structure.

We recall a key, somehow little known result by Filippov \cite[Th.\ 6]{filippov67} \cite[Th.\ 3.1]{wolenski90b}. Here $\Vert \gamma-\sigma\Vert =\sup_t \Vert \gamma(t)-\sigma(t)\Vert$ and the meaning of {\em solution} has been clarified after Eq.\ (\ref{zoq}).
\begin{theorem} \label{azs}
Let $U$ be an open subset of $\mathbb{R}^n$, and let $x\mapsto \check C_x\subset \mathbb{R}^n$ be a Lipschitz multivalued map defined on $U$ with non-empty compact convex values. Let $\sigma\colon [0,a]\to U$,  be a solution of $\dot x\in \check C_{x(t)}$ with initial condition $ \sigma(0)= p\in U$. For any
$\epsilon >0$ there exists a $C^1$ solution  $ \gamma\colon [0,a]\to U$ to $\dot x\in \check C_{x(t)}$ with initial condition $\gamma(0)= p$,
such that $\Vert \gamma-\sigma\Vert \le \epsilon$.
\end{theorem}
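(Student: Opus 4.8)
The plan is to produce $\gamma$ in three steps: mollify $\sigma$ into a $C^\infty$ \emph{approximate} solution whose velocity leaves the cones by a controlled amount, correct that velocity by projecting it fibrewise onto the admissible cones to get an \emph{exact} $C^1$ solution, and bound the total displacement by a Gr\"onwall argument. First I would localize: since $\sigma([0,a])$ is compact and contained in the open set $U$, I fix a bounded open tube $\Omega$ with $\bar\Omega\subset U$ on which $x\mapsto \check C_x$ admits a single Lipschitz constant $L$ in the sense of Eq.\ (\ref{jsw}) and a uniform bound $\check C_x\subset MB$, with $B$ the closed unit ball. In particular $\Vert\dot\sigma\Vert\le M$ a.e., so $\sigma$ is $M$-Lipschitz. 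After extending $\sigma$ slightly beyond $[0,a]$ with constant velocities chosen in $\check C_{\sigma(0)}$ and $\check C_{\sigma(a)}$ (this keeps it $M$-Lipschitz and valued in $\Omega$, and only affects constants below), I set $\sigma_\delta=\sigma*\rho_\delta$ with $\rho_\delta$ a standard mollifier. Then $\sigma_\delta$ is $C^\infty$ and $M$-Lipschitz, $\Vert\sigma_\delta-\sigma\Vert\le M\delta$, and $\sigma_\delta([0,a])\subset\Omega$ for $\delta$ small.

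For the defect estimate, fix $t$ and note that for a.e.\ $s$ with $\vert s-t\vert\le\delta$ one has $\Vert\sigma(s)-\sigma(t)\Vert\le M\delta$, hence $\dot\sigma(s)\in\check C_{\sigma(s)}\subset \check C_{\sigma(t)}+LM\delta\,B$ by (\ref{jsw}). The right-hand set is closed and convex, and $\dot\sigma_\delta(t)=\int\rho_\delta(t-s)\dot\sigma(s)\,\dd s$ is an average of points in it; since a closed convex set contains every average of its points (test against its supporting half-spaces), $\dot\sigma_\delta(t)$ lies there too. Combining with $\check C_{\sigma(t)}\subset\check C_{\sigma_\delta(t)}+LM\delta\,B$ gives $\mathrm{dist}(\dot\sigma_\delta(t),\check C_{\sigma_\delta(t)})\le 2LM\delta=:\eta$, uniformly in $t$. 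Thus $\sigma_\delta$ is a smooth curve whose velocity is everywhere within $\eta$ of the admissible cone at its own base point.

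Next I correct the velocity. Let $\pi_x\colon\mathbb{R}^n\to\check C_x$ be the metric projection onto the compact convex set $\check C_x$. The field $(t,x)\mapsto\pi_x(\dot\sigma_\delta(t))$ is continuous---projection is $1$-Lipschitz in its argument and depends continuously on its convex target in the Hausdorff metric, while $x\mapsto\check C_x$ and $\dot\sigma_\delta$ are continuous---and is bounded by $M$ on $\Omega$. By Peano's theorem the Cauchy problem $\dot x=\pi_x(\dot\sigma_\delta(t))$, $x(0)=p$, has a $C^1$ solution $x$, and by construction $\dot x(t)\in\check C_{x(t)}$, so $x$ is an \emph{exact} $C^1$ solution of the inclusion, defined as long as it stays in $\Omega$. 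Putting $r(t)=\Vert x(t)-\sigma_\delta(t)\Vert$ one has $r(0)\le M\delta$ and, using $\Vert\dot x-\dot\sigma_\delta\Vert=\mathrm{dist}(\dot\sigma_\delta,\check C_x)\le \mathrm{dist}(\dot\sigma_\delta,\check C_{\sigma_\delta})+L\,r\le\eta+L\,r$, Gr\"onwall's inequality yields $r(t)\le M\delta\,e^{La}+(\eta/L)(e^{La}-1)=O(\delta)$. Hence $\Vert x-\sigma\Vert\le\Vert x-\sigma_\delta\Vert+M\delta=O(\delta)$, so $\gamma:=x$ meets $\Vert\gamma-\sigma\Vert\le\epsilon$ once $\delta$ is small enough.

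The main obstacle is twofold. First, the continuity of the metric projection under Hausdorff variation of the target cones must be established (a short subsequence/uniqueness argument: any limit of projections onto Hausdorff-convergent compact convex sets is again the projection onto the limit set), since this both legitimizes Peano and makes the corrected velocity a genuine, continuously varying admissible selection. Second, the Lipschitz bound $L$ is only valid inside $\Omega$, so the estimate on $r$ and the containment $x([0,a])\subset\Omega$ must be bootstrapped together by a continuation argument on the maximal subinterval where $x$ stays in $\Omega$, rather than assumed; the $O(\delta)$ bound then prevents $x$ from reaching $\partial\Omega$ for small $\delta$, closing the loop. The endpoint treatment of the mollification is routine and affects only the constants.
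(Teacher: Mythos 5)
Your proposal is essentially correct, but there is nothing in the paper to compare it against: the paper does not prove Theorem \ref{azs} at all. It imports the statement from the differential-inclusions literature, citing Filippov \cite[Th.\ 6]{filippov67} and Wolenski \cite[Th.\ 3.1]{wolenski90b}, and only then uses it (in Th.\ \ref{sbn}) to derive consequences for cone structures. So your argument should be judged as a self-contained reconstruction, and as such it works; it is in fact the classical Filippov-style scheme (smooth the velocity, correct it by nearest-point projection onto the moving convex sets, close the estimate with Gr\"onwall). The two points you single out as the main obstacles are exactly the right ones, and both are handled correctly: continuity of $(t,x)\mapsto\pi_{\check C_x}(\dot\sigma_\delta(t))$ follows from your subsequence/uniqueness argument (this is where compactness and \emph{convexity} of the values, hence uniqueness of the metric projection, are genuinely used, and continuity is all that Peano needs --- one could not invoke Picard--Lindel\"of here, since the projected field need not be Lipschitz in $x$), and the bootstrapped continuation argument is the standard way to keep the corrected solution inside the tube where the uniform Lipschitz constant $L$ is valid.

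One small expositional slip, easily repaired: on the extended portion of the curve your defect estimate invokes ``$\dot\sigma(s)\in\check C_{\sigma(s)}$ for a.e.\ $s$,'' which is false for $s<0$ or $s>a$, since the constant extension velocity $v_0\in\check C_{\sigma(0)}$ need not lie in $\check C_{\sigma(s)}$ along the extension. The fix is one line: for $|s-t|\le\delta$ with $s<0\le t$ one has $|t|\le\delta$, so $v_0\in\check C_{\sigma(0)}\subset\check C_{\sigma(t)}+L\Vert\sigma(0)-\sigma(t)\Vert B\subset\check C_{\sigma(t)}+LM\delta\,B$, i.e.\ the same inclusion holds with $\sigma(0)$ playing the role of the intermediate base point (and symmetrically at $t=a$). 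With that substitution the uniform bound $\mathrm{dist}(\dot\sigma_\delta(t),\check C_{\sigma_\delta(t)})\le 2LM\delta$ holds for all $t\in[0,a]$, and the rest of your argument (exactness of the projected ODE solution, $\Vert\dot x-\dot\sigma_\delta\Vert=\mathrm{dist}(\dot\sigma_\delta,\check C_{x})\le\eta+Lr$, Gr\"onwall, and the $O(\delta)$ conclusion) goes through as written.
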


It has the following important consequence.

\begin{theorem} \label{sbn}
Let $(M,C)$ be a locally Lipschitz proper cone structure and let $h$ be a Riemannian metric.
 Every point admits an open neighborhood $U$ with the following property.
  Every  $h$-arc length parametrized continuous causal curve in $U$ with starting point $p\in U$ can be uniformly approximated by a $C^1$ timelike solution with the same starting point, and time dually. In particular, $\overline{I^+(p,U)}\supset J^+(p,U)$ and $\overline{I^-(p,U)}\supset J^-(p,U)$.
\end{theorem}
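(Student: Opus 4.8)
The plan is to prove the approximation statement in two stages: first replace $\sigma$ by a nearby $C^1$ causal curve via Filippov's Theorem \ref{azs}, then deform that curve into a timelike one by an explicit inward push, the whole argument being carried out in a sufficiently small coordinate neighborhood. I would first localize: by Proposition \ref{iiu} and Remark \ref{nff} every point has arbitrarily small relatively compact coordinate neighborhoods $U$, which I identify with open subsets of $\mathbb{R}^{n+1}$, on which $\check C$ is a Lipschitz, non-empty, compact- and convex-valued multimap and on which every continuous causal curve has $h$-arc length below a constant $\delta_h(U)$ that tends to $0$ as $U$ shrinks. The given $h$-arc length parametrized $\sigma\colon[0,a]\to U$, $a\le\delta_h$, is then an exact solution of $\dot x\in\check C_x$, so Theorem \ref{azs} provides, for any $\epsilon'>0$, a $C^1$ solution $\gamma_0\colon[0,a]\to U$ with $\gamma_0(0)=p$ and $\Vert\gamma_0-\sigma\Vert\le\epsilon'$; here $\dot\gamma_0(t)\in C_{\gamma_0(t)}\cup\{0\}$ for every $t$, but $\gamma_0$ may be lightlike, which is what the second stage repairs.

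Since $(M,C)$ is a \emph{locally Lipschitz} proper cone structure it is continuous and satisfies condition (*), so after shrinking $U$ I may fix a continuous vector field $W$ with $W_x\in(\mathrm{Int}\,C)_x$ on $\bar U$; by compactness there is $\rho>0$ with $B(W_x,\rho)\subset C_x$ for all $x$, the ball being taken in the coordinate Euclidean norm $\Vert\cdot\Vert$. The elementary fact that a convex cone satisfies $C_x+C_x=C_x$ then gives, for every $u\in C_x\cup\{0\}$ and $\epsilon>0$, the inclusion $B(u+\epsilon W_x,\epsilon\rho)=u+\epsilon\,B(W_x,\rho)\subset u+\epsilon C_x\subset C_x$, i.e.\ $u+\epsilon W_x$ lies inside $C_x$ with margin at least $\epsilon\rho$. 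I define the pushed curve $\tilde\gamma_\epsilon(t)=\gamma_0(t)+\epsilon\int_0^t W_{\gamma_0(s)}\,\dd s$; since $W$ is continuous and $\gamma_0$ is $C^1$, $\tilde\gamma_\epsilon$ is $C^1$ with $\tilde\gamma_\epsilon(0)=p$ and $\dot{\tilde\gamma}_\epsilon(t)=\dot\gamma_0(t)+\epsilon W_{\gamma_0(t)}$, which by the previous remark sits inside $C_{\gamma_0(t)}$ with margin at least $\epsilon\rho$, while $\Vert\tilde\gamma_\epsilon-\gamma_0\Vert\le\epsilon\,(\sup_{\bar U}\Vert W\Vert)\,\delta_h$.

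The only genuine obstacle is that $\dot{\tilde\gamma}_\epsilon(t)$ must lie in the interior of the cone at the \emph{displaced} base point $\tilde\gamma_\epsilon(t)$, not at $\gamma_0(t)$, and here the inward margin competes with the motion of the cone. Local Lipschitzness of $C$ bounds this motion: $\hat d_H(\hat C_{\tilde\gamma_\epsilon(t)},\hat C_{\gamma_0(t)})\le L\Vert\tilde\gamma_\epsilon(t)-\gamma_0(t)\Vert\le L\,\epsilon\,(\sup_{\bar U}\Vert W\Vert)\,\delta_h$, so near unit norm the boundary $\partial C$ moves by at most a constant times $\epsilon L(\sup\Vert W\Vert)\delta_h$. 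Since $\dot{\tilde\gamma}_\epsilon(t)$ is interior to $C_{\gamma_0(t)}$ by the comparable amount $\epsilon\rho$, and the factor $\epsilon$ cancels, the margin beats the displacement as soon as $\rho>\mathrm{const}\cdot L(\sup\Vert W\Vert)\delta_h$; the constants $\rho,L,\sup\Vert W\Vert$ stay controlled while $\delta_h\to0$ as $U$ shrinks (Proposition \ref{iiu}), so for $U$ small enough and every small $\epsilon$ one gets $\dot{\tilde\gamma}_\epsilon(t)\in(\mathrm{Int}\,C)_{\tilde\gamma_\epsilon(t)}$. Thus $\tilde\gamma_\epsilon$ is a $C^1$ timelike solution starting at $p$, it stays in $U$ because $\sigma([0,a])$ is compact in the open set $U$, and $\Vert\tilde\gamma_\epsilon-\sigma\Vert\le\epsilon'+\epsilon(\sup\Vert W\Vert)\delta_h\to0$; this is the asserted uniform approximation. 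The time-dual statement follows by reversing the time orientation (replacing $C$ by $-C$), and the inclusions $\overline{I^\pm(p,U)}\supset J^\pm(p,U)$ are then immediate, since for $q\in J^+(p,U)$ the endpoints $\tilde\gamma_\epsilon(a)\in I^+(p,U)$ converge to $\sigma(a)=q$.
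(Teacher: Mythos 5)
Your proposal is correct and follows essentially the same route as the paper: apply Filippov's Theorem \ref{azs} to the truncated (compact, convex, Lipschitz) distribution $\check C$ to get a $C^1$ causal approximation with the same starting point, then perturb it by adding a small multiple of a timelike direction (the paper uses the constant coordinate field $\p_0$ with a shift linear in $t$; you integrate a timelike field $W$ along the curve). The only difference is that you spell out the margin-versus-displacement estimate---interior margin $\epsilon\rho$ against Lipschitz cone motion of order $\epsilon L\,\delta_h$, with the $\epsilon$'s cancelling so that shrinking $U$ (hence $\delta_h$) wins---which is exactly the justification the paper compresses into its final assertion that $\dot\gamma+\tfrac{\epsilon}{2a}\p_0$ ``is timelike'' at the displaced base point.
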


With Th.\ \ref{dao} we shall learn that the last inclusions are actually equalities. It is worth to mention that the neighborhood $U$ is constructed as in Prop.\ \ref{iiu}.


\begin{proof}
Let $U$ be a coordinate neighborhood endowed with coordinates $\{x^\alpha\}$ constructed as in the proof of Prop.\ \ref{iiu}, where additionally  $\p_0\in \mathrm{Int} C$ and the 1-form $\omega=\dd x^0$ is Lipschitz and positive over $C\vert_U$.
Theorem \ref{azs} applies with
\[
\check C_x= \{y\in  C_x\colon \Vert y\Vert_h\le 1, \textrm{ and } \omega(y)
\ge \delta \},
\]
where $\delta>0$ can be chosen so small on $\bar U$ that $\check C_x\supset \{y\in  C_x\colon \Vert y\Vert_h= 1\}\ne \emptyset$. Every $h$-arc length parametrized solution $\sigma\colon [0,a]\to U$ to (\ref{zoq}) is a solution to $\dot x(t)\in \check C_{x(t)}$ since its velocity is almost everywhere $h$-normalized. Moreover, for every $q\in U$, $\check C_q$ is non-empty, compact and convex. By Theorem \ref{azs} for every $\epsilon >0$ there is  classical solution   $ \gamma \colon [0,a]\to U$ to $\dot x\in \check C_{x(t)}$ with initial condition $\gamma(0)= p$,
such that $\Vert \gamma-\sigma\Vert \le \epsilon/2$, where the norm is the Euclidean norm induced by the coordinates. But this solution is also a $C^1$ solution to $\dot x(t)\in C_{x(t)}$ (since $\delta >0$, we have  $\dot \gamma (t) \ne 0$ for every $t$), namely $\gamma$ is a $C^1$ causal  curve. Let us consider the curve $\eta$ whose components are $\eta^i(t)=\gamma^i(t)$, $\eta^0(t)= \gamma^0(t)+ \epsilon \frac{t}{2a}$, then  $\Vert \eta- \gamma \Vert \le \epsilon/2$, thus $\Vert \eta-  \sigma \Vert \le \epsilon$, but $\dot
\eta^i= \dot \gamma^i $, $\dot \eta^0=\dot \gamma ^0+ \frac{\epsilon}{2a}$, that is $\dot \eta=\dot \gamma+ \frac{\epsilon}{2a}\p_0$ which is timelike. $\empty \ \, $
\end{proof}

The previous result establishes that under Lipschitz regularity, at least locally the solutions to the differential inclusion $\dot x(t)\in F(x(t))$, with $F(x)=\mathrm{Int} C_x$ in our case, are dense in the solutions to the {\em relaxed}  differential inclusion $\dot x(t)\in \overline{\textrm{co}} \, F(x(t))$, where $\overline{\textrm{co}} \, F(x)$ is the smallest closed convex set containing $F(x)$. Results of this type are called {\em relaxation theorems}  the first versions being proved by Filippov and Wa\v zewski \cite{clarke83} \cite[Th.\ 2, Sec.\ 2.4]{aubin84}. In the Lorentzian framework the importance of the Lipschitz condition for the validity of the inclusion $\overline{I^+(p,U)}\supset J^+(p,U)$   was recognized   by Chru\'sciel and Grant \cite{chrusciel12}. They termed {\em causal bubbles} the sets of the form $J^+(p,U)\backslash \overline{I^+(p,U)}$.

We arrive at a classical result of causality theory.

\begin{theorem} \label{soa}
Let $(M,C)$ be a locally Lipschitz proper cone structure. Let $\gamma$ be a continuous causal curve obtained by joining a continuous causal curve $\eta$ and a timelike curve $\sigma$ (or with order exchanged). Then $\gamma$ can be deformed  in an arbitrarily small neighborhood $O\supset \gamma$ to give a timelike curve $\bar \gamma$ connecting the same endpoints of $\gamma$. In particular, $J\circ I\cup I\circ J\subset I$, $\bar J=\bar I$, $\p J=\p I$, $I=\mathring{J}$. For every subset $S$,  $\overline{ J^+(S)}=\overline{ I^+(S)}$, $\p J^+(S)=\p I^+(S)$, $I^+(S)=\mathrm{Int} (J^+(S))$, and time dually.
\end{theorem}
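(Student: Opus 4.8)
The plan is to prove the deformation statement first and then read off every relational consequence from it, using only the openness and transitivity of $I$ and the fact that, for a proper cone structure, each point $q$ admits chronologically related neighbors $q'\gg q$ and $q''\ll q$ arbitrarily close to it (Theorem \ref{mmz}). The deformation itself I would reduce to the local density result Theorem \ref{sbn}. Write $\gamma$ as a continuous causal arc $\eta$ from $p$ to $q$ followed by a timelike arc $\sigma$ from $q$ to $r$ (the order-exchanged case being symmetric). By compactness of the image of $\eta$ and the good neighborhoods of Prop.\ \ref{iiu} and Remark \ref{nff}, I would choose points $p=z_0,z_1,\dots,z_m=q$ along $\eta$ so that each subarc $\eta|_{[z_{i-1},z_i]}$ lies in a neighborhood $U_i$ of the type in Theorem \ref{sbn}, chosen small enough that all curves built below remain inside the prescribed $O$. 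Theorem \ref{sbn} then gives $z_i\in J^+(z_{i-1},U_i)\subset\overline{I^+(z_{i-1},U_i)}$.

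I would construct the timelike deformation by a backward induction, starting from $\sigma$, which is already a timelike curve $\mu_m$ from $z_m$ to $r$ inside $O$. Given a timelike curve $\mu_i$ from $z_i$ to $r$ in $O$, pick a point $w$ on $\mu_i$ just past $z_i$, so that $z_i\ll w$ and $I^-(w)$ is an open neighborhood of $z_i$. Since $z_i\in\overline{I^+(z_{i-1},U_i)}$, there is a point $z_i'\in I^+(z_{i-1},U_i)\cap I^-(w)$ as close to $z_i$ as we wish: the approximating timelike curve from $z_{i-1}$ to $z_i'$ stays uniformly close to $\eta|_{[z_{i-1},z_i]}$, and the short timelike curve from $z_i'$ to $w$ can be confined to a small good neighborhood of $z_i$, so both lie in $O$. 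Rounding the corner via Prop.\ \ref{oaw} and concatenating with the portion of $\mu_i$ beyond $w$ produces a timelike curve $\mu_{i-1}$ from $z_{i-1}$ to $r$ in $O$. After $m$ steps $\mu_0=:\bar\gamma$ is the desired timelike deformation from $p$ to $r$.

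The relational consequences are then routine. The deformation gives $J\circ I\cup I\circ J\subset I$ at once. For $\overline{J}=\overline{I}$, note $I\subset J$ gives $\overline{I}\subset\overline{J}$, while for $(p,q)\in J$ and $q'\gg q$ close to $q$ one has $p\le_J q\ll q'$, hence $(p,q')\in I$ by $I\circ J\subset I$, and letting $q'\to q$ shows $(p,q)\in\overline{I}$. For $\mathring{J}=I$: $I$ is open and contained in $J$, so $I\subset\mathring{J}$; conversely, if $(p,q)\in\mathring{J}$ choose a product neighborhood $A\times B\subset J$ and $q'\in B$ with $q'\ll q$, so $p\le_J q'\ll q$ gives $p\ll q$. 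Then $\partial J=\overline{J}\setminus\mathring{J}=\overline{I}\setminus I=\partial I$. The set versions are identical: $J^+(S)\subset\overline{I^+(S)}$ by pushing endpoints slightly to the future, giving $\overline{J^+(S)}=\overline{I^+(S)}$; $I^+(S)=\mathrm{Int}\,J^+(S)$ follows by applying the same past-shift argument to a point $q'\ll q$ lying in the open set $\mathrm{Int}\,J^+(S)$; and $\partial J^+(S)=\partial I^+(S)$ is then immediate. The past statements follow by time duality.

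The main obstacle I expect is the deformation, and specifically the bookkeeping that keeps every intermediate timelike curve inside the arbitrarily small neighborhood $O$ while patching local approximations at the junction points $z_i$. This is exactly where local Lipschitzness is indispensable: it is Theorem \ref{sbn}, resting on the relaxation Theorem \ref{azs}, that excludes causal bubbles and supplies the uniform approximation of causal arcs by nearby timelike ones, without which the inclusion $\overline{I^+(p,U)}\supset J^+(p,U)$ and hence the whole argument would break down.
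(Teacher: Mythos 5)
Your proposal is correct in substance and rests on exactly the key lemma the paper uses, Theorem \ref{sbn}; the difference is in how the deformation is organized. The paper propagates the timelike connection to $r$ backward along $\eta$ by a connectedness argument: it considers the set $A\subset[0,1]$ of parameters $t$ such that $\eta(t)$ can be joined to $r$ by a timelike curve in $O$, observes that $A$ is open because $I(O)$ is open, and rules out a finite infimum $a\notin A$ of the component containing $1$ by applying Theorem \ref{sbn} in a good neighborhood of $\eta(a)$. You achieve the same propagation by a finite subdivision of $\eta$ into arcs lying in good neighborhoods followed by backward induction; this is an equivalent mechanism with essentially the same content. Your handling of the relational consequences (via $I\circ J\subset I$, openness of $I$, and Theorem \ref{mmz} to produce chronological neighbors $q'\gg q$ arbitrarily close to $q$) is a legitimate alternative to the paper's device, which instead re-runs the deformation to move the endpoints of $\gamma$; both routes are sound.

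One step of your induction, as written, does not work and needs a one-line repair. You pick $z_i'\in I^+(z_{i-1},U_i)\cap I^-(w)$ with $I^-(w)$ the \emph{global} chronological past of $w$, and then assert that the timelike curve from $z_i'$ to $w$ can be confined near $z_i$. Membership in the global $I^-(w)$ only provides \emph{some} timelike curve from $z_i'$ to $w$, with no control on its image: it could leave $O$ entirely, and nothing in the definition of $I$ lets you shrink it. The correct statement replaces $I^-(w)$ by the local relation $I^-(w,V)$, where $V\ni z_i$ is a neighborhood with $V\subset O$ chosen so small that the initial segment of $\mu_i$ from $z_i$ to $w$ lies in $V$ (possible by taking $w$ close to $z_i$ along $\mu_i$). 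Then $z_i\in I^-(w,V)$, the set $I^-(w,V)$ is open because $I(V)$ is open for a proper cone structure, and any $z_i'$ in it is joined to $w$ by a timelike curve \emph{inside} $V\subset O$. With this substitution your induction closes and the rest of your argument goes through.
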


A word of caution. One might wish to consider causal and chronological relations $J(B)$, $I(B)$, where $B$ is not necessarily open. However, in this case  $J(B)\circ I(B)\cup I(B)\circ J(B)\subset I(B)$ would not hold since the deformed curve mentioned in the theorem might not stay in $B$.

\begin{proof}
Let $O$ be an open subset containing $\gamma$. Let $p=\eta(0)$ and $q=\eta(1)$ be the endpoints of $\eta\colon [0,1] \to O$  and let $q$ and $r$ be the endpoints of $\sigma$. Let $A\subset  [0,1] $ be given by those $t$ such that $\eta(t)$ can be connected to $r$ with a timelike curve contained in $O$. Clearly $1\in A$ and since $I(O)$ is open there is a maximal open connected subset of $A$ containing $1$. It cannot have infimum $a\ge 0$, $a\notin A$, indeed by contradiction,  $x=\eta(a)$ admits a neighborhood  $U\ni x$, $U\subset O$ with the properties of Theorem \ref{sbn}. So we can find $y\in U$, $y=\eta(b)\in \eta$, $b>a$, and a timelike curve in $U$ starting from $x$ with endpoint  arbitrarily close to $y$. But $I(O)$ is open and $y\ll_O r$ so there is a timelike curve from $x$ to $r$, a contradiction. Thus $A=[0,1]$ and there is a timelike curve from $p$ to $r$ contained in $O$.

For the penultimate statement we have only to show that $\bar J\subset \bar I$, but this follows immediately if for every continuous causal curve $\gamma$ and every neighborhood $O\supset \gamma$ we can find a timelike curve $\bar \gamma \subset O$ with endpoints arbitrarily close to the endpoints of $\gamma$.
  Let $U$ be an arbitrarily small neighborhood, of the type mentioned in Theorem \ref{sbn},  of the future endpoint $r$ of $\gamma$. Then we can find $q\in \gamma\cap U$, $q< r$, and a timelike curve $\sigma$ in $U$ with future endpoint $r'$ close to $r$ as much as desired. Then by the first part of this theorem we can find a timelike curve $\bar \gamma\subset O$, with endpoints $p$ and $r'$, which concludes the proof.

The inclusion $I \subset \mathring{J}$ was proved in Prop.\ \ref{jon}. For the other direction let $(p,q)\in \mathring{J}$ and let  $q'\ll q$ be a point sufficiently close to $q$ that $(p,q')\in J$. By Th.\ \ref{sbn} we can find $r\gg p$ sufficiently close to $q'$ that $r\ll q$, thus $p \ll q$.

The last statement has a proof  very similar to that of the penultimate statement, just observe that $\bar \gamma$ has the same starting point as $\gamma$.
\end{proof}

In the next theorem we say that a property holds locally if there is a covering $\{V_\alpha\}$ of $M$, consisting of relatively compact open sets such that the property holds for every cone structure $(V_\alpha,C\vert_{V_\alpha})$.

\begin{theorem}
Let $(M,C)$ be a proper cone structure. The conditions
\begin{itemize}
\item[(a)] $I\circ J\cup J\circ I\subset I$, (causal space condition)
\item[(b)] for both sign choices and for all $ p, \ J^\pm(p)\backslash\overline{ I^\pm (p)}=\emptyset$ (no causal bubbling),
 \end{itemize}
 are equivalent. Moreover, the local versions imply the global versions, while the other direction holds provided $(M,C)$ is strongly causal. Finally, they imply $\bar I=\bar J$, $I=\mathring J$, $\p I=\p J$, and that for  every subset $S$,  $\overline{ J^+(S)}=\overline{ I^+(S)}$, $\p J^+(S)=\p I^+(S)$, $I^+(S)=\mathrm{Int} (J^+(S))$, and time dually.
\end{theorem}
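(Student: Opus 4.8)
The plan is to establish the equivalence of (a) and (b) first for the global relations, noting that the argument applies verbatim on any open subset and so relates the local versions too; then to handle the passage between local and global; and finally to read off the consequences. The crucial observation throughout is that the proof of (a)$\Leftrightarrow$(b) uses only openness and transitivity of $I$ together with the (*) condition, hence is valid globally \emph{and} on every $(V_\alpha,C\vert_{V_\alpha})$.

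For (a)$\Rightarrow$(b) I would take $r\in J^+(p)$ and use (*) to produce $r'\gg r$ arbitrarily close to $r$; then $p\le r\ll r'$ is an instance of $I\circ J$, so $p\ll r'$ and $r\in\overline{I^+(p)}$, giving no causal bubbling (and time-dually). For the converse I would exploit openness and transitivity of $I$ with \emph{both} signs of no-bubbling: given $p\ll q\le r$, the time-dual no-bubbling at $r$ yields $q_n\ll r$ with $q_n\to q$; since $I^+(p)$ is open and contains $q$ one gets $p\ll q_n$ for large $n$, whence $p\ll q_n\ll r$ and $p\ll r$. The composition $I\circ J$ is symmetric, using no-bubbling at $p$. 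In particular local (a)$\Leftrightarrow$local (b) and global (a)$\Leftrightarrow$global (b), so it suffices to relate the (a)-versions.

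The heart of the matter is local$\Rightarrow$global, which I would prove for $J\circ I\subset I$ (the $I\circ J$ case being time-dual). Given $p\ll q$ and a causal curve $\gamma\colon[0,1]\to M$ from $q$ to $r$, set $A=\{t:p\ll\gamma(t)\}$ and $b=\sup A$; the claim is $A=[0,1]$. The propagation step is: if $p\ll\gamma(t)$ and $V_\alpha\ni\gamma(t)$ is a member of the covering, then the tail of a global timelike curve from $p$ to $\gamma(t)$ supplies a point $u\in V_\alpha$ with $p\ll u$ and $u\ll_{V_\alpha}\gamma(t)$; for $t'>t$ with $\gamma([t,t'])\subset V_\alpha$ one has $u\ll_{V_\alpha}\gamma(t)\le_{V_\alpha}\gamma(t')$, so local (a) gives $u\ll_{V_\alpha}\gamma(t')$ and hence $p\ll\gamma(t')$. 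Applying this to $t_n\uparrow b$ shows $b\in A$, and applying it at $b$ shows that $b<1$ would force points beyond $b$ into $A$, contradicting the supremum; thus $b=1$ and $p\ll r$. For the reverse implication under strong causality I would cover $M$ by relatively compact causally convex neighborhoods $V_\alpha$; causal convexity gives $J(V_\alpha)=J\cap(V_\alpha\times V_\alpha)$ and likewise for $I$, so global (a) restricts directly to local (a). The main obstacle is exactly this local-to-global propagation: lacking the deformation afforded by Theorem~\ref{sbn} in the non-Lipschitz setting, the timelike character must be transported purely topologically, and the delicate point is extracting the interior timelike tail $u\in V_\alpha$ that feeds the \emph{local} causal-space condition.

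Once (a)$\equiv$(b) is in hand the consequences are routine. From (b), $J^+(S)=\cup_{p}J^+(p)\subset\cup_{p}\overline{I^+(p)}\subset\overline{I^+(S)}$ yields $\overline{J^+(S)}=\overline{I^+(S)}$, and applied to single points it gives $\bar J=\bar I$ (if $q\in J^+(p)\subset\overline{I^+(p)}$ then $(p,q)$ is a limit of pairs in $I$). For $I=\mathring J$ I would combine Prop.~\ref{jon} ($I\subset\mathring J$) with: given $(p,q)\in\mathrm{Int}\,J$, pick by (*) points $p\ll p'$ and $q'\ll q$ inside a product neighborhood $U\times W\subset J$, so that $p\ll p'\le q'\ll q$ and (a) delivers $p\ll q$. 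Then $\partial I=\bar I\setminus I=\bar J\setminus\mathring J=\partial J$. For $I^+(S)=\mathrm{Int}\,J^+(S)$, openness gives the inclusion $\subset$, while for $q\in\mathrm{Int}\,J^+(S)$ a point $q'\ll q$ lying in that interior satisfies $p\le q'\ll q$ for some $p\in S$, so (a) forces $p\ll q$; and $\partial J^+(S)=\overline{I^+(S)}\setminus I^+(S)=\partial I^+(S)$. The time-dual statements follow by reversing the time orientation.
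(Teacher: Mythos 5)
Your proposal is correct and follows essentially the same route as the paper: the (a)$\Leftrightarrow$(b) equivalence via openness/transitivity of $I$ plus the (*) property, the local-to-global step via a connectedness argument along the causal curve that feeds a timelike ``tail'' point inside a covering neighborhood into the local causal-space condition (the paper phrases this as a first-exit-point contradiction, your supremum/propagation version is the same idea), the converse under strong causality via causally convex neighborhoods, and the consequences read off from $J^+\subset\overline{I^+}$ together with (a). The only differences are cosmetic (e.g.\ you perturb both endpoints to get $\mathring J\subset I$ where the paper perturbs one), so no further comment is needed.
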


Condition (a) is the main characterizing property of Kronheimer and Penrose's {\em causal spaces}  \cite{kronheimer67}, which can be defined as triples $(M,I,J)$ where $M$ is a set, $I\subset J\subset M\times M$ are transitive relations which satisfy property (a), where additionally $I$ is irreflexive and $J$ is reflexive and antisymmetric. Hence our terminology.

It  can be noticed that the proof (a) $\Rightarrow$ (b) uses only the transitivity of $I$ and $J$, and the openness of $I$, while (b) $\Rightarrow$ (a) uses also the fact that $I^\pm(p) \cap V$ is non-empty for every point $p\in M$ and open set $V\ni p$.

In some of the next results we shall assume that the proper cone structure is   locally Lipschitz when in fact, as the comparison of this theorem and the previous one suggests, we could have just imposed properties (a) and (b).

\begin{proof}
(a) $\Rightarrow$ (b). Indeed, if $q\in J^+(p)$ it is sufficient to take $r\in I^+(q)$ and notice that $r$ can be chosen arbitrarily close to $q$. Then $r\in I^+(p)$ implies $q\in \overline{ I^+(p)}$.

(b) $\Rightarrow$ (a).   Let $(p,q)\in J$ and $r\in I^+(q)$, from the assumption $J^+(p)\subset \overline{I^+(p)}$, but we know that $I^+(p)\subset {J^+(p)}$, thus $I^+(p)= \mathrm{Int}{J^+(p)}$. Now $I^+(q)$ is an open neighborhood of $r$ contained in $J^+(p)$, thus $r\in \mathrm{Int}{J^+(p)}=I^+(p)$. The similar case with $p\in I^-(q)$ and $(q,r)\in J$ is treated similarly, so $I\circ J\cup J\circ I\subset I$.

Suppose that every point admits a neighborhood $V$ with the properties of the theorem and such that (a) holds, $I(V)\circ J(V)\cup J(V)\circ I(V)\subset I(V)$. Let us prove that (a) holds globally. Indeed, if not there are  a timelike curve  $\gamma\colon [0,1]\to M$, and a continuous causal curve $\sigma\colon [0,1]\to M$ where $\gamma(1)=\sigma(0)$, such that, (recall that $I^+(\gamma(0))$ is open) there is a first point $p:=\sigma(t)$, $t>0$, of exit from $I^+(\gamma(0))$ (the case in which the first curve is timelike and the second is causal is treated in the time-dual way). Let $V\ni p$ be  a neighborhood with the mentioned properties, then for sufficiently small $0<\epsilon<t$, $q:=\sigma(t-\epsilon)\in I^+(\gamma(0))\cap V$ and the $\sigma$-segment between $q$ and $p$ is contained in $V$. Let $r\in V$ be a point in a timelike curve $\eta$ connecting $\gamma(0)$ to $q$, sufficiently close to $q$ that the segment of $\eta$ between $r$ and $q$ stays in $V$, then $(r,q)\in I(V)$ and $(q,p)\in J(V)$ which by the local assumption imply $(r,p)\in I(V)\subset I$, and so $p\in I^+(\gamma(0))$, a contradiction.

Conversely, suppose that (a) holds and that $(M,C)$ is strongly causal, and let us consider a covering of open causally convex relatively compact sets. If $\gamma$ and $\sigma$, $\gamma(1)=\sigma(0)$, are timelike and continuous causal curves contained in one such set $V$, then their concatenation joins points in $V$ which, by assumption, can be joined by a timelike curve.  By causal convexity the timelike curve has to be contained in $V$, hence $I(V)\circ J(V)\subset J(V)$.


Let us prove $\bar I=\bar J$, for the other two identities follow from that. Since $I\subset J$, $\bar I \subset \bar J$, so we have to prove  the condition $\bar J\subset \bar I$, or equivalently $J\subset \bar I$. Assume that the are no causal bubbles, let $(p,q)\in J$, then $q\in J^+(p)\subset \overline{I^+(p)} $ which implies $(p,q)\in \bar I$. The proof of the identity $I=\mathring{J}$ is as in Th.\ \ref{soa}.
The results on the subset $S$, follow from ${ J^+(S)}\subset \overline{ I^+(S)}$, so let $q\in J^+(p)$, $p\in S$, moreover let $r\in I^+(q)$ so that $r\in I^+(p)$, then the limit $r\to q$ gives $q\in \overline{I^+(p)}\subset \overline{ I^+(S)}$ as desired.
\end{proof}

The next result on the arc-connectedness of the space of  solutions is a manifold reformulation of a Kneser's type theorem for differential inclusions  \cite[Cor.\ 4.2, 4.6]{smirnov02} \cite{davy72}
\begin{theorem}
Let $(M,C)$ be a locally Lipschitz proper cone structure. Any point of $M$ admits an open neighborhood $U$ such that for any $p\in U$, any two parametrized continuous causal curves starting (or ending) at $p$ contained in $U$ are joined by a continuous homotopy of continuous causal curves starting (resp.\ ending) at $p$.
\end{theorem}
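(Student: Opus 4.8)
The plan is to reduce the statement to a classical fact about the integral funnel of a differential inclusion and then import the cited Kneser--Davy type theorem. First I would choose the neighborhood $U$ exactly as in Proposition~\ref{iiu} and Remark~\ref{nff}, so that in the associated coordinates $\{x^\alpha\}$ the cone field satisfies $C\vert_U<C^g\vert_U$ for a flat Minkowski metric $g$, the Lipschitz one-form $\omega=\dd x^0$ is positive on $C$, and $U$ is a small coordinate box. Making the $x^0$-extent of $U$ small guarantees, by the bounded Minkowski slope of $C^g$, that every continuous causal curve issued from $p\in U$ stays in $U$ and is a graph $x^0\mapsto(x^0,\vec x(x^0))$ over an $x^0$-interval contained in a fixed $[\tau_0,\tau_1]$.

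Second, I would put all the competing curves on a \emph{common} parameter domain. Since $\omega(\dot x)=\dot x^0\ge\delta>0$ along any element of the compact convex Lipschitz replacement $\check C$ introduced before Theorem~\ref{sbn}, reparametrizing a causal curve by $\tau=x^0$ is legitimate, and the reparametrized spatial part solves a differential inclusion $\frac{\dd\vec x}{\dd\tau}\in G_{(\tau,\vec x)}$, where $G$ is obtained from $\check C$ by rescaling to $\dot x^0=1$. The map $(\tau,\vec x)\mapsto G_{(\tau,\vec x)}$ inherits from $C$ the property of being Lipschitz with non-empty compact convex values, and every continuous causal curve from $p$ corresponds, after this reparametrization, to a solution on the fixed interval $[\tau_0,\tau_1]$ with $\vec x(\tau_0)=\vec p$; conversely every such solution is a genuine continuous causal curve from $p$ because $G$ avoids the origin.

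Third, I would invoke the Kneser--Davy--Smirnov theorem on the arc-connectedness of the solution set of such an inclusion \cite[Cor.\ 4.2, 4.6]{smirnov02}\cite{davy72}: the set of solutions on $[\tau_0,\tau_1]$ with fixed initial point $\vec p$ is an arc-connected (indeed $R_\delta$) subset of $C([\tau_0,\tau_1],\mathbb{R}^{n})$. A continuous arc in this solution set joining the two reparametrized curves is, read back on $M$, the desired continuous homotopy of continuous causal curves all starting at $p$. The ``ending at $p$'' version follows by applying the same argument to the time-reversed cone structure.

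The step I expect to be delicate is the honest reduction underlying the cited theorem. If one wanted a self-contained argument rather than a citation, the coupling of the velocity constraint $\dot{\vec x}\in G_{(\tau,\vec x)}$ to the state is the obstacle, since a \emph{fixed} convex set of admissible ``controls'' must be produced. I would resolve this with Filippov's parametrization lemma, writing $G_{(\tau,\vec x)}=\{f(\tau,\vec x,v):v\in B\}$ for a fixed compact convex $B$ and a map $f$ Lipschitz in $\vec x$ and continuous in $v$; solutions then correspond to the controlled equation $\dot{\vec x}=f(\tau,\vec x,v(\cdot))$ with $v$ ranging over the convex, hence arc-connected, set of measurable selections into $B$. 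Continuity of the solution operator $v\mapsto \vec x_v$, which rests on the Lipschitz dependence on $\vec x$ exactly as in Theorem~\ref{azs}, then transports an arc of controls to an arc of solutions. The remaining care is purely technical: checking that the $x^0$-reparametrization depends continuously on the curve, so that the arc of solutions yields a jointly continuous homotopy $H\colon[0,1]\times[\tau_0,\tau_1]\to U$, and that $H$ remains inside $U$, which is built into the smallness of $U$ fixed in the first step.
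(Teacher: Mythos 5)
Your overall strategy coincides with the paper's: the paper states this theorem \emph{without proof}, presenting it as ``a manifold reformulation of a Kneser's type theorem for differential inclusions'' with exactly the citations you invoke, so the entire mathematical content of a proof is the reduction you attempt to spell out. That reduction, however, has a genuine gap as written. First, the claim that shrinking the $x^0$-extent of $U$ forces \emph{every} continuous causal curve issued from \emph{every} $p\in U$ to stay in $U$ is false: for $p$ close to the boundary of $U$ (whether $U$ is a box or the diamond of Prop.~\ref{iiu}), causal curves from $p$ leave $U$ immediately, no matter how small the height of $U$; the slope bound only confines curves issued from points whose spatial margin inside $U$ exceeds the remaining height. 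Second, and more seriously, the two given curves are graphs over $x^0$-intervals $[\tau_0,b_1]$ and $[\tau_0,b_2]$ which are in general proper subintervals of $[\tau_0,\tau_1]$ and unequal to each other; they are therefore \emph{not} ``solutions on the fixed interval $[\tau_0,\tau_1]$'', and the fixed-interval, fixed-initial-point theorem of \cite[Cor.\ 4.2, 4.6]{smirnov02}, \cite{davy72} does not apply to them directly. Your second paragraph asserts both things at once (``an $x^0$-interval contained in a fixed $[\tau_0,\tau_1]$'' and then ``a solution on the fixed interval $[\tau_0,\tau_1]$''), and this is precisely where the argument breaks.

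Both defects are repairable, and the repair shows what is actually needed. Work with nested neighborhoods $U\subset V$ as in Remark~\ref{nff}, with $V$ spatially fat enough, relative to the Minkowski slope bound and to $\tau_1-\tau_0$, that every solution of the rescaled inclusion issued from a point of $\bar U$ stays in $V$ up to parameter $\tau_1$; this replaces your false containment claim, and it is admissible because the statement only requires the homotopy to consist of continuous causal curves starting at $p$, not to remain in $U$. Then, assuming $b_1\le b_2$, first join $\gamma_2$ to its restriction $\gamma_2\vert_{[\tau_0,b_1]}$ by the domain-stretching homotopy $s\mapsto \gamma_2\vert_{[\tau_0,\,b_1+s(b_2-b_1)]}$, reparametrized to $[0,1]$, which stays inside the image of $\gamma_2\subset U$; next apply the cited arc-connectedness theorem on the common interval $[\tau_0,b_1]$ to join $\gamma_1$ and $\gamma_2\vert_{[\tau_0,b_1]}$, all intermediate solutions remaining in $V$ and hence being genuine continuous causal curves (alternatively, extend both curves to solutions on $[\tau_0,\tau_1]$ inside $V$ via Theorem~\ref{zar} and argue the same way). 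With these two insertions---the nesting $U\subset V$ and the handling of unequal parameter extents---your reduction, including the Filippov parametrization remark resting on Theorem~\ref{azs} and the continuity of the $x^0$-reparametrization (best done after passing to $h$-arc length as in the paper's remark on convenient reparametrizations, so that $\dd x^0/\dd s$ is uniformly bounded below on $\bar V$), becomes sound.
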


\subsection{Notions of increasing functions} \label{nug}
 We shall make use of various notions of increasing function for a closed cone structure $(M,C)$. For future reference we list them here.
A continuous function $\tau\colon M\to \mathbb{R}$ is
\begin{itemize}
\item[(a)]  {\em causal} or {\em isotone}, if  $(p,q)\in J \Rightarrow \tau(p)\le \tau(q)$,
\item[(b)] a {\em time function}, if it increases over every continuous causal curve,
\item[(c)] {\em Cauchy} if restricted to any inextendible continuous causal curve it has image $\mathbb{R}$,
\item[(d)] a {\em temporal function}, if it is $C^1$ and such that for every $p\in M$, $\dd \tau$ is positive on the (future) causal cone $C_p$, (it would be called a (minus) {\em Lyapounov} function in the study of dynamical systems)
    \item[(e)]   {\em  locally anti-Lipschitz}, if there is a Riemannian metric $h$ such that  for every compact set $K$, there is a constant $C_K>0$ such that $\tau(\gamma(1))-\tau(\gamma(0))\ge C_K \ell^h(\gamma)$ for every continuous causal curve $\gamma\colon [0,1]\to K$  (this property does not depend on $h$). By $\sigma$-compactness if $\tau$ is locally anti-Lipschitz there is a Riemannian metric $\hat h$ such that  $\tau(\gamma(1))-\tau(\gamma(0))\ge \ell^{\hat h}(\gamma)$ for every $\gamma\colon [0,1]\to M$. We also say that $\tau$ is $\hat h$-{\em anti-Lipschitz}. We say that $\tau$ is {\em  stably locally anti-Lipschitz} if it is   locally anti-Lipschitz with respect to some wider $C^0$ proper cone structure $C'>C$ (it exists by Prop.\ \ref{ohg}).
\item[(f)] $f$-{\em steep}, if there is a continuous function $f\colon C\to [0,+\infty)$ positive homogeneous of degree one,  $\tau$ is $C^1$ and $\dd \tau(y) \ge f(y)$ for every $y\in C$ ({\em strictly} steep if the inequality is strict). Thus strictly $f$-steep functions are temporal. With some abuse of notation we say that $\tau$ is $h$-{\em steep}, if  with respect to the Riemannian metric $h$, for every $y\in C$, we have $\dd \tau(y) \ge \Vert y\Vert_h $ (hence $h$-anti-Lipschitz and temporal). If $h$ is complete then it is Cauchy.

\end{itemize}


\noindent The last claim in (f) is due to the fact that  over every inextendible continuous causal curve $x\colon I \to M$, the $h$-arc length $\int_a^b \Vert \dot x\Vert_h \dd t$ diverges in both directions \cite{bernard16}.

The next results, which are the cone structure version of \cite[Prop.\ 4.3]{chrusciel13}, suggest that in order to construct temporal functions one has to focus on anti-Lipschitz functions.
\begin{theorem} \label{oqm}
Let $(M,C)$ be a $C^0$ proper cone structure. The  $C^1$ locally anti-Lipschitz functions are precisely the temporal functions.
\end{theorem}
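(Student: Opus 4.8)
The plan is to prove the two implications separately, fixing once and for all an auxiliary Riemannian metric $h$ (local anti-Lipschitzness and temporality are both independent of this choice).

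For the direction \emph{temporal} $\Rightarrow$ \emph{locally anti-Lipschitz}, I would exploit compactness of the unit cone bundle. Fix a compact set $K$ and consider $\hat C_K=\{(x,y)\colon x\in K,\ y\in C_x,\ \Vert y\Vert_h=1\}$; since the cone structure is closed (indeed continuous) and $K$ is compact, this set is compact. The map $(x,y)\mapsto \dd\tau_x(y)$ is continuous and, because $\tau$ is temporal, strictly positive on all of $C_x$ and in particular on $\hat C_K$, so it attains a positive minimum $m_K>0$. Given a continuous causal curve $\gamma\colon[0,1]\to K$, which is absolutely continuous by definition, $\tau\circ\gamma$ is absolutely continuous and $\frac{\dd}{\dd t}\tau(\gamma(t))=\dd\tau_{\gamma(t)}(\dot\gamma(t))$ almost everywhere. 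Using linearity of $\dd\tau_x$ together with $\dot\gamma(t)\in C_{\gamma(t)}$ and $\dot\gamma(t)\neq 0$, one gets $\dd\tau_{\gamma(t)}(\dot\gamma(t))\ge m_K\Vert\dot\gamma(t)\Vert_h$ a.e., and integrating over $[0,1]$ yields $\tau(\gamma(1))-\tau(\gamma(0))\ge m_K\,\ell^h(\gamma)$, which is exactly local anti-Lipschitzness with $C_K=m_K$.

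For the converse, \emph{locally anti-Lipschitz} $\Rightarrow$ \emph{temporal}, fix $p\in M$ and a causal vector $y\in C_p$; the goal is $\dd\tau_p(y)>0$. Here the $C^0$ hypothesis is essential: by Theorem \ref{sbs} there is a $C^1$ causal curve $\gamma$ with $\gamma(0)=p$ and $\dot\gamma(0)=y$. Restricting to a small interval $[0,t]$ contained in a compact neighborhood $K$ of $p$, and applying the anti-Lipschitz inequality to $\gamma|_{[0,t]}$ (legitimate since both sides are parametrization invariant), gives $\tau(\gamma(t))-\tau(p)\ge C_K\,\ell^h(\gamma|_{[0,t]})$. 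Dividing by $t$ and letting $t\to 0^+$, the right-hand side tends to $C_K\Vert\dot\gamma(0)\Vert_h=C_K\Vert y\Vert_h$, because $\gamma$ is $C^1$ and hence $t^{-1}\int_0^t\Vert\dot\gamma\Vert_h\,\dd s\to\Vert y\Vert_h$, while the left-hand side tends to $\dd\tau_p(y)$. Thus $\dd\tau_p(y)\ge C_K\Vert y\Vert_h>0$ since $y\neq 0$, so $\tau$ is temporal.

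The only genuine obstacle is the converse direction, where one needs an actual $C^1$ causal curve issuing from $p$ in the prescribed direction $y$, \emph{including the case where $y$ is lightlike}. This is precisely what continuity of the cone structure provides through Theorem \ref{sbs}, and it is the single place where the $C^0$ (rather than merely upper semi-continuous) assumption is used; for a lightlike direction one could not otherwise guarantee a differentiable causal curve with that initial velocity, so the compactness argument of the first part would not have a partner to close the equivalence.
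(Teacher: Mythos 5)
Your proof is correct, and it diverges from the paper's in an instructive way. In the forward direction (temporal $\Rightarrow$ locally anti-Lipschitz) the two arguments are the same compactness idea in different packaging: the paper builds a Riemannian metric adapted to $\tau$, whose unit balls contain $(\dd\tau\vert_p)^{-1}(1)\cap C_p$, so that $\dd\tau(v)\ge \Vert v\Vert_h$ holds with constant $1$ everywhere, whereas you fix an arbitrary metric and extract a compact-set-dependent constant $m_K$ from positivity of $\dd\tau$ on the compact unit cone bundle over $K$; both land exactly on definition (e). The genuine difference is in the converse. The paper invokes only Theorem \ref{mmz}, which produces a timelike curve with prescribed \emph{timelike} initial velocity (and holds for any proper cone structure); this yields $\dd\tau(v)\ge\Vert v\Vert_h$ for $v\in(\mathrm{Int}\,C)_{x}$, and the inequality is then extended to all of $C_x$ by continuity of $\dd\tau$ — it is this density step where the $C^0$ hypothesis enters, through Prop.\ \ref{cob}, which identifies $(\mathrm{Int}\,C)_x$ with $\mathrm{Int}(C_x)$, whose closure is $C_x$. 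You instead call on Theorem \ref{sbs} (Filippov), which directly supplies a $C^1$ causal curve with \emph{any} prescribed causal initial velocity, lightlike directions included, so the boundary of the cone is handled in one stroke and no limiting argument over the cone is needed. Your route is shorter and conceptually cleaner at the cone boundary, but it rests on the heavier Filippov theorem, which the paper quotes from the literature without proof; the paper's route is more elementary and self-contained, since Theorem \ref{mmz} has an almost immediate proof, at the price of the extra density argument. Both proofs correctly confine the use of the $C^0$ hypothesis to the converse direction, as you observe.
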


\begin{proof}
Let $\tau$ be temporal, then at every $p$, $(\dd \tau\vert_p)^{-1}(1)\cap C_p$ is compact, so we can find a Riemannian metric $h$ whose unit balls contain it.  Then for every $v\in C$, $\dd \tau (v) \ge \Vert v\Vert_h$ which implies $h$-anti-Lipschitzness.

Let $\tau$ be a $C^1$ locally anti-Lipschitz function, then by $\sigma$-compactness there is a Riemannian metric $h$ such that $\tau$ is $h$-anti-Lipschitz. Let us consider a ($C^1$) timelike curve $x\colon [0,1)\to M$ and let us set $v=\dot x(0)$. We know that $\tau(x(t))-\tau(x(0)) \ge  \ell^h(x([0,t)))=\int_0^t \Vert \dot x(s)\Vert_h \dd s$, thus dividing by $t$ and taking the limit $t\to 0$, we get $\dd \tau(v) \ge \Vert v\Vert_h$. By Th.\ \ref{mmz} the inequality is true for every $v\in \mathrm{Int} C_{x(0)}$ and hence, by continuity, for every $v\in C$.
\end{proof}

\begin{theorem}
Let $(M,C)$ be a closed cone structure. The  $C^1$ stably locally anti-Lipschitz functions are precisely the temporal functions.
\end{theorem}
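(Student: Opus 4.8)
The plan is to leverage the previously established equivalence for $C^0$ proper cone structures, Theorem \ref{oqm}, which identifies the $C^1$ locally anti-Lipschitz functions of such a structure with its temporal functions (those whose differential is positive on the whole causal cone). By definition $\tau$ is stably locally anti-Lipschitz precisely when it is locally anti-Lipschitz with respect to some $C^0$ proper cone structure $C'>C$; so the statement reduces to the assertion that a $C^1$ function $\tau$ is temporal for $C$ if and only if there exists a $C^0$ proper cone structure $C'>C$ for which $\dd\tau$ is positive on $C'$ (equivalently, $\tau$ is temporal for $C'$). One direction is then immediate: if $\tau$ is locally anti-Lipschitz with respect to some $C^0$ proper $C'>C$, Theorem \ref{oqm} gives $\dd\tau>0$ on $C'$, and since fiberwise $C_p\subset \mathrm{Int}\,C'_p\subset C'_p$, we get $\dd\tau>0$ on $C$, i.e.\ $\tau$ is temporal. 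The whole content therefore lies in the converse.

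For the converse I must, starting from a temporal $\tau$, thicken $C$ to a wider $C^0$ proper cone structure on which $\dd\tau$ stays positive. The local step: fix $\bar p$ and a Riemannian metric $h$. Since $\dd\tau_{\bar p}$ is positive on the sharp cone $C_{\bar p}$, the section $\{y\in C_{\bar p}:\dd\tau_{\bar p}(y)=1\}$ is compact (a sequence escaping to infinity would, after $h$-normalization, converge to a unit vector of the closed cone $C_{\bar p}$ on which $\dd\tau_{\bar p}$ vanishes, contradicting temporality). I enlarge this compact convex section slightly to a closed round ball inside the affine hyperplane $\{\dd\tau_{\bar p}=1\}$, obtaining a round proper cone $R_{\bar p}$ with $C_{\bar p}\subset \mathrm{Int}\,R_{\bar p}$ and, since every ray of $R_{\bar p}$ meets $\{\dd\tau_{\bar p}=1\}$, with $R_{\bar p}\subset\{\dd\tau_{\bar p}>0\}$. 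Extending $R_{\bar p}$ as a constant cone field in a coordinate chart and invoking the upper semi-continuity of $C$ together with the continuity of $y\mapsto\dd\tau_y$ (applied to a compact cross-section of $R_{\bar p}$), I obtain a neighborhood $U_{\bar p}$ on which $C<R_{\bar p}$ and $\dd\tau_y>0$ on $R_{\bar p}$ for every $y\in U_{\bar p}$; this mirrors the local construction in Proposition \ref{iiu}.

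The remaining task is to glue the local constant proper cone fields $\{R_{\bar p}\}$ into a single global $C^0$ proper cone structure $C'>C$ on which $\dd\tau$ remains positive, exactly the technique used in Proposition \ref{ohg}: take a partition of unity subordinate to $\{U_{\bar p}\}$ and form the fiberwise convex combination of the relevant local cones by Proposition \ref{doo}. The delicate point, which I expect to be the crux, is that positivity of $\dd\tau$ survive this averaging. This holds because at each $y$ the half-space $\{z\in T_yM:\dd\tau_y(z)>0\}$ is convex, each local cone entering the combination lies in it, and the convex combination of cones relative to a hyperplane (Proposition \ref{doo}) has as cross-section the Minkowski convex combination of their sections, which again lies in this convex half-space; positive scaling then keeps the whole combined cone inside it, so $\dd\tau_y>0$ on $C'_y$. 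Proposition \ref{doo} simultaneously guarantees $C<C'$ and that $C'$ is proper, and the partition-of-unity construction makes it continuous.

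Finally, with $C'>C$ a $C^0$ proper cone structure satisfying $\dd\tau>0$ on $C'$, Theorem \ref{oqm} applied to $(M,C')$ shows that $\tau$ is locally anti-Lipschitz for $C'$, that is, stably locally anti-Lipschitz, which completes the converse and hence the equivalence.
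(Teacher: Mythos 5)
Your proof is correct and follows essentially the same route as the paper: both directions are reduced to Theorem \ref{oqm}, and the converse is settled by producing a wider $C^0$ proper cone structure $C'>C$ on which $\dd\tau$ stays positive. The paper compresses that construction into the phrase ``the locally Lipschitz proper cone structure $C'>C$ of Prop.\ \ref{ohg} so close to $C$ that $\dd\tau$ is positive on $C'$''; you simply unpack the same Prop.\ \ref{ohg}/Prop.\ \ref{doo} partition-of-unity argument, checking explicitly that convex combinations of cones contained in the half-spaces $\{\dd\tau_y>0\}$ remain there.
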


\begin{proof}
Let $\tau$ be temporal. Since $\dd \tau$ is positive on $C$ we can find the locally Lipschitz proper cone structure $C'>C$ of Prop.\ \ref{ohg} so close to $C$ that $\dd \tau$ is positive on $C'$.  By Th.\ \ref{oqm} $\tau$ is locally anti-Lipschitz with respect to $C'$ hence a $C^1$ stably locally anti-Lipschitz function.

Let $\tau$ be a $C^1$ stably locally anti-Lipschitz function, then there is a $C^0$ proper cone structure $C'>C$ such that $\tau$ is  $C^1$  locally anti-Lipschitz
 with respect to $C'$, and by Th.\ \ref{oqm} a  temporal function for $C'$ and hence for $C$.
\end{proof}

As we shall see (Remark \ref{sof}), we shall obtain  temporal functions for closed cone structures by passing through the preliminary construction of stably locally anti-Lipschitz functions.

\subsection{Limit curve theorems}

One of the most effective tools used in causality theory is the limit curve theorem \cite{hawking73,beem96,minguzzi07c}. The theory of differential inclusions clarifies that it is very robust, as it holds  under upper semi-continuity of the cone structure.

The next result follows easily from \cite[Th.\ 4.6]{smirnov02} \cite[Cor.\ 2.7.1]{filippov88}.

\begin{theorem} \label{mxe}
Let $(M,C)$ and $(M,C_k)$, $k\ge 1$, be  closed cone structures, $C_{k+1}\le C_k$,  $C=\cap_k C_k$,  and let $h$ be a Riemannian metric on
$M$. If the continuous $C_k$-causal curves $x_k\colon I_k \to M$,
parametrized with respect to $h$-arc length, converge $h$-uniformly on
compact subsets to $x\colon I \to M$, then $x$ is a  continuous
$C$-causal curve.
\end{theorem}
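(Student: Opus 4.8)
The plan is to localize the statement and then feed it, one cone at a time, into the convergence theorem for differential inclusions cited above. Being a continuous causal curve is local and stable under reparametrization, so it suffices to show that $\dot x(t)\in C_{x(t)}$ for almost every $t$ in the interior of $I$; I would work in a fixed relatively compact coordinate chart $U$ about $x(t)$, where the splitting $TU\cong U\times\mathbb{R}^{n+1}$ allows the comparison of fibers. Passing to the compact convex sections $\check C_{k,x}=\{y\in C_{k,x}\cup\{0\}\colon \Vert y\Vert_h\le 1\}$, each $x_k$ is $1$-Lipschitz in its $h$-arc length parameter, so the family is equi-Lipschitz and the uniform limit $x$ is itself $1$-Lipschitz, in particular locally absolutely continuous, with $\dot x_k$ bounded in $L^\infty$ on compact parameter intervals.

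Next I would fix $m$ and exploit the nesting. For every $j\ge m$ we have $C_j\le C_m$, so $\dot x_j(s)\in\check C_{j,x_j(s)}\subset\check C_{m,x_j(s)}$ almost everywhere; thus the tail $\{x_j\}_{j\ge m}$ is a sequence of solutions of the \emph{single} inclusion $\dot x\in\check C_{m,x}$, converging uniformly to $x$. The map $\check C_m$ is upper semi-continuous with non-empty compact convex values (Prop.\ \ref{yhh}), which is exactly the hypothesis under which the cited convergence theorem guarantees that the uniform limit of solutions is again a solution; hence $\dot x(s)\in\check C_{m,x(s)}$ for almost every $s$. As this holds for each $m$, intersecting yields $\dot x(s)\in\bigcap_m\big(C_{m,x(s)}\cup\{0\}\big)=C_{x(s)}\cup\{0\}$ almost everywhere. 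The engine here is that $\dot x_j\rightharpoonup\dot x$ weakly (the weak limit being $\dot x$ because $x_j\to x$ uniformly), and that convexity of the fibers, via Mazur's lemma, turns weak into almost-everywhere convergence of convex combinations, which upper semi-continuity then traps inside arbitrarily small convex neighbourhoods of $\check C_{m,x(s)}$.

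It remains to rule out the degenerate alternative $\dot x(t)=0$, and this is where the sharpness inherent in a closed cone structure enters. By Prop.\ \ref{ohg} applied to $C_1$ there is a locally Lipschitz $1$-form $\omega$ with $C_1\subset\{\omega>0\}$, so every $C_k$ and $C$ lie in $\{\omega>0\}$. On the compact set $\{(x,y)\colon x\in\bar U,\ y\in C_{1,x},\ \Vert y\Vert_h=1\}$ one has $\omega_x(y)\ge c$ for some $c>0$, whence $\omega_{x_k(s)}(\dot x_k(s))\ge c$ almost everywhere; combining the uniform convergence $\omega_{x_k}\to\omega_x$ with the weak convergence of $\dot x_k$ gives $\int_a^b\omega_{x(s)}(\dot x(s))\,\dd s\ge c\,(b-a)$ for every subinterval, so $\omega_{x(s)}(\dot x(s))\ge c$ almost everywhere and in particular $\dot x(s)\ne 0$. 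Therefore $\dot x(s)\in C_{x(s)}$ for almost every $s$, and $x$ is a solution of (\ref{zoq}) for $C$, i.e.\ a continuous $C$-causal curve.

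I expect the genuine difficulty to lie in the second paragraph: the velocities converge only weakly, the constraint $y\in\check C_{m,x}$ is nonlinear in the base point and merely upper semi-continuous, and the target cone appears only as a decreasing intersection. Convexity of the fibers (Mazur) and upper semi-continuity of each fixed $C_m$ are precisely what make the cited Filippov/Smirnov theorem applicable in this nested setting, while the short non-degeneracy step in the third paragraph is exactly what separates the present sharp-cone statement from the possibly degenerate framework of Bernard and Suhr.
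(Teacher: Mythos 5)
Your proposal is correct and follows essentially the same route as the paper: fix one cone $C_m$ of the nested family, observe that the tail $\{x_j\}_{j\ge m}$ consists of solutions of the single inclusion $\dot x\in \check C_{m,x}$, apply the constant-cone convergence theorem for upper semi-continuous compact convex multimaps, and then intersect over $m$ using $C=\cap_k C_k$. The only difference is that you unpack the black box the paper cites (Filippov's Cor.\ 2.7.1) into the weak-convergence/Mazur argument and make explicit the sharpness step ($\omega(\dot x)\ge c>0$ a.e.) ruling out vanishing velocity, which the paper leaves implicit in the citation and its conventions.
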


\begin{proof}
The theorem is true for any constant sequence $C_k=C$ by \cite[Cor.\ 2.7.1]{filippov88}. So for every $s$, the sequence $x_n$ consists of continuous $C_s$-causal curves for $n\ge s$, thus $x$ is a continuous $C_s$-causal curve. So for every $s$, $\dot x \in C_s$ a.e., which implies $\dot x \in C$ a.e., namely $x$ is a continuous $C$-causal curve.
\end{proof}

The next result is the manifold version of \cite[Cor.\ 4.5]{smirnov02}.

\begin{theorem} \label{son}
Let $(M,C)$ be a closed cone structure, and let $h$ be a Riemannian metric. Let $K\subset M$ be compact and let $x_k\colon [0,L]\to K$ be a sequence of $h$-arc length parametrized continuous causal curves, then there is a subsequence converging uniformly on $[0,L]$ to a continuous causal curve $x$ (whose parametrization is not necessarily the $h$-arc length parametrization).
\end{theorem}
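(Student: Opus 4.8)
The plan is to reduce the statement to a standard Arzelà--Ascoli argument followed by a single invocation of Theorem \ref{mxe}, which already does the analytic work of showing that a uniform limit of continuous causal curves is again causal. So the task here is only to produce a uniformly convergent subsequence and to verify that the hypotheses of Theorem \ref{mxe} are met.

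First I would record the key consequence of the $h$-arc length parametrization. Since each $x_k$ satisfies $\Vert \dot x_k\Vert_h = 1$ almost everywhere, for any $s\le t$ in $[0,L]$ we have $d^h(x_k(s),x_k(t)) \le \ell^h(x_k\vert_{[s,t]}) = t-s$. Hence the whole family $\{x_k\}$ is $1$-Lipschitz with respect to $d^h$, so in particular equicontinuous, and by hypothesis all the images lie in the fixed compact metric space $(K,d^h)$. I would then apply the Arzelà--Ascoli theorem for maps $[0,L]\to (K,d^h)$: equicontinuity together with the pointwise relative compactness supplied by compactness of $K$ yields a subsequence, which I relabel $x_k$, converging uniformly on $[0,L]$ to some $x\colon[0,L]\to K$. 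The limit inherits the estimate $d^h(x(s),x(t))\le t-s$, so $x$ is $1$-Lipschitz, hence absolutely continuous with $\Vert \dot x\Vert_h\le 1$ almost everywhere.

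Finally I would apply Theorem \ref{mxe} to the constant sequence $C_k=C$: the $x_k$ are continuous $C$-causal curves, parametrized by $h$-arc length, converging $h$-uniformly to $x$, so the conclusion is that $x$ is a continuous $C$-causal curve. This proves the claim. I would also remark that $x$ need not be $h$-arc length parametrized, since one only controls $\Vert\dot x\Vert_h\le 1$ (rapid oscillations of the $x_k$ can lower the speed of the limit), which is exactly the caveat stated in the theorem.

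The main obstacle is essentially absent here, because all the genuine content---the closedness of the solution set of the convex-valued differential inclusion under uniform limits---is already packaged in Theorem \ref{mxe}. The only points demanding a little care are (i) measuring convergence in $d^h$, so that Ascoli is legitimately applied in the compact metric space $(K,d^h)$, and (ii) confirming that the limit is a genuine causal curve rather than a degenerate (e.g.\ constant) map: this is guaranteed by Theorem \ref{mxe}, which forces $\dot x\in C$ a.e.\ and hence $\dot x\neq 0$ a.e., using that our cones exclude the origin.
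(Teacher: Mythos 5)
Your proof is correct, but it is organized differently from the paper's. The paper proves this theorem by working locally: it covers $K$ by finitely many coordinate neighborhoods, takes a Lebesgue number $\delta$ for the cover, extracts a subsequence with converging initial points, and then applies the compactness theorem for solution sets of differential inclusions (Smirnov's Cor.\ 4.5) segment by segment on intervals of length $\delta$, iterating the subsequence extraction $N\ge L/\delta$ times. You instead split the two ingredients that Smirnov's result packages together: you obtain the uniformly convergent subsequence by a global Arzel\`a--Ascoli argument in the compact metric space $(K,d^h)$, using the uniform $1$-Lipschitz bound $d^h(x_k(s),x_k(t))\le t-s$ that comes from $h$-arc length parametrization, and you then obtain causality of the limit from Theorem \ref{mxe} applied to the constant sequence $C_k=C$ (which is available, as \ref{mxe} precedes \ref{son} and its constant-sequence case is exactly Filippov's closure theorem). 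What your route buys is the elimination of the covering/Lebesgue-number bookkeeping and a cleaner logical structure (compactness $=$ Ascoli, closedness $=$ \ref{mxe}), staying entirely within tools the paper has already set up; what the paper's route buys is that all the analysis, including equicontinuity, is delegated to a single citation from differential-inclusion theory, at the cost of the local patching argument. Your closing remarks are also accurate: the limit is only $\Vert\dot x\Vert_h\le 1$, hence possibly not arc-length parametrized, and non-degeneracy of the limit is part of the conclusion of \ref{mxe} (a constant map is not a causal curve since $0\notin C_x$), so you are entitled to it once \ref{mxe} is invoked.
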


The bound on the $h$-arc length of $x_k$ is necessary, without it counterexamples can easily be found on the Lorentzian 2-dimensional spacetime $\mathbb{R}\times S^1$ whose metric is $g=-\dd t\dd \theta$.

\begin{proof}
Consider a finite covering $\{U_i\}$ of $K$ by coordinate neighborhoods and let $\delta>0$ be a Lebesgue number relative to the metric $d^h$.
 A subsequence $x_k^1$ of $x_k$ is such that  the points $x^1_k(0)$ converge to some point $x(0)\in U_i$ for some $i$. Apply to the sequence $x^1_k\vert_{[0,\delta]}$ the mentioned result \cite[Cor.\ 4.5]{smirnov02}, thus obtaining a convergent sequence $x_k^2$, then focus on the convergence of  $x^2_k(\delta)$ and repeat the argument proceeding in $[0,\delta]$ steps. Since $L/\delta$ is bounded by some natural number $N$, in $N$-steps one constructs the desired converging sequence.
\end{proof}

As a corollary we obtain the limit curve lemma familiar from (Lorentzian) mathematical relativity
\cite{galloway86b} \cite[Lemma 14.2]{beem96} under much weaker assumptions.
\begin{lemma}(Limit curve lemma) \\
Let $(M,C)$ and $(M,C_n)$ be  closed cone structures, where $C=\cap_n C_n$ and  for every $n$, $C_{n+1}\le C_n$, and let $h$ be a complete
Riemannian metric. \\
Let $x_n\colon (-\infty,+\infty) \to M$,  be a
sequence of inextendible continuous causal curves parametrized with
respect to $h$-arc length, and suppose that $p\in M$ is an accumulation
point of the sequence $x_n(0)$. There is  an  inextendible
continuous causal curve $x\colon (-\infty,+\infty)  \to M$, such
that $x(0)=p$ and a subsequence $x_k$ which converges
$h$-uniformly on compact subsets to $x$.
\end{lemma}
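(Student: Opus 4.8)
The plan is to run the classical diagonal argument of the Lorentzian limit curve theorem, but to route the compactness through Theorem~\ref{son} and the identification of the limiting cone through Theorem~\ref{mxe}, so that the limit is seen to be $C$-causal and not merely $C_1$-causal. Completeness of $h$, via the Hopf--Rinow theorem, is what confines the curves to compact sets and thereby makes Theorem~\ref{son} applicable on each parameter window.

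First I would pass to a subsequence with $x_n(0)\to p$, which is possible since $p$ is an accumulation point of $\{x_n(0)\}$. Because each $x_n$ is parametrized by $h$-arc length, $d^h(x_n(t),x_n(0))\le |t|$, so for every $m$ and all large $n$ the segment $x_n|_{[-m,m]}$ is contained in the closed ball $\bar B_h(p,m+1)$, which is compact by completeness of $h$. Since $C_n\subseteq C_1$ for every $n$, all the $x_n$ are $C_1$-causal, so Theorem~\ref{son} applies on $[-m,m]$ with the compact set $K_m=\bar B_h(p,m+1)$ and the cone structure $C_1$, yielding a subsequence that converges uniformly there to a continuous causal curve. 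Diagonalizing over $m$ produces a single subsequence, which after relabelling I call $x_k$, converging $h$-uniformly on compact subsets to a curve $x\colon\mathbb{R}\to M$; uniform convergence on $[-1,1]$ gives $x(0)=\lim_k x_k(0)=p$.

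It then remains to identify $x$ as an \emph{inextendible} $C$-causal curve. That $x$ is $C$-causal follows from Theorem~\ref{mxe}: arranging the diagonal so that the $k$-th curve is $C_{n_k}$-causal with $n_k\ge k$, hence $C_k$-causal, the hypotheses of Theorem~\ref{mxe} (shrinking cone structures, $h$-arc length parametrization, uniform convergence) are met and the limit is $C=\cap_k C_k$-causal. I expect the genuinely delicate point to be inextendibility, and precisely here the caveat of Theorem~\ref{son}---that the limit need not inherit the $h$-arc length parametrization---blocks a direct appeal to Corollary~\ref{dox} applied to $x$ itself, since a priori the $h$-arc length of $x$ could be finite. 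I would argue by contradiction: if $x|_{[0,\infty)}$ had finite $h$-arc length then by Corollary~\ref{dox} it would possess a future endpoint $q=\lim_{t\to\infty}x(t)$. Applying Proposition~\ref{iiu} to the closed cone structure $(M,C_1)$ gives a confining neighborhood $U\ni q$ such that every continuous $C_1$-causal curve contained in $\bar U$ has $h$-arc length below some $\delta>0$. Choosing $T$ with $x([T,\infty))\subset U$ and then $s$ with $s-T>\delta$, uniform convergence on the compact interval $[T,s]$ forces $x_k([T,s])\subset\bar U$ for all large $k$; but $x_k|_{[T,s]}$ is an $h$-arc length parametrized $C_k$-causal (hence $C_1$-causal) curve of length exactly $s-T>\delta$, contradicting the bound. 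Thus $x$ is future inextendible, and the past case is time-dual. The main obstacle, then, is exactly this inextendibility step, whose resolution rests on transferring the infinite length of the approximating curves to the limit through the local confinement supplied by Proposition~\ref{iiu}.
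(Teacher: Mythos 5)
Your proposal is correct and follows exactly the route the paper intends: the paper states this lemma as a corollary of Theorems~\ref{son} and \ref{mxe} (compactness on compact parameter windows via completeness of $h$, diagonalization, and identification of the limit as $C$-causal), which is precisely your argument. Your explicit treatment of inextendibility—ruling out a finite-length limit by combining Corollary~\ref{dox} with the local arc-length bound of Proposition~\ref{iiu} applied to $(M,C_1)$ and uniform convergence—is the detail the paper leaves implicit, and it is handled correctly.
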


Using the previous results  we obtain a version which is especially useful when we have causal segments for which both endpoints are converging, see  \cite{minguzzi07c} for the Lorentzian version.

\begin{theorem} \label{main} (Limit curve theorem)\\
Let $(M,C)$ and $(M,C_k)$ be  closed cone structures, where $C=\cap_n C_n$ and  for every $n$, $C_{n+1}\le C_n$, and let $h$ be a complete
Riemannian metric. Let $x_n\colon [0,a_n] \to M$ be a sequence  of $h$-arc length parametrized
continuous $C_n$-causal curves with endpoints $p_n\to p$, and $q_n\to q$.
Provided the curves $x_n$ do not contract to a point (which is the case if $p\ne q$) we can find either (i) a  continuous $C$-causal curve $x\colon [0,a] \to M$ to which a subsequence $x_k$, $a_k\to a$, converges uniformly on compact subsets,  or (ii) a future inextendible  parametrized continuous $C$-causal curve $x^p\colon [0,+\infty)  \to M $ starting from $p$, and a past inextendible  parametrized continuous $C$-causal curve $x^q\colon (-\infty,0]  \to M$ ending at $q$, to which some subsequence $x_k(t)$ (resp.\ $x_k(t+a_k)$) converges uniformly on compact subsets. Moreover, for every $p'\in x^p$ and $q'\in x^q$, $(p',q')\in \cap_n \bar J_n$.
\end{theorem}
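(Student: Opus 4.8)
The plan is to run the standard dichotomy on a subsequential limit $a$ of the lengths $a_n$, which we may take in $(0,+\infty]$ (positivity being exactly the meaning of the non-contraction hypothesis, and forced when $p\neq q$ since then $a_n\ge d^h(p_n,q_n)\to d^h(p,q)>0$). After passing to a subsequence we assume $a_k\to a$, and distinguish $a<\infty$ (alternative (i)) from $a=+\infty$ (alternative (ii)).

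\textbf{Case $a<\infty$.} First I would confine the curves: with $a_n\le L$ and $x_n$ starting at $p_n\to p$ of $h$-length $a_n$, the images lie in $\bar B_h(p,L+1)$, compact by completeness of $h$ (Hopf--Rinow). For each $a'<a$ one has $a_n>a'$ eventually, so the $x_n|_{[0,a']}$ are continuous $C_n$-causal---hence $C_1$-causal, using $C_n\le C_1$---curves of bounded length in a fixed compact set; Theorem~\ref{son} extracts a subsequence converging uniformly on $[0,a']$, and Theorem~\ref{mxe} makes the limit $C$-causal. A diagonal argument over an increasing sequence $a'_m\uparrow a$ then yields one subsequence $x_k$ converging uniformly on compact subsets of $[0,a)$ to a continuous $C$-causal curve $x\colon[0,a)\to M$ with $x(0)=p$. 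Setting $x(a):=q$, continuity at $a$ follows from $d^h(x(t),q)\le d^h(x(t),x_k(t))+(a_k-t)+d^h(x_k(a_k),q)$, since letting $k\to\infty$ gives $d^h(x(t),q)\le a-t\to 0$; thus $x\colon[0,a]\to M$ is the curve of alternative (i).

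\textbf{Case $a=+\infty$.} Pass to a subsequence with $a_k\to\infty$. For the future curve I would extend each $x_n$ to a bi-inextendible continuous $C_n$-causal curve $\hat x_n\colon\mathbb{R}\to M$, $h$-arc length parametrized with $\hat x_n(0)=p_n$ (Theorem~\ref{zar} provides the inextendible extension, and completeness of $h$ with Corollary~\ref{dox} forces the domain to be all of $\mathbb{R}$). The Limit curve lemma applied with $p$ as accumulation point of $\hat x_n(0)=p_n$ produces an inextendible $C$-causal $\hat x\colon\mathbb{R}\to M$, $\hat x(0)=p$, with a convergent subsequence; its restriction $x^p:=\hat x|_{[0,\infty)}$ is future inextendible, and since $a_k\to\infty$ we have $\hat x_k|_{[0,T]}=x_k|_{[0,T]}$ eventually, so $x_k\to x^p$ uniformly on compacts. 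Reparametrizing $y_n(t):=x_n(t+a_n)$, with $y_n(0)=q_n\to q$, and repeating the construction on a \emph{common} further subsequence gives the past-inextendible $x^q\colon(-\infty,0]\to M$, $x^q(0)=q$, with $x_k(\cdot+a_k)\to x^q$ uniformly on compacts. This is alternative (ii).

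\textbf{The concluding relation.} Fix $p'=x^p(s)$ and $q'=x^q(-u)$ with $s,u\ge 0$, so that $p'=\lim_k x_k(s)$ and $q'=\lim_k x_k(a_k-u)$ along the common subsequence. Fix $m$; for all large $k$ one has $k\ge m$ and $s<a_k-u$, hence $x_k$ is $C_m$-causal (because $C_k\le C_m$) and its segment from parameter $s$ to $a_k-u$ witnesses $(x_k(s),x_k(a_k-u))\in J_m$. Letting $k\to\infty$ and using that $\bar J_m$ is closed gives $(p',q')\in\bar J_m$, and since $m$ is arbitrary $(p',q')\in\bigcap_n\bar J_n$. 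I expect the main obstacle to be organizational rather than conceptual: arranging that a single subsequence simultaneously serves $x^p$, $x^q$, and the relation claim, and verifying that the limit curves are genuinely inextendible---for which the equivalence between inextendibility and infinite $h$-length (Corollary~\ref{dox}) together with the Limit curve lemma are the crucial inputs.
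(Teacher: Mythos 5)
Your proof is correct, but it is organized quite differently from the paper's. The paper proves the theorem by a two-step reduction: it first invokes the constant-sequence case ($C_n=C$ for all $n$), whose proof it attributes to the Lorentzian argument of \cite{minguzzi07c} on the grounds that the tools used there have already been generalized; it then handles the varying cones by applying that constant case successively to $(M,C_1),(M,C_2),\dots$ (legitimate since $x_n$ is $C_m$-causal for $n\ge m$), extracting nested subsequences and taking the diagonal one, whose limit is $C_m$-causal for every $m$ and hence $C$-causal. You instead give a direct, self-contained argument: the dichotomy on the subsequential limit $a$ of the lengths, with Theorem \ref{son} plus Theorem \ref{mxe} doing the work when $a<\infty$, and Theorem \ref{zar}, Corollary \ref{dox} and the limit curve lemma (which the paper already states in its varying-cone, $C=\cap_n C_n$ form) when $a=+\infty$; the diagonalization over cone structures is thereby absorbed into tools already available, and never appears explicitly. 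What the paper's route buys is brevity and reuse of the detailed Lorentzian proof; what yours buys is that everything is checked on the spot: the endpoint extension $x(a)=q$ in case (i) (which is unproblematic, since the uniform limit of $h$-arc-length parametrized curves is $1$-Lipschitz, so the extension remains absolutely continuous with derivative in $C$ a.e.), and, notably, the final claim $(p',q')\in\cap_n\bar J_n$, which the paper's proof leaves implicit in the citation, is argued explicitly -- your segment-of-$x_k$ argument with $k\ge m$, $s<a_k-u$, and closedness of $\bar J_m$ is exactly what is needed.
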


\begin{proof}
The proof for the constant sequence case, $\forall n\ C_n=C$, coincides with that given in \cite{minguzzi07c} for a Lorentzian structure as the tools used there, such as the limit curve lemma, have been already generalized. The general case follows from the next argument. We apply the theorem of the constant sequence case to $(M,C_1)$ obtaining a subsequence $x^1_k$ which converges $h$-uniformly to some parametrized continuous $C_1$-casual curve $x^1$, then we apply it to $(M,C_2)$ obtaining a converging subsequence of $x^1_k$, denoted $x^2_k$, which converges $h$-uniformly to some continuous $C_2$-causal curve $x^2$, necessarily coincident with $x:=x^1$ by $h$-uniform convergence,  and so on. Finally, we take the diagonal subsequence $x^k_k$ converging $h$-uniformly to $x$. Since $x=x^k$  is a continuous $C_k$-causal curve for every $k$, it is also a continuous $C$-causal curve.
\end{proof}

The previous result together with Th.\ \ref{soa} implies (see \cite{chrusciel12} for the analogous Lorentzian statement)

\begin{remark}
The results of Lorentzian causality theory \cite{hawking73,beem96,minguzzi06c,chrusciel11} which do not explicitly
address normal neighborhoods or geodesics remain valid for locally Lipschitz cone structures.
\end{remark}

Actually, several results still make sense in the locally Lipschitz theory which do involve lightlike geodesics, as we shall see in the next section. In what follows we shall explore them and we shall investigate more closely  causality theory with the aim of understanding whether the locally Lipschitz condition can be weakened to an upper semi-continuity or a continuity condition.

\subsection{Peripheral properties and lightlike geodesics}


We need a generalization of  the notion of achronality.

\begin{definition}
Given a relation $R$ and a set $S$ we say that $S$ is $R${\em --arelated} if  no two points $p,q\in S$  are such that $(p,q)\in R$.
A set $S$ is {\em achronal} (resp.\ acausal) if no two points of $S$ are connected by a timelike curve (resp.\ continuous causal curve).
\end{definition}
Thus {\em achronal} stands for  $I$--arelated and {\em acausal} for $J\backslash \Delta$--arelated.
Since $I\subset \mathring{J}$, $\mathring{J}$--arelation is in general stronger than achronality when the latter  can be defined. They coincide in  locally Lipschitz proper cone structures.

On a  cone structure we can make sense of lightlike geodesics as follows. Notice that we do not include the property of inextendibility in the definition. Also most instances of {\em future} and {\em past} in the next definition refer to the relation direction not to the inextendibility of the domain.

\begin{definition}
 A {\em lightlike geodesic} is a continuous causal curve which is locally $\mathring{J}$-arelated. A {\em lightlike line} is an inextendible continuous causal curve which is $\mathring{J}$-arelated.
A {\em future lightlike geodesic}  is a continuous causal curve $\sigma$  such that every $r\in \sigma$ admits an open neighborhood $U$
 for which locally we cannot find two points in $\sigma$ such that $q\in \mathrm{Int} J^+(p,U)$.

We have also analogous past notions and global $\mathring{J}$--arelation notions in which {\em geodesic} is replaced by {\em line}.
A {\em  future lightlike ray} is a future inextendible lightlike geodesic which is $\mathring{J}$--arelated.
If in the second sentence of the previous paragraph {\em inextendibility} is replaced by {\em future inextendibility}, then {\em line} is replaced by {\em future ray}, and time dually. A future and past lightlike geodesic is a lightlike {\em bigeodesic}.
\end{definition}
A future or past lightlike geodesic is a lightlike geodesic (because $q\in E^+(p,U)$ implies $(p,q) \notin \mathring{J}(U)$), and the converse holds for locally Lipschitz proper cone structures. We defined lightlike geodesics using $\mathring{J}$--arelation in place of achronality because the natural generators of Cauchy horizons or horismos will be of this type in the future or past version. These lightlike geodesic concepts all coincide for locally Lipschitz proper cone structures.

\begin{remark}
In the Lorentzian case and under Lipschitz regularity one could write down the geodesic (spray) equation and, following Filippov, regularize the discontinuous  ($L^\infty_{loc}$) right-hand side into a multivalued map. Then one could show that the resulting differential inclusion admits $C^1$ solutions. This approach has been followed by Steinbauer in \cite{steinbauer14} but it has some limitations, for it seems difficult to prove the local achronality property of lightlike geodesics with such an approach. For this reason we use the local achronality (or  better said the $\mathring{J}$--arelation) property to introduce the very notion of lightlike geodesic. This definition is the best suited  in order to obtain non local results. Causal geodesics will be introduced with a similar idea.
\end{remark}

The neighborhood of the next result coincides with that constructed in the proof of Prop.\ \ref{iiu}. We recall that $E^+(p,U)= J^+(p, U)\backslash \mathrm{Int} J^+(p, U)$ and that a set $S$ is {\em causally convex} if $J^+(S)\cap J^{-}(S)\subset S$.

\begin{theorem} \label{dao}
Let $(M,C)$ be a closed cone structure. Every point in $M$ has an arbitrarily small coordinate neighborhood $U$ with the following property.
The relation $J(U)$ is closed and for every $p\in U$ and $q \in E^+(p, U)\backslash \{p\}$ there is a future lightlike geodesic joining $p$ and $q$ entirely contained in $E^+(p, U)$ (and time dually).
Moreover, if $(M,C)$ is locally Lipschitz every continuous causal curves connecting $p$ to $q$ is a lightlike geodesic contained in $E^+(p, U)$. Finally, if the closed cone structure $(M,C)$ admits arbitrarily small causally convex neighborhoods (strong causality) then $U$ can be chosen causally convex.
\end{theorem}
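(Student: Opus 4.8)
The plan is to take $U$ to be a neighborhood of the type produced in Proposition \ref{iiu} and refined in Remark \ref{nff}: relatively compact, carrying a flat Minkowski metric $g$ with $C\vert_U<C^g\vert_U$, causally convex for $g$, and locally non-imprisoning, so that all continuous causal curves in $\bar U$ have uniformly bounded $h$-arc length. The closedness of $J(U)$ is then the first step. Given $(p_n,q_n)\to(p,q)$ in $U\times U$ with $(p_n,q_n)\in J(U)$, join each pair by a continuous causal curve, reparametrize by $h$-arc length, and apply the limit curve theorem (Theorem \ref{son}) inside the compactum $\bar U$ to extract a limit continuous causal curve from $p$ to $q$; causal convexity of $U$ for $g$ (hence for $C<C^g$) keeps this limit inside $U$, so $(p,q)\in J(U)$. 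In particular $J^+(p,U)$ is closed in $U$, whence $q\in E^+(p,U)$ forces $q$ into the boundary, relative to $U$, of $J^+(p,U)$.

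The easy half of the geodesic statement is the observation that \emph{any} continuous causal curve $\sigma$ contained in $E^+(p,U)$ is automatically a future lightlike geodesic. Indeed, were there two points $r,s\in\sigma$ with $r$ before $s$ and $s\in\mathrm{Int}\,J^+(r,U)$, then an open neighborhood of $s$ would lie in $J^+(r,U)\subset J^+(p,U)$ by transitivity of $J(U)$ (using $r\in E^+(p,U)\subset J^+(p,U)$), giving $s\in\mathrm{Int}_U J^+(p,U)$ and contradicting $s\in E^+(p,U)$. Thus the whole problem reduces to producing one continuous causal curve from $p$ to $q$ lying in $E^+(p,U)$.

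This construction is the main obstacle, and it is precisely here that the failure of the composition law for general closed cone structures bites: one cannot argue that an arbitrary causal curve from $p$ to $q$ stays on $\p J^+(p,U)$, since after entering $\mathrm{Int}_U J^+(p,U)$ a $C$-causal curve may return to the boundary. My plan is therefore to obtain the generator as a limit. Using Proposition \ref{ohg} together with the convex-combination Proposition \ref{doo} I would build a decreasing sequence of locally Lipschitz proper cone structures $C_k$ with $C<C_k$, $C_{k+1}\le C_k$, and $\cap_k C_k=C$; for each $k$ the point $q$ lies in $\mathrm{Int}_U J^+_{C_k}(p,U)=I^+_{C_k}(p,U)$ (a $C$-causal curve from $p$ to $q$ is $C_k$-timelike, and the identity $\mathrm{Int}_U J^+_{C_k}=I^+_{C_k}$ holds by Theorem \ref{soa} applied to $C_k$), so there is a $C_k$-timelike curve $\gamma_k$ from $p$ to $q$. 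Parametrizing by $h$-arc length in the compactum $\bar U$, where the lengths stay bounded, and invoking the limit curve theorem for varying cone structures (Theorems \ref{mxe} and \ref{main}, case (i), which applies since $p\ne q$), a subsequence converges to a continuous $C$-causal curve $\gamma$ from $p$ to $q$. The crux is to verify $\gamma\subset E^+(p,U)$: if some interior point $r=\gamma(t_0)$ lay in $\mathrm{Int}_U J^+(p,U)$, the approximation by $C_k$-timelike curves and the good composition properties of the $C_k$ must be leveraged to push $q$ into $\mathrm{Int}_U J^+(p,U)$, the desired contradiction; making this rigorous despite the absence of a composition law for $C$ itself is the heart of the proof.

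The remaining refinements are comparatively routine. In the locally Lipschitz case Theorem \ref{soa} gives $\mathrm{Int}_U J^+(p,U)=I^+(p,U)$ and $I\circ J\subset I$; hence if a causal curve from $p$ to $q\in E^+(p,U)$ met $\mathrm{Int}_U J^+(p,U)=I^+(p,U)$ at an interior point $r$, then $p\ll r\le q$ would force $q\in I^+(p,U)=\mathrm{Int}_U J^+(p,U)$, a contradiction; so every such curve lies in $E^+(p,U)$ and, by the easy half above, is a future lightlike geodesic. Finally, under strong causality one simply selects $U$ from the nested basis of Remark \ref{nff}, whose members are $C$-causally convex, so the same $U$ can be taken causally convex without disturbing any of the preceding arguments. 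All the time-dual statements follow by reversing the time orientation.
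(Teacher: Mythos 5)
Your preparatory steps are all sound and match the paper: the choice of $U$ from Prop.\ \ref{iiu}/Remark \ref{nff}, the closedness of $J(U)$ via the limit curve theorem and local non-imprisonment, the observation that any continuous causal curve contained in $E^+(p,U)$ is automatically a future lightlike geodesic (this is Prop.\ \ref{ckj}), the locally Lipschitz case via Th.\ \ref{soa}, and the causally convex refinement. But the central step — producing \emph{one} connecting causal curve inside $E^+(p,U)$ under mere upper semi-continuity — is exactly where your argument stops, and the route you chose cannot be completed. A $C_k$-timelike curve from $p$ lies (except for its starting point) in $I^+_{C_k}(p,U)=\mathrm{Int}_U J^+_{C_k}(p,U)$, which is \emph{larger} than $\mathrm{Int}_U J^+(p,U)$, so nothing forces your $\gamma_k$, or their limit, to avoid $\mathrm{Int}_U J^+(p,U)$. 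Worse: whenever some connecting continuous causal curve meets $\mathrm{Int}_U J^+(p,U)$ at an interior point $r$ — and this is precisely the situation that separates the upper semi-continuous statement from the locally Lipschitz one, where no connecting curve can do so — Th.\ \ref{dxp} gives $(p,r),(r,q)\in J(U)\backslash\Delta\subset I_{C_k}(U)$, so you may route $\gamma_k$ through $r$ for every $k$; the limit then contains $r$ and is \emph{not} contained in $E^+(p,U)$. Since your proposal allows arbitrary choices of $\gamma_k$, no "composition properties of the $C_k$" can repair this: the obstruction lives at the level of $C$, and the avoidance of the interior must be built into the approximating curves rather than deduced afterwards.

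This is what the paper's proof does, via a dichotomy on how $q$ sits relative to the horismoi of the wider structures $C_k$ (taken with $C_k\le C^g$ as in Th.\ \ref{sqd}). Either (a) there exist $q_k\in E^+_k(p,U)$ with $q_k\to q$: then the already-proved Lipschitz case applied to each $(U,C_k)$ (legitimate because $J_k(U)$ is closed) yields connecting $C_k$-causal curves $\sigma_k\subset E^+_k(p,U)$; since $E^+_k(p,U)$ is disjoint from $\mathrm{Int}_U J^+_k(p,U)\supset \mathrm{Int}_U J^+(p,U)$, these $\sigma_k$ avoid the open set $\mathrm{Int}_U J^+(p,U)$ \emph{by construction}, hence so does their uniform limit, which is the desired generator. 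Or (b) no such sequence exists: then some ball $B(q,\delta)$ is contained in $J^+_k(p,U)$ for all large $k$, and the limit curve theorem (plus non-imprisonment of $\bar U$) upgrades this to $B(q,\delta)\subset J^+(p,U)$, i.e.\ $q\in\mathrm{Int}_U J^+(p,U)$, contradicting $q\in E^+(p,U)$. So the missing idea is to approximate $q$ by points of $E^+_k(p,U)$ and connect with generators of the \emph{wider} horismoi, not to aim timelike curves directly at $q$.
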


\begin{proof}
Let us prove the last statement.
Let $r\in M$ and let $V\ni r$ be an open set. Take $U\subset V$ constructed as  in the proof of Prop.\ \ref{iiu} or proceed as follows if $(M,C)$ is strongly causal.
 Let $\tilde U\ni r$ be a coordinate neighborhood contained in $V$ such that on $\tilde U$ we have a (flat) Minkowski metric $g$ wider than $C$. Let $U'\ni r$ be a $g$-chronological diamond contained in $\tilde U$, let $W\ni r$ be a $C$-causally convex neighborhood contained in $U'$, and let $U\ni r $ be  a smaller chronological $g$-diamond contained in $W$. Then $U$ is $C$-causally convex and of the same type as constructed in the proof of Prop.\ \ref{iiu}.

Let us prove that $J(U)$ is closed. Let $h$ be a Riemannian metric and let $\sigma_k$ be a sequence of continuous causal curves contained in $U$ connecting $p_k \to p$ to $q_k\to q$. By the limit  curve theorem either there is a continuous causal curve $\sigma$ connecting $p$ to $q$, necessarily contained in $U$ by the diamond shape of $U$, cf.\ the proof of Prop.\ \ref{iiu}, or  there is a past inextendible curve contained in $\bar U$ and ending at $q$. However, in the latter case the curve would have infinite $h$-arc length which is impossible by Prop.\ \ref{iiu}. Thus $J(U)$ is closed.

Let $q \in \p J^+(p, U)\backslash \{p\}$ and suppose that $(M,C)$ is locally Lipschitz. Since $J(U)$ is closed, there is a continuous causal curve from $p$ to $q$. If $\sigma$ is any  such curve,  no point of $\sigma\backslash q$ can belong to $I^+(p, U)$, otherwise $q\in I^+(p, U)$ by Th.\ \ref{soa}, a contradiction, thus $\sigma \subset E^+(p, U)$.

Let us prove the peripheral property under upper semi-continuity. Let $q \in E^+(p, U)\backslash \{p\}$. Consider the cone structure $F(x)=\{y\in C_x\cup\{0\}\colon \Vert y\Vert_h\le 1 \}$. Every continuous causal curve can be regarded as a solution of $\dot x(t)\in F(x(t))$ when parametrized with respect to $h$-arc length, and $F$ is compact and convex. By \cite[Th.\ 2.5]{smirnov02} (see also our Th.\ \ref{sqd}) it is possible to find a sequence of locally Lipschitz proper cone structures $(M,C_k)$ such that $C_{k+1}\le C_k\le C_g$, $C=\cap_k C_k$, hence $\mathrm{Int} J^+(p,U) \subset \mathrm{Int} J^+_k(p,U)=I^+_k(p,U)$. Suppose that we can find, passing to a subsequence if necessary, $q_k\in E^+_k(p,U)$ with $q_k\to q$, then there are   continuous $C_k$-causal curves $\sigma_k\subset E^+_k(p,U)$ connecting $p$ to $q_k$ (since $C_k\subset C_g$ by the previous argument $J_k(U)$ is closed). By the limit curve theorem, arguing as above, there is a continuous $C$-causal curve connecting $p$ to $q$, which does not have any point in  $\mathrm{Int} J^+(p,U)$ as none of $\sigma_k$ intersects it, so $\sigma \subset \p J^+(p,U)$ as desired. Suppose that we cannot find the sequence $q_k$ as above, then there is $\delta>0$ such that $B(q,\delta) \subset  J^+_k(p,U)$ for any sufficiently large $k$. For every $y\in B(q,\delta)$ by using again the limit curve theorem we get that $y\in J^+(p,U)$, thus $q\in \mathrm{Int} J^+(p,U)$, a contradiction. Finally, no two points of $p', q'\in \sigma$ can be such that $q'\in \mathrm{Int} J^+(p')$,  otherwise as $(p,p')\in J$, we  would have $q'\in \mathrm{Int} J^+(p)$, a contradiction which proves that $\sigma$ is a future lightlike geodesic.
 (this proof has some similarities with \cite[Th.\ 4.7]{smirnov02} but it is not quite the same, see the next Remark).
\end{proof}

\begin{remark} \label{ksp}
This peripheral type result should not be confused with the differential inclusion version of   Hukuhara's theorem which states   that the boundary points of the reachable set are peripherally attainable
\cite[Th.\ 7.3]{davy72} \cite[p.\ 110]{aubin84} \cite[Cor.\ 4.7]{smirnov02} \cite{kikuchi68} \cite[Th.\ 8]{kikuchi67}.

 Let $A_t(p)$ be  the set reachable in time $t$ by solutions to the differential inclusion $\dot x(t)\in F(x(t))$ where $F$ is a convex and compact multi-valued map.   That theorem states that it is possible to find $x\colon [0,T]\to U$, such that $x(t)\in \p A_t(p)$ for every $t$, if the endpoint $x(T)$ belongs to $\p A_T(p)$. However, in our framework (see the proof) this version would be of little use since $\p A_t(p) \cap \mathrm{Int}J^+(p, U)\ne \emptyset$ so the trajectory could enter $\mathrm{Int} J^+(p,U)$. Ultimately the usual differential inclusion version takes into account the parametrization which, instead, does not appear in our version.
\end{remark}


The next result is a simple consequence of Th.\ \ref{soa}.
\begin{theorem} \label{oop}
Let $(M,C)$ be a locally Lipschitz proper cone structure. A continuous causal curve connecting $p$ to $q$   is an achronal lightlike geodesic contained in $E^+(p)$ or there is a timelike curve connecting the same endpoints.
\end{theorem}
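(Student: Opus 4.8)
The plan is to prove this as a clean dichotomy: I would assume that \emph{no} timelike curve joins $p$ to $q$—equivalently $q\notin I^+(p)$—and deduce that the given continuous causal curve $\gamma\colon[0,1]\to M$, $\gamma(0)=p$, $\gamma(1)=q$, is an achronal lightlike geodesic contained in $E^+(p)$. The whole argument rests on the structural identities that Theorem \ref{soa} supplies for locally Lipschitz proper cone structures: $J\circ I\cup I\circ J\subset I$, the identity $I^+(p)=\mathrm{Int}\,J^+(p)$, and $I=\mathring J$. Before starting I would dispose of a degenerate case: if $p\ll p$ then, composing the closed timelike loop with the causal arc $\gamma$ via $I\circ J\subset I$ we would obtain $p\ll q$, which is case (ii); so under the standing assumption we may assume $p\notin I^+(p)$.

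The first real step is containment in the horismos. For any $r\in\gamma$ we trivially have $r\in J^+(p)$. Were $r\in I^+(p)=\mathrm{Int}\,J^+(p)$, then $p\ll r$ together with the causal $\gamma$-segment from $r$ to $q$ would yield $p\ll q$ by $J\circ I\subset I$, contradicting $q\notin I^+(p)$; the endpoint case $r=p$ is excluded by $p\notin I^+(p)$. Hence every point of $\gamma$ lies in $J^+(p)\setminus\mathrm{Int}\,J^+(p)=E^+(p)$, so $\gamma\subset E^+(p)$.

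The second step is achronality. If $\gamma(s_1)\ll\gamma(s_2)$ for some $s_1<s_2$, then $p\le\gamma(s_1)\ll\gamma(s_2)\le q$, and two applications of the composition rule $J\circ I\cup I\circ J\subset I$ give $p\ll q$, again a contradiction; thus $\gamma$ is achronal. Since the identity $I=\mathring J$ holds here, achronality coincides with (global, hence local) $\mathring J$-arelation, so $\gamma$ qualifies as a lightlike geodesic in the sense of the definition above. Combining the three steps yields that $\gamma$ is an achronal lightlike geodesic contained in $E^+(p)$, which is exactly the alternative to the existence of a timelike curve from $p$ to $q$.

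I do not expect a genuine analytic obstacle: the statement is labelled ``a simple consequence of Th.\ \ref{soa}'' precisely because all the real work—the relaxation/deformation that upgrades a causal-plus-timelike concatenation to a timelike curve—is already encapsulated in those composition identities. The only care required is bookkeeping: applying $J\circ I$ and $I\circ J$ with the correct argument orders, and treating the reflexive endpoint and the possible closed-timelike-loop degeneracy so that the dichotomy is genuinely exhaustive.
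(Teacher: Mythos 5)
Your proof is correct and follows exactly the route the paper intends: the paper offers no separate argument for this theorem, saying only that it is ``a simple consequence of Th.\ \ref{soa}'', and your three steps---horismos containment via $I^+(p)=\mathrm{Int}\,J^+(p)$, achronality via $J\circ I\cup I\circ J\subset I$ (with the closed-loop degeneracy $p\ll p$ correctly disposed of first), and the lightlike-geodesic conclusion via $I=\mathring{J}$---are precisely the intended unpacking of that remark. One cosmetic point: in the achronality step the chain $p\le\gamma(s_1)\ll\gamma(s_2)\le q$ works for \emph{every} pair of parameter values (since $p\le\gamma(s)$ and $\gamma(s)\le q$ hold for all $s$), so the restriction to $s_1<s_2$ can simply be dropped rather than treated as the only case.
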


\begin{proposition} \label{ckj}
Let $(M,C)$ be a  cone structure and let $S$ be any set. Any continuous causal curve $\sigma$ contained in $E^+(S)$ is a future lightlike geodesic.
\end{proposition}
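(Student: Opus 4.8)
The plan is to reduce the statement to two elementary facts: the transitivity of the global causal relation $J$ and the monotonicity of the interior operator. The guiding observation is that since $\sigma\subset E^+(S)\subset J^+(S)$, every point of $\sigma$ already lies in the causal future of $S$, so its own causal future is contained in $J^+(S)$; a pair of points of $\sigma$ related by $\mathring{J}$ would then force the later one into $\mathrm{Int}\,J^+(S)$, contradicting membership in the horismos.

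First I would record that $J$ is transitive for any cone structure: concatenating two continuous causal curves and reparametrizing yields a continuous causal curve, so $(s,p)\in J$ and $(p,q')\in J$ give $(s,q')\in J$. Consequently, for every $p\in\sigma$ one has $p\in J^+(S)$, whence $J^+(p)\subset J^+(S)$, and passing to interiors, $\mathrm{Int}\,J^+(p)\subset \mathrm{Int}\,J^+(S)$. Next I would relate the local and global horismos: fixing $r\in\sigma$ and an arbitrary open neighborhood $U\ni r$, for $p\in U$ a causal curve inside $U$ is in particular a causal curve in $M$, so $J^+(p,U)\subset J^+(p)$; and because $U$ is open in $M$, the interior taken in $U$ agrees with the interior taken in $M$, giving $\mathrm{Int}\,J^+(p,U)\subset \mathrm{Int}\,J^+(p)$.

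Finally I would argue by contradiction. Suppose there were $p,q\in\sigma\cap U$ with $q\in\mathrm{Int}\,J^+(p,U)$. Chaining the two inclusions yields $q\in\mathrm{Int}\,J^+(p)\subset\mathrm{Int}\,J^+(S)$. But $q\in\sigma\subset E^+(S)=J^+(S)\setminus\mathrm{Int}\,J^+(S)$, so $q\notin\mathrm{Int}\,J^+(S)$, a contradiction. Hence no such pair $p,q$ exists; since $r\in\sigma$ and the neighborhood $U$ were arbitrary, $\sigma$ satisfies the defining property of a future lightlike geodesic.

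I expect there to be no genuinely hard step here; the argument needs neither properness nor any regularity of $C$, only transitivity of $J$, which holds for arbitrary cone structures. The one point requiring care is the bookkeeping of the two distinct interiors, relative to $U$ and relative to $M$, and this is dispatched precisely by the openness of $U$, which makes the relative interior coincide with the ambient one.
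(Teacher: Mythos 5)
Your proof is correct and follows essentially the same route as the paper's: assume two points of $\sigma$ violate the local condition, use transitivity of $J$ (the paper phrases it via a point $r\in S$ with $(r,p)\in J$, you via $J^+(p)\subset J^+(S)$) to push the later point into $\mathrm{Int}\,J^+(S)$, contradicting membership in $E^+(S)$. Your explicit bookkeeping of the relative versus ambient interiors is a detail the paper leaves implicit, but the argument is the same.
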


\begin{proof}
If not there would be $p,q\in E^+(S)\cap \sigma$ such that $q\in \mathrm{Int}  J^+(p)$. But  there exists $r\in S$, $(r,p)\in J$, and hence  $q\in \mathrm{Int} (J^+(r))\subset  \mathrm{Int} (J^+(S))$, a contradiction.
\end{proof}

The differentiability conditions in the next result will be improved in Th.\ \ref{sop}.

\begin{corollary} \label{chg}
Let $(M,C)$ be a locally Lipschitz proper cone structure. If $q\in E^+(S)\backslash S$ there is $p\in S$ and an achronal lightlike geodesic with endpoints $p$ and $q$ contained in $E^+(S)$.
\end{corollary}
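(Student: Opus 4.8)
The plan is to produce a single continuous causal curve from a point of $S$ to $q$, then to show that it is forced to lie in the horismos $E^+(S)$, and finally that lying in $E^+(S)$ already makes it an achronal lightlike geodesic. First I would produce the candidate curve: since $q\in E^+(S)\subset J^+(S)=\cup_{p\in S}J^+(p)$, there is some $p\in S$ with $q\in J^+(p)$, hence a continuous causal curve $\sigma$ from $p$ to $q$; because $q\notin S$ we have $p\ne q$, so $\sigma$ is nondegenerate.

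Next I would rewrite the horismos in a convenient form. As $(M,C)$ is a locally Lipschitz proper cone structure, Theorem \ref{soa} gives $I^+(S)=\mathrm{Int}(J^+(S))$, so $E^+(S)=J^+(S)\setminus \mathrm{Int}(J^+(S))=J^+(S)\setminus I^+(S)$; in particular $q\in E^+(S)$ says precisely $q\notin I^+(S)$. The crucial step is then to verify $\sigma\subset E^+(S)$. Pick any $r\in\sigma$. Then $r\in J^+(p)\subset J^+(S)$, and $(r,q)\in J$ because $r$ precedes $q$ along $\sigma$. If we had $r\in I^+(S)$, say $p'\ll r$ with $p'\in S$, then the composition law $I\circ J\cup J\circ I\subset I$ of Theorem \ref{soa} applied to $p'\ll r$ and $(r,q)\in J$ would yield $p'\ll q$, whence $q\in I^+(S)$, contradicting $q\in E^+(S)$. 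Hence $r\notin I^+(S)=\mathrm{Int}(J^+(S))$, so $r\in E^+(S)$; as $r$ was arbitrary, $\sigma\subset E^+(S)$.

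Finally I would invoke Proposition \ref{ckj}: a continuous causal curve contained in $E^+(S)$ is a future lightlike geodesic. Since for locally Lipschitz proper cone structures the notions of future lightlike geodesic, lightlike geodesic and $\mathring{J}$--arelation all coincide, and $I=\mathring{J}$ there (Theorem \ref{soa}), the curve $\sigma$ is in fact achronal. Thus $\sigma$ is an achronal lightlike geodesic with endpoints $p\in S$ and $q$, contained in $E^+(S)$, which is the assertion.

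The only genuine obstacle is the middle step, keeping $\sigma$ inside $E^+(S)$; everything there rests on the identity $\mathrm{Int}(J^+(S))=I^+(S)$ and the composition law $I\circ J\cup J\circ I\subset I$, which are exactly what the locally Lipschitz hypothesis buys via Theorem \ref{soa} and are the properties expected to fail in the merely upper semi-continuous setting (hence the improved differentiability noted in Theorem \ref{sop}). As a cross-check, one could instead route through Theorem \ref{oop} applied to $\sigma$: its timelike-connection alternative is excluded, since a timelike curve from $p$ to $q$ would place $q\in I^+(p)\subset I^+(S)$, leaving only the achronal-lightlike-geodesic alternative.
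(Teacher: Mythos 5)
Your construction of the connecting curve and your containment step $\sigma\subset E^+(S)$ are correct, and they use exactly the ingredients the paper uses there: the identity $I^+(S)=\mathrm{Int}\,J^+(S)$ and the composition law $I\circ J\cup J\circ I\subset I$ from Theorem \ref{soa}. The genuine weak point is your final inference. Proposition \ref{ckj} gives that $\sigma$ is a \emph{future lightlike geodesic}, which by definition is only a \emph{locally} $\mathring{J}$-arelated curve; the coincidences you invoke for locally Lipschitz proper cone structures (future lightlike geodesic $=$ lightlike geodesic, and $I=\mathring{J}$) are statements about these local notions, and they do not upgrade local achronality to achronality of the whole curve. Global $\mathring{J}$-arelation is what defines a lightlike \emph{line}, and it is strictly stronger than being a lightlike geodesic (think of a null geodesic winding around a flat Lorentzian cylinder: locally achronal, globally chronal). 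So, as written, your main route does not deliver the word ``achronal'' in the corollary.

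The gap closes easily, in either of two ways. (a) Run your composition argument once more: if two points $r_1,r_2\in\sigma$ were joined by a timelike curve, then since both lie in $J^+(S)$, the composition law of Theorem \ref{soa} would place the future endpoint of that timelike curve in $I^+(S)$, contradicting $\sigma\subset E^+(S)$; hence $\sigma$ is achronal. (b) Use your own ``cross-check'': exclude a timelike curve from $p$ to $q$ (it would give $q\in I^+(S)$, impossible for $q\in E^+(S)$), apply Theorem \ref{oop} to conclude that $\sigma$ is an achronal lightlike geodesic contained in $E^+(p)$, and only then prove the containment in $E^+(S)$ as you did. Route (b) is in fact precisely the paper's proof, with the two steps in the opposite order from yours: the paper first obtains the achronal lightlike geodesic from Theorem \ref{oop}, and afterwards shows it cannot meet $I^+(S)$.
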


\begin{proof}
By definition there is a continuous causal curve $\gamma$ from $S$ to $q$. Let   $p$ be starting point. From the definition there cannot be a timelike curve joining $p$ and $q$, thus from  Th.\ \ref{oop} $\gamma$ is an achronal lightlike geodesic. It cannot have any point contained in $I^+(S)$ otherwise $q\in I^+(S)$ by Th.\ \ref{soa}, thus $\gamma \subset E^+(S)$.
\end{proof}



The next theorem is somewhat similar to a result on differential inclusions by Kikuchi,  \cite{kikuchi68,davy72} \cite[Th.\ 12]{kikuchi67} but it is not quite the same due to the same reasons pointed out in Remark \ref{ksp}.

\begin{theorem} \label{aam}
Let $(M,C)$ be a closed cone structure.
Locally achronal continuous causal curves (e.g.\ lightlike geodesics) have lightlike tangents wherever they are differentiable (hence almost everywhere).
\end{theorem}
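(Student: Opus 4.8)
The plan is to argue by contradiction at an arbitrary differentiability point $t_0$, writing $p_0=\sigma(t_0)$ and $v=\dot\sigma(t_0)$. First I would record that $v\in C_{p_0}$: since $\sigma$ is absolutely continuous, each secant vector is an average $\frac1s\int_{t_0}^{t_0+s}\dot\sigma(\tau)\,\dd\tau$ of causal velocities $\dot\sigma(\tau)\in C_{\sigma(\tau)}$, and upper semi-continuity together with convexity and closedness of $C$ forces the limit $v$ to lie in $C_{p_0}$ (this is the standard fact that a solution of an upper semi-continuous convex-valued inclusion which is differentiable at $t_0$ satisfies the inclusion there). It therefore remains to exclude $v\in(\mathrm{Int}\,C)_{p_0}$, i.e.\ that $v$ is timelike.

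Suppose then $v\in\mathrm{Int}\,C$, the interior being taken in the slit tangent bundle. Working in a coordinate chart around $p_0$ that splits $TN\cong N\times\mathbb{R}^{n+1}$, openness of $\mathrm{Int}\,C$ gives a product neighborhood $N_1\times V_1\subset\mathrm{Int}\,C$ with $p_0\in N_1$ and $v\in V_1$. The crucial point is that I use the bundle interior directly rather than the fiberwise interiors $\mathrm{Int}\,C_q$: for a merely closed (upper semi-continuous) cone structure the map $q\mapsto\mathrm{Int}\,C_q$ need not be lower semi-continuous, so a fiberwise-timelike vector at $p_0$ may fail to stay causal at nearby points, whereas the product neighborhood sidesteps this entirely.

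Next I would use differentiability to connect $p_0$ to a nearby curve point by a timelike segment. For $\delta>0$ the secant $w_\delta:=(\sigma(t_0+\delta)-p_0)/\delta\to v$, so $w_\delta\in V_1$ for $\delta$ small, while the coordinate straight segment $\gamma_\delta(s)=(1-s)\,p_0+s\,\sigma(t_0+\delta)$, $s\in[0,1]$, stays inside $N_1$ and has constant coordinate velocity $\delta\,w_\delta$. Since $\mathrm{Int}\,C$ is invariant under positive rescaling in each fiber, $(\gamma_\delta(s),\delta w_\delta)\in\mathrm{Int}\,C$ iff $(\gamma_\delta(s),w_\delta)\in N_1\times V_1\subset\mathrm{Int}\,C$, so $\gamma_\delta$ is a $C^1$ timelike curve and hence $\sigma(t_0)\ll\sigma(t_0+\delta)$. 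For $\delta\to0$ the point $\sigma(t_0+\delta)\ne p_0$ lies in any prescribed achronality neighborhood of $p_0$, contradicting the local achronality of $\sigma$. Therefore $v\in(\p C)_{p_0}$ is lightlike. The final clause follows because continuous causal curves are locally absolutely continuous, hence differentiable almost everywhere.

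The main obstacle is precisely the failure of lower semi-continuity of the fiberwise timelike cones in the closed (non-proper) setting; the entire argument hinges on phrasing timelikeness through the open set $\mathrm{Int}\,C$ of the slit bundle and exploiting its fiberwise conicity, which makes the straight-segment construction yield a genuine timelike curve under nothing more than upper semi-continuity.
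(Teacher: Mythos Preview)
Your argument is correct and follows essentially the same route as the paper: both assume $v\in(\mathrm{Int}\,C)_{p_0}$, extract a product neighborhood in the trivialized slit bundle (the paper uses a round cone $R$ with $U\times R\subset\mathrm{Int}\,C$, you use a general $N_1\times V_1$), and then connect $p_0$ to a nearby curve point by a coordinate straight segment whose constant velocity is timelike, contradicting local achronality. Your preliminary check that $v\in C_{p_0}$ at every differentiability point is a careful addition that the paper leaves implicit, and your explicit appeal to fiberwise conicity of $\mathrm{Int}\,C$ replaces the paper's use of a cone-shaped $R$, but the core idea is identical.
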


We stress that without a $C^0$ condition on the cone structure, a $C^1$ lightlike geodesic need not have tangents in $\p C_{x(t)}$ (although, by the theorem, they belong to $(\p C)_{x(t)}$), cf.\ Example \ref{mik2}, where the lightlike geodesic is $t \mapsto (0,t)$.

\begin{proof}
If not we can find a differentiability point $\dot x(t_0)\in  (\mathrm{Int} C)_{x(t_0)}$ for some $t_0$, so there is a coordinate neighborhood $U\ni x(t_0)$ and a round cone $R \subset (\mathrm{Int} C)_{x(t_0)}$, $\dot x(t_0)\in R$, such that using the identification of tangent spaces provided by the coordinates, $U\times R \subset \mathrm{Int} C$. As a consequence, we can choose the coordinates  $\{x^\alpha\}$ so that $\dot x(t_0)=\p_0$ at  $x(t_0)$ and  in such a way that its canonical  Minkowski metric  (hence flat) has  cone  $R$ at every point of the neighborhood. Since $x$ is differentiable at $t_0$, there is $a>0$ such that $x\vert_{(t_0,t_0+a)}\subset x(t_0)+R_{x_0}$ where the plus sign is understood using once again the affine structure of the coordinate neighborhood induced by the coordinate system. But every point in $x(t_0)+R_{x_0}$ is reachable from $x_0$ with an $R$-causal $C^1$ curve (a segment), which is $C$-timelike. Hence $x$ cannot be locally achronal.
%
%
\end{proof}


\begin{remark}
We did not prove, not even under a Lipschitz condition on $C$, that for sufficiently small $U\ni p$, there is just one lightlike geodesic connecting $p$ to $q\in E^+(p,U)\backslash \{p\}$. If the cones are not strictly convex it is easy to provide counterexamples to such a property, thus some stronger form of convexity on the cone is required. Moreover, we did not prove that lightlike geodesics can be made inextendible while remaining lightlike geodesics, nor that they cannot branch, namely that there cannot be two distinct lightlike geodesics sharing a segment. These properties hold in the regular theory, cf.\ Sec.\ \ref{xxo}.
\end{remark}

\subsection{Future sets and achronal boundaries}

We provide a generalization of the notion of contingent cone due to Severi and Bouligand to manifolds.

\begin{definition}
Let $M$ be a $C^{1}$ manifold and let $K\subset M$ be a subset. For every $p\in \bar K$ let us consider the affine structure induced by a local coordinate system in a neighborhood of $p$.
The {\em contingent cone} $T_K(p)$ of $K$ at $p\in \bar K$, is the set of all vectors $y\in T_pM\backslash 0$ for which we can can find $p_n\in K$, $p_n\to p$, and a sequence $\epsilon_n \to 0$ such that $y_n=\frac{p_n-p}{\epsilon_n}$ converges to $y$. The definition is independent of the affine structure used.
\end{definition}

\begin{proof} (Independence of the coordinate affine structure) Let us consider two coordinate systems $\{x^\alpha\}$ and $\{x'{}^\alpha\}$ in a neighborhood of $p$. Without loss of generality we can assume $ x^\alpha(p)=x'{}^\alpha(p)=0$.
Let $x^\alpha_n=x^\alpha(p_n)$, $x'{}^\alpha_n=x'{}^\alpha(p_n)$, the assumption is $y^\alpha_n=x^\alpha_n/\epsilon_n \to y^\alpha$, so $\vert x_n\vert \le M \epsilon_n$ for some $M>0$.
The coordinate change is $C^{1}$, thus $x'_n{}^\alpha=B^\alpha_{\beta} x^\beta_n+o(\vert x_n \vert)$ for some matrix $B$. We see that $x'_n{}^\alpha/\epsilon_n \to B^\alpha_{\beta} y^\beta$, namely $y$ would also have been counted in the contingent cone using the coordinate system $\{x'{}^\alpha\}$ in place of $\{x^\alpha\}$. By using the inverse coordinate transformation we obtain the independence of the definition.
\end{proof}

The result clarifies that in a local coordinate system this notion of contingent cone coincides with that used in the theory of differential inclusions in $\mathbb{R}^{n+1}$ and so we can import many results from this theory \cite{aubin84,aubin11}. For instance, the contingent cone is a closed cone and \cite[Lemma 11.2.2]{aubin11}
\begin{theorem}
Let $K\subset M$ and let $x\colon [0,1] \to K$ be differentiable at 0, then $\dot x(0) \in T_K(x(0))$.
\end{theorem}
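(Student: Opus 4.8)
The plan is to produce directly the two sequences demanded by the definition of the contingent cone, using the curve $x$ itself as the source of the approximation. Set $p=x(0)$; since $x$ takes values in $K$ we have $p\in K\subset\bar K$, so $T_K(p)$ is defined at this point. Invoking the coordinate-independence just established, I would fix once and for all a local coordinate chart about $p$, so that the affine structure entering the definition of $T_K(p)$ is at our disposal and the difference $p_n-p$ makes sense. Differentiability of $x$ at $0$ then says precisely that, in these coordinates,
\[
\frac{x(t)-x(0)}{t}\longrightarrow \dot x(0)\qquad\text{as } t\to 0^+,
\]
and in particular $x$ is continuous at $0$.

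First I would choose any sequence $t_n\to 0^+$ with $t_n>0$, and set $p_n=x(t_n)\in K$ together with $\epsilon_n=t_n$. Continuity of $x$ at $0$ gives $p_n\to p$, while $\epsilon_n\to 0$ is immediate. It then only remains to record that the associated difference quotients are exactly
\[
y_n=\frac{p_n-p}{\epsilon_n}=\frac{x(t_n)-x(0)}{t_n}\longrightarrow \dot x(0),
\]
the convergence being nothing but the definition of the derivative at $0$. Having exhibited $p_n\in K$, $p_n\to p$, $\epsilon_n\to 0$ and $y_n\to \dot x(0)$, one reads off from the definition that $\dot x(0)\in T_K(p)=T_K(x(0))$.

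The single point requiring attention is that the contingent cone is, by our convention, a subset of $T_pM\backslash 0$; hence the statement is to be understood under the tacit hypothesis $\dot x(0)\neq 0$, the conclusion being vacuous otherwise. In the applications of interest this is automatic, because the curves in play are continuous causal curves whose admissible velocities lie in cones $C_x$ not containing the origin. I do not expect any genuine obstacle here: the entire content is the tautological observation that the curve itself supplies the approximating points $p_n=x(t_n)$, and the only mild care needed is the use of a one-sided limit $t_n\to 0^+$ dictated by the domain $[0,1]$, together with the coordinate-independence of $T_K(p)$, which has already been verified.
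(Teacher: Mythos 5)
Your proof is correct and is essentially the same as the paper's: both fix a coordinate chart at $x(0)$ and observe that the difference quotients $[x(t_n)-x(0)]/t_n$ supplied by the curve itself converge to $\dot x(0)$ by the definition of the derivative, which is exactly the data required by the definition of the contingent cone. Your side remark that the statement tacitly presumes $\dot x(0)\neq 0$ (since $T_K(p)\subset T_pM\backslash 0$ by the paper's convention) is a fair observation the paper leaves implicit, but it does not affect the argument.
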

\begin{proof}
In a local coordinate neighborhood of $x(0)$ we have for sufficiently small $t$,  $x^\alpha(t)=x^\alpha(0)+\dot x^\alpha(0) t+o(t)\in K$, thus $\dot x^\alpha(0)=\lim_{t\to 0} [ x^\alpha(t)-x^\alpha(0)]/t$, which implies $\dot x(0) \in T_K((x(0))$.
\end{proof}
In particular, from Theorems \ref{mmz} and  \ref{sbs}
\begin{corollary} \label{smy}
Let $(M,C)$ be a  $C^0$ closed cone structure. Let $K\subset M$ be any subset and let $p\in K$. If every  $C^1$ causal curve starting from $p$ remains in $K$ at least for a small domain interval, then $C_p \subset T_K(p)$.

Let $(M,C)$ be a proper cone structure. Let $K\subset M$ be any subset and let $p\in K$. If every  timelike curve starting from $p$ remains in $K$ at least for a small domain interval, then $(\mathrm{Int} C)_p\subset T_K(p)$.
\end{corollary}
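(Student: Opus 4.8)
The plan is to deduce both inclusions directly from the velocity-realization results together with the fact, established in the theorem just above, that a curve lying in $K$ and differentiable at its initial point has its initial velocity in the contingent cone. The two assertions differ only in which realization theorem is invoked, and correspondingly in which regularity hypothesis is required.

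For the first part I would fix an arbitrary $y_0\in C_p$. Since $(M,C)$ is a $C^0$ closed cone structure, Theorem \ref{sbs} furnishes a $C^1$ causal curve $x$ with $x(0)=p$ and $\dot x(0)=y_0$. This is precisely a $C^1$ causal curve starting at $p$, so by hypothesis there is an $\epsilon>0$ with $x([0,\epsilon])\subset K$. Reparametrizing by $s\mapsto x(\epsilon s)$ yields a curve $[0,1]\to K$ that is differentiable at $0$ with velocity $\epsilon y_0$, whence the preceding theorem gives $\epsilon y_0\in T_K(p)$. As $T_K(p)$ is a cone this forces $y_0\in T_K(p)$, and since $y_0\in C_p$ was arbitrary we conclude $C_p\subset T_K(p)$.

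The second part is identical in structure, with the timelike realization replacing the causal one: for $y_0\in(\mathrm{Int}\,C)_p$ I would apply Theorem \ref{mmz}, valid for any proper cone structure, to obtain a timelike curve through $p$ with velocity $y_0$; the hypothesis now guarantees that it stays in $K$ for a small domain interval, and the same contingent-cone argument (again absorbing the reparametrization factor by homogeneity of $T_K(p)$) delivers $y_0\in T_K(p)$, hence $(\mathrm{Int}\,C)_p\subset T_K(p)$.

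There is essentially no obstacle beyond this bookkeeping, since all the content sits in the realization theorems. It is worth noting, however, that the differing regularity assumptions are exactly what make those theorems available: the $C^0$ hypothesis in the first part is needed so that an \emph{arbitrary} $y_0\in C_p$ is the genuine $C^1$ velocity of a causal curve (for a merely upper semi-continuous structure one only has absolutely continuous solutions, whose initial velocity need not equal a prescribed boundary vector, cf.\ Theorem \ref{zar}), whereas in the second part the openness of $(\mathrm{Int}\,C)_p$ permits a round subcone and the integration of a constant coordinate field, so the weaker properness assumption already suffices.
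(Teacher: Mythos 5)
Your proposal is correct and is exactly the argument the paper intends: Corollary \ref{smy} is stated as an immediate consequence of the velocity-realization results (Theorem \ref{sbs} for causal vectors under the $C^0$ hypothesis, Theorem \ref{mmz} for timelike vectors under properness) combined with the preceding theorem that a curve in $K$ differentiable at its starting point has initial velocity in $T_K(p)$, with the reparametrization factor absorbed by the cone property of $T_K(p)$. The only cosmetic difference is that Theorems \ref{sbs} and \ref{mmz} produce a curve \emph{passing through} $p$ rather than starting at it, which is handled by restricting to the forward portion of the curve, as your argument implicitly does.
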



 Since on a proper cone structure the chronological relation is open, for every subset $S$ we have  $I^+(S)=I^+(\bar S)$.

\begin{proposition} \label{por}
Let $(M,C)$ be a proper cone structure. For any subset $F\subset M$, if  $I^+(F)\subset F$ then $\p F$ is achronal.

Let $(M,C)$ be a locally Lipschitz proper cone structure.
 For any subset $F\subset M$, $I^+(F)\subset F$ implies $J^+(F) \subset \bar F$. If $F$ is open $F\subset I^+(F)$, moreover  $I^+(F)\subset F$ implies $J^+(F)\subset F$. Thus if $F$ is open or closed,  $I^+(F)\subset F$ implies $J^+(F)\subset F$.
\end{proposition}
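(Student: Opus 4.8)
The plan is to treat the four assertions in turn, leveraging the openness of $I$ for a proper cone structure and, for the second group of claims, the identities of Theorem~\ref{soa} that are available under local Lipschitzness. Throughout, the passage from the chronological to the causal relation is the recurring theme, and the only genuinely nontrivial input is the causal space condition.

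For the first assertion I would argue by contradiction. Suppose $p,q\in\p F$ with $p\ll q$. Since $I$ is open for a proper cone structure, $I^-(q)$ is an open neighborhood of $p$, and because $p\in\bar F$ it must meet $F$, say at $p'$. Then $p'\ll q$ gives $q\in I^+(p')$, an open set (again by openness of $I$) contained in $I^+(F)\subset F$; hence $q\in\mathrm{Int}\,F$, contradicting $q\in\p F$. This shows $\p F$ is achronal and uses only that $I$ is open, so it holds already under the proper (not necessarily locally Lipschitz) hypothesis.

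For $J^+(F)\subset\bar F$ I would invoke the identity $\overline{J^+(S)}=\overline{I^+(S)}$ from Theorem~\ref{soa} with $S=F$: then $J^+(F)\subset\overline{J^+(F)}=\overline{I^+(F)}\subset\bar F$, the last inclusion because $I^+(F)\subset F$. For open $F$, the inclusion $F\subset I^+(F)$ follows by running a timelike curve backwards: by Theorem~\ref{mmz} every $p\in F$ carries a timelike curve, and since $F$ is open a slightly earlier point $p'$ of this curve still lies in $F$ with $p'\ll p$, whence $p\in I^+(p')\subset I^+(F)$.

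Finally, to obtain $J^+(F)\subset F$ I split on the topological type of $F$. If $F$ is closed this is immediate from $J^+(F)\subset\bar F=F$. The open case is the real point and the step I expect to be the main obstacle: suppose $q\in J^+(F)$; by the paragraph above $q\in\bar F$, so if $q\notin F$ then $q\in\p F$. Pick $p\in F$ with $p\le q$; since $F$ is open, $F\subset I^+(F)$ furnishes $p'\in F$ with $p'\ll p$, and then the causal space condition $J\circ I\cup I\circ J\subset I$ of Theorem~\ref{soa} yields $p'\ll q$, so that $q\in I^+(F)\subset F$, a contradiction. Thus $J^+(F)\subset F$ for open $F$ as well, and the two cases combine to the stated dichotomy. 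The delicate ingredient is precisely that the composition $I\circ J\subset I$ (equivalently, the absence of causal bubbles) is what forces the locally Lipschitz hypothesis here, whereas the achronality of $\p F$ and the inclusion $J^+(F)\subset\bar F$ require substantially less.
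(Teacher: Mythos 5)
Your proof is correct. Three of the four steps coincide with the paper's: the achronality of $\p F$ is exactly the paper's ``immediate consequence of the openness of $I$'' spelled out; the inclusion $F\subset I^+(F)$ for open $F$ is the same argument (the paper cites Th.~\ref{sbs} where you cite Th.~\ref{mmz}, both of which apply here); and the final step $J^+(F)\subset F$ for open $F$ is the paper's chain $J^+(F)\subset J^+(I^+(F))\subset I^+(F)\subset F$, which you merely dress up as a contradiction (the hypothesis $q\notin F$ is never actually used in deriving $p'\ll q$). The one genuine divergence is the claim $J^+(F)\subset\bar F$: the paper proves it from scratch with a local escape-point argument---if a causal curve left $\bar F$ there would be a last point $p\in\bar F$, and Th.~\ref{sbn} yields a neighborhood $U\ni p$ with $J^+(p,U)\subset\overline{I^+(p,U)}\subset\bar F$, a contradiction---whereas you deduce it in one line from the global identity $\overline{J^+(F)}=\overline{I^+(F)}$ of Th.~\ref{soa}. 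Your route is shorter and legitimate (Th.~\ref{soa} is established earlier, with no circularity); the paper's argument is more self-contained, resting only on the local relaxation theorem Th.~\ref{sbn}, but since Th.~\ref{soa} itself rests on Th.~\ref{sbn}, the ultimate dependency is identical. One quibble with your closing commentary: $J^+(F)\subset\bar F$ does not require ``substantially less'' than the composition law---your own derivation invokes the closure identity of Th.~\ref{soa}, which is precisely the no-causal-bubbling property, and the paper shows in the theorem following Th.~\ref{soa} that no bubbling and the causal space condition are equivalent; only the achronality of $\p F$ genuinely needs less, namely just the openness of $I$ in a proper cone structure.
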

Of course, there is a past version of this result. 

\begin{proof}
The first statement is an immediate consequence of the openness of $I$. Let us consider the second claim.
Suppose that there is a continuous causal curve which starts from $F$ and escapes $\bar F$, then there is a last point $p\in \bar F$, and we can find a continuous causal curve $x\colon [0,a)\to M$, such that $p=x(0)$, $x((0,a)) \cap \bar F=\emptyset$. But from Th.\ \ref{sbn} we can find an open neighborhood $U\ni p$ such that $J^+(p,U)\subset\overline{I^+(p,U)}\subset  \bar F$, a contradiction. If $F$ is open $F\subset I^+(F)$ because by Th.\ \ref{sbs} every point of $F$ is the ending point of a timelike curve contained in $F$. Thus $J^+(F)\subset J^+(I^+(F))\subset I^+(F)$ by Th.\ \ref{soa}.
\end{proof}

\begin{definition} Let $(M,C)$ be a proper cone structure.
A set $F$ such that $I^+(F)= F$ is called a {\em future} set. The boundary of a future set is an {\em achronal boundary}. A subset $S\subset M$ is a {\em local achronal boundary} if for every $p\in S$ we can find $U\ni p$ open  such that $S\cap U$ is an achronal boundary in $(U,C\vert_U)$.
\end{definition}
Of course, every (local) achronal boundary is (resp.\ locally) achronal.
For every $S\subset M$,
$\p I^+(S)$ is an  achronal boundary and every achronal boundary has this form for some $S$.

\begin{proposition} \label{mac}
Let $(M,C)$ be a proper cone structure and let $S$ be any set, then $\overline{\mathrm{Int} J^+(S)}= \overline{J^+(S)}$ and $\p{\mathrm{Int} J^+(S)}= \p{J^+(S)}$, and the latter set is an achronal boundary.
\end{proposition}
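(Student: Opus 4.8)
The plan is to prove the three assertions in the stated order, deducing the boundary identity from the first closure identity and then exhibiting an explicit future set. Throughout I would use only what is available for a general proper cone structure (no local Lipschitz hypothesis): $I$ is open, transitive and contained in $J$; $J$ is reflexive and transitive; and by (*) (equivalently Th.\ \ref{mmz}) through every point there passes a timelike curve, so that every $q$ satisfies $q\in\overline{I^+(q)}$ and, dually, $q\in\overline{I^-(q)}$.

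For $\overline{\mathrm{Int} J^+(S)}=\overline{J^+(S)}$, one inclusion is immediate from $\mathrm{Int} J^+(S)\subset J^+(S)$. For the reverse it suffices to show $J^+(S)\subset\overline{\mathrm{Int} J^+(S)}$. Given $q\in J^+(S)$, choose $p\in S$ with $(p,q)\in J$ (possibly $p=q$). For every $w\in I^+(q)$, transitivity of $J$ together with $I\subset J$ yields $(p,w)\in J$, hence $I^+(q)\subset J^+(S)$; since $I^+(q)$ is open this gives $I^+(q)\subset\mathrm{Int} J^+(S)$. As $q\in\overline{I^+(q)}$ by the timelike curve through $q$, we conclude $q\in\overline{\mathrm{Int} J^+(S)}$.

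The second identity is then formal: since $\mathrm{Int} J^+(S)$ is open, its interior equals itself, so $\p\,\mathrm{Int} J^+(S)=\overline{\mathrm{Int} J^+(S)}\setminus\mathrm{Int} J^+(S)$, which by the first identity equals $\overline{J^+(S)}\setminus\mathrm{Int} J^+(S)=\p J^+(S)$. For the last claim I would show that $F:=\mathrm{Int} J^+(S)$ is a \emph{future set}, i.e.\ $I^+(F)=F$; its boundary $\p F=\p J^+(S)$ is then an achronal boundary by definition. The inclusion $F\subset I^+(F)$ follows from the dual timelike curve property: since $F$ is open, each $q\in F$ is reached from some $q''\ll q$ still in $F$, so $q\in I^+(F)$. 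Conversely, take $r\in I^+(F)$ with $q\ll r$, $q\in F$, and choose an open $V$ with $q\in V\subset J^+(S)$. Then $r\in I^+(V)$, which is open, and the same transitivity argument as above gives $I^+(V)\subset J^+(S)$, whence $I^+(V)\subset\mathrm{Int} J^+(S)=F$ and $r\in F$.

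The main obstacle, and the reason the argument must be organized this way, is that for a general proper cone structure the causal-space identity $J\circ I\subset I$ of Th.\ \ref{soa} is unavailable, so one cannot simply identify $\p J^+(S)$ with the already-known achronal boundary $\p I^+(S)$; indeed $\overline{I^+(S)}=\overline{J^+(S)}$ may fail here. The resolution is to notice that the always-valid weaker fact $J\circ I\subset J$, combined with the openness of $I$, is precisely enough to show that $\mathrm{Int} J^+(S)$ is itself a future set, thereby bypassing the stronger composition law entirely.
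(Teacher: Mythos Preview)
Your proof is correct and takes essentially the same approach as the paper. The paper organizes the argument by first showing $\p J^+(S)\subset\p\,\mathrm{Int} J^+(S)$ (via the same mechanism: for $p$ in the closure of $J^+(S)$ and $q\in I^+(p)$, the open set $I^-(q)$ meets $J^+(S)$, so $q\in J^+(S)$) and then reads off the closure identity, whereas you prove the closure inclusion $J^+(S)\subset\overline{\mathrm{Int} J^+(S)}$ first and deduce the boundary identity; the future-set step is identical in spirit (the paper writes simply $I^+(F)\subset J^+(S)$, open, hence $\subset F$, which is your argument with $V=F$).
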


\begin{proof}
For every $p\in \p J^+(S)$, and $q\in I^+(p)$ we have $p\in I^{-}(q)$. Since $I^{-}(q)$ is open and contains some point in $J^+(S)$, we have $q\in J^+(S)$, and hence, by the arbitrariness of $q$,  $I^+(p) \subset J^+(S)$. Thus for every $q\in I^+(p)$, $q\in \mathrm{Int} J^+(S)$, and letting $q\to p$, $p\in \p \mathrm{Int} J^+(S)$. We just proved $\p J^+(S)\subset \p \mathrm{Int} J^+(S)$. Observe that $\mathrm{Int} J^+(S) \subset J^+(S)$ implies $\overline{\mathrm{Int} J^+(S)}\subset \overline{J^+(S)}$ and hence the converse inclusion $\p \textrm{Int} J^+(S)\subset \p J^+(S)$. Thus we have proved the identities and that $F=\mathrm{Int} J^+(S)$ is such that $\p F=\p{J^+(S)}$.
 Every point $q\in F$ is also the ending point of a timelike curve contained in $F$, thus $F\subset I^+(F)$. Moreover, $I^+(F)\subset  J^+(S)$, and as the former set is open, $I^+(F) \subset \mathrm{Int} J^+(S)=F$, thus  $I^+(F)= F$.
\end{proof}




\begin{theorem} \label{aoq}
Let $(M,C)$ be a proper cone structure.  A locally achronal boundary  is a topological hypersurface and, more precisely, a locally Lipschitz graph.
\end{theorem}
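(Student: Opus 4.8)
The plan is to work locally around a point $p\in S$ and exhibit $S$ near $p$ as the graph of a Lipschitz function over a coordinate hyperplane. By definition of local achronal boundary there is an open $U\ni p$ and a future set $F$ in $(U,C\vert_U)$, i.e.\ $I^+(F,U)=F$, with $S\cap U=\p F$; note $F$ is open, being a union of the open sets $I^+(q,U)$. Shrinking $U$, I would take it of the type produced in Prop.\ \ref{iiu}, carrying a flat Minkowski metric $g$ with $C\vert_U<C^g\vert_U$ and coordinates $\{x^0,x^1,\dots,x^n\}$. The crucial extra ingredient is the (*) condition: $C$ contains a continuous field $\tilde C$ of proper cones, and since $\tilde C$ is in particular lower semi-continuous, Prop.\ \ref{low} lets me fix, after shrinking $U$ and choosing coordinates so that $\p_0\in\mathrm{Int}\tilde C_p$, a single coordinate round cone $R$ with $\p_0\in\mathrm{Int}R$ and $R\subset\tilde C_q\subset C_q$ for all $q\in U$. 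Thus every straight segment in $U$ with velocity in $\mathrm{Int}R$ is a $C$-timelike curve.

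First I record the separation properties of $p\in\p F$. Since $I^-(q,U)$ is an open neighbourhood of any point of $\bar F$, it meets $F$, so $I^+(p,U)\subset I^+(F,U)=F$; and if $q\in I^-(p,U)\cap\bar F$ then $p\in I^+(q,U)\subset F$, contradicting $p\notin F$, whence $I^-(p,U)\cap\bar F=\emptyset$. Hence $\p F$ avoids $I^+(p,U)\cup I^-(p,U)$, and more generally $\p F$ is achronal (Prop.\ \ref{por}). Writing points of $U$ as $(x^0,\mathbf x)$ with $\mathbf x=(x^1,\dots,x^n)$, I use the vertical segments $\ell_{\mathbf x}=\{(t,\mathbf x)\}$, whose tangent $\p_0$ is $C$-timelike. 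Because $F=I^+(F,U)$, if $(t_0,\mathbf x)\in F$ then $(t,\mathbf x)\in I^+((t_0,\mathbf x),U)\subset F$ for all larger $t$ in $U$, so $F\cap\ell_{\mathbf x}$ is an upward ray and $\ell_{\mathbf x}$ meets $\p F$ in at most one point $t=f(\mathbf x)$. This already shows $\p F\cap U$ is a partial graph over $\{x^0=0\}$.

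To see the graph is defined on a full neighbourhood $W$ of $\mathbf x(p)$, I note that for $\mathbf x$ near $\mathbf x(p)$ the segment from $p$ to $(T,\mathbf x)$ has velocity in $\mathrm{Int}R$ once $T-x^0(p)$ is large compared with $|\mathbf x-\mathbf x(p)|$, so such high points lie in $I^+(p,U)\subset F$; symmetrically, low points lie in $I^-(p,U)$, hence outside $\bar F$. Therefore $t_*:=\inf\{t\colon (t,\mathbf x)\in F\}$ is finite and strictly interior, $(t_*,\mathbf x)\in\bar F\setminus F=\p F$, and $f(\mathbf x)=t_*$ is defined on an open $W\ni\mathbf x(p)$. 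Finally, the Lipschitz bound comes from the inner round cone: if $a=(f(\mathbf x_1),\mathbf x_1)$ and $b=(f(\mathbf x_2),\mathbf x_2)$ lie on $\p F$, then $b-a\notin\mathrm{Int}R\cup(-\mathrm{Int}R)$, since otherwise the straight segment joining them would be a $C$-timelike curve and $a,b$ would be chronologically related, contradicting achronality. As $R$ is a fixed round cone about the $x^0$-axis of some Euclidean half-angle $\theta$, this is precisely the estimate $|f(\mathbf x_2)-f(\mathbf x_1)|\le(\cot\theta)\,|\mathbf x_2-\mathbf x_1|$, so $f$ is Lipschitz and $\p F\cap U$ is its graph; a continuous graph over an open set is a topological hypersurface.

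The step I expect to be the main obstacle, and the one where properness genuinely enters, is securing the uniform inner round cone $R\subset C_q$ with $\p_0\in\mathrm{Int}R$ for all $q$ in a neighbourhood. Upper semi-continuity alone permits the cones to pinch and would destroy the Lipschitz constant; it is precisely the (*) condition, through the lower semi-continuity of the continuous subfield $\tilde C$ and Prop.\ \ref{low}, that provides a neighbourhood-uniform lower bound on the aperture of the cones, and hence a uniform Lipschitz constant. Everything else is the flat-cone geometry of Minkowski segments once this uniform cone is in hand.
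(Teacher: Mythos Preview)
Your proof is correct and follows essentially the same approach as the paper's: both use the (*) condition to obtain a continuous proper cone subfield $\tilde C\le C$, fix a round inner cone $R\subset\mathrm{Int}\tilde C$ that persists uniformly on a coordinate neighbourhood (the paper phrases this as ``by continuity'' rather than citing Prop.~\ref{low}, but it is the same step), and then read off both the graph property and the Lipschitz bound from achronality with respect to this fixed round cone. Your write-up is in fact more explicit than the paper's on why the graphing function is defined on a full open set of the coordinate hyperplane.
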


\begin{proof}
It is sufficient to give the proof for achronal boundaries.
Let $p\in A$ where $A=\p I^+(F)$, $F=I^+(F)$, is an achronal boundary. Let $\tilde C\le C$ be a $C^0$ proper cone structure contained in $C$. Since $\mathrm{Int} \tilde C_{p}$ is open there is a round cone $R_{p} \subset \mathrm{Int} \tilde C_{p}$. Thus we can find coordinates $\{ x^\alpha \}$, $ x^\alpha(p)=0$, in a neighborhood $U\ni p$ such that the cone $R_{p}$ is that of the Minkowski metric $g=-(\dd x^0)^2+\sum_i (\dd x^i)^2$, where  $\p_0 \in \mathrm{Int} \tilde C_{p}$ and $\dd x^0$ is positive on $C_{p}$. By continuity all these properties are preserved in a sufficiently small neighborhood $U\ni x$, in particular the timelike cones of the flat metric $g$ are contained in $\mathrm{Int} \tilde C$. Then the integral curve of $\p_0$ passing through $p$ is a $\tilde C$-timelike curve which belongs to $F$ just for $x^0>0$. The other integral curves of $\p_0$ intersect $F$ in a lower bounded open set since by achronality these intersections cannot enter $I^-_g(p, U)$. In other words, $\p F$ is locally the graph of a function $x^0(x^i)$. The graph is locally Lipschitz because by achronality for $q\in \p F$, $\p F$ cannot intersect $I^+_{g}(q, U)$.
\end{proof}

A trivial consequence of Cor.\ \ref{smy} is
\begin{theorem}
Let $(M,C)$ be a $C^0$ proper cone structure. If $F$ is a future set, then for every $p\in \p F$, $C_p \subset T_F(p)$.
\end{theorem}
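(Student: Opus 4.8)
The plan is to reduce the statement to the second (proper cone structure) part of Corollary~\ref{smy}, whose hypothesis asks only that every timelike curve issued from the base point stay, for a short while, inside the set whose contingent cone is under examination. Since $F=I^+(F)$ and $I$ is open on a proper cone structure, $F$ is open, so $p\in\p F$ means $p\notin F$ while $p\in\bar F$; this mild mismatch with the corollary (which wants the base point in the set) is the only thing to watch, and it will be absorbed by passing to $\bar F$.

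First I would establish $I^+(p)\subset F$. Indeed, let $q\in I^+(p)$; then $p\in I^-(q)$, and $I^-(q)$ is open, so being a neighbourhood of the closure point $p\in\bar F$ it meets $F$ at some $r$. Then $q\in I^+(r)\subset I^+(F)=F$, which proves the inclusion. Consequently every timelike curve $\sigma$ starting at $p$ satisfies $\sigma(t)\in I^+(p)\subset F\subset\bar F$ for $t>0$, while $\sigma(0)=p\in\bar F$; hence $\sigma$ remains in $\bar F$ on a neighbourhood of $0$. Applying the proper cone structure part of Corollary~\ref{smy} with $K=\bar F$ (which does contain $p$) then yields $(\mathrm{Int}\,C)_p\subset T_{\bar F}(p)$.

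It remains to replace $\bar F$ by $F$ and to pass from the timelike cone to all of $C_p$. For the first point I would verify $T_{\bar F}(p)=T_F(p)$: the inclusion $\supset$ is immediate from $F\subset\bar F$, and for $\subset$ one perturbs a realizing sequence $p_n\in\bar F$ to nearby points $p_n'\in F$ (possible since $F$ is dense in $\bar F$) with $\Vert p_n'-p_n\Vert$ negligible compared with the scaling parameters $\epsilon_n$, so that the difference quotients still converge to the same contingent vector. For the second point, recall that $T_F(p)$ is a closed cone and that, by Proposition~\ref{cob} together with properness of the fibre, $C_p=\overline{(\mathrm{Int}\,C)_p}$; hence $C_p\subset\overline{T_F(p)}=T_F(p)$, which is the assertion. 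The only genuinely delicate step is the harmless-looking identity $T_{\bar F}(p)=T_F(p)$, forced on us because $F$ is open and therefore excludes its own boundary point $p$, together with the use of the $C^0$ hypothesis (via Proposition~\ref{cob}) to recover the lightlike directions by closure; everything else is a direct reading of the cited results, which is why the conclusion is labelled trivial.
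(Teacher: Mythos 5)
Your proof is correct and is essentially the paper's intended argument: the paper offers no proof beyond declaring the theorem a trivial consequence of Corollary \ref{smy}, and your derivation---timelike curves issued from $p$ immediately enter $F=I^+(F)$ by openness of $I$, apply the proper cone part of the corollary, then recover the lightlike directions by closing up via Proposition \ref{cob}, the identity $C_p=\overline{\mathrm{Int}(C_p)}$ for proper cones, and the closedness of the contingent cone---is exactly that consequence spelled out. The only avoidable detour is the passage through $\bar F$ and the identity $T_{\bar F}(p)=T_F(p)$: since for $t>0$ the points of the timelike curve already lie in $F$ itself, one may instead apply the corollary with $K=F\cup\{p\}$, for which $T_{F\cup\{p\}}(p)=T_F(p)$ holds trivially (a sequence realizing a nonzero contingent vector must eventually have $p_n\ne p$), thereby dispensing with the density-perturbation lemma.
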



We remark that the continuous causal curves in the next definition are not necessarily inextendible.
\begin{definition}
A continuous causal curve is {\em viable} in $K$ if it is contained in $K$. A subset $K$ is {\em viable} if for every $x_0\in K$ there is a continuous causal curve $x\colon [0,a)\to K$, with $x(0)=x_0$.
\end{definition}
Of course by a maximality argument from $x_0$  there emanates a future inextendible continuous causal curve entirely contained in $K$.
The next result is the manifold translation of Nagumo-Haddad's theorem \cite[p.\ 180]{aubin84}.

\begin{theorem} (Viability theorem)\\
Let $(M,C)$ be a closed cone structure and let $K\subset M$ be open or closed (more generally locally compact in the induced topology). Then $K$ is viable if and only if for every $x\in K$, $T_K(x)\cap C_x\ne \emptyset$.
\end{theorem}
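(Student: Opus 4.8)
The plan is to prove the two implications separately, reducing the substantive direction to the classical viability theorem for differential inclusions (Nagumo--Haddad, \cite[p.\ 180]{aubin84}) applied in a single chart. Throughout I fix $x_0\in K$ and work in a coordinate neighborhood $U\ni x_0$ of the type furnished by Prop.\ \ref{iiu}, carrying a flat Minkowski metric $g$ with $C\vert_U<C^g\vert_U$ and a locally Lipschitz $1$-form $\omega=\dd x^0$ positive on $C\vert_U$ (Prop.\ \ref{ohg}); after shrinking $U$ one may assume $\omega(y)\ge c\,\Vert y\Vert_h$ for every $y\in C_z$, $z\in U$, and some $c>0$, using compactness of the unit $h$-sections together with sharpness of the cones. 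Since the contingent cone depends only on the germ of $K$ at its base point, $T_{K\cap U}(x)=T_K(x)$ for $x$ in the open set $U$, and $K\cap U$ inherits local compactness; hence both conditions localize and it suffices to argue on $U$.

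For the implication ``viable $\Rightarrow$ tangency'', fix $x\in K$ and let $\gamma\colon[0,a)\to K$ be a continuous causal curve with $\gamma(0)=x$, parametrized by $h$-arc length so that $\Vert\dot\gamma\Vert_h=1$ almost everywhere. Working in the coordinates of $U$, the bound $\Vert\gamma(t)-x\Vert\le\int_0^t\Vert\dot\gamma\Vert_h\,\dd s=t$ shows that the difference quotients $y_n:=(\gamma(t_n)-x)/t_n$ (for any $t_n\to 0^+$) are bounded, so a subsequence converges to some $y$, which by construction lies in $T_K(x)$. Since $\dot\gamma(s)\in C_{\gamma(s)}$ a.e.\ and $\omega\ge c\Vert\cdot\Vert_h$ on the cones, $\omega(y_n)=\tfrac1{t_n}\int_0^{t_n}\omega(\dot\gamma)\,\dd s\ge c$, whence $\omega(y)\ge c>0$ and $y\ne 0$. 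Finally, each $y_n$ is an average of vectors $\dot\gamma(s)\in C_{\gamma(s)}$ with $s\in[0,t_n]$; given any open convex cone $W\supset C_x\setminus 0$, upper semi-continuity of $C$ (Prop.\ \ref{yhh}) yields $C_{\gamma(s)}\subset\overline W$ for $s$ small, and by convexity the averages, hence $y$, lie in $\overline W$. As $C_x$ (closed, convex, sharp) is the intersection of all such $\overline W$, we conclude $y\in C_x\cap T_K(x)\neq\emptyset$.

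For ``tangency $\Rightarrow$ viable'' I slice the cone with the hyperplane field $\omega=1$, setting
\[
F(x)=\{y\in C_x\colon \omega(y)=1\}=\check C_x\cap\{\omega=1\},\qquad \check C_x=\{y\in C_x\cup\{0\}\colon\Vert y\Vert_h\le \tfrac1c+1\}.
\]
Then $F$ has nonempty (scale any causal vector to $\omega=1$), compact (the bound $\omega\ge c\Vert\cdot\Vert_h$ confines the section), and convex (intersection of a convex cone with an affine hyperplane) values, and it is upper semi-continuous as the intersection of the upper semi-continuous compact-valued map $\check C$ with the closed set $\{\omega=1\}$. The tangency hypothesis transfers exactly: if $0\ne y\in C_x\cap T_K(x)$ then $\omega(y)>0$, and since $T_K(x)$ is a cone, $y/\omega(y)\in F(x)\cap T_K(x)$, while conversely $F(x)\cap T_K(x)\subset C_x\cap T_K(x)$. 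Applying the Nagumo--Haddad viability theorem to the differential inclusion $\dot x\in F(x)$ on the locally compact set $K\cap U$ produces an absolutely continuous solution $\gamma\colon[0,a)\to K\cap U$ with $\gamma(0)=x_0$; since $\dot\gamma(t)\in F(\gamma(t))\subset C_{\gamma(t)}$ a.e., $\gamma$ is a continuous causal curve, and $\omega(\dot\gamma)\equiv 1$ forces it to be non-constant, witnessing viability.

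The one genuinely delicate point --- and the reason one cannot simply quote the viability theorem for the compact convex replacement $\check C$ --- is that $0\in\check C_x$ would render the tangency condition vacuous and the resulting solution possibly constant, hence useless as a causal curve. The slicing $\omega=1$ is exactly what removes the origin while preserving compactness, convexity, upper semi-continuity, and, because contingent cones are invariant under positive scaling, the precise condition $C_x\cap T_K(x)\ne\emptyset$. This is where the local Minkowski structure of Prop.\ \ref{iiu}--\ref{ohg} and the sharpness of the cones enter essentially.
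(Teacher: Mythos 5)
Your proof is correct, and it follows the same route the paper intends: the paper offers no written proof, declaring the result to be ``the manifold translation of Nagumo--Haddad's theorem'' \cite[p.\ 180]{aubin84}, and your argument is a faithful implementation of exactly that reduction (localization via Prop.~\ref{iiu} and Prop.~\ref{ohg}, then the classical viability theorem). Your hyperplane slicing $\omega=1$, which removes the origin from the compact convex replacement $\check C$ so that the tangency condition does not trivialize and the solution cannot degenerate to a constant, together with the direct difference-quotient argument for necessity (with the lower bound $\omega(y)\ge c$ ruling out a zero limit), supplies precisely the details that make the ``translation'' legitimate and that the paper leaves implicit.
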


\subsection{Imprisoned causal curves}

A cone structure is {\em causal } if it has no closed continuous causal curves.  A future inextendible continuous causal curve $\gamma$ which enters and remains in a compact set $K$ is said to be {\em imprisoned} in the compact set.
\begin{definition}
A cone structure $(M,C)$ is {\em non-imprisoning} if there is no future inextendible continuous causal curve contained in a compact set.
\end{definition}
We know that  closed cone structures are locally  non imprisoning, cf.\ Prop.\ \ref{iiu}. We want to investigate the non-local aspects related to imprisoned curves.
\begin{definition}
We say that a non-empty set $C$ is {\em biviable} if for each point $p\in C$ we can find an inextendible continuous causal curve passing through  $p$ which is contained in $C$.
\end{definition}

We recall that the future $\omega$-limit set of a future inextendible continuous causal curve $\gamma$  is $\Omega_f(\gamma)=\cap_{t \in \mathbb{R}} \overline{\gamma([t,+\infty))}$. The set $\Omega_p(\gamma)$ is defined similarly.

\begin{lemma} \label{pqg}  Let  $(M,C)$ be a closed cone structure.
For a future inextendible continuous causal curve $\gamma$ the set $\Omega_f(\gamma)$, if non-empty, is biviable, and analogously in the past case.
\end{lemma}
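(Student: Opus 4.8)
The plan is to apply the limit curve lemma to a sequence of parameter-shifted copies of $\gamma$ accumulating at the given point $p\in\Omega_f(\gamma)$, and then to verify that every point of the resulting limit curve again lies in $\Omega_f(\gamma)$. Fix a complete Riemannian metric $h$ and parametrize $\gamma$ by $h$-arc length on $[0,+\infty)$; future inextendibility forces the infinite domain by Cor.\ \ref{dox}. Since $p\in\Omega_f(\gamma)$, there is a sequence $t_n\to+\infty$ with $\gamma(t_n)\to p$.

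First I would introduce the shifted curves $\gamma_n(s):=\gamma(t_n+s)$, defined on $[-t_n,+\infty)$; each is an $h$-arc length parametrized, future inextendible continuous causal curve with $\gamma_n(0)=\gamma(t_n)\to p$. To fit the hypotheses of the limit curve lemma, which demands curves inextendible on all of $\mathbb{R}$, I would extend each $\gamma_n$ to the past (the time dual of Th.\ \ref{zar}) to obtain a fully inextendible $\tilde\gamma_n\colon(-\infty,+\infty)\to M$, still $h$-arc length parametrized and with $\tilde\gamma_n(0)=\gamma(t_n)$. Applying the limit curve lemma (with the constant sequence $C_n=C$) to $\tilde\gamma_n$ and the accumulation point $p$ yields an inextendible continuous causal curve $x\colon(-\infty,+\infty)\to M$ with $x(0)=p$, together with a subsequence, relabeled $\tilde\gamma_k$ with $t_k\to+\infty$, converging to $x$ $h$-uniformly on compact subsets.

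It remains to show $x\subset\Omega_f(\gamma)$, which is the crux. Fix $s\in\mathbb{R}$. Since $t_k\to+\infty$, for all large $k$ one has $s\ge -t_k$, whence $\tilde\gamma_k(s)=\gamma_k(s)=\gamma(t_k+s)$; that is, the relevant portion of $\tilde\gamma_k$ lies in the original, non-extended part of $\gamma$, so the arbitrary past extensions do not pollute the limit. As $t_k+s\to+\infty$, for every fixed $t$ we have $\gamma(t_k+s)\in\gamma([t,+\infty))$ for large $k$, and therefore $x(s)=\lim_k\gamma(t_k+s)\in\overline{\gamma([t,+\infty))}$; since $t$ is arbitrary, $x(s)\in\Omega_f(\gamma)$. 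Thus $x$ is an inextendible continuous causal curve through $p=x(0)$ entirely contained in $\Omega_f(\gamma)$, which is precisely what biviability requires. The statement for $\Omega_p(\gamma)$ follows by reversing the time orientation.

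I expect the main obstacle to be essentially bookkeeping rather than conceptual: arranging the shifts and the past extensions so that the limit curve lemma applies verbatim, while checking — as above — that for each fixed parameter value the extended past tails eventually drop out, so that the limiting point is genuinely an accumulation point of $\gamma$ along times tending to $+\infty$. (Local compactness of the relevant curve segments, needed to run the limit argument, is automatic: the $h$-arc length parametrization keeps $\tilde\gamma_k([-L,L])$ within $h$-distance $L$ of $\gamma(t_k)$, hence inside a fixed compact $h$-ball about $p$ for large $k$, by completeness of $h$.)
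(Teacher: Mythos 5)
Your proof is correct and follows essentially the same route as the paper: the paper applies the limit curve machinery to the segments $\gamma([t_k,3t_k])$ re-centred at $p_k=\gamma(2t_k)\to p$ (so the shifted domains $[-t_k,t_k]$ exhaust $\mathbb{R}$), whereas you shift by $t_n$ and graft on arbitrary past extensions via the time dual of Th.\ \ref{zar}, then check that these extensions drop out of the limit --- a harmless bookkeeping variant of the same idea. Your verification that every point of the limit curve lies in $\Omega_f(\gamma)$, being a limit of points $\gamma(t_k+s)$ with $t_k+s\to+\infty$, is exactly the argument implicit in the paper's proof.
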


\begin{proof}
Let $p\in \Omega_f(\gamma)$ and let us parametrize $\gamma$ by $h$-arc length where $h$ is a complete Riemannian metric. Let us set $p_k=\gamma(2t_k)$, where the sequence $t_k\to +\infty$ is chosen so that $p_k\to p$. Applying the limit curve theorem  to $\gamma([t_k,3t_k])$ we obtain  an inextendible continuous causal curve $\sigma$ contained in $\Omega_f(\gamma)$ and passing through $p$.
\end{proof}

\begin{lemma} \label{pqf}  Let  $(M,C)$ be a closed cone structure. If the future inextendible continuous causal curve $\gamma$ is future imprisoned in a compact set then $\Omega_f(\gamma)\ne \emptyset$.
The condition with {\em past} replacing {\em future} in the definition of non-imprisoning gives the same property.
\end{lemma}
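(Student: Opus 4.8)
The plan is to reduce the statement to a pure compactness argument, the only nontrivial input being Corollary \ref{dox}. First I would fix a complete Riemannian metric $h$ and reparametrize $\gamma$ by $h$-arc length. Since $\gamma$ is future inextendible, Corollary \ref{dox} tells us that its $h$-arc length is infinite, so after reparametrization the domain of $\gamma$ is $[0,+\infty)$. This is the conceptual crux: it guarantees that every tail $\gamma([t,+\infty))$ is non-empty, so that a curve which remains confined to a compact set while running for infinite arc length is forced to accumulate. By the imprisonment hypothesis there is $T\ge 0$ such that $\gamma([t,+\infty))\subset K$ for all $t\ge T$, where $K$ is the compact imprisoning set.

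Next I would consider the family of closed sets $\{\overline{\gamma([t,+\infty))}\}_{t\ge T}$. Each member is a non-empty closed subset of $K$: it is non-empty because the domain is unbounded, and it is contained in $K$ because $K$ is closed and contains the corresponding tail. The family is nested, since $t_1\le t_2$ implies $\gamma([t_2,+\infty))\subseteq\gamma([t_1,+\infty))$ and hence $\overline{\gamma([t_2,+\infty))}\subseteq\overline{\gamma([t_1,+\infty))}$. Consequently any finite subfamily has non-empty intersection, namely the member of largest index.

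By the finite intersection property of the compact space $K$, the total intersection $\bigcap_{t\ge T}\overline{\gamma([t,+\infty))}$ is non-empty. Since adjoining the sets $\overline{\gamma([t,+\infty))}$ with $t<T$ can only enlarge each member of the family, this intersection coincides with $\Omega_f(\gamma)=\bigcap_{t\in\mathbb{R}}\overline{\gamma([t,+\infty))}$, whence $\Omega_f(\gamma)\neq\emptyset$. The past version is the time-dual statement and follows by applying the identical argument to the cone structure with reversed time orientation. I do not anticipate a genuine obstacle here; unlike Lemma \ref{pqg}, no limit curve theorem is needed, and the entire content lies in invoking Corollary \ref{dox} to ensure the tails are non-empty, after which the finite intersection property closes the argument.
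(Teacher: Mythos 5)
Your argument for the first sentence is correct and is essentially the paper's own proof: both reduce the statement to the finite intersection property applied to the nested family of closed tail-closures inside the compact imprisoning set (the paper phrases it as $\Omega_f(\gamma)\cap K=\cap_{t}[\overline{\gamma([t,+\infty))}\cap K]$, you restrict to $t\ge T$; the two bookkeeping devices are equivalent). Your invocation of Corollary \ref{dox} to justify that the tails are non-empty after $h$-arc length reparametrization is a reasonable way to make the definition of $\Omega_f$ precise.

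However, there is a genuine gap in your treatment of the second sentence. You read it as the time-dual statement ``a past inextendible curve that is past imprisoned has $\Omega_p(\gamma)\ne\emptyset$'', which would indeed follow by reversing the time orientation. But the lemma asserts something stronger: that replacing \emph{future} by \emph{past} in the \emph{definition of non-imprisoning} (i.e., ``there is no past inextendible continuous causal curve contained in a compact set'') yields a condition \emph{equivalent} to the original one. This is an equivalence between two properties of the cone structure, not a dual restatement, and it does not follow from symmetry alone: one must show that the existence of a future inextendible causal curve contained in a compact set $K$ forces the existence of a \emph{past} inextendible one contained in a compact set. The paper does this by combining the first sentence with Lemma \ref{pqg}: since $\Omega_f(\sigma)\subset K$ is non-empty and biviable, through any of its points there passes a fully inextendible continuous causal curve contained in $\Omega_f(\sigma)\subset K$, in particular a past inextendible imprisoned curve; the converse implication is then the time dual of this argument. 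Because Lemma \ref{pqg} rests on the limit curve theorem, your closing remark that ``unlike Lemma \ref{pqg}, no limit curve theorem is needed'' is precisely where the proposal fails: the full statement of Lemma \ref{pqf} does require that input.
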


\begin{proof}
If $\gamma$ is imprisoned in $K$ then $\Omega_f(\gamma)\cap K=\cap_{t \in \mathbb{R}} [ \overline{\gamma([t,+\infty))} \cap K]$ which is non-empty by the finite intersection property.

Let $\sigma$ be a future inextendible continuous causal curve contained in a compact set $K$, then $\Omega_f(\sigma)\subset K$, and since $\Omega_f(\sigma)$ is biviable we can find inside it, and hence inside $K$, an inextendible continuous causal curve.
\end{proof}
 We are able  to generalize a result in \cite{minguzzi07f}. This new proof does not use the notion of lightlike geodesic.

\begin{theorem} \label{lof}
Let $(M,C)$ be a closed cone structure. Let $\gamma$ be a future inextendible continuous causal curve  imprisoned in a compact set $K$, then inside $\Omega_f(\gamma) \subset K$ there is a minimal biviable closed subset $B$. For every  inextendible continuous causal curve $\alpha\subset B$ we have $B=\overline\alpha=\Omega_f(\alpha)=\Omega_p(\alpha)$.
\end{theorem}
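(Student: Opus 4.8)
The plan is to obtain $B$ by a minimal-set (Zorn) argument and then to exploit its minimality in order to identify $B$ with the closure and with both limit sets of any inextendible causal curve lying in it. Fix a complete Riemannian metric $h$ and consider the family $\mathcal{F}$ of all nonempty closed biviable subsets of $\Omega_f(\gamma)$, ordered by reverse inclusion. By Lemma \ref{pqf} the set $\Omega_f(\gamma)$ is nonempty, it is closed and contained in the compact set $K$ by definition, and it is biviable by Lemma \ref{pqg}; hence $\Omega_f(\gamma)\in\mathcal{F}$ and $\mathcal{F}\ne\emptyset$. To apply Zorn's lemma I must show that every chain $\{B_i\}_{i\in I}\subset\mathcal{F}$ (totally ordered by inclusion) admits the upper bound $B_\ast:=\cap_i B_i$. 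Nonemptiness of $B_\ast$ follows from the finite intersection property of the nested nonempty compacta $B_i\subset K$, and $B_\ast$ is obviously closed; the only substantial point is to prove that $B_\ast$ is biviable.

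The biviability of $B_\ast$ is the heart of the argument and the step I expect to be the main obstacle. The idea is to run a limit-curve/compactness argument in the space of inextendible causal curves. Let $p\in B_\ast$ and let $\mathcal{P}$ be the set of $h$-arc length parametrized continuous causal curves $\sigma\colon\mathbb{R}\to K$ with $\sigma(0)=p$. Every such curve has infinite $h$-length in both directions, hence is inextendible by Cor.\ \ref{dox}; and since these curves are unit speed, the limit curve lemma (whose limit is causal by Th.\ \ref{mxe}) shows that $\mathcal{P}$ is sequentially compact, thus compact in the metrizable topology of uniform convergence on compact subsets. For each $i$ put $\mathcal{P}_i=\{\sigma\in\mathcal{P}\colon\sigma\subset B_i\}$. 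Each $\mathcal{P}_i$ is closed in $\mathcal{P}$ (a uniform-on-compacta limit of curves in the closed set $B_i$ stays in $B_i$) and nonempty (biviability of $B_i$ at $p\in B_i$, reparametrized by $h$-arc length). Because $\{B_i\}$ is a chain, any finite subfamily contains a smallest member $B_{i_0}$, whence $\emptyset\ne\mathcal{P}_{i_0}\subset\cap_k\mathcal{P}_{i_k}$, so $\{\mathcal{P}_i\}$ has the finite intersection property. Compactness of $\mathcal{P}$ then yields some $\sigma\in\cap_i\mathcal{P}_i$, that is an inextendible continuous causal curve through $p$ contained in every $B_i$, hence in $B_\ast$. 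This proves $B_\ast$ biviable, so every chain in $\mathcal{F}$ has an upper bound and Zorn's lemma furnishes an inclusion-minimal element $B\in\mathcal{F}$: a nonempty closed biviable subset of $\Omega_f(\gamma)$ admitting no proper nonempty closed biviable subset.

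Finally I would use minimality to prove the second assertion. Let $\alpha\subset B$ be any inextendible continuous causal curve; such curves exist since $B$ is biviable. As $\alpha$ is future inextendible and contained in the compact set $B$, Lemma \ref{pqf} gives $\Omega_f(\alpha)\ne\emptyset$, while Lemma \ref{pqg} gives that $\Omega_f(\alpha)$ is biviable; moreover it is closed and, since $B$ is closed with $\alpha\subset B$, it is contained in $B$. By minimality of $B$ we conclude $\Omega_f(\alpha)=B$, and the time-dual reasoning yields $\Omega_p(\alpha)=B$. Since $\Omega_f(\alpha)\subset\overline{\alpha}\subset\overline{B}=B$, the equality $\Omega_f(\alpha)=B$ squeezes $\overline{\alpha}=B$. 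Hence $B=\overline{\alpha}=\Omega_f(\alpha)=\Omega_p(\alpha)$, as claimed.
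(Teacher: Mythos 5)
Your overall architecture---Zorn's lemma on the family of nonempty closed biviable subsets of $\Omega_f(\gamma)$, then minimality forcing $\Omega_f(\alpha)=\Omega_p(\alpha)=B$, with the squeeze $B=\Omega_f(\alpha)\subset\overline{\alpha}\subset B$---is sound and is essentially the paper's strategy (the paper packages Zorn as a maximal chain via Hausdorff's principle, and your closing squeeze is in fact slightly cleaner than its use of $\overline\alpha=\alpha\cup\Omega_f(\alpha)\cup\Omega_p(\alpha)$). The gap sits exactly where you predicted: the claim that $\mathcal{P}$, the set of $h$-arc-length parametrized inextendible continuous causal curves through $p$ with image in $K$, is sequentially compact. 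The limit curve lemma does not give this. It produces a subsequence converging uniformly on compact subsets to an inextendible continuous causal curve, but the limit need not be $h$-arc-length parametrized, so it can fall outside $\mathcal{P}$; the paper flags precisely this point in Theorem \ref{son}, whose limit curve's ``parametrization is not necessarily the $h$-arc length parametrization''. Concretely, already in a flat Minkowski patch (which can sit inside an imprisoning example, e.g.\ the flat torus with Minkowski cones), lightlike zigzag curves through $p$ of unit $h$-speed converge uniformly to a timelike curve through $p$ parametrized at constant speed $1/\sqrt{2}$. Hence $\mathcal{P}$ is not closed in $C(\mathbb{R},M)$, it is not compact, and your appeal to the finite intersection property for the possibly uncountable family $\{\mathcal{P}_i\}$ collapses, since that step needs genuine topological compactness of the ambient space.

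The gap is local and repairable in two ways. First, replace $\mathcal{P}$ by its closure $\overline{\mathcal{P}}$ in the metrizable compact-open topology: by the limit curve lemma together with Theorem \ref{mxe}, every element of $\overline{\mathcal{P}}$ is still an inextendible continuous causal curve through $p$ with image in $K$ (which is all you use), a diagonal argument shows $\overline{\mathcal{P}}$ is sequentially compact hence compact, and your sets $\mathcal{P}_i$, redefined inside $\overline{\mathcal{P}}$ by image containment in $B_i$, remain closed (image containment is preserved under uniform limits into a closed set) and nonempty (reparametrize the curve given by biviability); the FIP argument then goes through verbatim. Second, and this is the paper's route, avoid uncountable chains altogether: since $M$ is second countable it is hereditarily Lindel\"of, so the intersection of any chain equals the intersection of a countable, hence after reindexing decreasing, subfamily $\{A_k\}$; a single sequential application of the limit curve lemma to unit-speed curves $\eta_k\subset A_k$ through $p$ yields a limit inextendible causal curve whose image lies in every $A_k$, and here it is irrelevant that the limit is not unit speed because only its image matters. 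Either patch completes your proof; as written, the compactness claim is false.
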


So the existence of an imprisoned continuous causal curve implies the existence of a continuous causal curve which accumulates on itself at every point.

\begin{proof}
The proof of Lemma \ref{pqf} shows that there is an inextendible continuous causal curve $\sigma$ contained in   $K$.
By the limit curve theorem $\bar \sigma$ is biviable. Let us consider the family $\mathcal{A}$ of all closed biviable subsets of $\bar \sigma$. This family is non-empty since it contains $\bar \sigma$. Let us order it through inclusion.  By
Hausdorff's maximum principle (equivalent to Zorn's lemma and the
axiom of choice) there is a maximal chain  of closed biviable sets  $\mathcal{C} \subset \mathcal{A}$. Since $M$ is second countable it is hereditary Lindel\"of, \cite[16E]{willard70} thus $\cap \mathcal{C}=\cap_k A_k$ where $\{A_k\}\subset \mathcal{C}$ is a countable subfamily. Notice that $\cap \mathcal{C}$ is non-empty
being the intersection of a nested family of non-empty compact sets (they have the finite intersection property). Every $p\in \cap \mathcal{C}$ belongs to $A_k$ so through it there passes an inextendible continuous causal curve $\eta_k$ contained in $A_k$. Since the $A_k$ are closed, by the limit curve theorem the limit curve $\eta$ passing through $p$ belongs to $A_k$ for every  $k$, and hence belongs to $\cap \mathcal{C}$. Thus $B:=\cap \mathcal{C}$ is a non-empty closed biviable set which must be minimal otherwise the chain $\mathcal{C}$ would not be maximal. If $p\in B$ through it there passes an inextendible continuous causal curve $\alpha$ contained in $B$, but since both $\Omega_f(\alpha)$ and $\Omega_p(\alpha)$ are biviable and contained in $B$, $\Omega_f(\alpha)=\Omega_p(\alpha)=B\supset \alpha$, which due to $\bar \alpha=\alpha \cup \Omega_f(\alpha) \cup \Omega_p(\alpha)$ implies $\bar\alpha =\Omega_f(\alpha)=\Omega_p(\alpha)$.
\end{proof}

\subsection{Stable causality}
In this section we investigate stable causality. The longest proofs connected to the main Theorem \ref{nio} are postponed to Sec.\ \ref{fir}-\ref{las}.


\begin{theorem} \label{ddo}
Let $C$ be a closed cone structure and let  $C'$  be a $C^0$ proper cone structure  such that $C<C'$. Then there is a locally Lipschitz proper cone structure $C''$ such that $C<C''<C'$.
\end{theorem}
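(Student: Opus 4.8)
The plan is to build $C''$ by a partition-of-unity argument: I glue local constant-in-coordinates proper cones that are sandwiched between $C$ and $C'$, using the convex combination of cones relative to a hyperplane provided by Prop.\ \ref{doo}.

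First I would produce the local building blocks. Fix $p\in M$ and a coordinate chart around $p$, so that fibers of $TU$ may be compared. Since $C'_p$ is sharp, choose a covector $\omega_p$ positive on $C'_p$ and let $A=C_p\cap\{\omega_p=1\}$ and $A'=C'_p\cap\{\omega_p=1\}$ be the corresponding sections; these are compact convex sets with $A\subset\mathrm{Int}\,A'$, using $C_p<C'_p$. By compactness there is $\delta>0$ with $A+\delta\bar B\subset\mathrm{Int}\,A'$, so $A''=A+\delta\bar B$ is a convex body with $A\subset\mathrm{Int}\,A''$ and $A''\subset\mathrm{Int}\,A'$; the cone $R_p:=\mathbb{R}_+A''$ is then a proper cone, constant in the chart, with $C_p<R_p<C'_p$. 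Upper semi-continuity of $C$ (Prop.\ \ref{yhh}) gives a neighborhood on which $C_q\subset\mathrm{Int}\,R_p$, while lower semi-continuity of $C'$ through Prop.\ \ref{low} (applied to the compact section $A''\subset\mathrm{Int}\,C'_p$) gives a neighborhood on which $R_p<C'_q$. Intersecting, I obtain a neighborhood $N_p$ and a smooth (constant) proper cone structure $R^p$ on $N_p$ with $C<R^p<C'$ there.

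Next I would glue. As in the proof of Prop.\ \ref{ohg}, a locally Lipschitz $1$-form $\omega$ positive on $C'$ can be assembled from local covectors positive on $C'$ by a Lipschitz partition of unity, since a convex combination of covectors positive on the convex cone $C'_q$ is again positive on it; passing to a smooth atlas makes these Lipschitz notions available. Set $P_q=\{\,\omega_q=1\,\}$; since $\omega>0$ on $C'$, the hyperplane $P_q$ cuts $C'_q$, and hence every $R^\alpha_q\subset C'_q$ and $C_q\subset C'_q$, in compact convex bodies. Take a locally finite refinement $\{N_\alpha\}$ of $\{N_p\}$ with associated cones $R^\alpha$ and a subordinate locally Lipschitz partition of unity $\{\rho_\alpha\}$, and define $C''_q$ to be the convex combination $(R^\alpha)_{(P_q,\{\rho_\alpha(q)\})}$ of Prop.\ \ref{doo}, i.e.\ the cone whose section on $P_q$ is $\sum_\alpha\rho_\alpha(q)\,\widetilde{R^\alpha_q}$. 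For each $q$ only finitely many weights are nonzero, and for those $q\in N_\alpha$, so $C_q<R^\alpha_q<C'_q$; Prop.\ \ref{doo} then yields $C_q<C''_q<C'_q$ fiberwise, with each $C''_q$ a proper cone.

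The step I expect to be the main obstacle is verifying that $C''$ is \emph{locally Lipschitz} (and hence, being continuous, a genuine proper cone structure). This amounts to controlling the Hausdorff distance between the spherical sections of $C''_{q_1}$ and $C''_{q_2}$ linearly in $\Vert q_1-q_2\Vert$. The inputs are all locally Lipschitz: the hyperplanes $P_q$ vary Lipschitz because $\omega$ does; the sections $\widetilde{R^\alpha_q}=R^\alpha_q\cap P_q$ vary Lipschitz because $R^\alpha$ is constant and $P_q$ is Lipschitz, the intersection being Lipschitz since on a compact set $P_q$ stays uniformly transverse to the cones (the transversality being bounded below by $\omega>0$ on the compact section of $C'$); and the weighted Minkowski sum $\sum_\alpha\rho_\alpha(q)\,\widetilde{R^\alpha_q}$ depends Lipschitz-ly on both weights and summands, uniformly on compacta. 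Converting the resulting body in $P_q$ back to the unit-sphere section of the cone it spans preserves the linear estimate, again by uniform transversality. This yields local Lipschitzness of $C''$, and since $C<C''<C'$ with each $C''_q$ proper, $(M,C'')$ is the desired locally Lipschitz proper cone structure.
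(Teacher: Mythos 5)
Your proposal is correct and follows essentially the same route as the paper's proof: local translation-invariant proper cones sandwiched between $C$ and $C'$ via the upper semi-continuity of $C$ and the (lower semi-)continuity of $C'$, then globalization by a locally Lipschitz partition of unity and the convex combinations of Prop.\ \ref{doo} relative to the hyperplane distribution $\omega^{-1}(1)$, with $\omega$ the Lipschitz $1$-form positive on $C'$ from Prop.\ \ref{ohg}. The only difference is that you spell out the verification that the glued structure is locally Lipschitz (uniform transversality, Lipschitz dependence of the Minkowski sums), a point the paper leaves implicit.
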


For this type of result see also \cite[Th.\ 2.5]{smirnov02},
 \cite[Th.\ 1.13.1]{aubin84}.
\begin{proof} At every $p\in M$, we can find in a coordinate neighborhood of $p$ a locally Lipschitz proper cone structure $C''$ such that $C<C''<C'$. Indeed, take $C''_p$ a proper cone such that $C_p<C''_p<C'_p$ and extend $C''$ by translation using the affine structure induced by the coordinate neighborhood, so that $C''$ is locally Lipschitz. Then shrinking the neighborhood if necessary, we find using the upper semi-continuity of $C$ and the continuity of $C''$ and $C'$, $C<C''<C'$ in such neighborhood.

Let $\omega$ be a Lipschitz 1-form positive on $C'$ (Prop.\ \ref{ohg}). The result is globalized using a partition of unity $\{\varphi_i\}$ and Prop.\ \ref{doo} where $C_i$ is $C''$ for the coordinate neighborhood $U_i\supset \textrm{supp} \,\varphi_i$ and the plane $P_x\subset T_xM$ at $x\in M$ is  $\omega^{-1}(1)$. Then $C_{(P,\{\varphi_i\})}$ is a locally Lipschitz proper cone structure such that $C<C_{(P,\{\varphi_i\})}<C'$. \ \ \
%
%
\end{proof}

\begin{theorem} \label{dxp}
Let $C$ be a closed cone structure and let  $C'$ be a $C^0$ proper cone structure, with $C<C'$, then $\overline{J_{C}}\subset I_{C'}\cup \Delta$.
\end{theorem}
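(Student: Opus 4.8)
The plan is to prove first the \emph{non-closed} inclusion $J_C\subseteq I_{C'}\cup\Delta$, and then upgrade it to the closure $\overline{J_C}$ using the limit curve theorem together with the openness of the chronological relation of the wider structure $C'$. Throughout, if $p=q$ the pair lies in $\Delta$ and there is nothing to prove, so assume $p\neq q$.

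For the non-closed inclusion I would argue locally. Let $\gamma$ be a continuous $C$-causal curve from $a$ to $b$ with $a\neq b$. Cover its (compact) image by finitely many convex coordinate neighborhoods $U_i$, each carrying a flat Minkowski metric $g_i$ with $C\vert_{U_i}<C^{g_i}\vert_{U_i}<C'\vert_{U_i}$. Such neighborhoods exist exactly as in the construction of Prop.\ \ref{iiu}: at a point $x$ choose a round cone $R$ with $C_x<R<C'_x$ and take $R$ to be the Minkowski cone of $g_i$; upper semi-continuity of $C$ keeps $C<C^{g_i}$ nearby, while Prop.\ \ref{low} applied to the (continuous, hence lower semi-continuous) structure $C'$ keeps $C^{g_i}<C'$ nearby. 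On each $U_i$ the restriction of $\gamma$ is absolutely continuous with $\dot\gamma\in C\subset\mathrm{Int}\,C^{g_i}$ almost everywhere, so in the flat coordinates $\gamma^0$ increases strictly faster than the Euclidean length of the spatial part; integrating, the displacement between any two of its points is a future $g_i$-timelike vector, hence these points are joined by a $g_i$-timelike straight segment, which is $C'$-timelike because $\mathrm{Int}\,C^{g_i}\subset C^{g_i}\subset\mathrm{Int}\,C'$. Partitioning $\gamma$ into arcs each contained in one $U_i$ and chaining by transitivity of $I_{C'}$ (rounding corners by Prop.\ \ref{oaw}) yields $a\ll_{C'}b$. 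Thus $J_C\subseteq I_{C'}\cup\Delta$.

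For the theorem itself, take $(p,q)\in\overline{J_C}$ with $p\neq q$, and pick $C$-causal curves $\sigma_n$ joining $p_n\to p$ to $q_n\to q$. Applying the limit curve theorem (Th.\ \ref{main}) to the constant sequence $C_n=C$ gives two alternatives. In case (i) a continuous $C$-causal curve $\sigma$ joins $p$ to $q$, and the previous step immediately gives $p\ll_{C'}q$.

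The main obstacle is case (ii), where the limit splits into a future-inextendible $C$-causal ray $x^p$ issuing from $p$ and a past-inextendible $C$-causal ray $x^q$ ending at $q$, with $(p',q')\in\overline{J_C}$ for all $p'\in x^p$, $q'\in x^q$ — this is only a closure relation, so the argument cannot simply be iterated. I would resolve it using the openness of $I_{C'}$ (valid since $C'$ is a proper cone structure). Choose $p'\in x^p\setminus\{p\}$ and $q'\in x^q\setminus\{q\}$ (they exist since the rays are inextendible, hence nondegenerate); by the first step $p\ll_{C'}p'$ and $q'\ll_{C'}q$, so $p'\in I^+_{C'}(p)$ and $q'\in I^-_{C'}(q)$, both open. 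Since $(p',q')\in\overline{J_C}$, pick $a_m\to p'$, $b_m\to q'$ with $(a_m,b_m)\in J_C$. For $m$ large, openness gives $a_m\in I^+_{C'}(p)$ and $b_m\in I^-_{C'}(q)$, that is $p\ll_{C'}a_m$ and $b_m\ll_{C'}q$; moreover $(a_m,b_m)\in J_C\subseteq I_{C'}\cup\Delta$ forces $a_m\ll_{C'}b_m$ or $a_m=b_m$. In either case transitivity of $I_{C'}$ chains these relations into $p\ll_{C'}q$, which establishes $\overline{J_C}\subset I_{C'}\cup\Delta$.
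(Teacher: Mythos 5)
Your overall architecture coincides with the paper's: both reduce the theorem to the non-closed inclusion $J_{C}\subset I_{C'}\cup\Delta$, prove that locally along a finite covering of the connecting curve, and upgrade to $\overline{J_C}$ via the limit curve theorem together with the openness of $I_{C'}$; your third paragraph correctly fills in the case (ii) argument that the paper compresses into a single sentence. The gap is in your local step, namely the claim that at each point one can choose a \emph{round} cone $R$ with $C_x<R<C'_x$. This two-sided interposition of a round cone is impossible in general. Compact hyperplane sections of round cones are ellipsoids, so your claim would require that between any two nested convex bodies one can always squeeze an ellipsoid. Take $C_x$ to be the cone over the square $[-1,1]^2$ placed at height $y^0=1$, and $C'_x$ the cone over $[-(1+\epsilon),1+\epsilon]^2$, so that $C_x<C'_x$. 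An ellipse $\{a y_1^2+b y_1y_2+c y_2^2\le 1\}$ whose interior contains the four corners $(\pm 1,\pm 1)$ must satisfy $a+c\le 1-\vert b\vert\le 1$, while containment of the ellipse in the open fattened square forces its boundary points $(\pm 1/\sqrt{a},0)$ and $(0,\pm 1/\sqrt{c})$ to satisfy $a,c> (1+\epsilon)^{-2}$, hence $2(1+\epsilon)^{-2}<1$, impossible for $\epsilon\le\sqrt{2}-1$. So for polyhedral cone structures --- precisely the non-round cones this theory is designed to cover --- your neighborhoods $U_i$ need not exist. Note that Prop.\ \ref{iiu} only provides a round cone \emph{containing} $C_x$ in its interior, with no upper bound by $C'$; the paper never claims the two-sided round interposition, and Th.\ \ref{ddo} interposes a locally Lipschitz, generally non-round, structure instead.

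The repair is to drop roundness: choose any proper cone $D$ with $C_x<D<C'_x$ (always possible, e.g.\ the convex conic hull of a small compact spherical neighborhood of $C_x\cap\mathbb{S}^{n}$, which is sharp and lies in $\mathrm{Int}\,C'_x$ once the neighborhood is small), extend it as a \emph{constant} cone in the coordinates of $U_i$ --- exactly the local step in the proof of Th.\ \ref{ddo} --- and keep your shrinking argument (upper semi-continuity of $C$, Prop.\ \ref{low} for $C'$). Your integration step then needs one modification: instead of ``$\gamma^0$ grows faster than the Euclidean spatial length'', argue that if $\dot\gamma\in\mathrm{Int}\,D$ a.e.\ on $[t_1,t_2]$ with $D$ constant in the coordinates, then the displacement $w=\int_{t_1}^{t_2}\dot\gamma\,\dd t$ lies in $\mathrm{Int}\,D$: otherwise a covector $\ell$ strictly separating $w/(t_2-t_1)$ from the open convex set $\mathrm{Int}\,D$, i.e.\ with $\ell(v)>\ell(w)/(t_2-t_1)$ for all $v\in\mathrm{Int}\,D$, would yield $\ell(w)=\int_{t_1}^{t_2}\ell(\dot\gamma)\,\dd t>\ell(w)$, a contradiction. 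The straight segment then has constant tangent $w\in\mathrm{Int}\,D\subset\mathrm{Int}\,C'$, and your chaining by transitivity of $I_{C'}$ concludes as before. With this repair your proof is correct, and its local step is genuinely more elementary than the paper's, which interposes a locally Lipschitz structure and invokes Th.\ \ref{aam} and Th.\ \ref{soa}, hence ultimately Filippov's relaxation theorem.
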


\begin{proof}
It is sufficient to prove ${J_{C}}\subset I_{C'}\cup \Delta$, in fact the general case follows from the limit curve theorem \ref{main} case (ii) with $J_n=J_C$ using the openness of $I_{C'}$.
Moreover, it is sufficient to prove the local version: every point has a coordinate neighborhood $U$ such that ${J_{C}(U)}\subset I_{C'}(U)\cup \Delta(U)$, since a continuous $C$-causal curve segment can be finitely covered by such neighborhoods. If $C'$ is a locally Lipschitz proper cone structure (with respect to a $C^1$-compatible smooth atlas)   the proof is as follows: let $(p,q)\in J_C(U)\backslash \Delta(U)$, the $h$-arc length parametrized continuous $C$-causal curve connecting $p$ to $q$ has almost everywhere derivative in $C$, thus in $\mathrm{Int} C'$, thus it is a continuous $C'$-causal curve. But it cannot be $C'$-achronal otherwise the derivative would stay almost everywhere in $\p \mathrm{Int} C'$ (Th.\ \ref{aam}), a contradiction, hence by Th.\ \ref{soa} $(p,q)\in I_{C'}(U)$.
If $C'$ is just continuous we can find $C''$ locally Lipschitz such that $C<C''<C'$ 
 then  arguing as above $(p,q)\in I_{C''}(U)\subset I_{C'}(U)$.
\end{proof}

\begin{definition}
Let $(M,C)$ be a closed cone structure. The Seifert relation is the intersection of the causal relations of all $C^0$ proper  cone structures having wider cones: $J_S=\cap_{C'>C} J_{C'}$.
\end{definition}
Theorem \ref{ddo} shows that in the definition of the Seifert relation the $C^0$ condition can be replaced by ``locally Lipschitz''.
Proposition \ref{ohg} shows that due to the sharpness of $C$ the family on the right-hand side is  non-empty.

The Seifert relation can be equivalently defined using the chronological relation of the enlarged cones or the closure of the causal relation. The proof is as in Lorentzian geometry and reported here for completeness \cite{minguzzi07}.

\begin{proposition} \label{paq}
Let $(M,C)$ be a closed cone structure. Then
\[
J_S=\Delta\cup \big[\cap_{C'>C} I_{C'}\big]=\cap_{C'>C} \bar J_{C'},
\]
where $C'$ is a $C^0$ proper  cone structure having wider cones, $C'>C$.
\end{proposition}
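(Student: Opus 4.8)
The plan is to prove the two claimed identities simultaneously by establishing a cycle of inclusions among the three sets, so that they collapse to a single set. Writing $J_S=\bigcap_{C'>C}J_{C'}$ (with $C'$ ranging over the $C^0$ proper cone structures wider than $C$), I would aim for the chain
\[
\Delta\cup\Big[\bigcap_{C'>C} I_{C'}\Big]\ \subset\ J_S\ \subset\ \bigcap_{C'>C}\bar J_{C'}\ \subset\ \Delta\cup\Big[\bigcap_{C'>C} I_{C'}\Big],
\]
after which all four sets in the chain coincide, giving both equalities at once.

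The first two inclusions are immediate and I would dispose of them first. Since $I_{C'}\subset J_{C'}\subset\bar J_{C'}$ for each admissible $C'$, intersecting gives $\bigcap_{C'>C}I_{C'}\subset J_S\subset\bigcap_{C'>C}\bar J_{C'}$; and since $J_{C'}$ is reflexive, $\Delta\subset J_{C'}$ for every $C'$, so $\Delta\subset J_S$ as well. Hence the leftmost set is contained in $J_S$, and $J_S$ is contained in the intersection of the closures.

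The substantial step is the last inclusion $\bigcap_{C'>C}\bar J_{C'}\subset \Delta\cup[\bigcap_{C'>C} I_{C'}]$. Here I would take $(p,q)$ in the left-hand intersection with $p\neq q$, fix an \emph{arbitrary} $C^0$ proper cone structure $C'>C$, and show $(p,q)\in I_{C'}$. The key idea — and the main obstacle — is that one cannot feed $C'$ directly into Theorem \ref{dxp}, because that theorem converts the closure of the causal relation of a \emph{strictly smaller} structure into the chronological relation of the larger one. So I would first interpolate: by Theorem \ref{ddo} there is a locally Lipschitz (in particular $C^0$) proper cone structure $\hat C$ with $C<\hat C<C'$. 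Because $\hat C$ is a $C^0$ proper cone structure wider than $C$, it indexes the intersection on the left, whence $(p,q)\in\bar J_{\hat C}$. Now Theorem \ref{dxp} applies to the closed cone structure $\hat C$ (a $C^0$ proper cone structure is closed by Prop.\ \ref{yhh}) together with the $C^0$ proper cone structure $C'$, using $\hat C<C'$, yielding $\overline{J_{\hat C}}\subset I_{C'}\cup\Delta$. Therefore $(p,q)\in\bar J_{\hat C}\subset I_{C'}\cup\Delta$, and since $p\neq q$ we conclude $(p,q)\in I_{C'}$.

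Since $C'$ was arbitrary, $(p,q)\in\bigcap_{C'>C}I_{C'}$, which completes the final inclusion and hence the chain. The only delicate point is the insertion of the intermediate cone structure $\hat C$ strictly between $C$ and $C'$: it is precisely this extra room that lets Theorem \ref{dxp} upgrade membership in a closure $\bar J$ to membership in a chronological relation $I$. Everything else is a matter of bookkeeping with the monotonicity $I\subset J\subset\bar J$ and reflexivity of $J$.
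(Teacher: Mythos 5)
Your proof is correct and uses essentially the same machinery as the paper: both arguments hinge on interpolating a locally Lipschitz proper cone structure strictly between $C$ and an arbitrary $C'>C$ (Theorem \ref{ddo}) and then invoking Theorem \ref{dxp} to pass from the closure $\bar J$ of the narrower structure into $I_{C'}\cup\Delta$. The only difference is organizational: you arrange the three sets in a single cycle of inclusions so that the interpolation argument runs once (in its strongest form, $\cap_{C'>C}\bar J_{C'}\subset \Delta\cup[\cap_{C'>C} I_{C'}]$), whereas the paper proves the two equalities separately and runs the same argument twice.
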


\begin{proof}
For the first equality we have only to show that  $\cap_{C'>C} J_{C'} \subset \Delta\cup \big[\cap_{C'>C} I_{C'}\big]$ the other inclusion being obvious. For every $C^0$ proper cone structure $\check C>C$, taking a $C^0$ proper cone structure $\tilde C$, such that $C<\tilde C<\check C$, it follows that $J_{\tilde C} \subset \Delta\cup I_{\check C}$ hence $\cap_{C'>C} J_{C'} \subset \Delta\cup I_{\check C}$. Since $\check C$ is arbitrary, the claim follows.

In order to prove the identity $J_S=\cap_{C'>C} \bar J_{C'}$ we need  just to prove the inclusion $\cap_{C'>C} \bar J_{C'}\subset \cap_{\check C>C} J_{\check C}$ which follows from $\cap_{C'>C} \bar J_{ C'}\subset J_{\check C}$ for $\check C>C$, which is immediate from Th.\ \ref{ddo} and Th.\ \ref{dxp}.
\end{proof}

\begin{theorem}
Let $(M,C)$ be a closed cone structure. Then $J_S$ is  closed, reflexive, transitive (a closed preorder) and contains $J$.
\end{theorem}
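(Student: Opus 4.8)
The plan is to establish the four properties separately, leaning on the two alternative descriptions of the Seifert relation furnished by Proposition \ref{paq}, namely $J_S=\Delta\cup\big[\cap_{C'>C}I_{C'}\big]$ and $J_S=\cap_{C'>C}\bar J_{C'}$. Each property will be matched to whichever description makes it transparent, so that the only real work has already been absorbed into Proposition \ref{paq}.

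First I would dispose of reflexivity and the inclusion $J\subset J_S$ directly from the definition $J_S=\cap_{C'>C}J_{C'}$. If $C<C'$, then any continuous $C$-causal curve has almost-everywhere velocity in $C_{x(t)}\subset\mathrm{Int}\,C'_{x(t)}\subset C'_{x(t)}$, so it is also a continuous $C'$-causal curve; hence $J\subset J_{C'}$ for every admissible $C'$, and intersecting over the family gives $J\subset J_S$. Since each $J_{C'}$ contains the diagonal $\Delta$, so does $J_S$, which yields reflexivity.

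For closedness I would invoke the description $J_S=\cap_{C'>C}\bar J_{C'}$. Each $\bar J_{C'}$ is closed by construction, and an arbitrary intersection of closed sets is closed, so $J_S$ is closed with no further computation.

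Transitivity is the only step that requires an argument, and here the useful description is $J_S=\Delta\cup\big[\cap_{C'>C}I_{C'}\big]$. Given $(p,q),(q,r)\in J_S$, the cases $p=q$ or $q=r$ are immediate, so one may assume both pairs lie off the diagonal; then $(p,q),(q,r)\in I_{C'}$ for every $C^0$ proper cone structure $C'>C$. Since the chronological relation $I_{C'}$ of a proper cone structure is open, contained in $J_{C'}$, and in particular transitive, one gets $(p,r)\in I_{C'}$ for each such $C'$, whence $(p,r)\in\cap_{C'>C}I_{C'}\subset J_S$. I expect no genuine obstacle: the entire content has been pushed into Proposition \ref{paq}, whose proof reconciles the three descriptions of $J_S$. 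The transitivity argument works precisely because passing to the open transitive relations $I_{C'}$ removes the boundary and limit subtleties that prevent $J$ itself from being obviously a closed preorder; the mild subtlety worth double-checking is only that one may legitimately discard the diagonal contributions before applying transitivity of $I_{C'}$, which is exactly what the decomposition $J_S=\Delta\cup[\cap_{C'>C}I_{C'}]$ licenses.
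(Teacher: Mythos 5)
Your proposal is correct and takes essentially the same route as the paper, which likewise obtains closedness immediately from Prop.\ \ref{paq} and treats reflexivity, transitivity and the inclusion $J\subset J_S$ as clear consequences of the definition $J_S=\cap_{C'>C}J_{C'}$ (each $J_{C'}$ contains $\Delta$ and contains $J$ because every continuous $C$-causal curve is $C'$-causal). The only cosmetic difference is your detour through $J_S=\Delta\cup\big[\cap_{C'>C}I_{C'}\big]$ for transitivity: it is valid, but it is even more direct to observe that each $J_{C'}$ is itself transitive (concatenation of continuous $C'$-causal curves), so the intersection is transitive without invoking the chronological relations.
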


\begin{proof}
The property of being closed follows immediately from Prop.\ \ref{paq}, the other properties are clear.
\end{proof}

The next result is particularly useful in conjunction with the last statement of the limit  curve theorem \ref{main}.

\begin{theorem} \label{sqd}
Let $(M,C)$ be a closed cone structure. Then the family of   locally Lipschitz proper  cone structures $C'$ such that $C<C'$ is non-empty. Moreover, for every locally Lipschitz proper structure $\tilde C>C$  we can find a countable subfamily of locally Lipschitz proper cone structures $\{C_k\}$ such that $C<C_{k+1}< C_k<\tilde C$, $C=\cap_k C_k$ and $J_S=\cap_k \bar J_{k}=\Delta\cup \cap_k I_k=\cap_k J_k=\cap_k J_{k S}$.
\end{theorem}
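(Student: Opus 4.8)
The plan is to split the statement into the construction of the sequence $\{C_k\}$ and the verification of the chain of equalities. The non-emptiness of the family of locally Lipschitz proper cone structures wider than $C$ is immediate from Proposition \ref{ohg}. For the sequence I would fix a complete Riemannian metric $h$ and an exhaustion $M=\bigcup_j K_j$ by compacta with $K_j\subset \mathrm{Int}\,K_{j+1}$, and build $C_k$ inductively. The one conceptual point here is that pure iteration of Theorem \ref{ddo} produces a strictly decreasing sequence of locally Lipschitz proper cone structures with $C<C_{k+1}<C_k<\tilde C$ but does \emph{not} by itself force $\cap_k C_k=C$; to secure convergence I would feed in an external approximation of the upper semi-continuous, compact-convex-valued map $x\mapsto (C_x\cup\{0\})\cap \bar B$ from above by locally Lipschitz maps $G_k$ with $C<G_k$ and $\cap_k G_k=C$ (the cited results \cite[Th.\ 2.5]{smirnov02}, \cite[Th.\ 1.13.1]{aubin84}), and at each step interpolate with Theorem \ref{ddo} and the proper $\tilde C$ to get a proper $C_k$ squeezed below $\min\{C_{k-1},G_k,\tilde C\}$. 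Since $C\le C_k\le G_k$, this yields $\cap_k C_k=C$, while strict nesting and the bound below $\tilde C$ are built into the interpolation; the remaining arrangements are routine bookkeeping.

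For the equalities I would first dispatch the \emph{middle} identities $\cap_k J_k=\cap_k \bar J_k=\cap_k J_{kS}$ using only the Seifert formalism of Proposition \ref{paq} applied to each closed cone structure $C_k$, where $J_{kS}=\cap_{C'>C_k}J_{C'}=\cap_{C'>C_k}\bar J_{C'}$. Two observations suffice: since $C_k\le C'$ for every $C'>C_k$, one has $\bar J_k\subseteq \cap_{C'>C_k}\bar J_{C'}=J_{kS}$; and since $C_k>C_{k+1}$, taking $C'=C_k$ in $J_{(k+1)S}=\cap_{C'>C_{k+1}}J_{C'}$ gives $J_{(k+1)S}\subseteq J_k$. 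The chain $J_k\subseteq \bar J_k\subseteq J_{kS}$ together with $J_{(k+1)S}\subseteq J_k$ then squeezes $\cap_k J_k=\cap_k \bar J_k=\cap_k J_{kS}$. The easy inclusions $J_S\subseteq \cap_k J_k$ and $J_S\subseteq \Delta\cup \cap_k I_k$ hold because each $C_k>C$ is admissible in the defining intersections $J_S=\cap_{C'>C}J_{C'}=\Delta\cup\cap_{C'>C}I_{C'}$, while $\Delta\cup\cap_k I_k\subseteq \cap_k J_k$ is trivial. Consequently the entire statement reduces to the single reverse inclusion $\cap_k \bar J_k\subseteq J_S$, after which all five sets collapse to one.

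For this remaining inclusion I would fix a $C^0$ proper $C'>C$ and prove $\cap_k \bar J_k\subseteq \bar J_{C'}$, so that intersecting over $C'$ returns $J_S=\cap_{C'>C}\bar J_{C'}$. Given $(p,q)\in\cap_k \bar J_k$ with $p\ne q$, I would pick $C_k$-causal curves $\gamma_k$ joining $p_k\to p$ to $q_k\to q$, reparametrize by $h$-arc length, and apply the limit curve theorem \ref{main} to the sequence $C_k\downarrow C$. In the favorable case a limit $C$-causal curve from $p$ to $q$ appears, and since $C<C'$ its tangent lies in $\mathrm{Int}\,C'$ almost everywhere, so $(p,q)\in J_{C'}$ by Theorem \ref{dxp}. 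The main obstacle is the alternative, where the limits are a future-inextendible $x^p$ from $p$ and a past-inextendible $x^q$ ending at $q$: the approximating curves then escape every compact set, and precisely because $\{C_k\}$ \emph{cannot} be cofinal in $\{C''>C\}$ (a diagonal argument produces cones wider than $C$ yet thinner than every $C_k$ at infinity), one may not simply assert $C_k<C'$ for some $k$. To close this case I would exploit the final clause of Theorem \ref{main}, namely $(p',q')\in\cap_n \bar J_n$ for all $p'\in x^p$, $q'\in x^q$, together with $(p,p'),(q',q)\in J_C\subseteq I_{C'}\cup\Delta$ from Theorem \ref{dxp}, and try to bridge the two inextendible branches by re-applying the limit curve theorem to the pairs $(x^p(s),x^q(-t))$: whenever this yields a genuine $C$-causal connecting segment, concatenation with the branches produces a $C$-causal curve from $p$ to $q$, hence $(p,q)\in J_C\subseteq J_S$. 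Controlling this bootstrapping so that the escaping alternative is eventually excluded is the delicate heart of the argument, and is where I expect the real work to lie.
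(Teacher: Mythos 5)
Your reduction of the identities is sound: granting the sequence, the squeeze $J_{(k+1)S}\subset J_k\subset \bar J_k\subset J_{kS}$ (from Prop.\ \ref{paq}) collapses the middle intersections, and everything does reduce to the single inclusion $\cap_k\bar J_k\subset J_S$. The genuine gap is that your construction of $\{C_k\}$ cannot deliver that inclusion. You build the sequence solely to achieve $\cap_k C_k=C$ (via external approximation of the cone bundle), but pointwise convergence of the cones does not imply $\cap_k\bar J_k=J_S$: as you yourself observe, a cone structure $C'>C$ witnessing $(p,q)\notin J_S$ may pinch towards $C$ at infinity faster than every $C_k$, and then $C_k$-causal curves can exploit the relative wideness of $C_k$ far away to connect $p$ to $q$ for every $k$, while no $C$-causal or $C'$-causal connection exists. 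This is exactly the escaping case (ii) of the limit curve theorem, and there the theorem only returns inextendible branches with pairs in $\cap_n\bar J_n$ — it reproduces the problem rather than resolving it. No bootstrapping can close this, because for a sequence chosen only to satisfy $\cap_k C_k=C$ the inclusion $\cap_k\bar J_k\subset J_S$ is simply not true in general; the "delicate heart" you defer is not delicate but impossible.

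The missing idea is a selection argument that adapts the sequence to $J_S$ itself, and it makes the hard inclusion trivial. Since $J_S=\cap_{C'>C}\bar J_{C'}$ is an intersection of closed sets and $M\times M$ is second countable, hence hereditarily Lindel\"of, the open sets $(M\times M)\backslash \bar J_{C'}$ covering $(M\times M)\backslash J_S$ admit a countable subcover: there is a countable family $\{C_s'''\}$, $C_s'''>C$, with $J_S=\cap_s\bar J_{C_s'''}$. The same Lindel\"of argument applied to $TM\backslash 0$ (in place of your external approximation theorems) gives a countable family $\{C_i''\}$ with $C=\cap_i C_i''$. One then defines $C_k$ inductively with $C<C_{k+1}<C_{k+1}'''\cap C_{k+1}''\cap C_k$ (and below $\tilde C$), using Theorem \ref{ddo} and Prop.\ \ref{doo} for the interpolation. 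With this choice, $\cap_k\bar J_k\subset\cap_s\bar J_{C_s'''}=J_S$ holds by construction, $\cap_k C_k=C$ holds as well, and the remaining identities follow exactly as in your second paragraph. This is the paper's route, and it requires no limit curve argument at all.
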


\begin{proof}
The first statement is Prop.\ \ref{ohg}.
For every $v \notin C_p$ we can find a proper cone $C'_p>C_p$ such that $v\notin C'_p$. Using again the local affine structure induced by a coordinate neighborhood of $p$, and the upper semi-continuity of $C$, we can extend locally the inclusion, and then globalize with the same arguments  introduced in the previous proofs so as to find a locally Lipschitz proper cone structure $C'>C$ such that $v\notin C'$. This means that $C=\cap_{C'>C} C'$ where $C'$ runs over all locally Lipschitz proper cone structures.
Since $TM\backslash 0$ is second countable it is hereditary Lindel\"of \cite[16E]{willard70}, thus $[TM\backslash 0]\backslash C$ is Lindel\"of.
The intersection $\cap_{C'>C}  C'$  can be replaced by the intersection of a countable family $\{ C_i'\}$. At this point we define inductively the locally Lipschitz proper cone structure  $ C''_{i}$ is such a way that $ C< C''_{i+1}< C_{i+1}'\cap  C''_{i}$ starting from $ C''_0=TM\backslash 0$, so that  $C<C''_{k+1}< C''_k$, $C=\cap_k C''_k$.

Next we use the fact that $M\times M$ is second countable hence hereditary Lindel\"of. So $(M\times M)\backslash J_S$ is Lindel\"of, which implies that $J_S=\cap_s \bar J_{C_s'''}$ for some countable family of locally Lipschitz proper cone structures  $\{C_s'''\}$ with $C_s'''>C$. Now take inductively the locally Lipschitz proper cone structures  $C_k$ such that $ C< C_{k+1}< C_{k+1}'''\cap C_{k+1}''\cap  C_k$ starting from $C_0= TM\backslash 0$.

Finally, the $\bar J_k$ in the intersection can be replaced by $ \Delta \cup I_k$ or $J_k$ since by Th.\ \ref{dxp} for every $k$, $\bar J_{k+1}\subset  \Delta \cup I_k\subset J_k$.  Moreover, $J_k$ can be replaced by $J_{k  S}$ since $J_{k+1  S} \subset J_k$ as $C_{k+1}<C_k$.
\end{proof}

The relation $J_S$ was introduced by Seifert in 1971. It is stable as the next result proves, so the letter ``$S$'' can also be nicely read as  ``Stable'' and $J_S$ can be called the {\em stable (causal) relation} compatibly with the terminology recently adopted in \cite{bernard16}.

\begin{theorem}(stability of the Seifert relation)\\
Let $(M,C)$ be a closed cone structure. For every compact set $K\subset M$, and open neighborhood $V\supset J_S\cap K\times K$ in the topology of $K\times K$, we can find a locally Lipschitz proper cone structure $C'>C$ such that $J'_S\cap K\times K\subset V$ (thus the same holds for all cone structure narrower than $C'$).
\end{theorem}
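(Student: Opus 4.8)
The plan is to reduce the statement to an elementary compactness argument built on top of the countable exhausting family supplied by Theorem \ref{sqd}. Applying that theorem to any locally Lipschitz proper cone structure $\tilde C > C$ produces locally Lipschitz proper cone structures $\{C_k\}$ with $C < C_{k+1} < C_k < \tilde C$, $C = \cap_k C_k$, and, crucially for us, $J_S = \cap_k J_{kS}$, where $J_{kS}$ is the Seifert relation of $(M,C_k)$. Since $C_{k+1} < C_k$ implies that the defining family $\{\hat C > C_{k+1}\}$ is larger than $\{\hat C > C_k\}$, the relations are nested decreasing, $J_{(k+1)S}\subseteq J_{kS}$, and each $J_{kS}$ is closed because every Seifert relation is closed.

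First I would intersect everything with the compact set $K\times K$, setting $A_k = J_{kS}\cap(K\times K)$. Each $A_k$ is a closed subset of the compact space $K\times K$, hence compact; the $A_k$ form a decreasing sequence; and by Theorem \ref{sqd} their intersection is $\cap_k A_k = J_S\cap(K\times K)\subset V$. The whole point of working with the $J_{kS}$ rather than the plain relations $J_k$ is precisely that these sets are already closed, so this compactness is available without extra work.

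The key step is then a finite-intersection argument. Consider $B_k = A_k\setminus V$. Since $V$ is open in $K\times K$, each $B_k$ is a compact subset of $K\times K$, the $B_k$ are nested decreasing, and $\cap_k B_k = (\cap_k A_k)\setminus V = \emptyset$. Were every $B_k$ nonempty, they would be a nested family of nonempty compact sets with nonempty intersection, a contradiction; hence $B_{k_0}=\emptyset$ for some $k_0$, i.e.\ $J_{k_0 S}\cap(K\times K)\subset V$. Taking $C' = C_{k_0}$, which is locally Lipschitz proper and satisfies $C'>C$, yields $J'_S\cap(K\times K)\subset V$, as required. The parenthetical claim follows because any closed cone structure $C''$ with $C''\le C'$ has $J''_S\subseteq J'_S$: a narrower cone enlarges the family $\{\hat C > C''\}$ in the defining intersection, hence shrinks the Seifert relation.

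The genuine difficulty has in fact already been absorbed into Theorem \ref{sqd}, whose construction of a countable family of locally Lipschitz proper cone structures with $\cap_k J_{kS} = J_S$ rests on hereditary Lindel\"ofness of $M\times M$ together with Theorems \ref{ddo} and \ref{dxp}. Granting that result, the only subtlety to watch here is to phrase the compactness step in terms of the closed relations $J_{kS}$ and not the bare $J_k$; with that in place the argument is a routine nested-compact-sets deduction and I expect no further obstacle.
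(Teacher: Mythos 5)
Your proof is correct and takes essentially the same route as the paper: both apply Theorem \ref{sqd} to obtain the countable nested family $\{C_k\}$ of locally Lipschitz proper cone structures with $J_S=\cap_k J_{kS}$, and then conclude by compactness of $K\times K$. The paper extracts a finite subcover of the compact set $K\times K\setminus V$ from the open sets $K\times K\setminus J_{kS}$ and takes the largest index, which is precisely the dual phrasing of your nested-compact-sets (finite intersection property) argument.
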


\begin{proof}
Let $C_k$ be the sequence constructed in Th.\ \ref{sqd}. Since $J_S=\cap_k J_{kS}$ the open sets $\{K\times K \backslash J_{kS}\}$ (in the topology of $K\times K$) cover the compact set $K\times K\backslash V$, thus passing to a finite covering and taking the largest index value $i$ of the covering we get $J_{iS}\cap K\times K\subset V$.
\end{proof}

The next result is a simple consequence of the limit curve theorem \ref{main}, of the non-imprisoning property of the neighborhood constructed in Prop.\ \ref{iiu}, and of Prop.\ \ref{sqd}.

\begin{theorem} \label{xxl}
Let $(M,C)$ be a closed cone structure. Every point has a neighborhood $U$ such that $J_S(U)=J(U)$.
\end{theorem}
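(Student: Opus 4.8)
The plan is to prove the nontrivial inclusion $J_S(U)\subseteq J(U)$, since the reverse inclusion $J(U)\subseteq J_S(U)$ is automatic from the fact that the Seifert relation contains the causal relation. For the neighborhood I would take exactly the one furnished by Prop.\ \ref{iiu}: a relatively compact coordinate neighborhood $U\ni r$ carrying a flat Minkowski metric $g$ with $C\vert_{\bar U}<C^g$, realized (as in the proof of Prop.\ \ref{iiu}) as a $g$-chronological diamond $U=I_g^+(a,\tilde U)\cap I_g^-(b,\tilde U)$ with $\bar U\subset\tilde U$. This shape makes $U$ a $g$-causally convex subset of $\tilde U$, and it guarantees that every continuous $C$-causal curve in the compact set $\bar U$ has $h$-arc length bounded by some $\delta_h>0$, where $h$ is a fixed complete Riemannian metric.

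Given $(p,q)\in J_S(U)$ with $p\neq q$ (the case $p=q$ being trivial, as $\Delta\subset J(U)$), I would invoke Th.\ \ref{sqd} to produce a nested sequence of locally Lipschitz proper cone structures $C_k$ with $C<C_{k+1}<C_k$ and $C=\cap_k C_k$. Restricting to $U$, each $C_k$ is a $C^0$ proper cone structure wider than $C$, so the very definition of $J_S$ as the intersection $\cap_{C'>C}J_{C'}$ gives $(p,q)\in J_{C_k}(U)$ for every $k$. Hence there exist $h$-arc length parametrized continuous $C_k$-causal curves $x_k$ from $p$ to $q$ entirely contained in $U$.

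Now I would feed the sequence $x_k$ into the limit curve theorem \ref{main}, with converging endpoints $p_k=p\to p$ and $q_k=q\to q$; since $p\neq q$ the curves do not contract to a point. The only real obstacle, and the heart of the argument, is to rule out alternative (ii) of that theorem: any future- or past-inextendible limit curve would, being a uniform-on-compacta limit of curves lying in $U$, be contained in the compact set $\bar U$, yet as an inextendible $C$-causal curve parametrized by $h$-arc length it would have infinite length (Cor.\ \ref{dox}), contradicting the bound $\delta_h$ of Prop.\ \ref{iiu}. Thus only alternative (i) survives, yielding a continuous $C$-causal curve $x$ from $p$ to $q$ contained in $\bar U\subset\tilde U$. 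Because $C<C^g$ on $\tilde U$, the curve $x$ is $g$-causal, and since its endpoints lie in the diamond $U$, the $g$-causal convexity of $U$ (via $I_g\circ J_g\subset I_g$ for the regular metric $g$) forces $x\subset U$. Therefore $(p,q)\in J(U)$, which establishes $J_S(U)\subseteq J(U)$ and hence the desired equality.
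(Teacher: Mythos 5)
Your proof is correct and follows essentially the same route as the paper's: the neighborhood of Prop.\ \ref{iiu}, the sequence $C_k$ of Th.\ \ref{sqd} restricted to $U$, and the limit curve theorem \ref{main} with alternative (ii) excluded by the local non-imprisoning property (the $\delta_h$ bound combined with Cor.\ \ref{dox}). The only differences are cosmetic: the paper additionally chooses the $C_k$ with $C_k<C^g$, which your argument shows is unnecessary since the limit curve is $C$-causal in any case, and you spell out the final containment of the limit curve in $U$ via the $g$-causal convexity of the diamond, a step the paper leaves implicit (cf.\ the proof of Th.\ \ref{dao}).
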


The neighborhood mentioned coincides with that  constructed  in Prop.\ \ref{iiu}.
\begin{proof}
Let $U$ be a neighborhood of the point in question constructed as in Prop.\ \ref{iiu} and let $C^g$ be the cone structure of the flat metric mentioned in that result.
Let $(p,q)\in J_S(U)$ then $(p,q)\in J_{k}(U)$ for every $k$, where $C_k$ is the sequence found in Th.\ \ref{sqd}, chosen so that $C_k<C^g$. Let $\sigma_k$ be a continuous $C_k$-causal curve connecting $p$ and $q$ contained in $U$. By the limit curve theorem \ref{main} and the non-imprisoning property of the neighborhood constructed in Prop.\ \ref{iiu}, there is a connecting curve $\sigma$ which is a continuous $C$-causal curve.
\end{proof}

\begin{definition}
Let $(M,C)$ be a closed cone structure. It is  {\em stably causal} if there is a $C^0$ proper cone structure $C'$ with $C<C'$ which is causal.
The {\em stable recurrent set} is the set of all those $p\in M$ such that  for  every $C^0$ proper cone structure $C'>C$ there is a closed continuous $C'$-causal curve passing through $p$.
\end{definition}

By Th.\ \ref{ddo} under stable causality $C'$ can be chosen locally Lipschitz.


We recall that $(M,C)$ is strongly causal if every point admits  arbitrarily small causally convex neighborhoods.

\begin{theorem} \label{mwh}
Any stably causal  closed cone structure $(M,C)$ is strongly causal.
\end{theorem}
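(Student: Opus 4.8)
The plan is to argue by contradiction: assume $(M,C)$ is stably causal but that strong causality fails at some point $p$, and then manufacture a closed causal curve for a cone structure that is supposed to be causal.

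First I would fix the auxiliary cone structures. By the definition of stable causality there is a $C^0$ proper causal cone structure with wider cones, which by Theorem \ref{ddo} may be taken locally Lipschitz; call it $C'$, and by the same theorem fix a further proper cone structure $C''$ with $C<C''<C'$. Since $C''\subset C'$, every $C''$-causal (in particular $C''$-timelike) curve is $C'$-causal, so $C''$ is causal; by the same token $C$ is causal. I would then invoke Remark \ref{nff} to obtain at $p$ a neighborhood basis $\{U_k\}$ of nested diamonds with $\bar U_1$ compact (as in Prop.\ \ref{iiu}) and each $U_k$ being $C$-causally convex in $U_1$.

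Next comes the extraction of the offending curves. Failure of strong causality at $p$ means there is a neighborhood $W\ni p$ no subneighborhood of which is causally convex; shrinking if necessary, assume $\bar U_1\subset W$. Then for each $k$ the set $U_k$ is not causally convex in $M$, so there is a continuous $C$-causal curve $\gamma_k$ with endpoints $a_k,b_k\in U_k$ that is not contained in $U_k$; since $U_k$ is $C$-causally convex in $U_1$, the curve $\gamma_k$ must leave $U_1$. As $\{U_k\}$ is a neighborhood basis, $a_k,b_k\to p$, and because $\gamma_k$ exits the \emph{fixed} set $U_1$ its $h$-arc length is at least $d_h(a_k,\partial U_1)$, hence bounded below by a positive constant for large $k$; in particular the $\gamma_k$ do not contract to a point.

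Finally I would apply the limit curve theorem \ref{main} with the constant sequence $C_n=C$ and $p_k=a_k\to p$, $q_k=b_k\to p$ (so $p=q$). In case (i) the limit is a continuous $C$-causal curve $x\colon[0,a]\to M$ with $x(0)=x(a)=p$ and $a>0$ by the length lower bound, i.e.\ a closed $C$-causal curve, contradicting the causality of $C$. The main obstacle is case (ii): there one only gets a future-inextendible $C$-causal ray $x^p$ from $p$ and a past-inextendible $C$-causal ray $x^q$ ending at $p$, together with the information $(p',q')\in\bar J$ for all $p'\in x^p$, $q'\in x^q$. I would exploit this as follows. Choose $p'=x^p(t)\ne p$ for some $t>0$ (if $x^p$ ever returned to $p$ we would already have a closed causal curve) and take $q'=p\in x^q$; the moreover-clause gives $(p',p)\in\bar J$, while $(p,p')\in J\subset\bar J$ holds because $x^p$ is $C$-causal from $p$. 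By Theorem \ref{dxp}, $\bar J\subset I_{C''}\cup\Delta$, so both relations upgrade to $p\ll_{C''}p'$ and $p'\ll_{C''}p$; transitivity of the chronological relation $I_{C''}$ of the proper cone structure $C''$ then yields a closed $C''$-timelike curve, contradicting the causality of $C''$. Since both cases are impossible, strong causality must hold at every point.
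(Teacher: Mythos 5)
Your proof is correct, but it reaches the contradiction by a heavier route than the paper. Both arguments share the same pivot: produce two \emph{distinct} points related by $\bar J$ in both directions, then invoke Theorem \ref{dxp} ($\bar J_C\subset I_{C'}\cup\Delta$ for any $C^0$ proper $C'>C$) to manufacture closed $C'$-timelike curves and contradict stable causality (your intermediate cone $C''$ is harmless but unnecessary --- $C'$ itself qualifies in \ref{dxp}). The difference lies in how that $\bar J$-cycle is extracted. The paper avoids the limit curve theorem entirely: it takes a coordinate ball $B$ with $\bar B\subset U$, looks at the \emph{first} points $c_n\in\p B$ where the curves $\sigma_n$ escape $\bar B$, and uses compactness of $\p B$ to get $c_k\to c\in\p B$; then $(x,c)\in\bar J$ and $(c,x)\in\bar J$ follow from the mere definition of closure, and $c\ne x$ is automatic because $c$ lies on the sphere. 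You instead run the full limit curve theorem \ref{main}, which forces you to (correctly) verify non-contraction since $p=q$, to split into the dichotomy (i)/(ii), to argue in case (i) that $C$ itself is causal, and in case (ii) to prove $p'\ne p$ by a separate causality argument before \ref{dxp} can bite. What your version buys is a more explicit treatment of the setup: you justify via Remark \ref{nff} why the offending curves must leave the \emph{fixed} neighborhood $U_1$ (the paper asserts this), and case (i) gives a closed causal curve outright. What the paper's version buys is brevity and economy of tools --- one compactness extraction on $\p B$ replaces the entire case analysis, and distinctness of the two points comes for free.
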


\begin{proof}
If $(M,C)$ is not strongly causal at $x$ then  there is  a non-imprisoning neighborhood $U \ni x $ as in Prop.\ \ref{iiu} and a sequence of
continuous $C$-causal curves $\sigma_n$ of endpoints $x_n,z_n$, with $x_n\to
x$, $z_n \to x$, not entirely contained in $U$. Let $B$, $\bar B\subset U$ be a coordinate ball of $x$.
 Let $c_n\in \p {B}$ be the first
point at which $\sigma_n$ escapes $\bar B$. Since $\p {B}$ is compact
there is $c\in \p B$, and  a subsequence $\sigma_k$ such that
$c_k \to c$, thus $(x,c)\in \bar J$ and $(c,x)\in \bar J$.
By Th.\ \ref{dxp} for every $C^0$ proper cone structure  $C'$ with $C<C'$, we have $\overline{J_{C}}\subset I_{C'}\cup \Delta$, thus $(x,c)\in J_{C'}$ and  $(c,x)\in J_{C'}$, that is, $(M,C)$ is not stably causal.
\end{proof}

\begin{definition}
The relation $K$ is the smallest closed preorder containing $J$. We say that $K$-causality holds if $K$ is antisymmetric.
\end{definition}
 The next result is of central importance for causality theory. In the regular case it has been the focus of several investigations \cite{hawking68,seifert71,bernal04,minguzzi07b,minguzzi08b,minguzzi09c,chrusciel13}. Some of the proofs remain unaltered while others require substantial modifications. Under weak differentiability assumption the equivalence between ($i$) and ($vi$) has been previously obtained by Fathi and Siconolfi \cite{fathi12,fathi15} in the $C^0$ case, and by Bernard and Suhr \cite{bernard16} in the upper semi-continuous case (their result  contains also interesting information for the non stably causal case). Our proof of this equivalence is different and based on volume functions.



\begin{theorem} \label{nio}
Let $(M,C)$ be a closed cone structure. The following properties are equivalent:
\begin{itemize}
\item[(i)] Stable causality,
\item[(ii)] Antisymmetry of $J_S$,
\item[(iii)] Antisymmetry of $K$ ($K$-causality),
\item[(iv)]  Emptyness of the stable recurrent set,
\item[(v)]  Existence of a time function,
\item[(vi)] Existence of a smooth temporal function,
\end{itemize}
 Moreover, in this case $J_S=K=T_1=T_2$ where
 \begin{align*}
 T_1&=\{(p,q)\colon t(p)\le t(q)   \textrm{ for every time function } t \}, \\
 T_2&=\{(p,q)\colon t(p)\le t(q)  \textrm{ for every smooth temporal function } t \}.
 \end{align*}
\end{theorem}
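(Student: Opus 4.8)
The plan is to run a cycle of \emph{soft} implications among the six conditions and to isolate the two genuinely constructive steps, which I would defer to Sections \ref{fir}--\ref{las} exactly as the volume-function machinery is organized there. For (i)$\Rightarrow$(ii): stable causality gives a causal $C^0$ proper $C'>C$, so $J_{C'}$ is antisymmetric (otherwise two opposite $C'$-causal curves concatenate into a closed one), and since $J_S=\cap_{\tilde C>C}J_{\tilde C}\subset J_{C'}$, antisymmetry of $J_S$ follows. For (ii)$\Rightarrow$(iii): $J_S$ is a closed preorder containing $J$, so $K\subset J_S$ and antisymmetry descends. For (ii)$\Leftrightarrow$(iv): if $p$ is stably recurrent, take the sequence $C_k$ with $C_k\downarrow C$ of Th.\ \ref{sqd} and closed $C_k$-causal loops through $p$; by Prop.\ \ref{iiu} and Th.\ \ref{xxl} these loops cannot collapse into the flat neighbourhood of $p$ (it has no closed $C^g$-causal curves), so points $q_k$ at fixed $h$-arclength $\delta>0$ from $p$ stay away from $p$, and the limit curve theorem \ref{main}, together with $J_S=\cap_k\bar J_{C_k}$ (Prop.\ \ref{paq}, Th.\ \ref{sqd}), produces $q\neq p$ with $(p,q),(q,p)\in J_S$; the converse is immediate since such a pair places $p$ on a closed $C'$-causal curve for every $C'>C$. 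Finally (vi)$\Rightarrow$(v): if $\dd\tau>0$ on $C$ then $\tau\circ\gamma$ is absolutely continuous with a.e.\ positive derivative $\dd\tau(\dot\gamma)$ along any continuous causal curve $\gamma$, hence strictly increasing.

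Two constructive implications close the cycle and are where I expect all the difficulty to lie. First, (iii)$\Rightarrow$(i): emptiness of the stable recurrent set should allow the cones to be opened slightly and uniformly to a causal $C'>C$; this is the $K$-causality $=$ stable causality theorem, reproved here through volume functions. Second, (i)$\Rightarrow$(vi): from stable causality I would construct a Hawking-type averaged volume function on $M^\times=M\times\mathbb R$, extract a continuous anti-Lipschitz time function by the product trick, and smooth it to a temporal function. The last entry is brought in by (v)$\Rightarrow$(iii): a continuous time function cannot coexist with a nontrivial $K$-cycle, since such a cycle forces near-closed causal loops along which the function must increase by a fixed amount yet return arbitrarily close to its start. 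I regard these three steps---and especially the anti-Lipschitz construction and its smoothing---as the crux; everything else is formal.

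For the equalities $J_S=K=T_1=T_2$, note first that always $K\subset J_S$. A smooth temporal $t$ has $\dd t>0$ on the compact sharp cone $C$, hence on some $C'>C$ (Prop.\ \ref{ohg}), so $t$ is isotone for $J_{C'}\supset J_S$ and $J_S\subset R_t$ with $R_t:=\{(p,q):t(p)\le t(q)\}$; intersecting over all such $t$ gives $J_S\subset T_2$. Conversely, given $(p,q)\notin K$ a separating time function exists (by $K=T_1$ below) and, by the smoothing of Section \ref{las}, can be approximated uniformly by a temporal one still separating $p$ and $q$, so $(p,q)\notin T_2$; hence $T_2\subset K$ and $K=J_S=T_2$. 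It remains to see $K=T_1$: by the Nachbin/Auslander--Levin characterization of closed preorders, $K=\cap_f R_f$ over continuous isotone $f$, so $K\subset T_1$ since time functions are isotone; and given $(p,q)\notin K$ an isotone $f$ with $f(q)<f(p)$ yields the time function $f+\epsilon\tau$, with $\tau$ the temporal function of (vi) and $\epsilon$ small, still separating $p$ and $q$, whence $(p,q)\notin T_1$.
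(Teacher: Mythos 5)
Your overall architecture is sound, and several pieces are correct and close to the paper's: (i)$\Rightarrow$(ii); (ii)$\Rightarrow$(iii) by minimality of $K$ inside the closed preorder $J_S$ (a valid shortcut the paper does not even need); (ii)$\Leftrightarrow$(iv) by the escaping-point/limit-curve argument of Th.~\ref{ssp}; the identification of (i)$\Rightarrow$(vi) with the product trick, the anti-Lipschitz construction and its smoothing; the inclusions $K\subset J_S\subset T_2$; and your perturbation $f+\epsilon\tau$ for $T_1\subset K$ is a nice variant of the paper's route (the paper instead shows that under $K$-causality the continuous $K$-utilities are exactly the time functions, Th.~\ref{ndc}, and then applies the multi-utility representation). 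However, there are genuine gaps, and they all trace back to one missing ingredient.

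First, your sketch of (v)$\Rightarrow$(iii) --- ``a nontrivial $K$-cycle forces near-closed causal loops'' --- is not valid: $K$ is obtained by (transfinitely) alternating topological closure and transitivization of $J$, so a pair in $K$ need not be realized, even approximately, by causal curves; this is precisely what makes $K$ delicate. The paper's proof (Lemma~\ref{jba}(b)) is a minimality argument: the relation $R=\{(p,q)\in K\colon p=q \textrm{ or } t(p)<t(q)\}$ is shown to be closed, transitive and to contain $J$, hence to equal $K$; even the closedness of $R$ requires the escaping-point Lemma~\ref{pkw} and non-imprisonment (Lemma~\ref{pag}), so this arrow is not soft. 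Second, your inclusion $T_2\subset K$ rests on the claim that a separating time function ``can be approximated uniformly by a temporal one still separating $p$ and $q$''; no such smoothing exists in the paper's toolkit --- Th.~\ref{moz} smooths (stably) anti-Lipschitz functions, not arbitrary time functions --- and the ability to smooth an arbitrary time function while preserving a strict inequality is essentially equivalent to the theorem being proved. The paper instead constructs, for $(p,q)\notin J_S$, a separating anti-Lipschitz function from scratch by displacing the measure inside the product construction (Th.~\ref{mih}(a)) and only then smooths it (Th.~\ref{vkg}), which yields $T_2\subset J_S$ rather than $T_2\subset K$. Relatedly, your deferred step (iii)$\Rightarrow$(i) is mischaracterized: what you describe (opening the cones starting from emptiness of the stable recurrent set) is (iv)$\Rightarrow$(i), i.e.\ the cone-patching of Th.~\ref{nkk}; the actual content of (iii)$\Rightarrow$(i) in the paper is that Auslander--Levin produces a continuous $K$-utility, which under $K$-causality is a time function (Th.~\ref{ndc}), and then Th.~\ref{aob} --- every time function is automatically strictly increasing on $J_S\setminus\Delta$, proved via Lemma~\ref{lmn}, which shows a time function remains increasing for slightly wider cones once one stays a fixed distance from the diagonal --- forces stable causality. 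This last result is absent from your proposal and is the linchpin: without it neither (iii)$\Rightarrow$(i) nor the route from (v) back to (i) closes, and the identity $J_S=T_1$ cannot be obtained.
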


The proof is given in Sec.\ \ref{fir}-\ref{las}.

Without the assumption of stable causality we might have $K\neq J_S$, a causal example is given in \cite[Ex.\ 5.2]{minguzzi07}.

\subsection{Reflectivity and distinction}
This section is devoted to the study of reflectivity and distinction, which taken jointly will provide the notion of causal continuity \cite{hawking74}. The next relational approach to  reflectivity was introduced in \cite{minguzzi07b}.

\begin{definition}
A closed cone structure $(M,C)$ is {\em  reflective} if  the  relations $D_f=\{(p,q)\colon q\in \overline{J^+(p)}\}$ and $D_p=\{(p,q)\colon p\in \overline{J^-(q)} \}$ coincide with $\bar J$.
\end{definition}


\begin{proposition} \label{diu}
For a proper cone structure reflectivity is equivalent to $D_f=D_p$, namely  $q\in \overline{J^+(p)} \Leftrightarrow p\in \overline{J^{-}(q)}$.
\end{proposition}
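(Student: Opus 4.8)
The plan is to reduce reflectivity to the single nontrivial inclusion $\bar J\subset D_f$. One direction is immediate: if $(M,C)$ is reflective then by definition $D_f=\bar J=D_p$, so in particular $D_f=D_p$. For the converse I would first record the two inclusions that hold for \emph{every} proper cone structure. If $(p,q)\in D_f$, i.e.\ $q\in\overline{J^+(p)}$, pick $q_n\in J^+(p)$ with $q_n\to q$; then $(p,q_n)\in J$ converges to $(p,q)$, so $(p,q)\in\bar J$. Thus $D_f\subset\bar J$, and symmetrically $D_p\subset\bar J$. Consequently, once I establish $\bar J\subset D_f$, I get $D_f=\bar J$, and the hypothesis $D_f=D_p$ then forces $D_p=\bar J$ as well, which is exactly reflectivity.

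The heart of the argument is the inclusion $\bar J\subset D_f$, and this is where the hypothesis $D_f=D_p$ must be used in an essential way. Let $(p,q)\in\bar J$ and choose $(p_n,q_n)\in J$ with $p_n\to p$ and $q_n\to q$. The naive attempt, namely to deduce $q\in\overline{J^+(p)}$ directly from $q_n\in J^+(p_n)$ and $p_n\to p$, fails because $J^+$ is not inner continuous (it need not be lower semi-continuous in its base point); this is precisely the main obstacle. To bypass it I would shift the base point strictly into the past: by the (*) condition satisfied by every proper cone structure (cf.\ Th.\ \ref{mmz}, which produces past-directed timelike curves ending at $p$) there exist points $p''\ll p$ arbitrarily close to $p$. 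Fix any such $p''$. Since $I$ is open, $I^+(p'')$ is an open set containing $p$, so $p_n\in I^+(p'')$ for all large $n$, i.e.\ $p''\ll p_n$. Transitivity of $J$ together with $I\subset J$ then gives $(p'',q_n)\in J$, and passing to the limit $n\to\infty$ (using that $q_n\to q$) yields $q\in\overline{J^+(p'')}$, that is $(p'',q)\in D_f$.

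Now the hypothesis enters decisively. Since $D_f=D_p$, the relation $(p'',q)\in D_f$ means $p''\in\overline{J^-(q)}$. This holds for \emph{every} $p''\ll p$, and here lies the gain of having moved to the past set: $\overline{J^-(q)}$ is closed, so letting $p''\to p$ along a sequence of chronological-past points (which exist arbitrarily close to $p$ by the (*) condition) gives $p\in\overline{J^-(q)}$, i.e.\ $(p,q)\in D_p=D_f$, which is exactly $q\in\overline{J^+(p)}$. This establishes $\bar J\subset D_f$ and completes the argument.

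In summary, the only genuine difficulty is the failure of inner continuity of $J^+$ in the base point; everything else uses only openness of $I$, transitivity of $J$, and the (*) condition, all of which are available for proper cone structures. The trick is to replace the fixed base point $p$ by nearby chronological predecessors $p''$, transfer the resulting statement through the identity $D_f=D_p$ into a statement about the \emph{closed} set $\overline{J^-(q)}$, and only then take the limit $p''\to p$, where closedness makes the limit legitimate.
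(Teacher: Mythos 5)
Your proof is correct and is essentially the paper's argument in time-dual form: the paper perturbs $q$ to a chronologically future point $q'\gg q$, uses openness of $I$ plus transitivity to get $p\in\overline{J^-(q')}$, applies the hypothesis, and lets $q'\to q$ using closedness of $\overline{J^+(p)}$, whereas you perturb $p$ to $p''\ll p$ and let $p''\to p$ using closedness of $\overline{J^-(q)}$. The only cosmetic difference is that your single chain delivers both inclusions $\bar J\subset D_p$ and $\bar J\subset D_f$ at once, while the paper proves one and appeals to symmetry for the other.
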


\begin{proof}
We prove the inclusion $\bar J \subset D_f$ assuming $q\in \overline{J^+(p)} \Leftarrow p\in \overline{J^{-}(q)}$ the other steps in the proof being similar or trivial. Let $(p_k,q_k)\to (p,q)$ and pick  a point $q' \in  I^+(q)$. Then for sufficiently large $k$, $q_k\in  I^{-}(q')$ so that $p_k\in J^{-}(q')$ and $p\in \overline{J^{-}(q')}$ which by the assumption gives $q'\in \overline{J^+(p)}$ and passing to the limit $q'\to q$, $q\in \overline{J^+(p)}$, that is $\bar J\subset D_f$.
\end{proof}

In Lorentzian geometry reflectivity guarantees the continuity of volume functions of the form $p \mapsto \mp \mu(I^\pm(p))$, where $\mu$ is a measure  absolutely continuous with respect to the local coordinate Lebesgue measures. This result does not seem to hold, not even under global hyperbolicity, unless one assumes local Lipschitzness of the cone structure, cf.\ the proof of Th.\ \ref{ger}. Nevertheless, the given notion of reflectivity and the related notion of causal continuity will be important and well behaved, as we shall see (Th.\ \ref{cau}).
One reason for paying attention to this concept lies in the possibility of generalizing  Clarke and Joshi's theorem \cite{clarke83}.

\begin{theorem}
Let $(M,C)$ be a proper cone structure. Let $\phi_t\colon M\to M$ be a complete $C^1$ flow, $t\in \mathbb{R}$, which preserves the cone structure, i.e.\ $(\phi_t)_*C=C$, and whose orbits are timelike curves,  $t \mapsto \phi_t(x)$. Then $(M,C)$ is reflective.
\end{theorem}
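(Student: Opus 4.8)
The plan is to reduce the statement to the symmetric condition $D_f=D_p$ via Prop.\ \ref{diu}, and then to establish the equivalence $q\in\overline{J^+(p)} \Leftrightarrow p\in\overline{J^-(q)}$ by exploiting the flow as a causal automorphism. First I would record two elementary facts. Since $(\phi_t)_*C=C$ and $\phi_t$ is $C^1$ (hence locally Lipschitz), $\phi_t$ maps continuous causal curves to continuous causal curves: the composition of an absolutely continuous curve with $\phi_t$ is absolutely continuous and its velocity lands in $(\phi_t)_*C=C$ almost everywhere. Thus each $\phi_t$ is a causal automorphism, $(p,q)\in J \Leftrightarrow (\phi_t(p),\phi_t(q))\in J$, the converse inclusion coming from $\phi_{-t}=(\phi_t)^{-1}$. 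Second, since the orbits are timelike, for $t>0$ the orbit segment is a timelike curve from $x$ to $\phi_t(x)$, whence $\phi_t(x)\in I^+(x)$ and $\phi_{-t}(x)\in I^-(x)$.

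For the implication $q\in\overline{J^+(p)}\Rightarrow p\in\overline{J^-(q)}$, I would choose $q_n\to q$ with $p\le q_n$ and fix any $t>0$. Because the orbit is timelike, $\phi_{-t}(q)\in I^-(q)$; since $I$ is open for a proper cone structure and $\phi_{-t}(q_n)\to\phi_{-t}(q)$, for all large $n$ we have $\phi_{-t}(q_n)\in I^-(q)$, i.e.\ $\phi_{-t}(q_n)\le q$. Applying the causal automorphism $\phi_{-t}$ to $p\le q_n$ gives $\phi_{-t}(p)\le \phi_{-t}(q_n)$, and transitivity of $J$ then yields $\phi_{-t}(p)\le q$, that is $\phi_{-t}(p)\in J^-(q)$. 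Letting $t\to 0^+$ and using continuity of the flow, $\phi_{-t}(p)\to p$, so $p\in\overline{J^-(q)}$. The reverse implication is obtained by the time-dual argument (using $\phi_{+t}$ together with the openness of $I^+(p)$): from $p_n\to p$ with $p_n\le q$ one gets $p\le\phi_t(p_n)\le\phi_t(q)$ for large $n$, hence $\phi_t(q)\in J^+(p)$, and $t\to 0^+$ gives $q\in\overline{J^+(p)}$. The two implications together give $D_f=D_p$, hence reflectivity by Prop.\ \ref{diu}.

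The subtle point, and the step I expect to be the main obstacle, is that one must \emph{not} invoke the composition law $I\circ J\cup J\circ I\subset I$, which in general fails for proper cone structures (as stressed repeatedly in the text). The argument above is arranged precisely to avoid it: rather than concluding a chronological relation from a causal curve followed by a timelike one, I only ever chain two causal relations and appeal to transitivity of $J$, pushing the timelike and open behaviour entirely onto the flow direction through the openness of $I^\pm(q)$. Verifying that $\phi_t$ genuinely preserves the class of continuous (absolutely continuous) causal curves, and that the orbit segments qualify as timelike curves in the sense used to define $I$, are then the only points requiring care; both follow directly from $(\phi_t)_*C=C$, the $C^1$ (Lipschitz) regularity of the flow, and the hypothesis that the orbits are timelike.
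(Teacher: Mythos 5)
Your proof is correct and follows essentially the same route as the paper's: both establish $D_f=D_p$ (hence reflectivity via Prop.\ \ref{diu}) by using the flow as a causal automorphism, the timelike orbits together with the openness of $I$ to absorb the limit, and only transitivity of $J$ with $I\subset J$, deliberately avoiding the composition law $I\circ J\cup J\circ I\subset I$. The only difference is bookkeeping — the paper shifts $q$ forward to $q_k=\phi_{\epsilon_k}(q)$ and pulls back with $\phi_{-\epsilon_k}$, while you shift the approximating sequence backward by $\phi_{-t}$ — which does not change the substance of the argument.
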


\begin{proof}
We prove $q\in \overline{J^{+}(p)} \Rightarrow p\in \overline{J^{-}(q)}$, the other direction being similar.  We can find  sequence $q_k=\phi_{\epsilon_k}(q)$ with $\epsilon_k\to 0$, so that $q\in I^{-}(q_k)$, and by the openness of $I$, $q_k\in J^+(p)$. As a consequence, defining $p_k=\phi_{-\epsilon_k}(q)$ we have by traslational invariance $p_k\in J^-(q)$, and since $p_k\to p$, $p\in \overline{J^-(q)}$.
\end{proof}

Case (b) in the next result appear to be new. It will be crucial for the inclusion of causal continuity into the causal ladder of spacetimes under weak differentiability conditions.

\begin{proposition} \label{jjw}
If the closed cone structure $(M,C)$ is  (a) proper and locally Lipschitz or (b)  reflective, then the relations $D_f$ and $D_p$ are transitive.
\end{proposition}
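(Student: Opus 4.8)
The plan is to prove transitivity of $D_f$ in each of the two cases separately; transitivity of $D_p$ then follows by the time–dual argument. The only elementary facts I shall lean on throughout are that $J\subset D_f$ (if $p\le q$ then $q\in J^+(p)\subset\overline{J^+(p)}$), that $J\subset D_p$ dually, and that each set $\overline{J^+(p)}$ is closed. The delicate point, which dictates the two different strategies, is that the closure defining $D_f$ fixes the \emph{first} entry: a hypothesis $(p,q)\in D_f$ asserts $q\in\overline{J^+(p)}$, and this cannot be propagated backwards from $p$ along a causal curve without an additional structural input.

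For case (a) I would convert everything into the chronological relation. Theorem \ref{soa} gives, for a locally Lipschitz proper cone structure, the identity $\overline{J^+(p)}=\overline{I^+(p)}$ for every $p$, so $(p,q)\in D_f$ reads $q\in\overline{I^+(p)}$. The key lemma is that on any proper cone structure $q\in\overline{I^+(p)}$ forces $I^+(q)\subset I^+(p)$: for $q'\gg q$ the open set $I^-(q')$ meets $I^+(p)$, producing $s$ with $p\ll s\ll q'$, whence $p\ll q'$ by transitivity of $I$. Applying this lemma to the two hypotheses $(p,q),(q,r)\in D_f$ yields $I^+(r)\subset I^+(q)\subset I^+(p)$. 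Since a proper cone structure admits a timelike curve through each point (Th.\ \ref{mmz}), we have $r\in\overline{I^+(r)}$, and therefore $r\in\overline{I^+(r)}\subset\overline{I^+(p)}=\overline{J^+(p)}$, i.e.\ $(p,r)\in D_f$.

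For case (b) reflectivity supplies $D_f=D_p=\bar J$, and the missing structural input is exactly this coincidence. The crucial step I would isolate is the one–sided composition: \emph{if $(p,q)\in D_f$ and $q\le r$ then $(p,r)\in D_f$.} Here reflectivity enters decisively: from $(p,q)\in D_f=D_p$ we obtain $p\in\overline{J^-(q)}$, so there are $p_m\to p$ with $p_m\le q\le r$; transitivity of $J$ gives $p_m\le r$, hence $p\in\overline{J^-(r)}$, that is $(p,r)\in D_p=D_f$. Granting this, transitivity is immediate: given $(p,q),(q,r)\in D_f$, the hypothesis $r\in\overline{J^+(q)}$ provides $r_k\to r$ with $q\le r_k$; the composition applied to $(p,q)\in D_f$ and $q\le r_k$ gives $r_k\in\overline{J^+(p)}$ for every $k$, and since $\overline{J^+(p)}$ is closed we conclude $r\in\overline{J^+(p)}$, i.e.\ $(p,r)\in D_f$.

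The hardest part is conceptual rather than computational, and it lies in case (b): one must resist the temptation to argue via interior or strict relations. For a merely closed cone structure the interior causal relation $\mathring J^{\pm}(p)$ need not accumulate at $p$ (Example \ref{ekg} exhibits a degenerate point of this kind, and such structures may still be reflective), so any proof of case (b) relying on perturbing $p$ and $r$ into $\mathring J^{\mp}$ would be unsound. I would therefore emphasize that the argument above is purely relational — it uses only transitivity of $J$, closedness of $\overline{J^+(p)}$, and the reflectivity identity $D_f=D_p$ — and in particular requires neither properness nor any accumulation hypothesis, so that it is valid for every closed reflective cone structure.
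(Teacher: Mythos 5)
Your proof is correct and follows essentially the same route as the paper's: case (a) rests on the openness of $I$ together with the conclusions of Theorem \ref{soa} (you invoke the identity $\overline{J^+(p)}=\overline{I^+(p)}$ and transitivity of $I$ where the paper invokes the composition rule $I\circ J\cup J\circ I\subset I$, but these are interchangeable consequences of the same theorem), and case (b) uses reflectivity to convert $(p,q)\in D_f$ into $p\in\overline{J^{-}(q)}$, then concatenates with the approximating sequence for $(q,r)\in D_f$ and passes to the limit, exactly as in the paper. The only differences are organizational---your containment lemma $I^+(q)\subset I^+(p)$ and your one-sided composition step appear inlined in the paper's proof---so nothing further is needed.
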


\begin{proof}
Let us prove the transitivity of $D_f$ under (a). The proof is identical to the Lorentzian one, cf.\ \cite{dowker00} \cite[Th.\ 3.3]{minguzzi07b}. Let $q\in \overline{J^+(p)}$ and $r\in \overline{J^+(q)}$. Let $r'\in I^+(r)$ then $r\in I^{-}(r')$ and by the openness of $I$ and Prop.\ \ref{soa}, $q \in I^{-}(r')$ and again by the openness of $I$,  $p \in I^{-}(r')$, that is $r'\in I^+(p)$ and taking the limit $r'\to r$, $r\in \overline{J^+(p)}$.

Let us prove the transitivity of $D_f$ under (b). Let $q\in \overline{J^+(p)}$ and $r\in \overline{J^+(q)}$. By  reflectivity $p\in \overline{J^-(q)}$, thus there is a sequence of continuous causal curves $\sigma_k$ with endpoints $p_k\to p$ and $r_k\to r$ passing through $q$, hence $(p,r)\in \bar J$, which using again  reflectivity gives $r\in \overline{J^+(p)}$.
\end{proof}

\begin{definition}
Let $(M,C)$ be a closed cone structure. We say that {\em distinction} holds, or $(M,C)$ is {\em distinguishing} if the next property holds true.
 Every point  admits arbitrarily small distinguishing open neighborhoods. Namely for every $p\in M$ and open set $U\ni p$, there is an open set $V$, $p\in V\subset U$, which distinguishes $p$, i.e.\ every continuous causal curve $x\colon I\to M$ passing through $p$ intersects $V$ in a connected subset of $I$.
\end{definition}
The next result under case (a) was observed in \cite{minguzzi07b} while case (b) is new and will prove important for the validity of the causal ladder.
\begin{proposition} \label{aah}
Let $(M,C)$ be a closed cone structure. The property `$D_f$ and $D_p$ are antisymmetric' implies distinction. The converse is true under any of the following assumptions: (a) $C$ is proper and locally Lipschitz; (b)  reflectivity.
\end{proposition}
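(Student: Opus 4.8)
The plan is to prove both implications by contraposition, in each case manufacturing from a failure of antisymmetry (or of distinction) a continuous causal curve through $p$ or $q$ that leaves and re-enters an arbitrarily small neighborhood, thereby witnessing non-distinction or a two-cycle of $D_f$. Throughout I would use the nested, relatively compact, $C$-causally convex (in $U_1$) non-imprisoning neighborhoods of Remark \ref{nff} and the two flavours of limit curve theorem (\ref{main} and the compact version \ref{son}).

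\textbf{First implication} ($D_f,D_p$ antisymmetric $\Rightarrow$ distinction). Assuming distinction fails at $p$, I would fix a basis $U_k\downarrow\{p\}$, $\overline U_{k+1}\subset U_k$, as above, and observe that for each $k$ some causal curve through $p$ meets $U_k$ in a disconnected parameter set; after passing to a subsequence one may assume the excursion is to the future, giving a causal curve $\gamma_k$ from $p$ to $q_k\in U_k$ (so $q_k\to p$) that leaves $U_k$ in between. By $C$-causal convexity of $U_k$ in $U_1$ the curve must exit $U_1$, hence cross $\partial U_1$; I would take its \emph{last} crossing $w_k'\in\partial U_1$ before the final approach to $q_k$. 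Since $w_k'$ lies on a causal curve issued from $p$ one has $w_k'\in J^+(p)$, and by compactness $w_k'\to w'\in\partial U_1$, so $w'\in\overline{J^+(p)}$ with $w'\neq p$, i.e.\ $(p,w')\in D_f$. The decisive point is that the final segment $w_k'\to q_k$ is confined to the compact $\overline U_1$ with bounded $h$-length (Prop \ref{iiu}), so Theorem \ref{son} yields an \emph{actual} causal curve from $w'$ to $p$, whence $(w',p)\in D_f$, contradicting antisymmetry of $D_f$. A past excursion contradicts antisymmetry of $D_p$ time-dually.

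\textbf{Converse} (distinction $\Rightarrow$ $D_f,D_p$ antisymmetric, under (a) or (b)). It suffices to treat $D_f$, the $D_p$ statement being time-dual, so I assume $p\neq q$ with $q\in\overline{J^+(p)}$ and $p\in\overline{J^+(q)}$ and seek a failure of distinction. Under (a), Theorem \ref{soa} gives $\overline{J^+(\cdot)}=\overline{I^+(\cdot)}$ and $I$ is open; choosing $q_k\in I^+(p)$, $q_k\to q$, and $p_j\in I^+(q)$, $p_j\to p$, openness of $I^-(p_j)$ forces $q_k\ll p_j$ for large $k$, so $p\ll q_k\ll p_j$ gives a timelike curve $\lambda_j$ from $p$ through $q_k$ to $p_j$. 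For any small $V\ni p$ with $q\notin\overline V$ I would pick $k$ large (so $q_k\notin\overline V$) and $j$ large (so $p_j\in V$): then $\lambda_j$ passes through $p$, leaves $V$ near $q_k$, and returns, meeting $V$ disconnectedly, so no small $V$ distinguishes $p$. Under (b) I would use that reflectivity gives $D_f=D_p=\bar J$ and, by Proposition \ref{jjw}(b), transitivity of $D_f$. From $p\in\overline{J^+(q)}$ take causal curves from the fixed point $q$ to points tending to $p$ and apply Theorem \ref{main}: if the limit is a genuine causal curve, set $z:=p\in J^+(q)$ (with $q\in\overline{J^+(z)}$ by hypothesis); if instead it splits into a future-inextendible ray $y^+$ from $q$ and a past-inextendible ray $y^-$ ending at $p$, pick $z\in y^+\setminus\{q\}$, so $z\in J^+(q)$ by an actual segment while the last clause of \ref{main} gives $(z,p)\in\bar J$, i.e.\ $p\in\overline{J^+(z)}$, and transitivity of $D_f$ together with $q\in\overline{J^+(p)}$ yields $q\in\overline{J^+(z)}$. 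In either case $z\in J^+(q)$ via an actual curve $\gamma$, and $q\in\overline{J^+(z)}$ gives actual curves $\delta_m$ from $z$ to $q_m\to q$; concatenating $\gamma$ with $\delta_m$ produces, for small $V\ni q$ with $z\notin\overline V$ and $m$ large, a causal curve through $q$ meeting $V$ disconnectedly, contradicting distinction.

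\textbf{Main obstacle.} The delicate point, both in the first implication and in case (b), is the non-compact alternative of the limit curve theorem, where the approximating curves escape to infinity so that the ``return'' is recovered only as the closure relation $\bar J$ rather than as an honest causal curve based at the relevant point. In the first implication this is sidestepped by extracting the \emph{last} re-entry into the compact diamond $\overline U_1$, which confines the final segment and restores a genuine limit curve through Theorem \ref{son}; in case (b) it is precisely the transitivity of $D_f$ supplied by reflectivity (Proposition \ref{jjw}) that promotes the $\bar J$-relation back to the fixed-base closure $\overline{J^+(z)}$ needed to close the causal loop, which is exactly why the bare closed-cone hypothesis is insufficient for the converse and the assumptions (a) or (b) must be invoked.
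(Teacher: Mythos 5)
Your proof is correct, and its overall architecture is the paper's: contraposition in both directions, limit-curve arguments for the first implication, Theorem \ref{soa} for case (a), and the transitivity of $D_f$ from Proposition \ref{jjw} for case (b). The differences are tactical but worth recording. In the first implication the paper applies Theorem \ref{main} directly to the whole curves $\sigma_k$ from $p$ to $q_k\to p$, obtains a (possibly closed) limit curve $\sigma^p$ ending at $p$, and reads the two-cycle off any $r\in\sigma^p\setminus\{p\}$: such $r$ is an accumulation point of the $\sigma_k\subset J^+(p)$, so $(p,r)\in D_f$, while $\sigma^p$ itself gives $p\in J^+(r)$; your extraction of the last boundary crossings $w_k'\in\p U_1$ plus the compact limit-curve theorem on the confined final segments achieves the same thing, at the cost of a small loose end (Theorem \ref{son} as stated wants a common parameter domain $[0,L]$, so you must either pass to a subsequence with converging lengths or, as the paper does in Theorem \ref{dao}, invoke Theorem \ref{main} and kill the inextendible alternative by the bounded $h$-length of Prop.\ \ref{iiu}). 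In case (a) you violate distinction at $p$ using $\overline{J^+(\cdot)}=\overline{I^+(\cdot)}$ and openness of $I$, where the paper violates it at $q$ using $I\circ J\subset I$; both are immediate consequences of Theorem \ref{soa} and are interchangeable. In case (b) the routes genuinely diverge: the paper applies the limit curve theorem to the curves witnessing $q\in\overline{J^+(p)}$, gets a curve $\sigma^q$ ending at $q$, uses only accumulation points and transitivity to place $r\in\sigma^q$ near $q$ with $r\in\overline{J^+(q)}$ and $q\in J^+(r)$, and then needs a final dichotomy (approximating curves confined to $U$ yield a closed causal curve, escaping ones violate distinction directly); you instead apply it to the curves witnessing $p\in\overline{J^+(q)}$, use the full identification $D_f=\bar J$ (the definition of reflectivity) together with the last clause of Theorem \ref{main} to manufacture a single point $z\in J^+(q)$, $z\ne q$, with $q\in\overline{J^+(z)}$, and conclude by one concatenation, with no case analysis. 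Your version buys a cleaner endgame; the paper's is more frugal, extracting from reflectivity only the transitivity of Proposition \ref{jjw} rather than the equality $D_f=\bar J$ — though since (b) assumes reflectivity outright, both uses are legitimate.
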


\begin{proof}
Suppose that distinction is violated, then by Remark \ref{nff} we can find $p\in M$, $U\ni p$ and a sequence of continuous causal curves $\sigma_k$ escaping and reentering $U$ starting from $p$ and ending at $q_k\to p$ (or similarly in the past case, namely $\sigma_k$ start from $p_k\to p$ and end at $p$). By the limit curve theorem \ref{main} there is a continuous causal curve $\sigma^p$ (possibly closed) ending at $p$. Let $r\in \sigma^p\backslash\{p\}$, then $r$ is an accumulation point of $\sigma_k$, thus $p\in J^+(r)\subset \overline{J^+(r)}$ and $r\in \overline{J^+(p)}$, namely $(r,p)\in D_f$ and $(p,r)\in D_f$.

The converse under assumption (a).   Suppose that $(p,q)\in D_f$ and $(q,p)\in D_f$ with $p\ne q$. Let $q_k\in I^+(q)$ with $q_k\to q$. Since $I$ is open and $J\circ I\subset I$ we can find a timelike curve $\sigma_k$ starting from $q$ passing arbitrarily close to $p$ and ending at $q_k$. Thus distinction is violated at $q$.

The converse under assumption (b).  Suppose that $(p,q)\in D_f$ and $(q,p)\in D_f$ with $p\ne q$. By the limit curve theorem \ref{main} there is a  continuous causal curve $\sigma^q$ ending at $q$ (and possibly starting from $p$). Every $r\in \sigma^q\backslash \{q\}$, being an accumulation point of continuous causal curves starting from $p$, belongs to $\overline{J^+(p)}$, thus  $(q,p)\in D_f$ and $(p,r)\in D_f$. By Prop. \ref{jjw} $D_f$ is transitive, thus $(q,r)\in D_f$. Since $r$ can be chosen arbitrarily close to $q$, distinction is violated. In fact, there are continuous causal curves $\gamma_k$ with starting point $q$ and ending point $r_k\to r$. If there is a subsequence contained in the non-imprisoning neighborhood $U\ni q$ constructed in Prop.\ \ref{iiu}, by  the limit curve theorem there is a continuous causal curve connecting $q$ to $r$ and hence a closed continuous causal curve passing through $q$, violating  distinction (as it implies causality). On the other hand, if  there is a subsequence whose curves escape $U$ distinction is   again violated.
\end{proof}




\subsection{Domains of dependence and Cauchy horizons} \label{zxq}

\begin{definition}
Let $(M,C)$ be a closed cone structure.  The future {\em domain of dependence} or future {\em Cauchy development} $D^+(S)$ of a closed and achronal subset $S\subset M$, consists of those $p\in M$ such that every past inextendible continuous causal curve passing through $p$ intersect $S$. The future {\em Cauchy horizon} is
\[
H^+(S)=\overline{D^+(S)}\backslash I^{-}(D^+(S)).
\]
The {\em domain of dependence} or  {\em Cauchy development} $D(S)$ of a closed and achronal subset $S\subset M$, consists of those $p\in M$ such that every inextendible continuous causal curve passing through $p$ intersect $S$. (Clearly, $D(S)=D^+(S)\cup D^-(S)$.)  The  {\em Cauchy horizon} is $H(S):= H^+(S)\cup H^-(S) $.
\end{definition}
%

Chru\'sciel and Grant asked to clarify the phenomena of causal bubbling in connection with the Cauchy problem \cite{chrusciel12}. The next result goes in this direction by showing that the behavior they observed in a specific example is general.

\begin{theorem} (Cauchy horizons are generated by lightlike geodesics) \\ 
\label{juf}
Let $(M,C)$ be a proper cone structure and let $S$ be a closed and achronal subset. The set $H^+(S)\backslash S$ is  an achronal locally Lipschitz topological hypersurface and every $p\in H^+(S)$ is the future endpoint of a  lightlike geodesic contained in $H^+(S)$, either past inextendible or starting from some point in $S$. If $S$ is acausal no two points $p,q\in H^+(S)\backslash S$ can be such that $(p,q) \in \mathring{J}$, thus every continuous causal curve contained in $H^+(S)\backslash S$ is a  lightlike geodesic.

Moreover, if $(M,C)$  is locally Lipschitz then every past   inextendible continuous causal curve with future endpoint in $H^+(S)$ is such that the  maximal connected segment which does not intersect $S$ is contained in $H^+(S)$ (hence the segment is a lightlike geodesic).
\end{theorem}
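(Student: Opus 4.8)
The plan is to reduce the assertion to two membership facts and then feed them into the achronal‑hypersurface structure already established for $H^+(S)\setminus S$. Fix a complete Riemannian metric $h$ and parametrize the given past‑inextendible causal curve by $h$‑arc length as $\gamma\colon(-\infty,0]\to M$ with $\gamma(0)=p\in H^+(S)$, the past direction being that of decreasing parameter. Let $(s_1,0]$, with $s_1\in[-\infty,0)$, be the maximal interval with $\gamma((s_1,0])\cap S=\emptyset$; if $s_1>-\infty$ then $\gamma(s_1)\in S$. I want to show $\gamma((s_1,0])\subset H^+(S)=\overline{D^+(S)}\setminus I^-(D^+(S))$. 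Granting this, $\gamma((s_1,0])\subset H^+(S)\setminus S$, which by the first part of the theorem (using local Lipschitzness) is achronal; since $I=\mathring{J}$ for locally Lipschitz proper cone structures by Th.\ \ref{soa}, the segment is then $\mathring{J}$‑arelated, hence a lightlike geodesic by definition, giving the parenthetical conclusion.

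First I would record a \emph{push‑down} property of the development: if $q\in D^+(S)$ and $r$ lies on a past‑directed causal curve issuing from $q$ whose portion from $q$ to $r$ avoids $S$, then $r\in D^+(S)$. Indeed, any past‑inextendible causal curve through $r$ can be prolonged by the segment from $r$ up to $q$ into a past‑inextendible causal curve through $q$; this meets $S$, and since the prolonging segment avoids $S$, the meeting occurs on the original curve, so $r\in D^+(S)$. Using $p\in\overline{D^+(S)}$, pick $p_n\in D^+(S)$ with $p_n\to p$, approximate $\gamma$ by past‑directed causal curves issuing from the $p_n$, and apply the limit curve theorem Th.\ \ref{main} together with the push‑down property to obtain $\gamma((s_1,0])\subset\overline{D^+(S)}$. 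This is the first, more routine, membership fact.

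The crux is the second fact, $\gamma((s_1,0])\cap I^-(D^+(S))=\emptyset$, and this is where I expect the real difficulty and where local Lipschitzness is indispensable. Arguing by contradiction, suppose $r=\gamma(s_\ast)$, $s_\ast\in(s_1,0]$, satisfies $r\ll x$ with $x\in D^+(S)$. The tempting move is to transport this chronological relation forward along $\gamma$ to $p$ via the relaxation identity $J\circ I\cup I\circ J\subset I$ of Th.\ \ref{soa}, producing $p\in I^-(D^+(S))$ and contradicting $p\in H^+(S)$. The delicate point is that this forward transport is legitimate only when the sub‑segment $\gamma([s_\ast,0])$ carries \emph{no} chronological relation between its endpoints; if it does, then $p$ merely sits in the chronological future of $r$ and need not be comparable with $x$. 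Consequently the second membership fact and the achronality conclusion must be run together, e.g.\ by a maximality argument on $A=\{s\in(s_1,0]\colon\gamma([s,0])\subset H^+(S)\}$: $A$ contains $0$; it is relatively closed by the limit curve theorem, the closedness of $\overline{D^+(S)}$ and the relative closedness of $H^+(S)$ therein; and on $A$ the segment lies in the achronal set $H^+(S)\setminus S$, so it is $\mathring{J}$‑arelated, which is exactly what licenses extending $A$ slightly to the left through the transport step. The subtlety — and the true obstacle — is that the extension succeeds precisely because the Lipschitz‑graph structure of the achronal boundary (Th.\ \ref{aoq}) and the identity $I=\mathring{J}$ prevent a causal curve emanating along the horizon from dipping into the open set $I^-(D^+(S))$ while remaining in $\overline{D^+(S)}\setminus S$; all the weight of the local Lipschitz hypothesis is concentrated here.

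Finally, the dichotomy follows by inspecting $s_1$: if $s_1=-\infty$ the segment is itself past‑inextendible and contained in $H^+(S)$, while if $s_1>-\infty$ then $\gamma(s_1)\in S$ is its past endpoint. The lightlike‑geodesic claim is then immediate from the reduction in the first paragraph, or alternatively from Prop.\ \ref{ckj} once the segment is recognized as lying in a horismos‑type achronal set. I would stress that the existential generator statement of the earlier parts is comparatively soft; the force of the present refinement, and its single hardest ingredient, is the achronality/transport interplay of the third paragraph, which is what upgrades "some generator through $p$" to a structural statement about the segment along $\gamma$.
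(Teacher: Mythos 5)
Your proposal addresses only the final ``Moreover'' clause (you take the hypersurface structure, the existence of generators, and the acausal case as given), so let me concentrate on that clause, where there is a fatal gap. You set out to prove the literal statement: for an \emph{arbitrary} past inextendible continuous causal curve $\gamma$ ending at $p\in H^+(S)$, the maximal segment of $\gamma$ avoiding $S$ lies in $H^+(S)$. That statement is false, so the ``crux'' step you isolate (your second membership fact $\gamma((s_1,0])\cap I^-(D^+(S))=\emptyset$) cannot be established by any argument. Concretely, in $2+1$ Minkowski spacetime take $S$ to be the closed unit disk in $\{t=0\}$, so that $H^+(S)=\{(t,\vec x)\colon t\ge 0,\ \vert \vec x\vert+t=1\}$, take $p=(1/2,1/2,0)\in H^+(S)\backslash S$, and let $\gamma$ be the past-directed vertical timelike line through $p$. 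Its maximal segment avoiding $S$ lies, apart from $p$ itself, in $\mathrm{Int}\, D^+(S)\subset I^-(D^+(S))$, hence nowhere on the horizon. This also refutes the mechanism you invoke in your third paragraph: you claim that the Lipschitz-graph structure of achronal boundaries together with $I=\mathring{J}$ ``prevents a causal curve \ldots from dipping into the open set $I^-(D^+(S))$ while remaining in $\overline{D^+(S)}\setminus S$'', but $\overline{D^+(S)}\setminus S$ contains $\mathrm{Int}\,D^+(S)$, which is inside $I^-(D^+(S))$; the curve in the example does exactly what you say is prevented. Your maximality argument never actually supplies its inductive step (it is asserted, not derived), and no such step can exist.

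What the paper proves, and how the clause must be read, is different: the whole argument for the ``Moreover'' part carries the standing hypothesis, elided in the theorem's wording but explicit in the proof, that the curve does not meet the open set $I^+(S)\cap I^-(D^+(S))$. (The generators constructed in the first part satisfy this automatically, being limits of curves $\sigma_k$ disjoint from $D^+(S)$.) Under that hypothesis your ``second fact'' is immediate --- a point of the curve lying in $I^+(S)$ is excluded from $I^-(D^+(S))$ by fiat --- and it is not where local Lipschitzness acts. Lipschitzness is used only for your ``first fact'', and locally: by Th.\ \ref{sbn} one has $J^-(q,U)=\overline{I^-(q,U)}\subset\overline{D^+(S)}$ at each $q\in H^+(S)\backslash S$, since $I^-(q,U)\subset D^+(S)$ already holds for proper cone structures; the inclusion is then propagated along the curve using the closedness of $H^+(S)$ and a maximization. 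Note also that your own route to the first fact is not viable: the limit curve theorem (Th.\ \ref{main}) extracts convergent subsequences from given sequences of causal curves; it does not allow you to ``approximate $\gamma$ by past-directed causal curves issuing from the $p_n$'' with prescribed nearby endpoints --- in low regularity such approximation is precisely the delicate point (causal bubbling). Finally, the earlier clauses you dismiss as ``comparatively soft'' include the zig-zag limit construction needed when $(M,C)$ is merely upper semi-continuous, which is in fact the heaviest part of the paper's proof.
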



Notice that we do not prove that the  generators cannot intersect other generators in their interior.

\begin{proof}
Let $p\in H^+(S)$, and let $(p_k,p)\in I$, $p_k \to p$, then $p_k\in  I^{-}(\overline{D^+(S)})$, thus $p\in  \p  I^{-}(\overline{D^+(S)})$, and  $H^+(S)$  being a subset of an achronal boundary $\p I^{-}(\overline{D^+(S)})$ it is itself achronal. If $p\in H^+(S)\backslash S$ then any past inextendible $C$-timelike curve reaching $p$ intersects $S$, thus $p\in I^+(S)$.  We have proved the inclusion $H^+(S)\backslash S \subset  I^+(S)\cap \p  I^{-}(\overline{D^+(S)})$


We want to prove the equality $H^+(S)\backslash S=  I^+(S)\cap \p  I^{-}(\overline{D^+(S)})$, so that $H^+(S)\backslash S$ is an open subset of an achronal boundary and hence a locally Lipschitz topological hypersurface itself. Let $p \in I^+(S)\cap   \p  I^{-}(\overline{D^+(S)})$, then there is a sequence of $C$-timelike curves $\sigma_k$ of starting points $p_k\to p$ and ending points $q_k\in D^+(S)$ (recall that $I$ is open).
For sufficiently large $k$ the $C$-timelike curve $\sigma_k$ cannot intersect $S$, otherwise, since $p_k\in  I^+(S)$, we would have a violation of the achronality of $S$.
Thus, we cannot have $p_k\in M\backslash D^+(S)$ otherwise  it would follow that $q_k\in  M\backslash D^+(S)$, a contradiction. Thus $p_k\in  D^+(S)$ and $p\in \overline{D^+(S)}$. But by assumption $p\notin  I^{-}(\overline{D^+(S)})$, thus $p  \in H^+(S)$ as we wished to prove.


If $q\in H^+(S)$ and $q'\gg q$, then we cannot have $q'\in H^+(S)$, as we would have $q\in I^{-}(D^+(S))$, a contradiction. Similarly, if $q\in H^+(S)\backslash S$ and  $r\in I^-(q,U)$ where $U$ is a sufficiently small neighborhood of $q$, $U\cap S=\emptyset$, then $r\in  I^-(q',U)$ for some $q'\in D^+(S)$ and we cannot have $r\in M\backslash  {D^+(S)}$, otherwise $q'\notin   D^+(S)$. Thus, for every $q \in H^+(S)\backslash S$ we have $I^-(q,U) \subset D^+(S)$.

Let $q\in H^+(S)\backslash S$ and let $q_k \to q$, with $q_k\gg q$. Since $q_k \notin D^+(S)$ it is the future endpoint of a past inextendible continuous causal curve $\sigma_k$, cf.\ Th.\ \ref{zar}, not intersecting $S$.  Necessarily it does not intersect $D^+(S)$ otherwise it would be forced to intersect $S$. By the limit curve theorem $q$ is the future endpoint of a past inextendible continuous causal curve $\sigma$ which does not intersect the open set $I^+(S)\cap I^{-}(D^+(S))\subset D^+(S)$ as none of the $\sigma_k$ does.

It remains to show that $\sigma \cap I^+(S)\subset H^+(S)$.
First we notice the equality $I^+(S)\cap H^+(S)=H^+(S)\backslash S$ which is due to the achronality of $S$. If the cone structure is locally Lipschitz by Th.\ \ref{sbn} $J^-(q,U)=\overline{I^-(q,U)} \subset \overline{D^+(S)}$, which proves the inclusion $\sigma \cap I^+(S)\subset H^+(S)$  in a neighborhood of $q$. At this point, since  $H^+(S)$ is closed, we can  extend the inclusion all over $\sigma \cap I^+(S)$ through a standard maximization argument. However, this result proves much more, namely that every continuous causal curve ending at $q$ and not intersecting $I^+(S)\cap I^{-}(D^+(S))$ is contained in $H^+(S)\backslash S$ as long as it does not intersect $S$. This is the last statement of the theorem.

In the upper semi-continuous case we wish to prove that there is at least one continuous causal curve with this property.
We proceed as follows. We introduce a complete Riemannian metric $h$, and  a timelike Lipschitz vector field $W$ (with respect to a $C^1$ compatible smooth atlas) so transverse to $H_0=H^+(S)\backslash S$, and its flow $\varphi_\tau$. The vector field is normalized so that $H_\tau:=\varphi_\tau(H_0)$ is well defined for $\vert \tau\vert\le 1$ and does not intersect $S$. A locally finite covering of $H_0$ with the neighborhoods used in the proof of Theorem \ref{aoq} might be used  to show that a vector field with these properties exists.

Having chosen $q\in H_0$ we are going to build a sequence of  continuous causal zig-zag curves ending at $q$. The ``zig'' is a segment of past inextendible continuous causal curve not intersecting $I^+(S)\cap I^{-}(D^+(S))$ while the zag is an integral curve of $W$ so that $\Delta \tau<1/N$ where  the total number of zig-zags  is unbounded unless one zig remains in the region $\tau< 1/N$ as long as it stays in $\varphi_{[0,1/N]} (H_0)$. The starting point of the zag is in $H^+(S)\backslash S$.  If the last piece is a zig  the curve is past inextendible.  For each $N$ we have a continuous causal curve ending at $q$ not intersecting $I^+(S)\cap I^{-}(D^+(S))$, and the sequence does not contract to a point so there must be a limit continuous causal curve $\sigma$ which does not intersect the same set. By construction its intersection with $I^+(S)$ is contained in $H^+(S)\backslash S$ and can only be past inextendible or converge to a point in $S$. Notice, that no two points $p,q\in H^+(S)\backslash S$ can be such that $(p,q)\in \mathring{J}$ since there would be $p'\in M\backslash D^+(M)\cap J^-(q')$ with $q'\in D^+(S)$, so it would be possible to find a past inextendible causal curve ending at $q'$ not intersecting $S$ (if $p'$ is chosen sufficiently close to $p$, $p'\in I^+(S)$, so the causal curve connecting $p'$ to $q'$ cannot intersect $S$ by acausality of $S$), a contradiction. Thus $\sigma$ is a lightlike geodesic.
\end{proof}
\begin{theorem} \label{jdp}
Let $(M,C)$ be a proper cone structure. Let $S$ be an acausal topological hypersurface. Then $S$ is locally Lipschitz and every point $p\in S$ admits an open neighborhood $U\ni p$ which is the disjoint union of $S\cap U$ and the open sets $J^+(S\cap U,U)\backslash S$, $J^-(S\cap U,U)\backslash S$. Moreover,  $J^+(S\cap U,U)\subset D^+(S)$.
Finally, the generators of $H^+(S)$ do not reach $S$.
\end{theorem}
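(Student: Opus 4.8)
The plan is to argue by contradiction, leveraging the first three assertions of Theorem~\ref{jdp} together with the generation result Th.~\ref{juf}. Suppose some generator $\sigma$ of $H^+(S)$ reaches $S$, i.e.\ (in the language of Th.~\ref{juf}) it is a lightlike geodesic contained in $H^+(S)$ whose past endpoint is a point $s\in S$. First I would apply the earlier part of the present theorem to $s$: there is an open neighborhood $U\ni s$ which is the disjoint union of $S\cap U$ and the \emph{open} sets $J^+(S\cap U,U)\setminus S$ and $J^-(S\cap U,U)\setminus S$, and moreover $J^+(S\cap U,U)\subset D^+(S)$.

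The key step is to examine $\sigma$ immediately to the future of $s$. Since $\sigma$ is continuous and starts at $s$, for parameter values close enough to the past endpoint the curve stays in $U$, and being causally to the future of $s$ these points lie in $J^+(s,U)\subset J^+(S\cap U,U)$. They cannot lie on $S$ itself, for $\sigma$ is a causal curve issuing from $s\in S$ and $S$ is acausal; hence they lie in the open set $W:=J^+(S\cap U,U)\setminus S\subset D^+(S)$. Now I would show that any $r\in W$ satisfies $r\in I^-(D^+(S))$: because the cone structure is proper, Th.~\ref{mmz} provides timelike curves issuing from $r$ into the future, so $I^+(r)$ is a nonempty open set accumulating at $r$; since $W$ is an open neighborhood of $r$, we may pick $q\in I^+(r)\cap W$. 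Then $r\ll q$ with $q\in W\subset D^+(S)$, whence $r\in I^-(D^+(S))$.

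This yields the contradiction: the points in question lie on $\sigma\subset H^+(S)$, so by definition $H^+(S)=\overline{D^+(S)}\setminus I^-(D^+(S))$ they cannot belong to $I^-(D^+(S))$, yet we have just placed them there. Therefore no generator of $H^+(S)$ can have its past endpoint on $S$, i.e.\ the generators do not reach $S$. The only delicate point — the part I expect to require the most care — is the localization near $s$: one must confirm that the initial sub-segment of $\sigma$ genuinely lands in $J^+(S\cap U,U)$ and avoids $S$, which is exactly what the disjoint-union decomposition of $U$ and the acausality of $S$ guarantee; everything else reduces to openness of $I$ and the existence of future timelike curves in a proper cone structure.
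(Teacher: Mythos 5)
Your argument for the final assertion is correct, and it actually supplies a detail the paper's own proof leaves implicit: once one knows that $W:=J^+(S\cap U,U)\setminus S$ is open and contained in $D^+(S)$, any point $r$ of a generator lying in $W$ satisfies $r\in I^-(D^+(S))$ (pick, via Th.~\ref{mmz} and the openness of $W$, a point $q\in I^+(r)\cap W\subset D^+(S)$), which contradicts $r\in H^+(S)=\overline{D^+(S)}\setminus I^{-}(D^+(S))$. The localization near $s$ that you flag as delicate works exactly as you say, using the disjoint-union decomposition and the acausality of $S$.

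The genuine gap is that you prove only the last sentence of the theorem. The statement contains three further assertions --- that $S$ is locally Lipschitz, that every $p\in S$ has a neighborhood $U$ decomposing as the disjoint union of $S\cap U$ and the open sets $J^{\pm}(S\cap U,U)\setminus S$, and that $J^+(S\cap U,U)\subset D^+(S)$ --- and your proposal explicitly takes all of them as given (``leveraging the first three assertions''). These are precisely the substance of the paper's proof, and they are not trivial. The paper obtains the first two by repeating the argument of Th.~\ref{aoq} (achronal boundaries are locally Lipschitz graphs) with the continuous proper cone structure $\tilde C\le C$ furnished by property (*): $U$ splits into $S\cap U$, $\tilde I^+(S\cap U,U)$, $\tilde I^-(S\cap U,U)$, and acausality identifies the latter two sets with $J^{\pm}(S\cap U,U)\setminus S$. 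The inclusion $J^+(S\cap U,U)\subset D^+(S)$ is then proved by contradiction with the limit curve theorem: if it failed for every $U$, there would be points $p_k\to p\in S$ in $J^+(S\cap U,U)$ that are future endpoints of past inextendible continuous causal curves $\sigma_k$ missing $S$; the limit curve $\sigma$ ends at $p$, lies (minus $p$) in $J^{-}(S\cap U,U)\setminus S$ by acausality, hence crosses the local Lipschitz graph of $S$, forcing the $\sigma_k$ to cross $S$ for large $k$, a contradiction. Without these three steps your argument for the generators has nothing to stand on, so as a proof of the stated theorem the proposal is incomplete.
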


\begin{proof}
By a proof analogous to that of Th.\ \ref{aoq}, $S$ is a local Lipschitz graph (just work with the continuous cone structure $\tilde C$ contained in $C$) and for every $p\in S$ and sufficiently small neighborhood $U\ni p$, $U$  is the disjoint union of $S\cap U$, $\tilde I^+(S\cap U,U)$, $\tilde I^-(S\cap U,U)$. By acausality they coinicide with  $S\cap U$, $J^+(S\cap U,U)\backslash S$, $J^-(S\cap U,U)\backslash S$.

Suppose that there is no open neighborhood $U$ such that $J^+(S\cap U,U)\subset D^+(S)$, then we can find a sequence $p_k\to p$, $p_k\in  J^+(S\cap U,U)$, consisting of endpoints of past inextendible continuous causal curves $\sigma_k$ not intersecting $S$ (hence $p_k\ne p$). By the limit curve theorem there is a past inextendible continuous causal curve $\sigma$ ending at $p$. But $p\in S$ and by acausality $\sigma\backslash\{p\} \subset J^{-}(S\cap U, U)\backslash S$. So $\sigma$ has to cross the local Lipschitz graph of $S$, and so must do $\sigma_k$ for sufficiently large $k$, a contradiction.
\end{proof}

\begin{theorem} \label{jdq}
Let $(M,C)$ be a proper cone structure. Let $S$ be an acausal topological hypersurface. Then $D^+(S)\backslash S$ is open and $S\cap H^+(S)=\emptyset$.
\end{theorem}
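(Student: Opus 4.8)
The plan is to prove the two assertions in turn, both relying on Theorem~\ref{jdp}. For $S\cap H^+(S)=\emptyset$ I would establish the stronger inclusion $S\subset I^-(D^+(S))$; since $S\subset D^+(S)\subset \overline{D^+(S)}$ this gives $S\cap\big(\overline{D^+(S)}\setminus I^-(D^+(S))\big)=S\cap H^+(S)=\emptyset$. So let $p\in S$ and let $U\ni p$ be a neighborhood as in Theorem~\ref{jdp}, so that $J^+(S\cap U,U)\subset D^+(S)$. By Theorem~\ref{mmz} there is a timelike curve $\gamma$ issuing from $p$; for small $t>0$ the segment $\gamma([0,t])$ stays in $U$, whence $q:=\gamma(t)\in I^+(p,U)\subset J^+(p,U)\subset J^+(S\cap U,U)\subset D^+(S)$. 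As $p\ll q$ we obtain $p\in I^-(q)\subset I^-(D^+(S))$, as required. (Alternatively this follows from the last statement of Theorem~\ref{jdp} combined with Theorem~\ref{juf}, since a point of $S\cap H^+(S)$ would be the future endpoint of a generator and thus a generator reaching $S$.)

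For the openness of $D^+(S)\setminus S$, note first that since $S$ is closed $M\setminus S$ is open, so it suffices to show that every $p\in D^+(S)\setminus S$ admits a neighborhood contained in $D^+(S)$. I would argue by contradiction: suppose there is a sequence $p_k\to p$ with $p_k\notin D^+(S)$, and (by intersecting with the open set $M\setminus S$) assume also $p_k\notin S$. By definition of $D^+(S)$ and Theorem~\ref{zar}, each $p_k$ is the future endpoint of a past-inextendible continuous causal curve $\sigma_k\colon(-\infty,0]\to M$, parametrized by $h$-arc length for a complete Riemannian metric $h$, with $\sigma_k(0)=p_k$ and $\sigma_k$ not meeting $S$. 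Extending each $\sigma_k$ to an inextendible curve by Theorem~\ref{zar} and applying the limit curve lemma, a subsequence converges, uniformly on compact parameter sets, to a past-inextendible continuous causal curve $\sigma\colon(-\infty,0]\to M$ with $\sigma(0)=p$, whose approximating past segments still avoid $S$.

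Now comes the crux. Since $p\in D^+(S)$ and $\sigma$ is past-inextendible through $p$, the curve $\sigma$ meets $S$, say at $s=\sigma(-a)$ with $a>0$ (recall $p\notin S$). Let $U\ni s$ be a neighborhood as in Theorem~\ref{jdp}, so that $U$ is the disjoint union of $S\cap U$ and the two open sets $J^+(S\cap U,U)\setminus S$ and $J^-(S\cap U,U)\setminus S$. For small $\delta>0$ the segment $\sigma([-a-\delta,-a+\delta])$ lies in $U$; moreover $\sigma(-a+\delta)\in J^+(s,U)\setminus S$ and $\sigma(-a-\delta)\in J^-(s,U)\setminus S$, the exclusion of $S$ following from acausality (a second point of $\sigma$ on $S$ would be joined to $s$ by a nonconstant causal curve). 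By uniform convergence on $[-a-\delta,-a+\delta]$, for large $k$ the segment $\sigma_k([-a-\delta,-a+\delta])$ is contained in $U$, with $\sigma_k(-a+\delta)$ in the open future region and $\sigma_k(-a-\delta)$ in the open past region. Being connected and contained in $U=(J^+(S\cap U,U)\setminus S)\sqcup(S\cap U)\sqcup(J^-(S\cap U,U)\setminus S)$ while meeting both open pieces, this segment must intersect $S\cap U\subset S$, contradicting that $\sigma_k$ avoids $S$. Hence $D^+(S)\setminus S$ is open, and since $S\cap H^+(S)=\emptyset$ was shown above, the theorem follows.

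The step I expect to be the main obstacle is precisely this last one: the limit curve $\sigma$ is forced to meet $S$, yet the approximating curves $\sigma_k$ need not, and one must exclude that they ``slip past'' $S$ near $s$. The local disjoint-union (product) structure of a neighborhood of the acausal hypersurface supplied by Theorem~\ref{jdp}, together with acausality used to place the endpoints of the test segment strictly inside the two open regions, is exactly what compels each nearby curve to cross $S$ and delivers the contradiction. The remaining points (existence and length-nondegeneracy of the limit curve, $q\in D^+(S)$ in the first part) are routine consequences of the limit curve lemma, Theorem~\ref{mmz}, and Theorem~\ref{jdp}.
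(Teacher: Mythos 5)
Your proof is correct and follows essentially the same route as the paper: the openness is obtained by a limit-curve contradiction in which the limit curve must cross $S$ and hence so must the approximating curves, and $S\cap H^+(S)=\emptyset$ comes from Theorem~\ref{jdp}. Your careful use of the local disjoint-union structure of Theorem~\ref{jdp}, with acausality placing the test segment's endpoints in the two open regions, is exactly the detail the paper compresses into the phrase ``crossing the locally Lipschitz graph of $S$, and so must $\sigma_k$ for sufficiently large $k$.''
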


\begin{proof}
Suppose $D^+(S)\backslash S$ is not open then there is $q\in D^+(S)\backslash S$,  and  a sequence of past inextendible continuous causal curves $\sigma_k$ not intersecting $S$ of endpoints $q_k\to q$. Thus the limit past inextendible  continuous causal curve $\sigma$ ending at $q$ intersects $S$ at a point $p$ crossing the locally Lipschitz graph of $S$, and so must  $\sigma_k$ for sufficiently large $k$, a contradiction. The last equality follows from Th.\ \ref{jdp}.
\end{proof}

\begin{theorem} Let $(M,C)$ be a proper cone structure.
Let $S$ be closed and achronal, then we have the identity $\overline{D^+(S)}=\mathrm{Int} D^+(S) \cup H^+(S)\cup S$.
\end{theorem}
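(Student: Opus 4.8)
The inclusion $\supseteq$ is immediate: $\mathrm{Int}\, D^+(S)\subset D^+(S)\subset \overline{D^+(S)}$, $H^+(S)\subset \overline{D^+(S)}$ by definition, and $S\subset D^+(S)$ because a past inextendible continuous causal curve through a point of $S$ already meets $S$ there. For the reverse inclusion I would first record the partition $\overline{D^+(S)}=H^+(S)\sqcup A$, where $A:=\overline{D^+(S)}\cap I^-(D^+(S))$; this is simply the decomposition of the closed set $\overline{D^+(S)}$ according to membership in the open set $I^-(D^+(S))$, using $H^+(S)=\overline{D^+(S)}\setminus I^-(D^+(S))$. Hence it suffices to prove $A\setminus S\subset \mathrm{Int}\, D^+(S)$.

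The core step is the following lemma, whose proof uses only openness and transitivity of $I$ together with achronality of $S$ (no limit curves): $I^+(S)\cap I^-(D^+(S))\subset D^+(S)$. To see it, take $r$ in the left-hand side, so $s\ll r$ for some $s\in S$ and $r\ll q$ for some $q\in D^+(S)$, the latter realized by a timelike curve $\mu$ from $r$ to $q$. Given any past inextendible continuous causal curve $\gamma$ through $r$, the concatenation of $\mu$ reversed with $\gamma$ is a past inextendible continuous causal curve through $q$, hence meets $S$. If the meeting point lay on $\mu$ it would be a point $s'\in S$ with $r\ll s'$; then $s\ll r\ll s'$ with $s,s'\in S$ would contradict achronality of $S$ by transitivity of $I$. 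Therefore the intersection lies on $\gamma$, so $\gamma$ meets $S$, proving $r\in D^+(S)$. Since $I^+(S)\cap I^-(D^+(S))$ is open (both factors are open), it is contained in $\mathrm{Int}\, D^+(S)$.

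It remains to show $A\setminus S\subset I^+(S)$, which combined with $A\subset I^-(D^+(S))$ and the lemma yields $A\setminus S\subset I^+(S)\cap I^-(D^+(S))\subset \mathrm{Int}\, D^+(S)$, closing the argument. Let $p\in A\setminus S$, so $p\in\overline{D^+(S)}$, $p\notin S$ and $p\ll q\in D^+(S)$. Choosing a past inextendible timelike curve $\gamma$ through $p$ (it exists by Th.\ \ref{mmz} followed by extension, cf.\ Th.\ \ref{zar}) and concatenating $\mu$ reversed with $\gamma$ exactly as above produces a past inextendible timelike curve through $q$ that meets $S$; the intersection occurs either along $\gamma$, giving $p\in I^+(S)$ as desired, or along $\mu$, giving $p\in I^-(S)$. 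The whole difficulty is therefore to exclude the alternative $p\in I^-(S)$ for a point $p\in\overline{D^+(S)}\setminus S$, and this is the expected main obstacle. I would handle it by the limit curve technique used in the proof of Th.\ \ref{juf}: approximate $p$ by points $p_k\in D^+(S)\setminus S$ (possible since $S$ is closed), each carrying a past inextendible continuous causal curve meeting $S$, and extract a limit past inextendible continuous causal curve through $p$ which, by construction, cannot enter the open set $I^+(S)\cap I^-(D^+(S))\subset D^+(S)$; the coexistence of $p\ll s'\in S$ with $p\in\overline{D^+(S)}$ then forces, via transitivity of $I$ and the achronality of $S$, the kind of contradiction already exploited in the lemma, ruling out $p\in I^-(S)$.
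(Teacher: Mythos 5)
Up to the last step your argument is correct and in substance coincides with the paper's own proof: the splitting $\overline{D^+(S)}=H^+(S)\sqcup A$ with $A=\overline{D^+(S)}\cap I^-(D^+(S))$, the key lemma $I^+(S)\cap I^-(D^+(S))\subset D^+(S)$ (the paper invokes exactly this when it says there can be no point in $[M\setminus D^+(S)]\cap I^-(D^+(S))\cap I^+(S)$), and the concatenation dichotomy ``$p\in I^+(S)$ or $p\in I^-(S)$'' for $p\in A\setminus S$ are all sound. The genuine gap is your final step, the exclusion of $p\in I^-(S)$ by a limit-curve argument modeled on Th.\ \ref{juf}. That mechanism does not work here: your approximating points $p_k\in D^+(S)\setminus S$ carry past inextendible causal curves that \emph{do} meet $S$, and such curves have no reason whatsoever to avoid the open set $I^+(S)\cap I^-(D^+(S))$; in Th.\ \ref{juf} the curves avoiding that set are precisely those through points \emph{not} in $D^+(S)$ (which cannot enter $D^+(S)$ without being forced to meet $S$), so you have inverted the roles. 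Moreover, even if you extract a limit curve through $p$, it is only a \emph{causal} curve, so turning its (possible) intersection with $S$ at some $s\le p$ plus $p\ll s'$ into a violation of achronality would require a composition of the type $J\circ I\subset I$ --- exactly the property that fails for proper cone structures and that the whole proof must avoid.

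The fix is elementary and needs no limit curves, only the tools you already name (openness and transitivity of $I$, achronality, past inextendible timelike curves). First, $D^+(S)\cap I^-(S)=\emptyset$: if $x\in D^+(S)$ and $x\ll s'$ with $s'\in S$, then either $x\in S$, contradicting achronality directly, or $x\notin S$ and any past inextendible timelike curve ending at $x$ meets $S$ at some $s\ll x$, whence $s\ll s'$ by transitivity of $I$, again contradicting achronality. Since $I^-(S)$ is open, this yields $\overline{D^+(S)}\cap I^-(S)=\emptyset$; as your $p$ lies in $\overline{D^+(S)}$, the alternative $p\in I^-(S)$ is excluded, the dichotomy gives $p\in I^+(S)$, and your lemma concludes $p\in I^+(S)\cap I^-(D^+(S))\subset \mathrm{Int}\,D^+(S)$. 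A secondary imprecision worth noting: the past inextendible \emph{timelike} curve through $p$ must be produced by iterating Th.\ \ref{mmz} and concatenating (a maximality argument), not by the causal extension of Th.\ \ref{zar}; if $\gamma$ were merely causal, the case ``intersection on $\gamma$'' would only give $p\in J^+(S)$, which does not suffice at this regularity.
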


\begin{proof}
Clearly, $S$ belongs to $D^+(S)$ and by the achronality of $S$ no point of $I^-(S)$ belongs to $D^+(S)$, thus $S\subset \p  D^+(S) $. For every $q\in H^+(S)\backslash S$, $I^+(q)\cap D^+(S)=\emptyset$ otherwise $q\notin   H^+(S)$, a contradiction, thus $H^+(S)\backslash S \subset  \p  D^+(S)$. Letting $q\in \p  D^+(S)\backslash S$, every timelike curve $\sigma$ ending at $q$ must enter  $D^+(S)$ immediately, moving from $q$ in the past direction. Indeed, if not we can find $r\in \sigma \cap M\backslash D^+(S)$ and hence $q\in I^+(r)\subset M\backslash D^+(S)$, a contradiction with  $q\in \p  D^+(S)$. As a consequence, $\p  D^+(S)\backslash S=\p  D^+(S)\cap I^+(S)$.
Let $q\in \p  D^+(S)\backslash S$ then it cannot hold that $q\in I^-(D^+(S))$ otherwise there would be $q' \in [M\backslash D^+(S)]\cap I^-(D^+(S))\cap I^+(S)$, a contradiction. Thus $q \notin  I^-(D^+(S))=I^-(\overline{D^+(S)})$ and hence $q\in H^+(S)\backslash S$.
\end{proof}

\begin{theorem} \label{xiq}
Let $(M,C)$ be a proper cone structure.
Let $S$ be a closed and achronal set and let  $\tilde D^+(S)$ be the closed set of all points $p$ for which all timelike curves ending at $p$ intersect $S$. Then  $\tilde D^+(S)$ is closed, $\overline{D^+(S)} \subset  \tilde D^+(S)$, and if $(M,C)$ is also locally Lipschitz, then $\overline{D^+(S)}=\tilde D^+(S)$.
\end{theorem}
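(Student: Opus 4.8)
The plan is to prove the three assertions in turn: closedness of $\tilde D^+(S)$, the inclusion $\overline{D^+(S)}\subset \tilde D^+(S)$ (valid for any proper cone structure), and finally the reverse inclusion $\tilde D^+(S)\subset \overline{D^+(S)}$, which is where local Lipschitzness enters. For closedness I would show that the complement is open. Let $p\notin \tilde D^+(S)$, so there is a past inextendible timelike curve $\gamma$ ending at $p$ with $\gamma\cap S=\emptyset$; in particular $p\notin S$, hence there is an open set $W\ni p$ with $W\cap S=\emptyset$. Choose a neighborhood $U\ni p$ as in Prop.\ \ref{iiu} with $\bar U\subset W$, and pick $q'\in\gamma\cap U$ so close to $p$ that the final segment $\gamma|_{[q',p]}$ is a timelike curve contained in $U$; then $q'\ll_U p$, so $V:=I^+(q',U)$ is an open neighborhood of $p$. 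For every $p'\in V$ there is a timelike curve from $q'$ to $p'$ inside $U\subset W$; concatenating it with the past inextendible portion of $\gamma$ ending at $q'$ and rounding the corner at $q'$ by Prop.\ \ref{oaw}, one obtains a past inextendible timelike curve ending at $p'$ which avoids $S$. Hence $V\subset M\setminus\tilde D^+(S)$, and $\tilde D^+(S)$ is closed.

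The inclusion $D^+(S)\subset\tilde D^+(S)$ is immediate: a past inextendible timelike curve ending at $p$ is in particular a past inextendible continuous causal curve passing through $p$, so if every such causal curve meets $S$ then a fortiori every such timelike curve does. Since $\tilde D^+(S)$ is closed, this yields $\overline{D^+(S)}\subset\tilde D^+(S)$ with no regularity assumption, establishing the second assertion.

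The substance is the reverse inclusion. Let $p\in\tilde D^+(S)$; I may assume $p\notin S$, for otherwise $p\in S\subset D^+(S)\subset\overline{D^+(S)}$. Fix $W\ni p$ with $W\cap S=\emptyset$ and, using Th.\ \ref{mmz}, a past directed timelike curve issuing from $p$; for points $q$ on it close to $p$ one has $q\ll p$, $q\in W$, and the segment $\beta$ from $q$ to $p$ is a timelike curve inside $W$. Since such $q$ accumulate at $p$, it suffices to prove $q\in D^+(S)$. Suppose not: there is a past inextendible continuous causal curve $\alpha$ ending at $q$ with $\alpha\cap S=\emptyset$. Then $\mu:=\alpha\cdot\beta$ is a past inextendible continuous causal curve ending at $p$, avoiding $S$, whose final segment $\beta$ is timelike. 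Everything thus reduces to the following \emph{conversion statement}: such a $\mu$ may be replaced by a past inextendible \emph{timelike} curve ending at $p$ that still avoids $S$; this contradicts $p\in\tilde D^+(S)$ and finishes the proof.

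The conversion statement is the main obstacle, and it is here that local Lipschitzness is used through Th.\ \ref{soa}. The naive device of approximating $\mu$ by timelike curves and passing to a limit fails, since a uniform limit of timelike curves is only causal and may touch the closed set $S$. Instead I would build a single timelike curve by iterating the deformation Th.\ \ref{soa}, committing an ever longer future tail that is never altered again. Parametrizing $\mu$ by $h$-arc length with $\mu(0)=p$ and writing $w_k=\mu(-k)$, I apply Th.\ \ref{soa} to the causal-then-timelike curve $\mu|_{[-1,0]}$ to get a timelike curve $\tau_1$ from $w_1$ to $p$ in a neighborhood thin enough to avoid the closed set $S$ (possible as $\mu|_{[-1,0]}$ is compact and disjoint from $S$); letting $z_1$ be a point of $\tau_1$ just to the future of $w_1$ gives $z_1\in I^+(w_1)$ and commits the segment $\tau_1|_{[z_1,p]}$. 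Inductively, given $z_k\in I^+(w_k)$, I apply Th.\ \ref{soa} to $\mu|_{[-(k+1),-k]}$ followed by a short timelike curve from $w_k$ to $z_k$, obtaining a timelike curve $\tau_{k+1}$ from $w_{k+1}$ to $z_k$ in a thin $S$-avoiding neighborhood; choosing $z_{k+1}$ on $\tau_{k+1}$ just to the future of $w_{k+1}$ preserves the invariant and commits $\tau_{k+1}|_{[z_{k+1},z_k]}$. Concatenating the committed timelike segments and rounding the corners at the $z_k$ by Prop.\ \ref{oaw} produces a timelike curve $\nu$ ending at $p$ and avoiding $S$; since each committed segment tracks a unit $h$-arc length piece of $\mu$, the curve $\nu$ has infinite $h$-arc length and is past inextendible by Cor.\ \ref{dox}. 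The points deserving the most care are the simultaneous choice of the deformation neighborhoods thin enough to avoid $S$ yet small enough that the committed tail is genuinely never disturbed, and the verification of past inextendibility of $\nu$.
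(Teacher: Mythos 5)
Your overall architecture is sound and matches the paper's: closedness of $\tilde D^+(S)$ together with the trivial inclusion $D^+(S)\subset\tilde D^+(S)$ gives the first two assertions (your closedness argument, which works inside a neighborhood disjoint from $S$ so that the new timelike segment produced by openness of $I$ cannot meet $S$, correctly fills in what the paper dismisses with ``by the openness of $I$''), and the hard direction is reduced, exactly as in the paper, to the statement that a past inextendible causal curve ending at $q\ll p$ and avoiding $S$ can be converted into a past inextendible \emph{timelike} curve ending at $p$ and avoiding $S$. The paper does not prove this conversion: it cites the classical avoidance lemma of O'Neill and Hawking--Ellis, noting only that its proof rests on $I\circ J\subset I$, which is available here through Th.~\ref{soa}. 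You attempt the conversion by hand, and this is where the proposal breaks down.

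The gap is the past inextendibility of $\nu$. The claim ``since each committed segment tracks a unit $h$-arc length piece of $\mu$, the curve $\nu$ has infinite $h$-arc length'' is unjustified on two counts. First, Th.~\ref{soa} only guarantees that the deformed curve lies in the prescribed neighborhood $O$ and joins the same endpoints; it gives no lower bound on its $h$-length, and a curve confined to a thin tube around a length-one arc can be short whenever the tube self-overlaps. Second, and decisively, Theorem~\ref{xiq} carries no causality hypothesis, so $\mu$ may be (partially) imprisoned: although its total $h$-length is infinite (Cor.~\ref{dox}), its unit-spaced points $w_k=\mu(-k)$ may converge to a point $z_\infty$. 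Your committed segment joins $z_{k+1}$ (chosen near $w_{k+1}$) to $z_k$ (chosen near $w_k$), so the only available lower bound on its length is roughly $d^h(w_{k+1},w_k)$, and these can be summable. In that case $\nu$ has finite $h$-length, hence by completeness of $h$ it extends continuously to a past endpoint $z_\infty$ --- which may even lie on $S$, since $S$ is only known to be disjoint from $\mu$ and $\nu$, not from their closures --- and a past \emph{extendible} timelike curve avoiding $S$ contradicts nothing, because $p\in\tilde D^+(S)$ constrains only inextendible ones. Ensuring that the construction makes definite progress (or otherwise reaches a contradiction) in this imprisoned regime is precisely the delicate content of the avoidance lemma that the paper invokes, and the proposal does not supply it.
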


\begin{proof}
By the openness of $I$ the set $\tilde D^+(S)$ is closed. Clearly,  $D^+(S) \subset  \tilde D^+(S)$, thus $\overline{D^+(S)} \subset  \tilde D^+(S)$.

For the other direction,
suppose there were a point $p \in\tilde{D}^{+}(S)$ which had a
neighborhood $V$ which did not intersect $D^+(S)$. Choose a point $x
\in I^{-}(p,V)$. Since $x \notin D^{+}(S)$ we have  $p
\notin \tilde{D}^{+}(S)$, in fact using $I\circ J\subset I$ it is possible  to construct a past inextendible timelike curve ending at $p$ (avoidance Lemma \cite[p.416, lemma 30]{oneill83} \cite[Prop. 6.5.1]{hawking73}), a contradiction. Thus
$\tilde{D}^{+}(S)=\overline{D^{+}(S)}$.
\end{proof}

\subsection{Global hyperbolicity and its stability}
In this section we introduce the important notion of global hyperbolicity, which is the strongest among the causality conditions. We start with the following weaker notion \cite{bernal06b,minguzzi06c}.
\begin{definition}
A closed cone structure $(M,C)$ is {\em causally simple}  if $J$ is closed and antisymmetric.
\end{definition}

\begin{lemma} \label{mvh}
Let $(M,C)$ be a closed cone structure. If it is causally simple then it is strongly causal.
\end{lemma}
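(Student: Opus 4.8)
The plan is to argue by contradiction, reusing the geometric setup from the proof of Theorem~\ref{mwh}. Suppose $(M,C)$ is causally simple but fails to be strongly causal at some point $x$. I would first extract, exactly as in Theorem~\ref{mwh}, a non-imprisoning neighborhood $U\ni x$ of the type furnished by Proposition~\ref{iiu} together with a sequence of continuous causal curves $\sigma_n$ with endpoints $x_n\to x$ and $z_n\to x$ that are not entirely contained in $U$. The whole strategy is then to turn the escape of these curves into a pair of causal relations $(x,c)$ and $(c,x)$ with $c\neq x$, which is impossible once both closedness and antisymmetry of $J$ are invoked.

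The step I expect to be the main obstacle is justifying the existence of this escaping sequence, since it is here that the hypothesis of non-strong-causality has to be converted into concrete curves. The cleanest route uses Remark~\ref{nff}: every point of a closed cone structure admits a nested neighborhood basis $\{U_k\}$, $\overline{U}_{k+1}\subset U_k$, built from chronological diamonds for a wider flat cone $C^g>C$, in which each $U_k$ is $C$-causally convex inside $U_1$. Taking $U:=U_1$ small enough to lie inside a neighborhood witnessing the failure of strong causality, no $U_k$ can be causally convex in $M$, so for each $k$ there is a continuous causal curve with both endpoints $x_k,z_k\in U_k$ (hence converging to $x$) that leaves $U_k$. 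Because $U_k$ is causally convex in $U_1$, such a curve cannot remain in $U_1$, and so it must leave $U=U_1$. This produces the desired sequence $\sigma_n$.

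Finally I would complete the argument by a short limiting step that does not even need the limit curve theorem. Choosing a coordinate ball $B$ centered at $x$ with $\overline{B}\subset U$, let $c_n\in\partial B$ be the first exit point of $\sigma_n$ from $\overline{B}$; by compactness of $\partial B$ pass to a subsequence with $c_k\to c\in\partial B$, so that $c\neq x$. The portion of $\sigma_k$ before $c_k$ stays in $\overline{B}$ and gives $(x_k,c_k)\in J$, while the remaining portion gives $(c_k,z_k)\in J$. Passing to the limit yields $(x,c)\in\overline{J}$ and $(c,x)\in\overline{J}$; since causal simplicity makes $J$ closed, $\overline{J}=J$, so both relations lie in $J$. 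Antisymmetry of $J$ then forces $x=c$, contradicting $c\in\partial B$. Hence $(M,C)$ is strongly causal.
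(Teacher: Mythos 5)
Your proof is correct, and it follows the same geometric skeleton as the paper's: a non-imprisoning neighborhood $U$ from Prop.~\ref{iiu}, escaping causal curves $\sigma_n$ with endpoints converging to $x$, first exit points $c_n\in\p B$ of a coordinate ball, a convergent subsequence $c_k\to c\ne x$, and then closedness plus antisymmetry of $J$ to reach the contradiction. You diverge in two places, both defensibly. First, you justify the existence of the escaping sequence via Remark~\ref{nff}: since each $U_k$ is $C$-causally convex in $U_1$, a causal curve with endpoints in $U_k$ that leaves $U_k$ must leave $U=U_1$; the paper simply asserts this sequence exists, treating the reduction as standard. Second, and more substantively, you avoid the limit curve theorem altogether. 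The paper obtains $(x,c)\in J$ by invoking the limit curve theorem together with the non-imprisoning property of $U$ (a pattern inherited from Th.~\ref{mwh}, where $J$ is not assumed closed and that machinery is genuinely needed), and only uses $\bar J=J$ for the relation $(c,x)$. You observe that once $J$ is closed, both $(x,c)\in J$ and $(c,x)\in J$ follow directly by taking limits of the pairs $(x_k,c_k),(c_k,z_k)\in J$, so neither the limit curve theorem nor non-imprisonment plays any role in the limiting step. This is a genuine simplification: under the causal simplicity hypothesis the lemma reduces to a pure closed-relation argument, whereas the paper's wording carries over machinery needed only in the weaker setting of Th.~\ref{mwh}.
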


\begin{proof}
If $(M,C)$ is not strongly causal at $x$ then  there is  a non-imprisoning neighborhood $U \ni x $ as in Prop.\ \ref{iiu} and a sequence of
continuous $C$-causal curves $\sigma_n$ of endpoints $x_n,z_n$, with $x_n\to
x$, $z_n \to x$, not entirely contained in $U$. Let $B$, $\bar B\subset U$ be a coordinate ball of $x$.
 Let $c_n\in \p {B}$ be the first
point at which $\sigma_n$ escapes $\bar B$. Since $\p {B}$ is compact
there is $c\in \p B$, and  a subsequence $\sigma_k$ such that
$c_k \to c$. By the limit curve theorem and the non-imprisoning property of $U$, we have $(x,c)\in J$, while  $(c,x)\in \bar J= J$. Thus there is a closed continuous causal curve passing through $x$, a contradiction.
\end{proof}

For a proper cone structure we have the next equivalence.
\begin{proposition} \label{soy}
Let $(M,C)$ be a proper cone structure.  The property  $J=\bar J$ is equivalent to: for every $p\in M$, $J^+(p)$ and $J^{-}(p)$ are closed.
\end{proposition}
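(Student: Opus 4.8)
The plan is to prove the two implications separately, noting first that $J\subset \bar J$ always holds, so that the condition $J=\bar J$ is just the statement that $J$ is closed as a subset of $M\times M$.

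The forward direction is immediate. If $J=\bar J$ is closed and $q_k\in J^+(p)$ with $q_k\to q$, then $(p,q_k)\in J$ and $(p,q_k)\to(p,q)$, so $(p,q)\in J$ and $q\in J^+(p)$; hence $J^+(p)$ is closed, and time dually $J^-(p)$ is closed. The real content is the converse.

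For the converse I would assume every $J^+(p)$ and $J^-(p)$ is closed and show $\bar J\subset J$. Take $(p,q)\in\bar J$, so there are $(p_k,q_k)\in J$ with $p_k\to p$ and $q_k\to q$. The difficulty is that slice-wise closedness is a priori weaker than closedness of the whole relation, because the base point $p_k$ varies along the sequence; the idea is to trade the moving base point for a fixed one using the openness of the chronological relation. Concretely, fix an arbitrary $q'\in I^+(q)$. Since $I$ is open for proper cone structures, $q'\in I^+(q_k)$ for all large $k$, whence $(p_k,q')\in J$ by transitivity of $J$ together with $I\subset J$; that is, $p_k\in J^-(q')$. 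As $J^-(q')$ is closed and $p_k\to p$, we obtain $p\in J^-(q')$, i.e.\ $q'\in J^+(p)$. This holds for \emph{every} $q'\in I^+(q)$.

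To conclude I would then pass to the limit $q'\to q$. By Th.\ \ref{mmz} there is a timelike curve through $q$, so $I^+(q)$ contains points arbitrarily close to $q$, i.e.\ $q\in\overline{I^+(q)}$; choosing $q'_j\in I^+(q)$ with $q'_j\to q$ gives $q'_j\in J^+(p)$, and closedness of $J^+(p)$ forces $q\in J^+(p)$, i.e.\ $(p,q)\in J$. I expect the main obstacle to be exactly this bridging step, since slice-wise closedness does not obviously imply closedness of $J$: the argument exploits properness twice, through the openness of $I$ (to relocate the base point to the fixed $q'$) and through the existence of timelike curves emanating from $q$ (to perform the final $q'\to q$ limit against the closed slice $J^+(p)$).
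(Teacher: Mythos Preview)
Your proof is correct and follows essentially the same approach as the paper's, just in the time-dual direction: the paper fixes $p'\in I^{-}(p)$, uses closedness of $J^+(p')$ to get $p'\in J^-(q)$, and then lets $p'\to p$ against the closed $J^-(q)$, whereas you fix $q'\in I^+(q)$ and proceed symmetrically. The key ingredients---openness of $I$, transitivity of $J$, and two applications of slice-wise closedness---are identical.
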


In the language of topological preordered spaces \cite{nachbin65} the proposition says that under the said assumption the topological  preordered space  $(M,\mathscr{T},J)$ is  $T_2$-preordered if and only if it is $T_1$-preordered. The next proof coincides with the usual one given in Lorentzian geometry.

\begin{proof} The direction which assumes $J$ closed is obvious.
Let  $(p,q)\in \bar J$, so that there are $(p_k,q_k) \to (p,q)$, $(p_k,q_k)\in J$.
Let $p'\in I^{-}(p, U)$ where $U\ni p$ is an open neighborhood. For sufficiently large $k$, $p_k \in I^{+}(p') \subset J^+( p')$, and $q_k\in {J^+(p')}$. Thus $q\in \overline{J^+(p')}=J^+(p')$, that is $p'\in J^-(q)$ and letting $p' \to p$, we get $p\in \overline{J^-(q)}= J^-(q)$ as we wished to prove.
\end{proof}



The next result which will mostly interest us for $R=J$ is \cite[Prop.\ 1.4]{nachbin65}. The short proof is included for completeness.
\begin{theorem} \label{xix}
Let $R\subset M\times M$ be a closed relation, then for every compact set $K$, $R^+(K)$ and $R^{-}(K)$ are closed.
\end{theorem}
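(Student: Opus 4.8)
The plan is to reduce the statement to the purely point-set topological fact that projection along a compact factor is a closed map. Observe first that it suffices to treat $R^+(K)$: the past version follows by applying the same argument to the transposed relation $R^T=\{(q,p)\colon (p,q)\in R\}$, which is again closed, upon noting that $R^-(K)=(R^T)^+(K)$. Thus I would concentrate on showing that $R^+(K)$ is closed, using the identity
\[
R^+(K)=\pi_2\big(R\cap(K\times M)\big),
\]
where $\pi_2\colon M\times M\to M$ denotes projection onto the second factor.

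For the key step I would argue via sequences, which is legitimate since $M$ is a second countable Hausdorff manifold and hence $M\times M$ is metrizable, so closedness can be tested on convergent sequences. Let $q_n\in R^+(K)$ with $q_n\to q$. For each $n$ pick a witness $p_n\in K$ with $(p_n,q_n)\in R$. By compactness of $K$, after passing to a subsequence we may assume $p_n\to p$ for some $p\in K$. Then $(p_n,q_n)\to(p,q)$ in $M\times M$, and since $R$ is closed we obtain $(p,q)\in R$, whence $q\in R^+(K)$. This proves that $R^+(K)$ contains all its limit points, i.e.\ it is closed.

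I do not expect a genuine obstacle here, as the result is entirely topological and uses neither the cone structure nor any causality property: the only ingredient is the extraction of a convergent subsequence from $K$, which is exactly the tube-lemma mechanism making $\pi_2|_{K\times M}$ a closed map. The one point worth flagging is that the sequential argument relies on metrizability of $M\times M$; should one prefer a net-free and sequence-free formulation, the projection-along-a-compact-factor statement can be invoked directly, giving that the closed set $R\cap(K\times M)$ has closed image $R^+(K)$ under $\pi_2$.
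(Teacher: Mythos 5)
Your proof is correct, but it takes a different route from the paper's. The paper proves the statement (citing Nachbin) by a purely point-set argument: given $q\notin R^+(K)$, closedness of $R$ provides for each $p\in K$ a product neighborhood $U_p\times V_p$ disjoint from $R$; a finite subcover $\{U_{p_i}\}$ of $K$ then yields the neighborhood $V=\cap_i V_{p_i}$ of $q$ missing $R^+(K)$ entirely, so the complement of $R^+(K)$ is open. This is exactly the tube-lemma mechanism you mention at the end as an alternative, and it has the advantage of working in an arbitrary topological space, with no countability or metrizability hypotheses. Your main argument instead tests closedness on sequences, which is legitimate here only because $M$ is a second countable Hausdorff manifold, hence metrizable, so that $M\times M$ is metrizable and compactness of $K$ gives sequential compactness; you correctly flag this dependence yourself. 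In the setting of the paper both proofs are equally valid, and your explicit reduction of $R^-(K)$ to $R^+(K)$ via the transposed relation $R^T$ (closed, since $(p,q)\mapsto(q,p)$ is a homeomorphism) is a clean way to handle the past case, which the paper leaves to symmetry. The trade-off is generality: the paper's neighborhood argument is the one that survives outside metrizable spaces, while yours is shorter to state once metrizability is granted.
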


\begin{proof}
Since $R$ is closed if $(p,q)\notin R$, there are open sets $U\ni p$, $V\ni q$, such that $(U\times V)\cap R=\emptyset$. Let $q\notin R^+(K)$, then for every $p\in K$ we can find $U_p\ni p$ and $V_p\ni q$, such that $(U_p\times V_p)\cap R=\emptyset$. Let $\{U_{p_i}\}$ be a finite covering for $K$ and $V=\cap_i V_{p_i}$, then not point in $V$ intersects $R^+(K)$, thus $R^+(K)$ is closed.
\end{proof}

In  \cite[Sec.\ 3]{minguzzi12d} we argued that the notions of causal simplicity and global hyperbolicity might be regarded as pertaining to the more abstract framework of  topological preordered spaces. In this theory a causally simple cone structure would be a {\em causally simple topological ordered space}, namely a topological preordered space $(M,\mathscr{T}, J)$ in which  the preorder $J$  is closed and antisymmetric  ($T_2$-ordered space). A {\em globally hyperbolic topological preordered space} would be just a $T_2$-ordered space for
 which additionally the causally convex hull of compact sets is compact. According to the results of \cite[Sec.\ 3]{minguzzi12d} this structure  has several interesting properties among them that of quasi-uniformizability.


This definition of global hyperbolicity, point ($\gamma$) below,  will be indeed that used in this work for closed cone structures but, in general, demanding directly causal simplicity does not seem to be the most useful way of introducing the concept. So we shall consider different characterizations.

\begin{definition}
A  {\em causal diamond} is a set of the form $J^+(p)\cap J^{-}(q)$ for $p,q\in M$. A {\em causal emerald} is a set of the form $J^+(K_1)\cap J^{-}(K_2)$, where $K_1$ and $K_2$ are compact subsets. A {\em Seifert diamond} is a set of the form $J_S^+(p)\cap J_S^{-}(q)$ for $p,q\in M$.
\end{definition}
The first definition is imported from mathematical relativity, while the second and third are new. We found the terminology  appropriate given the typical cuts of emeralds.

\begin{definition} \label{sot}
A closed cone structure $(M,C)$ is {\em globally hyperbolic} if the following equivalent conditions hold
\begin{itemize}
\item[($\alpha$)] Non-imprisonment and for every bounded set $B$ its causally convex hull $J^{-}(B)\cap J^+(B)$ is bounded.
\item[($\beta$)] Causality and causal emeralds are compact.
\item[($\gamma$)]  Causal simplicity and for every compact set $K$ its causally convex hull $J^{-}(K)\cap J^+(K)$ is compact.
\item[($\delta$)] Stable causality and the Seifert diamonds are compact.
\end{itemize}
\end{definition}

The definition ($\beta$) is that given recently in \cite{bernard16}. In what follows we shall prove the equivalences. We start with the
 next result which improves \cite[Lemma 38]{bernard16}.
\begin{theorem} \label{nnq}
Let $(M,C)$ be a closed cone structure. If causal emeralds are compact then $J$ is closed. Moreover, ($\beta$) and ($\gamma$) in Def. \ref{sot} coincide.
\end{theorem}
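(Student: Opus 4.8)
The plan is to establish the two assertions in order: first that $J$ is closed, then that this forces ($\beta$) and ($\gamma$) to coincide. For the closedness of $J$ I would first prove the fiberwise statements that each $J^-(q)$ (and dually each $J^+(p)$) is closed, and then bootstrap to the full relation using compactness of emeralds. To see that $J^-(q)$ is closed, take $p_j\to p$ with $p_j\in J^-(q)$ and, assuming $p\neq q$ (the case $p=q$ being trivial by reflexivity of $J$), set $K_1=\{p\}\cup\{p_j\}$ and $K_2=\{q\}$; then every connecting continuous causal curve $\sigma_j$ from $p_j$ to $q$ lies in the emerald $E=J^+(K_1)\cap J^-(q)$, which by hypothesis is compact, hence closed. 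Parametrizing by $h$-arc length and applying the limit curve theorem (Th.~\ref{main}), in case (i) one obtains a continuous causal curve from $p$ to $q$ directly, while in case (ii) the future inextendible limit $x^p$ starting at $p$ lies in the closed set $E\subset J^-(q)$, so $x^p(1)\in J^-(q)$ and concatenating $x^p|_{[0,1]}$ with a causal curve from $x^p(1)$ to $q$ shows $p\in J^-(q)$.

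The handling of case (ii) is the main obstacle: arbitrarily long causal curves imprisoned in the compact emerald may appear, and it is precisely the singleton choice $K_2=\{q\}$ that guarantees the inextendible limit curve still reaches $q$ causally, so that concatenation closes the argument. Once $J^-(q)$ and $J^+(p)$ are known to be closed, I would deduce that $J$ itself is closed as follows. Given $(p_k,q_k)\to(p,q)$ in $J$ with $p\neq q$, form for each $N$ the compact tail sets $K_1^N=\{p\}\cup\{p_k:k\ge N\}$ and $K_2^N=\{q\}\cup\{q_k:k\ge N\}$, so that the emerald $E_N=J^+(K_1^N)\cap J^-(K_2^N)$ is compact, hence closed. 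Since $q_k\in E_N$ for $k\ge N$ and $q_k\to q$, we get $q\in E_N\subset J^+(K_1^N)$; hence either $q\in J^+(p)$ (and we are done) or, for every $N$, $q\in J^+(p_{k_N})$ for some index $k_N\ge N$. In the latter case $p_{k_N}\to p$ with $p_{k_N}\in J^-(q)$, so closedness of $J^-(q)$ yields $p\in J^-(q)$, i.e.\ $(p,q)\in J$.

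It remains to prove the equivalence of ($\beta$) and ($\gamma$). For ($\beta$)$\Rightarrow$($\gamma$): the first statement gives $J$ closed; antisymmetry holds because a pair $(p,q),(q,p)\in J$ with $p\neq q$ would concatenate into a closed continuous causal curve, contradicting causality; and the causally convex hull $J^-(K)\cap J^+(K)$ of a compact $K$ is the emerald with $K_1=K_2=K$, hence compact --- so ($\gamma$) holds. For ($\gamma$)$\Rightarrow$($\beta$): causality follows since a closed continuous causal curve is non-constant and thus yields distinct points $p\neq r$ with $(p,r),(r,p)\in J$, contradicting antisymmetry; and for an arbitrary emerald $E=J^+(K_1)\cap J^-(K_2)$ one takes $K=K_1\cup K_2$, so that $E\subset J^+(K)\cap J^-(K)$, which is compact by ($\gamma$), while $E$ is closed by Th.~\ref{xix} applied to the closed relation $J$; being a closed subset of a compact set, $E$ is compact.
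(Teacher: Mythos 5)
Your proof is correct, but it follows a genuinely different route from the paper's for the closedness of $J$. The paper's argument is purely topological and never invokes the limit curve theorem: to show $J^+(x)$ is closed it takes $y_n\to y$, $y_n\in J^+(x)$, forms the compact set $K_2=\{y,y_1,y_2,\dots\}$ (sequence \emph{plus} its limit), and observes that $y_n$ lies in the compact, hence closed, emerald $J^+(x)\cap J^-(K_2)$, so $y$ does too; then, for the full relation, given $(p_k,q_k)\to(p,q)$ it intersects the emeralds $J^+(K_p)\cap J^-(K_q)$ over \emph{all} compact neighborhoods $K_p\ni p$, $K_q\ni q$, uses the finite intersection property to produce a point $r$ with $p\in J^-(r)$ and $q\in J^+(r)$, and concludes by transitivity. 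You instead prove the fiberwise statement by running the connecting curves inside the closed emerald and applying the limit curve theorem, splitting into cases (i)/(ii) and concatenating in case (ii); note that this machinery is actually avoidable with exactly the trick you yourself use in the second step: since $p_j\in J^+(p_j)\subset J^+(K_1)$ and $p_j\in J^-(q)$, the points $p_j$ themselves lie in the closed emerald $E$, so $p\in E\subset J^-(q)$ immediately. Your global step (tail sets $K_1^N$, $K_2^N$ and the dichotomy ``$q\in J^+(p)$ or $q\in J^+(p_{k_N})$'') is a valid alternative to the paper's finite-intersection argument, and is in fact the same ``sequence plus limit'' compactness idea the paper uses fiberwise. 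What the paper's route buys is economy and abstraction: it needs only reflexivity, transitivity and emerald compactness, staying entirely within the closed-relation framework; what yours buys is nothing extra here, though it is self-contained and the limit-curve technique generalizes to situations where the points themselves cannot be placed in a compact causally convex set. Your treatment of the equivalence of ($\beta$) and ($\gamma$) coincides with the paper's.
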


\begin{proof}
Preliminarly, let us prove that $J^+(x)$ is closed for every $x$ \cite[Lemma 38]{bernard16}. Let  $y\in \overline{J^+(x)}$, so there is a sequence $y_n\ge x$ such that $y_n\to y$.  The sets $K_1=\{x\}$ and  $K_2=\{y, y_1,y_2,\cdots\}$ are compact, thus $B=J^+(x)\cap J^{-}(K_2)$ is compact and hence closed. Since $y_n\in B$ we conclude $y\in B$, hence $y\in J^+(x)$. Similarly,  $J^-(x)$ is closed for every $x$.

Let $(p,q)\in \bar J$, we have to show that $(p,q)\in J$. If $p=q$ there is nothing to prove, so let $p\ne q$. There are sequences $p_k\to p$, $q_k\to q$, such that $(p_k,q_k)\in J$.  Let $K_p$ and $K_q$ be compact neighborhoods of $p$ and $q$ respectively. The set $J^+(K_p)\cap J^{-}(K_q)$ is compact and non-empty as it contains  $p_k$ and $q_k$ for sufficiently large $k$, moreover $A=\cap_{K_p,K_q} J^+(K_p)\cap J^{-}(K_q)$ is compact and non-empty, since the intersected family of compact sets satisfies the finite intersection property. Let $r\in A$, then for every $K_p$, $J^{-}(r)\cap K_p\ne \emptyset$, thus $p\in J^{-}(r)$ since $J^-(r)$ is closed. Similarly, $q\in J^+(r)$, and hence $(p,q)\in J$. As for the last statement, the direction ($\beta$) $\Rightarrow$ ($\gamma$) follows from the just proved result. For  the converse, it is well known and pretty easy to prove that for a closed relation, given a compact set $K$, $J^+(K)$ and $J^-(K)$ are closed (Th.\ \ref{xix}), thus $J^+(K_1)\cap J^{-}(K_2)$ is a closed subset of the compact set $J^+(K_1\cup K_2)\cap J^{-}(K_1\cup K_2)$, thus compact.
\end{proof}

%

Let us  prove the stability of global hyperbolicity \cite{minguzzi11e,fathi12,chrusciel12,samann16,bernard16}. With it we shall also end the proof of the equivalence of ($\alpha$), ($\beta$),  ($\gamma$) and ($\delta$).
The next result is quite general and its proof is short and particularly simple. In fact we get also the identity $J_S=J$ in globally hyperbolic cone structures, a result which in previous approaches required separate treatment. The next theorem is important in order to construct smooth time functions,  indeed by opening  slightly the cones one gets the `room' needed by the smoothing procedures based on convolution.


For shortness in this theorem and in its proof we write `globally hyperbolic' in place of `globally hyperbolic in the sense of ($\alpha$) in Def.\ \ref{sot}'.

\begin{theorem}  \label{mom} (Stability of global hyperbolicity)  \\
Let $(M,C)$ be a globally hyperbolic closed cone structure. Then $J_S=J$ and there is a globally hyperbolic locally Lipschitz proper cone structure  $(M,C')$, with  $C'>C$. Moreover, ($\alpha$), ($\beta$), ($\gamma$) and ($\delta$)  in Def.\ \ref{sot} coincide.
\end{theorem}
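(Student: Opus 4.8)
The plan is to settle the equivalence of ($\alpha$), ($\beta$), ($\gamma$) first, then to extract $J_S=J$ and stable causality cheaply, and only afterwards to build the wider globally hyperbolic cone structure, since that construction will rely on the identity $J_S=J$. For ($\alpha$)$\Rightarrow$($\gamma$) I would argue that non-imprisonment forbids closed continuous causal curves (a closed one, run infinitely, produces an inextendible curve trapped in its compact image), so $J$ is causal and hence antisymmetric. To see that $J$ is closed, take $(p_k,q_k)\to(p,q)$ in $J$ with $p\ne q$; for compact neighbourhoods $K_p\ni p$, $K_q\ni q$ the connecting curves eventually lie in the causally convex hull of $K_p\cup K_q$, which is bounded by ($\alpha$), hence in a fixed compact set. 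The limit curve theorem \ref{main} then yields either a connecting $C$-causal curve, giving $(p,q)\in J$, or an inextendible limit curve trapped in that compact set, contradicting non-imprisonment. Thus $(M,C)$ is causally simple, and as $J^\pm(K)$ is then closed (Th.\ \ref{xix}) while the hull is bounded, the causally convex hull of a compact set is compact, which is ($\gamma$). Conversely ($\gamma$)$\Rightarrow$($\alpha$), since causal simplicity gives strong causality (Lemma \ref{mvh}), strong causality forbids imprisonment (an imprisoned curve would, by Th.\ \ref{lof}, produce a self-accumulating curve admitting no small causally convex neighbourhood), and compact hulls of compacts bound the hulls of bounded sets; the equivalence ($\beta$)$\Leftrightarrow$($\gamma$) is Th.\ \ref{nnq}.

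Next I would obtain $J_S=J$ and stable causality. Under ($\gamma$) the relation $J$ is already a closed preorder, so the smallest closed preorder containing $J$ is $J$ itself, i.e.\ $K=J$; being antisymmetric, $K$ is antisymmetric, so $K$-causality holds. Theorem \ref{nio} then gives at once that $(M,C)$ is stably causal and that $J_S=K=J$. Consequently, applying Th.\ \ref{sqd} below a locally Lipschitz causal $\hat C>C$, I may fix locally Lipschitz proper cone structures $C_k\searrow C$ with $C_k<\hat C$, so that every $C_k$ is causal and $J_S=\cap_k J_k$.

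The heart of the proof is the construction of a globally hyperbolic $C'>C$. Fix a compact exhaustion $\{K_m\}$, $K_m\subset \mathrm{Int}\,K_{m+1}$, with $W_m:=J^+(K_m)\cap J^-(K_m)\subset \mathrm{Int}\,K_{m+1}$. I first claim that for each $m$ there is $k(m)$ (increasing) with $J_{C_{k(m)}}^+(K_m)\cap J_{C_{k(m)}}^-(K_m)\subset \mathrm{Int}\,K_{m+1}$. If not, for every $k$ there is a $C_k$-causal curve from $K_m$ through a point outside $\mathrm{Int}\,K_{m+1}$ back to $K_m$; feeding these into Th.\ \ref{main} with $C_k\searrow C$ (their endpoints subconverge in $K_m$ and they do not contract) gives either a $C$-causal curve contained in $W_m\subset \mathrm{Int}\,K_{m+1}$, absurd since the exterior points converge to a point of it, or an inextendible $C$-causal curve which, by the last clause of Th.\ \ref{main} together with $J_S=J$, lies in $W_m$ and is therefore imprisoned, again absurd. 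With the $k(m)$ in hand I assemble $C'$ by interpolating the cones $C_{k(j)}$ over the shells $\overline{K_j}\setminus \mathrm{Int}\,K_{j-1}$ through Prop.\ \ref{doo}, arranging $C<C'<\hat C$, locally Lipschitz proper, and $C'\le C_{k(j)}$ on $M\setminus \mathrm{Int}\,K_j$. The decisive estimate is an excursion argument: any $C'$-causal curve with endpoints in $K_m$ stays in $K_{m+2}$, because each maximal subsegment lying outside $\mathrm{Int}\,K_{m+1}$ is $C_{k(m+1)}$-causal with endpoints on $\partial K_{m+1}$, hence confined to $J_{C_{k(m+1)}}^+(K_{m+1})\cap J_{C_{k(m+1)}}^-(K_{m+1})\subset \mathrm{Int}\,K_{m+2}$. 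Thus the $C'$-causally convex hull of every bounded set is bounded, and since $C'<\hat C$ is stably causal, hence strongly causal by Th.\ \ref{mwh}, hence non-imprisoning, $C'$ satisfies ($\alpha$) and is globally hyperbolic.

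It remains to fold ($\delta$) into the list. The implication ($\alpha$)$\Rightarrow$($\delta$) is immediate: stable causality was just established, and by $J_S=J$ the Seifert diamonds equal the causal diamonds, which are compact. For ($\delta$)$\Rightarrow$($\alpha$), stable causality again yields strong causality and hence non-imprisonment, while boundedness of the causally convex hull of a bounded $B$ follows by the same limit curve mechanism: were $J_S^+(\overline B)\cap J_S^-(\overline B)$ unbounded, a diagonal sequence of $C_k$-causal curves would converge to an inextendible $C$-causal curve lying inside a single compact Seifert diamond $J_S^+(p)\cap J_S^-(q)$, contradicting non-imprisonment. I expect the construction of the single cone structure $C'$ to be the main obstacle: global hyperbolicity is non-local, so one cannot merely widen the cone fiberwise, and the identity $J_S=J$ must be secured beforehand precisely in order to discard, via the compactness of $W_m$, the inextendible ``escaping'' alternative of the limit curve theorem.
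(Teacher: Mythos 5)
Your proof is correct, but it inverts the paper's architecture and differs in the key technical device. The paper builds the wider cone structure \emph{first} and extracts $J_S=J$, stable causality, and the equivalence of $(\alpha)$--$(\delta)$ as by-products; its trick is to use hybrid cone structures $C_n$ that are widened (locally Lipschitz, proper) only on $K_{n+2}$ and coincide with $C$ outside $\mathrm{Int}\, K_{n+3}$, so that any offending curve is automatically trapped in $K_{n+4}$ by the original hull bound, and the inextendible alternative of the limit curve theorem is excluded by non-imprisonment alone, with no knowledge of $J_S$ required. You instead secure causal simplicity directly from $(\alpha)$ by a limit-curve argument (sound, and essentially how the paper proves $(\alpha)\Rightarrow(\gamma)$ at the end of its own proof), then invoke Th.\ \ref{nio} --- in effect Th.\ \ref{mbg} --- to obtain stable causality and $J_S=K=J$ up front, and use this identity together with the last clause of Th.\ \ref{main} to confine the inextendible limit curves inside $W_m$ when constructing $C'$ from the standard globally widened sequence of Th.\ \ref{sqd}. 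This is legitimate and non-circular: the chain $K$-causality $\Rightarrow$ time function $\Rightarrow$ stable causality (Auslander--Levin, Th.\ \ref{ndc}, Th.\ \ref{aob}) proved in Sec.\ \ref{las} nowhere uses Th.\ \ref{mom}, and the paper itself cites Th.\ \ref{nio} inside this very proof for the $(\delta)$ part. What each route buys: the paper's localization trick keeps the core construction elementary and self-contained (only hypothesis $(\alpha)$ and the limit curve theorem are used, with $J_S=J$ falling out as a corollary), whereas your route imports heavier machinery early but makes the construction cleaner, since no hybrid cone structures need to be engineered; your $(\delta)\Rightarrow(\alpha)$ argument, trapping an inextendible limit curve inside a single compact Seifert diamond, likewise replaces the paper's finite-intersection-property argument and works equally well.
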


 Notice that a globally hyperbolic closed cone structure is causally simple due to characterization ($\gamma$) and strongly causal due to Lemma \ref{mvh}. Its stability implies that it is also stably causal.

\begin{proof}
Let $o\in M$ and let $h$ be a complete Riemannian metric. We can find compact sets $\{K_n\}$ such that $J^+(K_n)\cap J^{-}(K_n)  \subset \mathrm{Int} K_{n+1}$, $\cup_n K_n=M$, and $K_n$ contains the $h$-ball of radius $n$ centered at $o\in M$.
For every $n$ it is possible to find a closed cone structure $C_n\ge C$, which on $K_{n+2}$ satisfies $C_n>C$, on $M\backslash \mathrm{Int} K_{n+3}$ satisfies $C_n=C$, is a locally Lipschitz proper cone structure on $ K_{n+2}$, and is such that $J^+_{C_n}( K_n)\cap J^-_{C_n}( K_n)\subset \mathrm{Int} K_{n+1}$. Indeed suppose not, taking a sequence of closed cone structures $\{\tilde C_k\}$, $C\le \tilde C_{k+1} \le \tilde C_k$, $\cap_k \tilde C_k =C$, which on $ K_{n+2}$ satisfy $\tilde C_k>C$, on  $M\backslash \mathrm{Int} K_{n+3}$ satisfy $\tilde C_k=C$ and are  locally Lipschitz proper cone structures on $ K_{n+2}$,   we would have for every $k$ a continuous $\tilde C_k$-causal curve which starts from $K_n$ intersects $\p K_{n+1}$ and returns to $K_n$. The curve cannot enter $M\backslash K_{n+4} $ since it would be $C$-causal on $M\backslash \mathrm{Int} K_{n+3}$, thus contradicting the inclusion $J^+(K_{n+3})\cap J^{-}(K_{n+3})  \subset \mathrm{Int} K_{n+4}$.   By applying the limit curve theorem \ref{main} we would get a continuous $C$-causal curve joining two points in $K_n$ and passing through some point of $\p K_{n+1}$ (case (ii) of theorem \ref{main} does not apply since the inextendible limit curves would be imprisoned in $K_{n+4}$ contradicting the non-imprisonment property contained in the definition of global hyperbolicity), a contradiction with $J^+(K_n)\cap J^{-}(K_n)  \subset \mathrm{Int} K_{n+1}$.

Now let $C'>C$ be a locally Lipschitz proper cone structure such that for every $n$, $C'<C_n, C_{n-1}$ on $K_{n+1}\backslash \mathrm{Int} K_n$. Notice that $C_n>C$ on $K_{n+2}$ and $C_{n-1}>C$ on $K_{n+1}$, so $C'$ does exist. Let us consider a continuous $C'$-causal curve $\sigma$ which starts and ends in $K_n$ but is not entirely contained in $K_n$. Let $m$ be the maximum number such that $\sigma$ intersects $K_{m+2}\backslash \mathrm{Int} K_{m+1}$, thus $\sigma \subset K_{m+2}$. Evidently $m\ge n-1$ since it intersects $\p K_n$. Since $C'<C_m$ on $K_{m+2}\backslash \mathrm{Int} K_{m+1}$ and $C'<C_m$ on $K_{m+1}\backslash \mathrm{Int} K_m$ we have that a segment of $\sigma$ is a continuous $C_m$-causal curve which, unless $m=n-1$, starts and returns to $K_m$ intersecting $\p K_{m+1}$, a contradiction. Thus $J^-_{C'}(K_n)\cap J^+_{C'}(K_n) \subset K_{n+1}$. From this boundedness result, Th.\ \ref{main} and \ref{sqd} it is immediate that $J_S=J$, so $J_S$ is antisymmetric (non-imprisonment implies causality) which implies stable causality. As a consequence, $C'$ in the previous step can be chosen stably causal, hence non-imprisoning (it follows from Th.\ \ref{dxp} and \ref{lof}, or Th.\ \ref{mwh}). Since every bounded set
is contained in $K_n$ for some $n$, the $C'$-causally convex hull of every bounded set is bounded which proves that $(M,C')$ is globally hyperbolic. As for the last statement, $J_S=J$ implies that $J$ is closed. We already know that  ($\beta$) and ($\gamma$) coincide. Since under $(\alpha)$ the causal relation $J$ is closed and since non-imprisonment implies causality, we have causal simplicity. Moreover,
$J^+(K)$ and $J^{-}(K)$ are closed due to Th.\ \ref{xix}, thus the convex hull $J^+(K)\cap J^{-}(K)$ is  closed and bounded, hence compact. We conclude that $(\alpha) \Rightarrow (\gamma)$. As for $(\gamma) \Rightarrow (\alpha)$, causal simplicity implies strong causality (Lemma \ref{mwh}) which implies non-imprisonment. Moreover, if $B$ is bounded $\bar B$ is compact, thus as $J^+(B) \cap J^-(B)\subset J^+(\bar B) \cap J^-(\bar B)$ and the latter is compact, we have that $J^+(B) \cap J^-(B)$ is bounded. Let us prove $(\alpha) \Rightarrow (\delta)$. The first part of this proof proves that $(\alpha)$ implies stable causality and  $J=\bar J$ implies $J=K=J_S$ (Th.\ \ref{nio}), thus the Seifert diamonds coincide with causal diamonds which have been already shown to be compact by the equivalence between $(\alpha)$ and $(\beta)$. Finally, for $(\delta) \Rightarrow (\alpha)$, stable causality implies strong causality which implies non-imprisonment. Suppose by contradiction that there is a bounded set $B$ such that $J^+(B)\cap J^-(B)$ is not compact, then we can find $(p_n,q_n)\in B\times B$ such that $(p_n,q_n)\to (p,q)\in \bar B\times \bar B$ and continuous causal curves $\sigma_n$ such that $\sigma_n$ intersects  $\p B(o,n)$  where $ B(o,n)$ is the ball of radius $n$ centered at some chosen point $o\in M$. Let $p'<p$ and $q<q'$, then by Th.\ \ref{dxp} for $C'>C$, $(p',p)\in I_{C'}$, $(q,q')\in I_{C'}$. For any given $k$ we have for sufficiently large $n$ that $(p', p_n)\in I_{C'}$, $(q_n,q')\in I_{C'}$ and $k\le n$, thus for every $k$ and $C'>C$ there is a continuous $C'$-causal curve connecting $p'$ to $q'$ and intersecting $\p B(o,k)$. But the family of non-empty compact sets $\{\overline{J_{C'}^+(p')}\cap \overline{J_{C'}^-(q')}\cap \p B(o,k)\}_{C'}$ satisfies the finite intersection property, thus $\emptyset\ne \cap_{C'>C} \{ \overline{J_{C'}^+(p')}\cap \overline{ J_{C'}^-(q')}\cap \p B(o,k)\}\subset J^+_S(p')\cap J^{-}_S(q')\cap \p B(o,k)$, where we used Prop.\ \ref{paq}, so the arbitrariness of $k$ implies that a Seifert diamond is not compact, a contradiction.
\end{proof}%

In \cite{minguzzi08b} we introduced the transverse ladder; a useful structure which might be used to clarify the central position of stable causality. Remarkably, it holds true under much weaker differentiability assumptions.

\begin{theorem} \label{tra} (Transverse ladder)\\
Let $(M,C)$ be a closed cone structure. Compactness of causal emeralds
$\Rightarrow$  The causal relation is closed $\Rightarrow$  Reflectivity  $\Rightarrow$
Transitivity of $\bar J$.
\end{theorem}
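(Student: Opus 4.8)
The plan is to verify the three implications separately, each being a short deduction from results already in place. The first implication, compactness of causal emeralds $\Rightarrow$ closedness of $J$, requires no new work: it is exactly the statement of Theorem \ref{nnq}, so I would simply invoke it.

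For the second implication I would argue that closedness of $J$ forces reflectivity. Closedness gives at once $\bar J = J$. Moreover, applying Theorem \ref{xix} to the compact singleton $K=\{p\}$ shows that $J^+(p)=J^+(\{p\})$ and $J^-(p)=J^-(\{p\})$ are closed for every $p$, whence $\overline{J^+(p)}=J^+(p)$ and $\overline{J^-(q)}=J^-(q)$. Therefore $D_f=\{(p,q)\colon q\in \overline{J^+(p)}\}=\{(p,q)\colon q\in J^+(p)\}=J=\bar J$, and dually $D_p=\bar J$. Since both $D_f$ and $D_p$ coincide with $\bar J$, this is precisely the definition of reflectivity.

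For the third implication I would invoke Proposition \ref{jjw} in its case (b): under reflectivity the relations $D_f$ and $D_p$ are transitive. But reflectivity is the assertion that $D_f=D_p=\bar J$, so the transitivity of $D_f$ is the transitivity of $\bar J$, which is exactly what is claimed.

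The whole argument is essentially bookkeeping once Theorems \ref{nnq}, \ref{xix} and Proposition \ref{jjw} are granted. The only step deserving a moment's attention is the passage from ``$J$ is closed'' to ``$J^{\pm}(p)$ is closed'', which is not literally the same statement but follows immediately from Theorem \ref{xix} applied to a one-point set; beyond this I anticipate no genuine obstacle.
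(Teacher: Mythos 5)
Your proof is correct and follows essentially the same route as the paper: the first implication is Theorem \ref{nnq}, the middle one amounts to observing $D_f=D_p=J=\bar J$ when $J$ is closed (which the paper dismisses as clear and you justify, validly, via Theorem \ref{xix} applied to singletons), and the last is Proposition \ref{jjw}(b) combined with $D_f=D_p=\bar J$.
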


\begin{proof}
The first implication is Th.\ \ref{nnq}.
 The causal relation is closed $\Rightarrow$  Reflectivity. It is clear that $D_p=D_f=J=\bar J$.
 Reflectivity  $\Rightarrow$
Transitivity of $\bar J$. Under reflectivity $D_p=D_f=\bar J$, but under reflectivity $D_f$ and $D_p$ are transitive by Prop.\ \ref{jjw}, thus $\bar J$ is transitive.
\end{proof}

For a proper cone structure  the first implication can be improved as follows.
We recall that a causal diamond is a set of the form $J^+(p)\cap J^{-}(q)$.

\begin{lemma} \label{kog}
Let $(M,C)$ be a proper cone structure.
Compactness of the causal diamonds
$\Rightarrow$   the causal relation is closed.
\end{lemma}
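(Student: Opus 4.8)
The plan is to reduce the statement to Proposition \ref{soy}, which asserts that for a proper cone structure the property $J=\bar J$ is equivalent to the closedness of $J^+(p)$ and $J^-(p)$ for every $p\in M$. Hence it suffices to prove that each set $J^+(p)$ is closed, the closedness of $J^-(p)$ following by the time-dual argument. Once both are established, Proposition \ref{soy} immediately yields that $J$ is closed.

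To show $J^+(p)$ is closed I would take $y\in\overline{J^+(p)}$ together with a sequence $y_n\in J^+(p)$ with $y_n\to y$. The difficulty is that the $y_n$ need not lie in any single causal diamond, so the compactness hypothesis cannot be invoked directly on them; the whole point is to trap the tail of the sequence inside one diamond with a fixed apex. This is exactly where properness enters. Since $(M,C)$ is proper, $I^+(y)\neq\emptyset$ (for instance, by Theorem \ref{mmz} there is a timelike curve issuing from $y$), so I may choose a point $q$ with $y\ll q$. Because $I$ is open, $I^-(q)$ is an open neighborhood of $y$, whence $y_n\in I^-(q)$ for all sufficiently large $n$; as $I\subset J$ this gives $y_n\le q$, i.e.\ $y_n\in J^-(q)$.

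Consequently, for all large $n$ the points $y_n$ lie in the causal diamond $J^+(p)\cap J^-(q)$, which is compact by hypothesis and therefore closed. Passing to the limit gives $y\in J^+(p)\cap J^-(q)\subset J^+(p)$, so $J^+(p)$ is closed. The time-dual argument—choosing instead $q'\ll y$ (using $I^-(y)\neq\emptyset$) and the openness of $I^+(q')$, so that the tail lands in the compact diamond $J^+(q')\cap J^-(p)$—shows that $J^-(p)$ is closed. The only real obstacle is this trapping step, and it is precisely the reason the statement requires a proper, rather than merely closed, cone structure: it is properness (through the nonemptiness of $I^\pm(y)$) combined with the openness of $I$ that lets one confine the convergent sequence to a single compact diamond. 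Invoking Proposition \ref{soy} then completes the proof.
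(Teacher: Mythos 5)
Your proof is correct and follows essentially the same route as the paper: both reduce via Proposition \ref{soy} to the closedness of $J^{\pm}$ of points, then trap the tail of the converging sequence inside a single compact (hence closed) causal diamond by picking an auxiliary point chronologically related to the limit point and using the openness of $I$. The paper simply treats the time-dual case ($J^{-}(q)$ closed, choosing $p'\ll p$) while you treat $J^{+}(p)$ (choosing $q\gg y$); the argument, including the role of properness in guaranteeing the nonempty chronological sets, is the same.
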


\begin{proof}
 By Prop.\ \ref{soy} we have to prove that $J^{-}(q)$ is closed for every $q$, the proof that $J^+(q)$ is closed being similar.
 Let $p_k\in J^{-}(q)$  be such that $p_k\to p$, and pick $p' \in I^{-}(p, U)$ where $U\ni p$ is an open neighborhood. Since $I$ is open we have  $p_k\in J^+(p')$ for sufficiently large $k$, moreover $J^+(p')\cap J^{-}(q)$ being  compact is closed, thus $p\in J^+(p')\cap J^{-}(q) \subset J^{-}(q)$.
\end{proof}

\begin{proposition} \label{sos}
A proper cone structure $(M,C)$ is (a) non-imprisoning with bounded causal diamonds  if and only if  it is  (b) causal with compact causal diamonds.
\end{proposition}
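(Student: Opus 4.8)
The plan is to prove the two implications separately, using the limit curve theorem \ref{main} as the main engine in both directions, together with Lemma \ref{kog} and Theorem \ref{lof}.

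\emph{Direction} (b) $\Rightarrow$ (a). Compact causal diamonds are bounded, so only non-imprisonment needs proof. I would argue by contradiction. If some future inextendible continuous causal curve is imprisoned in a compact set, Theorem \ref{lof} produces an inextendible continuous causal curve $\alpha$ with $\bar\alpha=\Omega_f(\alpha)=\Omega_p(\alpha)=:B$. Choose $a<b$; if $\alpha(a)=\alpha(b)$ then $\alpha|_{[a,b]}$ is already a closed causal curve, contradicting causality, so assume $p:=\alpha(a)\ne q:=\alpha(b)$. The segment $\alpha|_{[a,b]}$ gives $(p,q)\in J$. Since $p\in\Omega_f(\alpha)$ there are $t_n\to+\infty$ with $\alpha(t_n)\to p$; for $t_n>b$ the segment $\alpha|_{[b,t_n]}$ gives $(q,\alpha(t_n))\in J$, and as $J$ is closed (Lemma \ref{kog}, using compactness of the causal diamonds) the limit yields $(q,p)\in J$. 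Together with $(p,q)\in J$ and $p\ne q$ this is a closed causal curve, contradicting causality.

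\emph{Direction} (a) $\Rightarrow$ (b). Causality is immediate: a closed causal curve traversed indefinitely is a continuous causal curve of infinite $h$-arc length, hence future inextendible by Cor.\ \ref{dox}, and imprisoned in its compact image, contradicting non-imprisonment. The substantive claim is compactness of the causal diamonds. Since they are bounded by hypothesis, it suffices to show $J$ is closed: then $J^+(p)$ and $J^-(q)$ are closed by Theorem \ref{xix}, so $J^+(p)\cap J^-(q)$ is closed with compact closure, hence compact. To show $J$ closed, take $(p_k,q_k)\to(p,q)$ with $(p_k,q_k)\in J$ and $p\ne q$ (the case $p=q$ being trivial), and continuous causal curves $\sigma_k$ from $p_k$ to $q_k$. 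The key step is to confine all $\sigma_k$ to one compact set: using properness (Theorem \ref{mmz}) choose $p'\ll p$ and $q'\gg q$; by openness of $I$ one has $p'\ll p_k$ and $q_k\ll q'$ for large $k$, so $\sigma_k\subset J^+(p')\cap J^-(q')$, whose closure is compact by the boundedness hypothesis. Applying the limit curve theorem \ref{main}, either (i) a subsequence converges to a continuous causal curve from $p$ to $q$, giving $(p,q)\in J$, or (ii) there is a future inextendible limit curve starting at $p$. In case (ii) that limit curve lies in the compact set $\overline{J^+(p')\cap J^-(q')}$, so it is imprisoned, contradicting non-imprisonment. Hence (i) holds and $J$ is closed.

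The only delicate point, and the expected main obstacle, is ruling out alternative (ii) of the limit curve theorem in the implication (a) $\Rightarrow$ (b): this is exactly where properness (to produce $p'\ll p$, $q'\gg q$) and boundedness of diamonds combine to trap all the $\sigma_k$ in a fixed compact causal diamond, turning the escaping inextendible limit curve into an imprisoned one. Everything else reduces to routine applications of the limit curve theorem and the cited lemmas.
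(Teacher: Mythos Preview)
Your proof is correct and follows essentially the same strategy as the paper: for (b)$\Rightarrow$(a) you use Theorem~\ref{lof} and the closedness of $J$ from Lemma~\ref{kog} to produce a closed causal curve, exactly as the paper does; for (a)$\Rightarrow$(b) both you and the paper use the limit curve theorem together with non-imprisonment to rule out the inextendible alternative.

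There is a minor structural difference in (a)$\Rightarrow$(b) worth noting. You first prove that the relation $J$ is closed by enclosing the approximating curves $\sigma_k$ in a larger diamond $J^+(p')\cap J^-(q')$ with $p'\ll p$, $q'\gg q$ (invoking properness via Theorem~\ref{mmz}), and then deduce compactness of each diamond from closedness of $J$. The paper instead argues directly on a fixed diamond $J^+(p)\cap J^-(q)$: it takes $r_k\in J^+(p)\cap J^-(q)$ with $r_k\to r$, applies the limit curve theorem to the concatenated curves $p\to r_k\to q$ (which already lie in $\overline{J^+(p)\cap J^-(q)}$), and concludes $r\in J^+(p)\cap J^-(q)$. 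The paper's route is marginally more economical since it does not invoke properness in this direction, whereas your route yields the closedness of $J$ as an explicit intermediate result.
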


\begin{proof}
(a) $\Rightarrow$ (b). Let $(p,q)\in J$ and let $r_k\in J^+(p)\cap J^{-}(q)$, $r_k \to r\in M$.
The sequence of continuous causal curves connecting $p$ to $r_k$ and $r_k$ to $q$ must converge to a continuous causal curve connecting $p$ to $q$ passing through $r$, otherwise by the limit curve theorem \ref{main} there would be a future inextendible continuous causal curve starting from $p$ future imprisoned in $\overline{J^+(p)\cap J^{-}(q)}$, a contradiction.

(b) $\Rightarrow$ (a). We know from Th.\ \ref{kog} that the causal relation is closed. Suppose that there is a future imprisoned continuous causal curve, then by Th.\ \ref{lof} there exists a future imprisoned continuous causal curve $\alpha$ such that $\alpha\subset\Omega_f(\alpha)$. Pick a point $p\in \alpha$, and a point $q\in \alpha\backslash \{p\}\cap J^+(p)$ then $p\in \overline{J^+(q)}\cap J^{-}(q)=J^+(q)\cap J^{-}(q)$ so causality is violated, a contradiction.
\end{proof}


The next result is \cite[Prop.\ 1]{bernard16} and shows that in the proper case the definition of global hyperbolicity can be expressed with the compactness of causal diamonds. We include the proof for completeness.

\begin{proposition} \label{xxp}
Let $(M,C)$ be a proper cone structure such that causal diamonds are compact. Then causal emeralds are compact.
\end{proposition}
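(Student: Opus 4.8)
The plan is to exhibit the emerald $E=J^+(K_1)\cap J^-(K_2)$ as a \emph{closed} subset of a \emph{finite} union of causal diamonds, each of which is compact by hypothesis. As a first step I would record that $E$ is closed: by Lemma \ref{kog} the compactness of causal diamonds forces $J$ to be closed, and then Theorem \ref{xix}, applied to the closed relation $J$, gives that $J^+(K_1)$ and $J^-(K_2)$ are closed; hence their intersection $E$ is closed.

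The heart of the matter is that $E=\bigcup_{p\in K_1,\,q\in K_2}\big(J^+(p)\cap J^-(q)\big)$ is an infinite union of compact diamonds indexed by the compact set $K_1\times K_2$, and such a union need not be compact. To reduce to a finite union I would open the diamonds slightly by replacing the base points with chronologically earlier (resp.\ later) ones. For each $p\in K_1$, properness together with Theorem \ref{mmz} furnishes a point $p'\ll p$ (take a timelike curve through $p$ and a point on it to the past), and since $I$ is open the set $I^+(p')$ is an open neighbourhood of $p$. The family $\{I^+(p')\}_{p\in K_1}$ thus covers $K_1$, so I extract a finite subcover associated with points $p'_1,\dots,p'_m$. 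Dually I cover $K_2$ by finitely many open sets $I^-(q'_1),\dots,I^-(q'_l)$, with each $q'_j$ lying to the chronological future of the corresponding point of $K_2$.

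Now for any $r\in E$ there are $p\in K_1$ and $q\in K_2$ with $(p,r)\in J$ and $(r,q)\in J$; by the covering $p\in I^+(p'_i)$ and $q\in I^-(q'_j)$ for some $i,j$, i.e.\ $p'_i\ll p$ and $q\ll q'_j$. Since $I\subset J$ and $J$ is transitive, this yields $(p'_i,r)\in J$ and $(r,q'_j)\in J$, that is $r\in J^+(p'_i)\cap J^-(q'_j)$. Hence $E\subset\bigcup_{i,j}\big(J^+(p'_i)\cap J^-(q'_j)\big)$, a finite union of causal diamonds, which is compact by hypothesis; being a closed subset of a compact set, $E$ is compact, as desired. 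The only genuine obstacle is precisely the passage from the infinite to the finite union: the trick of enlarging each diamond by pushing its base points into the chronological past and future keeps it a diamond (hence compact) while producing \emph{open} neighbourhoods that can be fed into a compactness argument, and this step relies essentially on the openness of $I$ and on properness, with no need for causality or for the limit curve theorem.
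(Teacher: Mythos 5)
Your proof is correct and follows essentially the same route as the paper: closedness of $J$ via Lemma \ref{kog} and Theorem \ref{xix}, then a finite cover of the compact bases by open chronological sets anchored at chronologically displaced points (the paper invokes the (*) property where you invoke Theorem \ref{mmz}, which amounts to the same thing for proper cone structures), and finally the inclusion of the emerald in a finite union of compact diamonds. The only cosmetic difference is that the paper covers $K_1\cup K_2$ by sets of the form $I^+(q_i)\cap I^-(r_i)$ while you cover $K_1$ and $K_2$ separately by $I^+(p'_i)$ and $I^-(q'_j)$, which changes nothing essential.
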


This result does not hold for closed cone structures, consider $M=\mathbb{R}^2\backslash (0,0)$ where the cone $C$ is generated by the vector field $\p_y$.

\begin{proof}
By theorem \ref{kog} $J$ is closed, and it is well known and pretty easy to prove that for a closed relation, given a compact set $K$, $J^+(K)$ and $J^-(K)$ are closed (Th.\ \ref{xix}). So it will be sufficient to prove that $J^+(K)\cap J^-(K)$ is bounded for every compact set $K$, since the fact that $J^+(K_1)$ and $J^-(K_2)$ are closed and the boundedness of $J^+(K_1\cup K_2)\cap J^-(K_1\cup K_2)$ are enough to prove the claim.
Let $\tilde K$ be a compact set such that $K\subset \mathrm{Int} \tilde K$. Using the (*) property for each $p\in K$ we can find $q,r\in \tilde K$ such that $p\in I^+(q)\cap I^-(r)$, so we can find a finite covering of $K$, given by sets of the form $I^+(q_i)\cap I^-(r_i)$, $i=1,\cdots,s$. Then $J^+(K)\cap J^-(K)\subset \cup_{i,j} J^+(q_i)\cap J^-(r_j)$, which being the union of compact sets is compact.
\end{proof}

\begin{corollary}
 In a proper cone structure we can just define global hyperbolicity with the equivalent conditions mentioned in Prop.\ \ref{sos}.
\end{corollary}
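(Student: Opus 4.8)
The plan is to show that the two conditions (a) and (b) of Prop.~\ref{sos}---which are already known to be equivalent to each other---each coincide with global hyperbolicity as formulated in Def.~\ref{sot}. Since Th.~\ref{mom} has established that ($\alpha$), ($\beta$), ($\gamma$) and ($\delta$) all coincide, it suffices to compare with a single one of them, and condition ($\beta$), namely causality together with compactness of causal emeralds, is the most convenient match because it is phrased in the same compactness language as (b).

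First I would observe that the implication from global hyperbolicity to (b) is immediate. Indeed, a causal diamond $J^+(p)\cap J^{-}(q)$ is the causal emerald $J^+(K_1)\cap J^{-}(K_2)$ obtained from the singleton choices $K_1=\{p\}$ and $K_2=\{q\}$; hence under ($\beta$) every causal diamond is compact, while causality holds by hypothesis, which is precisely condition (b).

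For the converse I would invoke Prop.~\ref{xxp}: in a proper cone structure the compactness of causal diamonds already forces the compactness of causal emeralds. Thus condition (b)---causality plus compact causal diamonds---upgrades to causality plus compact causal emeralds, which is exactly ($\beta$). Combining the two directions gives (b) $\Leftrightarrow$ ($\beta$), and then Prop.~\ref{sos} supplies (a) $\Leftrightarrow$ (b), so both conditions of Prop.~\ref{sos} characterise global hyperbolicity.

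There is no serious obstacle here; the corollary is essentially a repackaging of Prop.~\ref{xxp}. The one point worth keeping in mind is that the passage from diamonds to emeralds is exactly where properness is used, through the (*) selection property exploited in the proof of Prop.~\ref{xxp}, and that this step fails for merely closed cone structures, as the example $M=\mathbb{R}^2\setminus\{(0,0)\}$ recorded after Prop.~\ref{xxp} shows. Consequently the simplified, diamond-based definition of global hyperbolicity is available only in the proper setting, which is precisely the scope of the corollary.
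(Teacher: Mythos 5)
Your proof is correct and follows essentially the same route the paper intends: the corollary is stated without proof precisely because it is the immediate combination of Prop.~\ref{sos} (equivalence of (a) and (b)), Prop.~\ref{xxp} (compact diamonds imply compact emeralds in the proper case), and the already-established equivalence of the conditions in Def.~\ref{sot}, with the trivial converse that diamonds are emeralds for singleton compact sets. Your closing remark on where properness enters, and on the counterexample $M=\mathbb{R}^2\setminus\{(0,0)\}$ for merely closed cone structures, matches the paper's own commentary following Prop.~\ref{xxp}.
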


In the regular case the characterization of global hyperbolicity through the property \ref{sos}(b) was introduced in \cite{bernal06b,minguzzi06c} as an improvement over the classical definition \cite{hawking73} which assumed strong causality in place of causality. We introduced the characterization \ref{sos}(a) in \cite{minguzzi08e} and proved that it is particularly useful, for instance in the study of the stability of global hyperbolicity \cite{minguzzi11e}. The more general definitions for closed cone structures Def.\ \ref{sot} ($\alpha$) and ($\beta$) are  clearly inspired by those. Definition ($\alpha$) is quite robust, in fact it is that used to prove the stability of global hyperbolicity. Furthermore, it makes it clear that by narrowing the cones one does not spoil global hyperbolicity as both properties entering ($\alpha$) are preserved.  Charaterization ($\beta$) is also quite convenient as the property there mentioned enters nicely the transverse ladder. As for ($\delta$), stable causality is equivalent to the antisymmetry of the Seifert relation, thus global hyperbolicity can be expressed in a simple way using just the  Seifert relation, a result which is pretty satisfying given the importance of this relation for causality.

\begin{remark} \label{roc}
It is clear that the neighborhood constructed in Prop.\ \ref{iiu} or Th.\ \ref{dao} is globally hyperbolic as it is so for a Minkowski metric with wider cones.
\end{remark}

\begin{example}
A  closed cone structure which satisfies the properties of Prop.\ \ref{sos} need not be causally simple. Consider again the manifold $\mathbb{R}^2\backslash \{(0,0)\}$ of coordinates $(x,t)$, endowed with the stationary round cone structure $C$ determined by the vector field $\p_t$.
\end{example}

\begin{definition} Let $(M,C)$ be a closed cone structure.
 A {\em Cauchy hypersurface} is an acausal topological hypersurface $S$ such that  $D(S)=M$.  A {\em stable Cauchy hypersurface} is a Cauchy hypersurface for $(M,C')$ where $C'>C$ is a locally Lipschitz proper cone structure.
\end{definition}

\begin{proposition}
Let $(M,C)$ be a closed cone structure. Any two stable $C^k$, $0\le k\le \infty$,  Cauchy hypersurfaces are $C^k$ diffeomorphic. For a proper cone structure any two $C^k$, $0\le k\le \infty$, Cauchy hypersurfaces are $C^k$ diffeomorphic.
\end{proposition}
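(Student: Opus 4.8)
The plan is to realize the $C^k$ diffeomorphism $S_1\to S_2$ as the map induced by the flow of a single timelike vector field. I would treat the proper case first, as it is the engine, and then reduce the closed/stable case to it. Fix a complete Riemannian metric $h$. Since $(M,C)$ is proper, condition (*) provides a continuous field $\tilde C\le C$ of proper cones; choosing at each point a vector in $(\mathrm{Int}\,\tilde C)_x$, extending it as a constant field in local coordinates, and gluing with a partition of unity exactly as in Prop.\ \ref{doo}, convexity of $(\mathrm{Int}\,\tilde C)_x$ yields a smooth field $V$ with $V(x)\in(\mathrm{Int}\,\tilde C)_x\subset(\mathrm{Int}\,C)_x$ for every $x$. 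Its normalization $\hat V=V/\Vert V\Vert_h$ is a smooth unit timelike vector field; being of unit $h$-speed it is complete, and its integral curves, having infinite $h$-arc length in both directions, are inextendible continuous causal curves by Cor.\ \ref{dox}. Let $\phi$ be its flow. Since $S_i$ is Cauchy, $D(S_i)=M$, so the inextendible causal orbit through any of its points meets $S_i$, and acausality of $S_i$ forces exactly one intersection. Hence the maps $F\colon S_1\to S_2$ and $G\colon S_2\to S_1$ sending a point to the unique intersection of its $\hat V$-orbit with the other hypersurface are well defined, and clearly $G\circ F=\mathrm{id}$, $F\circ G=\mathrm{id}$, so $F$ is a bijection.

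The core point is the regularity of $F$, which reduces to transversality of $\hat V$ to the hypersurfaces. I would prove that a timelike vector is never tangent to an acausal $C^1$ hypersurface $S$. Working near $q\in S$ as in Th.\ \ref{jdp}, $S\cap U=\{u=0\}$ separates $U$ into the open sets $J^\pm(S\cap U,U)\setminus S=\{\pm u>0\}$ for a $C^k$ defining function $u$. For any future timelike $w$ the integral curve through $q$ with velocity $w$ enters $I^+(q,U)\subset\{u>0\}$ immediately, whence $du_q(w)\ge0$; since $(\mathrm{Int}\,C)_q$ is open and $du_q\ne0$, a functional that is non-negative on this open cone must be strictly positive there (otherwise $du_q(w_0)=0$ with $w_0$ interior would force $du_q(v)=0$ for all $v$), so $du_q(\hat V_q)>0$ and $\hat V$ is transverse to $S$. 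For $k\ge1$ the implicit function theorem applied to $t\mapsto u(\phi_t(p))$, whose $t$-derivative is $du(\hat V)\ne0$, shows the hitting time $\tau$ is $C^k$; thus $F(p)=\phi_{\tau(p)}(p)$ and its inverse are $C^k$, so $F$ is a $C^k$ diffeomorphism.

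For $k=0$ there is no tangent space, but $S$ is still a locally Lipschitz graph (Th.\ \ref{jdp}) separating $U$ into the open sets $J^\pm(S\cap U,U)\setminus S$, so each orbit crosses $S$ cleanly once; continuity of $\tau$ then follows from uniqueness of the crossing and closedness of $S$ (if $p_n\to p$ and $\tau(p_n)\to\tau^\ast$ along a subsequence, then $\phi_{\tau^\ast}(p)\in S$, forcing $\tau^\ast=\tau(p)$), and $F$ is a homeomorphism. This gives the proper case.

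For the closed case, $S_1$ is Cauchy for a locally Lipschitz proper $C_1'>C$ and $S_2$ for a locally Lipschitz proper $C_2'>C$. Since $C_x\subset(\mathrm{Int}\,C_1')_x\cap(\mathrm{Int}\,C_2')_x$ for every $x$, the construction of Th.\ \ref{ddo} produces a locally Lipschitz proper cone structure $C'$ with $C<C'<C_1'$ and $C'<C_2'$. Applying the proper-case construction to $(M,C')$ yields a smooth complete unit $C'$-timelike field $\hat V$; the inclusions $C'<C_1',C_2'$ make $\hat V$ simultaneously $C_1'$- and $C_2'$-timelike, so its orbits are inextendible $C_1'$- and $C_2'$-causal curves and therefore meet the $C_1'$-Cauchy $S_1$ and the $C_2'$-Cauchy $S_2$ exactly once each. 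The flow argument above then produces the desired $C^k$ diffeomorphism $S_1\to S_2$, the proper case being the instance $C'=C_1'=C_2'=C$. I expect the one genuinely delicate step to be the regularity of the hitting-time function: transversality secures the $C^k$ case via the implicit function theorem, while the $C^0$ case must instead be handled through the separation/uniqueness argument, since no tangent spaces are available there.
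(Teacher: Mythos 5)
Your proof is correct and follows essentially the same route as the paper's: reduce the closed case to a common locally Lipschitz proper cone structure wider than $C$ but narrower than both auxiliary structures, and then let the flow of a smooth timelike vector field carry $S_1$ onto $S_2$. The transversality/implicit-function-theorem argument for $k\ge 1$ and the separation argument for $k=0$ that you spell out are precisely what the paper compresses into the phrase ``in the usual way''.
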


Here ``$C^0$ diffeomorphic" must be read as ``homeomorphic''.

\begin{proof}
Let $S_1$ and $S_2$ be Cauchy hypersurfaces for the locally Lipschitz proper cone structures  $C_1>C$, and $C_2>C$, respectively. Then we can find locally Lipschitz proper cone structure  $C_3>C$ such that, $C_3<C_1,C_2$, thus both $S_1$ and $S_2$ are Cauchy hypersurfaces for $(M,C_3)$. Let $V$ be a smooth $C_3$-timelike vector field. Its integral curves intersect $S_1$ and $S_2$ precisely once, so its flow can be used to establish a $C^k$ diffeomorphism between $S_1$ and $S_2$ in the usual way.
The argument for a proper cone structure is simpler, just let $V$ be a smooth timelike vector field for $(M,C)$ and argue as above.
\end{proof}

\begin{definition} Let $(M,C)$ be a closed cone structure.
A topological hypersurface $S$ is {\em stably acausal} if it is acausal with respect to $(M,C')$ where $C'>C$ is a locally Lipschitz proper cone structure.
\end{definition}

The notion of stable acausality is a kind of replacement for the `spacelikeness' notion in the smooth setting.


\begin{theorem} \label{sts}
Let $(M,C)$ be a closed cone structure. Stable Cauchy hypersurfaces and stably acausal Cauchy hypersurfaces coincide.
\end{theorem}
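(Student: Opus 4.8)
The plan is to prove the two inclusions separately, the first being essentially formal and the second requiring a limit-curve argument.

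\emph{Stable Cauchy $\Rightarrow$ stably acausal Cauchy.} Suppose $S$ is a Cauchy hypersurface for a locally Lipschitz proper cone structure $C_1>C$. Since every $C$-causal curve is $C_1$-causal, the $C_1$-acausality of $S$ gives at once its $C$-acausality and, by definition, its stable $C$-acausality (with witness $C_1$). It then remains to check $D_C(S)=M$. Let $\gamma$ be an inextendible $C$-causal curve; parametrizing it by $h$-arc length with $h$ complete, Corollary \ref{dox} says its $h$-arc length is infinite. As $\gamma$ is also $C_1$-causal with infinite $h$-arc length, Corollary \ref{dox} applied to $C_1$ shows that $\gamma$ is $C_1$-inextendible, hence it meets $S$ because $D_{C_1}(S)=M$. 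Thus $D_C(S)=M$ and $S$ is a stably acausal Cauchy hypersurface.

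\emph{Stably acausal Cauchy $\Rightarrow$ stable Cauchy.} Now $S$ is a $C$-Cauchy hypersurface which is $C''$-acausal for some locally Lipschitz proper $C''>C$. From the $C$-Cauchy property and $C$-acausality one has the disjoint splitting $M=I^-_C(S)\sqcup S\sqcup I^+_C(S)$ into $S$ and two open sets, and $(M,C)$ is globally hyperbolic (which follows from this splitting together with the limit curve theorem, yielding compactness of causal emeralds). By the stability of global hyperbolicity (Theorem \ref{mom}) and Theorem \ref{sqd} one may fix a decreasing sequence of locally Lipschitz proper, globally hyperbolic cone structures $C<C_{k+1}<C_k<C''$ with $\bigcap_k C_k=C$. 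The goal is to show that $S$ is a $C_k$-Cauchy hypersurface for $k$ large enough; since $C_k$ is locally Lipschitz proper, that is exactly the desired conclusion.

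Arguing by contradiction, suppose that for infinitely many $k$ the equality $D_{C_k}(S)=M$ fails; as $S$ is $C_k$-acausal (because $C_k<C''$), this produces an inextendible $C_k$-causal curve $\gamma_k$ missing $S$. The first step is to confine $\gamma_k$ to one side of $S$: using $C_k$-acausality one checks $I^+_{C_k}(S)\cap I^-_C(S)=\emptyset$ and $I^-_{C_k}(S)\cap I^+_C(S)=\emptyset$ (a point in a forbidden intersection would let one concatenate a $C$-timelike and a $C_k$-timelike segment, by Prop.\ \ref{oaw}, into a $C_k$-timelike curve joining two points of $S$, violating $C_k$-acausality or $C_k$-causality); hence, being connected and disjoint from $S$, each $\gamma_k$ lies entirely in one of the open sets $I^\pm_C(S)$, say, after passing to a subsequence, $\gamma_k\subset I^+_C(S)$. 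The second step is the limit: anchoring the $\gamma_k$ and invoking the limit curve theorem for the decreasing sequence $C_k\downarrow C$ (Theorem \ref{main}), together with the non-imprisonment supplied by global hyperbolicity, produces an inextendible $C$-causal curve $\gamma\subset\overline{I^+_C(S)}$ to which a subsequence converges uniformly on compact subsets. Since $S$ is $C$-Cauchy, $\gamma$ meets $S$ and, crossing it, its past continuation enters the open set $I^-_C(S)$ (a just-past point is in $J^-_C(S)$, not on $S$ by acausality, and $J^-_C(S)\cap I^+_C(S)=\emptyset$); but such points of $\gamma$ are uniform limits of points of the $\gamma_k$, which all lie in the disjoint open set $I^+_C(S)$ — a contradiction. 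This establishes the claim and hence the theorem.

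The main obstacle is precisely the anchoring in the last step: one must ensure that the curves $\gamma_k$ (equivalently, the generators of the horizons $H^+_{C_k}(S)$, cf.\ Theorem \ref{juf}) admit an accumulation point, so that the limit curve theorem applies and the resulting limit is genuinely inextendible rather than imprisoned, and so that the pointwise comparison with $I^-_C(S)$ is legitimate under case (i) or (ii) of Theorem \ref{main}. This is exactly where global hyperbolicity of the $C_k$ (non-imprisonment and boundedness of causally convex hulls) is indispensable, in the same spirit as the proof of Theorem \ref{mom}.
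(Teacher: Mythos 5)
Your easy direction is fine (indeed more careful than the paper, which dismisses it as clear), but the converse has a genuine gap which you yourself flag and then leave unresolved: the anchoring. Your contradiction hypothesis only yields, for infinitely many $k$, an inextendible $C_k$-causal curve $\gamma_k$ missing $S$, with no control whatsoever on where these curves sit; they may escape every compact set, in which case no subsequence of points on them accumulates, Theorem \ref{main} produces no limit curve, and no contradiction follows. Global hyperbolicity of the $C_k$ does not repair this: non-imprisonment and compactness of emeralds constrain each individual curve, not the location of a sequence of curves belonging to different cone structures. This is precisely why the paper's proof is organized differently: the compact anchor is built into the inductive statement. For each term $K_n$ of a compact exhaustion one finds $C_n$ (with $C<C_n<C'$, $C_n<C_{n-1}$) such that every inextendible $C_n$-causal curve \emph{intersecting $K_n$} meets $S$; the curves furnished by the contradiction hypothesis for this localized statement all meet the fixed compact $K_n$, so the limit curve theorem applies and contradicts $D_C(S)=M$. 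The global cone structure is then obtained not as one member of a pre-fixed sequence but by gluing: $\tilde C$ is chosen with $C<\tilde C<C_n$ on each annulus $K_n\setminus \mathrm{Int}\, K_{n-1}$, and any inextendible $\tilde C$-causal curve, meeting some $K_k$ with $k$ minimal, avoids $\mathrm{Int}\,K_{k-1}$, hence is $C_k$-causal and meets $S$. Note also that your target claim --- that $S$ is $C_k$-Cauchy for some member of a \emph{pre-fixed} sequence $C_k$ decreasing to $C$ --- is strictly stronger than the theorem and is nowhere justified; the paper's $\tilde C$ is allowed to narrow towards $C$ at spatial infinity, which no fixed $C_k$ does.

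A secondary flaw: your splitting $M=I^-_C(S)\sqcup S\sqcup I^+_C(S)$ is false for closed cone structures, since $\mathrm{Int}\,C$, and hence $I_C$, may be empty --- e.g.\ the cone field generated by $\partial_y$ on $\mathbb{R}^2$, where $S=\{y=0\}$ is a Cauchy hypersurface but $I_C=\emptyset$. The correct statement uses the wider proper structure: every point of $M\setminus S$ lies on an inextendible $C$-causal curve, hence belongs to $J^\pm_C(S)\setminus S\subset I^\pm_{C''}(S)$ by Theorem \ref{dxp}, and the $C''$-acausality of $S$ makes $I^+_{C''}(S)$ and $I^-_{C''}(S)$ disjoint. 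This part of your argument is repairable; the anchoring, within the setup you chose, is not.
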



\begin{proof}
It is clear that every stable Cauchy hypersurface is stably acausal. For the converse, let $S$ be a stably acausal Cauchy hypersurface. There is a locally Lipschitz proper cone structure $C'>C$, such that $S$ is $C'$-acausal. Let $o\in M$, let $h$ be a complete Riemannian metric and let $K_n=\bar B(o,n)$ be a sequence of compact sets so that $K_n\subset \mathrm{Int} K_{n+1}$, $\cup_n K_n=M$. We can define inductively locally Lipschitz proper cone structures $C_n$, $C<C_n<C'$, $C_n<C_{n-1}$, in such a way that every inextendible continuous $C_n$-causal curve intersecting $K_n$ intersects $S$. In fact, if the inductive step were not allowed considering the limit $C_n \to C$ as in Th.\ \ref{sqd},  by the limit curve theorem \ref{main}  there would be an inextendible continuous $C$-causal curve intersecting $K_n$ but not $S$, a contradiction. Let $\tilde C$ be a locally Lipschitz proper cone structure chosen so that $C<\tilde C<C_n$ on $K_n\backslash \mathrm{Int} K_{n-1}$. Let $\sigma$ be an inextendible continuous $\tilde C$-causal curve, and let $k$ be the minimum number such that $\sigma \cap K_k\ne \emptyset$. Then $\sigma$ is $C_k$-causal and intersects $K_k$, thus it intersects $S$. There can only be one intersection since $S$ is $C'$-acausal and hence $\tilde C$-acausal. The arbitrariness of $\sigma$ proves that it is a Cauchy hypersurface of $\tilde C>C$.
\end{proof}

The next result is classical and in the present upper semi-continuous generalization can be found in \cite{bernard16}. Here we add the relationship with the notion of stable Cauchy hypersurface.  Our proof is closer in spirit to that of Lorentzian geometry  \cite{geroch70,hawking73} but the construction of Geroch's volume function is  really worked out on a wider cone structure.

\begin{theorem} \label{ger}
 Every globally hyperbolic closed cone structure $(M,C)$ admits a Cauchy time function (which is Geroch's time function $t$ for a wider locally Lipschitz proper cone structure).  So  every globally hyperbolic closed cone structure is a domain of  dependence, i.e.\ there is a stable Cauchy hypersurface $S$,  $M=D(S)$.
   Moreover, $M$ is topologically a product $\mathbb{R} \times S$ where  the first projection is $t$ and $S$ is smoothly diffeomorphic to the stable Cauchy hypersurface.
  For a proper cone structure  the fibers of the second projection can be chosen to be the integral timelike curves of a smooth timelike vector field.
 \end{theorem}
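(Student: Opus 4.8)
The plan is to carry out the entire classical Geroch volume-function argument not on $(M,C)$ itself but on a wider, better-behaved cone structure. First I would invoke Theorem \ref{mom}: global hyperbolicity of $(M,C)$ furnishes a globally hyperbolic \emph{locally Lipschitz} proper cone structure $C'>C$, together with $J_S=J$. The detour to $C'$ is forced by the very remark preceding the theorem, namely that the volume functions $p\mapsto\mu(I^\pm(p))$ need not be continuous under mere upper semi-continuity. For the locally Lipschitz $C'$ this pathology disappears: there is no causal bubbling, so $\overline{I'^{\pm}(p)}=\overline{J'^{\pm}(p)}$ (Th.\ \ref{soa}), and the achronal boundaries $\partial I'^{\pm}(p)=\partial J'^{\pm}(p)$ are locally Lipschitz topological hypersurfaces (Th.\ \ref{aoq}), hence null for any measure absolutely continuous with respect to the coordinate Lebesgue measures.

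Second, I would fix such a finite Borel measure $\mu$ with $\mu(M)=1$ and set $t^{\pm}(p)=\mu(I'^{\pm}(p))$. Lower semi-continuity of $t^{\pm}$ is automatic from the openness of $I'$. For upper semi-continuity I would use that $(M,C')$ is causally simple (characterization $(\gamma)$ in Th.\ \ref{mom}), so $J'$ is closed: if $p_n\to p$ then $\limsup_n \mathbf{1}_{I'^{-}(p_n)}\le \mathbf{1}_{J'^{-}(p)}$ pointwise, and the reverse Fatou lemma with the $\mu$-null boundary gives $\limsup_n t^{-}(p_n)\le\mu(J'^{-}(p))=t^{-}(p)$. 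Strict monotonicity of $t^{\pm}$ along $C'$-causal curves follows from openness of $I'$ and $J'\circ I'\cup I'\circ J'\subset I'$. Then Geroch's function $t=\ln(t^{-}/t^{+})$ is continuous and strictly increasing along $C'$-causal curves; since global hyperbolicity forbids imprisonment, $\cap_s I'^{+}(\gamma(s))=\emptyset$ along any future-inextendible $C'$-causal $\gamma$, so $t^{+}\to0$ while the increasing $t^{-}$ stays bounded below, whence $t\to+\infty$ (and dually $t\to-\infty$ in the past). Thus $t$ is a $C'$-Cauchy time function, and $S:=t^{-1}(0)$ is a $C'$-acausal topological hypersurface met exactly once by each inextendible $C'$-causal curve.

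Third, I would transfer these properties back to $C$. By Theorem \ref{dxp}, $\overline{J_C}\subset I_{C'}\cup\Delta$, so every $C$-causal curve is $C'$-timelike and $t$ increases strictly along it, i.e.\ $t$ is a time function for $C$. Inextendibility is the divergence of $h$-arc length (Cor.\ \ref{dox}) and so is cone-independent, whence an inextendible $C$-causal curve is an inextendible $C'$-causal curve along which $t$ already exhausts $\mathbb{R}$; therefore $t$ is $C$-Cauchy. Since $S$ is $C'$-acausal it is $C$-acausal and stably acausal, and because $D_{C'}(S)=M$ while $C$-causal curves are $C'$-causal, $D_C(S)=M$; by Theorem \ref{sts} the $C$-Cauchy, stably acausal $S$ is a stable Cauchy hypersurface, so $M$ is a domain of dependence.

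Finally, for the product structure I would take a complete smooth $C'$-timelike vector field $V$ (available from the selection property (*) for the proper $C'$), whose integral curves are inextendible $C'$-timelike curves meeting $S$ exactly once; flowing a smooth cross-section along $V$ gives a homeomorphism $\mathbb{R}\times S\to M$, and transporting the smooth structure of the cross-section along the flow makes $S$ smoothly diffeomorphic to the stable Cauchy hypersurface $t^{-1}(0)$. Along each flow line $t$ is continuous, strictly increasing and onto $\mathbb{R}$, so reparametrizing each line by the value of $t$ turns the first projection into $t$ itself while leaving the flow lines as fibers. For a proper cone structure one instead uses a smooth $C$-timelike vector field from the outset, so that the fibers are genuine $C$-timelike integral curves. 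I expect the one genuinely delicate step to be the continuity of the volume functions $t^{\pm}$, which is exactly why the proof must be routed through the locally Lipschitz cone structure $C'$ rather than $C$.
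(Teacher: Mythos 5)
Your proposal is correct and follows essentially the same route as the paper: invoke the stability of global hyperbolicity (Th.\ \ref{mom}) to pass to a wider globally hyperbolic locally Lipschitz proper cone structure, construct Geroch's volume time function there (where achronal boundaries are $\mu$-null Lipschitz hypersurfaces), and transfer the time/Cauchy properties and the hypersurface $t^{-1}(0)$ back to $C$, finishing with a timelike vector field and the quotient/flow argument for the product structure. The only differences are tactical — you verify upper semi-continuity via closedness of $J'$ plus reverse Fatou, and the Cauchy property via continuity from above of the finite measure on the nested sets $I'^{+}(\gamma(s))$, where the paper uses compact-emerald and limit-curve arguments — but these are interchangeable details within the same strategy.
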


\begin{proof}
By the stability of global hyperbolicity we can find  a globally hyperbolic locally Lipschitz proper cone structure with wider cones, so it is sufficient to prove the theorem assuming $(M,C)$ to be a  locally Lipschitz proper cone structure, as any Cauchy time function for a cone structure is still a Cauchy time function for a narrower cone structure.


Let $\mu$ be a probability measure absolutely continuous with respect the Lebesgue measure of any local chart. Since $J^+(p)$ is a future set its boundary is a locally Lipschitz topological hypersurface so $\mu(\p J^+(p))=0$ and similarly in the past case (Th.\ \ref{aoq}). Let  $t^\pm (p)=\mp \mu (J^{\pm}(p))=\mp \mu (I^{\pm}(p))$ so that by strong causality both functions are strictly increasing over continuous causal curves. Let us  prove that $t^-$ is continuous, the proof for $t^+$ being analogous.

Let $\epsilon>0$, and let $K\subset I^{-}(p)$ be a compact set such that $\mu (I^{-}(p)\backslash K)<\epsilon$ (it exists by inner regularity of the measure). For every $q\in K$ we can find $r\in I^-(p)$ such that $q\in I^{-}(r)$, thus $K$ admits a finite covering of sets of the form $I^-(r_i)$, with $r_i\in I^{-}(p)$, then $O=\cap_i I^+(r_i)$, is such that for every $p'\in O$, for every $i$, $r_i\in I^{-}(p')$ and hence $K\subset I^-(p')$, thus $t^-(p')=\mu(I^-(p'))\ge \mu(K)\ge \mu (I^{-}(p))-\epsilon=t^-(p)-\epsilon$, which proves lower semi-continuity.

Let $\epsilon>0$, and let $K\subset M\backslash J^{-}(p)$ be a compact set such that $\mu (M\backslash (K\cup J^{-}(p)) )\le \epsilon$. Let $D$ be a compact neighborhood of $p$, then $J^-(D)\cap J^+(K)$ is compact. Thus there must be a neighborhood  $O\ni p$, such that $J^-(O)\cap K=\emptyset$, otherwise by the limit curve theorem \ref{main} we would get a continuous $C$-causal limit curve connecting $p$ to $K$, a contradiction. So for $p'\in O$ we have
\[
t^-(p')=\mu(J^-(p')) \le \mu(M\backslash K)= \mu(J^-(p))+ \mu (M\backslash (K\cup J^{-}(p)) )\le t^-(p)+\epsilon
\]
which proves upper semi-continuity. Moreover, given an inextendible causal curve $t \mapsto \sigma(t)$ we have  $t^-(\sigma(t))\to 0$ for $t\to -\infty$. In fact, let $r= \sigma(0)$, $\epsilon >0$ and  let $K$ be a compact set such that $\mu(M\backslash K)<\epsilon$. Then $J^-(r)\cap J^+(K)$ is compact and for sufficiently large $-t$, we must have $J^-(\sigma(t))\cap K=\emptyset$ otherwise, by the limit curve theorem, we would get a future inextendible causal curve starting from $K$ and contained in the compact set $J^-(r)\cap J^+(K)$, a contradiction. Similarly, $t^+(\sigma(t)) \to 0$ for $t\to +\infty$, thus the Geroch's time $\tau=\log \vert t^-/t^+\vert$ is continuous  and increasing with image $\mathbb{R}$ over every continuous causal curve.

Notice that  $S_0=t^{-1}(0)$ is a Cauchy hypersurface for the wider cone structure and so a stable Cauchy hypersurface for the original cone structure.

The last statement is a trivial consequence of the existence of a smooth  timelike vector field in proper cone structures and of  the quotient manifold theorem, cf.\ Th.\ 21.10 of \cite{lee13}.
\end{proof}

%
%
%
%


\begin{theorem} \label{mmm}
Let $(M,C)$ be a proper  cone structure and let $S$ be an  acausal topological hypersurface. Then $D(S)$ is open, causally convex and globally hyperbolic.
Let $(M,C)$ be a closed  cone structure which admits a Cauchy hypersurface, then $(M,C)$ is globally hyperbolic.
\end{theorem}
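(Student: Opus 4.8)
My plan is to handle both statements with a single elementary observation, which I will call the \emph{crossing property}: since $S$ is acausal and, by Th.\ \ref{jdp}, locally separates a neighborhood $U$ of each of its points into $S\cap U$, $J^+(S\cap U,U)\setminus S$ and $J^-(S\cap U,U)\setminus S$, every continuous causal curve meets $S$ in at most one point, and there it passes from $J^-(S)$ to $J^+(S)$. Two consequences are that no closed continuous causal curve can meet $S$ (a loop crossing $S$ once from past to future could never return to the past side), and that, wherever causality is available, $J^+(S)\cap J^-(S)=S$.

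For the first statement I would first dispose of openness: by Th.\ \ref{jdq} and its past dual the sets $D^+(S)\setminus S$ and $D^-(S)\setminus S$ are open, while by Th.\ \ref{jdp} every $p\in S$ has a neighborhood $U=(S\cap U)\cup(J^+(S\cap U,U)\setminus S)\cup(J^-(S\cap U,U)\setminus S)\subset D(S)$, so $D(S)$ is open. For causal convexity, given $p\le q\le r$ with $p,r\in D(S)$ and an inextendible causal curve $\gamma=\gamma^-\cup\gamma^+$ through $q$, I would splice $\gamma^-$ with a segment $q\to r$ and a future extension to get an inextendible curve through $r$; since $r\in D(S)$ it meets $S$, and either the intersection lies on $\gamma^-$ (so $\gamma$ meets $S$) or it yields $s_1\in S\cap J^+(q)$. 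The dual argument with $p$ gives either $\gamma^+\cap S\neq\emptyset$ or $s_2\in S\cap J^-(q)$. In the remaining case $s_2\le q\le s_1$, acausality forces $s_1=s_2=:s$, so $s\le q\le s$ is a closed causal configuration meeting $S$, excluded by the crossing property unless $q=s\in S\subset D(S)$. Hence $q\in D(S)$.

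Global hyperbolicity of $D(S)$ I would extract from the characterization of Prop.\ \ref{sos}. Causality is clear: a closed causal curve inside $D(S)$ would be an inextendible (periodic) curve meeting $S$, contradicting the crossing property. The substantive work is compactness of the diamonds $J^+(p)\cap J^-(r)$ with $p,r\in D(S)$ (contained in $D(S)$ by causal convexity). Given $x_k$ in such a diamond with no convergent subsequence, the connecting curves $p\to x_k\to r$ have $h$-arc length tending to infinity (Cor.\ \ref{dox}), so the limit curve theorem \ref{main}(ii) produces a future-inextendible causal curve from $p$ confined to $\overline{J^-(r)}$; extending it to an inextendible curve and using $p,r\in D(S)$, the crossing property forces it strictly to the future of $S$ while remaining below $r$, which I expect to contradict $r$ being causally related to $S$ exactly once. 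Compactness of emeralds then follows from Prop.\ \ref{xxp}, yielding global hyperbolicity in the sense of Def.\ \ref{sot}($\beta$). For the second statement the crossing property settles causality at once (a closed causal curve generates a periodic inextendible curve which, because $D(S)=M$, must meet $S$), whence $M=(J^+(S)\setminus S)\sqcup S\sqcup(J^-(S)\setminus S)$; non-imprisonment I would prove by contradiction, feeding an imprisoned curve into Th.\ \ref{lof} to obtain an inextendible recurrent curve $\alpha$ with $\bar\alpha=\Omega_f(\alpha)=\Omega_p(\alpha)$ meeting $S$ only at $s_0$, then using recurrence to find $\alpha(\tau_k)\to b:=\alpha(1)$ with $\tau_k\to-\infty$ and $\alpha(\tau_k)\le s_0$, so that $\alpha|_{[\tau_k,0]}$ either limits (Th.\ \ref{main}) to a causal curve $b\to s_0$, which together with $s_0\le b$ is a forbidden closed causal curve, or gives $(b,s_0)\in\bar J$, upgraded via Th.\ \ref{dxp} and a locally Lipschitz widening $C'>C$ (Th.\ \ref{ddo}, \ref{soa}) to a closed $C'$-timelike curve and hence a violation of stable causality; boundedness of causally convex hulls of bounded sets then follows, again via the limit curve theorem on emeralds, giving Def.\ \ref{sot}($\alpha$).

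The main obstacle in both parts is therefore not causality---which the crossing property disposes of immediately---but the compactness/boundedness requirement, i.e.\ ruling out inextendible causal curves confined to a causal diamond or to a compact set. The delicate point is that Th.\ \ref{main} only returns $\bar J$-relations in its case (ii), so converting recurrence or confinement into an honest closed causal curve demands either promoting $\bar J$ to $J$ or passing to a slightly wider cone structure and contradicting stable causality; making this promotion rigorous, and verifying that the escaping limit curve genuinely conflicts with $S$ being crossed exactly once, is where I expect the bulk of the effort to lie.
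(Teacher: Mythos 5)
Your openness, causal convexity and causality arguments are essentially the paper's own (same ingredients: Th.\ \ref{jdp}, \ref{jdq}, acausality of $S$, and exclusion of closed causal curves through $S$), and they are fine. The genuine gap is in the compactness step, precisely where you yourself locate ``the bulk of the effort'', and neither of the two mechanisms you propose for closing it works. First, the single-curve contradiction does not exist: a future inextendible causal curve issued from $p$ and contained in $J^+(S)\cap\overline{J^-(r)}$ contradicts nothing about ``$r$ being causally related to $S$ exactly once''. Its points lie only in the closure $\overline{J^-(r)}$, so they do not furnish causal curves through $r$ at all, and the curve's own crossing of $S$ happens exactly once, as it should; ruling out such escaping curves is what compactness of the diamond \emph{means}, so invoking it at this stage is circular. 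Second, your fallback for the Cauchy-hypersurface statement --- upgrading $(b,s_0)\in\bar J$ via Th.\ \ref{dxp} to a closed $C'$-timelike curve ``and hence a violation of stable causality'' --- is unavailable: stable causality is not a hypothesis of the theorem (it is a consequence of the global hyperbolicity being proved), and a closed $C'$-timelike curve for $C'>C$ contradicts neither the $C$-acausality of $S$ nor $D(S)=M$.

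What the paper does, and what is missing from your sketch, is to use \emph{both} outputs of case (ii) of the limit curve theorem \ref{main} at once: the future inextendible curve $\sigma^p$ from $p$ \emph{and} the past inextendible curve $\sigma^q$ ending at $r$, together with the relation $(\tilde p,\tilde q)\in\bar J$ valid for every $\tilde p\in\sigma^p$, $\tilde q\in\sigma^q$. Since $p,r\in D(S)$, suitable inextendible extensions of these curves cross $S$; the decisive point is that Th.\ \ref{jdp} allows one to choose $\tilde p$ just after the crossing, inside the \emph{open} set $J^+(S\cap U,U)\setminus S\subset \mathrm{Int}\, J^+(S)$, and dually $\tilde q\in\mathrm{Int}\, J^-(S)$. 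Then no promotion of $\bar J$ to $J$ is needed: approximating $(\tilde p,\tilde q)\in\bar J$ by genuine causal pairs $a_k\le b_k$ with $a_k\to\tilde p$, $b_k\to\tilde q$, the openness of $\mathrm{Int}\, J^{\pm}(S)$ keeps $a_k\in J^+(S)$ and $b_k\in J^-(S)$ for large $k$, whence $s\le a_k\le b_k\le s'$ with $s,s'\in S$ --- a causal relation between points of $S$, contradicting acausality directly. In other words, acausality of $S$ is an obstruction that survives passage to the closure $\bar J$ once the two test points sit in open subsets of $J^{\pm}(S)$; this is exactly the replacement for the $\bar J\to J$ promotion you were seeking. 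With this trick the paper proves compactness of the emeralds $J^+(K_1)\cap J^-(K_2)$ directly (characterization ($\beta$) of Def.\ \ref{sot}), and the second statement follows by the same two-curve argument with $D(S)$ replaced by $M$; your alternative reduction through Prop.\ \ref{sos} and \ref{xxp} would be acceptable, but only after the diamond compactness --- the step that is actually missing --- is established in this way.
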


\begin{proof}
The openness of $D(S)$ follows  from Th.\ \ref{jdp} and \ref{jdq}.
Notice that given $p,q\in D(S)$ there cannot be $r\in [J^{+}(p)\cap
J^{-}(q)] \backslash D(S)$, in fact redefining $p$ and $q$ we can assume $p\in J^{-}(S)\backslash S$ and $q\in J^+(S)\backslash S$. Then by acausality of $S$ one of the curves connecting $p$ to $r$ or $r$ to $q$ does not intersect $S$. Thus there is a future inextendible continuous causal curve issued from $p$ not intersecting $S$, or a past inextendible continuous causal curve ending at $q$ not intersecting $S$. The contradiction proves that $D(S)$ is causally convex.

 Concerning causality, there cannot be continuous closed causal curves in $D(S)$ because they cannot intersect $S$ by its acausality, though they are inextendible and so must intersect it.

Let $K_1,K_2 \subset  D(S)$ be compact subsets and assume $\overline{J^{+}(K_1)\cap
J^{-}(K_2)}$ is not compact, then there are points $r_n \in
J^{+}(K_1)\cap J^{-}(K_2)$ with $r_n \to +\infty$ (i.e. escaping every
compact set contained in $D(S)$). Let $\sigma_n$ be a causal curve starting from $p_n\in K_1$ passing
through $r_n$ and ending at $q_n\in K_2$. By the limit curve theorem
\ref{main} there are a past inextendible causal curve
$\sigma^{q}$ ending at $q\in K_2$, and a future inextendible causal curve
$\sigma^p$ starting from $p\in K_1$, which are limits of $\sigma_n$. By
Th.\ \ref{jdp} there are $\tilde{p} \in \sigma^p \cap J^{+}(S)\backslash S$ (thus $\tilde p\in \mathrm{Int} J^+(S)$ because every past inextendible continuous causal curve future ending in a sufficiently small neighborhood of $\tilde{p}$ intersects $S$) and
$\tilde{q} \in \sigma^q \cap J^{-}(S)\backslash S$ (thus $\tilde{q}\in \mathrm{Int} J^{-}(S)$). However, the limit curve
theorem also states that $(\tilde{p},\tilde{q}) \in \bar{J}$, thus $S$ is not acausal, a
contradiction. Let us prove that $J^{+}(K_1)\cap J^{-}(K_2)$ is
closed, indeed, if it were not then there would be points  $r_n \in
J^{+}(K_1)\cap J^{-}(K_2)$ with $r_n \to r \in \overline{J^{+}(K_1)\cap
J^{-}(K_2)}\backslash [J^{+}(K_1)\cap J^{-}(K_2)]$. The argument goes as
above with the additional observation that at most one curve between
$\sigma^{q}$ and $\sigma^p$ can pass through $r$. Thus the  causal
emeralds $J^{+}(K_1)\cap J^{-}(K_2)$ are contained in $D(S)$ and compact.

The last statement follows from the last two paragraphs by replacing $D(S)$ with $M$.
%
%
%
\end{proof}
%
%
%



\begin{theorem} \label{jbu}
Let $(M,C)$ be a proper cone structure and let $S$ be an acausal topological hypersurface. Then for every compact subset $K\subset D^+(S)$, $J^-(K)\cap J^+(S)$ is compact.
\end{theorem}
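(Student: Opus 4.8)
The plan is to show that $A:=J^{-}(K)\cap J^{+}(S)$ is both bounded and closed with respect to a fixed complete Riemannian metric $h$, from which compactness follows. Everything rests on one elementary observation, the \emph{crossing property}
\[
(D^{-}(S)\setminus S)\cap J^{+}(S)=\emptyset .
\]
Indeed, if $x\in D^{-}(S)\setminus S$ then every future inextendible causal curve through $x$ meets $S$, say at $s_3\ge x$; if in addition $x\in J^{+}(S)$ then $s_2\le x$ for some $s_2\in S$, whence $s_2\le s_3$ with $s_2,s_3\in S$, so by acausality $s_2=s_3$ and $s_2\le x\le s_2$ with $x\ne s_2$. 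The latter is a closed causal curve through $s_2\in S\subset D(S)$, excluded since $D(S)$ is causal and causally convex by Th.\ \ref{mmm}. I will also use that, by the time dual of Th.\ \ref{jdp} and Th.\ \ref{jdq}, the set $D^{-}(S)\setminus S$ is open and contains $J^{-}(S\cap U,U)\setminus S$ for the neighborhoods $U$ constructed there, and that $S$ is closed (as the very definition of $D^{\pm}(S)$ requires).

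For boundedness I argue by contradiction. If $A$ is not $h$-bounded there are $r_n\in A$ leaving every compact set, together with $s_n\in S$, $k_n\in K$ and continuous causal curves $\sigma_n$ from $s_n$ through $r_n$ to $k_n$, so that $\sigma_n\subset J^{+}(s_n)\subset J^{+}(S)$. Passing to a subsequence, $k_n\to k\in K\subset D^{+}(S)$. I extend each $\sigma_n$ to a past inextendible continuous causal curve $\Sigma_n$ (Th.\ \ref{zar}), parametrized by past-directed $h$-arc length with $\Sigma_n(0)=k_n$, so $\Sigma_n$ agrees with $\sigma_n$ on $[0,L_n]$ where $L_n\ge d^{h}(k_n,r_n)\to\infty$. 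By the limit curve lemma a subsequence converges $h$-uniformly on compact subsets to a past inextendible continuous causal curve $\gamma$ with $\gamma(0)=k$. Since $k\in D^{+}(S)$, $\gamma$ meets $S$, by acausality exactly once, at $s^{*}=\gamma(t_0)$; for $t'$ slightly larger than $t_0$ the point $q':=\gamma(t')$ lies just to the past of $s^{*}$, hence in $J^{-}(S\cap U,U)\setminus S\subset D^{-}(S)\setminus S$. Because $L_n\to\infty$, for large $n$ one has $L_n>t'$, so $\Sigma_n(t')=\sigma_n(t')\in J^{+}(S)$; but $\Sigma_n(t')\to q'$ and $D^{-}(S)\setminus S$ is open, so $\Sigma_n(t')\in D^{-}(S)\setminus S$ for large $n$, contradicting the crossing property. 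Hence $A$ is bounded and $\overline A$ is compact.

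For closedness let $p_n\to p$ with $p_n\in A$, and choose $s_n\in S$, $k_n\in K$ and causal curves $\sigma_n$ from $s_n$ through $p_n$ to $k_n$; then $\sigma_n\subset J^{+}(s_n)\cap J^{-}(k_n)\subset A\subset\overline A$. Since $S$ is closed and $\overline A$ compact, up to subsequences $s_n\to s\in S$ and $k_n\to k\in K$. If the $\sigma_n$ contract to a point, then $p=k\in K\subset D^{+}(S)\subset J^{+}(S)$ and $p\le k$, so $p\in A$. Otherwise Th.\ \ref{main} gives two alternatives. Alternative (ii) produces a past inextendible curve ending at $k\in D^{+}(S)$ that is an $h$-uniform-on-compacta limit of subarcs of the $\sigma_n\subset J^{+}(S)$; the crossing argument of the previous paragraph applies verbatim and excludes it. Hence alternative (i) holds: a subsequence converges to a continuous causal curve $\sigma\colon[0,a]\to M$ from $s$ to $k$ with $a_k\to a$, and writing $p_k=\sigma_k(\tau_k)$, $\tau_k\to\tau$, one gets $p=\sigma(\tau)$, so $s\le p\le k$ with $s\in S$, $k\in K$, i.e.\ $p\in J^{+}(S)\cap J^{-}(K)=A$. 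Thus $A$ is closed, and together with boundedness this gives compactness.

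The only genuine difficulty is the crossing argument and its bookkeeping: one must arrange that the point $q'$ lying just past the $S$-crossing of the limit curve is truly a limit of points on the \emph{original} curves (which lie in $J^{+}(S)$). This is exactly what $L_n\to\infty$ secures in the boundedness case, and what the structure of alternative (ii) of Th.\ \ref{main} secures in the closedness case. The supporting facts—that $D^{-}(S)\setminus S$ is open, contains the local past side of $S$, and is disjoint from $J^{+}(S)$—follow immediately from the local graph structure of $S$ (Th.\ \ref{jdp}, \ref{jdq}) together with acausality and the causality and causal convexity of $D(S)$ (Th.\ \ref{mmm}), so no new estimate is required.
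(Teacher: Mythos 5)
Your proposal is correct and takes essentially the same route as the paper: a limit-curve argument produces a past inextendible causal curve ending in $K\subset D^+(S)$, which must cross the acausal hypersurface $S$ and dip into the open region on its past side, dragging the approximating curves (which start on $S$ and hence lie in $J^+(S)$) into that region and contradicting acausality. Your only cosmetic deviation is to package the contradiction as the ``crossing property'' $(D^-(S)\setminus S)\cap J^+(S)=\emptyset$, using the openness of $D^-(S)\setminus S$ from Th.\ \ref{jdq}, where the paper instead lands the limit curve in $I^-(S)$ and invokes the openness of the chronological past.
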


\begin{proof}
Otherwise a limit curve argument would produce a past inextendible continuous causal curve $\sigma$ ending at $K$. Thus by acausality of $S$ it would cross $S$ entering, by Th.\ \ref{jdp}, $J^-(S)\backslash S\cap D^-(S)\subset I^-(S)$. But then the sequence $\sigma_k \to \sigma$ starting from $S$, would have to enter $I^-(S)$, contradicting the acausality of $S$.
\end{proof}




We end the section summarizing some other equivalent charaterizations of global hyperbolicity  which are familiar from Lorentzian geometry \cite{bernal03,chrusciel13}. Fathi and Siconolfi have first obtained a version for continuous cone structures using methods imported from weak KAM theory  \cite{fathi12}. These results have been generalized to upper semi-continuous cone structures by Bernard and Suhr who employed instead dynamical system methods based on Conley's theory \cite{bernard16}. They obtained the equivalence between points (i) and (iii) in Theorem \ref{xxy} below, and the relative splitting. We clarify the connection with the existence of non-smooth Cauchy time functions and Cauchy hypersurfaces. Our derivation is based on volume functions, and uses methods entirely developed in the field of mathematical relativity. Most of the  proof is given in  Sec.\ \ref{fir}-\ref{las}.

%

 \begin{theorem} \label{xxy}
Let $(M,C)$ be a closed cone structure and let $h$ be a complete Riemannian metric. Then the next conditions are equivalent:
\begin{itemize}
\item[(i)]  global hyperbolicity,
\item[(ii)] existence of a Cauchy time function,
\item[(iii)] existence of  a smooth  $h$-steep  Cauchy  temporal function,
\item[(iv)] existence of  a (stable) Cauchy hypersurface.
\end{itemize}
 Finally, under global hyperbolicity $M$ is smoothly diffeomorphic to a product $\mathbb{R}\times \mathbb{S}$ where the projection to $\mathbb{R}$ is a smooth  $h$-steep  Cauchy  temporal function (the fibers of the smooth projection to $S$ are not necessarily causal), and every stable Cauchy hypersurface is smoothly diffeomorphic to $S$.

Additionally, for a proper cone structure all Cauchy hypersurfaces are diffeomorphic to $S$ and the fibers of the smooth projection to $S$ are smooth timelike curves.
 \end{theorem}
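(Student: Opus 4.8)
The plan is to upgrade the purely topological product of Theorem \ref{ger} to a smooth one by flowing along a transverse vector field built from the smooth $h$-steep Cauchy temporal function whose existence is guaranteed by the equivalence $(i)\Leftrightarrow(iii)$. So I would start by fixing such a function $\tau$. Since $\tau$ is temporal, $\dd\tau$ is positive on the non-empty cones $C_x$ and therefore vanishes nowhere, so every level set is a smooth embedded hypersurface; I set $S:=\tau^{-1}(0)$ and $\mathbb{S}:=S$. The whole construction then rests on producing a \emph{complete} smooth vector field $X$ with $\dd\tau(X)\equiv 1$, whose flow $\phi_s$ will realize the splitting.

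In the general closed case I would take the normalized $h$-gradient $X=\nabla^h\tau/\Vert\nabla^h\tau\Vert_h^{2}$, so that $\dd\tau(X)=h(\nabla^h\tau,X)=1$. The decisive estimate is that $h$-steepness forces $\Vert\nabla^h\tau\Vert_h\ge 1$: for any $h$-unit $y\in C_x$ the ordinary Cauchy--Schwarz inequality gives $1\le\dd\tau(y)=h(\nabla^h\tau,y)\le\Vert\nabla^h\tau\Vert_h$. Hence $\Vert X\Vert_h=\Vert\nabla^h\tau\Vert_h^{-1}\le 1$, so each integral curve of $X$ has $h$-arc length growing at most linearly in its parameter, and completeness of $h$ prevents any integral curve from escaping to infinity in finite parameter time; thus $X$ is complete. (I may assume $h$ smooth, as the claim is in the smooth category, and $\Vert\nabla^h\tau\Vert_h^{2}=\dd\tau(\nabla^h\tau)\ge 1>0$ keeps $X$ smooth everywhere.) Along integral curves $\frac{\dd}{\dd s}\tau(\phi_s(p))=\dd\tau(X)=1$, so $\tau(\phi_s(p))=\tau(p)+s$. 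I then define $\Phi\colon\mathbb{R}\times S\to M$ by $\Phi(s,p)=\phi_s(p)$; it is a smooth bijection with smooth inverse $q\mapsto(\tau(q),\phi_{-\tau(q)}(q))$ (surjectivity by flowing $q$ back into $S=\tau^{-1}(0)$, injectivity because $\tau\circ\Phi(s,\cdot)=s$ separates the $\mathbb{R}$ factor and each $\phi_s$ is a diffeomorphism of slices), and $\mathrm{pr}_{\mathbb{R}}\circ\Phi^{-1}=\tau$ is the required smooth $h$-steep Cauchy temporal projection.

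To identify $S$ with the \emph{stable} Cauchy hypersurface I would argue that, being temporal, $\tau$ is stably locally anti-Lipschitz, hence temporal for some wider proper cone structure $C'>C$; then $\dd\tau$ is positive on $C'$, so $S=\tau^{-1}(0)$ is $C'$-acausal, i.e.\ stably acausal, and it is Cauchy since $\tau$ is a Cauchy function. By Theorem \ref{sts} $S$ is a stable Cauchy hypersurface, and the claim that every stable Cauchy hypersurface is smoothly diffeomorphic to $S$ is then immediate from the earlier proposition asserting that any two stable Cauchy hypersurfaces are diffeomorphic. For the proper case I would instead build $X$ from a smooth timelike vector field $W$ (which exists by properness) as $X=W/\dd\tau(W)$; this $X$ is timelike with $\dd\tau(X)=1$, so the fibers of $\mathrm{pr}_S\circ\Phi^{-1}$ are smooth timelike curves. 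Its completeness now comes from the Cauchy property rather than a norm bound: a maximal integral curve on $(a,b)$ is a causal curve along which $\tau=\tau(p)+s$ stays bounded, so if $b$ were finite the curve would leave every compact set and be future-inextendible, whence the Cauchy property (cf.\ Cor.\ \ref{dox}) would force $\tau$ to take arbitrarily large values, a contradiction; symmetrically for $a$. That \emph{all} topological Cauchy hypersurfaces are homeomorphic to $S$ in the proper case follows from the same earlier proposition applied with $k=0$, since $S$ is a smooth Cauchy hypersurface.

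The step I expect to be the main obstacle is exactly the completeness of $X$, since the product structure and the smoothness of the projection hinge on it: in the general case it is secured by the $h$-norm bound $\Vert X\Vert_h\le 1$ distilled from $h$-steepness, and in the proper case by the inextendibility/Cauchy dichotomy that simultaneously delivers timelike fibers. The second delicate point is the verification that the smooth level set $S$ is genuinely a \emph{stable} Cauchy hypersurface, via the passage from temporality of $\tau$ to its temporality for a strictly wider cone structure, as this is precisely what licenses the uniqueness-up-to-diffeomorphism proposition used to conclude.
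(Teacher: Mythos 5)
Your construction of the smooth splitting is correct and in fact takes a genuinely different, arguably cleaner route than the paper's: you flow along the normalized $h$-gradient $X=\nabla^h\tau/\Vert\nabla^h\tau\Vert_h^{2}$, with completeness secured by the bound $\Vert X\Vert_h\le 1$ that you extract from $h$-steepness via Cauchy--Schwarz, whereas the paper first replaces $C$ by a globally hyperbolic locally Lipschitz proper cone structure $C'>C$ (Th.\ \ref{mom}) and then flows along a smooth $C'$-timelike vector field (Cor.\ \ref{nin}). Your identification of $S=\tau^{-1}(0)$ as a stable Cauchy hypersurface (temporal $\Rightarrow$ temporal for some wider $C'$, then Th.\ \ref{sts}), and your proper-case normalization $X=W/\dd\tau(W)$ with completeness derived from the Cauchy property, are both sound; the latter even makes explicit why the projection to $\mathbb{R}$ can be taken to be $\tau$ itself. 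One technical caveat: forming $\nabla^h\tau$ and its flow requires $h$ to be at least $C^1$, while the theorem's $h$ is an arbitrary complete Riemannian metric; this is repairable (choose a smooth $h'$ with $\tfrac12 h\le h'\le h$, which stays complete, keeps $\tau$ $h'$-steep, and still bounds $\Vert X\Vert_h$), but it should be said.

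The proposal nonetheless does not prove the theorem, because it presupposes exactly the part that carries the analytic weight. First, you take the smooth $h$-steep Cauchy temporal function as ``guaranteed by the equivalence $(i)\Leftrightarrow(iii)$'' --- but that equivalence is part of the statement under proof, so this is circular. In the paper the implication (i) $\Rightarrow$ (iii) is Th.\ \ref{xbh}, whose proof needs the stability of global hyperbolicity (Th.\ \ref{mom}), the product-trick construction of strictly anti-Lipschitz functions (Th.\ \ref{mih}), and the smoothing theorem with derivative control (Th.\ \ref{moz}); none of this appears in, or can be bypassed by, your argument. Second, the converse implications (ii) $\Rightarrow$ (i), (iii) $\Rightarrow$ (i), (iv) $\Rightarrow$ (i) are never addressed: one must show that a Cauchy time function, or a (stable) Cauchy hypersurface, forces global hyperbolicity, which in the paper is Th.\ \ref{mmm} (a limit-curve argument establishing compactness of causal emeralds inside $D(S)$, together with Th.\ \ref{jdp} and \ref{jdq}). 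As it stands, you have a correct and independently interesting proof of the final splitting statements \emph{conditional} on the equivalences, but the equivalences themselves --- the main content of the theorem --- are missing.
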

The `stable' adjective in (iv) must be kept or dropped so as to get the strongest meaning of the implication considered.

\begin{proof}
(i) $\Rightarrow$ `(ii) and (iii) and (iv)' is proved in Th.\ \ref{xbh}. (ii) or (iii) or (iv) $\Rightarrow$ (i) is proved in Th.\ \ref{mmm}. The other statements are proved in Th.\ \ref{mom} and Cor.\ \ref{nin}.
\end{proof}

Finally, we mention the next stability result.

\begin{theorem} \label{rem}
Let $(M,C)$ be a closed cone structure. Every Cauchy temporal function $t$ is stable in the sense that we can find a locally Lipschitz proper cone structure $C'>C$ such that $t$ is Cauchy temporal for $C'$.
\end{theorem}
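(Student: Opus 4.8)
The plan is to reduce the statement to the construction of a \emph{single} complete Riemannian metric $h$ for which $t$ is $h$-steep, and then to enlarge the cones by a fixed amount. First observe that since $t$ is temporal it strictly increases along continuous causal curves, and since it is Cauchy each level set $S_a=t^{-1}(a)$ is met exactly once by every inextendible continuous causal curve; as $\dd t\neq 0$ (being positive on the non-empty cone) the $S_a$ are $C^1$ hypersurfaces, and they are acausal with $D(S_a)=M$, hence Cauchy hypersurfaces, so by Th.\ \ref{mmm} the structure $(M,C)$ is globally hyperbolic. I will aim to produce $C'$ so that $t$ is $h'$-steep for $C'$ with $h'$ complete, because by the steepness notion (f) of Sec.\ \ref{nug} and the remark following it an $h'$-steep function with $h'$ complete is automatically a Cauchy temporal function for $C'$.

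Granting for a moment a complete metric $h$ with $\dd t(y)\ge \Vert y\Vert_h$ for all $y\in C$, the opening step is routine. On the compact section $\hat C_p=C_p\cap \mathbb{S}^{n}$ one has $\dd t(y)-\tfrac12\Vert y\Vert_h\ge \tfrac12\Vert y\Vert_h>0$, so by the upper semi-continuity of $C$ and the continuity of $\dd t$ and $h$ one finds, at each $p$, a round cone $C'_p>C_p$ on which $\dd t(y)\ge \tfrac12\Vert y\Vert_h$, the inequality persisting on a neighborhood; patching with a partition of unity and Prop.\ \ref{doo} (equivalently invoking Prop.\ \ref{ohg} and Th.\ \ref{ddo}) yields a locally Lipschitz proper cone structure $C'>C$ with $\dd t(y)\ge \tfrac12\Vert y\Vert_h$ on $C'$. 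Setting $h'=\tfrac14 h$, again complete, gives $\dd t(y)\ge \Vert y\Vert_{h'}$ on $C'$, i.e.\ $t$ is $h'$-steep for $C'$, hence $C'$-Cauchy temporal, as desired.

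The heart of the matter, and the step I expect to be the main obstacle, is the production of a \emph{complete} $h$ with $\dd t(y)\ge \Vert y\Vert_h$ on $C$; note that being merely temporal only yields, by the proof of Th.\ \ref{oqm}, some possibly \emph{incomplete} metric $h_0$ with this property, and the example $t=\arctan(x^0)$ on Minkowski space (temporal, one Cauchy level set, but not Cauchy) shows that the Cauchy hypothesis cannot be dropped. Here that hypothesis must be used in an essential, non-local way: fixing a complete reference metric $\hat h$, Corollary \ref{dox} shows that a continuous causal curve of finite $\hat h$-length is extendible, so no incompleteness of the desired $h$ can be witnessed by an asymptotically $C$-causal curve — along such a curve the constraint $\dd t(y)\ge \Vert y\Vert_h$ bounds the $h$-length by the $t$-increase, while inextendibility together with $t$ being Cauchy forces the latter to be infinite. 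The construction therefore keeps $h$ constrained (by $\dd t$) only in the cone directions, where there is no freedom, and enlarges it in the directions transverse to $C$, where the steepness constraint is vacuous, so as to give infinite $h$-length to every transverse escape; the compatibility of these two requirements, and the fact that one $h$ works globally, is exactly where global hyperbolicity and a limit-curve/compact-exhaustion argument in the spirit of the proofs of Th.\ \ref{sts} and Th.\ \ref{mom} are needed. An equivalent route avoids the metric and builds $C'$ directly by an inductive opening on the shells of a compact exhaustion, each inductive step being justified by contradiction through the limit curve theorem \ref{main} against the Cauchy property of $t$ for $C$ (with the curves anchored in the compact shells, as in Th.\ \ref{sts}); in either formulation the difficulty is the same, namely upgrading the pointwise temporal data for $C$ into a uniform statement that survives a definite enlargement of the cones.
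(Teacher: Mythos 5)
Your reduction is sound as far as it goes: granting a \emph{complete} Riemannian metric $h$ with $\dd t(y)\ge \Vert y\Vert_h$ for every $y\in C$, the distribution $D_p=\{y\ne 0\colon \dd t(y)\ge \tfrac12 \Vert y\Vert_h\}$ is a $C^0$ proper cone structure with $C<D$, Th.\ \ref{ddo} gives a locally Lipschitz proper $C'$ with $C<C'<D$, and item (f) of Sec.\ \ref{nug} converts $\tfrac14 h$-steepness into the Cauchy temporal property for $C'$. The genuine gap is that the granted statement is never proved. Your final paragraph correctly identifies the existence of such a complete steep $h$ as ``the heart of the matter'' and then substitutes for its proof a description of what a proof would have to do (``the compatibility of these two requirements \ldots is exactly where global hyperbolicity and a limit-curve/compact-exhaustion argument \ldots are needed''). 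That is a plan, not an argument: nothing in the proposal shows that the upper bound imposed by $\dd t$ on causal directions can coexist, in a single globally defined metric, with infinite $h$-length along every divergent curve. Moreover, by your own (correct) reduction this missing lemma \emph{implies} the theorem, so it is at least as strong as what is to be proved; the reduction relocates the difficulty rather than diminishing it. The same criticism applies to your closing sentence about the ``equivalent route'': you name the inductive shell-opening but state neither the inductive hypothesis nor the verification that the glued cone structure works.

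For comparison, the paper's proof is exactly the execution of that second route, and the content lies in two uniformity statements your sketch omits. First, the inductive step: with $K_k\supset \bar B(o,k)$ a compact exhaustion and $K_k\subset t^{-1}([-t_k,t_k])$, one finds locally Lipschitz proper $C_k>C$, $C_{k+1}<C_k$, such that every continuous $C_k$-causal curve meeting $K_k$ is forced to reach \emph{both} level sets $t^{-1}(-(t_k+1))$ and $t^{-1}(t_k+1)$, \emph{and} to remain inside a prescribed compact $K_{k+1}$ between these crossings; each of these two claims is obtained by contradiction from the limit curve theorem \ref{main} applied to a shrinking sequence $\tilde C_j\to C$ (Th.\ \ref{sqd}), the limit inextendible $C$-causal curve violating either the Cauchy property of $t$ or the confinement. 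Second, the gluing and verification: $C'>C$ is chosen with $C'<C_k$ on the shell $K_{k+1}\backslash \mathrm{Int}\,K_k$ (one cannot take $C'<C_k$ globally, since the $C_k$ shrink toward $C$); then $C'<C_1$ makes $C'$ stably causal, hence non-imprisoning, and an inextendible $C'$-causal curve must leave each $K_k$, is $C_k$-causal precisely where needed, and therefore climbs from the slab $t\le t_k$ to the level $t^{-1}(t_k+1)$ inside $K_{k+1}$, whence $t\to\pm\infty$ along it. Without these steps (or an actual construction of the complete steep metric) the proposal does not establish the theorem.
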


\begin{proof}
Since $t$ is a temporal function $(M,C)$ is stably causal, so all the cone structures that follow which are wider than $C$ will be taken stably causal (they will also be proper and locally Lipschitz).
Let $h$ be a complete Riemannian metric, and let $o\in M$. Let $B(o,r)$ be the ball of radius $r$ centered at $o$.
Let $K_1$ be a compact set containing $B(o,1)$. Let us redefine $t$ with an affine transformation so that $K_1\subset t^{-1}([-t_1,t_1])$, where $t_1=1$. There is $C_1>C$ such that $\dd t$ is positive on $C_1$ and all the inextendible continuous $C_1$-causal curves passing through $K_1$ intersect the level sets $t^{-1}(-(t_1+1))$ and $t^{-1}(t_1+1)$. In fact, this claim is proved using a sequence $\tilde C_k\to C$, $\tilde C_k>C$, as in Th.\ \ref{sqd}, by noticing that if it were not true then there would be a sequence $\tilde \sigma_k$ of continuous $\tilde C_k$-causal curves intersecting $K_1$ but not $t^{-1}(-(t_1+1))$ or $t^{-1}(t_1+1)$. The function $t$ would be bounded by $-(t_1+1)$ or $t_1+1$ on the limit inextendible continuous $C$-causal curve $\sigma$ in contradiction with the Cauchy property of $t$. By the   temporality of $t$, $t$ over  the inextendible continuous $C_1$-causal curves passing through $K_1$  has image strictly containing $[-(t_1+1), t_1+1]$

 By a similar limit curve argument we obtain that $C_1$ can be chosen so that there is a compact set $K_2$, $\mathrm{Int} K_2\supset K_1\cup B(o,2)$, such that
  the image of every continuous $C_1$-causal curve intersecting $K_1$ which stays in $t^{-1}([-(t_1+1),t_1+1])$ is contained in $\mathrm{Int} K_2$. Notice that there will be $t_2>t_1+1$ such that $K_2 \subset t^{-1}([-t_2,t_2])$. By proceeding in this way we obtain a sequence of compact sets $K_k$, $\mathrm{Int}  K_{k+1}\supset K_k\cup B(o,k+1) $, a sequence of times $t_{k}>0$, $t_{k+1} \ge t_k+1$, and a sequence $C_k>C$, $C_{k+1}<C_k$, of locally Lipschitz proper cone structures, such that every continuous $C_k$-causal curve intersecting $K_k\subset t^{-1}([-t_k,t_k])$ is bound to reach $t^{-1}(-(t_k+1))$ and $t^{-1}(t_k+1)$, and between such intersections to be contained in $K_{k+1}$. Let $C'>C$ be a locally Lipschitz proper cone structure such that $C'<C_k$ on $K_{k+1}\backslash \mathrm{Int} K_k$ for every $k$. Since $C'<C_1$, it is stably causal hence non-imprisoning. Let us consider an inextendible continuous $C'$-causal curve $\sigma$, then there is a minimum value of $k$ such that $\sigma\cap K_k\ne \emptyset$. Taking a point $q_k\in \sigma$ in this set (thus $t(q_k)\le t_k$) and following $\sigma$ in the future direction $t$ increases over $\sigma$ because $\dd t$ is positive on $C_1$ and hence on $C'$. Moreover, by the non-imprisoning property $\sigma$ escapes $K_k$, and so becomes $C_k$-causal on $K_{k+1}$, thus it reaches $q_{k+1}\in t^{-1}(t_k+1)$ where we can repeat the argument since $q_{k+1}\in K_{k+1}$. So the argument shows that $t$ goes to infinity in both directions of $\sigma$, that is $t$ is Cauchy temporal for $(M,C')$.
\end{proof}

\begin{corollary}
Let $(M,C)$ be a closed cone structure. The level sets of a Cauchy temporal function are stable Cauchy hypersurfaces.
\end{corollary}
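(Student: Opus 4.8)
The plan is to reduce everything to the immediately preceding Theorem \ref{rem}. Given a Cauchy temporal function $t$ for $(M,C)$, that theorem produces a locally Lipschitz proper cone structure $C'>C$ for which $t$ is again Cauchy temporal. By the very definition of stable Cauchy hypersurface, it then suffices to show that an arbitrary level set $S=t^{-1}(c)$ is a Cauchy hypersurface for $(M,C')$, i.e.\ an acausal topological hypersurface whose domain of dependence computed in $(M,C')$ is all of $M$. Thus after one appeal to Theorem \ref{rem} the statement splits into three elementary verifications.

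First I would check that $S$ is a topological hypersurface. Since $t$ is temporal for $C'$, the $1$-form $\dd t$ is positive on the non-empty cone $C'_p$ at every $p$, hence $\dd t|_p\neq 0$ everywhere. Therefore $c$ is a regular value of the $C^1$ function $t$, and $S=t^{-1}(c)$ is an embedded $C^1$ submanifold of codimension one, in particular a topological hypersurface. Second, acausality: because $\dd t$ is positive on $C'$, the function $t$ is strictly increasing along every continuous $C'$-causal curve, its tangent lying in $C'$ almost everywhere and $C'$ not containing the origin. Hence two distinct points of $S$, which share the value $c$, cannot be joined by a continuous $C'$-causal curve, so $S$ is $C'$-acausal.

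Finally I would establish that the domain of dependence of $S$ in $(M,C')$ equals $M$. By the Cauchy property of $t$ for $C'$, the restriction of $t$ to any inextendible continuous $C'$-causal curve has image $\mathbb{R}$, and by the strict monotonicity just noted it attains the value $c$ exactly once; thus every inextendible $C'$-causal curve meets $S$ (indeed transversally and only once). In particular, through every point $p$ of $M$ each inextendible $C'$-causal curve intersects $S$, so $p$ lies in $D(S)$ and $D(S)=M$. Combining the three properties, $S$ is a Cauchy hypersurface for the locally Lipschitz proper cone structure $C'>C$, which is exactly the definition of a stable Cauchy hypersurface.

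I do not expect a genuine obstacle here: the whole weight of the argument is carried by Theorem \ref{rem}, and the remaining steps follow directly from the definitions of \emph{temporal} (positivity of $\dd t$ on the cone) and \emph{Cauchy} (surjectivity onto $\mathbb{R}$ along inextendible causal curves). The one point meriting a word of care is the strict monotonicity of $t$ along merely continuous (locally absolutely continuous) $C'$-causal curves, which is obtained from $\dd t(\dot x)>0$ almost everywhere together with the absolute continuity of the parametrization, so that the integral of $\dd t(\dot x)$ over any nondegenerate subinterval is strictly positive.
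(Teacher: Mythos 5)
Your proof is correct and follows exactly the route the paper intends: the corollary is stated without proof immediately after Theorem \ref{rem} precisely because it reduces to that theorem plus the definitional verifications you carry out (regular level set, $C'$-acausality via strict monotonicity of $t$ along continuous $C'$-causal curves, and $D(S)=M$ from the Cauchy property). Your care about monotonicity for merely absolutely continuous causal curves is the right point to flag, and your treatment of it is sound.
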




\subsection{The causal ladder}


Many standard results of Lorentzian causality theory under a $C^2$ assumption on the metric are obtained with an application of a limit curve argument. However, many  other arguments use the composition rule $I\circ J\cup J\circ I\subset I$ so it should come as a surprise that the locally Lipschitz condition will not appear in this section. In fact, even more strikingly  neither a continuity assumption on $C$ will appear and furthermore, the chronological relation will not need to  be defined.


%
We have already met some of the next concepts.
\begin{definition}
A  closed cone structure $(M,C)$ is
\begin{itemize}
\item {\em Causal}. If there is no closed continuous causal curve,
\item {\em Non-total imprisoning}. If there is no future inextendible continuous causal curve contained in a compact set.
\item {\em Distinguishing}. Every point  admits arbitrarily small distinguishing open neighborhoods. Namely for every $p\in M$ and open set $U\ni p$, there is an open set $V$, $p\in V\subset U$, which distinguishes $p$, namely every continuous causal curve $x\colon I\to M$ passing through $p$ intersects $V$ in a connected subset of $I$. (One can give less restrictive future and past versions.)
\item {\em Strongly causal}. Every point  admits arbitrarily small causally convex open neighborhoods. Namely for every $p\in M$ and open set $U\ni p$, there is an open set $V$, $p\in V\subset U$, which is causally convex, namely every continuous causal curve $x\colon I\to M$, with endpoints in $V$ is entirely contained in $V$.
\item {\em Stably causal}. There is $C'>C$ such that $(M,C')$ is a  causal $C^0$ cone structure.
\item {\em Causally easy}. Strongly causal and $\bar J$ is transitive.
\item {\em Causally continuous}. Distinguishing and reflective.
\item {\em Causally simple}.  Causal and $J=\bar J$.
\item {\em Globally hyperbolic}. Causally simple and the causally convex hull of compact sets is compact.
\end{itemize}
Moreover, for a proper cone structure we say that  $(M,C)$ is chronological if there is no closed timelike curve.
\end{definition}



\begin{figure}[h!]
\begin{center}
 \includegraphics[width=12cm]{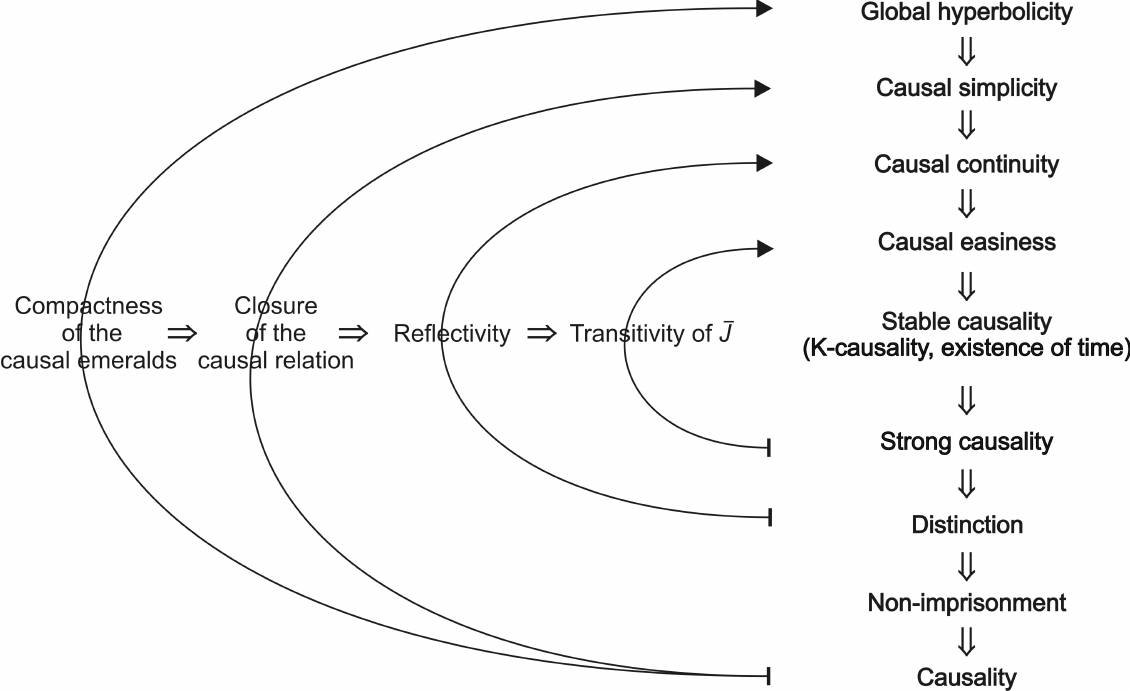}
\end{center}
\caption{The causal ladder and the transverse ladder for closed cone structures. The arrows crossing a property use it in the implication.} \label{lad}
\end{figure}

The proof of the next theorem uses the equivalence between stable causality and $K$-causality which is proved in Sec.\ \ref{fir}-\ref{las}.

\begin{theorem} \label{cau}(Causal ladder, see Fig.\ \ref{lad}) \\
  If $(M,C)$ is a closed cone structure:
\begin{quote}
 Globally hyperbolic $\Rightarrow$  Causally simple $\Rightarrow$ Causally continuous $\Rightarrow$ Causally easy $\Rightarrow$ Stably causal  $\Rightarrow$  Strongly causal $\Rightarrow$ Distinguishing
  $\Rightarrow$ Non-total imprisoning $\Rightarrow$  Causal.
   \end{quote}
 Moreover, causality implies chronology.
\end{theorem}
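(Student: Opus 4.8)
The plan is to verify each link of the displayed chain separately, descending from global hyperbolicity, reading off the named properties from their definitions together with the relational machinery already in place: the transverse ladder (Th.~\ref{tra}), the distinction/antisymmetry dictionary (Prop.~\ref{aah}), the limit curve theorem (Th.~\ref{main}), and the deep equivalence of Th.~\ref{nio}. The concluding ``moreover'' is immediate: a timelike curve is in particular a continuous causal curve, so a closed timelike curve would be a closed causal curve; thus causality implies chronology.

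The top and bottom links are routine. By characterization $(\gamma)$ of Def.~\ref{sot} (whose equivalence with $(\alpha),(\beta),(\delta)$ is Th.~\ref{mom}), global hyperbolicity contains causal simplicity, giving the first implication. At the bottom, strong causality $\Rightarrow$ distinguishing because a causally convex neighborhood is automatically distinguishing (a causal curve leaving and re-entering it would, by causal convexity, be confined to it, a contradiction); distinguishing $\Rightarrow$ non-total imprisoning because an imprisoned future-inextendible curve yields, via Th.~\ref{lof}, a nonconstant inextendible curve $\alpha$ with $\bar\alpha=\Omega_f(\alpha)=\Omega_p(\alpha)$, which meets every small neighborhood of one of its points in a disconnected set; and non-total imprisoning $\Rightarrow$ causal because a closed causal curve, traversed repeatedly, is future-inextendible and imprisoned in its compact image. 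Finally, stably causal $\Rightarrow$ strongly causal is exactly Th.~\ref{mwh}.

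For the two middle links I would exploit the transverse ladder and the distinction dictionary. If $(M,C)$ is causally simple then $J=\bar J$ is closed, so reflectivity follows from Th.~\ref{tra}; under reflectivity $D_f=D_p=\bar J=J$ is antisymmetric, whence distinction holds by Prop.~\ref{aah}, giving causal continuity. Conversely, causal continuity (distinguishing and reflective) yields transitivity of $\bar J$ by Th.~\ref{tra}, while Prop.~\ref{aah}(b) upgrades distinction to antisymmetry of $D_f=D_p=\bar J$. A closed, transitive, antisymmetric $\bar J$ forces strong causality: were it to fail at $x$, the escaping-sequence construction of Th.~\ref{mwh} would produce $c\in\p B$ with $c\neq x$ and $(x,c),(c,x)\in\bar J$, contradicting antisymmetry. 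Hence causal continuity $\Rightarrow$ causal easiness.

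The hard link, which I expect to be the main obstacle, is causal easiness $\Rightarrow$ stable causality. Transitivity of $\bar J$ makes $\bar J$ a closed preorder containing $J$, so $K=\bar J$, and by $(i)\Leftrightarrow(iii)$ of Th.~\ref{nio} it suffices to show $\bar J$ antisymmetric. Assuming $(p,q),(q,p)\in\bar J$ with $p\neq q$, I would feed the curves realizing each relation into the limit curve theorem (constant sequence $C_n=C$): either the relation is realized by an honest causal curve, or Th.~\ref{main}(ii) produces inextendible curves whose points satisfy $(a,b)\in\bar J$. A short chase using transitivity of $\bar J$ then yields, in every case, a point $a\neq p$ lying close to $p$ on a causal curve issuing from $p$ and satisfying $(a,p)\in\bar J$. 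Fixing a causally convex non-imprisoning neighborhood $U\ni p$ (strong causality, Prop.~\ref{iiu}) so small that the segment from $p$ to $a$ stays in $U$, causal convexity confines the sequence realizing $(a,p)\in\bar J$ to $U$, and non-imprisonment excludes the escape alternative of Th.~\ref{main}, so the limit is an honest causal curve from $a$ to $p$ inside $\bar U$; concatenating with the segment from $p$ to $a$ gives a closed causal curve, contradicting causality. The delicate points are the bookkeeping of the several inextendible-curve cases and the local upgrade of $\bar J$ to $J$ inside $U$; the whole step also rests on the nontrivial Th.~\ref{nio}, established separately in Sec.~\ref{fir}--\ref{las}.
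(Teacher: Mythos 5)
Your proof is correct, and it follows the paper's own decomposition link by link: the same chain, the same key inputs (Th.\ \ref{nnq}/\ref{mom} at the top, Th.\ \ref{mwh} and Th.\ \ref{lof} at the bottom, Prop.\ \ref{jjw}, Prop.\ \ref{aah} and Th.\ \ref{tra} in the middle, Th.\ \ref{nio} for the hard link), and your detailed chase for ``causally easy $\Rightarrow$ stably causal'' is exactly the argument the paper compresses into two lines, including the closing of the almost-closed curve inside a causally convex non-imprisoning neighborhood. You deviate from the paper at two local points, both legitimately. First, for causally simple $\Rightarrow$ causally continuous the paper gets distinction from strong causality (Lemma \ref{mvh}), whereas you get it from the antisymmetry of $D_f$ and $D_p$ via the unconditional direction of Prop.\ \ref{aah}; this works because under causal simplicity $J=\bar J$ forces $D_f=D_p=J$, which is antisymmetric by hypothesis. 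Second, for causally continuous $\Rightarrow$ causally easy the paper obtains strong causality by passing through $K$-causality and Th.\ \ref{nio} (antisymmetry of $K=\bar J$ $\Rightarrow$ stable causality $\Rightarrow$ strong causality), whereas you derive it directly: if strong causality failed at $x$, the first-escape-point construction of Th.\ \ref{mwh} yields $c\in\p B$, $c\neq x$, with $(x,c)\in\bar J$ and $(c,x)\in\bar J$, contradicting the antisymmetry of $\bar J$ you already have from Prop.\ \ref{aah}(b). Your variant makes that one link independent of the heavy machinery of Sec.\ \ref{fir}--\ref{las}, which is a genuine (if small) economy; the overall dependency structure is unchanged, since Th.\ \ref{nio} remains indispensable for causally easy $\Rightarrow$ stably causal, where both you and the paper invoke it after establishing that $K=\bar J$ is antisymmetric.
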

We have separated the last implication from the rest since the chronological relation is not particularly interesting unless we are in a proper cone structure.
\begin{proof}
Stably causal $\Rightarrow$  Strongly causal: this is Th.\ \ref{mwh}.
Strongly causal  $\Rightarrow$ Distinguishing: trivial.
Distinguishing $\Rightarrow$ Non-total imprisoning: if there is an imprisoned curve then by  Theorem \ref{lof} the starting point of $\alpha$ cannot be distinguished by arbitrarily small neighborhoods since $\alpha$ escapes and reenters them.
Non-total imprisoning $\Rightarrow$ causal: trivial.

Globally hyperbolic $\Rightarrow$ causally simple: This is  Th.\ \ref{nnq}.
Causally simple $\Rightarrow$  Causally continuous: Since $\bar J=J=D_p=D_f$ reflectivity holds true and by Lemma \ref{mvh} strong causality holds which implies distinction. Causally continuous  $\Rightarrow$  Causally easy: By reflectivity $D_p=D_f=\bar J$ and by Prop.\ \ref{jjw} $D_p$ is transitive, thus $\bar J$ is transitive and hence $K=\bar J$. Moreover, We have distinction which by Prop.\ \ref{aah} implies the antisymmetry of $D_f$ and hence that of $K$. Thus by Th.\ \ref{nio} stable causality holds which as mentioned implies strong causality.
Causally easy $\Rightarrow$ Stably causal: observe that $\bar J$ is antisymmetric,  indeed, suppose $(p,q) \in \bar J$ and $(q,p)\in \bar J$, with $p\ne q$, then by the limit curve theorem for every neighborhood $U\ni p$ we can find $p'\in J^+(p)\backslash \{p\}\cap U$ such that $(p',q)\in \bar J$, thus by the transitivity of $\bar J$, $(p',p)\in \bar J$, in contradiction with the strong causality at $p$. But we have $K=\bar J$, thus $K$-casuality holds which implies stable causality (Th.\ \ref{nio}).
Since every timelike curve is a continuous causal curve the last implication is clear.
%
%
%
\end{proof}

\subsection{Fermat's principle} \label{fer}

The next result improves the differentiability conditions in Prop.\ \ref{chg} by strengthening the other assumptions, including the causality condition, see Th.\ \ref{nig} for another version.

\begin{theorem} \label{sop}
Let $(M,C)$ be a globally hyperbolic closed cone structure and let  $S$ be a compact set. If $q\in E^+(S)\backslash S$ there is $p\in S$ and a  future lightlike geodesic with endpoints $p$ and $q$ contained in $E^+(S)$.
\end{theorem}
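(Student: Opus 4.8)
The plan is to reduce to the locally Lipschitz proper case already settled in Corollary~\ref{chg} by approximating $C$ from outside with globally hyperbolic locally Lipschitz proper cone structures, and then to pass to the limit with the limit curve theorem, using Proposition~\ref{ckj} at the very end to upgrade the limiting curve to a future lightlike geodesic. First I would set up the approximation: by the stability of global hyperbolicity (Th.~\ref{mom}) there is a globally hyperbolic locally Lipschitz proper cone structure $\tilde C>C$, and by Th.~\ref{sqd} a sequence of locally Lipschitz proper cone structures with $C<C_{k+1}<C_k<\tilde C$ and $C=\cap_k C_k$. Since each $C_k<\tilde C$ and narrowing preserves the two conditions in Def.~\ref{sot}($\alpha$), every $(M,C_k)$ is globally hyperbolic. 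As $S$ is compact and $J_C$, $J_{C_k}$, $J_{\tilde C}$ are closed (global hyperbolicity together with Th.~\ref{xix}), the futures $J^+_{C_k}(S)$ and $J^+_C(S)$ are closed, so $E^+_{C_k}(S)=J^+_{C_k}(S)\backslash \mathrm{Int}\,J^+_{C_k}(S)=\p J^+_{C_k}(S)$, and $q\in J^+_C(S)\subset J^+_{C_k}(S)$ for every $k$ because $C\le C_k$.

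The heart of the argument is to produce points $q_k\in E^+_{C_k}(S)$ with $q_k\to q$. I would argue by the dichotomy $\liminf_k \mathrm{dist}^h(q,\p J^+_{C_k}(S))=0$ or $>0$. In the first case a subsequence yields $q_k\in E^+_{C_k}(S)$ with $q_k\to q$, as desired. In the second case there are $\delta>0$ and $K$ with $B(q,\delta)\cap \p J^+_{C_k}(S)=\emptyset$ for $k\ge K$; since $q\in B(q,\delta)$ and $q\notin E^+_{C_k}(S)$ we get $q\in \mathrm{Int}\,J^+_{C_k}(S)$, and then by connectedness $B(q,\delta)\subset \mathrm{Int}\,J^+_{C_k}(S)$. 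Fixing any $y\in B(q,\delta)$, for each large $k$ there is a $C_k$-causal curve from some point of $S$ to $y$; being $\tilde C$-causal it lies in the compact causal emerald $J^+_{\tilde C}(S)\cap J^-_{\tilde C}(\overline{B(q,\delta)})$. The limit curve theorem~\ref{main} (case (ii) is excluded by non-imprisonment inside this fixed compact set) then gives a $C$-causal curve from $S$ to $y$, so $B(q,\delta)\subset J^+_C(S)$ and $q\in \mathrm{Int}\,J^+_C(S)$, contradicting $q\in E^+_C(S)$. Hence the second case is impossible and we obtain $q_k\in E^+_{C_k}(S)$, $q_k\to q$, with $q_k\notin S$ for large $k$ since $q\notin S$ and $S$ is closed.

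Finally I would pass to the limit. Applying Corollary~\ref{chg} to the locally Lipschitz proper structure $C_k$ gives $p_k\in S$ and a $C_k$-causal curve $\sigma_k$ from $p_k$ to $q_k$ contained in $E^+_{C_k}(S)$. All $\sigma_k$ lie in the single compact emerald $J^+_{\tilde C}(S)\cap J^-_{\tilde C}(\{q\}\cup\{q_j\}_j)$, and compactness of $S$ gives $p_k\to p\in S$ along a subsequence, with $p\ne q$ because $q\notin S$. By the limit curve theorem~\ref{main} (again case (ii) excluded by non-imprisonment) a subsequence of $\sigma_k$ converges to a $C$-causal curve $\sigma$ from $p$ to $q$. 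Since $\mathrm{Int}\,J^+_C(S)\subset \mathrm{Int}\,J^+_{C_k}(S)$ and each $\sigma_k\subset E^+_{C_k}(S)$ avoids the latter open set, the limit $\sigma$ stays in the closed set $M\backslash \mathrm{Int}\,J^+_C(S)$; as $\sigma$ also starts at $p\in S$ it lies in $J^+_C(S)$, hence $\sigma\subset E^+_C(S)$. Proposition~\ref{ckj} then shows that $\sigma$ is a future lightlike geodesic with endpoints $p\in S$ and $q$ entirely contained in $E^+(S)$, which is the claim.

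The step I expect to be the main obstacle is precisely the production of $q_k\to q$ with $q_k\in E^+_{C_k}(S)$: a priori the boundaries $\p J^+_{C_k}(S)$ could recede from $q$ (the radius on which $q$ sits comfortably in $\mathrm{Int}\,J^+_{C_k}(S)$ might shrink to zero as $k\to\infty$), and it is exactly the dichotomy above, combined with the uniform compactness supplied by the fixed outer structure $\tilde C$, that rules this out. A secondary delicate point is the uniform confinement of the whole family of approximating curves to one compact emerald, without which the limit curve theorem could not be invoked across the varying cone structures.
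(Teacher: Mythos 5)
Your proof is correct and takes essentially the same route as the paper's: the same approximating sequence of globally hyperbolic locally Lipschitz proper cone structures from Th.~\ref{sqd}, the same dichotomy producing $q_k\in E^+_{C_k}(S)$ with $q_k\to q$ (the paper handles the failure case with exactly your ball-in-the-interior contradiction), Corollary~\ref{chg} plus confinement of all the connecting curves to a single compact emerald, and the limit curve theorem, with the paper merely repeating the argument of Prop.~\ref{ckj} inline instead of citing it. The details you make explicit (the connectedness argument giving $B(q,\delta)\subset \mathrm{Int}\, J^+_{C_k}(S)$, the exclusion of case (ii) of Th.~\ref{main} by non-imprisonment) are left implicit in the paper and you handle them correctly.
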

Notice that since $J$ is closed and $S$ is compact,  $E^+(S)$ is closed (again by Th.\ \ref{xix}).
\begin{proof}
Let $C_k$, $C_{k+1}\le C_k$, $C=\cap_k C_k$, be a sequence of locally Lipschitz proper cone structures as in Th.\ \ref{sqd} where $C_1$, and hence every $C_k$, is globally hyperbolic. In particular, $J_k$ is closed for every $k$. Suppose that we can find, passing to a subsequence if necessary, $q_k\in  E_k^+(S)\backslash S$, with $q_k\to q$. Then by Th.\ \ref{chg} there is a continuous $C_k$-causal curve $\sigma_k$ of starting point $p_k\in S$ and ending point $q_k$ entirely contained in $E_k^+(S)$, so not intersecting $\mathrm{Int} J^+_k(S)\supset \mathrm{Int} J^+(S)$. Let $V$ be a compact neighborhood of $q$, then $J^+_1(S)\cap J^-_1(V)$ is a  compact set which contains all $\sigma_k$ for sufficiently large $k$. By the limit curve theorem there is a subsequence denoted in the same way such that $\sigma_k$ converges uniformly to a continuous $C$-causal curve $\sigma$ connecting $p$ to $q$. It does not intersect the open set $\mathrm{Int} J^+(S)$ since none of the $\sigma_k$ does. It remains to prove that the sequence $q_k$ exists. Suppose that we cannot find $q_k$ as above, then there is $\epsilon>0$ such that $B(q,\epsilon) \subset  J^+_k(S)$ for sufficiently large $k$. For every $y\in B(q,\epsilon)$ using again the limit curve theorem we get that $y\in J^+(S)$, thus $q\in \mathrm{Int} J^+(S)$, a contradiction. The fact that $\sigma$ is a future lightlike geodesic is immediate, since if $p',q'\in\sigma$, with $q'\in \mathrm{Int}J^+(p')$ then $q'\in \mathrm{Int}J^+(p) \subset \mathrm{Int} J^+(S)$, a contradiction.
\end{proof}

The next corollary to be used in this section is the global version of Th.\ \ref{dao}.

\begin{corollary} \label{gtr}
Let $(M,C)$ be a globally hyperbolic closed cone structure. If $q\in E^+(p)\backslash\{p\}$ then there is a future lightlike geodesic $\sigma$ connecting $p$ to $q$ entirely contained in $E^+(p)$.
\end{corollary}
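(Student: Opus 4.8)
The plan is to recognize that this corollary is nothing more than the specialization of Theorem~\ref{sop} to the case of a single-point set, so almost no new work is required. First I would observe that a singleton $S=\{p\}$ is trivially compact, and that with this choice the horismos reduces correctly: since $J^+(\{p\})=J^+(p)$, we have $E^+(\{p\})=J^+(\{p\})\setminus\mathrm{Int}(J^+(\{p\}))=E^+(p)$. Thus the hypothesis $q\in E^+(p)\setminus\{p\}$ of the corollary is exactly the hypothesis $q\in E^+(S)\setminus S$ of Theorem~\ref{sop} for $S=\{p\}$.

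Next I would simply invoke Theorem~\ref{sop}. Applied with $S=\{p\}$ it furnishes a point $p'\in S$ and a future lightlike geodesic with endpoints $p'$ and $q$ contained in $E^+(S)$. Since $S=\{p\}$ is a singleton, necessarily $p'=p$, and since $E^+(S)=E^+(p)$, the geodesic connects $p$ to $q$ and lies entirely in $E^+(p)$. This is precisely the assertion of the corollary.

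There is no genuine obstacle to overcome here: the entire content has already been established in Theorem~\ref{sop}, whose proof carries out the limit curve argument (building $q_k\in E_k^+(S)\setminus S$ with $q_k\to q$ along the approximating locally Lipschitz proper cone structures $C_k\downarrow C$ of Theorem~\ref{sqd}, and extracting a limit continuous $C$-causal curve that avoids $\mathrm{Int}\,J^+(S)$). The only point worth flagging, as noted in the remark following the statement of Theorem~\ref{sop}, is that global hyperbolicity guarantees $J$ is closed and hence, by Theorem~\ref{xix}, that $E^+(S)$ is closed for compact $S$; for $S=\{p\}$ this just says $E^+(p)$ is closed, which is what makes the limit geodesic land inside $E^+(p)$ rather than merely in its closure. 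I would therefore present the proof in a single line: apply Theorem~\ref{sop} with the compact set $S=\{p\}$.

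\begin{proof}
Apply Theorem~\ref{sop} with the compact set $S=\{p\}$. Since $J^+(\{p\})=J^+(p)$ we have $E^+(\{p\})=E^+(p)$, so the assumption $q\in E^+(p)\setminus\{p\}$ is exactly $q\in E^+(S)\setminus S$. Theorem~\ref{sop} then yields a point $p'\in S$, necessarily $p'=p$, and a future lightlike geodesic with endpoints $p$ and $q$ entirely contained in $E^+(S)=E^+(p)$.
\end{proof}
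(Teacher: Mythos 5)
Your proof is correct and coincides with the paper's intended argument: the corollary is stated immediately after Theorem~\ref{sop} with no separate proof precisely because it is the specialization to the compact set $S=\{p\}$, for which $E^+(\{p\})=E^+(p)$ and the point furnished by the theorem is necessarily $p$ itself. Nothing more is needed.
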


On the general relativistic spacetime the metric induces a distribution of cones  $C'$ referred as {\em light cones}. It turns out, however, that they should be more properly called {\em gravity cones} since light in presence of matter propagates at a smaller speed.
The distribution of cones $C'$ represents more properly the speed of gravitational waves rather than light. Still in presence of media with different refractive indices we have a distribution of (true) light cones which can be modeled with a cone structure $C$. If the gravity cone $C'\ge C$ defines a globally hyperbolic cone structure, as it is often assumed, then the same will hold for the light cone structure. Now, in presence of media with different refractive indices the cone distribution $C$ will be discontinuous at the interface of the media. Thus discontinuous cone structures are pretty natural in the context of light propagation though for mathematical convenience they have been mostly disregarded.

One difficulty is that even by knowing the refractive indices in each medium, one would still be faced with the problem of assigning a refractive index value at their interface.
Here the mathematical theory comes into help since it shows that the theory is particularly satisfying if we take the lower value, in fact the speed of light is $c/n$ so this choice guarantees that the cone distribution will be upper semi-continuous, thus enjoying all the properties that we obtained in the previous sections. 
Similar considerations for what concerns Fermat's principle on Euclidean space can be found in Cellina \cite{cellina05}.

Given this preliminary discussion the next result expresses the Fermat principle in curved spacetime continua admitting discontinuous refractive indices. The principle states that
among the many virtual  paths that connect the source to the observer, there is one which minimizes the time of sight by the observer. The whole principle
relies on the existence of a minimum time of sight and of some special trajectory connecting the events of emission and reception.

\begin{theorem} (Fermat's principle)\\
Let $(M,C)$ be a globally hyperbolic closed cone structure (representing light propagation), let $t\mapsto \sigma(t)$ be an inextendibile causal curve (the observer) and let $p$ be a point source. If $J^+(p)\cap \sigma \ne \emptyset$ (at least some virtual light path reaches the observer) and $p\notin \sigma$ (source and observer do not coincide), then there is a minimum value $t_0$ such that $\sigma(t_0) \in J^+(p)$ and a future lightlike geodesic connecting $p$ to $\sigma(t_0)$ entirely contained in $\p J^+(p)$.
\end{theorem}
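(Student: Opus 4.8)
The plan is to separate the statement into the existence of a first meeting time $t_0$ and the construction of the geodesic, the latter being an immediate application of Corollary \ref{gtr}.

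First I would use that global hyperbolicity makes $J$ closed (Theorem \ref{mom}), so that $J^+(p)$ is closed and, parametrizing $\sigma$ by $h$-arc length for a complete $h$ so that its domain is all of $\mathbb{R}$ (Corollary \ref{dox}), the set $A:=\{t:\sigma(t)\in J^+(p)\}$ is a non-empty closed subset of $\mathbb{R}$. The crux is to show $A$ is bounded below. Fixing $t_1\in A$ and putting $K:=J^+(p)\cap J^-(\sigma(t_1))$, which is a compact causal emerald by global hyperbolicity, a transitivity argument shows that whenever $s\le t\le t_1$ with $\sigma(s)\in J^+(p)$, then $(p,\sigma(s))\in J$ and $(\sigma(s),\sigma(t))\in J$ give $\sigma(t)\in J^+(p)$, and also $\sigma(t)\in J^-(\sigma(t_1))$, so $\sigma(t)\in K$. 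Consequently, if $A$ were unbounded below one could pick $s_n\in A$ with $s_n\to-\infty$ and conclude that the entire past ray $\sigma|_{(-\infty,t_1]}$ lies in $K$; this past-inextendible causal curve imprisoned in a compact set contradicts the non-total imprisonment contained in global hyperbolicity. Hence $t_0:=\inf A$ is finite, and since $A$ is closed it is attained, yielding the minimal $t_0$ with $\sigma(t_0)\in J^+(p)$.

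Next I would locate $\sigma(t_0)$ on the boundary. Were $\sigma(t_0)\in\mathrm{Int}\,J^+(p)$, continuity of $\sigma$ would place $\sigma(t)\in J^+(p)$ for $t$ slightly below $t_0$, contradicting minimality; therefore $\sigma(t_0)\in J^+(p)\setminus\mathrm{Int}\,J^+(p)$, which coincides with $\partial J^+(p)=E^+(p)$ because $J^+(p)$ is closed. Since $p\notin\sigma$, we have $\sigma(t_0)\ne p$, so $\sigma(t_0)\in E^+(p)\setminus\{p\}$, and Corollary \ref{gtr} furnishes a future lightlike geodesic from $p$ to $\sigma(t_0)$ entirely contained in $E^+(p)=\partial J^+(p)$, as required.

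The step I expect to be the main obstacle is precisely the existence of the minimum $t_0$: the danger is that $\sigma$ re-enters $J^+(p)$ arbitrarily far into its past. The transitivity-plus-compactness observation above is what neutralizes this, converting any accumulation of intersection times at $-\infty$ into imprisonment of the past ray in the compact emerald $K$, which global hyperbolicity forbids via (the time-dual of) non-total imprisonment. Everything else reduces to the closedness of $J^+(p)$ and a single invocation of Corollary \ref{gtr}.
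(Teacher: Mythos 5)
Your proof is correct, but it takes a genuinely different route from the paper's. The paper settles the existence of the minimum $t_0$ in one stroke by invoking a Cauchy time function $\tau$ (Theorem \ref{ger}): $\tau$ is bounded below by $\tau(p)$ on $J^+(p)$, and since $\tau$ is Cauchy and increasing along the inextendible $\sigma$, the intersection times are bounded below in parameter; closedness of $J$ then yields the minimum, and Corollary \ref{gtr} concludes exactly as you do. You instead avoid time functions altogether and work with the more primitive ingredients of global hyperbolicity: compactness of the causal diamond $K=J^+(p)\cap J^-(\sigma(t_1))$, transitivity of $J$ to trap the whole past ray $\sigma\vert_{(-\infty,t_1]}$ inside $K$, and non-imprisonment to forbid that. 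What each buys: the paper's argument is shorter given the machinery already built (Geroch's volume-function construction), while yours is more self-contained, resting only on the defining characterizations ($\alpha$)/($\beta$) of Definition \ref{sot}. One point worth making explicit in your write-up: the paper's non-imprisonment is phrased for \emph{future} inextendible curves, whereas you need to exclude a \emph{past}-inextendible curve imprisoned in $K$; your parenthetical appeal to the time dual is legitimate and is justified within the paper by Lemma \ref{pqf}, since a past-imprisoned curve yields, via the biviable limit set, an inextendible (in particular future inextendible) causal curve contained in $K$. The remaining steps — locating $\sigma(t_0)$ on the boundary by minimality, identifying $\p J^+(p)=E^+(p)$ from closedness of $J^+(p)$, and invoking Corollary \ref{gtr} — coincide with the paper's.
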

\begin{proof}
Since the spacetime is globally hyperbolic it admits a Cauchy time function $\tau$. Thus $\tau$ is lower bounded by $\tau(p)$ on $J^+(p)$ and hence on $J^+(p)\cap \sigma$. So we can find a smallest value $t_0$ such that $\sigma(t_0)\in \p J^+(p)$. The desired conclusion is now a consequence of  $J$ being closed and of Cor.\ \ref{gtr}.
\end{proof}

The next version will be useful in the proof of Penrose's theorem.
\begin{theorem} \label{nig}
Let $(M,C)$ be a non-imprisoning closed cone structure and let  $S$ be a compact set such that $E^+(S)$ is bounded.  If $q\in E^+(p)\backslash\{p\}$ then there is a future lightlike geodesic $\sigma$ connecting $p$ to $q$ entirely contained in $E^+(p)$.
\end{theorem}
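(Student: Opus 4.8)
The plan is to follow the proof of Theorem \ref{sop} almost verbatim, now specialising to the point case $S=\{p\}$ (so that the operative content of the hypothesis is the boundedness of $E^+(p)=J^+(p)\setminus\mathrm{Int}\,J^+(p)$) and replacing every appeal there to the compactness of causal diamonds/emeralds — which came from global hyperbolicity — by an appeal to non-imprisonment together with the boundedness of $E^+(p)$. First I would invoke Theorem \ref{sqd} to obtain locally Lipschitz proper cone structures $C_k>C$ with $C_{k+1}\le C_k$ and $C=\cap_k C_k$; for each of these Corollary \ref{chg} applies, so every point of $E_k^+(p)\setminus\{p\}$ is joined to $p$ by an achronal $C_k$-lightlike geodesic contained in $E_k^+(p)$.

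The first substantive step is to produce endpoints $q_k\in E_k^+(p)$ with $q_k\to q$. As in the proof of \ref{sop}, if no such sequence existed there would be $\epsilon>0$ with $B(q,\epsilon)\subset \mathrm{Int}\,J_k^+(p)$ for all large $k$ (the ball is connected and contains $q\in J_k^+(p)$, so missing $E_k^+(p)$ forces it into the interior), whence $B(q,\epsilon)\subset\bigcap_k J_k^+(p)$; one then wants $B(q,\epsilon)\subset J^+(p)$ and hence $q\in\mathrm{Int}\,J^+(p)$, contradicting $q\in E^+(p)$. Passing from $\bigcap_k J_k^+(p)$ back to $J^+(p)$ is a limit-curve statement, and this is the delicate point: the $C_k$-causal curves realising $y\in J_k^+(p)$ for $y$ near $q$ need not lie in the bounded horismos, so their confinement is not automatic from the hypotheses. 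To handle it I would localise using the Minkowski-type, globally hyperbolic neighbourhood $U\ni q$ of Proposition \ref{iiu} (cf.\ Remark \ref{roc}): choosing $q_0\in J^+(p)\cap U$ on a causal curve from $p$ to $q$ whose final segment lies in $U$, one checks that $q\in E^+(q_0,U)$, and inside the globally hyperbolic $U$ the approximation argument of \ref{sop} runs without obstruction to produce $C_k$-boundary points of $J_k^+(q_0,U)$ converging to $q$. The genuine work — and the main obstacle — is to upgrade these \emph{local} boundary points to bona fide global points of $E_k^+(p)$, which is precisely where non-imprisonment and the boundedness of $E^+(p)$ must be combined.

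With the $q_k$ at hand, by \ref{chg} I take $C_k$-lightlike geodesics $\sigma_k$ from $p$ to $q_k$, each contained in $E_k^+(p)$ and hence avoiding $\mathrm{Int}\,J_k^+(p)\supset\mathrm{Int}\,J^+(p)$, parametrise them by $h$-arc length for a complete $h$, and apply the limit curve theorem \ref{main}. The confinement of the $\sigma_k$ is now clean: in alternative (ii) one obtains a future-inextendible $C$-causal limit curve $x^p$ issuing from $p$; since the initial segments $\sigma_k|_{[0,t]}$ converge uniformly to $x^p|_{[0,t]}$, every point of $x^p$ lies in $J^+(p)$, while $x^p$ lies in the closed set $M\setminus\mathrm{Int}\,J^+(p)$ because all $\sigma_k$ do, so $x^p\subset E^+(p)$; as $E^+(p)$ is bounded this contradicts non-imprisonment. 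Therefore alternative (i) holds, a subsequence converges to a continuous $C$-causal curve $\sigma$ from $p$ to $q$ with $\sigma\subset E^+(p)$, and $\sigma$ is a future lightlike geodesic by Proposition \ref{ckj} (any continuous causal curve contained in $E^+(p)$ is one); equivalently, if $p',q'\in\sigma$ had $q'\in\mathrm{Int}\,J^+(p')\subset\mathrm{Int}\,J^+(p)$ this would contradict $\sigma\cap\mathrm{Int}\,J^+(p)=\emptyset$. In summary, the limit-curve confinement of the geodesics themselves is straightforward once $E^+(p)$ is known bounded, and the crux of the proof is the stability of the peripheral condition $q\notin\mathrm{Int}\,J^+(p)$ under opening the cones, i.e.\ the existence of the approximating endpoints $q_k$.
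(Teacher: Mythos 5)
Your first half is correct and is exactly the paper's argument: given endpoints $q_k\in E_k^+(p)$ with $q_k\to q$, Corollary \ref{chg} supplies $C_k$-lightlike geodesics $\sigma_k\subset E_k^+(p)$, the limit curve theorem \ref{main} applies, the inextendible alternative is excluded because the limit curve would be a future inextendible $C$-causal curve lying in $J^+(p)$ and in the closed set $M\backslash \mathrm{Int}\, J^+(p)$, hence in the bounded set $E^+(p)$, contradicting non-imprisonment, and Proposition \ref{ckj} (or the direct transitivity argument) makes the limit a future lightlike geodesic.

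The gap is that you never produce the sequence $q_k$, and you say so yourself: the upgrade of the local boundary points of $J_k^+(q_0,U)$ to genuine points of $E_k^+(p)$ is declared ``the main obstacle'' and left open. This is not a loose end but the entire difficulty of the theorem: for closed cone structures a point of $E^+(p)$ need not be reachable by a causal curve avoiding $\mathrm{Int}\,J^+(p)$ (the composition rules such as $I\circ J\subset I$ fail), and the localization does not circumvent this, since membership in the local horismos $E_k^+(q_0,U)$ does not imply membership in the global one --- nothing in the hypotheses prevents a point of $E_k^+(q_0,U)$ from lying in $\mathrm{Int}\,J_k^+(p)$ via $C_k$-causal curves that leave $U$ and re-enter it, and your sketch offers no mechanism to exclude this.

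The paper closes this step without any localization, by running the limit-curve/imprisonment dichotomy a \emph{second} time, pointwise on a ball. Namely, if no $q_k\to q$ existed, there would be $\epsilon>0$ with $B(q,\epsilon)\subset J_k^+(p)$ for all large $k$; then for each $y\in B(q,\epsilon)$ one applies the limit curve theorem \ref{main} to the $C_k$-causal curves from $p$ to $y$: either a limit $C$-causal connecting curve exists, giving $y\in J^+(p)$, or one obtains a future inextendible $C$-causal curve issuing from $p$ and contained in $E^+(p)$, which is impossible by boundedness of $E^+(p)$ together with non-imprisonment --- the very same use of the hypotheses as in your first half, now replacing the compactness of emeralds that did this job in Theorem \ref{sop}. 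Since then $y\in J^+(p)$ for every $y$ in the ball, $q\in \mathrm{Int}\, J^+(p)$, contradicting $q\in E^+(p)$. This per-point second application of the dichotomy is precisely the idea missing from your proposal.
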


\begin{proof}
Let $C_k$, $C_{k+1}\le C_k$, $C=\cap_k C_k$, be a sequence of locally Lipschitz proper cone structures as in Th.\ \ref{sqd}. Suppose that we can find, passing to a subsequence if necessary, $q_k\in  E_k^+(S)\backslash S$, with $q_k\to q$. Then by Th.\ \ref{chg} there is a continuous $C_k$-causal curve $\sigma_k$ of starting point $p_k\in S$ and ending point $q_k$ entirely contained in $E_k^+(S)$, so not intersecting $\mathrm{Int} J^+_k(S)\supset \mathrm{Int} J^+(S)$. By the limit curve theorem there is a subsequence denoted in the same way such that $\sigma_k$ converges uniformly to a  continuous $C$-causal curve $\sigma$ which is either future inextendible and starting from  some $p\in S$ or connecting some $p\in S$ to $q$. It does not intersects the open set $\mathrm{Int} J^+(S)$ since none of the $\sigma_k$ does, so $\sigma$ is contained in $E^+(S)$. But the former case  cannot apply due to  the non-imprisonment condition, thus $\sigma$ connects $p$ to $q$.

It remains to prove that the sequence $q_k$ exists. Suppose that we cannot find $q_k$ as above, then there is $\epsilon>0$ such that $B(q,\epsilon) \subset  J^+_k(S)$ for sufficiently large $k$. For every $y\in B(q,\epsilon)$ using again the limit curve theorem and reasoning as above we get either a future inextendible continuous causal curve imprisoned in $E^+(S)$ starting from some  $p\in S$, which is impossible, or that $y\in J^+(S)$, thus as the conclusion holds for every $y$, $q\in \mathrm{Int} J^+(S)$, a contradiction. The fact that $\sigma$ is a future lightlike geodesic is immediate, since if $p',q'\in\sigma$, with $q'\in \mathrm{Int}J^+(p')$ then $q'\in \mathrm{Int}J^+(p) \subset \mathrm{Int} J^+(S)$, a contradiction.
\end{proof}

\subsection{Lorentz-Finsler space} \label{ngd}
Let $(M,C)$ be a closed cone structure, and let $\mathscr{F}\colon C\to [0,+\infty)$ be a concave positive homogeneous function (that is, these properties are to hold when restricting to each individual tangent space). Notice that we do not demand $\mathscr{F}(\p C)=0$.
Let us introduce the cone structure on  $M^\times=M\times\mathbb{R}$ defined at $P=(p,r)$ by
\begin{equation} \label{nhz}
C^\times_{P}=\{(y,z) \colon y\in C_p, \ \vert z\vert \le \mathscr{F}(y) \}.
\end{equation}
It is indeed easy to check that this is a non-empty convex sharp cone in the tangent space $T_PM^\times$. We shall also say that the cone structure $(M^\times, C^\times)$ is a {\em Lorentz-Finsler space} $(M,\mathscr{F})$. A Lorentz-Finsler space might also be called a {\em spacetime}. Similarly we define the concept of {\em Lorentz-Minkowski space} which is simply the model geometry to the tangent space of a Lorentz-Finsler space: in other words it is given by a vector space $V$,  a cone $C\subset V$ which is non-empty convex and  sharp, and a concave positive homogeneous function on $C$.

The connection with the usual more regular notion of Lorentz-Finsler space is explored in Sec.\ \ref{zzp} and \ref{xxo}.
The fact that  a Lorentz-Finsler space can be seen as a cone structure in a space with one additional dimension will be a central idea of this work.

\begin{definition}
 $(M,\mathscr{F})$ is a closed (proper) Lorentz-Finsler space iff $(M^\times, C^\times)$ is a closed (resp.\ proper) cone structure.  We say that $(M,\mathscr{F})$ is locally Lipschitz (or $C^0$) if $C^\times$ is locally Lipschitz (resp. $C^0$).
\end{definition}

The next result follows easily from the definitions.
\begin{proposition} \label{hll}
$(M,\mathscr{F})$ is a closed Lorentz-Finsler space iff $C$ and $\mathscr{F}$ are upper semi-continuous.
$(M,\mathscr{F})$ is a  $C^0$  proper Lorentz-Finsler space iff $(M,C)$ is a $C^0$  proper cone structure and  $\mathscr{F}$ is continuous and not identically zero on any fiber.  $(M,\mathscr{F})$ is a    proper Lorentz-Finsler space if there are $\tilde C\le C$ and $\tilde{\mathscr{F}}\colon \tilde C\to [0,+\infty)$, with $\tilde{\mathscr{F}}\le \mathscr{F}$ such that $(M,\tilde{ \mathscr{F}})$ is a  $C^0$ proper cone structure.
\end{proposition}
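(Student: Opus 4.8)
The plan is to unwind each clause through the definition (\ref{nhz}) of $C^\times$ and through Proposition \ref{yhh}, reducing every assertion about $(M^\times,C^\times)$ to a statement about the base field $C$ and the profile $\mathscr{F}$. Since $C^\times_P$ was already seen to be a non-empty sharp convex cone, ``$C^\times$ is a closed cone structure'' means, by Prop.\ \ref{yhh}, upper semi-continuity of $P\mapsto C^\times_P$, i.e.\ closedness of $C^\times$ inside the slit tangent bundle $TM^\times\setminus 0$.

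For the first equivalence I would show that $C^\times$ is closed in $TM^\times\setminus 0$ iff $C$ is upper semi-continuous (equivalently closed, by Prop.\ \ref{yhh}) and $\mathscr{F}$ is upper semi-continuous on $C$. The forward direction tests the closed total space against the sequences $(y_n,0)$ (yielding closedness of $C$) and $(y_n,\mathscr{F}(y_n))$ (yielding upper semi-continuity of $\mathscr{F}$); the delicate point is to exclude a limit of the form $(0,z)$ with $z\neq 0$, which is a genuine point of the slit bundle. Here positive homogeneity is essential: normalizing to $(y_n/\mathscr{F}(y_n),1)\to(0,1)$, closedness would force $(0,1)\in C^\times_p$, i.e.\ $0\in C_p$, contradicting sharpness. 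The converse runs the computation backwards: for $(y_n,z_n)\to(y,z)\neq 0$ with $y\neq 0$ one gets $(y,z)\in C^\times_p$ from closedness of $C$ and upper semi-continuity of $\mathscr{F}$, while the case $y=0$ (hence $z\neq 0$) is impossible because homogeneity gives $\mathscr{F}(y_n)=\Vert y_n\Vert_h\,\mathscr{F}(\hat y_n)\to 0$ once $\hat y_n=y_n/\Vert y_n\Vert_h\to\hat y\in C_p$ (closedness of $C$) with $\limsup\mathscr{F}(\hat y_n)\le\mathscr{F}(\hat y)<\infty$ (upper semi-continuity of $\mathscr{F}$), whereas $\mathscr{F}(y_n)\ge|z_n|\to|z|>0$.

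For the second equivalence, fiberwise properness is immediate: $\mathrm{Int}\,C^\times_P\neq\emptyset$ iff $\mathrm{Int}\,C_p\neq\emptyset$ and $\mathscr{F}$ is positive somewhere on $C_p$, and concavity with $\mathscr{F}\ge 0$ propagates one positive value into the interior (on the half-open segment joining a point where $\mathscr{F}>0$ to an interior point, $\mathscr{F}$ stays positive), so this matches ``$(M,C)$ proper and $\mathscr{F}\not\equiv 0$ on every fiber''. For continuity, upper semi-continuity is covered by the first equivalence, and lower semi-continuity of $C^\times$ from continuity of $C$ and $\mathscr{F}$ is a one-line selection argument (lift $(y,z)$ by $y_n\to y$ in $C$ and $z_n=\mathrm{sgn}(z)\min(|z|,\mathscr{F}(y_n))$). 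The hard part is the reverse implication: deducing joint continuity of $\mathscr{F}$ from Hausdorff-continuity of the proper field $C^\times$. The obstacle is that lower semi-continuity of the set-valued map only controls $C^\times_{P_n}$ ``from the inside'' along \emph{some} sequence, whereas continuity of $\mathscr{F}$ requires control of $\mathscr{F}(y_n)$ for \emph{every} $y_n\to y$, including approaches through $\partial C$. I would resolve this with concavity: fiberwise a finite concave function is automatically lower semi-continuous, hence (with the u.s.c.\ already proved) continuous, and on the interior it is locally Lipschitz with a constant that is locally uniform in the base point, controlled by the local upper bound on $\mathscr{F}$ (u.s.c.\ plus homogeneity) and by the interior distance to $\partial C$ (kept uniform by continuity of $C$, via Prop.\ \ref{low} and Prop.\ \ref{cob}). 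Applying this equi-Lipschitz bound to the good sequence furnished by lower semi-continuity of $C^\times$ upgrades to continuity of $\mathscr{F}$ on $\mathrm{Int}\,C$, and boundary values are recovered by the radial limit $\mathscr{F}(y)=\lim_{t\to 0^+}\mathscr{F}((1-t)y+t\,y^0)$; at the level of spherical sections this is exactly the standard fact that Hausdorff convergence of the subgraph bodies of continuous concave profiles is equivalent to uniform convergence of the profiles.

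Finally, for the sufficiency statement I would note that $\tilde C\le C$ and $\tilde{\mathscr{F}}\le\mathscr{F}$ give $\tilde C^\times\le C^\times$ fiberwise, so the continuous proper field $\tilde C^\times$ (continuous and proper by the second equivalence applied to $(M,\tilde{\mathscr{F}})$) is a continuous field of proper cones contained in $C^\times$. This verifies condition (*) and gives $\mathrm{Int}\,C^\times_P\neq\emptyset$ for every $P$; together with the closedness of $C^\times$ from the first equivalence (under the standing upper semi-continuity of $\mathscr{F}$, which is what makes $(M,\mathscr{F})$ a closed Lorentz-Finsler space in the first place) this is precisely the characterization of a proper cone structure, so $(M,\mathscr{F})$ is a proper Lorentz-Finsler space.
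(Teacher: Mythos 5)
The paper offers no argument for this proposition (it is introduced with ``The next result follows easily from the definitions''), so your attempt can only be judged on its own merits. Your first equivalence is correct, including the two delicate points (excluding limits of the form $(0,z)$ by the normalization $(y_n/\mathscr{F}(y_n),1)$, and the case $y=0$ in the converse), and your treatment of the third claim is also correct: exhibiting $\tilde C^\times\le C^\times$ as a continuous field of proper cones verifies condition (*) and, together with the standing closedness of $C^\times$, gives properness. The easy half of the second equivalence (continuity of $C$ and $\mathscr{F}$, plus fiberwise properness and $\mathscr{F}\not\equiv 0$, implies $C^\times$ is a $C^0$ proper cone structure) is fine as well.

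The gap is in the hard half of the second equivalence, exactly where you placed the weight: the parenthetical claim that ``fiberwise a finite concave function is automatically lower semi-continuous'' is false as soon as the sections of $C_p$ have dimension $\ge 2$. Concavity forces lower semi-continuity at a boundary point only along segments entering the domain (this is why your radial-limit identity is correct); it imposes nothing along tangential approaches through the boundary. Concretely, on the disk section $D=\{q\in\mathbb{R}^{2}\colon\Vert q\Vert\le 1\}$ of a round cone $C_p\subset\mathbb{R}^{3}$, let $e_1=(1,0)$ and
\[
f(q)=\sup\bigl\{\mu\in[0,1]\colon q=\mu\,e_{1}+(1-\mu)w,\ w\in D\bigr\}
=\min\Bigl(1,\ \tfrac{1-\Vert q\Vert^{2}}{2(1-q_{1})}\Bigr)\ \ (q\ne e_{1}),\qquad f(e_{1})=1,
\]
i.e.\ the concave envelope of the function equal to $1$ at $e_{1}$ and $0$ elsewhere. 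The sup formula is manifestly concave (combine representations using convexity of $D$), $0\le f\le 1$, and $f$ is upper semi-continuous; alternatively note $\tfrac{1-\Vert q\Vert^{2}}{2(1-q_{1})}=\tfrac{1+q_{1}}{2}-\tfrac{q_{2}^{2}}{2(1-q_{1})}$ and $q_{2}^{2}/(1-q_{1})$ is convex. Yet along the tangential sequence $q_{\theta}=(1-\theta^{3})(\cos\theta,\sin\theta)\to e_{1}$ one computes $f(q_{\theta})=2\theta+O(\theta^{2})\to 0<1=f(e_{1})$, so $f$ is not lower semi-continuous. (Your lemma is true for one-dimensional sections, i.e.\ $\dim M=2$, which may be the source of the slip.)

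Worse, this is not a repairable gap in your argument but a counterexample to the implication you are trying to prove. Extend $f$ positively homogeneously to $\mathscr{F}$ on $C_p$ and take the cone structure and $\mathscr{F}$ constant in $p$: each $C^\times_P$ is then closed (by upper semi-continuity of $\mathscr{F}$, exactly as in your first equivalence), convex, sharp, and has non-empty interior (since $f\ge 1/2$ near the central ray), hence proper; and the distribution $P\mapsto C^\times_P$ is constant, hence Hausdorff-continuous. So $(M,\mathscr{F})$ is a $C^0$ proper Lorentz--Finsler space, $(M,C)$ is a $C^0$ proper cone structure, and $\mathscr{F}\not\equiv 0$ on every fiber, yet $\mathscr{F}$ is discontinuous. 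Hence the ``only if'' half of the second sentence fails for $\dim M\ge 3$; what one can actually extract from $C^\times$ being $C^0$ and proper is that $C$ is $C^0$ proper (a point your write-up also leaves implicit, though it follows by lifting with $z=0$), that $\mathscr{F}$ is upper semi-continuous and not identically zero on fibers, and that $\mathscr{F}$ is continuous on $\mathrm{Int}\,C$ and along radial limits at $\p C$ --- but not continuity on all of $C$. Your remaining machinery (equi-Lipschitz bounds on the interior, radial recovery of boundary values) is sound but cannot bridge this; a correct statement must either assume continuity of $\mathscr{F}$ in this direction, restrict to situations rigid enough to force it (such as the Lorentzian quadratic case treated in the theorem following the proposition), or weaken the conclusion as above. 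Since the paper omits the proof entirely, this direction evidently went unchecked there.
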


Here the upper semi-continuity of $\mathscr{F}$ is understood as follows: for $y_n \in C$, $y_n\to y\in C$, $\limsup_{y_n\to y} \mathscr{F}(y_n)\le \mathscr{F}(y)$. Continuity is understood similarly, where the latter equation is replaced by $\lim_{y_n\to y} \mathscr{F}(y_n)= \mathscr{F}(y)$.

\begin{remark}
If $(M,\mathscr{F})$ is a  locally Lipschitz   proper Lorentz-Finsler space then $(M,C)$ is a locally Lipschitz   proper cone structure
but the base dependence of $\mathscr{F}$ need not be locally Lipschitz.
Let us consider the metric $g=-(\dd x^0)+(x^1)^2 (\dd x^1)^2$ on $\mathbb{R}\times \mathbb{R}+$, then $\mathscr{F}=\sqrt{(y^0)^2-(x^1)^2(y^1)^2}$. For $x^1\le 1$ the vector $y=(y^0,y^1)=(1,1)$ is causal and $\mathscr{F}(y)=\sqrt{1-(x^1)^2}$ which clearly is not locally Lipschitz in a neighborhood of $(x^0,x^1)=(0,1)$ although $C^\times$ is locally Lipschitz.
\end{remark}

The next result proves that our approach to the regularity of Lorentz-Finsler spaces  is compatible with the natural definitions coming from Lorentzian geometry.
\begin{theorem}
For a time oriented Lorentzian manifold $(M,g)$ the metric $g$ is continuous (locally Lipschitz) iff $C^\times$ is continuous (resp. locally Lipschitz).
\end{theorem}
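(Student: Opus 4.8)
The plan is to recognise $C^\times$ as the causal cone structure of an auxiliary Lorentzian metric of product type, which reduces one implication to Proposition \ref{jss}, and then to treat the converse by reconstructing $g$ pointwise from the ``vertical extent'' of $C^\times$. First I would set up the product metric. On $M^\times=M\times\mathbb{R}$, with $r$ the coordinate on the second factor and $\pi\colon M^\times\to M$ the projection, put $g^\times=\pi^*g+\dd r\otimes \dd r$, a Lorentzian metric of signature $(-,+,\cdots,+,+)$. Since $\dd r\otimes \dd r$ has constant coefficients in the product coordinates, $g^\times$ has exactly the regularity of $g$. Orient $g^\times$ in time by $(V,0)$, where $V$ is a future $g$-timelike field on $M$. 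A direct computation shows that $(y,z)$ is future $g^\times$-causal iff $g(y,y)+z^2\le 0$ and $y$ is future $g$-directed, i.e.\ iff $y\in C_p$ and $\vert z\vert\le \sqrt{-g(y,y)}=\mathscr{F}(y)$; the lightlike case $\mathscr{F}(y)=0$ forces $z=0$ and is consistent. Hence the future causal cone of $g^\times$ is precisely the $C^\times$ of Eq.\ (\ref{nhz}). Consequently, if $g$ is continuous (locally Lipschitz) then so is $g^\times$, and Proposition \ref{jss} applied to $(M^\times,g^\times)$ yields that $C^\times$ is continuous (resp.\ locally Lipschitz).

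For the converse I would first observe that, unlike the bare cone structure $C$ (which determines only the conformal class of $g$), the cone $C^\times$ records the full function $\mathscr{F}$: for $y\in \mathrm{Int}\,C_x$ one has $\mathscr{F}_x(y)=\max\{z\colon (y,z)\in C^\times_{(x,r)}\}$. Fix $\bar x$ with a chart in which $g_{\alpha\beta}(\bar x)=\eta_{\alpha\beta}$, and choose a finite family $\{y_k\}$ lying in a fixed compact set $K\subset \mathrm{Int}\,C_{\bar x}$ such that the functionals $g\mapsto g(y_k,y_k)$ span $(\mathrm{Sym}^2 T^*_{\bar x}M)^*$; this is possible because the timelike cone is open and such functionals separate symmetric tensors by polarization. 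Each $(y_k,0)$ lies in $\mathrm{Int}\,C^\times_{(\bar x,r)}$, so by Proposition \ref{low} there is a neighborhood $N\ni \bar x$ with $K\subset \mathrm{Int}\,C_x$ for all $x\in N$; there $\mathscr{F}_x(y_k)$ is defined and positive and $-g_{\alpha\beta}(x)\,y_k^\alpha y_k^\beta=\mathscr{F}_x(y_k)^2$. By polarization these finitely many values determine all $g_{\alpha\beta}(x)$ through fixed linear combinations, so continuity (local Lipschitzness) of $x\mapsto \mathscr{F}_x(y_k)^2$ would transfer directly to the metric components, and the arbitrariness of $\bar x$ would finish the proof.

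The hard part is therefore the single step that continuity (local Lipschitzness) of the cone field $C^\times$ forces $x\mapsto \mathscr{F}_x(y_k)$ to be continuous (resp.\ locally Lipschitz). My plan here is to normalise $y_k^0=1$ and slice $C^\times_{(x,r)}$ by the hyperplane $\{y^0=1\}$, obtaining convex bodies $K_x\subset\mathbb{R}^{n+1}$ in the coordinates $(y^1,\dots,y^n,z)$ whose Hausdorff distance is bi-Lipschitz comparable to $\hat d_H$ of the sphere sections of $C^\times$ over the relevant compact region (where $y^0$ is bounded away from $0$, the radial projection from the unit sphere to $\{y^0=1\}$ being a diffeomorphism with bounded derivatives). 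Over the spatial point $\vec y_k$ the vertical extent of $K_x$ equals $[-\mathscr{F}_x(y_k),\mathscr{F}_x(y_k)]$. Given a Hausdorff bound $\delta$ between $K_x$ and $K_{x'}$, the top boundary point of $K_x$ over $\vec y_k$ has a $\delta$-close point of $K_{x'}$; using that $\mathscr{F}_{x'}(1,\cdot)$ is concave, positive, and hence uniformly locally Lipschitz in the spatial variable near $\vec y_k$ (uniformly for $x'$ near $\bar x$, since the slices stay uniformly ``thick''), one bounds $\vert\mathscr{F}_x(y_k)-\mathscr{F}_{x'}(y_k)\vert$ by $\delta$ up to a fixed multiplicative constant. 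Letting $\delta\to 0$ (continuity of $C^\times$) gives continuity, while $\delta\le L\Vert x-x'\Vert$ (local Lipschitzness of $C^\times$) gives local Lipschitzness. I expect this conversion of Hausdorff regularity of the cone field into base regularity of the height function $\mathscr{F}$, together with the uniform control of the transverse Lipschitz modulus of $\mathscr{F}_{x'}(1,\cdot)$, to be the only delicate point; everything else is the product-metric identification and elementary linear algebra. The conformal ambiguity that would obstruct recovering a general Lorentzian metric from its bare cone $C$ is absent here precisely because the unit normalisation $\dd r\otimes \dd r$ is built into $C^\times$.
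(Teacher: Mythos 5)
Your proposal is correct and follows essentially the same route as the paper's proof: the forward direction via the product metric $g^\times=g+\dd r\otimes\dd r$ and Proposition \ref{jss}, and the converse by slicing $C^\times$ with $\{y^0=1\}$, converting the Hausdorff bound between the slices into a bound on the height function $\mathscr{F}_x(y)$ through a uniform transverse slope estimate, and then recovering the metric coefficients by polarization. The only differences are cosmetic: you work with a finite spanning family $\{y_k\}$ and a concavity-based Lipschitz bound on the graphing function, where the paper uses arbitrary timelike $\bar y$ together with an explicit tangent-plane/gradient ($1/\cos\theta$) estimate on the half-ellipsoid graphs.
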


Due to the metric signature, the equivalence does not hold for upper/lower semi-continuity, for the cone can get narrower or wider depending on the discontinuous metric coefficient. In this case the most useful concept of upper/lower semi-continuity is the new one derived from $C^\times$.

\begin{proof}
Only if direction: the cone distribution $C^\times$ is the bundle of causal vectors for the Lorentzian metric $g^\times=g+\dd (y^{n+1})^2$, where $z=y^{n+1}$ is the extra tangent space coordinate of $T(M\times \mathbb{R})$. The metric coefficients of $g^\times$ are continuous (resp. locally Lipschitz) because those of $g$ are, so by Prop.\ \ref{jss} $C^\times$ is continuous (resp.\ locally Lipschitz).

If direction: The continuity of $C^\times$ implies the continuity of the function $\sqrt{\max\{-g(x)(y,y),0\}}$ and  hence that of $g$ (by the arbitrariness of $y$ and polarization formulas). Suppose that $C^\times$ is locally Lipschitz. Let $U$ be a coordinate neighborhood of $\bar x\in M$, so that $TU$ trivializes as  $U\times \mathbb{R}^{n+1}$ with $(x^\alpha, y^\alpha)$ local coordinates. Suppose also that the coordinates have been chosen so that $g_{\alpha \beta}({\bar x})=\eta_{\alpha \beta}$, i.e.\ the Minkowski metric, with $\p/\p x^0$ future directed.
Since $g$ does not depend on the extra coordinate $x^{n+2}$ it is sufficient to consider the dependence of $C^\times$ on the coordinates $\{x^\alpha\}$ of $M$ by fixing $x^{n+2}=0$. Thus $C^\times_{(x,0)}$ is a cone of $\mathbb{R}^{n+2}$ which for $x$ close to $\bar x$,  intersects the locus $\{y^0=1, y^{n+2}\ge 0\}$
on a half ellipsoid of $\mathbb{R}^{n+1}$ of equation $y^{n+2}=\sqrt{-g_{\alpha \beta}(x) y^\alpha y^\beta}$ which intersects orthogonally $\{ y^{n+2}=0\}$. By continuity for every round cone $A\subset \mathrm{Int} C_x$, we have for $x\in \bar V$, $V$ sufficiently small neighborhood of $\bar x$,  $\bar V\subset U$, $A\subset \mathrm{Int} C_x\subset \mathbb{R}^{n+1}$. The idea is to show that $g_{\alpha \beta}(x)y^\alpha y^\beta$ is locally Lipschitz in $x$ for any chosen $\bar y^\alpha \in A$ and hence, by the arbitrariness of $\bar y$ an polarization formulas, that all the coefficients of the metric are locally Lipschitz.
By positive homogeneity we can restrict $\bar y^\alpha \in \tilde A=A\cap \{y^0=1\}$, then for $(x^\alpha, y^\alpha)\in  \bar V\times \tilde A$ we have that the function $\sqrt{-g_{\alpha \beta}(x) y^\alpha y^\beta}$ is well defined, bounded from below by a positive constant and with differential bounded from above.  We suppose to have chosen $U$ so small that  there is a constant $K>0$ such that for $x_1,x_2\in U$,
\[
D_{12}:=D\big(C^\times_{(x_1,0)}\cap \{y^0\!=\!1, y^{n+2}\!\ge 0\} , C^\times_{(x_2,0)}\cap \{y^0\!=\!1, y^{n+2}\!\ge 0\}\big)\le K \Vert x_1-x_2\Vert ,
\]
 where $D$ is the Hausdorff distance on $\mathbb{R}^{n+1}$, and $\Vert \Vert$ is the Euclidean norm on $\mathbb{R}^{n+1}$. That is, the distance among the half ellipsoids has Lipschitz regularity.
Let us consider two points $x_1,x_2\in V$ where the label $2$ is chosen so that $f(x_2, \bar y)\le f(x_1,\bar y)$, with $f(x,y)=\sqrt{-g_{\alpha \beta}(x)  y^\alpha  y^\beta}$ and $y\in \tilde A$. The distance on $\mathbb{R}^{n+1}$ of the point $p=(\bar y^1,\ldots, \bar y^{n+1}, f(x_2))$ on the ellipsoid 2 to the ellipsoid 1 of graph $y^{n+2}=\sqrt{-g_{\alpha \beta}(x_1) y^\alpha y^\beta}$ is smaller that the Hausdorff distance $D_{12}$ between the ellipsoids. This minimum distance is realized by a segment between the point $p$ and a point $q$ on the ellipsoid 1 ($q$ has  projection $\tilde y$ in general different from $\bar y$). Notice that the tangent plane to the ellipsoid 1 at $q$ is orthogonal to $pq$ and intersects the fiber of $\bar y$ at a point $r$ of extra coordinate larger than $f(x_1,\bar y)$ due to the convexity of the ellipsoid 1. Moreover, $\overline{pr}\le \overline{pq}/\cos \theta$ where $\tan \theta=\Vert \nabla_y f(x_1,\tilde y)\Vert$ is the slope of the mentioned tangent plane, thus since this derivative is bounded on $\bar V\times \tilde A$, $1/\cos \theta$ is bounded and we  can find a constant $L>0$ independent of $\bar y$ such that for $x_1,x_2\in \bar V$, $f(x_1,\bar y) -f(x_2,\bar y) \le \overline{pr}\le L D_{12}$. Since $f$ is bounded from above by a constant $R>0$ on $\bar V\times \tilde A$, we have  $0<f(x_1,\bar y) +f(x_2,\bar y)\le R$, and so multiplying the two inequalities $0\le [g_{\alpha \beta}(x_2 )- g_{\alpha \beta}(x_1 )] \bar y^\alpha  \bar y^\beta\le RLD_{12}\le  KRL \Vert x_1-x_2\Vert$.
\end{proof}

An interesting  large class of closed Lorentz-Finsler spaces is selected by the next theorem (see also Remark \ref{eqq}).

\begin{theorem}
Let $C\subset TM\backslash 0$ be a proper cone structure and let $\mathscr{F}\colon C\to [0,+\infty)$ be a positive  homogeneous  $C^0$ function, such that $\mathscr{F}^{-1}(0)=\p C$. Let $f\colon \mathbb{R}_+\to \mathbb{R}$  (for instance $f(x)=x^a/a$, $a > 1$, the typical case being $a=2$) be $C^1([0,+\infty))\cap C^2(\mathbb{R}_+)$ and such that $f'(x)>0, f''(x)>0$ for $x>0$. Suppose that $\mathscr{L}=-f(\mathscr{F})$ is $C^1(C)\cap C^2(\mathrm{Int} C)$, that it has  Lorentzian vertical Hessian $\dd^2_y \mathscr{L}$, and that $\dd_y \mathscr{L}\ne 0$ on $\p C$. Then $\mathscr{F}$ is concave, and $(M,\mathscr{F})$ is a locally Lipschitz  proper Lorentz-Finsler space (hence both $C$ and $C^\times$ are locally Lipschitz).
\end{theorem}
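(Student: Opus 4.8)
\emph{Overall strategy.} I would treat the two assertions separately: first derive the concavity of $\mathscr{F}$ from a pointwise reverse Cauchy--Schwarz inequality for the vertical Hessian $H:=\dd^2_y\mathscr{L}$, and then read off that $C^\times$ is a locally Lipschitz proper cone structure (whence $C$ is too) by using $\mathscr{L}$ as the regular Lagrangian, in analogy with the metric in Prop.\ \ref{jss}.

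\emph{Concavity.} I would work fiberwise on $\mathrm{Int}\,C_x$, where $\mathscr{F}>0$ and $\mathscr{L}=-f(\mathscr{F})$ is $C^2$. Since $f'>0$ on $\mathbb{R}_+$, the inverse $f^{-1}$ is $C^2$ on $(f(0),+\infty)$, so $\mathscr{F}=f^{-1}(-\mathscr{L})$ is $C^2$ there and all its vertical derivatives exist. Differentiating $\mathscr{L}=-f(\mathscr{F})$ gives $\partial_i\mathscr{L}=-f'(\mathscr{F})\,\partial_i\mathscr{F}$ and
\[
H_{ij}=\partial_i\partial_j\mathscr{L}=-f''(\mathscr{F})\,\partial_i\mathscr{F}\,\partial_j\mathscr{F}-f'(\mathscr{F})\,\partial_i\partial_j\mathscr{F}.
\]
By Euler's relations for the degree-one homogeneous $\mathscr{F}$ (that is $y^i\partial_i\mathscr{F}=\mathscr{F}$ and $y^j\partial_i\partial_j\mathscr{F}=0$), contraction with $y$ yields $H_{ij}y^j=-f''(\mathscr{F})\,\mathscr{F}\,\partial_i\mathscr{F}$ and $H(y,y)=-f''(\mathscr{F})\,\mathscr{F}^2<0$, so $y$ is $H$-timelike and the hypothesis that $H$ is Lorentzian is consistent. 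For an arbitrary $v$ the first identity gives $v^i\partial_i\mathscr{F}=-H(y,v)/(f''(\mathscr{F})\mathscr{F})$, and substitution into the Hessian formula, using $f''(\mathscr{F})\mathscr{F}^2=-H(y,y)$, produces
\[
f'(\mathscr{F})\,v^iv^j\partial_i\partial_j\mathscr{F}=-H(v,v)+\frac{H(y,v)^2}{H(y,y)}.
\]
Since $H$ is Lorentzian with $y$ timelike, the elementary reverse Cauchy--Schwarz inequality $H(y,v)^2\ge H(y,y)H(v,v)$ holds (decompose $v=\lambda y+w$ with $w\in y^\perp$, where $H$ is positive definite); dividing by $H(y,y)<0$ this is precisely $-H(v,v)+H(y,v)^2/H(y,y)\le0$, and as $f'(\mathscr{F})>0$ we get $v^iv^j\partial_i\partial_j\mathscr{F}\le0$. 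Thus $\mathscr{F}$ is concave on $\mathrm{Int}\,C_x$, and concavity extends to all of $C_x$ by continuity of $\mathscr{F}$ and $C_x=\overline{\mathrm{Int}\,C_x}$.

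\emph{Convexity and the cone structure.} With $\mathscr{F}$ concave, the set $C^\times$ of (\ref{nhz}) is fiberwise a convex sharp cone with non-empty interior: convexity follows from $|tz_1+(1-t)z_2|\le t\mathscr{F}(y_1)+(1-t)\mathscr{F}(y_2)\le\mathscr{F}(ty_1+(1-t)y_2)$, sharpness from sharpness of $C$, closedness from upper semicontinuity of the $C^0$ function $\mathscr{F}$, and $\mathrm{Int}\,C^\times_P\neq\emptyset$ from $\mathrm{Int}\,C_p\neq\emptyset$ together with $\mathscr{F}>0$ on $\mathrm{Int}\,C_p$; hence by Prop.\ \ref{hll} $(M,\mathscr{F})$ is at least a proper Lorentz--Finsler space. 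For local Lipschitzness I would imitate Prop.\ \ref{jss}, playing $\mathscr{L}$ in the role of the metric. On $\mathrm{Int}\,C$ one has $\mathscr{F}=f^{-1}(-\mathscr{L})\in C^2$, while $\partial C=\mathscr{F}^{-1}(0)$ is the level set $\{\mathscr{L}=-f(0)\}$ crossed transversally because $\dd_y\mathscr{L}\neq0$ there; extending $\mathscr{L}$ to a $C^1$ function on a bundle neighbourhood and applying the implicit function theorem exactly as in the ellipsoid estimate of Prop.\ \ref{jss} exhibits $x\mapsto\partial C_x$, and with fiberwise convexity $x\mapsto C_x$, as locally Lipschitz in the Hausdorff distance of the sphere sections. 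For $C^\times$ I would argue the same way on $M^\times$, taking as defining object the Lagrangian $\mathscr{L}^\times$ whose causal cone is $C^\times$ (for the quadratic $f$ this is the Lorentzian $\mathscr{L}^\times=\mathscr{L}+\tfrac12 z^2$, to which Prop.\ \ref{jss} applies verbatim). Once $C^\times$ is locally Lipschitz the local Lipschitzness of $C$ follows from the Remark after the definition of proper Lorentz--Finsler space.

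\emph{Main obstacle.} The delicate point is the uniform control of $\partial C^\times$ along its edge $\{(y,0):y\in\partial C\}$, where the two null sheets $z=\pm\mathscr{F}(y)$ coalesce and $\mathscr{F}$ ceases to be differentiable. As the example in the Remark following the definition of Lorentz--Finsler space shows, the base-dependence of $\mathscr{F}$ itself is genuinely not Lipschitz near such points, so the Hausdorff estimate must be carried out directly on the cone sections and cannot be reduced to Lipschitz dependence of the length $\mathscr{F}$. It is exactly here that the nondegeneracy hypotheses are essential: the invertibility of the Lorentzian vertical Hessian on $\mathrm{Int}\,C$ and the nonvanishing of $\dd_y\mathscr{L}$ on $\partial C$ tie the (possibly singular) vanishing of $\mathscr{F}$ to the transversal vanishing of the $C^1$ function $\mathscr{L}$, which is what makes the estimate uniform up to the edge and yields local Lipschitzness of both $C$ and $C^\times$.
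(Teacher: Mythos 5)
Your proposal is correct and, on the part that carries the real weight of the theorem --- the local Lipschitz estimate --- it is essentially the paper's own argument: both use $\mathscr{L}$ as a $C^1$ defining function for $\p C$ with non-vanishing vertical gradient, and both run the Prop.\ \ref{jss}-style Hausdorff-distance/Taylor estimate on sliced cones, the bound coming from compactness and the finiteness of $\Vert\nabla_x\mathscr{L}\Vert/\Vert\nabla_y\mathscr{L}\Vert$ on the lightlike slice. Two differences are worth recording. For concavity, the paper simply cites Sec.\ \ref{zzp} (Remark \ref{eqq} and Eq.\ (\ref{men})), whereas you prove it directly via Euler's relations and the reverse Cauchy--Schwarz inequality for the Lorentzian Hessian $H$ with $H(y,y)=-f''\mathscr{F}^2<0$; the linear algebra is the same in both routes (positivity of $H$, equivalently of $h=-\dd^2\mathscr{F}/\mathscr{F}$, on $y^{\perp_H}=\ker \dd\mathscr{F}$), but your version is self-contained and checks out. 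For $C^\times$, the paper's key observation is that the Hausdorff argument needs nothing more than some $C^1$ function vanishing on $\p C^\times\cap\{z\ge 0\}$ with non-vanishing gradient there, and that $\mathscr{L}(x,y)+f(z)$ is such a function for \emph{every} admissible $f$: its gradient is $(\nabla_y\mathscr{L},f'(z))$, non-zero on the sheets because $f'(z)>0$ for $z>0$, and non-zero on the edge $z=0$ because $\dd_y\mathscr{L}\ne 0$ on $\p C$. You exhibit this function only for $f=x^2/2$; for general $f$ your phrase ``the Lagrangian $\mathscr{L}^\times$ whose causal cone is $C^\times$'' should be replaced by exactly this $\mathscr{L}+f(z)$, which is the one-line completion your argument needs --- in particular it is what handles uniformly the coalescing sheets at the edge $\{(y,0)\colon y\in\p C\}$ that you correctly single out as the main obstacle.
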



\begin{proof}The proof that $\mathscr{F}$ is concave and that $(M,\mathscr{F})$ is a $C^0$ proper Lorentz-Finsler space follows from the continuity of $\mathscr{F}$ and from the results on Lorentz-Minkowski spaces of Sec.\ \ref{zzp}, particularly Remark \ref{eqq}, so we need only to prove that $C$ and $C^\times$ are locally Lipschitz. Let us prove that $C$ is locally Lipschitz.
Let $\bar x\in M$, and let $U$ be a  coordinate neighborhood of $\bar x$. Let us consider the trivialization of the bundle $T U$, as induced by the coordinates. We are going to focus on the subbundle of $TU$ of vectors that in coordinates read as follows $(x^\alpha, y^\alpha)$ where $y^0=1$, i.e. we are going to work on $U\times \mathbb{R}^n$ (the function $\mathscr{L}$ will be thought as restricted to this set though we shall keep the original notation).
It will be sufficient to prove the locally Lipschitz property for this distribution of sliced cones. Let $\Vert \cdot \Vert$ be the Euclidean norm on $\mathbb{R}^n$.
Since the cone distribution over the sliced subbundle has compact fibers, we can find $U$ sufficiently small that there is a constant $K>0$, such that $ \Vert \nabla_x \mathscr{L}\Vert/\Vert \nabla_y \mathscr{L}\Vert<K$ for all lightlike vectors on the sliced subbundle (here the labels $x,y$ refer to base or  vertical gradients).

Let us consider two sliced cones relative to the points $x_1$ and $x_2$. Let $y_1$ and $y_2$ be two points that realize the Hausdorff distance $D(x_1,x_2)$ between the sliced cone boundaries, i.e. $D(x_1,x_2)= \Vert \delta y \Vert$, $\delta y=y_1-y_2$, where the vector $\delta y= y_1-y_2$ can be identified with a vector of $\mathbb{R}^n$ since its 0-th component vanishes. The definition of Hausdorff distance easily implies that $\delta y$ is orthogonal to one
of the sliced cone boundaries. Let it be that of $x_2$, otherwise switch the labels 1 and 2. So we have $\delta y\propto \nabla_y \mathscr{L}(x_2,y_2)$, and hence $\Vert\nabla_y \mathscr{L} \cdot \delta y\Vert=\Vert\nabla_y \mathscr{L} \Vert \Vert \delta y\Vert$. Let $\delta x=x_1-x_2$, $\delta \mathscr{L}=\mathscr{L}(x_1,y_1)-\mathscr{L}(x_2,y_2)=0$. By continuity as $\delta x \to 0$, we have $\delta y\to 0$, up to higher order terms the Taylor expansion at $(x_2,y_2)$ gives
$0=\nabla_y \mathscr{L} \cdot \delta y+\nabla_x \mathscr{L} \cdot \delta x$, and hence for sufficiently small $\delta x$,  $D(x_1,x_2)=\Vert \delta y\Vert\le K\Vert \delta x\Vert$, which proves the locally Lipschitz property. It can be observed that this argument used only one property of $\mathscr{L}$, namely that of being $C^1$  on $\p C$ with non-vanishing vertical gradient. This property  has been used for deducing the existence of $K$. Now, for the local Lipschitzness of $C^\times$ we need only to show that there is a $C^1$ function up to $\p C^\times$ with non-vanishing gradient on $\p C^\times\cap \{z\ge 0\}$, where $z$ is the extra tangent space coordinate. Evidently, the function $\mathscr{L}(x,y)+f(z)$, has the desired property.
\end{proof}

Let $(M,\mathscr{F})$ be a closed Lorentz-Finsler space. Over every relatively compact coordinate neighborhood  $U$ we can find a constant $a>0$ such that for every $x\in U$, $y\in T_xM$, $\mathscr{F}( y) \le a \sum_\mu \vert y^\mu\vert$. In fact, this is a consequence of the upper semi-continuity and positive homogeneity of $\mathscr{F}$. On a parametrized continuous causal curve $t \mapsto x(t)$, as each component $x^\mu(t)$ is absolutely continuous, each derivative $\dot x^\alpha$ is integrable and so $\mathscr{F}( \dot x)$ is integrable.

\begin{definition}
Let $(M,\mathscr{F})$ be a closed Lorentz-Finsler space. The (Lorentz-Finsler) length of a continuous causal curve $x\colon [0,1]\to M$, is $\ell(x)=\int_0^1 \mathscr{F}(\dot x) \dd t$ (it is independent of the parametrization).   The (Lorentz-Finsler) distance is defined by: for $(p,q) \notin J$, we set $d(p,q)=0$, while for $(p,q)\in J$
\begin{equation}
d(p,q)=\textrm{sup}_x \ell(x) ,
\end{equation}
where $x$ runs over the continuous causal curves which connect $p$ to $q$.
\end{definition}
Clearly, the reverse triangle inequality holds true: if $(p,q)\in J$ and $(q,r)\in J$ then
\begin{equation} \label{lfr}
d(p,r)\ge d(p,q)+d(q,r).
\end{equation}

\begin{theorem} \label{ddb}
Let $(M,\mathscr{F})$ be a locally Lipschitz proper Lorentz-Finsler space such that $\mathscr{F}(\p C)=0$. Then $d$ is lower semi-continuous.
\end{theorem}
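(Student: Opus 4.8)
The plan is to exploit the product cone structure $C^\times$ on $M^\times = M\times\mathbb{R}$ and to encode the Lorentz-Finsler distance into its causal relation $J_{C^\times}$. The basic dictionary, immediate from the definition $C^\times_{(p,r)}=\{(y,z):y\in C_p,\ |z|\le\mathscr{F}(y)\}$, is the following: a $C^\times$-causal curve projects to a $C$-causal curve $x$ in $M$ with $|\dot z|\le\mathscr{F}(\dot x)$ a.e., so that $(q,\tau)\in J^+_{C^\times}((p,0))$ forces $\tau=\int\dot z\le\int\mathscr{F}(\dot x)=\ell(x)\le d(p,q)$; conversely any causal curve of length $>\tau\ge 0$ lifts sub-maximally to a $C^\times$-causal curve reaching $(q,\tau)$. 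Lower semicontinuity will then follow from the openness of the chronological relation $I_{C^\times}$, once I show that the strict inequality $\tau<d(p,q)$ upgrades the causal relation to a chronological one in $M^\times$.

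Concretely, fix $p_n\to p$, $q_n\to q$; I may assume $d(p,q)=:L>0$ (otherwise there is nothing to prove since $d\ge 0$), and fix any $\tau$ with $0<\tau<L$. There is a $C$-causal curve $\gamma$ from $p$ to $q$ with $\ell(\gamma)>\tau$. I lift it by $\dot z=\lambda\,\mathscr{F}(\dot\gamma)$ with $\lambda=\tau/\ell(\gamma)\in(0,1)$, obtaining a $C^\times$-causal curve $\Gamma$ from $(p,0)$ to $(q,\tau)$. The key point is that $\Gamma$ is \emph{not} a $C^\times$-lightlike geodesic: since $\ell(\gamma)>0$ and $\mathscr{F}(\partial C)=0$, the set where $\mathscr{F}(\dot\gamma)>0$ --- equivalently $\dot\gamma\in\mathrm{Int}\,C$ --- has positive measure, and on it the tangent $(\dot\gamma,\lambda\,\mathscr{F}(\dot\gamma))$ satisfies $|\dot z|<\mathscr{F}(\dot\gamma)$, hence lies in $\mathrm{Int}\,C^\times$. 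By Theorem \ref{aam} a locally achronal causal curve has lightlike tangent almost everywhere, so $\Gamma$ cannot be an achronal lightlike geodesic. Invoking the dichotomy of Theorem \ref{oop} for the locally Lipschitz proper cone structure $(M^\times,C^\times)$, there must then exist a $C^\times$-timelike curve joining $(p,0)$ to $(q,\tau)$, i.e. $\big((p,0),(q,\tau)\big)\in I_{C^\times}$.

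Since $C^\times$ is a proper cone structure, $I_{C^\times}$ is open, so a whole neighborhood of $\big((p,0),(q,\tau)\big)$ in $M^\times\times M^\times$ lies in $I_{C^\times}\subset J_{C^\times}$. As $\big((p_n,0),(q_n,\tau)\big)\to\big((p,0),(q,\tau)\big)$, for all large $n$ we get $(q_n,\tau)\in J^+_{C^\times}((p_n,0))$, whence $d(p_n,q_n)\ge\tau$ by the dictionary above. Thus $\liminf_n d(p_n,q_n)\ge\tau$ for every $\tau<L$, giving $\liminf_n d(p_n,q_n)\ge d(p,q)$, which is precisely lower semicontinuity.

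I expect the main obstacle to be the passage from the causal to the chronological relation in $M^\times$ --- that is, showing $\Gamma$ genuinely has timelike slack --- since one cannot in general extract a timelike sub-arc from the merely absolutely continuous curve $\gamma$, whose $\mathrm{Int}\,C$-directed tangents may occur on a positive-measure but nowhere-dense set. The device that sidesteps this is to argue globally through the achronality criterion: rather than constructing a timelike connecting curve by hand, I use Theorem \ref{aam} to exclude $\Gamma$ from being a lightlike geodesic and let Theorem \ref{oop} supply the timelike curve. One should also verify the harmless degenerate cases ($(p,q)\notin J$ or $d(p,q)=0$) and check that the hypothesis $\mathscr{F}(\partial C)=0$ is exactly what guarantees that the positive-measure set of $\mathrm{Int}\,C$-directed tangents carries the whole length, so that the lift $\Gamma$ acquires a timelike tangent on a set of positive measure.
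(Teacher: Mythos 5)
Your proof is correct, but it takes a genuinely different route from the paper's. The paper argues entirely in $M$: it picks a causal curve $x$ from $p$ to $q$ with $\ell(x)\ge d(p,q)-\epsilon/3$, chooses $a<b$ so that each end segment carries positive length less than $\epsilon/3$, uses $\mathscr{F}(\p C)=0$ to find a differentiability point with timelike tangent inside each end segment, and invokes the argument of Theorem \ref{aam} to make $x$ chronal there; this gives $p\in I^-(x(a))$ and $q\in I^+(x(b))$, and for every $(p',q')$ in the open set $I^-(x(a))\times I^+(x(b))$ concatenation yields $d(p',q')\ge \ell(x\vert_{[a,b]})\ge d(p,q)-\epsilon$. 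You instead work in $M^\times$: your sub-maximal lift $\Gamma$ is non-achronal (same core mechanism, $\mathscr{F}(\p C)=0$ plus Theorem \ref{aam}), Theorem \ref{oop} upgrades this to $\big((p,0),(q,\tau)\big)\in I_{C^\times}$, and the openness of $I_{C^\times}$ together with the projection inequality of Prop.\ \ref{maq} (Eq.\ (\ref{kki})) gives $d(p_n,q_n)\ge\tau$. In effect you re-derive, as an intermediate step, the paper's later Theorem \ref{ddc}, whose published proof is precisely your lift-and-Theorem-\ref{oop} argument; what the product trick buys you is that openness of $I_{C^\times}$ handles the endpoint perturbation and the length bookkeeping simultaneously, so you never have to sacrifice length near the ends, whereas the paper's argument stays at the level of $M$ and needs only Theorem \ref{soa} and the openness of $I$. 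One step you should make explicit: the membership $(\dot\gamma,\lambda\mathscr{F}(\dot\gamma))\in\mathrm{Int}\,C^\times$ needs both that the fiberwise interior of $C^\times_P$ equals the bundle interior (Prop.\ \ref{cob}, applicable because $C^\times$ is a $C^0$ proper cone structure) and that $\mathscr{F}$, being fiberwise concave, is continuous on $\mathrm{Int}\,C_p$, so that $|z|<\mathscr{F}(y)$ survives small perturbations; this is fillable, and indeed the paper makes the same silent step in the proof of Theorem \ref{ddc}.
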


\begin{proof}
Let $p,q \in M$. If $d(p,q)=0$, $d$ is lower semi-continuous at $(p,q)$. Thus let us assume $d(p,q)>0$. Let $\epsilon>0$, $\epsilon< d(p,q)$, and let $x\colon [0,1]\to M$, $x(0)=p$, $x(1)=q$, be a continuous causal curve such that $\ell(x)\ge d(p,q)-\epsilon/3>0$. and let $a,b\in (0,1)$, $a<b$, be such that $0<\int_0^a \mathscr{F}(\dot x) \dd t <\epsilon/3$, $0<\int_b^1 \mathscr{F}(\dot x) \dd t <\epsilon/3$. The subset of $[0,a]$ at which $x$ is differentiable with differential not lightlike is non-empty since $\int_0^a \mathscr{F}(\dot x) \dd t >0$.
If $a'<a$ is one such differentiability time then $\dot x(a')$ is timelike and an argument similar to that employed in Th.\ \ref{aam} shows that $x$ is chronal in any neighborhood of $x(a')$. As a consequence, $p\in I^-(x(a))$, $q\in I^+(x(b))$, so for every $p'\in I^-(x(a))$ and $q'\in I^+(x(b))$, $d(p',q')\ge \ell(x_{[a,b]})\ge \ell(x)-2 \epsilon/3\ge d(p,q)-\epsilon$.
\end{proof}

The next result is an improvement over \cite[Th.\ 4.24]{beem96}.

\begin{proposition} \label{imp}
Let $(M,\mathscr{F})$ be a proper Lorentz-Finsler space. If $d$ is upper semi-continuous then $(M,C)$ is reflective.
\end{proposition}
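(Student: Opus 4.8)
The plan is to reduce reflectivity to the symmetric condition $D_f=D_p$ via Proposition~\ref{diu}, and then to convert the two set-closure implications into statements about the positivity of $d$, where upper semi-continuity can finally be brought to bear. The one structural fact I would establish first is that $\mathscr{F}$ is strictly positive on the open cones $(\mathrm{Int}\,C)_x$. This follows because $\mathscr{F}$ is concave and positively homogeneous, hence superadditive, $\mathscr{F}(y_1+y_2)\ge \mathscr{F}(y_1)+\mathscr{F}(y_2)$ on $C_x$, while properness guarantees through Prop.~\ref{hll} a field $\tilde{\mathscr{F}}\le \mathscr{F}$ that is not identically zero on any fibre; thus for each $x$ there is $y_0\in C_x$ with $\mathscr{F}(y_0)>0$, and writing any interior vector as $y=(y-\epsilon y_0)+\epsilon y_0$ with $\epsilon$ small gives $\mathscr{F}(y)\ge \epsilon\,\mathscr{F}(y_0)>0$. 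Consequently every timelike curve has positive length, so $d(p'',p)>0$ whenever $p''\ll p$; conversely $d(p,q)>0$ forces $(p,q)\in J$, since $d$ vanishes off $J$ by definition.

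For the inclusion $D_f\subseteq D_p$ I would take $q\in\overline{J^+(p)}$, choose $q_k\to q$ with $q_k\in J^+(p)$, and fix a point $p''\ll p$ lying arbitrarily close to $p$ (available from the (*) condition in its past form). Since $p''\le p\le q_k$, the reverse triangle inequality~(\ref{lfr}) gives $d(p'',q_k)\ge d(p'',p)+d(p,q_k)\ge d(p'',p)=:\delta>0$ for every $k$. Upper semi-continuity of $d$ at $(p'',q)$ then yields $d(p'',q)\ge \limsup_k d(p'',q_k)\ge\delta>0$, whence $(p'',q)\in J$, i.e.\ $p''\in J^-(q)$. Letting $p''\to p$ along $I^-(p)$ produces a sequence in $J^-(q)$ converging to $p$, so $p\in\overline{J^-(q)}$, as desired.

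The reverse inclusion $D_p\subseteq D_f$ is obtained by the time-dual argument (reverse the cone field, which preserves both $\mathscr{F}$ and the upper semi-continuity of $d$), using a point $q''\gg q$ close to $q$ together with the bound $d(p_k,q'')\ge d(q,q'')>0$. Taken together these give $D_f=D_p$, and Prop.~\ref{diu} then delivers reflectivity. I expect the only genuinely delicate point to be the interior positivity of $\mathscr{F}$: because the proposition does \emph{not} assume $\mathscr{F}(\p C)=0$, this positivity must be extracted from properness alone, and it is precisely what lets one replace the qualitative condition $q\in\overline{J^+(p)}$ by the quantitative lower bound on $d$ that upper semi-continuity can propagate. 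Once that is in place the remainder is a clean interplay between the reverse triangle inequality and upper semi-continuity, requiring no local Lipschitz hypothesis and hence no appeal to $J\circ I\subset I$, which fails in this generality.
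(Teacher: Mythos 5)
Your proposal is correct and follows essentially the same route as the paper's own proof: reduction of reflectivity to $D_f=D_p$ via Prop.~\ref{diu}, a chronological anchor point in $I^-(p)$ whose distance to $p$ is positive thanks to properness (the paper states this positivity of $\mathscr{F}$ on interior vectors parenthetically, while you derive it via superadditivity), concatenation/the reverse triangle inequality to get a uniform lower bound $d(\cdot,q_k)\ge\delta>0$, and upper semi-continuity of $d$ along $q_k\to q$. The only difference is presentational: you prove the implication $q\in\overline{J^+(p)}\Rightarrow p\in\overline{J^-(q)}$ directly and then let $p''\to p$, whereas the paper argues by contradiction, choosing $r\ll p$ outside the closed set $\overline{J^-(q)}$ so that $d(r,q)=0$ clashes with $d(r,q_n)\ge \ell(\sigma)>0$.
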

The proper condition is really necessary, for the condition on $d$ would be empty with $\mathscr{F}=0$.
\begin{proof}
Otherwise we can find $p,q$ such that $q \in \overline{J^+(p)}$ but $p \notin \overline{J^{-}(q)}$ (or dually), cf.\ Prop.\ \ref{diu}. Let $\gamma_n$ be causal curves starting from $p$ with endpoint $q_n \to q$. Taking $r \ll p$ so that $r \notin  \overline{J^{-}(q)}$, we have $d(r,q)=0$ but if $\sigma$ is a timelike curve connecting $r$ to $p$ we have $d(r,q_n)\ge l(\sigma)>0$ (the second inequality is due to the proper condition on the Lorentz-Finsler space which implies that $\mathscr{F}$ is positive on $\mathrm{Int} C$), so $d$ is not upper semi-continuous as can be seen taking the limit $(r, q_n)\to (r,q)$.
\end{proof}

The study of  $(M,\mathscr{F})$ passes through the study of the causality of $(M^\times, C^\times)$.
\begin{proposition} \label{maq}
Let $(M,\mathscr{F})$ be a closed Lorentz-Finsler space.
If $x\colon I\to M$, $t\mapsto x(t)$, is a continuous causal curve then for every $r\in \mathbb{R}$
\[
t \mapsto (x(t), r\pm \ell(x\vert_{[0,t]}))
\]
is a lightlike continuous causal curve on $(M^\times,C^\times)$ with starting point $(x(0), r)$. Moreover, every parametrized continuous causal curve on $(M^\times,C^\times)$ reads $X(t)=(x(t),r(t))$ where $x$ is a parametrized continuous causal curve on $(M,C)$ and $\vert r(t)-r(0)\vert\le \ell(x\vert_{[0,t]})$ for every $t$.
The causal future of $(M^\times,C^\times)$ satisfies
\begin{equation} \label{kki}
J^\times\subset\{((p,r),(p',r'))\colon (p,p')\in J\  \textrm{ and } \vert r'-r\vert\le d(p,p') \}.
\end{equation}
If for every $(p,p')\in J$ there is a continuous causal curve $x$ which maximizes $\ell$, i.e.\ $\ell(x)=d(p,p')$,  the inclusion (\ref{kki}) is actually an equality.
\end{proposition}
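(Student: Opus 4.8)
The plan is to prove the four assertions in sequence, the crux being the reduction of the causality of $(M^\times,C^\times)$ to that of $(M,C)$ together with the functional $\ell$. The whole argument is a direct unwinding of the differential inclusion defining $C^\times$, so I expect no deep obstruction; the points requiring care are the measurability and integrability of $\mathscr{F}(\dot x)$ along a solution (already addressed before the definition of $\ell$: $\mathscr{F}$ is upper semi-continuous, hence Borel, and positively homogeneous, hence bounded by $a\sum_\mu|y^\mu|$ on a relatively compact chart, so $\mathscr{F}(\dot x)$ is measurable and integrable on the compact parameter interval) and the verification that the extremal lift is genuinely lightlike.

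First I would prove the two curve statements. For the first, $\ell(x\vert_{[0,t]})=\int_0^t \mathscr{F}(\dot x)\dd s$ is absolutely continuous with a.e.\ derivative $\mathscr{F}(\dot x(t))$, so $X^\pm(t)=(x(t),r\pm\ell(x\vert_{[0,t]}))$ is absolutely continuous with $\dot X^\pm(t)=(\dot x(t),\pm\mathscr{F}(\dot x(t)))$ at a.e.\ $t$. Since $\dot x(t)\in C_{x(t)}$ a.e.\ and $\vert\pm\mathscr{F}(\dot x(t))\vert=\mathscr{F}(\dot x(t))$, we get $\dot X^\pm(t)\in C^\times_{X^\pm(t)}$, so $X^\pm$ is a continuous causal curve; moreover the fibre inequality $\vert z\vert\le\mathscr{F}(y)$ is saturated, so $\dot X^\pm(t)\in\p(C^\times_{X^\pm(t)})$, whence $\dot X^\pm(t)\notin(\mathrm{Int}\,C^\times)_{X^\pm(t)}\subseteq\mathrm{Int}(C^\times_{X^\pm(t)})$ and $\dot X^\pm$ is lightlike. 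For the second statement, any continuous causal curve on $M^\times$ is absolutely continuous and so may be written $X(t)=(x(t),r(t))$ with $x,r$ absolutely continuous; the inclusion $\dot X(t)\in C^\times_{X(t)}$ reads $\dot x(t)\in C_{x(t)}$ (so $x$ is a continuous causal curve on $M$) together with $\vert\dot r(t)\vert\le\mathscr{F}(\dot x(t))$ a.e., and integrating the latter gives $\vert r(t)-r(0)\vert\le\int_0^t\vert\dot r\vert\dd s\le\int_0^t\mathscr{F}(\dot x)\dd s=\ell(x\vert_{[0,t]})$.

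The inclusion (\ref{kki}) is then immediate: given $((p,r),(p',r'))\in J^\times$ with distinct endpoints, pick a continuous causal curve $X=(x,\rho)$ on $M^\times$ joining them; by the previous paragraph $x$ is a continuous causal curve from $p$ to $p'$, so $(p,p')\in J$, and evaluating the height estimate at the endpoint gives $\vert r'-r\vert\le\ell(x)\le d(p,p')$.

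For the final statement, the reverse inclusion under the maximizing hypothesis, I would argue by explicit construction. Suppose $(p,p')\in J$ and $\vert r'-r\vert\le d(p,p')$, and let $x\colon[0,1]\to M$ be a continuous causal curve with $\ell(x)=d(p,p')$; note that $\ell(x)<+\infty$, so $d(p,p')$ is finite. If $d(p,p')=0$ then $r'=r$ and the lift $t\mapsto(x(t),r)$ (with zero height velocity, admissible since $\mathscr{F}\ge0$) joins $(p,r)$ to $(p',r')$. If $d(p,p')>0$, set $\lambda=(r'-r)/d(p,p')$, so $\vert\lambda\vert\le1$, and define $\rho(t)=r+\lambda\,\ell(x\vert_{[0,t]})$; this is absolutely continuous with $\vert\dot\rho(t)\vert=\vert\lambda\vert\,\mathscr{F}(\dot x(t))\le\mathscr{F}(\dot x(t))$, whence $(x(t),\rho(t))$ is a continuous causal curve on $M^\times$ with $\rho(0)=r$ and $\rho(1)=r+\lambda\,d(p,p')=r'$. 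In either case $((p,r),(p',r'))\in J^\times$, which combined with (\ref{kki}) yields the asserted equality. The only genuinely new idea is the uniform rescaling of the extremal lift by the constant slope $\lambda$, which is exactly what permits hitting the prescribed target height while keeping the lift causal; everything else is bookkeeping with absolutely continuous functions.
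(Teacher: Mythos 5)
Your proof is correct and follows essentially the same route as the paper's: a.e.\ differentiation of the lift to verify the cone inclusion, decomposition of an arbitrary causal curve on $M^\times$ plus integration of $\vert\dot r\vert\le\mathscr{F}(\dot x)$ for the converse bound and (\ref{kki}), and the rescaled lift $\rho(t)=r+\lambda\,\ell(x\vert_{[0,t]})$ with $\lambda=(r'-r)/d(p,p')$, $\vert\lambda\vert\le 1$ (exactly the paper's construction, including the $d(p,p')=0$ degenerate case) for the equality statement. The only difference is cosmetic: you justify lightlikeness via the inclusion $(\mathrm{Int}\,C^\times)_{X}\subset\mathrm{Int}(C^\times_{X})$, a point the paper simply asserts.
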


\begin{proof}
The derivative of the curve $X$ in display is $(\dot x, \mathscr{F}(\dot x))$ a.e., which is $C^\times$-lightlike, thus $X$ is a continuous causal curve.

Let $X(t)=(x(t),r(t))$ be a parametrized continuous causal curve on $(M^\times, C^\times)$ then its projection $x$ to $M$ is also a parametrized continuous causal curve. In fact the projection to $M$ is locally Lipschitz and the composition $g\circ f$, with $f$ AC and $g$ locally Lipschitz, is AC. As a consequence, $x(t)$ is absolutely continuous and by the definition of $C^\times$, $\dot x\in C$ and  $\vert \dot r\vert \le \mathscr{F}(\dot x)$. Thus $x$ is a continuous causal curve and $\vert r(t)-r(0)\vert \le \ell(x\vert_{[0,t]})$. The inclusion $\subset$ in the last statement follows easily from the previous results. Let us prove the equality statement. Let $(p',r')$ be such that $(p,p')\in J$ and $\vert r'-r\vert\le d(p,p')$. 
Suppose that for every $(p,p')\in J$ there is a continuous causal curve which maximizes $\ell$, and let  $x:[0,1]\to M$ be a parametrized continuous causal curve connecting $p$ to $p'$ such that $\ell(x)=d(p,p')$. Suppose without loss of generality that $r'\ge r$ the other case being analogous.
Then $X(t)=\big(x(t), r(t)\big)$, with $r(t)= r+\frac{r'-r}{d(p,p')}\ell(x\vert_{[0,t]})$, is a continuous causal curve (because $0\le \dot r = \frac{r'-r}{d(p,p')}\mathscr{F}(\dot x)\le \mathscr{F}(\dot x)$ a.e.\ ) on $(M^\times,C^\times)$ which connects $(p,r)$ to $(p',r')$. The fraction must be replaced by 0 if $r'=r$ or $d(p,p')=0$.
\end{proof}

The uniform convergence in the next proposition might be defined with respect to an auxiliary Riemannian metric $h$ on $M$, but it is really  independent of it (this is the same convergence appearing in the limit curve theorem \ref{main}). Usually  the next result is applied with $\mathscr{F}_n=\mathscr{F}$, $C_n=C$.

\begin{theorem} \label{upp} (Upper semi-continuity of the length functional)\\
Let $(M,\mathscr{F})$ and $(M,\mathscr{F}_n)$ be closed Lorentz-Finsler spaces. Let $x_n\colon [a_n,b_n]\to M$, be continuous $C_n$-causal curves, parametrized with respect to $h$-arc length, which converge uniformly on compact subsets to $x\colon [a,b]\to M$, $a_n\to a$, $b_n\to b$, where for every $n$, $C_{n+1}\le C_n$, $\mathscr{F}_{n+1} \le \mathscr{F}_{n}\vert_{C_{n+1}}$, and $C=\cap_n C_n$, $\lim_{n\to \infty} \mathscr{F}_n=\mathscr{F}$. Then $x$ is a continuous $C$-causal curve and
\[
\limsup_n \ell_n(x_n)\le \ell(x).
\]
\end{theorem}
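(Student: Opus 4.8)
The plan is to reduce the upper semi-continuity of the Finsler length to a statement about causal curves on the product $M^\times$, exploiting that the length is encoded in the extra $\mathbb{R}$-coordinate. That $x$ is a continuous $C$-causal curve is immediate from Theorem \ref{mxe}, since the $x_n$ are $h$-arc length parametrized continuous $C_n$-causal curves with $C_{n+1}\le C_n$, $C=\cap_n C_n$, converging $h$-uniformly on compact subsets. So the real content is the inequality $\limsup_n\ell_n(x_n)\le\ell(x)$.

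First I would pass to a subsequence (not relabelled) realizing the limit superior, i.e.\ $\ell_n(x_n)\to L:=\limsup_n\ell_n(x_n)$, which is finite because the $x_n$ eventually lie in a fixed compact set where $\mathscr{F}_n\le\mathscr{F}_1$ is bounded by a constant $a$, so that $\mathscr{F}_n(\dot x_n)\le a$ a.e.\ and $\ell_n(x_n)\le a\,(b_n-a_n)$. To each $x_n$ I associate its lift
\[
X_n(t)=\big(x_n(t),\,r_n(t)\big),\qquad r_n(t)=\ell_n\big(x_n\vert_{[a_n,t]}\big),
\]
which by Proposition \ref{maq} is a (lightlike) continuous $C_n^\times$-causal curve on $(M^\times,C_n^\times)$, with $r_n$ non-decreasing from $0$ to $\ell_n(x_n)$.

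Next I would extract a limit of the lifts. Using the Riemannian metric $h^\times=h+\dd z^2$ on $M^\times$ and the a.e.\ bound $\Vert\dot X_n\Vert_{h^\times}^2=\Vert\dot x_n\Vert_h^2+\dot r_n^{\,2}=1+\mathscr{F}_n(\dot x_n)^2\le 1+a^2$, the curves $X_n$ are equi-Lipschitz with respect to $h^\times$ and remain in a fixed compact subset of $M^\times$. One checks directly that $C_{n+1}^\times\le C_n^\times$ and that $\cap_n C_n^\times=C^\times$, where $\mathscr{F}_{n+1}\le\mathscr{F}_n$ and $\lim_n\mathscr{F}_n=\mathscr{F}$ are used to compute the fiberwise intersection. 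Hence, reparametrizing by $h^\times$-arc length and invoking the limit curve theorem \ref{main} (or Theorem \ref{son} together with Theorem \ref{mxe}), a subsequence $X_k$ converges uniformly on compact subsets to a continuous $C^\times$-causal curve $X$; since the total $h^\times$-length is uniformly bounded by $\sqrt{1+a^2}\,(b-a)$, the inextendible alternative (ii) of Theorem \ref{main} cannot occur, and the degenerate case in which the $X_k$ contract to a point forces both $x$ to be constant and $L=0$, for which the claim is trivial. Because the projection of $X_k$ to $M$ is $x_k\to x$, necessarily $X=(x,r)$ with $r$ the uniform limit of $r_k$.

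Finally I would read off the conclusion. The components $r_k$ are equi-Lipschitz with $\dot r_k\le a$, so their values at the moving endpoints converge: $r(a)=\lim_k r_k(a_k)=0$ and $r(b)=\lim_k r_k(b_k)=\lim_k\ell_k(x_k)=L$. On the other hand, since $X=(x,r)$ is a continuous $C^\times$-causal curve, the second part of Proposition \ref{maq} gives $\vert r(b)-r(a)\vert\le\ell\big(x\vert_{[a,b]}\big)=\ell(x)$. Combining these, $L\le\ell(x)$, which is the desired inequality. The main obstacle I anticipate is purely technical bookkeeping around the parametrization and the endpoints: the lifts are $h$-arc length parametrized on $M$, not $h^\times$-arc length parametrized on $M^\times$, so applying the limit curve theorem requires either reparametrizing (noting the change of parameter is uniformly bi-Lipschitz, hence preserves uniform convergence on compact subsets) or observing that the proof of the limit curve theorem only needs the uniform $h^\times$-speed bound already established; and one must ensure that uniform convergence together with the equi-Lipschitz property of the $r_k$ genuinely yields convergence of the $\mathbb{R}$-components at the moving endpoints $a_k,b_k$.
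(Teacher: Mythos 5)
Your proposal is correct and follows essentially the same route as the paper's proof: lift the curves to $(M^\times,C_n^\times)$ via $X_n(t)=(x_n(t),\ell_n(x_n\vert_{[a_n,t]}))$, verify $C_{n+1}^\times\le C_n^\times$ and $C^\times=\cap_n C_n^\times$, apply the limit curve theorem \ref{main} to get a limit $C^\times$-causal curve $X=(x,r)$, and integrate the causality condition $\vert\dot r\vert\le\mathscr{F}(\dot x)$ to conclude $\limsup_n\ell_n(x_n)=r(b)\le\ell(x)$. Your extra bookkeeping (finiteness of the limsup, the equi-Lipschitz bound in $h^\times$, the parametrization and moving-endpoint issues) fills in details the paper leaves implicit, but the underlying argument is the same.
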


\begin{proof}
Let us pass to a subsequence, denoted in the same way, such that \[\limsup_n \ell_n(x_n)=\lim_n \ell_n(x_n).\]
The curves $X_n(t)=(x_n(t), \int_{a_n}^t \mathscr{F}_n(\dot x_n(s)) \dd s)$ are continuous $C_n^\times$-causal curves on $(M^\times, C_n^\times)$ (notice that $\mathscr{F}_n$ enters in the definition of $C_n^\times$). The assumptions imply that $C_{n+1}^\times\subset C_n^\times$ and $C^\times=\cap_n C_n^\times$, thus by applying the limit curve theorem \ref{main}  in $M^\times$ we get that there is a limit continuous $C^\times$-causal curve $X(t)=(\tilde x(t),r(t))$, $r(0)=0$, to which a subsequence of $X_n$, denoted in the same way, converges uniformly. Consequently, $\tilde x=x$ and $\lim_n \ell_n (x_n)=r(b)$. But the $C^\times$-causality condition for $X$ reads $\vert \dot r\vert \le \mathscr{F}(\dot x)$ a.e.,  thus integrating $r(b)\le \ell(x)$, which gives the desired result.
\end{proof}

\begin{proposition} \label{ppp}
Let $(M,\mathscr{F})$ be a globally hyperbolic closed Lorentz-Finsler space, then  $(M^\times,C^\times)$ is a globally hyperbolic closed cone structure and $d$ is finite  and bounded on compact subsets of $M\times M$.
\end{proposition}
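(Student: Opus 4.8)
The plan is to verify characterization $(\alpha)$ of Definition \ref{sot} for $(M^\times,C^\times)$ --- non-imprisonment together with boundedness of the causally convex hull of every bounded set --- using throughout the projection $\pi\colon M^\times\to M$ and the containment (\ref{kki}) of Prop.\ \ref{maq}, namely that $((p,r),(p',r'))\in J^\times$ forces $(p,p')\in J$ and $\vert r'-r\vert\le d(p,p')$. I read ``globally hyperbolic'' as global hyperbolicity of the base $(M,C)$, while $(M^\times,C^\times)$ is a closed cone structure by the standing hypothesis that $(M,\mathscr{F})$ is closed. The two assertions are intertwined: the boundedness of $d$ on compact sets is both the second claim and the device that controls the fiber coordinate in the causally convex hull computation, so I would establish it first.

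The technical core is a uniform arc-length bound: fixing a complete Riemannian metric $h$ on $M$, every continuous causal curve contained in a compact set $\hat K\subset M$ has $h$-arc length bounded by a constant $\Lambda$ independent of the curve. Since $(M,C)$ is globally hyperbolic it is causally simple, hence strongly causal (Lemma \ref{mvh}); I would therefore cover $\hat K$ by finitely many open sets $U_1,\dots,U_m$, each causally convex in $M$ and each contained in a neighborhood of the type produced in Prop.\ \ref{iiu}, so carrying a bound $\delta_i$ on the $h$-length of the causal curves it contains. Causal convexity in $M$ forces $\gamma^{-1}(U_i)$ to be a single subinterval for every global causal curve $\gamma$, whence $\ell^h(\gamma)\le\sum_i\delta_i=:\Lambda$. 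For the distance bound I take compacts $K_1,K_2\subset M$; any causal curve from $K_1$ to $K_2$ lies in the causal emerald $\hat K=J^+(K_1)\cap J^-(K_2)$, which is compact by global hyperbolicity (characterization $(\beta)$ of Def.\ \ref{sot}). On $\hat K$ the upper semi-continuity and positive homogeneity of $\mathscr{F}$ yield a constant $a'$ with $\mathscr{F}(y)\le a'\Vert y\Vert_h$ for every $y\in C$ based in $\hat K$; hence $\ell(x)\le a'\,\ell^h(x)\le a'\Lambda$ for every such curve, so $d(p,q)\le a'\Lambda$ for all $(p,q)\in K_1\times K_2$, and in particular $d$ is finite. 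Notably this bypasses the limit-curve machinery entirely.

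With the distance bound in hand, non-imprisonment of $(M^\times,C^\times)$ follows by projection. Choosing the complete product metric $h^\times=\pi^*h+\dd z\otimes\dd z$ and using $\vert\dot r\vert\le\mathscr{F}(\dot x)\le a'\Vert\dot x\Vert_h$ over a compact set, one gets $\Vert\dot X\Vert_{h^\times}\le\sqrt{1+a'^2}\,\Vert\dot x\Vert_h$; thus a future-inextendible $X=(x,r)$ imprisoned in a compact $K^\times$ has infinite $h^\times$-length (Cor.\ \ref{dox}), so its projection $x$ has infinite $h$-length, hence is future-inextendible (Cor.\ \ref{dox}) and imprisoned in the compact $\pi(K^\times)$, contradicting non-imprisonment of $(M,C)$. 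For the hulls, let $B^\times$ be bounded with $\overline{B^\times}\subset B\times[r_-,r_+]$, $B=\pi(\overline{B^\times})$ compact; any $(q,u)\in J^{\times+}(B^\times)\cap J^{\times-}(B^\times)$ comes with points $(p_1,s_1),(p_2,s_2)\in B^\times$ related to it by $J^\times$, so by (\ref{kki}) one has $q\in J^+(B)\cap J^-(B)=:\tilde B$ (bounded, by global hyperbolicity of $(M,C)$) and $\vert u-s_1\vert\le d(p_1,q)\le D_0$, where $D_0$ bounds $d$ on the compact set $B\times\overline{\tilde B}$. Hence the causally convex hull lies inside $\overline{\tilde B}\times[r_--D_0,r_++D_0]$, which is bounded, completing the verification of $(\alpha)$.

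The main obstacle is the uniform $h$-arc-length bound $\Lambda$ on causal curves in a compact set, since everything else reduces to it through (\ref{kki}); the delicate point is precisely that causal convexity \emph{in $M$} of the covering neighborhoods makes each $\gamma^{-1}(U_i)$ connected, which is what upgrades the local bounds $\delta_i$ of Prop.\ \ref{iiu} into a single curve-independent constant. Care is also needed to select covering neighborhoods that are simultaneously causally convex in $M$ (from strong causality) and contained in a Prop.\ \ref{iiu}-type neighborhood, rather than assuming one neighborhood enjoys both properties at once.
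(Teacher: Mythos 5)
Your proof is correct and follows essentially the same route as the paper's: the distance bound via compact causal emeralds, strong causality and a finite cover by causally convex neighborhoods of the type in Prop.\ \ref{iiu} and Th.\ \ref{dao}; non-imprisonment of $(M^\times,C^\times)$ by projection; and boundedness of causally convex hulls upstairs from the inclusion (\ref{kki}). Your only genuine addition is the arc-length argument (product metric plus Cor.\ \ref{dox}) showing that the projection of an imprisoned inextendible causal curve is itself inextendible, a point the paper's projection argument leaves implicit.
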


\begin{proof}
In the proof of Prop.\ \ref{maq} we have shown that the projection to $M$ of a continuous causal curve on $M^\times$ is itself a continuous causal curve. Thus $(M^\times, C^\times)$ must be non-imprisoning since the projection of a continuous causal curve imprisoned in a compact set would give a continuous causal curve imprisoned in the projection of the compact set.

Let us prove that $d$ is bounded on compact subsets of $M\times M$, from which it follows that it is finite. Let $K_1, K_2\subset M$ be compact sets, and let us consider the causally convex compact set $K=J^+(K_1)\cap J^-(K_2)$.
Let $h$ be a Riemannian metric and let $S$ be the unit $h$-sphere bundle over $K$. Since $\mathscr{F}$ is upper semi-continuous it reaches a maximum over $S$. By rescaling $h$ if necessary we can let this maximum be less than 1. Thus for every $v\in TK$, $\mathscr{F}(v)\le \Vert v\Vert_h$. The $h$-arc length of the continuous causal curves connecting $K_1$ to $K_2$ is bounded. This fact follows from strong causality and from the fact that $K$ can be finitely covered by  the local non-imprisoning causally convex neighborhoods constructed in Prop.\ \ref{iiu} and Th.\ \ref{dao}. Since the length of the causal curves connecting $K_1$ to $K_2$ is  bounded, we have that $d(K_1,K_2)$ is finite.

Now due to Eq.\ (\ref{kki}),
\begin{align*}
(J^\times)^+((p,a)) \cap J^-((q,b))&\subset \big\{(p',r')\colon p'\in J^+(p)\cap J^-(q) \textrm{ and } \vert r'-a\vert \le d(p,p') \\
&\qquad \textrm{ and } \vert r'-b\vert \le d(p',q) \big\} .
\end{align*}
Clearly we have $\textrm{max}\{ d(p,p'),d(p',q) \}\le d(p,q)$, thus given a   compact set $K$, and a compact subset of the real line $I$ there is a compact subset of $M\times M$ which contains the set in display for every  $p,q \in K$  and $a, b \in I$. We conclude that  $(M^\times,C^\times)$ is globally hyperbolic.
\end{proof}

The next result generalizes previous improvements \cite{samann16,galloway17} of the classical Avez-Seifert theorem \cite{hawking73} in that it does require neither the roundness of the cones nor the continuity of the cone distribution. In fact, even $\mathscr{F}$  need not be continuous.
As in Tonelli's theorem \cite[Th.\ 3.7]{buttazzo98}   $-\mathscr{F}$ is convex in the fiber variables and  lower semi-continuous, however it is not superlinear and its domain is a cone distribution.

\begin{theorem} \label{sww} (Generalization of the Avez-Seifert theorem)\\
Let $(M,\mathscr{F})$ be a globally hyperbolic closed Lorentz-Finsler space, then $\ell$ is maximized, namely for every $(p,q)\in J$ we can find a continuous causal curve $x\colon [0,1]\to M$, $p=x(0)$, $q=x(1)$, such that $\ell(x)=d(p,q)$.
\end{theorem}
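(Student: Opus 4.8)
The plan is to run the product trick in reverse: rather than deducing properties of $(M^\times,C^\times)$ from $(M,\mathscr F)$, I would deduce the existence of a maximizer on $M$ from the good causal behavior of $(M^\times,C^\times)$. First I would invoke Proposition \ref{ppp} to record the two facts that make everything work: $(M^\times,C^\times)$ is a globally hyperbolic closed cone structure, and $d$ is finite. Global hyperbolicity gives causal simplicity (Def.\ \ref{sot}($\gamma$)), hence $J^\times$ is a \emph{closed} relation; applying Theorem \ref{xix} with the compact set $\{(p,0)\}$, the future $(J^\times)^+((p,0))$ is a closed subset of $M^\times$. This closedness is the crux of the argument.

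Next I would set up a maximizing sequence. Since $(p,q)\in J$ and $d(p,q)=\sup_x\ell(x)<\infty$, I can choose continuous causal curves $x_n\colon[0,1]\to M$ from $p$ to $q$ with $\ell(x_n)\to d(p,q)$. Lifting each one by the first part of Proposition \ref{maq}, the curve $t\mapsto X_n(t)=\big(x_n(t),\,\ell(x_n\vert_{[0,t]})\big)$ is a (lightlike) continuous $C^\times$-causal curve from $(p,0)$ to $(q,\ell(x_n))$. Therefore $(q,\ell(x_n))\in (J^\times)^+((p,0))$ for every $n$.

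Then I would pass to the limit. As $\ell(x_n)\to d(p,q)$ we have $(q,\ell(x_n))\to(q,d(p,q))$ in $M^\times$, and closedness of $(J^\times)^+((p,0))$ yields $(q,d(p,q))\in (J^\times)^+((p,0))$; equivalently there is a continuous $C^\times$-causal curve $X$ from $(p,0)$ to $(q,d(p,q))$. By the second part of Proposition \ref{maq}, $X$ reads $X(t)=(x(t),r(t))$ with $x$ a continuous causal curve in $M$ from $p$ to $q$ and $\vert r(1)-r(0)\vert\le \ell(x)$. Since $r(0)=0$ and $r(1)=d(p,q)\ge 0$, this gives $d(p,q)\le \ell(x)$, while the definition of $d$ forces $\ell(x)\le d(p,q)$; hence $\ell(x)=d(p,q)$ and $x$ is the sought maximizer.

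The point worth stressing is that the genuine analytic difficulty — extracting limits of causal curves and keeping them inside a compact region — has already been absorbed into Proposition \ref{ppp}, whose proof runs the limit curve theorem \ref{main} inside $M^\times$. Given that, the only delicate bookkeeping here is with the extra $\mathbb R$-coordinate: one must lift each $x_n$ along its own length so that the height of the lifted endpoint is exactly $\ell(x_n)$, and then read off that the height $d(p,q)$ of the limiting endpoint forces the projected limit curve to have length at least $d(p,q)$. Thus the main obstacle is conceptual rather than computational, namely recognizing that maximizing the length functional on $M$ is equivalent to the reachability of the single boundary point $(q,d(p,q))$ in the globally hyperbolic space $(M^\times,C^\times)$, where closedness of the causal relation does the work that continuity of $\ell$ and existence of limits would otherwise require.
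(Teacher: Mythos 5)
Your proof is correct, but it takes a genuinely different route from the paper's. The paper proves Theorem \ref{sww} by applying the limit curve theorem \ref{main} directly in $M$ to a maximizing sequence $x_n$ (global hyperbolicity keeps the curves in a compact set, so case (i) applies and a subsequence converges uniformly to a connecting curve $x$), and then invokes the upper semi-continuity of the length functional, Theorem \ref{upp}, to get $d(p,q)=\limsup_n\ell(x_n)\le\ell(x)$. You instead never extract a limit of the curves themselves: you only track the lifted endpoints $(q,\ell(x_n))$ in $M^\times$ and let the closedness of $(J^\times)^+((p,0))$ --- which follows from Proposition \ref{ppp} together with the equivalence of characterizations ($\alpha$) and ($\gamma$) of Definition \ref{sot} established in Theorem \ref{mom}, plus Theorem \ref{xix} --- produce the connecting $C^\times$-causal curve, which Proposition \ref{maq} then projects to the maximizer. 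In effect you prove the equality case of inclusion (\ref{kki}) at the point $((p,0),(q,d(p,q)))$ first and deduce the maximizer from it, whereas the paper deduces that equality as a corollary of the theorem (this is exactly the remark ``So under these assumptions equality holds in (\ref{kki})'' following Theorem \ref{sww}, read in reverse). What each route buys: the paper's argument yields the maximizer as an explicit uniform limit of the maximizing sequence and leans on Theorem \ref{upp}, a tool reused throughout the paper (e.g.\ in Theorems \ref{xux}, \ref{upq}(g), \ref{hws}); yours is conceptually crisper, packaging all convergence difficulties into the single statement that $J^\times$ is closed, at the cost of invoking the heavier equivalence machinery of Theorem \ref{mom}, which is however already available at this point of the paper. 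One shared (and shared with the paper) edge case: when $p=q$ the set of connecting continuous causal curves is empty under causality, so both proofs implicitly treat only $p\ne q$, the diagonal case being degenerate.
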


So under these assumptions  equality holds in (\ref{kki}).
\begin{proof}
We know that $d(p,q)$ is finite. Let $x_n$ be a sequence of continuous causal curves connecting $p$ to $q$ such that $\ell(x_n)\ge d(p,q)-\epsilon_n$, with $\epsilon_n \to 0^+$. By the limit curve theorem a subsequence, which we  denote in the same way, converges uniformly to a continuous causal curve $x$, and by Th.\  \ref{upp} $d(p,q)= \limsup_n \ell(x_n)\le \ell(x)$, thus $\ell(x)=d(p,q)$.
\end{proof}

\begin{theorem} \label{ddc}
Let $(M,\mathscr{F})$ be a locally Lipschitz proper Lorentz-Finsler space such that $\mathscr{F}(\p C)=0$. Let $(p,q)\in J$ be such that $d(p,q)>0$, then for every  $R$ such that $0<R<d(p,q)$ there is a timelike curve $ x$ with the same endpoints such that $\ell( x)>R$.
\end{theorem}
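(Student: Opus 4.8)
The plan is to lift the problem to the product cone structure $(M^\times,C^\times)$, which by hypothesis is a locally Lipschitz proper cone structure, and to exploit the composition rule $J^\times\circ I^\times\cup I^\times\circ J^\times\subset I^\times$ of Theorem \ref{soa}, valid there. The point is that a length lower bound in $M$ becomes a purely order-theoretic chronology statement in $M^\times$, where the extra coordinate records the length already accumulated. First I would fix data: since $d(p,q)=\sup_x\ell(x)>R$, choose a continuous causal curve $\gamma\colon[0,1]\to M$, $\gamma(0)=p$, $\gamma(1)=q$, with $\ell(\gamma)>R$, and pick $R'$ with $R<R'<\ell(\gamma)$. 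Writing $g(t)=\ell(\gamma|_{[0,t]})=\int_0^t\mathscr{F}(\dot\gamma)\,\dd s$, I would consider the rescaled lift
\[
Y(t)=\Big(\gamma(t),\ \frac{R'}{\ell(\gamma)}\,g(t)\Big).
\]
Because $R'/\ell(\gamma)<1$ and $\dot\gamma\in C$ a.e., its velocity $\dot Y=\big(\dot\gamma,\frac{R'}{\ell(\gamma)}\mathscr{F}(\dot\gamma)\big)$ lies in $C^\times$ a.e.\ by (\ref{nhz}); thus $Y$ is a continuous causal curve in $M^\times$ joining $(p,0)$ to $(q,R')$.

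Next I would locate an honest timelike direction along $Y$. From $\ell(\gamma)=\int_0^1\mathscr{F}(\dot\gamma)\,\dd t>0$ together with the hypothesis $\mathscr{F}(\p C)=0$, the set of times where $\mathscr{F}(\dot\gamma)>0$ has positive measure, and at any such time $\dot\gamma\notin\p C$, i.e.\ $\dot\gamma\in\mathrm{Int}\,C$. Choosing $t_0$ in this set to be simultaneously a differentiability point of $\gamma$ and of $g$ (a full-measure condition, with $\dot g(t_0)=\mathscr{F}(\dot\gamma(t_0))$), I obtain $\dot Y(t_0)=\big(\dot\gamma(t_0),\frac{R'}{\ell(\gamma)}\mathscr{F}(\dot\gamma(t_0))\big)$ with $\dot\gamma(t_0)\in\mathrm{Int}\,C$ and $\big|\frac{R'}{\ell(\gamma)}\mathscr{F}(\dot\gamma(t_0))\big|<\mathscr{F}(\dot\gamma(t_0))$, the inequality being strict since $R'<\ell(\gamma)$ and $\mathscr{F}(\dot\gamma(t_0))>0$. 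Hence $\dot Y(t_0)\in(\mathrm{Int}\,C^\times)_{Y(t_0)}$. The standard Minkowski-cone argument used in the proofs of Prop.\ \ref{cob} and Th.\ \ref{aam} then applies at the differentiability point $t_0$: it produces parameters $s_-<t_0<s_+$ and a straight timelike segment joining $Y(s_-)$ to $Y(s_+)$, so that $Y(s_-)\ll^\times Y(s_+)$.

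Now I would conclude by composition. Restricting $Y$ gives the causal relations $(p,0)\le^\times Y(s_-)$ and $Y(s_+)\le^\times(q,R')$, while the previous step gives $Y(s_-)\ll^\times Y(s_+)$. Since $(M^\times,C^\times)$ is a locally Lipschitz proper cone structure, Theorem \ref{soa} yields $J^\times\circ I^\times\cup I^\times\circ J^\times\subset I^\times$, and applying it twice gives $(p,0)\ll^\times(q,R')$. Let $Z=(\tilde\gamma,c)$ be a $C^\times$-timelike curve realizing this relation. Projecting, the relation $\dot{\tilde\gamma}\in\mathrm{Int}\,C$ shows $\tilde\gamma$ is a timelike curve in $M$ from $p$ to $q$, while the $C^\times$-causality of $Z$ gives $\dot c\le\mathscr{F}(\dot{\tilde\gamma})$ a.e., whence
\[
R'=c(1)-c(0)=\int_0^1\dot c\,\dd t\le\int_0^1\mathscr{F}(\dot{\tilde\gamma})\,\dd t=\ell(\tilde\gamma),
\]
so that $\ell(\tilde\gamma)\ge R'>R$, as required.

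The main obstacle I anticipate is length control. One cannot simply invoke the relaxation theorem (Th.\ \ref{sbn}) to approximate $\gamma$ uniformly by timelike curves, because the length functional is only \emph{upper} semi-continuous (Th.\ \ref{upp}), so the approximants' lengths could collapse and there is no a priori lower bound keeping them above $R$. The product trick is precisely what removes this difficulty: encoding the target length $R'$ as the terminal value of the extra coordinate converts the desired lower bound into the order statement $(p,0)\ll^\times(q,R')$, which the composition rule in $M^\times$ supplies; and the strict gap $R'<\ell(\gamma)$ is exactly the ``room'' that makes the lift $Y$ possess a genuinely timelike tangent rather than being merely lightlike.
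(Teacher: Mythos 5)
Your proof is correct and follows essentially the same route as the paper's: lift a causal curve of length exceeding $R$ to $(M^\times,C^\times)$ with a rescaled length coordinate, use $\mathscr{F}(\p C)=0$ to find a differentiability point where the lift has a $C^\times$-timelike tangent (hence the lift is chronal, cf.\ Th.\ \ref{aam}), invoke Th.\ \ref{soa} to obtain a connecting $C^\times$-timelike curve, and project. The only cosmetic difference is your buffer $R'$, which lets you conclude with the non-strict causal bound $\dot c\le\mathscr{F}(\dot{\tilde\gamma})$, whereas the paper scales by $R/\ell(\check x)$ directly and uses the strict timelike inequality $\dot r<\mathscr{F}(\dot x)$ to get $R<\ell(x)$.
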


\begin{proof}
By definition of Lorentz-Finsler distance we can find a continuous causal curve $\check x$ connecting $p$ to $q$,  such that $0<R<\ell(\check x)\le d(p,q)$. The continuous $C^\times$-causal curve given by $\check X(t)=(\check x(t), \check r(t))$ with $\check r(t)=  \frac{R}{\ell(\check x)}\ell(\check x\vert_{[0,t]})$, connects $P=(p,0)$ with $Q=(q,R)$ and has tangent $V=(\dot{\check{x}}, \frac{R}{\ell(\check x)} \mathscr{F}(\dot{\check{x}}))$ almost everywhere. Notice that $\mathscr{F}(\dot{\check{x}})>0$ for some $t$ because $\ell(\check x)>0$, thus by the proper condition $\dot{\check{x}}\in \mathrm{Int} C$ at that $t$ which implies that the tangent $V$ is timelike and hence
 (Th.\ \ref{aam}) that $\check X$ is chronal so the endpoints $P$ and $Q$ are connected by a $C^\times$-timelike curve $ X=( x, r)$ (Th.\ \ref{soa}). The $C^\times$-timelike condition implies $\dot r<\mathscr{F}(\dot x)$, thus integrating $R<\ell(x)$. Hence the projection $ x$ is the searched timelike curve.
\end{proof}

\begin{theorem} (Local causal geodesic connectedness)\\
Let $(M,\mathscr{F})$ be a closed Lorentz-Finsler space. Every point admits an arbitrarily small globally hyperbolic neighborhood $U$ such that if $(p,q)\in J(U)$ then there is a continuous causal curve $x$ contained in $U$ connecting $p$ to $q$ such that $\ell(x)=d_U(p,q)$ where $d_U$ is the Lorentz-Finsler distance of $(U,\mathscr{F}\vert_U)$.
\end{theorem}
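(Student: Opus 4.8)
The plan is to reduce the statement to the global Avez--Seifert theorem (Th.\ \ref{sww}), applied to a suitable product neighborhood. First I would produce the neighborhood. Since $(M,\mathscr{F})$ is a closed Lorentz--Finsler space, $C$ is upper semi-continuous (Prop.\ \ref{hll}), so the base cone structure $(M,C)$ is a closed cone structure (Prop.\ \ref{yhh}). Applying Prop.\ \ref{iiu} to $(M,C)$, every point $x_0\in M$ admits an arbitrarily small relatively compact coordinate neighborhood $U\ni x_0$ carrying a flat Minkowski metric $g$ with $C\vert_U<C^g\vert_U$, such that all continuous causal curves in $\bar U$ have $h$-arc length bounded by some $\delta_h$; by Remark \ref{roc} the cone structure $(U,C)$ is globally hyperbolic. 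I claim $U$ is the desired neighborhood, i.e.\ that $(U,\mathscr{F}\vert_U)$ is globally hyperbolic, which by definition means that $(U^\times,C^\times)$ with $U^\times=U\times\mathbb{R}$ is a globally hyperbolic closed cone structure (closedness being inherited by restriction to the open submanifold $U^\times$).

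The key preliminary estimate is the boundedness of the Lorentz--Finsler length on $\bar U$. Since $\mathscr{F}$ is upper semi-continuous and positively homogeneous, on the compact set $\bar U$ there is a constant $a>0$ with $\mathscr{F}(y)\le a\Vert y\Vert_h$ for every causal $y$; integrating along any continuous causal curve $x$ in $\bar U$ gives $\ell(x)\le a\,\ell^h(x)\le a\,\delta_h=:D$. Hence $d_U\le D$ is finite and bounded on $U\times U$.

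Next I would verify global hyperbolicity of $(U^\times,C^\times)$ through characterization ($\alpha$) of Def.\ \ref{sot}. For non-imprisonment, recall that the projection onto $M$ of a continuous $C^\times$-causal curve is a continuous $C$-causal curve (Prop.\ \ref{maq}); if some future inextendible $X\subset K$ with $K\subset U^\times$ compact existed, then writing $X=(x,r)$ the bound $|\dot r|\le\mathscr{F}(\dot x)\le a\Vert\dot x\Vert_h$ on $K$ shows that the $(h+\dd r^2)$-arc length of $X$ is controlled by the bounded $h$-arc length of $x$, so $X$ has finite length for a complete Riemannian metric on $U^\times$ and is therefore future extendible in $U^\times$ by Cor.\ \ref{dox}, a contradiction. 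For the boundedness of causally convex hulls, let $B\subset U^\times$ be bounded, let $B_0\subset U$ be its projection and let $R$ bound $|r|$ on $B$, and let $(p',r')\in J^+(B)\cap J^-(B)$. Projecting the connecting causal curves gives $p'\in J^+(B_0)\cap J^-(B_0)$, a bounded subset of $U$ by global hyperbolicity of $(U,C)$; and Eq.\ (\ref{kki}) applied to the curve reaching $(p',r')$ from some $(a,s)\in B$ yields $|r'-s|\le d_U(a,p')\le D$, whence $|r'|\le R+D$. Thus $J^+(B)\cap J^-(B)$ is bounded, completing the verification of ($\alpha$).

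Finally, since $(U^\times,C^\times)$ is a globally hyperbolic closed cone structure, $(U,\mathscr{F}\vert_U)$ is a globally hyperbolic closed Lorentz--Finsler space, and Th.\ \ref{sww} yields, for each $(p,q)\in J(U)$, a continuous causal curve $x$ contained in $U$ with $\ell(x)=d_U(p,q)$, as required. The main obstacle is precisely the control of the non-compact $\mathbb{R}$-fibre of $U^\times$: boundedness of the $r$-coordinate of the causally convex hull is not automatic and rests on the uniform bound $D$ on $d_U$, which is why the length estimate of the second paragraph must be secured first.
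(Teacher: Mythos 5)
Your proof is correct and takes essentially the same route as the paper: take the neighborhood $U$ of Prop.\ \ref{iiu}, observe it is globally hyperbolic (Remark \ref{roc}), and apply the Avez--Seifert theorem (Th.\ \ref{sww}) to $(U,\mathscr{F}\vert_U)$. Your two middle paragraphs, verifying global hyperbolicity of $(U^\times,C^\times)$ and finiteness of $d_U$, merely re-derive in the local setting what Prop.\ \ref{ppp} already supplies in general (the paper reads ``globally hyperbolic Lorentz--Finsler space'' as global hyperbolicity of the base cone structure $(U,C\vert_U)$, which you secure at the outset via Remark \ref{roc}), so that extra work is sound but redundant.
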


The neighborhood $U$ coincides with that constructed in Prop.\ \ref{iiu}. If $(M,C)$ is strongly causal then the neighborhood can be chosen causally convex, in which case $d_U=d\vert_{U\times U}$.

\begin{proof}
Let $U$ be the neighborhood constructed in Prop.\ \ref{iiu} (see also Th.\ \ref{dao}). It is clearly globally hyperbolic, thus the theorem follows from Th.\ \ref{sww}.
\end{proof}

\begin{definition}
A continuous causal curve $x\colon I\to M$ for which $d(x(a),x(b))=\ell(x\vert_{[a,b]})$, for every $a<b$, is said to be {\em maximizing}.
\end{definition}

Due to the reverse triangle inequality if $I=[c,d]$ then it is sufficient to check  $d(x(c),x(d))=\ell(x\vert_{[c,d]})$.

For locally Lipschitz Lorentzian metrics Graf and Ling \cite{graf18} have recently proved that maximizing continuous causal curves are either almost everywhere timelike or almost everywhere lightlike (in the latter case the tangent cannot be  timelike at any point due to Th.\  \ref{aam}). In a different work we shall show that this result can be suitably generalized to the Lorentz-Finsler case.

The next result can be useful in order to frame closed Lorentz-Finsler spaces into the general theory of (Lorentzian) length spaces.

\begin{theorem}
Let $(M,\mathscr{F})$ be a strongly causal closed Lorentz-Finsler space and let $x\colon [a,b] \to M$ be a continuous causal curve. Then
\begin{equation} \label{leg}
l(x)=\inf\sum_i d(x(t_i), x(t_{i+1})),
\end{equation}
where the infimum is over all the partitions $a=t_0<t_1<\cdots<t_n=b$.
\end{theorem}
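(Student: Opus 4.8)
The plan is to prove the two inequalities separately, the bound $\ell(x)\le\inf$ being elementary and the reverse bound requiring the global hyperbolicity of small causally convex neighborhoods. First I would observe that for any partition $a=t_0<\cdots<t_n=b$ each restriction $x\vert_{[t_i,t_{i+1}]}$ is a continuous causal curve joining $x(t_i)$ to $x(t_{i+1})$, hence a competitor in the supremum defining $d$, so $\ell(x\vert_{[t_i,t_{i+1}]})\le d(x(t_i),x(t_{i+1}))$; summing and using additivity of $\ell$ gives $\ell(x)\le\sum_i d(x(t_i),x(t_{i+1}))$, and therefore $\ell(x)\le\inf$. Moreover the reverse triangle inequality (\ref{lfr}) shows that inserting a new partition point only decreases the sum, so the infimum is really a decreasing limit taken over increasingly fine partitions.

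For the reverse inequality I would exploit that a strongly causal closed Lorentz-Finsler space admits arbitrarily small causally convex neighborhoods which, being of the type constructed in Prop.\ \ref{iiu} (cf.\ Th.\ \ref{dao} and Remark \ref{roc}), are globally hyperbolic. Covering the compact image $x([a,b])$ by finitely many such neighborhoods and invoking a Lebesgue-number argument, I would produce a sequence of partitions $P_m$ whose parameter-consecutive points $x(t_i),x(t_{i+1})$ always lie in a common causally convex globally hyperbolic neighborhood $U_i$ of $h$-diameter tending to $0$ as $m\to\infty$, where $h$ is a fixed auxiliary Riemannian metric. Causal convexity is the decisive point: any continuous causal curve of $M$ joining two points of $U_i$ is forced to remain in $U_i$, so $d(x(t_i),x(t_{i+1}))=d_{U_i}(x(t_i),x(t_{i+1}))$, i.e.\ the global distance between consecutive points is already computed inside the small neighborhood.

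Next, since each $U_i$ is globally hyperbolic, the Avez-Seifert theorem (Th.\ \ref{sww}) furnishes a maximizing continuous causal curve $\gamma_i$ in $U_i$ with $\ell(\gamma_i)=d_{U_i}(x(t_i),x(t_{i+1}))=d(x(t_i),x(t_{i+1}))$. Concatenating the $\gamma_i$ produces a continuous causal curve $\tilde\gamma_m$ from $x(a)$ to $x(b)$ with $\ell(\tilde\gamma_m)=\sum_i d(x(t_i),x(t_{i+1}))$. Because both $\gamma_i$ and the arc $x\vert_{[t_i,t_{i+1}]}$ lie in $U_i$, the curve $\tilde\gamma_m$ stays within $h$-distance $\mathrm{diam}_h U_i\to 0$ of $x$, so $\tilde\gamma_m\to x$ uniformly. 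Applying the upper semi-continuity of the length functional (Th.\ \ref{upp}, with the constant sequences $C_m=C$, $\mathscr{F}_m=\mathscr{F}$) then yields $\limsup_m\sum_i d(x(t_i),x(t_{i+1}))=\limsup_m\ell(\tilde\gamma_m)\le\ell(x)$, whence $\inf\le\ell(x)$; the two bounds combine to (\ref{leg}).

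The step I expect to be the genuine obstacle is the clean application of Th.\ \ref{upp}: that result is phrased for $h$-arc length parametrized curves converging uniformly on compact subsets, whereas the concatenations $\tilde\gamma_m$ carry a natural parametrization on $[a,b]$ which need not be $h$-arc length, and whose total $h$-arc length need not match that of $x$ a priori. To bridge this I would reparametrize the $\tilde\gamma_m$ by $h$-arc length and invoke the limit curve theorem (Th.\ \ref{main}) to extract a limit continuous causal curve $\sigma$; since $\tilde\gamma_m$ is uniformly close to $x$, the image of $\sigma$ is contained in $x([a,b])$ and $\sigma$ joins $x(a)$ to $x(b)$, so by strong causality $\sigma$ must be a reparametrization of $x$ and $\ell(\sigma)=\ell(x)$, at which point Th.\ \ref{upp} applies to the convergent subsequence. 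The maximality of the $\gamma_i$, hence their near-straightness in the almost-Minkowskian neighborhoods $U_i$, is what prevents the $\tilde\gamma_m$ from accumulating spurious $h$-length and guarantees that this limit is genuinely $x$ rather than some longer connecting curve.
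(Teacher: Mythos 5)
Your proof is correct, and its skeleton coincides with the paper's: cover the image of $x$ by causally convex globally hyperbolic neighborhoods of the type of Prop.\ \ref{iiu} (via Th.\ \ref{dao} and Remark \ref{roc}), use causal convexity to replace $d$ by the local distance, take the local Avez--Seifert maximizers (Th.\ \ref{sww}), concatenate them into piecewise maximizing curves close to $x$, and finish with the upper semi-continuity of the length functional (Th.\ \ref{upp}). The divergence is exactly at the step you flag as the obstacle, the parametrization hypothesis of Th.\ \ref{upp}. The paper resolves it by first reducing the whole problem to a \emph{single} causally convex globally hyperbolic neighborhood $V$ --- using the reverse triangle inequality to restrict the infimum to partitions refining a fixed subpartition whose consecutive points share a covering neighborhood, so that the identity to be proved splits neighborhood by neighborhood --- and then parametrizing all causal curves in $V$ by the coordinate $x^0$, which serves as a common Lipschitz parametrization; in that parametrization the piecewise maximizers converge uniformly to the reparametrized $x$ and Th.\ \ref{upp} applies at once. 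You instead keep the global concatenation $\tilde\gamma_m$, reparametrize by $h$-arc length, and must then identify the limit curve $\sigma$ produced by Th.\ \ref{main} with $x$; your identification does work: causality makes both $x$ and $\sigma$ injective, so $x^{-1}\circ\sigma$ is a continuous injective map between intervals fixing the endpoints, hence a monotone homeomorphism, so $\sigma$ is a reparametrization of $x$ and $\ell(\sigma)=\ell(x)$, while case (ii) of Th.\ \ref{main} is excluded by non-imprisonment (a consequence of strong causality) since all the curves lie in a compact set. So your route trades the paper's one-neighborhood reduction for an extra injectivity/arc argument; both are legitimate, the paper's being shorter once one accepts Th.\ \ref{upp} for a common Lipschitz parametrization, yours staying closer to the stated hypotheses of Th.\ \ref{upp}. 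Only your closing heuristic is off the mark: what prevents the limit from being ``some longer connecting curve'' is not the near-straightness of the maximizers but precisely the image-containment-plus-injectivity argument, and what controls the $h$-length (excluding case (ii)) is the non-imprisoning property of the finitely many covering neighborhoods, not maximality; this does not affect the correctness of the argument that precedes it.
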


For the $C^2$ theory this result can be found in \cite{kunzinger18,minguzzi18b} where convex normal neighborhoods are used. In the $C^0$ Lorentzian metric theory Kunzinger and S\"amann \cite[Lemma 5.10]{kunzinger18} prove that the ``sup inf'' operation on the Lorentzian distance is involutive as required for length spaces, yet they do not prove that the length defined through the right-hand side of Eq.\ (\ref{leg}) corresponds to  the usual integral definition. This  is a piece of additional information given by this theorem and follows from the fact that we were able to prove the upper semi-continuity of the length functional without using convex neighborhoods.
Notice that without additional conditions $(M,\mathscr{F})$  would not be a Lorentzian length space according to their definition since $d$ might not be lower semi-continuous.

\begin{proof}
The inequality $\le$ is clear, so we have only to prove the other direction.
The image of the curve can be covered by a finite number of causally convex globally hyperbolic neighborhoods $\{C_i\}$ with the properties of Prop.\ \ref{iiu}. For some partition the consecutive points $\{x(t_i),x(t_{i+1})\}$ belong to $C_{j(i)}$ and so can be joined by a continuous causal curve  $\sigma_i$ included in $C_j$ such that $l(\sigma_i)=d_{C_j}(x(t_i),x(t_{i+1})=d(x(t_i),x(t_{i+1})$, where the last equality is due to causal convexity. The infimum in Eq.\ (\ref{leg}) can be restricted to piecewise maximizing continuous causal curves for which the consecutive corners belong to some $C_k$, for by increasing the number of corners to a piecewise continuous causal curves  one can only decrease the Lorentzian length (due to the reverse triangle inequality). By the same observation we can restrict the right-hand side to partitions that share a particular subpartition such that the consecutive points of the subpartition belong to some $C_k$. Thus we can really work out the proof in just one causally convex globally hyperbolic neighborhood $V$ chosen as in Prop.\ \ref{iiu} where by strong causality $d=d_V$, and where $x^0$ provides a Lipschitz parametrization for all continuous causal curves with image in $V$. Let $h$ be a complete Riemannian metric and let $d^h$ be its distance. Let $\tilde x$ denote $x$ reparametrized with $x^0$. For every neighborhood of the image of $x$ of $d^h$-radius $\epsilon$ we can find a piecewise maximizing continuous causal curve in the neighborhood (because the image of $x$ can be covered by arbitrarily small globally hyperbolic causally convex neighborhoods as in Prop.\ \ref{iiu}). That means that we can find a sequence of $x^0$-parametrized piecewise maximizing continuous causal curve $x_n$ that converges uniformly on compact subsets to $\tilde x$. By the upper semi-continuity of the length functional (Th.\ \ref{upp}) $\inf\sum_i d(x(t_i), x(t_{i+1}))\le \limsup_n \ell(x_n)\le \ell(x)$.
\end{proof}

\begin{definition}
A {\em (future) causal geodesic} on $(M,C)$, is a continuous causal curve which is the projection of a (resp.\ future) lightlike geodesic  on $(M^\times, C^\times)$.
Similarly, a {\em  causal bigeodesic} is the projection of a lightlike bigeodesic.
\end{definition}
Again a future or past causal geodesic is a causal geodesic. The converse holds for locally Lipschitz proper Lorentz-Finsler spaces. Notice that every future lightlike geodesic $x(t)$ is a future causal geodesic (and similarly for the notion of lightlike bigeodesic), just consider $(x(t),0)$. However, its length can be different from zero (unless the cone structure is $C^0$ and $\mathscr{F}=0$ on $\p C$, cf.\ Th. \ref{aam}) and even in that case it might not maximize the Lorentz-Finsler distance between any pair of its points.

\begin{proposition}
Let $x\colon [0,1]\to M$ be a continuous causal curve such that $d(x(0),x(1))=\ell(x)$, then $x$ is a causal bigeodesic.
\end{proposition}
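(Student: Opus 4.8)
The plan is to work throughout in the product cone structure $(M^\times,C^\times)$ and to produce an explicit lift of $x$ that is a lightlike bigeodesic; projecting it back to $M$ then gives the statement by the very definition of causal bigeodesic. First I would observe that the maximality hypothesis propagates to every subsegment. Indeed, for $0\le a<b\le 1$ the reverse triangle inequality (\ref{lfr}), together with the additivity $\ell(x)=\ell(x|_{[0,a]})+\ell(x|_{[a,b]})+\ell(x|_{[b,1]})$ and the elementary bound $d(x(a),x(b))\ge \ell(x|_{[a,b]})$ (any subsegment is a competitor in the supremum defining $d$), forces the chain $\ell(x)=d(x(0),x(1))\ge d(x(0),x(a))+d(x(a),x(b))+d(x(b),x(1))\ge \ell(x)$ to consist entirely of equalities. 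Hence $d(x(a),x(b))=\ell(x|_{[a,b]})$ for all $a<b$, and in particular $d$ is finite on each such pair.

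Next, following Prop.\ \ref{maq}, I would take the canonical lift $X(t)=(x(t),\ell(x|_{[0,t]}))$, a lightlike continuous causal curve on the closed cone structure $(M^\times,C^\times)$ with $\pi\circ X=x$. The crucial point is that the increment of the fibre coordinate between $P=X(s)$ and $Q=X(t)$, $s<t$, equals precisely $\ell(x|_{[s,t]})=d(x(s),x(t))$, so $X$ rides on the extremal value permitted by the causal inclusion (\ref{kki}).

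Then I would check the bigeodesic property directly against the definitions of future and past lightlike geodesic. Fixing $R\in X$, I pick a neighborhood $\mathcal{U}\ni R$ of the type built in Prop.\ \ref{iiu} (a relatively compact diamond containing no closed continuous causal curve, so that along $X\cap\mathcal{U}$ the causal order agrees with the parameter order). For the future condition, given $P=X(s)$ and $Q=X(t)=(x(t),r')\in X\cap\mathcal{U}$ with $s<t$, consider the nearby points $Q_\epsilon=(x(t),r'+\epsilon)$ for $\epsilon>0$. If $Q_\epsilon\in J^{\times+}(P,\mathcal{U})$ then a fortiori $(P,Q_\epsilon)\in J^\times$, and (\ref{kki}) yields $r'+\epsilon-\ell(x|_{[0,s]})\le d(x(s),x(t))=r'-\ell(x|_{[0,s]})$, i.e.\ $\epsilon\le 0$, a contradiction. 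Thus $Q_\epsilon\in\mathcal{U}\setminus J^{\times+}(P,\mathcal{U})$ for all small $\epsilon>0$, and since $Q_\epsilon\to Q$ we get $Q\notin \mathrm{Int}_{\mathcal{U}} J^{\times+}(P,\mathcal{U})$, which is the future lightlike geodesic requirement. The symmetric choice $P_\epsilon=(x(s),\ell(x|_{[0,s]})-\epsilon)$ settles the past condition via the lower bound in (\ref{kki}). Hence $X$ is a lightlike bigeodesic and $x=\pi\circ X$ a causal bigeodesic.

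The step I expect to be the main obstacle, and the one deserving most care, is the distinction between $\mathring{J}=\mathrm{Int}\,J$--arelation (which defines a bare lightlike geodesic) and the stronger set-interior condition $q\notin\mathrm{Int}\,J^+(p,U)$ demanded here; recall that in general $\mathring{J}^+(p)\subsetneq\mathrm{Int}(J^+(p))$. The naive device of perturbing $Q$ inside the relation only yields the weak relational statement. What upgrades it to the required set-interior statement, with no local Lipschitz hypothesis, is precisely the trick of perturbing the \emph{fibre} coordinate while keeping the base point fixed and invoking the sharp equality $r'-r=d(x(s),x(t))$ in (\ref{kki}). A secondary item to verify is that inside the chosen $\mathcal{U}$ the only pairs needing testing are those respecting the parameter order, which is guaranteed by the absence of closed continuous causal curves in the diamond of Prop.\ \ref{iiu}.
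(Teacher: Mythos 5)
Your proof is correct and follows essentially the same route as the paper's: lift $x$ to $X(t)=\big(x(t),\ell(x\vert_{[0,t]})\big)$ and use the inclusion~(\ref{kki}), via a perturbation of the fibre coordinate, to exclude $X(t)$ from $\mathrm{Int}\,(J^\times)^+(X(s),U)$ and, dually, $X(s)$ from $\mathrm{Int}\,(J^\times)^-(X(t),U)$. The paper's version is terser---it runs the computation only between the two endpoints and leaves the propagation of maximality to subsegments (your first paragraph) implicit---so your write-up is a more detailed rendering of the same argument rather than a different one.
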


Actually the assumption can be weakened to its local version.

\begin{proof}
Let us consider the continuous $C^\times$-causal curve $X(t)=\big(x(t),\ell(x\vert_{[0,t]})\big)$. By Prop.\ \ref{maq} the point $\big(x(1),d(x(0),x(1))+\epsilon\big)$, $\epsilon>0$, cannot be reached by a continuous $C^\times$-causal curve starting from $X(0)$. Thus $X$ is a future lightlike geodesic.  Similarly, $\big(x(0),-\epsilon\big)$ cannot be the starting point of a continuous $C^\times$-causal curve which reaches  $X(1)=\big(x(1),d(x(0),x(1))\big)$, thus $X$ is a past lightlike geodesic.
\end{proof}

\begin{theorem} \label{xux}
Let $(M,\mathscr{F})$ be a proper Lorentz-Finsler space and let $S$ be an acausal topological hypersurface. Then for every $q\in  D^+(S)$, there is a causal bigeodesic $x$ connecting some point $p\in S$ to $q$ such that $d(p,q)=d(S,q)=\ell(x)$.
\end{theorem}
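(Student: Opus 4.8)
The plan is to work entirely inside the Cauchy development $D(S)$, where the product trick reduces everything to a maximization problem solved by the limit curve theorem. First I would record the structural facts. Since $(M,C)$ is proper and $S$ is an acausal topological hypersurface, Theorem~\ref{mmm} gives that $D(S)$ is open, causally convex, and that $(D(S),C|_{D(S)})$ is globally hyperbolic; hence $(D(S),\mathscr{F})$ is a globally hyperbolic closed Lorentz--Finsler space and by Proposition~\ref{ppp} its distance, which agrees with $d$ on $D(S)\times D(S)$ by causal convexity, is finite and bounded on compact subsets. Because $S\subset D(S)$ and $q\in D^+(S)\subset D(S)$, any causal curve from a point of $S$ to $q$ stays in $D(S)$ by causal convexity, so I may compute all lengths and distances there.

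Next I would show that $d(S,q)=\sup\{d(p',q):p'\in S\}$ is finite and attained on $S$. Only points $p'\in S\cap J^-(q)$ contribute a nonzero value; by Theorem~\ref{jbu} with $K=\{q\}$ the set $J^-(q)\cap J^+(S)$ is compact, and since $S$ is closed and $J$ is closed in the globally hyperbolic $D(S)$, the subset $S\cap J^-(q)$ is a compact subset of $D(S)$. Boundedness of $d$ on this compact set (Proposition~\ref{ppp}) gives $d(S,q)<\infty$. If $d(S,q)=0$, I pick any past inextendible causal curve through $q$; as $q\in D^+(S)$ it meets $S$ at some $p$, and the segment from $p$ to $q$ satisfies $0\le\ell\le d(p,q)\le d(S,q)=0$, so $\ell=d(p,q)=d(S,q)=0$ and the previous proposition makes it a causal bigeodesic.

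Assume now $d(S,q)>0$. I would choose $p_n\in S\cap J^-(q)$ with $d(p_n,q)\to d(S,q)$ and $h$-arc length parametrized causal curves $x_n$ from $p_n$ to $q$ with $\ell(x_n)\ge d(p_n,q)-1/n$; these lie in $J^+(p_n)\cap J^-(q)\subset J^-(q)\cap J^+(S)$, a fixed compact set. Passing to a subsequence, $p_n\to p\in S\cap J^-(q)$. Since $\ell(x_n)\to d(S,q)>0$ and over a compact set $\mathscr{F}(\dot x)\le a\Vert\dot x\Vert_h$, the curves cannot contract to a point, so the limit curve theorem~\ref{main} applies; the inextendible alternative (ii) is excluded by the non-imprisonment of the globally hyperbolic $D(S)$, yielding a continuous causal curve $x$ from $p$ to $q$. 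Upper semi-continuity of the length functional (Theorem~\ref{upp}) then gives $\ell(x)\ge\limsup_n\ell(x_n)=d(S,q)$, while by definition $\ell(x)\le d(p,q)\le d(S,q)$; hence $\ell(x)=d(p,q)=d(S,q)$. Finally the previous proposition, whose hypothesis $d(x(0),x(1))=\ell(x)$ is now met, shows that $x$ is a causal bigeodesic, completing the argument.

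The step I expect to be most delicate is the second paragraph's compactness bookkeeping: verifying that $S\cap J^-(q)$ is simultaneously compact and contained in $D(S)$, so that Proposition~\ref{ppp} applies and the maximizing sequence stays controlled, together with the exclusion of the inextendible alternative in the limit curve theorem. Both rest on the global hyperbolicity of $D(S)$ from Theorem~\ref{mmm} and on the compactness furnished by Theorem~\ref{jbu}; everything else is a routine chaining of inequalities once the limiting maximizer exists.
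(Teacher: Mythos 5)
Your proposal is correct and follows essentially the same route as the paper: work inside $D(S)$, which is globally hyperbolic and causally convex by Theorem~\ref{mmm}, use the compactness of $J^-(q)\cap J^+(S)$ from Theorem~\ref{jbu}, take a maximizing sequence of causal curves, extract a limit via the limit curve theorem, and conclude with the upper semi-continuity of the length functional (Theorem~\ref{upp}) together with the preceding proposition on maximizing curves being causal bigeodesics. The extra bookkeeping you supply (the separate $d(S,q)=0$ case, non-contraction of the sequence, exclusion of the inextendible alternative) only fills in details the paper leaves implicit.
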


\begin{proof}
By Th.\ \ref{mmm} $D(S)$ is open, causally convex and globally hyperbolic.
By Th.\ \ref{jbu} $J^-(q)\cap J^+(S)$ is a compact subset of $D(S)$, thus $d$ restricted to $D(S)\times D(S)$ is finite. Let $x_n$ be a sequence of continuous causal curves connecting $p_n\in S$ to $q$, such that $\ell(x_n)\to  d(S,q)$, then up to extracting a subsequence, it converges uniformly to some continuous causal curve $x$ of starting point $p\in S$ and $d(S,q)=\limsup_n \ell(x_n)\le \ell(x)$, by Th.\ \ref{upp}, thus $\ell(x)=d(S,q)$.
\end{proof}

\subsection{Stable distance and stable spacetimes} \label{mvb}

In Sec.\ \ref{ngd} we have defined the notion of Lorentz-Finsler space $(M,\mathscr{F})$.
We shall write $\mathscr{F}'>\mathscr{F}$, with no mention of the cone domains, if
$(M,\mathscr{F})$ is a closed Lorentz-Finsler space, $(M,\mathscr{F}')$ is a proper Lorentz-Finsler space, and $C'{}^\times > C^\times$, which implies
$C'>C$ and $\mathscr{F}'>\mathscr{F}$ on $C$.
%
%
%
%
%

The next result follows from a construction similar to that used in Prop.\ \ref{ohg} but framed in $M^\times$.
\begin{proposition} \label{cso}
Given a closed Lorentz-Finsler space $(M,\mathscr{F})$ there is a locally Lipschitz proper Lorentz-Finsler space $(M,\mathscr{F}')$, $\mathscr{F}'>\mathscr{F}$.
Given a closed Lorentz-Finsler space $(M,\mathscr{F})$ and a locally Lipschitz proper cone structure $C'>C$, there is a locally Lipschitz proper Lorentz-Finsler space $(M,\mathscr{F}')$, $\mathscr{F}'>\mathscr{F}$. Given a closed Lorentz-Finsler space $(M,\mathscr{F})$, and a $C^0$ proper Lorentz-Finsler space $\check { \mathscr{F}}>\mathscr{F}$,  there is a locally Lipschitz proper Lorentz-Finsler space $\mathscr{F}'$, $\mathscr{F}<\mathscr{F}'<\check{ \mathscr{F}}$.
\end{proposition}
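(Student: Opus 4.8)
The plan is to translate everything into the cone–structure language on $M^\times=M\times\mathbb{R}$ and exploit the dictionary already implicit in Prop.\ \ref{hll}. Writing $\rho\colon TM^\times\to TM^\times$, $(y,z)\mapsto(y,-z)$, for the fibrewise reflection in the extra direction, I would first record that a closed cone structure $D$ on $M^\times$ arises from a closed Lorentz-Finsler space iff $D$ is independent of the fibre coordinate $r$ (call this \emph{$r$-stationarity}) and $\rho$-invariant. Indeed, given such a $D$, convexity and $\rho$-invariance make $\{z\colon(y,z)\in D_{(p,r)}\}$a symmetric interval $[-\mathscr{F}(y),\mathscr{F}(y)]$ with $\mathscr{F}\ge 0$ positively homogeneous and \emph{concave} (convexity of $D$), upper semi-continuous (closedness of $D$), over the sharp closed convex cone $C_p$ obtained by projection; $r$-stationarity makes this independent of $r$. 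Under this dictionary ``locally Lipschitz proper Lorentz-Finsler space'' corresponds exactly to ``$r$-stationary, $\rho$-invariant, locally Lipschitz proper cone structure'' (cf.\ Prop.\ \ref{hll}), and $\mathscr{F}'>\mathscr{F}$ to $C'^\times>C^\times$. So all three claims become claims about producing wider cone structures on $M^\times$ that retain the two symmetries.

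For the second claim (the workhorse) I would avoid any interpolation. Given the prescribed locally Lipschitz proper $C'>C$ on $M$, Prop.\ \ref{ohg} supplies a locally Lipschitz $1$-form $\omega'$ with $C'\subset\{\omega'>0\}$. Viewing $\omega'$ as a (fibre-linear, hence \emph{affine and therefore concave}) positive homogeneous function on $C'$, I would set $\mathscr{F}'(y)=\kappa(x)\,\omega'(y)$, where $\kappa$ is a locally Lipschitz positive function on $M$ chosen pointwise above the ratio $x\mapsto\sup_{y\in\hat C_x}\mathscr{F}(y)/\omega'(y)$; this supremum is finite and upper semi-continuous (upper semi-continuity of $\mathscr{F}$ and $C$, positivity of $\omega'$ on the compact section $\hat C_x$), hence locally bounded and dominated by some locally Lipschitz $\kappa$. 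Then $C'^\times_{(p,r)}=\{(y,z)\colon y\in C'_p,\ |z|\le\kappa(x)\omega'(y)\}$ is $r$-stationary, $\rho$-invariant, closed, sharp, convex, with $M$-projection exactly $C'$ and nonempty fibre interior (properness), it is locally Lipschitz since $C'$ and $\kappa\omega'$ are, and $C^\times<C'^\times$ because $y\in C_p\subset\mathrm{Int}\,C'_p$ while $|z|\le\mathscr{F}(y)<\kappa(x)\omega'(y)$ by the choice of $\kappa$ (reducing to $\hat C_x$ by homogeneity). This yields the desired $(M,\mathscr{F}')$. The first claim then follows for free: produce a locally Lipschitz proper $C'>C$ by Prop.\ \ref{ohg} and feed it into the second claim.

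The third claim genuinely needs interpolation, since the affine trick cannot be squeezed \emph{below} a bent $\check{\mathscr{F}}$. Here I would apply Th.\ \ref{ddo} to the closed cone structure $C^\times$ and the $C^0$ proper cone structure $\check C^\times$ on $M^\times$ to obtain a locally Lipschitz proper $\tilde C$ with $C^\times<\tilde C<\check C^\times$, and then set $C'^\times:=\tilde C\cap\rho(\tilde C)$. Since $C^\times$ and $\check C^\times$ are $\rho$-invariant, applying $\rho$ to $C^\times<\tilde C<\check C^\times$ gives $C^\times<\rho(\tilde C)<\check C^\times$, whence $C^\times\subset\mathrm{Int}\,\tilde C\cap\mathrm{Int}\,\rho(\tilde C)=\mathrm{Int}\,C'^\times$ and $C'^\times\subset\tilde C\subset\mathrm{Int}\,\check C^\times$, so the strict inclusions $C^\times<C'^\times<\check C^\times$ survive; $C'^\times$ is $\rho$-invariant by construction, and an intersection of locally Lipschitz proper cone fields whose interiors robustly overlap (both contain $C^\times$) is again locally Lipschitz and proper, a routine Hausdorff-distance check on the sphere sections $\hat C_x$. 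Translating back gives $\mathscr{F}<\mathscr{F}'<\check{\mathscr{F}}$.

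The main obstacle is preserving the \emph{two} symmetries through the local-to-global machinery. The $\rho$-invariance is cheap, since it is restored by the finite intersection with $\rho$ above, or by starting from $\rho$-symmetric local cones (Minkowski sums of $\rho$-symmetric hyperplane sections stay $\rho$-symmetric in Prop.\ \ref{doo}). The delicate point is \emph{$r$-stationarity}, which no finite intersection can enforce: I must build it in. The remedy is that $M^\times$ is a product and $C^\times,\check C^\times$ are already $r$-independent, so I can run the constructions of Prop.\ \ref{ohg} and Th.\ \ref{ddo} with $r$-independent data throughout—a covering $\{U_i\}$ and Lipschitz partition of unity $\{\varphi_i\}$ pulled back from $M$, a defining $1$-form pulled back from $M$, and local comparison cones of product type $g_i+\dd z^2$ (so $r$-independent and $\rho$-symmetric)—whence every convex combination from Prop.\ \ref{doo} remains $r$-independent and the output $\tilde C$, and then $C'^\times$, is $r$-stationary, exactly as needed for the Lorentz-Finsler interpretation.
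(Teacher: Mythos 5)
Your proposal is correct in its architecture and, for the third claim, it coincides with what the paper intends: the paper's entire ``proof'' of Prop.\ \ref{cso} is the one-line remark that it follows from a construction similar to Prop.\ \ref{ohg} framed in $M^\times$, i.e.\ exactly your programme of running the machinery of Prop.\ \ref{doo}, Prop.\ \ref{ohg} and Th.\ \ref{ddo} on $M^\times$ with $r$-independent, $\rho$-symmetric data, so that the output cone is of the form $C'{}^\times$. Your dictionary between closed Lorentz-Finsler spaces and $r$-stationary, $\rho$-invariant closed cone structures on $M^\times$, and your identification of $r$-stationarity as the property that must be built into the construction rather than restored afterwards, make explicit precisely the details the paper suppresses. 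Where you genuinely diverge is in the first two claims: setting $\mathscr{F}'=\kappa\,\omega'$, with $\omega'$ the locally Lipschitz $1$-form of Prop.\ \ref{ohg} positive on $C'$ and $\kappa$ a locally Lipschitz strict majorant of the upper semi-continuous, locally bounded function $x\mapsto\sup_{y\in\hat C_x}\mathscr{F}(y)/\omega'(y)$, avoids all interpolation: fibrewise linearity of $\omega'$ gives concavity, convexity, sharpness, closedness and local Lipschitzness of $C'{}^\times$ essentially for free, and it produces $\mathscr{F}'$ with domain exactly the prescribed $C'$, which is the point of the second claim. That is a cleaner route than the paper's for those two statements.

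One implementation detail of the third claim is wrong as written: the ``local comparison cones of product type $g_i+\dd z^2$'' cannot play the role of the interpolating cones $C''$ in the Th.\ \ref{ddo} step. Round (ellipsoidal) cones suffice when only an upper enclosure is required---that is how they are used in Prop.\ \ref{iiu} and Prop.\ \ref{ohg}---but they need not fit \emph{between} two nested non-round cones: if the cross-sections of $C^\times$ and $\check C^\times$ at a point are concentric squares $[-1,1]^2$ and $[-1-\epsilon,1+\epsilon]^2$, then any ellipse containing the inner square has area at least $2\pi$ (by uniqueness, its L\"owner ellipse is the circle through the corners), whereas any ellipse contained in the outer square has area at most $\pi(1+\epsilon)^2$, which is impossible for $\epsilon<\sqrt{2}-1$; so no ellipsoidal cone interpolates. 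The repair is immediate and already contained in your own text: take arbitrary (non-round) proper cones $C''_{(p,0)}$ with $C^\times_{(p,0)}<C''_{(p,0)}<\check C^\times_{(p,0)}$, extend them $r$-independently by translation in product charts, and restore $\rho$-symmetry either by replacing $C''$ with $C''\cap\rho(C'')$---still proper, since its interior contains the nonempty $\rho$-symmetric set $C^\times_{(p,0)}$---or by your final intersection $\tilde C\cap\rho(\tilde C)$, which preserves $r$-stationarity and is locally Lipschitz by the uniform-interior-overlap argument you cite. With that substitution your proof is complete.
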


In the next proofs given the closed Lorentz-Finsler space $(M,\mathscr{F})$ all the other introduced proper Lorentz-Finsler spaces $(M,\mathscr{F}')$, $\mathscr{F}'>\mathscr{F}$, will be  locally Lipschitz.
Next we define a novel distance.

\begin{definition}
Let $(M,\mathscr{F})$  be a closed Lorentz-Finsler space. We define the {\em stable distance} $D\colon M\times M\to [0,+\infty]$ as follows. For $p,q\in M$
\begin{equation}
D(p,q)=\mathrm{inf}_{\mathscr{F}'>\mathscr{F}} d'(p,q),
\end{equation}
where $d'$ is the Lorentz-Finsler distance for the locally Lipschitz proper  Lorentz-Finsler space $(M,\mathscr{F}')$.
\end{definition}
Observe that for $\mathscr{F}_1>\mathscr{F}_2$ we have $d_1\ge d_2$, and the set $\{\mathscr{F}':\mathscr{F}'>\mathscr{F}\}$ is directed in the sense that if $\mathscr{F}_1>\mathscr{F}$ and $\mathscr{F}_2>\mathscr{F}$ there is  $\mathscr{F}_3>\mathscr{F}$ such that $\mathscr{F}_3<\mathscr{F}_1,\mathscr{F}_2$.


\begin{theorem} \label{upq}
Let $(M,\mathscr{F})$ be a closed Lorentz-Finsler space. The following properties hold true:
\begin{itemize}
\item[(a)] If $(p,q)\notin J_S$, then $D(p,q)=0$,
\item[(b)] Suppose that $(M,\mathscr{F})$ is a proper Lorentz-Finsler space. If $(p,q) \in \mathrm{Int} J_S$ or $q\in \mathrm{Int} J_S^+(p)$ or $p\in \mathrm{Int} J_S^{-}(q)$, then $D(p,q)>0$,
\item[(c)]    If $(p,q)\in J_S$ and $(q,r)\in J_S$, then $(p,r)\in J_S$ and (see also Fig.\ \ref{best})
\begin{align*}
 D(p,q)+D(q,r)\le D(p,r),
\end{align*}
\item[(d)] D is upper semi-continuous,
\item[(e)]
Suppose that  $(M,\mathscr{F})$ is a  proper Lorentz-Finsler space. If $D=d$ then the spacetime is reflective (so causally continuous if distinguishing),
\item[(f)] $d \le D$,
\item[(g)] If $(M,C)$ is globally hyperbolic then $D=d$,
\item[(h)] If $(M,C)$ is stably causal then for every $p\in M$ there is a globally hyperbolic $J_S$-causally convex neighborhood $U$, such that $D\vert_{U\times U}=d\vert_{U\times U}=d^U=D^U$, where $d^U$ is the Lorentz-Finsler distance of the spacetime $U$ and similarly for $D^U$. In particular $D(p,p)=0$.
\end{itemize}
\end{theorem}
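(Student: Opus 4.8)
The overall strategy rests on two monotonicity facts — $\mathscr{F}_1\ge\mathscr{F}_2\Rightarrow d_1\ge d_2$, and $d'\ge d$ for every $\mathscr{F}'>\mathscr{F}$ — the reverse triangle inequality (\ref{lfr}), and, for the delicate items, a reformulation of $D$ as a reachability function on $M^\times$. The routine parts go first. For (f), every $C$-causal curve is $C'$-causal with $\mathscr{F}'(\dot x)\ge\mathscr{F}(\dot x)$, so $d'\ge d$ for each $\mathscr{F}'>\mathscr{F}$ and hence $D=\inf_{\mathscr{F}'}d'\ge d$ (the case $(p,q)\notin J$ being trivial). For (a), $(p,q)\notin J_S$ produces, by definition of $J_S$ and Th.~\ref{ddo}, a locally Lipschitz proper $C'>C$ with $(p,q)\notin J_{C'}$; choosing by Prop.~\ref{cso} an $\mathscr{F}'>\mathscr{F}$ with underlying cone $C'$ gives $d'(p,q)=0$, so $D(p,q)=0$. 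For (c), $J_S$ is transitive so $(p,r)\in J_S$, and for each $\mathscr{F}'>\mathscr{F}$ the reverse triangle inequality yields $d'(p,r)\ge d'(p,q)+d'(q,r)\ge D(p,q)+D(q,r)$; taking the infimum over $\mathscr{F}'$ gives the claim.

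For (b) the plan is to splice a short timelike segment of controlled length into a stably causal path. Since $(M,\mathscr{F})$ is proper, Prop.~\ref{hll} furnishes $\tilde C\le C$ and $\tilde{\mathscr{F}}\le\mathscr{F}$ with $(M,\tilde{\mathscr{F}})$ a $C^0$ proper cone structure, so $\tilde{\mathscr{F}}>0$ on $\mathrm{Int}\,\tilde C$. If $q\in\mathrm{Int}\,J_S^+(p)$, pick $q_1$ in a neighborhood $V\subset J_S^+(p)$ of $q$ with $q_1\in\mathrm{Int}\,\tilde C^-(q)$, joined to $q$ by a $\tilde C$-timelike curve $\gamma$ of $\tilde{\mathscr{F}}$-length $\delta>0$; then $(p,q_1)\in J_S$ and $(q_1,q)\in J$. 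For any $\mathscr{F}'>\mathscr{F}$ the curve $\gamma$ is $C'$-timelike with $\mathscr{F}'(\dot\gamma)\ge\tilde{\mathscr{F}}(\dot\gamma)$, so $d'(p,q)\ge d'(p,q_1)+d'(q_1,q)\ge\delta$, and $\delta$ is independent of $\mathscr{F}'$; hence $D(p,q)\ge\delta>0$. The cases $p\in\mathrm{Int}\,J_S^-(q)$ and $(p,q)\in\mathrm{Int}\,J_S$ are identical, splicing the segment at the start, respectively at either end inside a product neighborhood $U\times V\subset J_S$.

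Part (d) is the main obstacle, and I would reduce it to a closedness statement on $M^\times$. From the first assertions of Prop.~\ref{maq} one has, for each $\mathscr{F}'>\mathscr{F}$, that $d'(p,q)=\sup\{s\ge 0:\ ((p,0),(q,s))\in J_{C'^\times}\}$, since by (\ref{kki}) and the possibility of slowing the fibre component that section of $J_{C'^\times}$ is a downward-closed interval with supremum $d'(p,q)$. Writing $\tilde J^\times_S:=\bigcap_{\mathscr{F}'>\mathscr{F}}J_{C'^\times}$, a sandwiching argument — for $\mathscr{F}'>\mathscr{F}$ choose (Prop.~\ref{cso}) $\mathscr{F}''$ with $\mathscr{F}<\mathscr{F}''<\mathscr{F}'$ and apply Th.~\ref{dxp} in $M^\times$ to get $\overline{J_{C''^\times}}\subset J_{C'^\times}$ — shows $\tilde J^\times_S=\bigcap_{\mathscr{F}'}\overline{J_{C'^\times}}$, so $\tilde J^\times_S$ is closed. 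Combining the two displays with the directedness of $\{\mathscr{F}'>\mathscr{F}\}$ noted after the definition of $D$ gives the representation $D(p,q)=\sup\{s\ge 0:\ ((p,0),(q,s))\in\tilde J^\times_S\}$, the supremum being attained since the relevant section of $\tilde J^\times_S$ is a closed downward-closed interval. Upper semicontinuity is then immediate: if $(p_n,q_n)\to(p_0,q_0)$ with $D(p_n,q_n)\to\ell:=\limsup D>0$, then $((p_n,0),(q_n,D(p_n,q_n)))\in\tilde J^\times_S$ converges to the limit $((p_0,0),(q_0,\ell))$, which lies in the closed set $\tilde J^\times_S$, giving $D(p_0,q_0)\ge\ell$ (the case $\ell=0$ being trivial). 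Part (e) follows at once: if $D=d$, then $d$ is upper semicontinuous by (d), and Prop.~\ref{imp} yields reflectivity, hence causal continuity when distinction also holds.

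For (g), by the stability of global hyperbolicity (Th.~\ref{mom}) and the approximation scheme of Th.~\ref{sqd} transported to Lorentz--Finsler structures via Prop.~\ref{cso}, I would fix a decreasing sequence $\mathscr{F}_k\downarrow\mathscr{F}$ of globally hyperbolic locally Lipschitz proper structures with $C_k\downarrow C$ and $\mathscr{F}_k\to\mathscr{F}$. Near-maximizing $C_k$-causal curves $x_k$ from $p$ to $q$ lie in the compact diamond $J^+_{C_1}(p)\cap J^-_{C_1}(q)$, so the limit curve theorem \ref{main} produces a connecting $C$-causal limit $x$ (case (ii) excluded by non-imprisonment), and Th.~\ref{upp} gives $\lim_k d_k(p,q)=\lim_k\ell_k(x_k)\le\ell(x)\le d(p,q)$; since $d_k\ge d$ this forces $d_k(p,q)\to d(p,q)$, whence $D(p,q)\le\inf_k d_k(p,q)=d(p,q)$ and, with (f), $D=d$. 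For (h), under stable causality (hence strong causality, Th.~\ref{mwh}) I would take via Prop.~\ref{iiu} and Remark~\ref{nff} a globally hyperbolic neighborhood $U\ni p$ causally convex for a flat cone $C^g>C$; then $U$ is $J_S$-causally convex because $J_S\subset J_{C^g}$, and causal convexity gives $d=d^U$ on $U\times U$, while (g) applied inside the globally hyperbolic $(U,\mathscr{F}|_U)$ gives $d^U=D^U$. Choosing the global approximants $\mathscr{F}_k$ so that $C_k<C^g$ on the compact set $\bar U$ (possible since $\bigcap_kC_k=C<C^g$), each $U$ is $C_k$-causally convex and globally hyperbolic, so $d_k=d^U_k$ there and the argument of (g) run inside $U$ yields $d_k(p,q)\to d(p,q)$; hence $D=d$ on $U\times U$, all four quantities coincide, and in particular $D(p,p)=d^U(p,p)=0$.
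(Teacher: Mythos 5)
Your proofs of (a), (c), (e), (f) and (g) are correct and essentially the paper's own arguments, and (b) is also the paper's splicing idea. One caveat on (b): you invoke the converse of Prop.\ \ref{hll}, whereas the paper states only the ``if'' direction; what you actually need --- a positive lower bound on $\ell'(\gamma)$ for a short timelike segment $\gamma$, uniform in $\mathscr{F}'$ --- follows directly from the fact that properness forces $\mathscr{F}\not\equiv 0$ on each fiber (else $C^\times$ would have empty fiberwise interior), whence $\mathscr{F}>0$ on $\mathrm{Int}\,C_x$ by superadditivity, and $\ell'(\gamma)\ge\ell_{\mathscr{F}}(\gamma)>0$; so this is cosmetic. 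Part (d) is where you genuinely diverge, and your route is correct and attractive: the paper proves upper semi-continuity by contradiction, via a limit-curve and curve-splicing construction that manufactures an overly long $C'$-causal curve from $p$ to $q$; you instead encode $D$ as a reachability supremum on $M^\times$, namely $D(p,q)=\sup\{s\ge 0\colon ((p,0),(q,s))\in\bigcap_{\mathscr{F}'>\mathscr{F}}J_{C'{}^\times}\}$, prove that intersection closed by sandwiching $\overline{J_{C''{}^\times}}\subset J_{C'{}^\times}$ through Th.\ \ref{dxp} applied on $M^\times$, and read semi-continuity off closedness. This anticipates the mechanism of Th.\ \ref{sov} (which the paper later proves \emph{using} (d)) and is arguably more transparent; the only repair needed is the edge case $\limsup_n D(p_n,q_n)=+\infty$, which you handle by taking a fixed finite $s$ below the limsup and using downward-closedness of the fibre sections rather than the attained maxima.

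Part (h), however, has a genuine gap. You take $U$ from Prop.\ \ref{iiu}/Remark \ref{nff} as a diamond of a local flat metric $g$ and then assert that $U$ is $J_S$-causally convex ``because $J_S\subset J_{C^g}$'', and later that $U$ is $C_k$-causally convex because $C_k<C^g$ on $\bar U$. Neither step is valid: $C^g$ lives only on a coordinate neighborhood $U_1$, so $J_{C^g}$ is not a relation on $M$ and the inclusion $J_S\subset J_{C^g}$ has no meaning; more to the point, causal convexity relative to curves that stay inside $U_1$ says nothing about $C_k$-causal or $J_S$-connections between points of $U$ that leave $U_1$ entirely and return, and excluding precisely such excursions is what you need for $d_k\vert_{U\times U}=d_k^U$ and $D\vert_{U\times U}=D^U$. (The identification $d\vert_{U\times U}=d^U$ for the original cone is salvageable, since stable causality gives strong causality of $(M,C)$ and Th.\ \ref{dao} then lets $U$ be chosen $C$-causally convex in $M$; but $C$-causal convexity is useless for the wider structures.) The missing idea is the paper's: stable causality provides a \emph{globally defined} stably causal locally Lipschitz proper cone structure $C'>C$ (Th.\ \ref{ddo}), which is strongly causal (Th.\ \ref{mwh}); choosing $U\ni p$ relatively compact and $C'$-causally convex inside a $C$-globally hyperbolic neighborhood makes $U$ genuinely $J_S$-causally convex, since now $J_S\subset J_{C'}$ is an inclusion of relations on $M$, and restricting the infimum defining $D$ to structures with cone narrower than $C'$ (legitimate by directedness) traps all relevant causal curves in $U$. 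With that replacement, your ``run (g) inside $U$'' argument closes the proof.
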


\begin{proof}
(a). By  definition of Seifert relation, if $(p,q)\notin J_S$ then there is $\hat{\mathscr{F}}>\mathscr{F}$ such that $\hat C>C$ and $(p,q) \notin \hat J$. Then for every $\mathscr{F}'$ such that $\mathscr{F}<\mathscr{F}'<\hat{\mathscr{F}}$, $d'(p,q)=0$, hence the claim.

(b). If $(p,q) \in \mathrm{Int} J_S$ pick $(p',q')$ sufficiently close to $(p,q)$ and such that $p\ll p'$, $q'\ll p$, then $(p',q')\in J_S$. For every $\mathscr{F}'>\mathscr{F}$, $d'(p,q)$ is larger than the sum of the $\mathscr{F}$-Lorentz-Finsler lengths of the $C$-timelike curves connecting $p$ to $p'$ and $q'$ to $q$, which are positive and independent of $\mathscr{F}'$, thus the claim. The proofs with the assumptions  $q\in \mathrm{Int} J_S^+(p)$ or $p\in \mathrm{Int} J_S^{-}(q)$ are analogous, but there is only one timelike curve.

(c). Let $\mathscr{F}'>\mathscr{F}$, then   $(p,q)\in J'$, $(q,r)\in J'$ and
\begin{align*}
D(p,q)+D(q,r)\le d'(p,q)+d'(q,r)\le d'(p,r),
\end{align*}
where we used the Lorentz-Finsler reverse triangle inequality, cf.\ Eq.\ (\ref{lfr}).
Since the equation in display holds for every  $\mathscr{F}'>\mathscr{F}$, taking the infimum we obtain the desired result.

(d).  We can assume that $D(p,q)$ is finite. Suppose $D$ is not upper semi-continuous at $(p,q)$, then there is $\epsilon>0$ and a sequence $(p_n,q_n)\to (p,q)$ such that $D(p_n,q_n) \ge D(p,q)+4\epsilon$.
 By definition of $D(p,q)$ we can find $\mathscr{F}'>\mathscr{F}$ such that for every continuous $C'$-causal curve $\gamma$ connecting $p$ to $q$, $\ell'(\gamma)\le d'(p,q)\le  D(p,q)+\epsilon$.
 Let $\mathscr{F}_n\to \mathscr{F}$, be a sequence such that  $\mathscr{F} <\mathscr{F}_{n+1}<\mathscr{F}_n<\mathscr{F}'$. Since $d_n(p_n, q_n) \ge D(p_n,q_n)$, for every $n$ we can find $\gamma_n$ continuous $C_n$-causal curve connecting $p_n$ to $q_n$ such that $\ell_n(\gamma_n)\ge D(p_n,q_n)-\epsilon $. Applying the limit curve theorem  \ref{main}  to $\{\gamma_n\}$ we obtain the existence of two continuous $C$-causal limit curves $\sigma^q$ ending at $q$ and $\sigma^p$ starting at $p$ (possibly inextendible in the other direction) to which a subsequence (denoted in the same way) $\gamma_n$ converges uniformly over compact subsets. Let $p'\in \sigma^p$ be chosen so close to $p$ that,  with the obvious meaning of the notation, $\sigma^p_{p\to p'}$   belongs to a neighborhood $U_p$ such that the
 $\mathscr{F}'$-length (and hence the $\mathscr{F}_n$-length)  of any $C'$-causal curve contained in $U_p$ is less than $\epsilon/2$ (recall the local non-imprisoning result Prop.\ \ref{iiu} and the fact that a Riemannian metric can be found such that $\mathscr{F}'(y)\le \Vert y\Vert_h$ for every $y\in TU_p$).
 Similarly choose $q' \in U_q$ with the analogous criteria. Let $p'_n\in \gamma_n$ be such that $p'_n\to p'$ and similarly for $q'_n\to q$.  Since the limit curves are $C$-causal $(p,p')$ and $(q',q)$ belong to $I_{C'}(U_p)$, moreover, as the chronology relation is open, $(p,p'_n)\in I_{C'}(U_p)$ and $(q'_n,q)\in I_{C'}(U_q)$, thus we can go  from $p$ to $p'_n$ follow $\gamma_n$ to $q'_n$ and then go from $q'_n$ to $q$, all along a continuous $C'$-causal curve $\eta_n$. Notice that
  $\ell'(\eta_n)\ge \ell'(\eta_n\vert_{p'_n \to q'_n}) \ge \ell_n(\gamma_n\vert_{p'_n \to q'_n})\ge \ell_n(\gamma_n) -\epsilon$. Putting everything together
 \[
 \ell'(\eta_n)\ge \ell_n(\gamma_n)-\epsilon\ge D(p_n,q_n)-2\epsilon \ge  D(p,q)+2\epsilon,
 \]
 which gives a contradiction, since we know that  for every continuous $C'$-causal curve $\gamma$ connecting $p$ to $q$, $\ell'(\gamma)\le D(p,q)+\epsilon$.

(e). Since $D$ is upper semi-continuous,  $d$ is upper semi-continuous which implies reflectivity (Th.\ \ref{imp}).

(f). We can assume $d(p,q)>0$, the other case being trivial. Whenever $\mathscr{F}'>\mathscr{F}$, we have $\ell'(\gamma)\ge \ell(\gamma)$ for every $C$-causal curve, so the statement follows.

(g). We know that $J_S=J$ (Th.\ \ref{mom}), so we have only to show that for  $(p,q)\in J$, $D(p,q)=d(p,q)$.  Let $\{C_k\}$ be a sequence as in Prop.\ \ref{sqd}, $C<C_{k+1}<C_k$, $C=\cap_k C_k$, where by the stability of global hyperbolicity we can assume that $C_1$ and hence every $C_k$ is globally hyperbolic. For every $n>0$ we can find $k_n$ such that $D(p,q)\le d_{k_n}(p,q)\le D(p,q)+1/n$. Indeed, if not  then $d_k(p,q)>D(p,q)+1/n$ for every $k$, thus we can find a sequence of continuous $C_k$-causal curves $\sigma_k$ such that $\ell_k(p,q)>D(p,q)+1/n$. By the global hyperbolicity of $C_1$ it converges to a continuous $C$-causal curve $\sigma$, and by the upper semi-continuity of the length functional, $\ell(x) \ge D(p,q)+1/n\ge d(p,q)+1/n$, a contradiction. Thus we can find a continuous $C_{k_n}$-causal curve $x_n$ connecting $p$ to $q$ such that $d_{k_n}(p,q)-1/n \le \ell_{k_n}(x_n)\le d_{k_n}(p,q)$, thus $\vert \ell_{k_n}(x_n)- D(p,q)\vert \le 1/n$. By the limit curve theorem \ref{main} and the non-imprisoning property of global hyperbolicity there is a limit continuous $C$-causal curve $x$ connecting $p$ to $q$.
By the upper semi-continuity of the length functional, Th.\ \ref{upp},  $D(p,q)=\lim_n \ell_{k_n}(x_n)\le \ell(x)\le d(p,q)\le D(p,q)$, thus $D(p,q)=d(p,q)$.

(h).  Let $(V,C\vert_V)$ be a globally hyperbolic neighborhood of $p$. Since $(M,C)$ is stably causal there is $C'>C$ such that $(M,C')$ is stably causal and hence strongly causal. Let $U\subset V$, $p\in U$, be a relatively compact $J'$-causally convex set, then since $J\subset J_S\subset J'$, it is also $J_S$-causally convex and  $J$-causally convex. Thus $(U, C\vert_U)$ is globally hyperbolic and $d^U=d\vert_{U\times U}$. But since $U$ is $J'$-causally convex $D^U=D\vert_{U\times U}$. Finally, the equality $D^U=d^U$ follows from (g).
\end{proof}

\begin{figure}[ht]
\begin{center}
 \includegraphics[width=3.5cm]{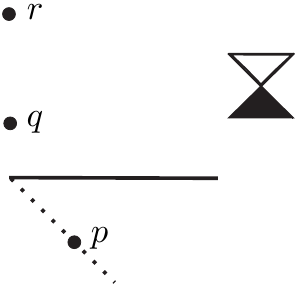}
\end{center}
\caption{Minkowski's 1+1-dimensional spacetime with a spacelike segment removed. The usual reverse triangle inequality does not hold $d(p,q)=d(p,r)=0$, $d(q,r)>0$, hence $d(p,q)+d(q,r) \nleq d(p,r)$, because $ (p,q)\notin J$. However, the reverse triangle inequality applies for $D$ because $(p,q),(q,r)\in J_S$. So the good properties of $D$ find wider applicability than those of $d$.} \label{best}
\end{figure}

\begin{definition}
A {\em stable} closed Lorentz-Finsler space  $(M,\mathscr{F})$ is a  stably causal closed Lorentz-Finsler space such that $D<\infty$.
\end{definition}

This terminology is well posed due to the following result.

\begin{theorem} \label{con} (Stable  spacetimes are stable)\\
Let $(M,\mathscr{F})$ be a stable closed Lorentz-Finsler space, then there is $\bar{\mathscr{F}}>\mathscr{F}$ such that $(M,\bar{\mathscr{F}})$ is a stable locally Lipschitz proper Lorentz-Finsler space. In particular, $d\le D\le \bar d\le \bar D<+\infty$. Finally, let $\epsilon>0$ and let $K$ be a compact set, then $\bar{\mathscr{F}}$ can be chosen so that $\bar D\vert_{K\times K}-D\vert_{K\times K}\le \epsilon$.
\end{theorem}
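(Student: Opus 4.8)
The plan is first to reduce the statement to two independent tasks and to dispatch the easy chain of inequalities. Once a locally Lipschitz proper $\bar{\mathscr F}>\mathscr F$ has been produced, the chain $d\le D\le \bar d\le \bar D$ is immediate: $d\le D$ and $\bar d\le \bar D$ are Theorem \ref{upq}(f) applied to $\mathscr F$ and to $\bar{\mathscr F}$, while $D\le \bar d$ holds because $\bar{\mathscr F}$ is itself one of the competitors in the infimum $D(p,q)=\inf_{\mathscr F'>\mathscr F}d'(p,q)$. Dually, since $\{\mathscr F''>\bar{\mathscr F}\}\subset\{\mathscr F'>\mathscr F\}$, one always has $\bar D\ge D$. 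Thus the theorem reduces to: (i) constructing $\bar{\mathscr F}>\mathscr F$, locally Lipschitz, proper and stably causal, with $\bar D<+\infty$ everywhere; and (ii) arranging, for the given $K$ and $\epsilon$, that $\bar D\le D+\epsilon$ on $K\times K$.

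For (i) I would exploit the finiteness of $D$ to manufacture a bounding function. The key observation is that a smooth strictly $\bar{\mathscr F}$-steep function $\tau$ (Def.\ (f)) controls lengths: for any $\bar C$-causal curve $x$ from $p$ to $q$ one has $\bar\ell(x)=\int \bar{\mathscr F}(\dot x)\,\dd t<\int \dd\tau(\dot x)\,\dd t=\tau(q)-\tau(p)$, so $\bar d(p,q)\le \tau(q)-\tau(p)<\infty$; and since the steepness is strict there is room to widen $\bar{\mathscr F}$ slightly to some $\mathscr F''>\bar{\mathscr F}$ for which $\tau$ remains steep, whence $\bar D(p,q)\le d''(p,q)\le \tau(q)-\tau(p)<\infty$. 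Because $\tau$ is temporal, $(M,\bar{\mathscr F})$ is automatically stably causal (Th.\ \ref{nio}). Hence (i) follows once such a pair $\bar{\mathscr F},\tau$ is produced. The production of $\tau$ is exactly the steep temporal function construction extracted from the finiteness of the stable distance through the averaging/product trick on $M^\times$ (the anti-Lipschitz function of Sec.\ \ref{fir}-\ref{las}, subsequently smoothed); here stable causality together with $D<+\infty$ guarantee both that the averaged level set meets every fiber and that the resulting function is genuinely steep for a wider cone.

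For (ii) the plan is to prove a uniform approximation lemma: for the compact $K$ and $\epsilon>0$ there is $\tilde{\mathscr F}>\mathscr F$, locally Lipschitz proper, with $\tilde d\le D+\epsilon$ on $K\times K$. Granting it, I would run the construction of (i) inside the order interval $\mathscr F<\bar{\mathscr F}<\tilde{\mathscr F}$ (Prop.\ \ref{cso} furnishes such intermediate structures); then, since $\tilde{\mathscr F}>\bar{\mathscr F}$, monotonicity gives $\bar D\le \tilde d\le D+\epsilon$ on $K\times K$, and with $\bar D\ge D$ one concludes $0\le \bar D-D\le \epsilon$ there. To prove the lemma I would take a decreasing cofinal sequence $\mathscr F_k\downarrow\mathscr F$ (its existence, with $\cap_k C_k^\times=C^\times$, by the Lindel\"of argument of Th.\ \ref{sqd} carried out in $M^\times$), so that $d_k\downarrow D$ pointwise, and then upgrade to uniform convergence on the compact $K\times K$ by a limit-curve argument: a failure would yield pairs $(p_k,q_k)\to(p,q)$ and $C_k$-causal curves $x_k$ with $\ell_k(x_k)>D(p_k,q_k)+\epsilon/2$, to which the limit curve theorem \ref{main} and the upper semi-continuity of the length functional (Th.\ \ref{upp}) are applied, together with the upper semi-continuity of $D$ (Th.\ \ref{upq}(d)), to reach $D(p,q)\ge \ell(x)\ge \limsup_k\ell_k(x_k)$ and hence a contradiction.

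The main obstacle is step (i), the \emph{global} finiteness of $\bar D$: a fixed widening of the cones can create causal curves that wander out towards infinity and accumulate unbounded length, so no naive uniform opening of $C$ will keep the distance finite. This is precisely the difficulty that the hypothesis $D<+\infty$ is designed to defeat, and converting this global hypothesis into a pointwise length bound is the heart of the matter; it is achieved only through the steep temporal function, i.e.\ through the anti-Lipschitz/product-trick machinery, which is therefore the technical crux on which the whole argument rests. A secondary, more bookkeeping obstacle is that the distances $d_k$ and $D$ carry opposite semi-continuity properties, so the passage from pointwise to uniform approximation on $K\times K$ in step (ii) must be handled through the limit curve theorem rather than through a bare Dini argument.
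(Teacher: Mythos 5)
Your reduction of the inequality chain and the monotonicity bookkeeping ($d\le D\le\bar d\le\bar D$, $D\le\bar D$) are fine, but the core of your plan, step (i), is circular within the logical architecture of this paper. You propose to obtain $\bar D<+\infty$ from a strictly steep temporal function produced by the anti-Lipschitz/product-trick machinery of Sec.\ \ref{fir}--\ref{las}. In the paper, however, that machinery sits strictly downstream of the theorem you are proving: Theorem \ref{mih} is stated for ``a stable locally Lipschitz proper Lorentz-Finsler space $\mathscr{F}'>\mathscr{F}$ (it exists as shown in Prop.\ \ref{con})'', and its proof invokes Th.\ \ref{con} precisely to ensure that the auxiliary structures $\mathscr{F}',\mathscr{F}_0,\mathscr{F}_3$ have \emph{finite} stable distances; that finiteness is what forces the averaged function $t^\downarrow$ to tend to $0$ along each $\mathbb{R}$-fiber, i.e.\ what makes the level set cross every fiber. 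Theorem \ref{vkf}, which you cite for the smooth steep function, rests on Th.\ \ref{mih}. The hypothesis $D<+\infty$ for the original $\mathscr{F}$ only yields, pair by pair, \emph{some} wider structure with finite distance at that pair; upgrading this to a single wider locally Lipschitz proper structure whose distance is finite everywhere is exactly the content of Th.\ \ref{con}, and it is not automatic, since widening cones can blow distances up (cf.\ Example \ref{cuw}). The paper's actual proof is therefore elementary and does not touch steep functions at all: (Result 1) a finite covering of $K\times K$ by sets $I^+_i(p_i)\times I^-_i(q_i)$, obtained from finiteness plus upper semi-continuity of $D$, combined with the reverse triangle inequality, bounds some $\check d$ on $K\times K$; (Result 2) a causal-convexity counting argument (a causal curve can exit and re-enter a compact set only finitely many times) shows this boundedness survives arbitrary widening of the cones inside $K$; finally these local data are glued over an exhaustion $K_n=\bar B(o,n)$ to produce one $\mathscr{F}'$ with $d'$ finite everywhere, and $\bar{\mathscr{F}}$ is squeezed between $\mathscr{F}$ and $\mathscr{F}'$.

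Your step (ii) also has a genuine flaw: the semi-continuities point the wrong way. From $\ell_k(x_k)>D(p_k,q_k)+\epsilon/2$ and (in the favorable case of Th.\ \ref{main}) $\limsup_k\ell_k(x_k)\le\ell(x)\le d(p,q)\le D(p,q)$, a contradiction would require $\liminf_k D(p_k,q_k)\ge D(p,q)$, i.e.\ \emph{lower} semi-continuity of $D$ along the sequence; but $D$ is only upper semi-continuous (Th.\ \ref{upq}(d)), so $D(p_k,q_k)$ may collapse to $0$ (e.g.\ if $(p_k,q_k)\notin J_S$) while $D(p,q)>0$, and nothing is contradicted. Moreover, under mere stable causality you cannot exclude case (ii) of the limit curve theorem: the curves $x_k$ may escape every compact set, leaving no connecting limit curve to which Th.\ \ref{upp} applies. (A further small inaccuracy: $\cap_k C_k^\times=C^\times$ does not by itself make the sequence $\mathscr{F}_k$ cofinal among all $\mathscr{F}'>\mathscr{F}$, so $d_k\downarrow D$ pointwise needs a separate Lindel\"of argument on hypographs, not just Th.\ \ref{sqd}.) The paper proves the $\epsilon$-statement with no curves at all: the hypographs $S'$ of the upper semi-continuous functions $D'\vert_{K\times K}$ are compact, satisfy $\cap_{\mathscr{F}'>\mathscr{F}}S'=S$ (the hypograph of $D$), and a finite-intersection/directedness argument applied to a compact neighborhood $E$ of the graph of $(D+\epsilon)\vert_{K\times K}$ disjoint from $S$ yields a single $\mathscr{F}'$ with $S'\cap E=\emptyset$, i.e.\ $D'\vert_{K\times K}<(D+\epsilon)\vert_{K\times K}$; this compactness-of-hypographs device is the correct Dini-type tool for upper semi-continuous data.
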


Thus the first sentence implies that the finiteness of $D$ is stable, while the second sentence states that  $D$ itself is stable.

\begin{proof}
By stable causality we can find $\hat{\mathscr{F}}>\mathscr{F}$ such that  $(M,\hat{\mathscr{F}})$ is a stably causal locally Lipschitz proper Lorentz-Finsler space.
We need some preliminary results.

Result 1:  Under the theorem's assumptions, given a compact set $K$ we can find $\check{\mathscr{F}}$, $\mathscr{F}<\check{\mathscr{F}}<\hat{\mathscr{F}}$, such that $\check{d}\vert_{K\times K}<R$ for some $R(K)>0$.

Proof of result 1. Let $\hat K$ be a compact set which contains $K$ in its interior. Since $D$ is finite and upper semi-continuous  there is $R>0$ such that $D\vert_{\hat K\times \hat K}<R$.
Let $p,q\in \hat K$. Since $D(p,q)<R<+\infty$, by definition of $D$ we know that there is $\mathscr{F}_{pq}$, $\mathscr{F}<\mathscr{F}_{pq}<\hat{\mathscr{F}}$, such that $d_{pq}(p,q)<R$.
Consider the open sets $I^+_{pq}(p)\times I^-_{pq}(q)$ for $p,q\in \hat K$. They cover $K\times K$ in fact given $(p,q)\in K\times K$, we can always find $p'\in J^-(p)\backslash\{p\}\cap \hat K$ and $q'\in J^+(q)\backslash\{q\}\cap \hat K$ (by Th.\ \ref{zar}) so by Th.\ \ref{dxp} $p\in I^+_{p'q'}(p')$, $q\in I^-_{p'q'}(q')$. By compactness of $K\times K$ we can find $(p_i,q_i)\in \hat K\times \hat K$, $1\le i\le s$,  such that writing  $I_{i}$ in place of $I_{p_iq_i}$ (and $d_i$, $\mathscr{F}_i$ in place of $d_{p_iq_i}$, $\mathscr{F}_{p_iq_i}$), $I^+_{i}(p_i)\times I^-_{i}(q_i)$ cover $K\times K$.
 Let $\check{\mathscr{F}}$ be such that $\mathscr{F}<\check{\mathscr{F}}< \mathscr{F}_{i}$ for every $1\le i\le s$. Every $(p,q)\in K\times K$ belongs to some element of the covering $(p,q)\in I^+_{i}(p_i)\times I^-_{i}(q_i)$.
  But now $\check d(p,q)<R$ otherwise
 \[
 d_{i}(p_i,q_i)\ge  d_i(p,q) \ge \check d (p,q)\ge R,
 \]
a contradiction. Result 1 is proved.

 %
%

Result 2: Under the theorem's assumptions, given a compact set $K$ we can find  $\check{\mathscr{F}}$, $\mathscr{F}<\check{\mathscr{F}}<\hat{\mathscr{F}}$,  such that every $\mathscr{F}'$, $\mathscr{F}<{\mathscr{F}}'<\hat{\mathscr{F}}$, such that  $\mathscr{F}<\mathscr{F}'<\check{\mathscr{F}}$ on $M\backslash \mathrm{Int} K$, has bounded distance $d'\vert_{K\times K}$.

Observe that the definition of the restricted distance $d'\vert_{K\times K}$ involves continuous  $C'$-causal curves that might escape $K$. Also observe that $\mathscr{F}'$ is bounded by $\hat{\mathscr{F}}$ on $K$, however  $\hat{\mathscr{F}}$ is independent of $K$.

Proof of result 2.
By Result 1 there is $\check{\mathscr{F}}$, $\mathscr{F}<\check{\mathscr{F}}<\hat{\mathscr{F}}$, such that $\check{d}\vert_{\tilde K\times \tilde K}<R$, where $\tilde K$ is a compact set which contains $K$ in its interior. But we can enlarge the cones in $\mathrm{Int} K$, and alter $\check{\mathscr{F}}$ to $\check{\mathscr{F}}'<\hat{\mathscr{F}}$, preserving the boundedness of $\check{d}'$ in $K\times K$. In order to prove this point, observe that by  the stable causality of $\hat{C}$, $K$ can be covered by a finite number $N$ of $\hat{C}$-causally convex neighborhoods contained in $\mathrm{Int} \tilde K$. Every continuous $\check{C}$-causal curve starting and ending in $K$ can escape $\tilde K$ but at most $N$ times since each time it reenters a different causally convex neighborhood. The $\check{\mathscr{F}}'$-length of continuous $\check{C}'$-causal curves contained in $\tilde K$ is bounded by a constant $B>0$ (use the existence of a Riemannian metric $h$ on $M$ such  that $\hat{\mathscr{F}}(y)\le \Vert y \Vert_h$ on $T\tilde K$, and  cover $\tilde K$ with a finite number of $\hat C$-causally convex non-imprisoning neighborhoods), thus such an alteration of $\check{\mathscr{F}}$ in $\mathrm{Int} K$ would nevertheless keep $\check{d}'\vert_{K\times K}$ bounded by $(B+R)N$. Result 2 is proved.

Let $h$ be an auxiliary complete Riemannian metric on $M$ and let $o\in M$. Let $K_n=\bar{B}(o,n)$ and let $\mathscr{F}_n$, $\mathscr{F}<\mathscr{F}_n<\hat{\mathscr{F}}$,  be the function $\check{\mathscr{F}}$ appearing in result 2 for the choice $K=K_n$. The sequence can be chosen so that $ \mathscr{F}_{n+1}<\mathscr{F}_n$. Let $\mathscr{F}'>\mathscr{F}$ be such that for every $n$, $\mathscr{F}'\vert_{K_n\backslash \mathrm{Int}  K_{n-1}}< \mathscr{F}_n\vert_{K_n\backslash \mathrm{Int} K_{n-1}}$.
 Let $p,q\in M$, then there is some $m$ such that $p,q\in \mathrm{Int} K_m$.  By $\mathscr{F}'<\mathscr{F}_{m+1}<\mathscr{F}_{m}$ on $M\backslash \mathrm{Int} K_m$ and the Result 2 we have $d'(p,q)<+\infty$. By the arbitrariness of $p$ and $q$, $d'$ is finite. Taking $\bar{\mathscr{F}}$ such that $ \mathscr{F}<\bar{\mathscr{F}}< \mathscr{F}'$ gives a finite $\bar{D}$.

For the last statement, for every $\mathscr{F}'>\mathscr{F}$, $D'\vert_{K\times K}$ is an upper semi-continuous finite function over a compact set. Its subgraph $S'$ is closed and hence compact and contains the compact set $S$, the subgraph of $D\vert_{K\times K}$. Notice that $\cap_{\mathscr{F}'>\mathscr{F}} S'=S$ because $D=\inf_{\mathscr{F}'>\mathscr{F}} D'$. Consider a compact neighborhood $E$ of the graph of $(D+\epsilon)\vert_{K\times K}$ which does not intersect $S$. Then $\cap_{\mathscr{F}'>\mathscr{F}} (S'\cap E)=\emptyset$, thus $\{(S'\cap E)^C\}$ form an open covering of $E$, thus there is a finite covering $\{(S_i\cap E)^C\}$ and taking $\mathscr{F}'$ so that  $\mathscr{F}<\mathscr{F}'<\mathscr{F}_i$ for every $i$, $S'\cap E=\emptyset$ which implies $D'\vert_{K\times K}<(D+\epsilon)\vert_{K\times K}$. \ \ \,
\end{proof}

\begin{theorem} \label{maa} (stable representatives in the stably causal conformal class) \\
Let $(M,\mathscr{F})$ be a stably causal closed Lorentz-Finsler space, then there is a smooth function $\alpha\colon M\to \mathbb{R}_+$ such that $(M,\alpha \mathscr{F})$ is stable.
\end{theorem}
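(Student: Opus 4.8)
The plan is to observe that the rescaling $\mathscr{F}\mapsto\alpha\mathscr{F}$ by a smooth positive function leaves the cone structure $C$ on $M$ untouched, and that $T(p,r)=\tau(p)$ is a temporal function for $C^\times_{\alpha\mathscr{F}}$ exactly when $\dd\tau$ is positive on $C$; hence the resulting space stays stably causal (Th.\ \ref{nio}) and the whole content is to choose $\alpha$ making the stable distance $D_{\alpha\mathscr{F}}$ finite. Since $D_{\alpha\mathscr{F}}(p,q)=\inf_{\mathscr{G}>\alpha\mathscr{F}}d_{\mathscr{G}}(p,q)$ is an infimum over locally Lipschitz proper Lorentz-Finsler competitors, it suffices to exhibit a \emph{single} such competitor $\mathscr{G}>\alpha\mathscr{F}$ whose ordinary Lorentz-Finsler distance $d_{\mathscr{G}}$ is finite at every pair; then $D_{\alpha\mathscr{F}}\le d_{\mathscr{G}}<\infty$ everywhere and $(M,\alpha\mathscr{F})$ is stable.

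First I would assemble the ingredients. By stable causality and Th.\ \ref{nio} there is a smooth temporal function $\tau$ on $(M,C)$, i.e.\ $\dd\tau>0$ on $C$; arguing as in the proof that $C^1$ stably locally anti-Lipschitz functions are temporal (using Prop.\ \ref{ohg} and upper semi-continuity) I can pick a locally Lipschitz proper cone structure $\bar C$ with $C<\bar C$ and $\dd\tau$ still positive on all of $\bar C$. For each $p$ the slice $\{y\in C_p:\dd\tau(y)=1\}$ is compact (sharpness of $C_p$ together with $\dd\tau>0$ on $C_p$ forbids escape to infinity), so the upper semi-continuous $\mathscr{F}$ is bounded there and $K(p):=\sup\{\mathscr{F}(y):y\in C_p,\ \dd\tau(y)=1\}<\infty$, whence $\mathscr{F}(y)\le K(p)\,\dd\tau(y)$ on $C_p$ by homogeneity. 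A routine compactness argument (sharpness of the cones, positivity of $\dd\tau$, upper semi-continuity of $C$ and $\mathscr{F}$) shows $K$ is locally bounded above, so it can be dominated by a smooth positive $\tilde K$; I then set $\alpha:=1/(4\tilde K)$, giving $\alpha\mathscr{F}(y)\le\tfrac14\dd\tau(y)$ for every $y\in C$.

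Next I would take $\mathscr{G}:=\tfrac12\,\dd\tau$ regarded as a Lorentz-Finsler function on $\bar C$. Being linear it is concave and positively homogeneous, smooth, and strictly positive on $\bar C$, so by Prop.\ \ref{hll} the cone $C^\times_{\mathscr{G}}=\{(y,z):y\in\bar C_p,\ |z|\le\tfrac12\dd\tau(y)\}$ defines a locally Lipschitz proper Lorentz-Finsler space $(M,\mathscr{G})$ (the definition does not demand $\mathscr{G}(\p C)=0$). Since $C<\bar C$ and $\alpha\mathscr{F}(y)\le\tfrac14\dd\tau(y)<\tfrac12\dd\tau(y)$ strictly on $C$ (even where $\mathscr{F}=0$, as there $\dd\tau(y)>0$), a continuity check gives $C^\times_{\alpha\mathscr{F}}<C^\times_{\mathscr{G}}$, that is $\mathscr{G}>\alpha\mathscr{F}$. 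Finally, for any $\bar C$-causal curve $x$ from $p$ to $q$ the fundamental theorem of calculus yields $\int\mathscr{G}(\dot x)\,\dd t=\tfrac12\int\dd\tau(\dot x)\,\dd t=\tfrac12\big(\tau(q)-\tau(p)\big)$; hence $d_{\mathscr{G}}(p,q)=\tfrac12(\tau(q)-\tau(p))$ whenever $(p,q)\in\bar J_{\bar C}$ and $0$ otherwise, which is finite for all $p,q$. Therefore $D_{\alpha\mathscr{F}}\le d_{\mathscr{G}}<\infty$, and $(M,\alpha\mathscr{F})$ is stable.

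The conceptual core is essentially free once the idea of using $\tfrac12\dd\tau$ as the dominating Finsler function is in place: a Finsler function whose length telescopes to a multiple of $\tau(q)-\tau(p)$ automatically has finite distance, so no length estimate over far-reaching curves is ever needed. The two points requiring genuine (though routine) care, and the main potential obstacle, are the compactness and local boundedness of $K$, and the verification that the slightly enlarged product cone $C^\times_{\mathscr{G}}$ genuinely contains $C^\times_{\alpha\mathscr{F}}$ in its interior in the slit tangent bundle of $M^\times$; for the latter the strict inequality $\alpha\mathscr{F}<\tfrac12\dd\tau$ together with $C<\bar C$ provides exactly the room demanded by upper semi-continuity.
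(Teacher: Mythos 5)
Your argument is internally coherent, but it is circular within this paper's logical architecture, and the circularity sits exactly at the step you treat as ``assembling the ingredients.'' You invoke Th.~\ref{nio} to produce a smooth temporal function $\tau$ on the stably causal $(M,C)$, and everything afterwards (the bound $\mathscr{F}\le K\,\dd\tau$ on $C$, the dominating Finsler function $\mathscr{G}=\tfrac12\dd\tau$ with telescoping length) hinges on having this $C^1$ function with $\dd\tau>0$ on $C$. But in the paper the implication (i)~$\Rightarrow$~(vi) of Th.~\ref{nio} is proved via Th.~\ref{vkg}, whose proof reads: set $\mathscr{F}=0$, then by Th.~\ref{maa} $(M,\mathscr{F})$ is stable, and conclude by Th.~\ref{vkf}. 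That is, Th.~\ref{maa} --- the very statement you are proving --- applied with $\mathscr{F}=0$ is precisely what makes smooth temporal functions available on stably causal closed cone structures: finiteness of the stable distance is the hypothesis feeding the product trick (Th.~\ref{mih}) and the smoothing theorem (Th.~\ref{moz}). At this stage of the paper, stable causality by itself only yields \emph{continuous} time functions (Hawking's averaging, Th.~\ref{has}); nothing with a differential is available, and the paper explicitly notes that the usual anti-Lipschitz route to smoothing Hawking's function is blocked in this regularity. So your proof presupposes the main consequence that Th.~\ref{maa} is designed to unlock. (One could rescue it by importing the Fathi--Siconolfi or Bernard--Suhr existence results for temporal functions, which the paper cites as prior external work, but that defeats the self-contained derivation of which Th.~\ref{maa} is a step.)

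For contrast, the paper's own proof uses no time or temporal function at all: it picks a stably causal locally Lipschitz proper $\hat{\mathscr{F}}>\mathscr{F}$ (Prop.~\ref{cso}, Th.~\ref{mwh}), covers each compact shell $K_m\setminus \mathrm{Int}\,K_{m-1}$ of a Riemannian ball exhaustion by finitely many, say $N_m$, $\hat C$-causally convex non-imprisoning neighborhoods (Prop.~\ref{iiu}, Th.~\ref{dao}), bounds by $L_m$ the $\hat{\mathscr{F}}$-length of any causal segment contained in the $m$-th shell, observes that causal convexity forces any continuous $\hat C$-causal curve to meet that shell in at most $N_m$ segments, and then chooses $\alpha<1/(N_mL_m2^m)$ on the shell, so that every continuous $\hat C$-causal curve has $\alpha\hat{\mathscr{F}}$-length at most $2$; hence $D_{\alpha\mathscr{F}}\le d_{\alpha\hat{\mathscr{F}}}\le 2<\infty$. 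Your telescoping idea with $\mathscr{G}=\tfrac12\dd\tau$ is genuinely elegant and would stand as a proof of the weaker statement ``stably causal plus existence of a temporal function implies existence of such an $\alpha$,'' but to prove Th.~\ref{maa} itself you must replace the appeal to Th.~\ref{nio} by an argument extracting finiteness of the stable distance directly from stable causality, which is what the shell-counting accomplishes. (A secondary, fixable point: Prop.~\ref{hll} as stated only certifies $\mathscr{G}$ as a $C^0$ proper Lorentz-Finsler space, not a locally Lipschitz one as the definition of $D$ requires of competitors; either verify the Lipschitzness of $C^\times_{\mathscr{G}}$ directly, or interpolate a locally Lipschitz $\mathscr{F}'$ with $\alpha\mathscr{F}<\mathscr{F}'<\mathscr{G}$ via Prop.~\ref{cso} and use the monotonicity $d_{\mathscr{F}'}\le d_{\mathscr{G}}$.)
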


Notice that given a stably causal closed cone structure, one can take $\mathscr{F}=0$ to get a stably causal closed Lorentz-Finsler space which will also be a stable  closed Lorentz-Finsler space.

\begin{proof}
By assumption there is $\hat C>C$ stably causal and hence strongly causal (Th.\ \ref{mwh}), moreover by Prop.\ \ref{cso} we  can actually find $(M,\hat{\mathscr{F}})$, $\hat{\mathscr{F}}>{\mathscr{F}}$, such that  $(M,\hat C)$ is a stably causal locally Lipschitz proper cone structure. Regard $M$ as the union of a countable number of compact shells  $K_n \backslash \mathrm{Int} K_{m-1}$, with $K_m$ a closed ball of radius $m$ and center $o\in M$ with respect to a complete Riemannian metric. The $m$-th shell is covered by a finite number $N_m$ of $\hat C$-causally convex non-imprisoning neighborhoods (Th.\ \ref{dao}), and any continuous $\hat C$-causal curve connecting the boundary of the $m$-th shell to itself and entirely contained in the $m$-th shell
 has bounded $\hat{\mathscr{F}}$-length $L_m$ since it is bound to escape every neighborhood of the covering (notice that there is a Riemannian metric $h$ such that $\hat{\mathscr{F}}(y)\le \Vert y\Vert_h$ for every $y\in \hat C$, and that the $h$-arc length is bounded on every neighborhood cf.\ Prop.\ \ref{iiu}). We can find $\alpha$ so that it is smaller than  $\frac{1}{N_m L_m 2^m}$ on the $m$-th shell. Every continuous $\hat C$-causal curve intersects the $m$-th shell in at most $N_m$ segments since each of them intersect at least one $\hat C$-causally convex neighborhood of the covering. From here it follows that the $\alpha \hat{\mathscr{F}}$-length of any continuous $\hat C$-causal curve is bounded by $2$, thus since $\alpha\hat{\mathscr{F}}>\alpha{\mathscr{F}}$, $(M,\alpha \mathscr{F})$ is stable.
\end{proof}

\begin{theorem} \label{mab} (globally hyperbolic spacetimes are stable)\\
Let $(M,\mathscr{F})$ be a globally hyperbolic closed Lorentz-Finsler space, then it is stable regardless the choice of $\mathscr{F}$.
\end{theorem}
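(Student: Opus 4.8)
The plan is to show that for a globally hyperbolic closed Lorentz-Finsler space, the stable distance $D$ coincides with the Lorentz-Finsler distance $d$ and that $d$ is finite everywhere, which together yield stability once stable causality is observed. I would first recall that global hyperbolicity implies stable causality: by Theorem \ref{mom} we have $J_S=J$, and since a globally hyperbolic closed cone structure is stably causal (as noted after Theorem \ref{mom}), the Lorentz-Finsler space $(M,\mathscr{F})$ is automatically a \emph{stably causal} closed Lorentz-Finsler space. So the only substantive requirement in the definition of a stable closed Lorentz-Finsler space is that $D<\infty$.

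The heart of the argument is then to invoke the results already proved about the stable distance. By Theorem \ref{upq}(g), under global hyperbolicity we have $D=d$. Hence it suffices to prove $d<\infty$ everywhere. But this is exactly the content of Proposition \ref{ppp}: if $(M,\mathscr{F})$ is a globally hyperbolic closed Lorentz-Finsler space, then $(M^\times,C^\times)$ is a globally hyperbolic closed cone structure and $d$ is finite and bounded on compact subsets of $M\times M$. Therefore $D=d<\infty$, and combining this with the stable causality observed above, $(M,\mathscr{F})$ satisfies both conditions in the definition of a stable closed Lorentz-Finsler space.

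I would write the proof essentially as a two-line assembly of these earlier results. The main point to check carefully is that no extra hypothesis on $\mathscr{F}$ is being smuggled in: Proposition \ref{ppp} and the stability of global hyperbolicity (Theorem \ref{mom}) are stated for arbitrary closed Lorentz-Finsler structures, with $\mathscr{F}$ only concave and upper semi-continuous, so the finiteness of $d$ (and the equality $D=d$) hold \emph{regardless of the choice of} $\mathscr{F}$, which is precisely the claim of the theorem. The emphasis in the statement on ``regardless the choice of $\mathscr{F}$'' is explained by the fact that the global hyperbolicity hypothesis constrains only the underlying cone structure $(M,C)$, while the finiteness of the distance is guaranteed for \emph{any} admissible length functional built over that cone structure.

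The anticipated obstacle is essentially cosmetic rather than mathematical: since this theorem sits immediately after Theorem \ref{con} and all the needed machinery (Theorem \ref{upq}, Proposition \ref{ppp}, Theorem \ref{mom}) is already in place, there is no genuine difficulty. If I wanted a self-contained argument avoiding the explicit $D=d$ statement, I could instead argue directly that $d\le D$ always holds by Theorem \ref{upq}(f), and then bound $D$ from above using the directedness of $\{\mathscr{F}':\mathscr{F}'>\mathscr{F}\}$ together with the boundedness of $d'$ on compacta for some globally hyperbolic $C'>C$ supplied by Theorem \ref{mom}; but this is strictly more laborious than citing Theorem \ref{upq}(g), so the clean route is the preferred one.

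\begin{proof}
Since $(M,\mathscr{F})$ is globally hyperbolic, $(M,C)$ is a globally hyperbolic closed cone structure, hence stably causal by the stability of global hyperbolicity (Th.\ \ref{mom} and the remark following it). Thus $(M,\mathscr{F})$ is a stably causal closed Lorentz-Finsler space. By Th.\ \ref{upq}(g) we have $D=d$, and by Prop.\ \ref{ppp} the Lorentz-Finsler distance $d$ is finite. Therefore $D<\infty$, and $(M,\mathscr{F})$ is a stable closed Lorentz-Finsler space. As these conclusions use only the global hyperbolicity of the underlying cone structure, they hold regardless of the choice of $\mathscr{F}$.
\end{proof}
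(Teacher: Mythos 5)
Your proof is correct and non-circular: Theorem \ref{upq}(g) and Proposition \ref{ppp} are both established before Theorem \ref{mab} and neither depends on it, and your reading of the hypotheses (global hyperbolicity and stable causality refer to the underlying cone structure $(M,C)$, so nothing is smuggled in about $\mathscr{F}$) is accurate. However, your route differs from the paper's in its key lemma. The paper does not invoke Th.\ \ref{upq}(g) at all: it uses Th.\ \ref{mom} to produce a globally hyperbolic locally Lipschitz proper cone structure $C'>C$, equips it with a Lorentz-Finsler function $\mathscr{F}'>\mathscr{F}$, and then applies the Avez-Seifert theorem (Th.\ \ref{sww}) to $(M,\mathscr{F}')$ to conclude that $d'$ is attained, hence finite, whence $D\le d'<+\infty$. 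You instead invoke the full identity $D=d$ from Th.\ \ref{upq}(g) and then quote finiteness of $d$ from Prop.\ \ref{ppp}. Both arguments ultimately rest on the same ingredient---finiteness of the Lorentz-Finsler distance under global hyperbolicity, i.e.\ Prop.\ \ref{ppp}, which the paper reaches indirectly through Th.\ \ref{sww}---but the paper's version is more economical in that it needs only an \emph{upper bound} on $D$ coming from one auxiliary wider structure, whereas you use the strictly stronger statement $D=d$; conversely, your version avoids constructing the auxiliary $\mathscr{F}'$ altogether. Amusingly, the ``more laborious'' fallback you sketch at the end (bounding $D$ from above via $d'$ for a globally hyperbolic $C'>C$ supplied by Th.\ \ref{mom}) is essentially the paper's actual proof.
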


\begin{proof}
By Th.\ \ref{mom}  there is a globally hyperbolic locally Lipschitz proper cone structure  $(M,C')$,   $C'>C$. So we can find a closed Lorentz-Finsler space  $(M,\mathscr{F}')$, $\mathscr{F}'>\mathscr{F}$,  such that $(M,C')$ is a globally hyperbolic locally Lipschitz proper cone structure, $C'>C$. By the
improved Avez-Seifert theorem (Th.\ \ref{sww})
the distance $d'(p,q)$ is attained by the $\mathscr{F}'$-length of some continuous $C'$-causal curve connecting $p$ to $q$. In particular, it is finite, which proves that $D<+\infty$.
\end{proof}

\subsection{Singularity theorems} \label{sin}
In this section we show that the causal content of some singularity theorems is preserved in the upper semi-continuous regularity framework. In the classical $C^2$ theory a singularity is just an incomplete geodesic. In the present context we do not have a notion of affine parameter at our disposal, however, the non-causal ingredients in the classical singularity theorems, such as affine parameter, energy and genericity conditions or divergence conditions,  might be seen as means to produce sets on the manifold with specific causality properties. Each singularity theorem has a core which  relates such causality concepts, and it is this type of result which is  preserved. Our generalization is therefore of a different nature with respect to that found in \cite{kunzinger15,kunzinger15b,graf17} where the authors assume the stronger $C^{1,1}$ regularity but  make sense of some other analytical objects entering the classical theorems. Also, we shall
 not recall the classical versions of the singularity theorems, nor  shall we explain in detail why our theorems provide the causality content of such statements. The reader might easily verify the correspondence by checking some classical references \cite{hawking73}.

We recall that a lightlike line is an inextendible causal curve for which no two points are $\mathring{J}$-related. For locally Lipschitz proper cone structures the condition is equivalent to achronality.

\begin{lemma} \label{rom}
Let $(M,C)$ be a causal closed cone structure. If there are no lightlike lines
then $(M,C)$ is strongly causal.
\end{lemma}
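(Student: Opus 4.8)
The statement to prove is the contrapositive-friendly claim that, for a causal closed cone structure, the failure of strong causality forces the existence of a lightlike line. So the strategy is to assume $(M,C)$ is causal but \emph{not} strongly causal, and to manufacture an inextendible $\mathring{J}$--arelated continuous causal curve, i.e.\ a lightlike line. The two principal tools are the limit curve theorem (Th.\ \ref{main}) and the structure theorem on imprisoned curves (Th.\ \ref{lof}), together with the local non-imprisoning neighborhoods of Prop.\ \ref{iiu}.

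\textbf{First steps.} I would begin exactly as in the proof of Th.\ \ref{mwh}: if strong causality fails at some $x$, then taking the non-imprisoning neighborhood $U\ni x$ of Prop.\ \ref{iiu} and a coordinate ball $B$ with $\bar B\subset U$, there is a sequence of continuous causal curves $\sigma_n$ with endpoints $x_n\to x$, $z_n\to x$ that leave $\bar B$; letting $c_n\in\p B$ be the first exit point and passing to a subsequence $c_n\to c\in\p B$, one gets $(x,c)\in\bar J$ and $(c,x)\in\bar J$ with $c\neq x$ (since $c\in\p B$). The new point compared to Th.\ \ref{mwh} is that I cannot invoke stable causality to close $\bar J$ into $J_{C'}$; instead I must extract an actual continuous causal curve. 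Applying the limit curve theorem to the $\sigma_n$ (whose endpoints converge but which do not contract to a point, as they reach $\p B$), I expect to land in case (ii): there is no connecting curve from $x$ to $c$ inside a compact set, so one obtains a future inextendible continuous causal curve $x^x$ starting from $x$ and a past inextendible one ending at $c$, and symmetrically from the reversed arrangement. The heart of the matter is that these inextendible curves must be imprisoned (they cannot escape to infinity because the relevant points keep returning near $x$), so Th.\ \ref{lof} applies.

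\textbf{Producing the line.} Once an imprisoned future inextendible continuous causal curve is in hand, Th.\ \ref{lof} yields a minimal biviable closed set $B$ and an inextendible continuous causal curve $\alpha\subset B$ with $\bar\alpha=\Omega_f(\alpha)=\Omega_p(\alpha)=B$, a curve accumulating on itself everywhere. I would then argue that $\alpha$ (or a suitable inextendible causal curve built from $B$) is $\mathring{J}$--arelated, hence a lightlike line. The key obstruction to overcome here is precisely the $\mathring{J}$--arelation: I must rule out the existence of two points $p,q\in\alpha$ with $(p,q)\in\mathring{J}$. The natural route is by contradiction: if $q\in\mathrm{Int}\,J^+(p)$ for $p,q\in\alpha$, then because $\alpha$ accumulates on itself, one can find a point of $\alpha$ near $p$ that is reached back from near $q$, producing a closed continuous causal curve through the interior of $J^+$ and thereby violating causality. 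Making this precise is the delicate part, since $\mathring{J}$ is an open relation and one wants to exploit $\Omega_f(\alpha)=\Omega_p(\alpha)=\bar\alpha$ to realize a genuine closed causal loop from the openness.

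\textbf{Expected main obstacle.} I anticipate the real work is in the last paragraph: converting the self-accumulation property $\bar\alpha=\Omega_f(\alpha)=\Omega_p(\alpha)$ plus a hypothetical $\mathring{J}$-relation into an honest closed continuous causal curve contradicting causality. The openness of $\mathring{J}$ should give, for $q\in\mathrm{Int}\,J^+(p)$, a whole neighborhood of $q$ in $\mathrm{Int}\,J^+$ of a neighborhood of $p$; since $\alpha$ returns arbitrarily close to $p$ after passing near $q$ (using $\Omega_p(\alpha)\ni p$ along the forward direction from $q$), one stitches a forward causal segment from $p$ to near $q$ with a returning segment to $p$ itself, closing the loop. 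I would need to handle the bookkeeping of which accumulation set ($\Omega_f$ vs.\ $\Omega_p$) supplies the return, but the causal/limit-curve machinery already assembled should make this routine once the geometric picture is fixed. If the direct $\mathring{J}$ argument proves awkward, an alternative is to show that no two points of $B$ can be $\mathring{J}$-related at all (else $B$ would contain a closed causal curve, contradicting causality via $\overline{J^+(q)}\cap J^-(q)=J^+(q)\cap J^-(q)$ as in Prop.\ \ref{sos}), which immediately makes every inextendible causal curve in $B$ a lightlike line.
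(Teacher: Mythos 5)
Your setup is fine and matches the paper's: failure of strong causality at $x$, the non-imprisoning neighborhood of Prop.\ \ref{iiu}, the sequence $\sigma_n$ with both endpoints converging to $x$, and the limit curve theorem in case (ii) producing a future inextendible continuous causal curve $\sigma^x$ starting at $x$ and a past inextendible one $\sigma^z$ ending at $x$. The genuine gap is the sentence on which your whole construction rests: ``these inextendible curves must be imprisoned (they cannot escape to infinity because the relevant points keep returning near $x$)''. This is unjustified and in general false. What returns near $x$ are the \emph{endpoints of the approximating curves} $\sigma_n$; the limit curves $\sigma^x$ and $\sigma^z$ are limits of longer and longer initial (resp.\ final) segments of the $\sigma_n$, and nothing forces them to stay in, or return to, any compact set --- inextendibility only gives infinite $h$-arc length (Cor.\ \ref{dox}), not confinement. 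Without imprisonment, $\Omega_f$ may well be empty, Th.\ \ref{lof} does not apply, and the minimal biviable set $B$ around which your endgame is organized never comes into existence. Your fallback argument suffers from the same defect, and in addition invokes Prop.\ \ref{sos}, a statement about proper cone structures with compact causal diamonds, neither of which is available here. So the central mechanism of the proposal fails; this is not a repairable bookkeeping issue but a wrong anchor for the whole argument.

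The paper needs no compactness at all. It concatenates $\sigma^z$ and $\sigma^x$ into a single inextendible continuous causal curve through $x$ and applies the no-lightlike-lines hypothesis \emph{directly to this curve}, obtaining points $\tilde x\in\sigma^x$, $\tilde z\in\sigma^z$ with $(\tilde x,\tilde z)\in\mathring{J}$. The ingredient your proposal never invokes is the final clause of the limit curve theorem \ref{main}: the relation in $\bar J$ holding between every point of one limit branch and every point of the other. Combining the $\mathring{J}$-pair (an open relation, hence stable under small displacements of both entries) with this $\bar J$-relation and with the causal relation $\tilde z\le x\le\tilde x$ furnished by the concatenation itself immediately yields a closed continuous causal curve, contradicting causality. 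In other words, the role that recurrence plays in your (unavailable) imprisoned picture is played in the paper by the $\bar J$-relation that the limit curve theorem supplies for free between the two branches: your instinct to close a loop using the openness of $\mathring{J}$ is the right one, but it must be anchored to that relation rather than to imprisonment.
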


\begin{proof}
If $(M,C)$ is not strongly causal at $x$ then  there is  a non-imprisoning neighborhood $U \ni x $ as in Prop.\ \ref{iiu} and a sequence of
continuous causal curves $\sigma_n$ with endpoints $x_n,z_n$, with $x_n\to
x$, $z_n \to x$, not entirely contained in $U$. Let $B$, $\bar B\subset U$ be a coordinate ball of $x$. By the limit curve theorem there  are a future inextendible continuous causal curve $\sigma^x$ starting from $x$ and a past inextendible continuous causal curve $\sigma^z$ ending at $x$ such that for every $\tilde x\in \sigma^x$ and $\tilde z\in \sigma^z$, $(\tilde z,\tilde x) \in \bar J$. But $\sigma^x\circ \sigma^z$ is not a lightlike line so $\tilde x$ and $\tilde z$ can be chosen so that $(\tilde x,\tilde z) \in \mathring{J}$, thus there is a closed causal curve, a contradiction.
\end{proof}


Let us give a version of the theorem  we proved  in \cite{minguzzi07d}.

\begin{theorem} \label{mbx}
Let $(M,C)$ be a locally Lipschitz proper cone structure. If there are no lightlike lines
then $(M,C)$ is causally easy, thus
there is a time function.
\end{theorem}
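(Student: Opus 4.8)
We must show that a locally Lipschitz proper cone structure $(M,C)$ without lightlike lines is causally easy, and hence (by the causal ladder and the equivalences of Th.~\ref{nio}) admits a time function. Recall that causally easy means strongly causal together with transitivity of $\bar J$. The plan is to establish these two ingredients and then read off the existence of a time function from the ladder.

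**The approach.** Strong causality is already handled: it is exactly Lemma~\ref{rom}, which says that a causal closed cone structure with no lightlike lines is strongly causal (and a proper cone structure is in particular closed). So the substantive work is to prove that $\bar J$ is transitive under the no-lightlike-lines hypothesis. Once both hold, $(M,C)$ is causally easy by definition, and then the implication ``causally easy $\Rightarrow$ stably causal'' from the causal ladder (Th.~\ref{cau}) gives stable causality, whence a time function exists by the equivalence (i)$\Leftrightarrow$(v) of Th.~\ref{nio}.

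**The transitivity of $\bar J$ (the main step).** First I would establish transitivity of $\bar J$ as follows. Suppose $(p,q)\in\bar J$ and $(q,r)\in\bar J$ but $(p,r)\notin\bar J$. Taking sequences realizing each closure relation and applying the limit curve theorem (Th.~\ref{main}) separately to each, I would try to produce a continuous causal curve, or a pair of inextendible limit curves, passing through $q$ and witnessing $(p',r')\in\bar J$ for points $p'$ arbitrarily near $p$ and $r'$ arbitrarily near $r$. The obstruction to concluding $(p,r)\in\bar J$ directly is precisely that the limit curve through $q$ may be inextendible (case (ii) of Th.~\ref{main}) rather than a connecting segment. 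The key is that the no-lightlike-lines hypothesis must be used to rule out, or exploit, such inextendible limit curves: an inextendible limit curve that is $\mathring{J}$-arelated would be a lightlike line, contradicting the hypothesis, so along any inextendible limit curve some two points are $\mathring{J}$-related. Combining this with the openness of $\mathring{J}=I$ (Th.~\ref{soa}) and the composition rule $J\circ I\cup I\circ J\subset I$, I expect to force the desired chronological relations to propagate and yield $(p,r)\in\bar J$. I anticipate that this is where the argument is most delicate, since one must carefully track which of the two limit curves (future-inextendible from $p'$, past-inextendible to $r'$) appears and invoke the absence of lightlike lines to close the gap.

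**Assembling the conclusion.** With strong causality from Lemma~\ref{rom} and transitivity of $\bar J$ just established, $(M,C)$ is causally easy. The causal ladder (Th.~\ref{cau}) then gives the chain causally easy $\Rightarrow$ stably causal, and Th.~\ref{nio} converts stable causality into the existence of a time function. The hard part is genuinely the transitivity argument; strong causality and the final extraction of a time function are either already available as cited lemmas or follow immediately from the ladder and the $K$-causality equivalence.
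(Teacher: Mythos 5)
Your overall skeleton coincides with the paper's: strong causality is exactly Lemma \ref{rom}, the real content is the transitivity of $\bar J$, and the time function is then extracted from Theorem \ref{cau} and Theorem \ref{nio}. However, the transitivity step, which you yourself flag as the delicate part, is not actually carried out, and the one concrete indication you give of how to use the hypothesis fails as stated. A lightlike line is by definition an \emph{inextendible} $\mathring{J}$-arelated causal curve, while every limit curve produced by the two applications of Theorem \ref{main} is inextendible in one direction only: the first application (to curves with endpoints $p_k\to p$, $q_k\to q$) yields in case (ii) a future-inextendible curve starting at $p$ together with a past-inextendible curve $\sigma^q$ ending at $q$; the second yields a future-inextendible curve $\gamma^q$ starting at $q$ together with a past-inextendible curve ending at $r$. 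None of these four curves is inextendible, so the clause ``along any inextendible limit curve some two points are $\mathring{J}$-related'' has nothing to apply to. Worse, your parenthetical singles out exactly the wrong pair, namely the future-inextendible curve from $p$ and the past-inextendible curve ending at $r$: these do not share an endpoint, so they cannot be assembled into anything the hypothesis could see.

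The missing idea, which is the heart of the paper's proof, is concatenation at $q$. One first disposes of the easy cases: if $(p,q)\in J$ (or $(q,r)\in J$), then taking $p'\ll p$ and using openness of $I$ together with $I\circ J\cup J\circ I\subset I$ gives $p'\ll q$, hence $p'\ll r_s$ for the sequence approximating $(q,r)\in\bar J$, whence $(p,r)\in\bar J$. In the remaining case $(p,q)\notin J$ and $(q,r)\notin J$, both applications of Theorem \ref{main} must fall into case (ii), so one has both $\sigma^q$ (past inextendible, ending at $q$, with $(p,\tilde q)\in\bar J$ for every $\tilde q\in\sigma^q$) and $\gamma^q$ (future inextendible, starting at $q$, with $(\check q,r)\in\bar J$ for every $\check q\in\gamma^q$). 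Joining them at their common endpoint $q$ produces a single inextendible continuous causal curve $\eta$, and only now does the hypothesis bite: $\eta$ is not a lightlike line, hence carries two $\mathring{J}$-related points, i.e.\ $I$-related points by Theorem \ref{soa}, and openness of $I$ plus the composition rules propagate this to $(p,r)\in\bar J$. With this concatenation inserted (and the easy cases treated as above), your outline becomes the paper's proof; without it, the argument cannot close.
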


\begin{proof}
Let $(p,q) \in \bar J$ and $(q,r)\in \bar J$. The are continuous causal curves $\sigma_k$ with endpoints $p_k\to p$, $q_k\to q$, and continuous causal curves $\gamma_s$ with endpoints $q'_s\to q$, $r_s\to r$. By the limit curve theorem either there is a limit continuous causal curve $\sigma$ connecting $p$ to $q$ or a past inextendible continuous causal curve $\sigma^q$ ending at $q$, such that for every $\tilde q\in \sigma^q$, $(p,\tilde q)\in \bar J$. Similarly, there is a  limit continuous causal curve $\gamma$ connecting $q$ to $r$ or a future inextendible continuous causal curve $\gamma^q$ starting from $q$, such that for every $\check q \in \gamma^q$, $(\check q,r)\in \bar J$. If $(p,q) \in  J$, taking $p'\ll p$, we have $p'\ll q$ due to
$I\circ J\cup J\circ I\subset I$ and hence $p'\ll r_n$, which implies $(p,r)\in \bar J$. Similarly, the assumption $(q,r)\in J$ gives $(p,r)\in \bar J$. It remains to consider the case $(p,q) \notin  J$ and $(q,r)\notin  J$.
Let $\eta$ be the inextendible causal curve obtained joining $\sigma^q$ and $\gamma^q$, as there are two points in $\eta$ such that $(\tilde q,\check q)\in I$,  we have that $(p,r)\in \bar J$. Thus $\bar{J}$ is transitive. By the lemma $(M,C)$ is strongly causal, thus $(M,C)$ is causally easy.
\end{proof}

The next stability result is interesting though it will not be used.

\begin{theorem}
Let $(M,C)$ be a closed cone structure which  does not have lightlike lines. There is a locally Lipschitz proper cone structure $\tilde C>C$, such that for every locally Lipschitz proper cone structure $C'$, $C<C'<\tilde C$, $(M,C')$  does not have lightlike lines.
\end{theorem}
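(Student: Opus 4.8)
The plan is to argue by contradiction, the engine being the limit curve theorem. Suppose the conclusion fails: for every locally Lipschitz proper cone structure $\tilde C>C$ there is a locally Lipschitz proper $C'$ with $C<C'<\tilde C$ carrying a lightlike line. Fix a complete Riemannian metric $h$ and a reference decreasing sequence $R_k\downarrow C$ of locally Lipschitz proper cone structures as in Th.\ \ref{sqd}. I would build inductively a sequence $\{C_k\}$ of locally Lipschitz proper cone structures with $C<C_{k+1}<C_k$, $\cap_kC_k=C$, each $(M,C_k)$ admitting a lightlike line $\gamma_k$. Having $C_k$, the pointwise intersection $C_k\cap R_k$ is a closed proper cone structure with $C<C_k\cap R_k$, so by Th.\ \ref{ddo} there is a locally Lipschitz proper $\tilde C$ with $C<\tilde C<C_k\cap R_k$; applying the negated hypothesis to $\tilde C$ yields the desired $C_{k+1}$. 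Since $C_{k+1}<R_k$ for all $k$, $\cap_kC_k=C$. Each $\gamma_k$ is an inextendible continuous $C_k$-causal curve which is $\mathring{J}_k$-arelated, and because $C_k$ is locally Lipschitz and proper this is the same as being $C_k$-achronal.

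Parametrize each $\gamma_k$ by $h$-arc length. Granting for the moment that the $\gamma_k$ \emph{accumulate}, i.e.\ that there are $p_k\in\gamma_k$ with $p_k\to p$ for some $p\in M$, the Limit Curve Lemma provides an inextendible continuous $C$-causal curve $\gamma$ with $\gamma(0)=p$ to which a subsequence (still denoted $\gamma_k$) converges $h$-uniformly on compact subsets. I claim $\gamma$ is a $C$-lightlike line, contradicting the hypothesis. Indeed, suppose two of its points were $\mathring{J}$-related, say $\gamma(s),\gamma(t)$ with $s<t$ and $(\gamma(s),\gamma(t))\in\mathring{J}=\mathrm{Int}\,J$; pick a product neighborhood $(\gamma(s),\gamma(t))\in U\times V\subset J$. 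Since $C\le C_k$ we have $J\subset J_k$, and as $U\times V$ is open and contained in $J_k$ it lies in $\mathrm{Int}\,J_k=\mathring{J}_k$. By $h$-uniform convergence, for large $k$ the points $\gamma_k(s)\in U$ and $\gamma_k(t)\in V$ lie on $\gamma_k$ and satisfy $(\gamma_k(s),\gamma_k(t))\in U\times V\subset\mathring{J}_k$, contradicting the $\mathring{J}_k$-arelation of $\gamma_k$. Hence $\gamma$ is $\mathring{J}$-arelated and inextendible, a $C$-lightlike line.

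The step I expect to be the genuine difficulty is securing the accumulation of $\{\gamma_k\}$: a priori the witnessing lightlike lines could escape uniformly to infinity, so that no fixed compact set meets infinitely many of them and the Limit Curve Lemma does not apply. Note one cannot simply invoke Lemma \ref{rom} to gain strong causality or non-total imprisonment for $(M,C)$, since absence of lightlike lines does not force causality (a recurrent, $\mathring{J}$-self-related inextendible curve is not a lightlike line). To exclude uniform escape I would confine the relevant curves to compact sets by an exhaustion-and-interpolation construction in the spirit of the stability of global hyperbolicity (Th.\ \ref{mom}) and of Th.\ \ref{rem}: take an exhaustion $K_n$ with $K_n\subset\mathrm{Int}\,K_{n+1}$, $\cup_nK_n=M$, and design the $C_k$ so that on each shell $K_{n+1}\setminus\mathrm{Int}\,K_n$ they interpolate between the reference structures in a way controlling how far a $C_k$-achronal curve meeting a fixed $K_N$ can wander, so that the finite intersection property produces a point met cofinally, exactly as in the compactness arguments of Th.\ \ref{nnq} and \ref{mom}. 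Once accumulation is in hand the second paragraph closes the contradiction, and the theorem follows.
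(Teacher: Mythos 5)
Your first two paragraphs are sound and reproduce, essentially verbatim, the paper's treatment of the \emph{accumulation} case: a sequence $C_k\downarrow C$ of locally Lipschitz proper cone structures carrying lightlike lines $\gamma_k$ that meet a fixed compact set yields, via the limit curve theorem, an inextendible continuous $C$-causal limit $\gamma$; absence of $C$-lightlike lines gives $(\gamma(s),\gamma(t))\in\mathring{J}\subset\mathring{J}_k$, and openness of $\mathring{J}_k$ together with uniform convergence contradicts the $\mathring{J}_k$-arelation of $\gamma_k$. (A minor technical slip: you invoke Th.\ \ref{ddo} with upper bound $C_k\cap R_k$, but Th.\ \ref{ddo} requires a $C^0$ upper bound, and a fiberwise intersection of locally Lipschitz cone structures need not be lower semi-continuous; the patch is the local-translation-plus-convex-combination argument used inside the proof of Th.\ \ref{sqd}, where exactly the step ``$C<C''_{i+1}<C'_{i+1}\cap C''_{i}$'' appears.)

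The genuine gap is your third paragraph, and the repair you sketch would not work. The escape scenario is not an obstacle that a cleverer choice of the sequence can remove: the negated hypothesis gives, for each $\tilde C$, merely the existence of \emph{some} narrower $C'$ with \emph{some} lightlike line, with no control whatsoever on where that line lies, and no interpolation of cones on shells can force an inextendible $\mathring{J}'$-arelated curve to enter a prescribed compact set. Confinement arguments of the type in Th.\ \ref{mom} and Th.\ \ref{rem} bound causal curves between points of a given compact set (causal hulls) and rest on global hyperbolicity or a Cauchy temporal function, neither of which is available here---as you yourself note, one cannot even assume causality. The paper's resolution of this case is of a different nature: it does not force accumulation, it turns the escape property against itself. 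Precisely, if for every compact $K$ there is $n(K)$ such that no $C'$ with $C<C'\le C_{n(K)}$ has a lightlike line meeting $K$, one builds $\tilde C$ by gluing the $C_{n(m)}$ across the shells $\bar B(o,m)\setminus B(o,m-1)$ (with $n(m)\ge n(\bar B(o,m))$ increasing) so that $\tilde C\le C_{n(m)}$ outside $\bar B(o,m-1)$. Given $C'<\tilde C$ with a lightlike line $\gamma$, take $m$ minimal with $\gamma\cap\bar B(o,m)\ne\emptyset$; since $\gamma$ avoids $\bar B(o,m-1)$, one may narrow $C'$ inside $\bar B(o,m-1)$ to obtain $\hat C\le C_{n(m)}$ on all of $M$ without affecting $\gamma$: it stays $\hat C$-causal and, because $\mathring{\hat{J}}\subset\mathring{J}'$, still arelated, hence a $\hat C$-lightlike line meeting $\bar B(o,m)$---contradicting the defining property of the escape case with $K=\bar B(o,m)$. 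Thus in the escape case the theorem's conclusion holds outright for this $\tilde C$, and the dichotomy ``accumulation / uniform escape'' closes the proof; this second horn is the idea your proposal is missing.
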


\begin{proof}
Suppose not; there are two cases:  either (i) there is a compact set $K$ and a sequence $C_k$ with the properties of Th.\ \ref{sqd}  such that there is a $C_k$-lightlike line $\sigma_k$  intersecting $K$, or (ii) let  $C_k$ have the properties of Th.\ \ref{sqd},  for every compact set $K$ we can find $n(K)$ such that for all $C'$, $C<C'\le C_{n(K)}$, all $C'$-lightlike lines do not intersect $K$.

In case (i) by the limit curve theorem there is a subsequence, denoted in the same way, which converges to an inextendible continuous causal curve  $\sigma$, thus by the absence of lightlike lines there are $p,q\in \sigma$ such that $(p,q)\in \mathring{J}\subset \mathring{J}_k$. Let $p_k,q_k\in \sigma_k$, $p_k\to p$ and $q_k\to q$. For  sufficiently large $k$ we have $(p_k,q_k) \in \mathring{J}_k$. This is a contradiction because  $\sigma_k$ is a lightlike line.

In case (ii), let  $h$ be a  complete Riemannian metric on $M$, and $o\in M$, and let $n(m)\ge n(\bar B(o,m))$ be an increasing function. Let $\tilde C>C$ be such that for every $m$, $C_{n(m+1)}\le \tilde C\le  C_{n(m)}$ on  $\bar B(o,m) \backslash B(o,m-1)$ in such a way that $\tilde C=C_{n(m)}$ on $\p B(o,m-1)$  (here $\tilde C$ is built using  convex combinations as in Prop.\ \ref{doo}).
 Assume there
is $C'$, $C<C'<\tilde{C}$, such that $(M,C')$ has a lightlike
line $\gamma$. There is a minimum value of $m$ such that $\gamma
\cap \bar B(o,m)\ne \emptyset$. Let $\hat C$, $C<\hat C\le C'$, be coincident with $C'$  outside  $\bar B(o,m-1)$ and such that $\hat C\le C_{n(m)}$ on $\bar B(o,m-1)$.
The curve $\gamma$ is a $\hat C $-lightlike line  since it is $\hat C$-causal and $\hat J\subset J'$. But $\hat C\le C_{n(m)}$ on $M$, thus $\gamma$ cannot be a $\hat C $-lightlike line for it intersects $\bar B(o,m)$, a contradiction.
\end{proof}

%
%
%
%

Let us come to  Penrose's 1965 singularity theorem \cite{penrose65}. It was generalized to the round cone (Lorentzian) $C^{1,1}$ case  \cite{kunzinger15}, and to the non-round (Lorentz-Finsler) $C^2$ case  \cite{minguzzi15}.
We need a definition.

\begin{definition}
A future trapped set is a non-empty set $S$ such that $E^+(S)$ is compact.
\end{definition}

The next result, which does not seem to have been previously noticed, not even in the $C^2$ Lorentzian theory, will be very important as it will allow us to improve the differentiability assumption on the cone structure from `locally Lipschitz and proper' to `upper semi-continuous'.

\begin{theorem} \label{sto} (Stability of compact trapped sets)\\
Let $(M,C)$ be a non-imprisoning closed cone structure. Let $S$ be a compact set such that $\overline{E^+(S)}$ is compact. Then $E^+(S)$ is closed and there is a locally Lipschitz proper  cone structure $\tilde C>C$ such that for every locally Lipschitz proper  cone structure $C<\hat C <\tilde C$,  $\overline{\hat E^+(S)}$ is compact.

If $(M,C)$ is stably causal we can take $\tilde C$ stably causal and also for every $C<\hat C <\tilde C$, we have that  ${\hat E^+(S)}$ is compact.
\end{theorem}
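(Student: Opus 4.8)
\textbf{Proof proposal for Theorem \ref{sto} (Stability of compact trapped sets).}

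The plan is to prove three things in sequence: that $E^+(S)$ is closed, that there exists $\tilde C>C$ locally Lipschitz and proper for which the approximating horismoi stay within a fixed compact set, and finally to upgrade to stable causality. First I would establish closedness of $E^+(S)$. Since $\overline{E^+(S)}$ is compact and $E^+(S)\subset J^+(S)$, I would take a sequence $q_n\in E^+(S)$ with $q_n\to q$; there are continuous causal curves $\sigma_n$ from points $p_n\in S$ (with $S$ compact, $p_n\to p\in S$ up to subsequence) to $q_n$, none of which meets $\mathrm{Int}\,J^+(S)$. Applying the limit curve theorem (Th.\ \ref{main}), using non-imprisonment to rule out case (ii) exactly as in the proof of Th.\ \ref{dao} and Th.\ \ref{nig}, I obtain a limit continuous causal curve from $p$ to $q$ avoiding the open set $\mathrm{Int}\,J^+(S)$, whence $q\in J^+(S)\setminus\mathrm{Int}\,J^+(S)=E^+(S)$. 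So $E^+(S)$ is closed, hence equal to $\overline{E^+(S)}$ and compact.

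For the main stability assertion I would argue by contradiction using the countable approximating sequence of locally Lipschitz proper cone structures $C_k$ from Th.\ \ref{sqd}, with $C<C_{k+1}<C_k$ and $C=\cap_k C_k$. Suppose no such $\tilde C$ exists; then for each $k$ there is a point $q_k\in \overline{\hat E^+_k(S)}$ (for $\hat C=C_k$, say, or for structures arbitrarily close to $C$) escaping every compact set, i.e.\ $q_k\to\infty$ with respect to a complete Riemannian metric $h$. By Cor.\ \ref{chg} (valid since $C_k$ is locally Lipschitz and proper) each $q_k$ is joined to some $p_k\in S$ by a $C_k$-achronal lightlike geodesic $\sigma_k\subset E^+_k(S)$, so $\sigma_k$ avoids $\mathrm{Int}\,J^+_k(S)\supset\mathrm{Int}\,J^+(S)$. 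Since the $q_k$ leave every compact set while $S$ is compact, these $\sigma_k$ have unbounded $h$-arc length; the limit curve theorem then produces a future inextendible continuous $C$-causal curve starting from $p\in S$ that avoids $\mathrm{Int}\,J^+(S)$, hence is contained in $E^+(S)$ by the peripheral argument. This contradicts the compactness of $E^+(S)=\overline{E^+(S)}$ together with non-imprisonment, since an inextendible causal curve cannot be imprisoned in the compact set $E^+(S)$. Thus some $\tilde C>C$ with the stated property exists; shrinking, the same holds for every locally Lipschitz proper $\hat C$ with $C<\hat C<\tilde C$.

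For the final clause, assuming $(M,C)$ stably causal, I would first invoke Th.\ \ref{ddo} and the definition of stable causality to choose the witness $\tilde C$ already stably causal (intersecting the $\tilde C$ just produced with a causal $C^0$ proper cone structure wider than $C$, then applying Th.\ \ref{ddo} to find a locally Lipschitz proper structure in between). Then every such $\hat C$ with $C<\hat C<\tilde C$ is itself stably causal, hence non-imprisoning via Th.\ \ref{mwh}. Repeating the closedness argument of the first paragraph with $\hat C$ in place of $C$ shows $\hat E^+(S)$ is closed; combined with $\overline{\hat E^+(S)}$ compact this yields $\hat E^+(S)$ compact, as desired.

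The main obstacle I anticipate is the contradiction step in the second paragraph: one must be careful that the approximating curves $\sigma_k$ do not contract to a point and that the limit curve theorem's dichotomy is resolved in favour of an inextendible limit curve trapped inside $E^+(S)$. Controlling this requires that the $q_k$ genuinely escape to infinity (forcing unbounded $h$-length and hence excluding the finite-length limit alternative), and that the limiting curve's avoidance of $\mathrm{Int}\,J^+(S)$ survives passage to the limit, which follows because $\mathrm{Int}\,J^+(S)$ is open and is contained in each $\mathrm{Int}\,J^+_k(S)$. The interplay between the widening cones $C_k$ and the fixed open set $\mathrm{Int}\,J^+(S)$ is the delicate point that makes the upper semi-continuous conclusion possible.
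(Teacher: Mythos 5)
Your proposal follows essentially the same route as the paper's own proof: closedness of $E^+(S)$ via the limit curve theorem plus non-imprisonment and the compactness of $\overline{E^+(S)}$, the main stability claim by contradiction using the nested approximating sequence of Th.\ \ref{sqd} together with Cor.\ \ref{chg} and the inclusion $\mathrm{Int}\,J^+(S)\subset\mathrm{Int}\,J_k^+(S)$, and the stably causal refinement by choosing $\tilde C$ stably causal and repeating the closedness argument for each $\hat C$. The only differences are bookkeeping (the paper interleaves a sequence $\hat C_k$ with the $C_k$ to guarantee nestedness and $\cap_k \hat C_k=C$, which your parenthetical ``structures arbitrarily close to $C$'' glosses over), so the proposal is correct and matches the paper's argument.
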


\begin{proof}
Let $q\in \overline{E^+(S)}$ so there are $q_n \in E^+(S)$, $q_n\to q$. We cannot have $q\in I^+(S)$ otherwise for sufficiently large $n$, $q_n\in I^+(S)$, which is impossible. Thus we have only to prove that $q\in J^+(S)$. If $q\in S$ there is nothing to prove, so let us suppose $q\notin S$. Let $\sigma_n$ be a causal curve connecting $S$ to $q_n$, necessarily contained $E^+(S)$. Since $q\notin S$ the curves $\sigma_n$ do not contract to a point and by the limit curve theorem either there is a continuous causal curve connecting $S$ to $q$, and we are finished, or there is a future inextendible continuous causal curve $\sigma^q\subset \overline{E^+(S)}$. By the compactness of the last we have a contradiction with the non-imprisoning property.

As for the next statement, suppose that it does not hold then for every $\tilde C>C$ we can find some $C<\hat C <\tilde C$ such that $\overline{\hat E^+(S)}$ is non-compact.
 For every sequence of locally Lipschitz proper cone structures as in Prop.\ \ref{sqd} $\{C_k\}$, $C<C_{k+1}<C_k<\tilde C$, $\cap_k C_k=C$ we can pass to another sequence $\{\hat C_k\}$, $C<\hat C_{k+1}<\hat C_k<C_k$, such that $\overline{ \hat E_k^+(S)}$ is non-compact for every $k$. We rename $\hat C_k\to C_k$.
  Let $h$ be a complete Riemannian metric, let $o\in M$ and let $B_n=B(o,n)$.
Let $q_k\in  E_k^+(S)\backslash S$ be a sequence of points such that $q_k \notin B_k$. By Cor.\ \ref{chg} there is a continuous causal curve $\sigma_k$ connecting some $p_k\in S$ to $q_k$, entirely contained in $E_k^+(S)$. Notice that $\mathrm{Int} J^+(S)\subset \mathrm{Int} J_k^+(S)$, thus these curves do not intersect $\mathrm{Int} J^+(S)$. Applying  the limit curve theorem  to $\sigma_k$ we obtain that there is a future inextendible continuous $C$-causal curve $\sigma$ starting from $S$ and not intersecting  $\mathrm{Int} J^+(S)$. Thus this curve belongs to $E^+(S)$ which contradicts the non-imprisonment property.

As for the last statement, it is clear that we can take $\tilde C$ stably causal. We have to show that for $C<\hat C <\tilde C$,  ${\hat E^+(S)}$ is closed. By contradiction, let $q\in \overline{ \hat E^+(S)}\backslash  \hat E^+(S)$ then there is a sequence $q_k\to q$, $q_k\in \hat E^+(S)$.  There are continuous $\hat  C$-causal curves $\sigma_k$ connecting $S$ to $q_k$ entirely contained in $\hat E^+(S)$ hence not intersecting $\mathrm{Int} \hat J^+(S)$. Since $\hat C$ is stably causal it is non-imprisoning. Using again the limit curve theorem we get a future inextendible continuous $\hat C$-causal curve starting from $S$ and not intersecting $\mathrm{Int} \hat J^+(S)$, hence contained in ${\hat E^+(S)}$ again a contradiction with the non-imprisonment property. Applying the same argument  to $E^+(S)$ shows that this set is closed (remember Th.\ \ref{nig}).
\end{proof}

We recall that every globally hyperbolic closed cone structure admits a stable (equiv.\ stably acausal, cf.\ Th.\ \ref{sts}) Cauchy hypersurface, and that any two stable Cauchy hypersurfaces are smoothly diffeomorphic (Th.\ \ref{ger}).

\begin{theorem} \label{pen} (Improved Penrose's  singularity theorem)\\
Let $(M,C)$ be a globally hyperbolic closed cone structure admitting a non-compact stable Cauchy hypersurface. Then there are no compact future trapped sets and if $S$ is non-empty and compact there is a future inextendible future lightlike geodesic entirely contained in $E^+(S)$.
\end{theorem}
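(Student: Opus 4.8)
The plan is to prove the two assertions of Theorem~\ref{pen} by a contradiction argument that exploits the interplay between the compactness of the horismos $E^+(S)$ and the topological structure of the Cauchy hypersurface. The key is the observation, available from Th.~\ref{ger} and Th.~\ref{xxy}, that a globally hyperbolic closed cone structure admits a smooth Cauchy temporal function whose level sets are stable Cauchy hypersurfaces, and that any stable Cauchy hypersurface is smoothly diffeomorphic to the projection $S$ of the product splitting $M\cong\mathbb{R}\times S$. First I would assume that $S$ is a nonempty compact future trapped set, so that by definition $E^+(S)$ is compact. By Prop.~\ref{ckj} every continuous causal curve contained in $E^+(S)$ is a future lightlike geodesic, so the content reduces to analysing the structure of $E^+(S)$ as an achronal set.

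The central step is to show that $E^+(S)$, being compact and a subset of the achronal boundary $\partial J^+(S)=\partial I^+(S)$ (using Th.~\ref{soa}/Th.~\ref{aoq} machinery after passing to a wider locally Lipschitz proper cone structure via Th.~\ref{sto}, which guarantees stability of the compact trapped set), is an achronal compact topological hypersurface. Here I would use Th.~\ref{aoq}: a local achronal boundary is a locally Lipschitz graph, hence $E^+(S)$ is a compact locally Lipschitz topological hypersurface (possibly with boundary on $S$). The plan is then to push $E^+(S)$ onto a Cauchy hypersurface using the flow of the smooth timelike vector field whose integral curves foliate $M$ (available for proper cone structures from Th.~\ref{ger}, and after the stabilisation of Th.~\ref{sto} we may work with a proper cone structure $\hat C>C$). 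Each integral timelike curve meets any Cauchy hypersurface exactly once, and since $E^+(S)$ is achronal, the projection along these timelike curves onto a fixed stable Cauchy hypersurface $\Sigma$ is injective. This yields a continuous injection, in fact a homeomorphism onto its image, of the compact set $E^+(S)$ into $\Sigma\cong S$.

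The contradiction I would then extract is topological: the image of $E^+(S)$ under this projection is a compact subset of $\Sigma$ which is also \emph{open} in $\Sigma$. Openness follows because $E^+(S)$ is a topological hypersurface of the same dimension $n$ as $\Sigma$, and the projection is a local homeomorphism between $n$-manifolds (invariance of domain). A nonempty subset of the connected manifold $\Sigma$ that is simultaneously compact (hence closed) and open must equal all of $\Sigma$, forcing $\Sigma$ to be compact. But $\Sigma\cong S$ is assumed non-compact, a contradiction. Hence no nonempty compact future trapped set can exist, establishing the first assertion.

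For the second assertion, I would argue that if $S$ is nonempty and compact but $E^+(S)$ fails to be compact, then $E^+(S)$ is not a trapped set, and the non-compactness must manifest as a future inextendible continuous causal curve contained in $E^+(S)$. Concretely, by the first part there exist points $q_k\in E^+(S)$ escaping every compact set; by Cor.~\ref{chg} (after the Lipschitz stabilisation) there are causal curves $\sigma_k$ from $S$ to $q_k$ lying in $E^+(S)$, and since $S$ is compact while the $q_k$ run off to infinity, these curves do not contract to a point. Applying the limit curve theorem~\ref{main} and using that $S$ is compact, I would obtain a future inextendible continuous $C$-causal curve $\sigma$ starting from $S$; since none of the $\sigma_k$ meets the open set $\mathrm{Int}\,J^+(S)$, neither does $\sigma$, so $\sigma\subset J^+(S)\setminus\mathrm{Int}\,J^+(S)=E^+(S)$. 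By Prop.~\ref{ckj} this $\sigma$ is a future lightlike geodesic entirely contained in $E^+(S)$, as required. The main obstacle I anticipate is the careful handling of the passage to the wider locally Lipschitz proper cone structure $\hat C>C$ supplied by Th.~\ref{sto}, ensuring that compactness of $E^+(S)$, achronality, and the hypersurface/graph structure are all controlled uniformly enough to run the invariance-of-domain argument without the achronality being spoiled under the enlargement.
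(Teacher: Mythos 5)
Your first part is, in substance, the paper's own argument: stabilize the trapped set with Th.~\ref{sto} (choosing the wider locally Lipschitz proper cone structure also globally hyperbolic, which Th.~\ref{mom} permits, so that $\hat J$ is closed and $\hat E^+(S)=\p \hat J^+(S)$ is a compact boundaryless locally Lipschitz hypersurface by Prop.~\ref{mac} and Th.~\ref{aoq}), project it along the flow of a smooth $\hat C$-timelike vector field into a stable Cauchy hypersurface, and derive a contradiction from the image being simultaneously compact and open in a connected non-compact manifold. The paper phrases the final step as ``the compact image has non-empty boundary, yet transversality of the flow to the Lipschitz graph forbids boundary points,'' which is the same topological fact as your invariance-of-domain/clopen argument. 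One caution: the set you project must be $\hat E^+(S)$, the horismos of the \emph{wider} structure, not $E^+(S)$ itself; your write-up slides between the two, and only the former carries the achronal-boundary and Lipschitz-graph structure you invoke, since Th.~\ref{aoq} does not apply to the original closed (generally non-proper) cone structure.

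The genuine gap is in the second part, at the step ``by Cor.~\ref{chg} (after the Lipschitz stabilisation) there are causal curves $\sigma_k$ from $S$ to $q_k$ lying in $E^+(S)$.'' Cor.~\ref{chg} holds only for locally Lipschitz proper cone structures, and applying it to a wider structure $\hat C>C$ produces $\hat C$-causal curves contained in $\hat E^+(S)$, which is a different set. In fact the transfer fails outright: by Th.~\ref{dxp}, every point of $E^+(S)\setminus S$ lies in $I^+_{\hat C}(S)\subset \mathrm{Int}\,\hat J^+(S)$, so none of your points $q_k$ belongs to $\hat E^+(S)$, and the corollary applied to the wide structure says nothing about them. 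The statement you actually need is precisely Th.~\ref{sop}: in a globally hyperbolic closed cone structure with $S$ compact, every $q\in E^+(S)\setminus S$ is joined to $S$ by a future lightlike geodesic contained in $E^+(S)$. Its proof contains the non-trivial dichotomy that your phrase ``after the Lipschitz stabilisation'' glosses over: for a sequence $C_k\downarrow C$ one either finds $q^{(k)}\in E^+_k(S)$ converging to $q$ (then applies Cor.~\ref{chg} structure-by-structure and takes limit curves), or a whole ball around $q$ lies in every $J^+_k(S)$, forcing $q\in\mathrm{Int}\,J^+(S)$, a contradiction. Once Cor.~\ref{chg} is replaced by Th.~\ref{sop}, the remainder of your second part — the limit curve theorem, the limit curve avoiding the open set $\mathrm{Int}\,J^+(S)$ while staying in the closed set $E^+(S)$, and Prop.~\ref{ckj} to conclude it is a future lightlike geodesic — coincides with the paper's proof.
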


The argument of Penrose's theorem in the $C^2$ Lorentzian case really might be continued as follows: one assumes the existence of a trapped surface $S$, which is a codimension  two closed spacelike manifold whose local orthogonal null fields are converging. Then under the null energy condition the lightlike geodesics starting with those tangents would be refocusing if complete, which implies that $S$ is a trapped set, a contradiction.

\begin{proof}
Suppose there is a non-empty compact set $S$ such that $\overline{E^+(S)}$ is compact. Then by Th.\ \ref{sto}  there is globally hyperbolic locally Lipschitz proper cone structure $\tilde C>C$, such that $\tilde E^+(S)$ is compact. Since $\tilde J$ is closed, by Th.\ \ref{xix} $\tilde J^+(S)$ is closed so $\tilde E^+(S)=\p \tilde J^+(S)$, where $\tilde E^+(S)$ is a $\tilde C$-achronal boundary (Prop.\ \ref{mac}) hence a compact locally Lipschitz  hypersurface. If $V$ is a smooth $\tilde C$-timelike vector field its flow can be used to project $\tilde E^+(S)$ to the Cauchy hypersurface $Q$ (which is a stable Cauchy hypersurface for $(M,C)$). As the projection is compact its boundary as a subset of $Q$ is non-empty. But the integral lines passing through the boundary of the image cannot be transverse to $\tilde E^+(S)$ though they are, a contradiction. Thus for every non-empty compact set $S$, $E^+(S)$ is non-bounded, so we can find $q_n\in E^+(S)\backslash S$ escaping every  compact set, so by Th.\ \ref{sop} there are continuous causal curves $\sigma_n$ connecting $p_n\in S$ to $q_n$ entirely contained in $E^+(S)$. An application of the limit curve theorem gives a future inextendible continuous causal curve $\sigma$ starting from some $p\in S$, entirely contained in $E^+(S)$. By shortening it if necessary, we can assume that it intersects $S$ just in $p$. Since $\sigma$ does not intersect $\mathrm{Int} J^+(p)\subset \mathrm{Int} J^+(S)$ it is a future lightlike geodesic.
\end{proof}

The next result is a low differentiability version of Hawking's 1966 singularity theorem \cite{hawking66}.  In its first version Hawking's theorem included a global hyperbolicity assumption which was removed in  \cite{hawking73}. A $C^{1,1}$ Lorentzian version can be found in \cite{kunzinger15b} and a non-round (Lorentz-Finsler) $C^2$ version can be found in  \cite{minguzzi15}.

\begin{theorem}(Improved Hawking's singularity theorem)\\
Let $(M,\mathscr{F})$ be a non-imprisoning proper  Lorentz-Finsler space and let $S$ be a compact acausal topological hypersurface.
There  is a  future inextendible future causal  geodesic $x\colon [0,+\infty) \to M$ issued from $ x(0)\in S$ and contained in  $\overline{D^+(S)}$ such that for every $t>0$,
\begin{equation} \label{did}
\liminf_{q\to x(t)} d(S,q) \le \ell(x\vert_{[0,t]}).
\end{equation}
Suppose that $(M,C)$ is globally hyperbolic or that it is $C^0$ and such that $\mathscr{F}(\p C)=0$. Then, either $\ell(x)$ is bounded (geodesic singularity), or for every constant $R>0$ we can find a future causal  geodesic $\tilde x\colon [0,1]\to M$  issued from $\tilde x(0)\in S$ and contained in ${D^+(S)}$ such that
\begin{equation} \label{diq}
\ell(\tilde x)=d(S,\tilde x(1))>R.
\end{equation}
If additionally  $(M,\mathscr{F})$ is  locally Lipschitz then $d$ is lower semi-continuous (Th.\ \ref{ddb}) and inequality (\ref{did}) can be replaced by
$d(S,x(t))=\ell(x\vert_{[0,t]})$.
\end{theorem}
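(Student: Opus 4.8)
The plan is to build the maximizing future causal geodesic via a limit curve argument in the product space $M^\times$, exploiting the machinery already developed for domains of dependence and for the Lorentz-Finsler length functional. First I would recall that by Th.\ \ref{mmm} the set $D(S)$ is open, causally convex and globally hyperbolic, so the restriction $(D(S),\mathscr{F}\vert_{D(S)})$ is a globally hyperbolic closed Lorentz-Finsler space to which Prop.\ \ref{ppp} applies: $(D(S)^\times,C^\times)$ is globally hyperbolic and $d$ is finite on compact subsets of $D(S)\times D(S)$. Fixing $q\in D^+(S)$, I would work inside the compact set $J^-(q)\cap J^+(S)$, which is a compact subset of $D(S)$ by Th.\ \ref{jbu}. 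To extract the geodesic I would take a sequence $t_k\uparrow t$, a sequence of points $q_k\to x(t)$ (realizing the $\liminf$ in (\ref{did})), and connecting continuous causal curves from $S$ whose $\mathscr{F}$-length approaches $d(S,q_k)$; the key is that the existence of maximizers follows from the generalized Avez-Seifert theorem (Th.\ \ref{sww}) applied in the globally hyperbolic piece $D(S)$.

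Next I would run the core argument. Under the first hypothesis (global hyperbolicity) or the $C^0$ hypothesis with $\mathscr{F}(\p C)=0$, the maximizing curve from $S$ to $q$ exists by Th.\ \ref{sww}. I would lift it to $M^\times$ via $X(t)=(x(t),\ell(x\vert_{[0,t]}))$ as in Prop.\ \ref{maq}, so that $X$ is a $C^\times$-lightlike geodesic; its projection is by definition a future causal geodesic. The dichotomy ``$\ell(x)$ bounded or arbitrarily long geodesics exist'' I would obtain as follows: if $S$ is future trapped in the relevant sense the $E^+$-argument applies, but here the cleaner route is to observe that if no geodesic singularity occurs then $x$ extends to $[0,+\infty)$ with unbounded length, and for each $R>0$ one selects a point $x(t_R)$ with $\ell(x\vert_{[0,t_R]})>R$; setting $\tilde x$ to be the reparametrized maximizer from $S$ to $x(t_R)$ and invoking Th.\ \ref{xux} (which gives a causal bigeodesic realizing $d(S,x(t_R))$ inside $D^+(S)$) yields (\ref{diq}). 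The inequality $d(S,\tilde x(1))>R$ follows because the maximizing length equals $d(S,\cdot)$ and exceeds $R$ by construction.

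For the final locally Lipschitz refinement I would invoke Th.\ \ref{ddb}, which gives lower semi-continuity of $d$ under local Lipschitzness with $\mathscr{F}(\p C)=0$. Combined with the upper semi-continuity of the length functional (Th.\ \ref{upp}) and the already-established inequality (\ref{did}), lower semi-continuity of $q\mapsto d(S,q)$ forces $\liminf_{q\to x(t)}d(S,q)=d(S,x(t))$, and since the limit curve $x$ is itself a maximizer one gets $d(S,x(t))=\ell(x\vert_{[0,t]})$, upgrading the inequality to an equality. The main obstacle I anticipate is the construction of the single future inextendible geodesic satisfying (\ref{did}) in the bare non-imprisoning case without any maximizer existence theorem: here one cannot directly invoke Avez-Seifert, and must instead build $x$ as a limit of approximate maximizers from $S$ to escaping points $q_n$, controlling the $\liminf$ of $d(S,\cdot)$ along the way using Th.\ \ref{upp} and arguing that the limit curve stays in $\overline{D^+(S)}$ and is future inextendible by the non-imprisoning property. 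Ensuring that the $\liminf$ estimate (\ref{did}) passes to the limit curve — rather than only holding for the approximating sequence — is the delicate point, and I expect it to rely on the upper semi-continuity of $\ell$ together with a careful diagonal extraction over $t\in\mathbb{Q}_{>0}$.
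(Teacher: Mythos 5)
Your skeleton is close to the paper's (maximizers inside the globally hyperbolic $D(S)$ via Th.~\ref{mmm} and Th.~\ref{xux}, a limit curve argument, upper semi-continuity of $\ell$, and Th.~\ref{ddb} for the Lipschitz refinement), and your route to (\ref{did}) — using $d(S,x_n(t))=\ell(x_n\vert_{[0,t]})$ for the maximizers and then Th.~\ref{upp} — is essentially sound. But there are genuine gaps. First, you never prove that the limit curve $x$ is a future causal \emph{geodesic}. Prop.~\ref{maq} only says that the lift $X(t)=(x(t),\ell(x\vert_{[0,t]}))$ is a lightlike \emph{curve}; being a lightlike geodesic means local $\mathring{J}$-arelatedness, which for the approximating curves $x_n$ follows from their maximization property, but the limit $x$ is in general \emph{not} maximizing (it may run along $H^+(S)$, and $d$ need not be lower semi-continuous in the bare upper semi-continuous case — this is precisely why the theorem only asserts a liminf inequality). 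The paper's mechanism is to take the limit of the \emph{lifted} curves $X_n$ inside $M^\times$: each $X_n$ lies in the horismos $(E^\times)^+(S^\times)$, so the limit $\check X$ also avoids the open set $\mathrm{Int}\,(J^\times)^+(S^\times)$ and is a future lightlike geodesic by Prop.~\ref{ckj}; a further scaling argument then shows the true-length lift $X$ also stays in the horismos, which is what feeds both (\ref{did}) and the dichotomy. Your order of operations (limit in $M$, then lift) cannot produce this. Second, you take for granted the existence of escaping points $q_n\in D^+(S)$: the unboundedness of $D^+(S)$ is a step requiring proof, which the paper obtains from $S\cap H^+(S)=\emptyset$ (Th.~\ref{jdq}) together with the fact that a compact $H^+(S)$ would have past-inextendible generators (Th.~\ref{juf}) imprisoned in it, contradicting non-imprisonment.

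Third, you misallocate the hypotheses of the second statement. You invoke global hyperbolicity or the $C^0$ condition with $\mathscr{F}(\p C)=0$ to get maximizers via Th.~\ref{sww}, but maximizers from $S$ to points of $D^+(S)$ already exist in the bare proper non-imprisoning case (Th.~\ref{xux}, built on Th.~\ref{mmm} and Th.~\ref{jbu}); those hypotheses are needed exactly where your dichotomy argument is silent, namely to rule out that the unbounded length of $x$ accumulates on $H^+(S)$, where neither Th.~\ref{xux} nor the estimate $d(S,x(t_R))\ge\ell(x\vert_{[0,t_R]})>R$ with $x(t_R)\in D^+(S)$ is available. In the paper, global hyperbolicity makes $H^+(S)$ empty, while in the $C^0$ case with $\mathscr{F}(\p C)=0$ the achronality of $H^+(S)$ together with Th.~\ref{aam} forces the length of $x$ along the horizon to vanish; only then does unboundedness of $\ell(x)$ yield points $x(t_R)\in D^+(S)$ with $\ell(x\vert_{[0,t_R]})>R$, to which Th.~\ref{xux} applies. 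As written, your $x(t_R)$ may lie on $H^+(S)$ and the step fails; relatedly, the ``obstacle'' you flag at the end (absence of Avez--Seifert in the non-imprisoning case) is a misdiagnosis, since Th.~\ref{xux} supplies the maximizers there. A minor point: in the Lipschitz refinement, lower semi-continuity gives only $d(S,x(t))\le\liminf_{q\to x(t)}d(S,q)$; the reverse inequality $d(S,x(t))\ge\ell(x\vert_{[0,t]})$ comes from $x\vert_{[0,t]}$ being a connecting causal curve, not from assuming $x$ is a maximizer, which would be circular.
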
%

The argument of Hawking's theorem in the $C^2$ Lorentzian case looks for a contradiction in the timelike completeness assumption. It really uses only Eq.\ (\ref{diq}) and  goes as follows: since the boundedness of $\ell(x)$  is excluded, an assumption on the  convergence of the vector field orthogonal to $S$ jointly with the strong energy condition leads to the refocusing of the geodesics starting orthogonally to $S$ within a length $\tau$. By taking $R>\tau$, and noticing that $\tilde x$ being maximizing is orthogonal to $S$ one gets a contradiction since $\tilde x$ cannot have focusing points in its interior.

\begin{proof}
The set $S^\times =S\times \{0\}$ is a compact subset of $M^\times$.  The cone structure $(M^\times, C^\times)$ is non-imprisoning, for if there were a future inextendible continuous $C^\times$-causal curve imprisoned in a compact set the same would be true for the projection of the curve in the projection of the compact set, a contradiction with the non-imprisoning property of $(M,C)$.

By definition of proper Lorentz-Finsler space there is  a smooth $C$-timelike vector field on $M$ such that $V^\times=V\oplus 0$ is a $C^\times$-timelike vector field on $M^\times$. Notice that the flow of $V^\times$ preserves the second coordinate.
Let us consider the set $(E^\times)^+(S^\times)=(J^\times)^+(S^\times)\backslash \mathrm{Int} (J^\times)^+(S^\times)$ on $(M^\times, C^\times)$. No points of $[\overline{D^+(S)}\backslash S]\times \{0\}$ can be contained in this set, since they are connected to $S^\times$ by an integral curve of $V^\times$, and so belong to $\mathrm{Int} (J^\times)^+(S^\times)$ by the openness of the chronological relation. By Th.\ \ref{xux}   for every $q\in D^+(S)$ there is a maximizing bigeodesic $x^q$ connecting $S$ to $q$, $\ell(x^q)=d(S,q)$, so by  Eq.\ (\ref{kki}), $X^q(t)=(x(t), \ell(x\vert_{[0,t]}))$ is a continuous $C^\times$-causal curve contained in $(E^\times)^+(S^\times)$. As a consequence, $D^+(S)\subset \pi_1((E^\times)^+(S^\times))$. By Th.\ \ref{jdq} $S\cap H^+(S)=\emptyset$ so $D^+(S)$ cannot be bounded otherwise $H^+(S)$ would be compact and its generators would be imprisoned in a compact set, a contradiction. Thus both $D^+(S)$ and $(E^\times)^+(S^\times)$ are unbounded (under a global hyperbolicity assumption one could obtain the latter result with Penrose's theorem framed in $M^\times$. Notice that under non-imprisonment we cannot claim that every point of $(E^\times)^+(S^\times)\backslash S^\times$ is connected to $S^\times$ by a $C^\times$-causal curve contained in $(E^\times)^+(S^\times)$, but this fact will not be used). Let $q_n\in D^+(S)$ be an unbounded sequence, then $Q_n=(q_n, d(S,q_n))$ is an unbounded sequence in $(E^\times)^+(S^\times)$. By Th.\ \ref{xux}   for every $q_n$ there is a maximizing bigeodesic $x_n \subset D^+(S)$ connecting $S$ to $q_n$, $\ell(x_n)=d(S,q_n)$, so by Eq.\ (\ref{kki}), $X_n(t)=(x_n(t), \ell(x_n\vert_{[0,t]}))$ is a continuous $C^\times$-causal curve contained in $(E^\times)^+(S^\times)$ and connecting $S^\times$ to $Q_n$.
By the limit curve theorem we find a future inextendible future $C^\times$-lightlike geodesic $\check X(t)=(x(t), \check r(t))$,  entirely contained in $(E^\times)^+(S^\times)$, with $x$ entirely contained in $\overline{D^+(S)}$.  No points of $ [\overline{D^+(S)}\backslash S]\times \{0\}$ can be contained in $(E^\times)^+(S^\times)$, since its points are connected to $S^\times$ by an integral curve of $V^\times$, and so belong to $\mathrm{Int} (J^\times)^+(S^\times)$ by the openness of the chronological relation. This means that $\check X$ does not intersect $[\overline{D^+(S)}\backslash S]\times \{0\}$ and so, reflecting it with respect to the $r=0$ section if necessary, we might assume that $\check X$ belongs to the region $r\ge 0$.

The curve $ X(t)=(x(t), r(t) )$, with $r(t)=\ell(x\vert_{[0,t]})$ is also a future inextendible continuous causal curve entirely contained in $(E^\times)^+(S^\times)$, hence a future $C^\times$-lightlike geodesic. In order to prove this fact, notice that by causality of $\check X$, $\vert \dot{\check{r}}\vert \le \mathscr{F}(\dot x)$, thus for $t>0$, $0\le \check r(t) \le r(t)$. If there were $\bar t$ such that $X(\bar t) \in \mathrm{Int} (J^\times)^+(S^\times)$ then the same would be true for $\check X(\bar t)$, which gives a contradiction. In fact, there would be a product neighborhood of $U\times (a,b)\ni X(\bar t)$ reached at time $\bar t$ by continuous $C^\times$-causal curves issued from $S^\times$, e.g.\ $ Y(t)=(y(t), s(t) )$, but then $\tilde Y=(y(t), \frac{\check r(\bar t)}{r(\bar t)} s(t) )$ would also be a continuous $C^\times$-causal curve, that is there would be a product neighborhood $U\times \frac{\check r(\bar t)}{r(\bar t)}  (a,b)\ni \check X(\bar t)$ in $(J^\times)^+(S^\times)$ as we claimed.

For every $t$, and for every product neighborhood $U\times (r(t)-\epsilon,r(t)+\epsilon)\ni X(t)=(x(t),r(t))$ we can find some point in the product neighborhood which is not reached by $C^\times$-continuous causal curves starting from $S^\times$. Since Eq.\ (\ref{kki}) holds with the equality sign, this means that for every $\epsilon$ and for every $U\ni x(t)$ we can find $q\in U$ such that $d(S,q)\le r(t)+\epsilon$, that is $\liminf_{q\to x(t)}d(S,q) \le r(t)$.

Finally, observe that for $R>0$, either $(E^\times)^+(S^\times)\cap \pi_1^{-1}({D^+(S)})$ is all contained in the regions $r\le R$, which implies that $x$ has length no larger than $R$ on $D^+(S)$ (in the globally hyperbolic case $x$ cannot continue on $H^+(S)$ since this set is empty; in the other case  $x$ can continue on $H^+(S)$ but its length there is zero because of the $C^0$ assumption cf.\ Th.\ \ref{aam}),
or not, which implies that we can find $q\in D^+(S)$ such that $d(S,q)>R$. By  Th.\ \ref{xux} we can find a causal bigeodesic $\tilde x\colon [0,1]\to M$, $\tilde x(0)\in S$, $\tilde x(1)=q$ such that Eq.\ (\ref{diq}) holds true.
\end{proof}

Below we give a causal version of Hawking and Penrose's 1970 singularity theorem \cite{penrose65}. It has been recently generalized to $C^{1,1}$ regularity \cite{graf17} and to the non-round (Lorentz-Finsler) $C^2$ case  \cite{minguzzi15}.
As for Penrose's theorem the key for the generalization to the closed cone structure case stays in the stability of compact trapped sets.
%
\begin{definition}
The cone structure $(M,C)$  is {\em causally disconnected by a compact set} $K$ if
there are sequences $p_n$ and $q_n$, $p_n<q_n$, going to infinity
(i.e. escaping every compact set) such that for each $n$ {\em every}
continuous causal curve connecting $p_n$ to $q_n$ intersects $K$. It is {\em causally connected} if there is no compact set which  causally disconnects it.
\end{definition}

\begin{lemma} \label{bqx} Let $(M,C)$ be a locally Lipschitz proper cone structure.
If $S$ is a closed and achronal set and  $H^{+}(\overline{E^{+}(S)})$ is compact non-empty then strong causality is violated on every neighborhood of it.
\end{lemma}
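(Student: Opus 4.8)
The plan is to set $W:=\overline{E^{+}(S)}$ and to exploit that $\mathcal H:=H^{+}(W)$ is a \emph{compact} Cauchy horizon generated by lightlike geodesics. First I would check that the statement is well posed: $E^{+}(S)=J^{+}(S)\setminus\mathrm{Int}\,J^{+}(S)$ lies in the achronal boundary $\partial J^{+}(S)=\partial I^{+}(S)$ of Prop.\ \ref{mac}, which is closed and achronal, so $W$ is closed and achronal and $H^{+}(W)$ is defined. The whole idea is then to manufacture, inside the compact set $\mathcal H$, a past-inextendible continuous causal curve (a past-imprisoned curve) and to feed it to the imprisonment machinery: such a curve forces a self-accumulating causal curve, and self-accumulation is incompatible with strong causality.

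Producing the imprisoned generator is the heart of the matter. Starting from any $p\in\mathcal H$, Theorem \ref{juf} gives a past-directed lightlike generator of $\mathcal H$ through $p$, contained in $\mathcal H$, which is either past-inextendible or reaches a point $q\in W\cap\mathcal H$. In the first alternative we already have a past-inextendible curve in the compact set $\mathcal H$, hence imprisoned. In the second alternative I would apply Theorem \ref{juf} again at $q\in\mathcal H$, obtaining a further past generator in $\mathcal H$, and iterate, concatenating the successive generator segments into a single past-directed continuous causal curve contained in $\mathcal H$. If the iteration never meets a past-inextendible generator, the concatenation runs forever inside the compact $\mathcal H$; parametrizing by $h$-arc length for a complete metric $h$ and invoking the limit curve theorem (Theorem \ref{main}, case (ii)), since the segments do not shrink to a point, I would extract a past-inextendible continuous causal curve $\gamma$ entirely contained in $\mathcal H$. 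Either way $\gamma$ is past-inextendible and imprisoned in $\mathcal H$.

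The main obstacle is precisely the control of this iteration: one must rule out that it stalls at a \emph{degenerate} generator of zero length (one that ``starts from $q$'' at $q$ itself), and more generally that all generators terminate on $W$ after uniformly bounded length without the concatenation becoming inextendible. Here I would lean on the locally Lipschitz refinement of Theorem \ref{juf} --- every past-inextendible continuous causal curve with future endpoint in $\mathcal H$ stays in $\mathcal H$ along its maximal segment avoiding $W$ --- together with a zig-zag limit-curve construction of the same type used in the proof of Theorem \ref{juf}, choosing successive starting points in $\mathcal H\setminus W$ close to the terminal points $q_i$ and slightly to their causal past within $\mathcal H$; the resulting zig-zag curves have unbounded length in the compact $\mathcal H$, so their limit is the desired past-inextendible imprisoned curve.

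Once $\gamma$ is in hand the conclusion is routine. Since $\gamma$ is past-inextendible and imprisoned in the compact $\mathcal H$, the past version of Lemma \ref{pqf} gives $\Omega_{p}(\gamma)\neq\emptyset$, and the time-dual of Theorem \ref{lof} produces inside $\Omega_{p}(\gamma)\subseteq\mathcal H$ a minimal biviable closed set $B$ together with an inextendible continuous causal curve $\alpha\subseteq B$ with $B=\overline\alpha=\Omega_{f}(\alpha)=\Omega_{p}(\alpha)$. Thus $\alpha$ accumulates on each of its own points, so it leaves and re-enters every neighborhood of any such point; hence no point of $\alpha$ admits a causally convex neighborhood and strong causality fails at every point of $\alpha\subseteq\mathcal H$. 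As $\mathcal H$ is contained in each of its neighborhoods, strong causality is violated on every neighborhood of $H^{+}(\overline{E^{+}(S)})$, as claimed.
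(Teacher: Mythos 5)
Your closing step is fine: a past-inextendible causal curve imprisoned in the compact set $\mathcal H=H^{+}(\overline{E^{+}(S)})$ would, via the past version of Lemma \ref{pqf} and Theorem \ref{lof}, yield an inextendible causal curve $\alpha\subset\mathcal H$ with $\bar\alpha=\Omega_f(\alpha)=\Omega_p(\alpha)$, and such a self-accumulating curve does kill strong causality at its points. The genuine gap is the heart of your argument, the production of that imprisoned curve, and the zig-zag patch you sketch cannot repair it. Theorem \ref{juf} yields a nontrivial past generator only through points of $H^{+}(W)\setminus W$; at a point $q\in W\cap\mathcal H$ the ``generator'' may degenerate to the single point $q$, and there need not exist any point of $\mathcal H$ in the causal past of $q$ at all, so the concatenation stalls after finitely many steps with a curve of finite length. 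Your fix requires starting points of $\mathcal H\setminus W$ slightly to the causal past of the terminal points, inside $\mathcal H$; these need not exist.

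The deeper problem is that your proof uses only the hypotheses ``$W$ closed and achronal, $H^{+}(W)$ compact and non-empty'', and under these hypotheses the conclusion is false, so no argument at that level of generality can close. Take $M$ to be $1+1$ Minkowski spacetime and $W=\{(x,0)\colon \vert x\vert\le 1\}$: then $H^{+}(W)$ is the union of the two null segments joining $(\pm 1,0)$ to $(0,1)$, which is compact and non-empty, every generator has past endpoint on $W$, $H^{+}(W)$ contains no past-inextendible causal curve, and strong causality holds everywhere. The lemma survives only because such a $W$ is not of the form $\overline{E^{+}(S)}$, and this is exactly the structure your proposal never invokes. The paper's proof, by contrast, argues by contradiction: assuming strong causality near $\mathcal H$, it covers $\mathcal H$ by finitely many globally hyperbolic neighborhoods and builds a past-inextendible \emph{timelike} zig-zag curve near (not on) the horizon; the ability to continue the zig-zag indefinitely rests on showing $q_1\notin\tilde D^{+}(\p I^{+}(S))$, which in turn uses the key property of horismos boundaries that every point of $\p I^{+}(S)\setminus\overline{E^{+}(S)}$ is the future endpoint of a past-inextendible causal curve contained in $\p I^{+}(S)$. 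Any correct proof must exploit this (or an equivalent) feature of $\overline{E^{+}(S)}$.
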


\begin{proof}
Assume  $H^{+}(\overline{E^{+}(S)})$ is compact, non-empty and strong causality holds in a neighborhood of it. Let
$U$ be a relatively compact neighborhood of
$H^{+}(\overline{E^{+}(S)})$. We want to show that $U$ cannot be covered by causally convex sets. Cover $H^{+}(\overline{E^{+}(S)})$
with a finite number of globally hyperbolic open neighborhoods $U_i,
i=1\ldots n$ whose closures are respectively contained in globally
hyperbolic open neighborhoods $V_i, i=1\ldots n$, i.e. $\bar{U}_i
\subset V_i$, which in turn are contained in $U$. Take a point
$p_1\in H^{+}(\overline{E^{+}(S)})$, then $p_1 \in U_{i_1}$ for
some $0\le i_1\le n$. Let $q_1 \in I^{+}(p_1)\cap U_{i_1}$. As $H^{+}(\overline{E^{+}(S)})\subset \overline{E^{+}(S)}\cup
I^{+}(\overline{E^{+}(S)})\subset \overline{I^{+}(S)}$ we have
$q_1\in I^{+}(S)$.  Clearly
$q_1 \notin \tilde{D}^{+}(\overline{E^{+}(S)})$ otherwise $p_1\in
I^{-}(\tilde{D}^{+}(\overline{E^{+}(S)}))$, a contradiction (recall Th.\ \ref{xiq}).

As a consequence, $q_1 \notin \tilde{D}^{+}(\p {I}^{+}(S))$. In fact, suppose not,  $q_1\in \tilde{D}^{+}(\p {I}^{+}(S))$, then not all timelike curves ending at $q_1$ can intersect $\overline{E^{+}(S)}$ otherwise $q_1 \in \tilde{D}^{+}(\overline{E^{+}(S)})$. Thus there is one timelike curve $\sigma$ which intersects $\p {I}^{+}(S)\backslash \overline{E^{+}(S)}$ at a point $r$. But there is a past inextendible causal curve entirely contained in $\p {I}^{+}(S)$ with future endpoint $r$, thus if $r\ne q_1$ using Th.\ \ref{soa} we can modify $\sigma$ getting a past inextendible timelike curve ending at $q_1$ and not intersecting $\p {I}^{+}(S)$, a contradiction.  The possibility $r=q$ is excluded since $q_1\in I^{+}(S)$.

Since $q_1 \notin \tilde{D}^{+}(\p {I}^{+}(S))$
there is a past inextendible timelike curve $\gamma_1$ that does not
intersect $\p {I}^{+}(S)$ (and hence
$\tilde{D}^{+}(\p {I}^{+}(S))$), and thus it is entirely contained
in $I^{+}(S)$. This curve cannot be totally imprisoned in $U_{i_1}$
otherwise strong causality is violated in $\overline{U}_{i_1}$ a
contradiction with the global hyperbolicity of $V_{i_1}$. Thus there
is a point $q_1'\in \gamma_1\cap I^{+}(S)\cap U_{i_1}^{C}\cap
\tilde{D}^{+}(\p {I}^{+}(S))^{C}$. The timelike curve $\mu_1$
joining $S$ to $q_1'$ leaves the closed set
$\tilde{D}^{+}(\overline{E^{+}(S)})\subset
\tilde{D}^{+}(\p {I}^{+}(S))$ at a last point $p_2 \in
\p {\tilde{D}}^{+}(\overline{E^{+}(S)})=H^{+}(\overline{E^{+}(S)})\cup
\overline{E^{+}(S)}$.

If we had $p_2 \in
\overline{E^{+}(S)}\backslash H^{+}(\overline{E^{+}(S)})$ then  $p_2\in I^{-}(\tilde{D}^{+}(\overline{E^{+}(S)}))$, thus moving forward along $\mu$ starting from $p_2$ we would still be in $I^{-}(\tilde{D}^{+}(\overline{E^{+}(S)}))\cap I^+(S)\subset \tilde{D}^{+}(\overline{E^{+}(S)})$ at least for a small segment, a contradiction since $p_2$ was the last point in this set.
 Thus
$p_2 \in H^{+}(\overline{E^{+}(S)})$, and there is some $i_2$ such
that $p_2 \in U_{i_2}$ (here we do not claim that $i_2\ne i_1$, the
important fact is that $q_1'\notin U_{i_1}$). Following $\mu_1$
after $p_2$ we can find a point $q_2 \in I^{+}(p_2)\cap U_{i_2}$
before $q_1'$. Repeating the arguments given above and continuing in
this way we get a timelike curve $\eta$ which joins (past direction)
$q_1$ to $q_1'$ (along $\gamma_1$), $q_1'$ to $q_2$ (along $\mu_1$),
$q_2$ to $q_2'$ (along $\gamma_2$), and so on with $q_n \in
U_{i_n}$.  As $\eta$ is past inextendible  $U$ cannot be covered by causally convex sets.
\end{proof}

\begin{corollary} \label{vkj} Let $(M,C)$ be a locally Lipschitz proper cone structure.
Let $S$ be a closed and achronal set. Suppose that $E^{+}(S)\ne \emptyset$,
$\overline{E^{+}(S)}$ is compact and the strong causality
condition holds on $\overline{J^{+}(S)}$, then there is a future
inextendible timelike curve issued from $S$ and contained in
$D^{+}(\overline{E^{+}(S)})$.
\end{corollary}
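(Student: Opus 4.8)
The plan is to argue by contradiction and to reduce the statement to Lemma \ref{bqx}. Write $F:=\overline{E^+(S)}$. Since $E^+(S)\subset \p J^+(S)=\p I^+(S)$ is contained in an achronal boundary (Prop.\ \ref{mac}), $E^+(S)$ is achronal, and by the openness of $I$ its closure $F$ is achronal too; by hypothesis $F$ is also compact and nonempty. I would first record that $S\subset E^+(S)\subset F$: if some $p\in S$ lay in $\mathrm{Int}\,J^+(S)$ one could pick $q\ll p$ with $q\in J^+(S)$, hence $r\le q\ll p$ for some $r\in S$, giving $r\ll p$ by Th.\ \ref{soa}, which contradicts the achronality of $S$ (or causality, if $r=p$). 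Consequently $S\subset F\subset D^+(F)$, so it is meaningful to issue curves from $S$ into $D^+(F)$, and moreover $H^+(F)\subset\overline{D^+(F)}\subset\overline{J^+(F)}\subset\overline{J^+(S)}$, so $H^+(F)$ lies entirely in the region where strong causality is assumed.

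The first key step is the dichotomy for the Cauchy horizon. By Lemma \ref{bqx}, were $H^+(F)$ compact and nonempty, strong causality would fail on a neighborhood of it, hence at points of $\overline{J^+(S)}$, against the hypothesis. Therefore $H^+(F)$ is either empty or noncompact. In both cases I would deduce that $\overline{D^+(F)}$ is noncompact: if $H^+(F)$ is noncompact this is immediate from $H^+(F)\subset\overline{D^+(F)}$; if $H^+(F)=\emptyset$, then from the identity $\overline{D^+(F)}=\mathrm{Int}\,D^+(F)\cup H^+(F)\cup F$ the set $\mathrm{Int}\,D^+(F)$ has no future boundary, and a future inextendible timelike integral curve of a smooth timelike field $V$ (which exists by properness, Th.\ \ref{mmz}) issued from a suitable $p\in S$ cannot be imprisoned in the compact $\overline{D^+(F)}$ because strong causality forbids imprisonment (Prop.\ \ref{iiu}); this already produces the desired curve, so I may assume $\overline{D^+(F)}$ noncompact.

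The heart of the proof is then a limit curve construction exploiting the compactness of $F$. Since $\overline{D^+(F)}$ is noncompact I would choose $q_n\in D^+(F)$ escaping every compact set, with $q_n\in I^+(S)$. Each $q_n$ is joined to a point $s_n\in S$ by a timelike curve $\sigma_n$ lying in $D^+(F)$: here I use that the development is causally convex toward $F$, i.e.\ $J^+(F)\cap J^-(D^+(F))\subset D^+(F)$, which follows from the definition of $D^+$ and the achronality of $F$ (a past inextendible causal curve through an intermediate point, continued to $q_n$, must still meet $F$). Because $s_n\in S\subset F$ and $F$ is compact, after passing to a subsequence $s_n\to s\in S$ ($S$ being closed). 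Applying the limit curve theorem \ref{main} to $\sigma_n$, with endpoints $s_n\to s$ and $q_n\to\infty$, yields a future inextendible continuous causal curve $\sigma$ issued from $s\in S$ and contained in $\overline{D^+(F)}$. Finally I would upgrade $\sigma$ to a timelike curve with the same starting point using Th.\ \ref{soa} (prepending a short timelike segment into $\mathrm{Int}\,D^+(F)$ and deforming), obtaining the future inextendible timelike curve issued from $S$ and contained in $D^+(F)$.

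The step I expect to be the main obstacle is the boundary and edge control: ensuring that the approximating curves $\sigma_n$, and their limit $\sigma$, genuinely stay inside $D^+(F)$ rather than escaping through the edge of $F$ into $M\setminus\overline{D^+(F)}$ or running onto $H^+(F)$. Controlling this requires the causal convexity of the development toward $F$ together with the fact that $\p D^+(F)\subset H^+(F)\cup F$ and that, by Th.\ \ref{juf}, $H^+(F)\setminus F$ is an achronal locally Lipschitz hypersurface whose generators are lightlike; the dichotomy ``$H^+(F)$ empty or noncompact'' is precisely what prevents the limit curve from being forced onto the horizon. The remaining verification that $\sigma$ can be taken timelike and issued exactly from a point of $S$ (rather than from $E^+(S)\setminus S$) is handled by keeping the starting points $s_n$ in the closed set $S$ throughout the limit argument.
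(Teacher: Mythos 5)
Your opening tracks the paper's proof: you note $S\subset E^{+}(S)\subset F:=\overline{E^{+}(S)}$, you place $H^{+}(F)$ inside $\overline{J^{+}(S)}$, and you use Lemma \ref{bqx} (together with Lemma \ref{spm}) to force the horizon to be empty or non-compact; your empty-horizon case (an integral curve of a timelike field) is exactly the paper's. The genuine gap is in the non-empty case, at the final step. The limit curve theorem \ref{main} applied to your timelike curves $\sigma_n$ produces only a future inextendible \emph{continuous causal} curve $\sigma$ contained in $\overline{D^{+}(F)}$: timelikeness does not survive limits, and membership in $D^{+}(F)$ does not pass to the closure (recall $\overline{D^{+}(F)}=\tilde{D}^{+}(F)$ by Th.\ \ref{xiq}, which is in general strictly larger than $D^{+}(F)$). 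Nothing in your construction prevents $\sigma$ from being, from some point on, a lightlike generator of the non-compact horizon: the points $q_n$ may escape to infinity hugging $H^{+}(F)$, and the dichotomy you invoke is irrelevant to this --- non-compactness of $H^{+}(F)$ is precisely what permits such an escape. The proposed repair via Th.\ \ref{soa} is not available: that theorem deforms a causal curve \emph{between two fixed endpoints} into a timelike one; there is no version for inextendible curves, and deforming longer and longer initial segments does not produce a single inextendible timelike curve in the limit (the limit would again be merely causal). This is exactly the difficulty the paper's argument is built to avoid: it works from the start with integral curves of a complete timelike vector field, shows that flowing backwards defines a continuous injection with continuous inverse $H^{+}(F)\to F$ (using $H^{+}(F)\subset \tilde{D}^{+}(F)$ and achronality of both sets), and observes that surjectivity would give a homeomorphism between a non-compact space and the compact $F$; a point of $F$ missed by this map yields an integral curve --- timelike and inextendible by construction --- that never meets $H^{+}(F)$ and hence, by achronality of $F$, remains in $D^{+}(F)$.

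A secondary, repairable inaccuracy: the ``causal convexity toward $F$'' you assert, $J^{+}(F)\cap J^{-}(D^{+}(F))\subset D^{+}(F)$, is false for general closed achronal sets. In $1{+}1$ Minkowski take $F=\{(0,0),(1,1)\}$ (null-related, hence achronal and closed): then $D^{+}(F)=F$, while the midpoint $(\tfrac{1}{2},\tfrac{1}{2})$ of the null segment lies in $J^{+}(F)\cap J^{-}(D^{+}(F))$ and a past-directed vertical timelike line through it misses $F$. Your concatenation argument only shows that the concatenated curve meets $F$ --- possibly on the appended segment. What you actually need, and what is true, is the timelike version: for $\sigma_n$ timelike from $s_n\in S\subset F$ to $q_n\in D^{+}(F)$, a meeting point on the portion of $\sigma_n$ above the intermediate point would lie in $I^{+}(s_n)$, contradicting achronality of $F$, so indeed $\sigma_n\subset D^{+}(F)$. (Incidentally, your construction keeps the starting points in $S$, which is closer to the letter of the statement than the paper's own proof, whose non-empty-horizon case issues the curve from a point of $\overline{E^{+}(S)}$; but this does not compensate for the gap above.)
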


\begin{proof}
Let $V$  be a $C^1$ complete timelike vector
field. If $H^{+}(\overline{E^{+}(S)})$ is empty the desired result is
trivial, just follow an integral line starting from $S$. If not the
integral lines of the field ending at $H^{+}(\overline{E^{+}(S)})$
must intersect $\overline{E^{+}(S)}$ as
$H^{+}(\overline{E^{+}(S)})\subset
\tilde{D}^{+}(\overline{E^{+}(S)})$. This continuous map sends
$H^{+}(\overline{E^{+}(S)})$ to $\overline{E^{+}(S)}$ and has a
continuous inverse defined on its image. Thus if it is surjective
there is a homeomorphism between $H^{+}(\overline{E^{+}(S)})$ and
$\overline{E^{+}(S)}$ with the induced topologies. However, this is
impossible because the former is non-compact while the latter is
compact. Thus there is a future inextendible integral line issued
from $\overline{E^{+}(S)}$ which does not intersect
$H^{+}(\overline{E^{+}(S)})$. By achronality it cannot intersect
$\overline{E^{+}(S)}$ thus it is contained in
$D^{+}(\overline{E^{+}(S)})$.
\end{proof}

\begin{lemma} \label{bha}
Let $(M,C)$ be a locally Lipschitz proper cone structure.
For every set $S$, $E^{+}(S)\subset E^{+}(\bar{S})$. If $S$ is
achronal and $E^{+}(S)$ is closed then $E^{+}(\bar{S})=E^{+}(S)$.
\end{lemma}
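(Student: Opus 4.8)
The plan is to reduce the whole statement to two facts available for locally Lipschitz proper cone structures, and then let achronality and closedness of $E^+(S)$ do the rest. The two facts are $\mathrm{Int}\, J^+(S)=I^+(S)$ for every set $S$ (Theorem~\ref{soa}) and $I^+(S)=I^+(\bar S)$ (openness of the chronological relation, as recorded just before Prop.~\ref{por}). Together they yield the key identity $\mathrm{Int}\, J^+(S)=\mathrm{Int}\, J^+(\bar S)$, which is what makes the two horismos comparable even though $J^+(S)$ and $J^+(\bar S)$ differ.

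For the universal inclusion $E^+(S)\subset E^+(\bar S)$ I would argue directly: if $q\in E^+(S)=J^+(S)\setminus \mathrm{Int}\, J^+(S)$, then $q\in J^+(S)\subset J^+(\bar S)$ while $q\notin \mathrm{Int}\, J^+(S)=\mathrm{Int}\, J^+(\bar S)$, so $q\in E^+(\bar S)$. No achronality is needed here; only the identity above and the monotonicity $J^+(S)\subset J^+(\bar S)$.

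For the equality, the decisive observation is that $S$ is contained in its own horismos: since $J$ is reflexive we have $S\subset J^+(S)$, and since $S$ is achronal no point of $S$ can lie in $I^+(S)=\mathrm{Int}\, J^+(S)$; hence $S\subset E^+(S)$. Invoking that $E^+(S)$ is closed then upgrades this to $\bar S\subset E^+(S)\subset J^+(S)$. This is the step that removes the genuine difficulty, namely the a priori danger that a point reachable from $\bar S$ need not be reachable from $S$: once $\bar S\subset J^+(S)$ is known, transitivity of $J$ turns $J^+(\bar S)$ into a subset of $J^+(S)$.

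Finishing is then routine. Given $q\in E^+(\bar S)$ there is $p\in \bar S$ with $(p,q)\in J$; since $p\in \bar S\subset J^+(S)$ there is $s\in S$ with $(s,p)\in J$, and transitivity gives $(s,q)\in J$, i.e.\ $q\in J^+(S)$. On the other hand $q\notin \mathrm{Int}\, J^+(\bar S)=\mathrm{Int}\, J^+(S)$, so $q\in E^+(S)$, proving $E^+(\bar S)\subset E^+(S)$ and hence equality. I expect the main obstacle to be purely conceptual: resisting the temptation to establish $\bar S\subset J^+(S)$ by a limit curve argument (which would awkwardly produce causal curves emanating from points of $\bar S$ rather than of $S$) and instead noticing the clean inclusion $S\subset E^+(S)$, after which closedness and transitivity close the argument with no analysis at all.
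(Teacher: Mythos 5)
Your proof is correct and follows essentially the same route as the paper's: both rest on the identity $\mathrm{Int}\, J^+(S)=I^+(S)=I^+(\bar S)=\mathrm{Int}\, J^+(\bar S)$ (Th.~\ref{soa} plus openness of $I$), the observation $S\subset E^+(S)$ coming from reflexivity of $J$ and achronality, closedness to upgrade this to $\bar S\subset E^+(S)\subset J^+(S)$, and transitivity of $J$ to conclude $J^+(\bar S)\subset J^+(S)$. The paper merely phrases the final step through the pointwise horismos $E^+(p)$ and $E^+(r)$ instead of invoking transitivity directly, which is a purely cosmetic difference.
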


\begin{proof}
Since $I$ is open, $I^{+}(\bar{S})=I^{+}(S)$, and since
$J^{+}(S)\subset J^{+}(\bar{S})$, we get $E^{+}(S)\subset
E^{+}(\bar{S})$.  Suppose $S$ is
achronal and $E^{+}(S)$ is closed. Since $S \subset E^{+}(S)$, and the latter set is
closed, $\bar{S} \subset E^{+}(S)$. If $q \in
E^{+}(\bar{S})$ then there is $p\in \bar{S}$ such
that $q \in E^{+}(p)$. But $p\in \bar{S}\subset E^{+}(S)$, thus there
is $r \in S$ such that $p \in E^{+}(r)$. Thus $q \in J^{+}(r)$ that
is, $q \in J^{+}(S)$, and using $I^{+}(S)=I^{+}(\bar{S})$ it follows that
$q \in E^{+}(S)$.
\end{proof}

\begin{proposition} \label{cod}
Let $(M,C)$ be a locally Lipschitz proper cone structure.
Let $S$ be a non-empty compact set, then $E^{+}(S)\cap S \ne
\emptyset$ or $S$ intersects the chronology violating set of
$(M,C)$. In the former case, defining $A= E^{+}(S)\cap S$, $A$ is
non-empty, closed achronal and we have $I^{+}(A)\subset I^{+}(S)$,
$J^{+}(A)\subset J^{+}(S)$ and $E^{+}(S)\subset E^{+}(A)$. Moreover,
if strong causality holds on $S$ the converse inclusions hold.
\end{proposition}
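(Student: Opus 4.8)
The plan is to reduce everything to the single set identity $A=E^+(S)\cap S=S\setminus I^+(S)$, the last equality following from $S\subset J^+(S)$ (reflexivity of $J$) and $\mathrm{Int}\,J^+(S)=I^+(S)$ (Th.\ \ref{soa}). For the dichotomy I would suppose $A=\emptyset$; then $S\subset I^+(S)$, so by compactness there is a finite subcover $S\subset\bigcup_{i=1}^{k} I^+(p_i)$ with $p_i\in S$. Each $p_i$ then lies in some $I^+(p_{j(i)})$, i.e.\ $p_{j(i)}\ll p_i$; regarding $i\mapsto j(i)$ as a self-map of the finite set $\{1,\dots,k\}$, iteration produces a cycle $p_{a_1}\ll p_{a_2}\ll\cdots\ll p_{a_m}\ll p_{a_1}$, whose concatenation is a closed timelike curve through a point of $S$. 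Hence $S$ meets the chronology violating set, which yields the stated alternative.

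In the former case $A\neq\emptyset$ by hypothesis. Closedness is immediate, since $A=S\cap(M\setminus I^+(S))$ is the intersection of the compact set $S$ with a closed set ($I^+$ being open). Achronality is equally direct: if $p,q\in A$ with $p\ll q$, then $q\in I^+(p)\subset I^+(S)$, contradicting $q\in A=S\setminus I^+(S)$. The inclusions $I^+(A)\subset I^+(S)$ and $J^+(A)\subset J^+(S)$ hold trivially because $A\subset S$. For $E^+(S)\subset E^+(A)$, I would take $q\in E^+(S)=J^+(S)\setminus I^+(S)$; then $q\notin I^+(S)\supset I^+(A)$, and choosing $p\in S$ with $p\le q$ I claim $p\in A$. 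Indeed, were $p\in I^+(S)$, say $r\ll p$ with $r\in S$, then $r\ll p\le q$ gives $r\ll q$ by $J\circ I\subset I$ (Th.\ \ref{soa}), so $q\in I^+(S)$, a contradiction. Thus $p\in A$, whence $q\in J^+(A)\setminus I^+(A)=E^+(A)$.

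It remains to treat the converse inclusions under strong causality on $S$. Here I would first note that all three, namely $I^+(S)\subset I^+(A)$, $J^+(S)\subset J^+(A)$ and $E^+(A)\subset E^+(S)$, follow at once from the single claim $S\subset J^+(A)$: for the first one uses $J\circ I\cup I\circ J\subset I$ (Th.\ \ref{soa}), and for the third one combines $J^+(A)\subset J^+(S)$ with the already-derived $I^+(S)\subset I^+(A)$. Proving $S\subset J^+(A)$ is the crux and the main obstacle. The plan is to fix $p\in S$ and descend into the past: writing $T:=S\cap J^-(p)$, if some $a\in T$ satisfies $I^-(a)\cap S=\emptyset$, then $a\in A$ and $a\le p$ and we are done; otherwise every point of $T$ is chronologically preceded by a point of $T$, so I would build a $\ll$-descending sequence $p=p_0\gg p_1\gg\cdots$ in $T$ and concatenate the connecting timelike curves into a past-directed causal curve $\gamma$ parametrized by $h$-arc length.

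If $\gamma$ has infinite past $h$-length it is past-inextendible (Cor.\ \ref{dox}); an accumulation point $p_\ast\in S$ of $\{p_i\}$, together with a relatively compact causally convex neighborhood $U\ni p_\ast$ (strong causality), forces by causal convexity the whole past tail of $\gamma$ into the compact set $\overline U$, contradicting non-total imprisonment (which holds along the causal ladder, Th.\ \ref{cau}). If instead the past length is finite, then $\gamma$ has a past endpoint $p_\infty=\lim_i p_i\in S$ with $p_\infty\le p$, so $p_\infty\in T$, and one continues the descent below $p_\infty$. The delicate point is to guarantee that this a priori transfinite descent terminates at a point of $A$; I would control it by observing that the chronological futures $I^+(p_\alpha)$ form a strictly increasing chain of open sets (strictness using $p_\alpha\notin I^+(p_\alpha)$ by causality, together with $p_\alpha\in I^+(p_\beta)$ for $\beta>\alpha$), and across limit stages $I^+(p_\lambda)\supsetneq\bigcup_{\alpha<\lambda}I^+(p_\alpha)$. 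Since $M$ is second countable, a strictly increasing chain of open sets admits at most countably many strict increments, so the recursion must halt after countably many steps, and it can halt only by reaching a point $a$ with $I^-(a)\cap S=\emptyset$, that is $a\in A$ with $a\le p$. This establishes $S\subset J^+(A)$ and hence all three converse inclusions.
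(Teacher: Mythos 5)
Your dichotomy argument, the elementary properties of $A$, the proof of $E^{+}(S)\subset E^{+}(A)$, and the reduction of all three converse inclusions to the single claim $S\subset J^{+}(A)$ are correct, and the first three coincide with the paper's own steps. The genuine problem lies in your termination argument for the transfinite descent: the claimed strict inclusion at limit stages, $I^{+}(p_\lambda)\supsetneq\bigcup_{\alpha<\lambda}I^{+}(p_\alpha)$, is false. Indeed $p_\lambda$ is constructed precisely as the limit of a cofinal sequence $p_{\alpha_n}$, $\alpha_n\nearrow\lambda$, and $I$ is open for a locally Lipschitz proper cone structure; so for any $x\in I^{+}(p_\lambda)$ the open set $I^{-}(x)$ contains $p_\lambda$ and hence $p_{\alpha_n}$ for large $n$, giving $x\in I^{+}(p_{\alpha_n})$. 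Together with $p_\lambda\ll p_\alpha$ this forces \emph{equality} $I^{+}(p_\lambda)=\bigcup_{\alpha<\lambda}I^{+}(p_\alpha)$ — already in Minkowski spacetime. Fortunately your strategy survives without this claim: the strict increments at \emph{successor} stages (which you justify correctly; chronology at points of $S$ follows from strong causality on $S$ via the neighborhoods of Prop.\ \ref{iiu}) already suffice, because a recursion of length $\omega_1$ would produce $\omega_1$ strict increments in an increasing chain of open sets, whereas $M$, being second countable, is hereditarily Lindel\"of, so the union of the chain equals $I^{+}(p_\beta)$ for some countable $\beta$, contradicting strictness at $\beta+1$. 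One should also record that the countable transfinite concatenation is still a continuous causal curve ($1$-Lipschitz in the $h$-arc length parameter, with causal derivative a.e., the limit parameters being countably many).

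A second, smaller flaw: you invoke Th.\ \ref{cau} for non-total imprisonment, but that implication needs strong causality (or distinction) \emph{globally}, while the hypothesis gives it only on $S$. The fix is local and uses the paper's own tools: by strong causality at the accumulation point $p_\ast\in S$, the causally convex neighborhood $U$ can be taken inside a neighborhood of the type of Prop.\ \ref{iiu} (cf.\ Th.\ \ref{dao}), so all causal curves in $\overline U$ have bounded $h$-arc length, contradicting the infinite length of the imprisoned past tail (Cor.\ \ref{dox}). It is worth seeing how the paper sidesteps all of the transfinite machinery: it runs the descent at order type $\omega$ but chooses each step \emph{greedily}, $l_i\ge\min(d_i/2,1)$ with $d_i$ the supremal $h$-length of admissible timelike steps; then a finite total length is impossible, since the limit point, not lying in $A$, lies in $I^{+}(S)$ and therefore supplies admissible steps of length bounded away from zero at all late stages, contradicting $l_i\to 0$. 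The infinite-length case is then excluded by the same local imprisonment argument you need. So your route is salvageable and genuinely different, but as written it rests on a false lemma and a misapplied global theorem.
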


\begin{proof}
Clearly $S\subset J^{+}(S)$, thus the only way in which it could be
$E^{+}(S)\cap S=\emptyset$ is that $S\subset I^{+}(S)$. Consider the
set of open sets $\mathcal{A}=\{I^{+}(p), p \in S\}$, it provides a covering
of the compact $S$, thus there  are a finite number of points
$p_1,\ldots p_n \in S$ and a finite subcovering
$\{I^{+}(p_1),\ldots, I^{+}(p_n)\}$. Each $p_i$ belongs to the
future of some $p_j$, and going backwards, since there are only finitely many elements, one finally finds twice the same $p_k$, thus $p_k\ll p_k$.

Let us consider the case $E^{+}(S)\cap S \ne \emptyset$ and let us
define $A= E^{+}(S)\cap S$. Since $A\subset S$, we have
$I^{+}(A)\subset I^{+}(S)$, $J^{+}(A)\subset J^{+}(S)$. Let $q\in
E^{+}(S)$, then there is a point $p \in S$, such that $p \le q$. It
cannot be $p \in I^{+}(S)$ otherwise $q\in I^{+}(S)$, thus $p \in
S\backslash I^{+}(S)=A$. As a consequence $q \in J^{+}(A)$.
Moreover, $q \notin I^{+}(A)$ otherwise $q \in I^{+}(S)$. We
conclude $q \in E^{+}(A)$, and hence $E^{+}(S)\subset E^{+}(A)$.


For the reverse inclusions assume strong causality holds at $S$. Suppose by contradiction that
$q \in I^{+}(S)\backslash I^+(A)$ (or $q \in J^{+}(S)\backslash J^+(A)$) then there is some $p_1 \in S$,
$p_1 \ll q$ (resp. $p_1 \le q$). We cannot have $p_1\in A$, thus $p_1\in I^{+}(S)$ and there is $p_2\in S$ such that $p_2 \ll
p$. Again necessarily $p_2\notin A$ otherwise $q\in I^+(S)$, so $p_2 \in I^+(S)$. We want to formalize what it means to ``continue in this way''. Let $h$ be a complete Riemannian metric, and let $l_1$ be the $h$-arc length of a timelike curve $\gamma_1$ connecting $p_2$ to $p_1$.
The point $p_2$  and the timelike curve $\gamma_1$ might be chosen in many ways. It is chosen so that $l_1\ge \textrm{min}(d_1/2, 1)$ where $d_1$ is the supremum of $l_1$ for all the possible choices (possibly $d_1=\infty$). By imposing the same criterion  for each step we obtain a sequence of timelike curves which can be joined to form a curve $\gamma$.

Let us show that it cannot hold that $0<a=\sum_i l_i<+\infty$. The convergence of the series implies that
$p_k$ is a Cauchy sequence, thus converging to some point $r\in S$. Then the $h$-arc length parametrized continuous causal curve $\gamma\colon (-a,0] \to M$, $\gamma(0)=p_1$, becomes a continuous causal curve $\gamma\colon [-a,0] \to M$ by setting $\gamma(-a)=r$ (i.e.\ continuous and almost everywhere differentiable with causal tangent). Moreover, for some $\delta>0$, $\gamma(-a)\le \gamma(-a+\delta)\ll p_1\le q$, thus $\gamma(-a)\in I^{-}(q)$ so $\gamma(-a)\notin A$ and hence $\gamma(-a)\in I^+(S)$. Thus $\gamma$ could be extended to an $h$-arc length parametrized continuous causal curve $\tilde \gamma\colon [-a-\epsilon,0]\to M$ with $\gamma(-a-\epsilon)\in S$. For sufficiently large $i$, $l_i<\epsilon/2$ which contradicts the definition of $l_i$.

The possibility $a=+\infty$ would imply that $\gamma$ is past inextendible  and partially imprisoned in $S$ which is impossible because by strong causality on $S$, $S$ is covered by a finite number of causally convex neighborhoods, each of them being intersected only once by $\gamma$.

We conclude that $I^+(S)=I^+(A)$ and $J^+(S)=J^+(A)$ and thus also $E^{+}(S)=E^{+}(A)$.
\end{proof}

The next version of Hawking and Penrose's theorem is completely causal. Unfortunately, the causality condition has to be strengthened since we have proved Th.\ \ref{mbx} only under a locally Lipschitz and proper condition. Under such  condition (i) can be replaced by causality.

\begin{theorem} \label{bae} (Improved  Hawking and Penrose's singularity theorem)\\
Let $(M,C)$ be a closed cone structure. The following conditions cannot all hold:
\begin{itemize}
\item[(i)] $(M,C)$ is stably causal,
\item[(ii)] $(M,C)$ has no lightlike line and it is causally connected,
\item[(iii)] there is a compact future (or past) trapped set $S$.
\end{itemize}
Moreover, (iii) can be weakened to
\begin{itemize}
\item[(iii')] there is a non-empty compact set $S$ such that $E^+(S)$ or $E^-(S)$ are bounded.
\end{itemize}
\end{theorem}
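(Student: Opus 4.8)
The plan is to argue by contradiction: assume the closed cone structure $(M,C)$ satisfies (i), (ii) and (iii) simultaneously, and produce a \emph{compact} set which causally disconnects $(M,C)$, contradicting the causal connectedness demanded in (ii). It suffices to treat a future trapped set, the past case being time dual. Moreover (iii$'$) reduces to (iii): if $E^+(S)$ is bounded then $\overline{E^+(S)}$ is compact, whence by the stability of compact trapped sets (Th.\ \ref{sto}) the set $E^+(S)$ is closed and therefore compact, i.e.\ $S$ is a future trapped set. So I may assume $S$ is a non-empty compact set with $E^+(S)$ compact.

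The first step is a reduction to the locally Lipschitz proper setting, where Prop.\ \ref{cod}, Cor.\ \ref{vkj} and Lemma \ref{bha} are available. Combining Th.\ \ref{sqd} with the stably causal version of Th.\ \ref{sto} and with the stability of the absence of lightlike lines, I would fix a decreasing sequence of locally Lipschitz proper cone structures $C_k$, $C<C_{k+1}<C_k$, $\cap_k C_k=C$, each stably causal (hence strongly causal by Th.\ \ref{mwh}, so non-totally imprisoning), each without lightlike lines, and each retaining $S$ as a future trapped set, $E_k^+(S)$ compact. For a single $C_k$ one normalizes the trapped set via Prop.\ \ref{cod}: since $C_k$ is causal, $S$ misses the chronology violating set, so $A_k:=E_k^+(S)\cap S$ is a non-empty compact achronal set with $E_k^+(A_k)=E_k^+(S)$ compact, and by Prop.\ \ref{mac} (applied to $A_k$) the compact set $\overline{E_k^+(A_k)}=E_k^+(A_k)=\p J_k^+(A_k)$ is an achronal boundary, hence achronal.

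The core of the argument manufactures the disconnecting data for each $C_k$. Applying Cor.\ \ref{vkj} to $A_k$ (strong causality holds on all of $M$) gives a future inextendible $C_k$-timelike curve issued from $A_k$ and contained in $D_k^+(\overline{E_k^+(A_k)})$; by non-total imprisonment it leaves every compact set, so its points run off to future infinity. Fixing a point $q$ far out along it, one has $q\in D_k^+(\overline{E_k^+(A_k)})\setminus\overline{E_k^+(A_k)}$, so every past inextendible $C_k$-causal curve through $q$ meets the compact set $\overline{E_k^+(A_k)}$ and then escapes to past infinity. Here the absence of lightlike lines is decisive: were the portion of such a curve lying to the past of its crossing of $\overline{E_k^+(A_k)}$ to remain $\mathring{J}_k$-arelated it would prolong to an inextendible $C_k$-lightlike line, which is excluded; hence it becomes chronally related and one may choose a vertex $p$ arbitrarily far in the past with $p\in I_k^-(\overline{E_k^+(A_k)})$. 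By $J\circ I\subset I$ (Th.\ \ref{soa}) one gets $J_k^-(p)\subset I_k^-(\overline{E_k^+(A_k)})$, which is disjoint from the achronal set $\overline{E_k^+(A_k)}$; thus prepending any past inextendible prolongation to a $C_k$-causal curve from $p$ to $q$ and using $q\in D_k^+$, the forced crossing of $\overline{E_k^+(A_k)}$ must occur on the curve itself. Consequently \emph{every} $C_k$-causal curve from $p$ to $q$ meets $\overline{E_k^+(A_k)}$, while $p$ and $q$ can be driven to past and future infinity.

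To turn this into a genuine disconnection of $(M,C)$ I would run the construction along $C_k\downarrow C$ and pass to the limit with the limit curve theorem (Th.\ \ref{main}), exactly as in the proofs of Th.\ \ref{nig} and Th.\ \ref{pen}, so that the connecting curves become $C$-causal; the crossings stay in a fixed bounded region because $\mathrm{Int}\,J^+(S)\subset\mathrm{Int}\,J_k^+(S)$. This would exhibit a compact set causally disconnecting $(M,C)$, contradicting (ii). The main obstacle is precisely this passage to the limit: causal connectedness is not manifestly preserved under widening of the cones, since the witnessing pairs $p,q$ are only $C_k$-related, so the disconnection must be realized with honestly $C$-causal connecting curves. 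This requires a careful diagonal limit-curve argument that simultaneously controls the vertices $p,q\to\infty$, the crossing points in the shrinking compacta $\overline{E_k^+(A_k)}$, and the chronal (non-lightlike-line) placement of the past vertex uniformly in $k$; managing these three limits at once is where the real work lies.
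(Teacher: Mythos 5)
Your proposal follows the paper's skeleton (Th.\ \ref{sto} to stabilize the trapped set, Prop.\ \ref{cod} to normalize it to a compact achronal set $A$, Cor.\ \ref{vkj} to obtain the future inextendible timelike curve, then a disconnection argument), but it has two genuine gaps. The first is your production of the past vertex $p$: you claim that if the portion of a past inextendible $C_k$-causal curve lying below its crossing of $\overline{E_k^+(A_k)}$ were $\mathring{J}_k$-arelated, it ``would prolong to an inextendible $C_k$-lightlike line.'' A lightlike line is by definition an \emph{inextendible} $\mathring{J}$-arelated curve, and no such prolongation result is available at this regularity; the paper explicitly warns (Remark after Th.\ \ref{aam}) that it is not proved that lightlike geodesics can be made inextendible while remaining lightlike geodesics. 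So the no-lightlike-line hypothesis cannot be brought to bear in this way. The detour is also unnecessary: the paper's proof simply extends the future inextendible timelike curve furnished by Cor.\ \ref{vkj} to the past, obtaining an inextendible \emph{timelike} curve $\gamma$ which, by achronality of $\tilde E^+(A)$, crosses that set exactly once; hence every point of $\gamma$ sufficiently far in the past lies in $\tilde I^-(\tilde E^+(A))$ automatically. (Indeed the paper's proof never uses the absence of lightlike lines at all: the contradiction is obtained against causal connectedness alone, so that hypothesis only makes the statement formally weaker.)

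The second, decisive gap is that your proof is never finished: you stop exactly at what you call ``the real work,'' namely the diagonal limit argument meant to convert the $C_k$-disconnections into a disconnection of $(M,C)$, and it is doubtful it can be completed as sketched, since the limit curve theorem cannot manufacture $C$-causal relatedness of the far-apart vertices out of mere $C_k$-relatedness, nor is your claimed uniform confinement of the crossing points in a fixed compact set substantiated (the horismoi $E_k^+$ are not monotone in $k$). This whole difficulty is self-inflicted by the choice to work with a sequence $C_k\downarrow C$. The paper runs the construction \emph{once}, for a single stably causal locally Lipschitz proper cone structure $\tilde C>C$ given by Th.\ \ref{sto}: it produces $p_n,q_n\to\infty$ on the timelike curve $\gamma$ such that every continuous $\tilde C$-causal curve from $p_n$ to $q_n$ meets the compact set $\tilde E^+(A)$, and then transfers the disconnection to $(M,C)$ in one line via curve containment --- every continuous $C$-causal curve is in particular a continuous $\tilde C$-causal curve --- contradicting (ii). In the paper's argument there is nothing to pass to the limit: no shrinking compacta $\overline{E_k^+(A_k)}$, no $k$-dependent vertices, and no appeal to the stability of the absence of lightlike lines, which the paper states but expressly declares ``will not be used.''
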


\begin{proof}
Assume (i), (ii), (iii')  hold true.
 By Th.\ \ref{sto} we can find a locally Lipschitz proper cone structure $\tilde C>C$ which is stably causal and such that $\tilde E^+(S)$ is compact (or analogously in the past case).
 By Prop. \ref{cod} the set
$A=S\backslash \tilde I^{+}(S)$ is non-empty, compact and $\tilde C$-achronal, and
moreover, $\tilde E^{+}(A)=\tilde E^{+}(S)$ is compact, thus $A$ is a compact
$\tilde C$-achronal trapped set for $(M,\tilde C)$.
By  $\tilde C$-achronality $A \subset \tilde E^{+}(A)$,
and hence $\tilde E^{+}(A)\ne \emptyset$. By corollary \ref{vkj} there is a
future inextendible $\tilde C$-timelike curve issued from $A$ and contained in
$\tilde D^{+}(\tilde E^{+}(A))$. Extend it to the past to obtain an inextendible
$\tilde C$-timelike curve $\gamma\colon \mathbb{R} \to M$. This curve intersects
$\tilde E^{+}(A)$ only once because of the $\tilde C$-achronality of this set. Let
$p_n=\gamma(t_n)$ with $t_n \to -\infty$, and let $q_n=\gamma(t'_n)$
with $t'_n \to +\infty$.  We have for all $n$, $q_n \in
\tilde D^{+}({\tilde E^{+}(A)})\cap \tilde I^{+}(A)$ and $p_n\in
\tilde I^{-}(\tilde E^{+}(A))$. Let us prove that the compact set $\tilde E^{+}(A)$
disconnects $(M,\tilde C)$. We have only to show that every continuous $\tilde C$-causal curve
$\sigma_n$ connecting $p_n$ to $q_n$ intersects $\tilde E^{+}(A)$. Continue
$\sigma_n$ below $p_n$ along $\gamma$ to obtain a past inextendible continuous
$\tilde C$-causal curve. Since $q_n \in \tilde{D}^{+}({\tilde E^{+}(A)})$, this
curve intersects $\tilde E^{+}(A)$ and the intersection point cannot be in the past of
$p_n\in
\tilde I^{-}(\tilde E^{+}(A))$, since this would violate the achronality of $\tilde E^{+}(A)$. Thus the intersection point is in $\sigma_n$ as required. But if $\tilde E^{+}(A)$
disconnects $(M,\tilde C)$ then it disconnects   $(M, C)$, a contradiction with (ii).
\end{proof}

\begin{theorem} \label{hws}
Let $(M,C)$ be a locally Lipschitz non-imprisoning proper Lorentz-Finsler space such that $\mathscr{F}(\p C)=0$. If $(M,C)$ is
causally disconnected by a compact set $K$ then there is a maximizing inextendible causal
geodesic which intersects $K$.
\end{theorem}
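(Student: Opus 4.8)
The plan is to manufacture, for each $n$, a \emph{maximizing} continuous causal curve that meets $K$ and reaches far into the causal future and past, and then to let $n\to\infty$ so that a limit curve argument delivers an inextendible maximizing causal curve through $K$; recalling that a continuous causal curve maximizing the Lorentz-Finsler distance between its endpoints is a causal bigeodesic then finishes the proof. Concretely, causal disconnection supplies sequences $p_n<q_n$ escaping every compact set, with every continuous causal curve from $p_n$ to $q_n$ meeting $K$. After fixing a complete Riemannian metric $h$, the first step is to produce, for each $n$, a maximizing continuous causal curve $\gamma_n$ meeting $K$ at some $k_n$; parametrizing $\gamma_n$ by $h$-arc length with $\gamma_n(0)=k_n$, its domain $[-a_n,b_n]$ should satisfy $a_n,b_n\to\infty$, since $p_n,q_n$ escape every compact set while $k_n$ stays trapped in $K$.

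Granting such $\gamma_n$, the limiting argument is routine. By compactness $k_n\to k\in K$ along a subsequence; extending each $\gamma_n$ to an inextendible continuous causal curve (Th.\ \ref{zar}) and applying the limit curve lemma with accumulation point $k$, one obtains an inextendible continuous causal curve $\gamma\colon\mathbb{R}\to M$ with $\gamma(0)=k\in K$, to which a subsequence converges $h$-uniformly on compact subsets. To see that $\gamma$ is maximizing, fix $s<t$; for $n$ large the interval $[s,t]$ lies inside the maximizing domain $[-a_n,b_n]$, so $\gamma_n\vert_{[s,t]}$ is maximizing, i.e.\ $\ell(\gamma_n\vert_{[s,t]})=d(\gamma_n(s),\gamma_n(t))$. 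Combining the upper semi-continuity of the length functional (Th.\ \ref{upp}) with the lower semi-continuity of $d$ (Th.\ \ref{ddb}, which uses $\mathscr{F}(\p C)=0$) yields
\begin{equation*}
d(\gamma(s),\gamma(t))\le \liminf_n d(\gamma_n(s),\gamma_n(t))=\liminf_n \ell(\gamma_n\vert_{[s,t]})\le \limsup_n \ell(\gamma_n\vert_{[s,t]})\le \ell(\gamma\vert_{[s,t]})\le d(\gamma(s),\gamma(t)),
\end{equation*}
forcing $\ell(\gamma\vert_{[s,t]})=d(\gamma(s),\gamma(t))$. Thus $\gamma$ is maximizing on every subinterval, hence an inextendible maximizing causal geodesic meeting $K$, as desired.

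The main obstacle is precisely the first step, the production of the maximizing curves $\gamma_n$. Because $(M,C)$ is \emph{not} assumed globally hyperbolic, the Avez--Seifert theorem (Th.\ \ref{sww}) is unavailable, a maximizer between $p_n$ and $q_n$ need not exist, the diamonds $J^+(p_n)\cap J^-(q_n)$ need not be compact, and $d(p_n,q_n)$ may even be infinite. The compactness that substitutes for global hyperbolicity is exactly the forced crossing of the compact set $K$. I would extract $\gamma_n$ by applying the limit curve theorem (Th.\ \ref{main}) to a maximizing sequence of $p_n$--$q_n$ causal curves: either the sequence converges to a connecting maximizer whose reach to both sides of $k_n$ diverges with $n$, or the escape-to-infinity alternative of Th.\ \ref{main} produces a maximizing ray issuing from $p_n$ or arriving at $q_n$ and still meeting $K$ (its crossing point cannot escape, being confined to $K$), whose reach again diverges in the $n$-limit since $p_n,q_n\to\infty$. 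The case $d(p_n,q_n)=+\infty$ is treated similarly after observing that a limit curve of finite $h$-arc length would have finite Lorentz-Finsler length, so a maximizing sequence must escape to infinity; one then maximizes along causal curves from $p_n$ to points of $K$, exploiting $d(p_n,q_n)=\sup_{k\in K}[d(p_n,k)+d(k,q_n)]$. Throughout, the fact that sub-segments of (almost) maximizers are (almost) maximizing is what renders the extracted segments and rays maximizing and feeds the semi-continuity sandwich above.
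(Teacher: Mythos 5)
Your overall architecture (maximizers through $K$, recentring at crossing points in $K$, limit curve lemma, and the semicontinuity sandwich combining Th.~\ref{upp} with Th.~\ref{ddb}) is exactly the paper's, and your sandwich is correct \emph{granted} genuinely maximizing curves $\gamma_n$ whose maximizing domains exhaust $\mathbb{R}$. The genuine gap is precisely where you locate "the main obstacle": your production of the $\gamma_n$ does not go through. The extraction from a maximizing sequence of $p_n$--$q_n$ curves fails when $d(p_n,q_n)=+\infty$, a case you cannot exclude. The almost-maximizing property of sub-segments, which is what would make the limit curves or rays of Th.~\ref{main} maximizing, rests on the subtraction $\ell(\sigma_m\vert_{[s,t]})\ge d(\sigma_m(s),\sigma_m(t))-\epsilon_m$, and that subtraction needs the total distance to be finite; when $\ell(\sigma_m)\to\infty$ the limit curves need not maximize anything. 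The reduction $d(p_n,q_n)=\sup_{k\in K}[d(p_n,k)+d(k,q_n)]$ does not rescue this, since $d(p_n,k)$ can itself be infinite and $d$ is only lower semicontinuous, so no finite-distance pair is produced. A secondary gap: in the escaping alternative of Th.~\ref{main} the crossing points lie in $K$ spatially, but their parameters along $\sigma_m$ may diverge, in which case the limit rays issuing from $p_n$ or arriving at $q_n$ do \emph{not} meet $K$; the correct move there is to recentre at the crossing points, which your sketch only does at the outer ($n\to\infty$) level.

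The paper resolves exactly this difficulty by a localization device you are missing: it never touches the possibly infinite global distance $d(p_n,q_n)$, but introduces a compact exhaustion $C_k\supset K\cup \bar B(o,k)$ (chosen so that some $p_k$--$q_k$ curve through $K$ lies in $C_k$) and the \emph{restricted} distances $d_k$, defined as suprema of $\ell$ over continuous causal curves confined to $C_k$ and meeting $K$. These are finite, since otherwise a sequence of curves in $C_k$ with diverging length would yield, by the limit curve theorem, an inextendible causal curve imprisoned in the compact set $C_k$, contradicting non-imprisonment (cf.\ Prop.~\ref{iiu}); for the same reason the escaping alternative of Th.~\ref{main} is unavailable, so a maximizer $\gamma_k\subset C_k$ for $d_k(p_k,q_k)$ exists by Th.~\ref{upp}. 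The maximality of the restrictions $\gamma_k\vert_{[a,b]}$ then holds with respect to $d_{\mathrm{Int}\,C_s}$ for large $k$: a longer competitor inside $\mathrm{Int}\,C_s\subset C_k$ could be spliced into $\gamma_k$, producing a longer $p_k$--$q_k$ curve still contained in $C_k$ and still crossing $K$ (automatically, by causal disconnection), a contradiction. The sandwich is run with $d_{\mathrm{Int}\,C_s}$, whose lower semicontinuity is again Th.~\ref{ddb} applied to $(\mathrm{Int}\,C_s,\mathscr{F}\vert_{T\mathrm{Int}\,C_s})$, and the global $d$ is recovered only at the very end by exhaustion, since every causal curve joining two fixed points of the limit curve lies in some $\mathrm{Int}\,C_s$. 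Without this (or an equivalent) localization, your first step cannot be completed.
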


\begin{proof}
By assumption there are sequences $p_k$ and $q_k$, $p_k<q_k$, going to infinity
(i.e. escaping every compact set) such that for each $k$ {\em every}
continuous causal curve connecting $p_k$ to $q_k$ intersects $K$. Let $h$ be a complete Riemannian metric, $o\in M$, and let $C_k$ be a sequence of compact sets such that $B(o,k)\cup K\subset C_k$ and there is  at least one  continuous causal curve connecting
$p_k$ to $q_k$ which intersects $K$
 contained in $C_k$.
%

Denote by $d_{k}(x,z)=\sup_{\eta \subset C_k} l(\eta)$ the
Lorentz-Finsler distance on $C_k$ obtained considering just the continuous
causal curves contained in $C_k$ and intersecting $K$.
We have $d_{k}(x,z)<+\infty$
otherwise there would be a sequence of continuous causal curves contained in
$C_k$ whose Lorentz-Finsler length goes to infinity and by the compactness of $C_k$, there would be a future inextendible continuous
causal curve totally imprisoned in $C_k$ which is impossible (see also Prop.\ \ref{iiu}).

For each $k$ let $\gamma_m^{(k)}$ be a sequence of continuous causal curves
such that $l(\gamma_m^{(k)})\to d_{k}(p_k,q_k)$.
By the limit curve theorem there is a  continuous causal limit curve,
$\gamma_k\colon [a_k, b_k] \to M$ (which we parametrize with respect to
$h$-length), $\gamma_k \subset C_k$ which connects
$p_k=\gamma_k(a_k)$ to $q_k=\gamma_k(b_k)$ and intersects $K$ (the
other possibility involves  a past inextendible causal curve ending
at $q_k$ totally imprisoned in $C_k$, which is impossible). Since
the length functional is upper semi-continuous
$d_{k}(p_k,q_k)=\limsup_{m \to +\infty} l(\gamma^{(k)}_m)\le
l(\gamma_k)\le d_{k}(p_k,q_k)$, thus $d_{k}(p_k,q_k)=l(\gamma_k)$,
i.e. the curve $\gamma_k$ maximizes the Lorentzian length on the
chosen curve set.

Again by the limit curve theorem a subsequence of $\gamma_k\colon [a_k,
b_k] \to M$ converges $h$-uniformly on compact subsets to an
inextendible limit curve $\gamma\colon\mathbb{R} \to M$ intersecting $K$
(we can assume that the subsequence coincides with $\gamma_k$, and
that $\gamma_k(0)\in K$). In particular $-a_k, b_k \to +\infty$. Let
us prove that $\gamma$ is a line.
Let
$a,b \in \mathbb{R}$, $a<b$, for sufficiently large $k$, $a_k<a$,
$b_k>b$ and $\gamma_k(a) \to \gamma(a)$ and $\gamma_k(b) \to
\gamma(b)$. For sufficiently large $s$, $\gamma([a,b])\subset
\mathrm{Int} C_s$. By $h$-uniform convergence  there is $n(s)$ such that for $k>n(s)$,
 $\gamma_k([a,b])\subset
\mathrm{Int} C_s \subset C_k$. But $\gamma_k\vert_{[a,b]}$ is a
restriction of $\gamma_k$ and thus it is also distance maximizing on
$\mathrm{Int} C_s $,  that is $d_{\mathrm{Int} C_s
}(\gamma_k(a),\gamma_k(b))= l(\gamma_k\vert_{[a,b]})$.

Using the lower semi-continuity of the distance $d_{\mathrm{Int}
C_s}$ on  $(\mathrm{Int} C_s, \mathscr{F}\vert_{T\mathrm{Int}
C_s})$ (Th.\ \ref{ddb}) and the upper semi-continuity of the length functional
\begin{align*}
 d_{\mathrm{Int} C_s}(\gamma(a),\gamma(b)) & \le \liminf d_{\mathrm{Int}
 C_s}(\gamma_k(a),\gamma_k(b))  \le \limsup
l(\gamma_k\vert_{[a,b]})  \\
& \le l(\gamma \vert_{[a,b]}) \le d_{\mathrm{Int}
C_s}(\gamma(a),\gamma(b))
\end{align*}
hence $d_{\mathrm{Int} C_s}(\gamma(a),\gamma(b))=l(\gamma
\vert_{[a,b]})$. Since every causal curve connecting $\gamma(a)$ to
$\gamma(b)$ belongs to some $\mathrm{Int} C_s$, we have
$d(\gamma(a),\gamma(b))=l(\gamma \vert_{[a,b]})$ that is $\gamma$ is
a maximizing causal geodesic.
\end{proof}


The next version can be easily compared with the original one, cf.\ \cite{hawking73} Remark on p.\ 267, however it uses a locally Lipschitz and proper assumption.

\begin{theorem}  (Improved Hawking and Penrose's singularity theorem II)\\
Let $(M,\mathscr{F})$ be a locally Lipschitz proper Lorentz-Finsler space such that $\mathscr{F}(\p C)=0$. The following conditions cannot all hold:
\begin{itemize}
\item[(i)] $(M,C)$ is chronological,
\item[(ii)] there are no maximizing inextendible causal geodesics,
\item[(iii)] there is an achronal or compact future (or past) trapped set $S$.
\end{itemize}
\end{theorem}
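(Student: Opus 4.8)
The plan is to assume that (i), (ii) and (iii) all hold and to manufacture a maximizing inextendible causal geodesic, contradicting (ii). The conceptual bridge between the geodesic hypothesis (ii) and the causal machinery is the observation that \emph{a lightlike line is a maximizing inextendible causal geodesic}: a lightlike line is inextendible and achronal ($=\mathring{J}$-arelated in the locally Lipschitz proper setting), so by Th.\ \ref{aam} its tangent is lightlike a.e., whence $\ell=0$ on every segment because $\mathscr{F}(\p C)=0$; moreover $d=0$ between any two of its points, since $d>0$ would force a timelike connection (Th.\ \ref{ddc}) violating achronality. Thus $d=\ell$ along it, so by the maximizing-curve criterion (the Proposition preceding Th.\ \ref{xux}) it is a causal bigeodesic, hence the sought maximizing inextendible causal geodesic. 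Consequently (ii) forbids lightlike lines, and Th.\ \ref{mbx} then makes $(M,C)$ causally easy, in particular strongly causal and, by the causal ladder (Th.\ \ref{cau}), non-imprisoning. (This already yields (i), so chronology is automatic and is listed only to match the classical statement.)

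Next I would normalize the trapped set, treating the future case since the past case is time-dual. If $S$ is compact I would apply Prop.\ \ref{cod}: strong causality excludes chronology violation, so $A:=E^+(S)\cap S$ is non-empty, compact and achronal with $E^+(A)=E^+(S)$ compact. If $S$ is achronal and possibly non-compact I would pass to $\bar S$: since $E^+(S)$ is compact hence closed, Lemma \ref{bha} gives $E^+(\bar S)=E^+(S)$ and $\bar S\subset E^+(S)$ is compact, while $\bar S$ is achronal because $p,q\in\bar S$ with $q\gg p$ would give, via $p\in J^+(S)$ and $I\circ J\cup J\circ I\subset I$ (Th.\ \ref{soa}), $q\in I^+(S)=\mathrm{Int}J^+(S)$, contradicting $\bar S\subset E^+(S)$. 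Either way I reduce to a compact achronal future trapped set $A$ with $K:=E^+(A)$ compact; note $K$ is itself achronal, for $q\in E^+(A)$ with $q\gg p\in E^+(A)$ would give $q\in I^+(A)=\mathrm{Int}J^+(A)$, impossible. Using $A\subset K\neq\emptyset$, $K=\bar K$ compact and global strong causality, Cor.\ \ref{vkj} produces a future inextendible timelike curve issued from $A$ and contained in $D^+(K)$; I would extend it to the past to an inextendible \emph{timelike} curve $\gamma$, which then meets the achronal set $K$ exactly once, at $\gamma(0)\in A$.

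The core step is to show that the compact set $K$ causally disconnects $(M,C)$. By non-imprisonment $\gamma$ leaves every compact set in both time directions, so I can choose $p_n=\gamma(t_n)$, $t_n\to-\infty$, and $q_n=\gamma(t_n')$, $t_n'\to+\infty$, both escaping to infinity, with $p_n\ll\gamma(0)\ll q_n$ and $q_n\in D^+(K)$. Given any continuous causal curve $\sigma_n$ from $p_n$ to $q_n$, I prolong it below $p_n$ along the past tail $\gamma|_{(-\infty,t_n]}$, obtaining a past inextendible causal curve ending at $q_n\in D^+(K)$; it must therefore meet $K$, and since $\gamma$ meets $K$ only at $\gamma(0)\notin\gamma((-\infty,t_n])$, the intersection lies on $\sigma_n$. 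Hence every causal curve joining $p_n$ to $q_n$ meets $K$, i.e.\ $K$ causally disconnects $(M,C)$.

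Finally, $(M,\mathscr{F})$ is a locally Lipschitz, non-imprisoning, proper Lorentz-Finsler space with $\mathscr{F}(\p C)=0$ that is causally disconnected by the compact set $K$, so Th.\ \ref{hws} yields a maximizing inextendible causal geodesic (meeting $K$), contradicting (ii). I expect the main obstacle to be the bookkeeping in the reduction and disconnection steps carried out \emph{directly} for $C$, without opening the cones as in Th.\ \ref{bae}: verifying that $K=E^+(A)$ is simultaneously compact, closed and achronal, that the past extension can be taken genuinely timelike so that $\gamma$ is transverse to the achronal $K$ exactly once, and that the prolonged curves detect $K$ on $\sigma_n$ rather than on the tail. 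The new ingredient relative to Th.\ \ref{bae} is replacing the appeal to causal connectedness by the geodesic-producing Th.\ \ref{hws}.
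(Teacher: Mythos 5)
Your main line of argument is sound, and at bottom it is the paper's own argument with the black box opened. The paper proves this statement in a few lines by reducing to Th.~\ref{bae}: it uses (i) and (ii) to get causality, Th.~\ref{mbx} to get stable causality and non-imprisonment, Lemma~\ref{bha} to close up $S$ in the achronal case, and Th.~\ref{hws} in \emph{contrapositive} form (no maximizing inextendible causal geodesics $\Rightarrow$ causally connected) to supply hypothesis (ii) of Th.~\ref{bae}. What you do instead is re-run the interior of Th.~\ref{bae}'s proof---Prop.~\ref{cod}, Cor.~\ref{vkj}, the inextendible timelike curve $\gamma$ in $D^+(K)$, the disconnection of $(M,C)$ by $K=E^+(A)$---directly for $C$, and then apply Th.~\ref{hws} in its direct form to manufacture the forbidden geodesic. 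This is legitimate precisely because here $C$ is locally Lipschitz and proper, so you never need the cone-widening $\tilde C>C$ and the stability theorem Th.~\ref{sto} that Th.~\ref{bae} must invoke (being stated for merely closed cone structures). Your bridge lemma (a lightlike line is a maximizing inextendible causal geodesic, via Th.~\ref{aam}, $\mathscr{F}(\p C)=0$, Th.~\ref{ddc} and the proposition preceding Th.~\ref{xux}) is exactly the fact the paper leaves implicit when it says lightlike lines are ``excluded by (ii)''; the disconnection bookkeeping (achronality of $E^+(A)$, the single intersection of $\gamma$ with $K$, prolongation of $\sigma_n$ along the past tail) is also correct, modulo the standard fact that under strong causality inextendible causal curves leave every compact set permanently, which the paper glosses equally.

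There is, however, one genuine error: the parenthetical claim that (ii) ``already yields (i), so chronology is automatic.'' It does not. Consider the flat Lorentzian torus $\mathbb{R}^2/\mathbb{Z}^2$ with $g=-\dd t^2+\dd x^2$: it is totally vicious ($I=M\times M$), hence admits no achronal sets and in particular no lightlike lines; moreover $d\equiv+\infty$ while every compact segment of a causal curve has finite length, so there are no maximizing inextendible causal geodesics either---yet the spacetime is not chronological. The source of the slip is that Th.~\ref{mbx}, as stated, omits a causality hypothesis that its proof actually uses: it concludes strong causality by invoking Lemma~\ref{rom}, which is stated for \emph{causal} closed cone structures (the cited source is titled ``Chronological spacetimes without lightlike lines are stably causal''). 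The paper's proof of the present theorem is written precisely so as to supply that hypothesis before citing Th.~\ref{mbx}: a closed causal curve would be achronal by (i), hence a lightlike line, hence a maximizing inextendible causal geodesic, excluded by (ii); thus $(M,C)$ is causal, and only then does Th.~\ref{mbx} apply. Your proof must respect the same order. The repair costs one sentence and leaves all your subsequent steps intact, since they use only strong causality, non-imprisonment and chronology, all of which are available once causality is secured from (i) and (ii) jointly.
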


\begin{proof}
Assume they all hold true. If there were a closed causal curve then it would be achronal  by (i) hence a lightlike line (recall that $C$ is locally Lipschitz), a case which is excluded by (ii). Thus $(M,C)$ is causal.  By Th.\ \ref{mbx} $(M,C)$ is stably causal hence non-imprisoning.
Assume $S$ is achronal and let us prove
that is can be assumed closed and achronal. Indeed, if it is not
closed then $\bar{S}$ is closed and achronal, moreover by Lemma
\ref{bha}, $E^{+}(\bar{S})=E^{+}(S)$ is compact. If there is no maximizing inextendible causal geodesics then by Th.\ \ref{hws} $(M,C)$ is causally connected. Thus we are back to the conditions (i)-(iii) of the first version, Th.\ \ref{bae}, and so we get a contradiction.
\end{proof}


\section{Special topics} \label{ca3}
This section is devoted to the development of some special topics which could be skipped on first reading. Sections \ref{fir}-\ref{las} should be read in this order and provide the proofs to some results already presented in the previous sections. 

\subsection{Proper Lorentz-Minkowski spaces and Legendre transform} \label{zzp}
We have already introduced the notion of  Lorentz-Minkowski space in  Sec.\ \ref{ngd}.
Here we develop the theory of proper Lorentz-Minkowski spaces, so in this section all cones will be proper (sharp, convex, closed and with non-empty interior). This study will motivate some of our terminology connected to Lorentz-Finsler spaces as it shows that some inequalities which are met in the $C^2$ Lorentz-Finsler theory \cite{minguzzi13c} really hold under much weaker assumptions. We shall also prove that the Legendre duality between Lorentz-Finsler Lagrangian and Hamiltonian does not require a $C^2$ assumption.

The polar cone of a proper cone is
\[
C^o=\{p\in V^*\backslash 0\colon \langle p, y\rangle\le 0, \textrm{for every } y\in C \},
\]
and it has the same properties as $C$, namely, it is a proper cone.
The polar of the polar is the original cone $(C^o)^o=C$.
If $D\subset C$ is a another proper cone then $D^o\supset C^o$.

\begin{remark} The polar of a round cone (ellipsoidal section)  is round. This fact can be easily understood with the concept of ice-cream cone which is a cone of angular aperture of $\pi/2$ with respect to some chosen scalar product. Notice that given a round cone $C$
  we can always find a scalar product and associated Cartesian coordinates such that $C$ is an ice-cream cone $0< (\sum^n_i (y^i)^2)^{1/2}\le y^0$. Then $C^o$ becomes the ice-cream cone $0<( \sum^n_i (p_i)^2)^{1/2}\le -p_0$ with respect to the dual coordinates, hence in arbitrary coordinates on the vector space they are round.
\end{remark}
 We have
\begin{align}
\mathrm{Int} (C^o)&=\{p\in V^*\backslash 0\colon \langle p, y\rangle<0, \textrm{for every } y\in C \}, \label{haa} \\
\mathrm{Int} C\,&=\{y\in V\backslash 0\colon \langle p, y\rangle<0, \textrm{for every } p\in C^o \}. \label{hab}
\end{align}

\begin{remark} \label{oqf}
As a consequence, for every $p \in \p C^o$ there is some $y\in \p C$ such that $\langle p, y\rangle=0$ and for every $y\in \p C$ there is some  $p \in \p C^o$ such that $\langle p, y\rangle=0$. Any pair $y\in C$, $p\in C^o$, such that $\langle p, y\rangle=0$ is said to be a polarly related pair. The polar relation is denoted $R$ and is positive homogeneous: $s>0$, $(p,y)\in R \Rightarrow$ $(sp,y)\in R$ and $(p,s y) \in R$. Up to  constants a polarly related pair represents, geometrically, a vector on the boundary of $\p C$ and a hyperplane tangent (supporting) $C$ at $y$.

Suppose that $D\subset C$ is a proper cone such that $\p D\cap \p C\ne \emptyset$, then $\p D^o\cap \p C^o\ne \emptyset$. In fact, by Eq.\ (\ref{hab}) for $y\in \p D\cap \p C$ there is $p\in C^o\subset D^o$ such that $ \langle p, y\rangle=0$, then, by Eq.\ (\ref{haa}) applied to the cone $D$, $p\in \p D^o$. This observation is particularly useful when $D$ is a round cone.
\end{remark}

\begin{remark}
Throughout this section we might equivalently use the notion of dual cone $C^*:=-C^o$ provided the convention  $(+,-,\cdots,-)$ is chosen for the Lorentzian signature. We shall need to work with the polar because we use the Lorentzian signature $(-,+,\cdots,+)$ which at present is the most used in mathematical relativity. Admittedly several formulas would look simpler using the other convention.
\end{remark}

The convexity of $C$  implies the  twice differentiability almost everywhere of the boundary $\p C$. In what follows we might need to consider cones with better regularity properties. Given the nice polar relationship, it will be convenient to focus on those properties which have a nice polar formulation or which are polar invariant. These properties are familiar from the study of the Legendre transform, in fact the polarly related cones can be described, near a polarly related pair, by suitable graphing functions which are Legendre dual to each other.

Let us introduce coordinates $\{y^\alpha\}$ on $V$, and dual coordinates $\{p_\beta\}$ on $V^*$. Let $e_\alpha =\p/\p y^\alpha$ and $e^\alpha =\p/\p p_\alpha$.  The coordinates are chosen in such a way that $e_0\in \mathrm{Int} C$, and $\{y^0=0\}\cap C=\emptyset$, so that $y^0$ is positive over $C$. As consequence,  $-e^0\in \mathrm{Int} C^o$, $\{p_0=0\}\cap C^o=\emptyset$, and $p_0$ is negative over $C^o$. We are interested in the description of the cones near a polarly related pair $\langle\bar p,\bar y\rangle=0$. Let us orient  $e_n$ in such a way that $\bar y\in \textrm{Span}(e_0,e_n)$, and $e_1,\ldots, e_{n-1}$ in such a way that they are annihilated by $\bar p$. Then we have dually that $\bar p\in \textrm{Span}(e^0,e^n)$ and $e^1,\ldots, e^{n-1}$ are annihilated by $\bar y$.
We can also redefine  $e_n\to -e_n$ if necessary, in  such a way that $\bar y^n<0$, so that the portion of boundary $\p C$ on which we are interested is on the region $y^n<0$.  If $y$ is a point in this region then  $y'=y+\epsilon e_n\in \mathrm{Int} C$ for sufficiently small $\epsilon>0$.
 Any polarly related value  $p$, $\langle p,y\rangle=0$ is such that $p_n<0$ (e.g.\ $\bar p_n<0$).  Indeed, for sufficiently small $\epsilon>0$,  $y'\in \mathrm{Int} C$, where $y'$ has the same coordinates of $y$ saved for $y'{}^n$. Since $0> \langle p,y'\rangle= \langle p,y'-y\rangle =p_n \epsilon$, we get $p_n< 0$. Dually, if $\langle p,y\rangle=0$ is a polarly related pair with $p$ close to $\bar p$, then $y$ is such that $y^n<0$.

Now, the section $\p C\cap \{y^0=1, y^n< 0\}$ near the suitably rescaled $\bar y$ is locally described by a convex negative function $y^n=u(y^{A})$, while the section $\p C^o\cap \{p_n=-1\}$ is locally the  graph of $p_0=-u^*(p_A)$. This fact is easily inferred from the polarity condition $\langle p,y\rangle \le 0$, which reads for $p\in \p C^o$, $y\in \p C$, in the image of the local graphs, $p_0-u(y^A)+p_A y^A\le 0$, equality holding at a polarly related pair. Thus $\sup_{y^A} [p_A y^A -u(y^A)]=-p_0(p_A)$, which proves the claim. This result clarifies that the regularity properties that are invariant under Legendre duality, once applied to cones, are invariant under polarity.

For instance, $C$ is $C^1$ iff $C^o$ is strictly convex\footnote{When speaking of convexity properties of a proper cone we really refer to such properties for its compact sections obtained through the intersection with a hyperplane.} (and analogously with $C$ and $C^o$ exchanged) \cite[Th.\ 26.3]{rockafellar70}\cite[Chap.\ 4]{hiriart93}. As another example: $C$ is strongly convex iff $C^o$ is strongly smooth (and analogously with $C$ and $C^o$ exchanged) \cite{goebel08,kakade09}. Geometrically, the condition of strong convexity means the following: we can find  a scalar product on $V$ and a corresponding dual scalar product on $V^*$, such that every point $y\in \p C$ admits an ice-cream cone ($\pi/2$-aperture) $R\supset C$ such that $\p R\cap \p C=y$.  The property of strong smoothness for $C^o$ is similar, but this time the ice-cream cone is contained in $C^o$. The property of strong smoothness is equivalent to the $C^{1,1}$ regularity of the boundary of the cone $C$ once expressed as a local graph. This result is obtained in the mentioned references. Alternatively, it can be derived from the mentioned geometrical interpretation  and from the fact that a function $h$ which is semiconvex with its negative is really $C^{1,1}$ \cite[Cor.\ 3.3.8]{cannarsa04} (it is also useful to recall that the $C^{1,1}$ functions are semiconvex \cite{vial83}). As a final example: $C$ is $C^2$ and strongly convex iff $C^o$ is $C^2$ and strongly convex.

Lorentz-Minkowski spaces are the models to the tangent spaces of Lorentz-Finsler spaces cf.\ Sec.\ \ref{ngd}.
We are looking for a  notion of proper Lorentz-Minkowski space $(V,\mathscr{F})$, where $\mathscr{F}\colon C\to [0,+\infty)$ has as  domain a proper cone $C$.
The main idea here is that of regarding $\mathscr{F}$ as defining a cone on the vector space $V\oplus \mathbb{R}$, through
\begin{equation} \label{sss}
C^\times=\{(y,z)\colon \ \vert z\vert \le \mathscr{F}(y), \ y \in C \}.
\end{equation}
The cone has to have the same properties as $C$, so $C^\times$ has to be a proper cone (so closed in the topology of $V\oplus \mathbb{R}\backslash 0$; again   $0\notin \p C^\times$). The cone property demands that $\mathscr{F}$ be positive homogeneous, the convexity property that $\mathscr{F}$ be concave, the sharpness property that $\mathscr{F}$ be finite, as it is by definition, and the non-empty interior condition, that $\mathscr{F}$ be not identically zero. These conditions are also sufficient to get a cone $C^\times$ with the desired  properties. Thus

\begin{definition}
A proper Lorentz-Minkowski space is a pair $(V,\mathscr{F})$, where $\mathscr{F}\colon C\to [0,+\infty)$,  $C$ is  a proper cone, $\mathscr{F}$ is positive homogeneous, concave (hence locally Lipschitz on $\mathrm{Int} C$), and not identically zero. Equivalently, it is a pair $(V,\mathscr{F})$ such that $C^\times$ is a proper cone in $V\oplus \mathbb{R}$.
\end{definition}

The conditions imply that $\mathscr{F}$ is positive on $\mathrm{Int} C$, and so $\mathrm{Int} C \times \{0\}\subset \mathrm{Int} (C^\times)$, however, we do not impose that $\mathscr{F}$ vanishes on $\p C$, namely the indicatrix $\mathscr{I}:=\mathscr{F}^{-1}(1)$ might intersect $\p C$. The conditions on $\mathscr{F}$ in the previous definition are equivalent to: the indicatrix intersects every half-line in $\mathrm{Int} C$ issued from 0 and it is convex.

This definition is well behaved under duality (polarity). Observe that the polar $(C^\times)^o$ will be a proper cone in $V^*\oplus \mathbb{R}$.
The polar cone is symmetric with respect to the $V^*\times \{0\}$ plane because $C^\times$ is symmetric with respect to $V\times \{0\}$, so there is a function $\mathscr{F}^o$ such that
\begin{equation} \label{ssz}
(C^\times)^o=\{(p,z^o)\colon \ \vert z^o\vert \le \mathscr{F}^o(p), \ p \in C^o \}.
\end{equation}
 The function $\mathscr{F}^o$ is finite because $(C^\times)^o$ is sharp, thus $\mathscr{F}^o$ shares all the properties of $\mathscr{F}$. Furthermore, since the polar of the polar is the original cone, $(\mathscr{F}^o)^o=\mathscr{F}$.


By concavity and positive homogeneity of $\mathscr{F}$, the reverse triangle inequality holds true.
\begin{proposition}[Reverse triangle inequality] \label{nuy}
For every $y,w\in C$ we have
\begin{equation}
\mathscr{F}(y+w)\ge \mathscr{F}(y)+\mathscr{F}(w).
\end{equation}
\end{proposition}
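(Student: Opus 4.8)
The plan is to derive the reverse triangle inequality directly from the two defining properties of $\mathscr{F}$ in a proper Lorentz-Minkowski space: positive homogeneity of degree one and concavity. These are exactly the hypotheses recorded in the definition, so no deeper structure of the cone is needed.

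First I would recall that concavity of $\mathscr{F}$ on the convex set $C$ means that for any $y,w\in C$ and any $\lambda\in[0,1]$,
\begin{equation}
\mathscr{F}\big(\lambda y+(1-\lambda)w\big)\ge \lambda\,\mathscr{F}(y)+(1-\lambda)\,\mathscr{F}(w).
\end{equation}
The natural choice is $\lambda=\tfrac12$, which gives
\begin{equation}
\mathscr{F}\Big(\tfrac12 y+\tfrac12 w\Big)\ge \tfrac12\,\mathscr{F}(y)+\tfrac12\,\mathscr{F}(w).
\end{equation}
Next I would invoke positive homogeneity with scaling factor $s=2>0$, namely $\mathscr{F}(2v)=2\,\mathscr{F}(v)$, applied to $v=\tfrac12(y+w)$. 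Since $C$ is a cone and $y,w\in C$ imply $\tfrac12(y+w)\in C$ by convexity, the argument stays in the domain of $\mathscr{F}$. This yields
\begin{equation}
\mathscr{F}(y+w)=2\,\mathscr{F}\Big(\tfrac12(y+w)\Big)\ge 2\Big(\tfrac12\,\mathscr{F}(y)+\tfrac12\,\mathscr{F}(w)\Big)=\mathscr{F}(y)+\mathscr{F}(w),
\end{equation}
which is precisely the claimed inequality.

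There is essentially no obstacle here; the only point requiring a word of care is the domain condition, ensuring that every vector to which $\mathscr{F}$ is applied genuinely lies in $C$. This is guaranteed because $C$ is a convex cone: closure under positive scaling handles $2v$ and $\tfrac12 y$, $\tfrac12 w$, while convexity handles the midpoint $\tfrac12(y+w)$. Since the two properties used (concavity and homogeneity) are built into the definition of a proper Lorentz-Minkowski space, the proof is immediate and the argument is entirely self-contained.
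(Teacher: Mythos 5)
Your proof is correct and is essentially the paper's argument: both derive superadditivity from concavity (a convex combination with weights summing to one) plus positive homogeneity of degree one. The only cosmetic difference is the choice of weights — the paper rescales $y$ and $w$ onto the affine section containing $y+w$, producing weights $a,b$ with $a+b=1$, whereas you take the midpoint $\lambda=\tfrac12$ and rescale by $2$ at the end; your domain remark (convexity and sharpness of $C$ keeping all evaluation points in $C$) is exactly the point that makes either version work.
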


\begin{proof}
For every $y,w\in C$, we can find constants $a,b> 0$ such that $y'=y/a$ and $w'=w/b$ belong to a  section to which $y+w$ belongs, so $a+b=1$ and
\[
\mathscr{F}(y+w)=\mathscr{F}(a y'+bw')\ge a \mathscr{F}(y')+ b \mathscr{F}(w')= \mathscr{F}(y)+\mathscr{F}(w).
\]
\end{proof}
Of course, in the equality case the proportionality of $y$ and $w$ can be inferred only under the 
the strict convexity of $C^\times$. We shall have an analogous reverse triangle inequality on $V^*$.


\begin{proposition}[Reverse Cauchy-Schwarz inequality] \label{nut}
For every $y\in C$, $p\in C^o$ we have
\begin{equation} \label{rcs}
-\langle p, y\rangle\ge \mathscr{F}^o(p) \mathscr{F}(y).
\end{equation}
\end{proposition}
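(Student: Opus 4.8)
The plan is to read off the inequality directly from the defining relation between $C^\times$ and its polar $(C^\times)^o$, evaluated at the two extremal boundary points determined by $\mathscr{F}$ and $\mathscr{F}^o$. The key observation is that, by the definition of polarity in $V\oplus\mathbb{R}$ (whose dual I identify with $V^*\oplus\mathbb{R}$ through the pairing $\langle(p,z^o),(y,z)\rangle=\langle p,y\rangle+z^o z$), membership $(p,z^o)\in(C^\times)^o$ means precisely that $\langle p,y\rangle+z^o z\le 0$ for every $(y,z)\in C^\times$.

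First I would fix $y\in C$ and $p\in C^o$ and note that the point $(y,\mathscr{F}(y))$ lies in $C^\times$, since $\mathscr{F}(y)\ge 0$ gives $|\mathscr{F}(y)|\le\mathscr{F}(y)$, so the defining condition (\ref{sss}) is met. Dually, by the very definition of $\mathscr{F}^o$ through Eq.\ (\ref{ssz}), the point $(p,\mathscr{F}^o(p))$ lies in $(C^\times)^o$, because $|\mathscr{F}^o(p)|\le\mathscr{F}^o(p)$.

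Then I would apply the polar inequality to this pair: since $(p,\mathscr{F}^o(p))\in(C^\times)^o$ and $(y,\mathscr{F}(y))\in C^\times$, I obtain $\langle p,y\rangle+\mathscr{F}^o(p)\,\mathscr{F}(y)\le 0$, which rearranges at once to the claimed $-\langle p,y\rangle\ge\mathscr{F}^o(p)\,\mathscr{F}(y)$.

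There is essentially no analytic obstacle here; the whole content is the geometric identification of $\mathscr{F}^o$ as the graphing function of the (symmetric) polar cone, which was already established in Eq.\ (\ref{ssz}), together with the elementary fact that the extremal points $(y,\mathscr{F}(y))$ and $(p,\mathscr{F}^o(p))$ sit on the respective cone boundaries. The only point requiring care is the bookkeeping of the sign conventions: the use of the polar cone (rather than the dual cone $C^*=-C^o$, as flagged in the earlier remark) forces the polar pairing to be $\le 0$, which is exactly what produces the minus sign on the left-hand side of the reverse Cauchy--Schwarz inequality.
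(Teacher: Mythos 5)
Your proposal is correct and is essentially the paper's own proof: both arguments evaluate the defining polarity inequality between $C^\times$ and $(C^\times)^o$ at the boundary pair $(y,\mathscr{F}(y))\in C^\times$ and $(p,\mathscr{F}^o(p))\in(C^\times)^o$, yielding $\langle p,y\rangle+\mathscr{F}^o(p)\,\mathscr{F}(y)\le 0$ directly. The extra details you supply (the membership checks via $|\mathscr{F}(y)|\le\mathscr{F}(y)$ and the sign-convention bookkeeping) are correct but implicit in the paper's one-line argument.
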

\begin{proof}
For every $y\in C$, $p\in C^o$ we have $(y, \mathscr{F}(y))\in C^\times$, $(p,   \mathscr{F}^o(p))\in (C^\times)^o$ and the definition of  polarity reads
\begin{equation} \label{erc}
0\ge \langle (p,   \mathscr{F}^o(p)), (y, \mathscr{F}(y)) \rangle = \langle p, y\rangle+ \mathscr{F}^o(p) \mathscr{F}(y) ,
\end{equation}
which is the desired inequality.
\end{proof}

In the next equation it is understood that the ratio on the right-hand side equals $+\infty$ for $ \langle p, y\rangle<0$ and $\mathscr{F}(y)=0$.

\begin{corollary}
The polar Finsler function satisfies: for $p\in \mathrm{Int} C^o$
\[
\mathscr{F}^o(p) = \underset{y\in   C}{\mathrm{inf}} \left( \frac{-\langle p, y\rangle}{\mathscr{F}(y)}\right) .
\]
The infimum is attained on at least an half-line.
\end{corollary}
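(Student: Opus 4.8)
The plan is to derive the corollary directly from the reverse Cauchy--Schwarz inequality (Prop.\ \ref{nut}) together with the observation that equality is actually attained in that inequality for suitable pairs. First I would fix $p\in \mathrm{Int} C^o$. By Prop.\ \ref{nut}, for every $y\in C$ with $\mathscr{F}(y)>0$ we have $-\langle p,y\rangle \ge \mathscr{F}^o(p)\mathscr{F}(y)$, which upon dividing by $\mathscr{F}(y)>0$ gives $(-\langle p,y\rangle)/\mathscr{F}(y)\ge \mathscr{F}^o(p)$; for $y\in C$ with $\mathscr{F}(y)=0$ the ratio is $+\infty$ by the stated convention (note $\langle p,y\rangle<0$ here since $p\in\mathrm{Int} C^o$ and $y\ne 0$, cf.\ Eq.\ (\ref{haa})), so the inequality $(-\langle p,y\rangle)/\mathscr{F}(y)\ge \mathscr{F}^o(p)$ holds trivially. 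Hence $\mathscr{F}^o(p)$ is a lower bound for the set of ratios, so $\mathscr{F}^o(p)\le \inf_{y\in C}(-\langle p,y\rangle)/\mathscr{F}(y)$.

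For the reverse inequality I need to exhibit a $y\in C$ at which the bound is attained, i.e.\ at which equality holds in the reverse Cauchy--Schwarz inequality. Tracing back through the proof of Prop.\ \ref{nut}, equality in (\ref{rcs}) corresponds to equality in (\ref{erc}), which is precisely the statement that the pair $\big((p,\mathscr{F}^o(p)),\,(y,\mathscr{F}(y))\big)$ is polarly related in $V\oplus\mathbb{R}$, i.e.\ $(y,\mathscr{F}(y))\in \p C^\times$ lies on the supporting hyperplane of $C^\times$ determined by $(p,\mathscr{F}^o(p))\in \p (C^\times)^o$. By Remark \ref{oqf} applied to the proper cone $C^\times$ and the dual point $(p,\mathscr{F}^o(p))\in \p(C^\times)^o$ (it lies on the boundary because $\mathscr{F}^o(p)$ is exactly the graphing value, cf.\ Eq.\ (\ref{ssz})), there exists a nonzero boundary vector on $\p C^\times$ annihilated by $(p,\mathscr{F}^o(p))$. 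Since $p\in\mathrm{Int} C^o$ forces $\mathscr{F}^o(p)>0$, this boundary vector cannot be of the form $(y,\pm\mathscr{F}(y))$ with $\mathscr{F}(y)=0$ paired trivially; rather, writing it as $(y,\mathscr{F}(y))$ with $y\in C$, the annihilation condition reads $\langle p,y\rangle+\mathscr{F}^o(p)\mathscr{F}(y)=0$, which is exactly equality in (\ref{erc}). This $y$ realizes the infimum, giving $\mathscr{F}^o(p)=(-\langle p,y\rangle)/\mathscr{F}(y)$ and hence $\mathscr{F}^o(p)\ge \inf_{y\in C}(-\langle p,y\rangle)/\mathscr{F}(y)$.

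Combining the two inequalities yields the claimed identity. For the final sentence, that the infimum is attained on at least a half-line, I would invoke positive homogeneity: the polar relation $R$ is positive homogeneous (Remark \ref{oqf}), so if $(p,\mathscr{F}^o(p))$ and $(y,\mathscr{F}(y))$ are polarly related then so are $(p,\mathscr{F}^o(p))$ and $(sy,s\mathscr{F}(y))$ for every $s>0$; equivalently, since both $-\langle p,\cdot\rangle$ and $\mathscr{F}$ are positive homogeneous of degree one, the ratio is invariant under $y\mapsto sy$. Thus the whole ray $\{sy:s>0\}\subset C$ attains the infimum.

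The main obstacle is the careful justification that the minimizing boundary vector furnished by Remark \ref{oqf} can be written in the graph form $(y,\mathscr{F}(y))$ with the correct sign, and that $\mathscr{F}(y)\ne 0$ there (so we are genuinely in the equality case rather than the degenerate $+\infty$ branch of the convention). This hinges on $\mathscr{F}^o(p)>0$, which holds because $p\in\mathrm{Int} C^o$; I would want to confirm via Eq.\ (\ref{hab}) that a strictly interior dual vector pairs strictly negatively with every nonzero element of $C$, ruling out the degenerate possibility and pinning down that the attained ratio is finite and equal to $\mathscr{F}^o(p)$.
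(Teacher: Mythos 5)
Your overall strategy is the same as the paper's (inequality from the polarity relation (\ref{erc}), attainment from Remark \ref{oqf} applied to $C^\times$, half-line from homogeneity), and the inequality direction and the half-line remark are fine. But there is a genuine gap exactly at the point you flag as ``the main obstacle'', and the resolution you sketch does not close it. Remark \ref{oqf} hands you an element $(y,b)\in\p C^\times$ with $\langle p,y\rangle+\mathscr{F}^o(p)\,b=0$; your argument that this element ``can be written as $(y,\mathscr{F}(y))$'' relies only on the strict negativity $\langle p,y\rangle<0$ (from (\ref{haa}), since $p\in\mathrm{Int} C^o$) and on $\mathscr{F}^o(p)>0$. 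That rules out the case $\mathscr{F}(y)=0$ and the wrong-sign case $(y,-\mathscr{F}(y))$, but it does not rule out the genuinely troublesome case: since the paper does \emph{not} assume $\mathscr{F}(\p C)=0$ (the indicatrix may touch $\p C$), the boundary $\p C^\times$ over a point $y\in\p C$ with $\mathscr{F}(y)>0$ contains the entire vertical segment $\{(y,z)\colon \vert z\vert\le\mathscr{F}(y)\}$, not just the graph points. So the annihilated element could a priori be $(y,b)$ with $0<b<\mathscr{F}(y)$, in which case the annihilation identity gives $-\langle p,y\rangle/\mathscr{F}(y)=\mathscr{F}^o(p)\,b/\mathscr{F}(y)<\mathscr{F}^o(p)$, i.e.\ it does not exhibit attainment of the infimum at all (indeed it would contradict the inequality, so something must exclude it).

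The missing step is the paper's short polarity argument: if $b<\mathscr{F}(y)$, then
\[
\langle (p,\mathscr{F}^o(p)),(y,\mathscr{F}(y))\rangle=\langle p,y\rangle+\mathscr{F}^o(p)\mathscr{F}(y)=\mathscr{F}^o(p)\,[\mathscr{F}(y)-b]>0,
\]
which contradicts polarity because $(y,\mathscr{F}(y))\in C^\times$ and $(p,\mathscr{F}^o(p))\in (C^\times)^o$. Hence $b=\mathscr{F}(y)$ is forced, and only then do you get equality in (\ref{erc}) and attainment of the infimum. (The paper also splits off the easy case $y\in\mathrm{Int} C$, where $b=\mathscr{F}(y)$ holds automatically because over interior points the boundary of $C^\times$ consists only of the graph points $(y,\pm\mathscr{F}(y))$, with the plus sign forced by $\langle p,y\rangle<0$.) With this one additional contradiction argument inserted, your proof matches the paper's.
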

The previous equation could have been used as a definition of $\mathscr{F}^o$ on $\mathrm{Int} C^o$. Nevertheless, such an approach would  hide the geometrical interpretation in terms of the polarity relation of the proper cones $C^\times$ and $(C^\times)^o$.

\begin{proof}
For every $y\in C$, $p\in C^o$, by polarity of $C^\times$ and $(C^\times)^o$ inequality (\ref{erc}) holds true.
Since $p\in  \mathrm{Int} C^o$, we have by the proper condition $\mathscr{F}^o(p)>0$,
 and for every $y\in C$, $\langle p, y\rangle<0$.
 For every $y \in \mathrm{Int}  C$ we have  by the proper condition, $\mathscr{F}(y)>0$, thus  $\mathscr{F}^o(p) \le {-\langle p, y\rangle}/{\mathscr{F}(y)}$. Let us prove that the equality is attained. Indeed,  $(p,   \mathscr{F}^o(p))\in \p (C^\times)^o$ thus, by Remark \ref{oqf} (applied to $C^\times$ instead of $C$), there is an element $(y,b)\in \p C^\times$ such that $0=\langle (p,   \mathscr{F}^o(p)), (y, b) \rangle = \langle p, y\rangle+ \mathscr{F}^o(p) b$. If $y\in \mathrm{Int} C$ we have concluded since necessarily $b=\mathscr{F}(y)$ where the plus sign follows from  $\langle p, y\rangle<0$.
 If $y \in \p C$ we still have  $\langle p, y\rangle<0$ so $0<b\le\mathscr{F}(y)$,
 but $b$ cannot be strictly less than $\mathscr{F}(y)$ otherwise
  $\langle (p,   \mathscr{F}^o(p)), (y, \mathscr{F}(y)) \rangle=\mathscr{F}^o(p) [\mathscr{F}(y)-b]>0$ a contradiction with polarity since $(y, \mathscr{F}(y) \in C^\times$.
\end{proof}

Pairs $(p,y)\in C^o\times C$ for which the equality holds in (\ref{rcs}), or equivalently in Eq.\ (\ref{erc}), form a relation which might be called {\em $\times$-polar relation} $R^\times$. It is invariant under positive homogeneity: $(p,y) \in R^\times \Rightarrow$  $\ (s p, y)\in R^\times$ and $(p, sy)\in R^\times$, for any $s>0$.  This $\times$-polar relation is a function from $C/\mathbb{R}_+$ to $C^o/\mathbb{R}_+$ (resp.\ opposite direction)  iff $C^\times$ is  $C^1$ (resp.\ strictly convex). It is a bijection iff $C^\times$ is $C^1$ and strictly convex.

In the latter case the bijection from $C/\mathbb{R}_+$ to $C^o/\mathbb{R}_+$  might be used to get a bijection $\ell\colon \mathrm{Int} C\to \mathrm{Int} (C^o)$, provided we stipulate that  the indicatrix $\mathscr{I}$ is sent  to the polar indicatrix $\mathscr{I}^o$, i.e.\ $\mathscr{F}^o(\ell(y))=1$ whenever $\mathscr{F}(y)=1$. The extension is accomplished as follows.
Let $f,f^o\colon \mathbb{R}_+\to \mathbb{R}$ be Legendre dual $C^1$ functions such that $f',f^o{}' \colon \mathbb{R}_+\to \mathbb{R}_+$ are positive strictly monotone bijections (thus $f,f^o$ are either both strictly convex or strictly concave). We recall that  $f'$ and $f^o{}'$ are functional inverses of each other: $f'(f^o{}'(x))=x$, $f^o{}'(f'(x))=x$. Thus we can redefine $f$ by rescaling it by a positive constant in such a way that $f'(1)=1$ and hence $f^o{}'(1)=1$.  The extension will be dependent on the chosen  pair $(f,f^o)$ with the mentioned normalization, in fact we impose $\mathscr{F}^o(\ell(y))=f'(\mathscr{F}(y))$. Since the direction of $\ell(y)$ is determined by the $\times$-polar relation, this equation by fixing its length determines $\ell(y)$ completely. It can be rewritten in the equivalent dual form $f^o{}'(\mathscr{F}^o(p))=\mathscr{F}(\ell^{-1}(p))$ where $\ell^{-1}$ is the functional inverse of $\ell$.

\begin{example}
Particularly interesting will be the next choice of functions which up to an additive constant are the only ones for which, $\ell$ and $\ell^{-1}$ are positive homogeneous (of degree $a-1$ and $b-1$, respectively). Let $a,b \in \mathbb{R}\backslash \{1\}$, be conjugate exponents $\frac{1}{a}+\frac{1}{b}=1$, where the pair $(a,b)=(0,0)$ is allowed and understood for shortness as a limiting case (hence $a/b=b/a=-1$). The Legendre dual functions are
\begin{align} \label{kql}
&f(x)=\tfrac{1}{2} + \tfrac{1}{a} [x^a-1],
&f^o(x)=\tfrac{1}{2} + \tfrac{1}{b} [x^b-1],
\end{align}
which for $a=b=0$ stand respectively for $f=f^o=\frac{1}{2}+\log x$.
The symmetric subcases $a=b=0$, just mentioned, and $a=b=2$ seem the most interesting. The former case corresponds to some choices for the function $\ell$ (homogeneity of degree -1) familiar from  the theory of homogeneous cones \cite{vinberg60}. The latter case  gives $f=f^o=x^2/2$, with $\ell$ positive homogeneous of degree 1, and corresponds to the standard formalism of Lorentz-Finsler theory. If one is not interested in recovering the  case $a=b=0$ as a limit (notice that  $\lim_{\epsilon \to 0} \frac{1}{\epsilon} [x^\epsilon-1]=\log x$, for $x>0$) the additive constants in the definitions of $f$ and $f^o$ can be dropped, i.e.\ $f=\tfrac{1}{a} x^a$, $f^o=\tfrac{1}{b} x^b$, preserving Legendre duality and gaining positive homogeneity.
\end{example}


\begin{theorem} \label{dpq}
On a proper Lorentz-Minkowski space the next conditions are equivalent and invariant under polarity
\begin{itemize}
\item[(i)]     $C^\times$ is  $C^1$ and strictly convex.

\item[(ii)] $C$ is $C^1$ and strictly convex, $\mathscr{F}^{-1}(0)=\p C$,   $\mathscr{F}\in C^1(\mathrm{Int} C)\cap C^0(C)$, $\mathscr{F}$ is strictly concave on one (and hence every) relatively compact section of $\mathrm{Int} C$, and $\dd \mathscr{F} \to \infty$ for $y \to \p C$.

\item[(iii)] $C$ is $C^1$ and strictly convex, the indicatrix does not intersect $\p C$ but intersects every non-compact section of $C\cup\{0\}$ containing the origin,  and the indicatrix is $C^1$ and strictly convex,
%
\end{itemize}
If they hold the equality case in the reverse triangle inequality (Prop.\ \ref{nuy}) holds iff $y$ and $w$ are proportional. Similarly,  the equality case in the reverse Cauchy-Schwarz inequality (Prop.\ \ref{nut}) holds iff
$(p,y)\in R^\times$.
\end{theorem}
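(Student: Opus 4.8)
The plan is to prove everything through the geometry of the proper cone $C^\times\subset V\oplus\mathbb{R}$ and then read off the statements for $\mathscr{F}$ and, by polarity, for $\mathscr{F}^o$. The central observation that handles invariance at once is that condition (i) is manifestly self-dual: by the polar dictionary recorded above ($C$ is $C^1$ iff $C^o$ is strictly convex, and $C$ is strictly convex iff $C^o$ is $C^1$), applied to $C^\times$ in place of $C$, the cone $C^\times$ is $C^1$ and strictly convex iff $(C^\times)^o$ is $C^1$ and strictly convex. Since by Eq.\ (\ref{ssz}) the cone $(C^\times)^o$ is exactly the $\times$-cone of $\mathscr{F}^o$ over $C^o$, condition (i) for $\mathscr{F}$ is equivalent to condition (i) for $\mathscr{F}^o$; hence once the chain (i)$\Leftrightarrow$(ii)$\Leftrightarrow$(iii) is established, invariance under polarity is automatic, and it remains only to translate (i) into (ii) and into (iii).

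First I would decompose $\partial C^\times$ into the two graph pieces $G^\pm=\{(y,\pm\mathscr{F}(y)):y\in C\}$ and the side wall $W=\{(y,z):y\in\partial C,\ |z|\le\mathscr{F}(y)\}$. Positive homogeneity lets me test every convexity or smoothness property on a single transverse section $\{y^0=1\}$ and then propagate it along rays. Strict convexity of $C^\times$ fails precisely when $\partial C^\times$ contains a nondegenerate segment, and such a segment can only sit in $W$ (when $\mathscr{F}(y)>0$ for some $y\in\partial C$), in $\partial C$ (when $C$ is not strictly convex), or inside a graph piece (when $\mathscr{F}$ is affine along a segment of a section). Thus $C^\times$ is strictly convex iff $\mathscr{F}^{-1}(0)=\partial C$, $C$ is strictly convex, and $\mathscr{F}$ is strictly concave on sections, which are exactly the convexity clauses common to (ii) and (iii).

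For the differentiability clause I would argue that, once $\mathscr{F}^{-1}(0)=\partial C$ holds, the wall $W$ collapses to the edge $\{(y,0):y\in\partial C\}$ where $G^+$ and $G^-$ meet, so $\partial C^\times=G^+\cup G^-$. Over $\mathrm{Int}\,C$ the graph $G^+$ is a $C^1$ hypersurface iff $\mathscr{F}\in C^1(\mathrm{Int}\,C)$. The delicate point, and the main obstacle, is the $C^1$ matching at an edge point $(y_0,0)$, $y_0\in\partial C$: the supporting hyperplane of $C^\times$ there must be the unique vertical hyperplane $\{\langle p_0,\cdot\rangle=0\}\oplus\mathbb{R}$, with $p_0$ the supporting covector of $C$ at $y_0$. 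This requires on one hand that $C$ itself be $C^1$ at $y_0$ (so that $p_0$ is unique), and on the other that the tangent plane of $G^+$ tip over to the vertical as $y\to\partial C$, which is precisely the blow-up condition $\dd\mathscr{F}\to\infty$. Controlling this limit carefully, using concavity together with $C^0(C)$-continuity so that $G^+$ and $G^-$ close up along the edge, yields that $C^\times$ is $C^1$ iff $C$ is $C^1$, $\mathscr{F}\in C^1(\mathrm{Int}\,C)\cap C^0(C)$, and $\dd\mathscr{F}\to\infty$ at $\partial C$; combining with the preceding paragraph gives (i)$\Leftrightarrow$(ii).

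To pass from (ii) to (iii) I would invoke the standard correspondence, for a concave positive-homogeneous $\mathscr{F}$, between the regularity of the function and that of its indicatrix $\mathscr{I}=\mathscr{F}^{-1}(1)$, the boundary of the convex superlevel set $\{\mathscr{F}\ge1\}$: $\mathscr{I}$ is strictly convex iff $\mathscr{F}$ is strictly concave, $\mathscr{I}$ is $C^1$ iff $\mathscr{F}\in C^1$, the clause $\mathscr{F}^{-1}(0)=\partial C$ reads as ``$\mathscr{I}$ does not meet $\partial C$ yet meets every non-compact section of $C\cup\{0\}$ through the origin,'' and the blow-up $\dd\mathscr{F}\to\infty$ is encoded in $\mathscr{I}$ being a complete $C^1$ strictly convex hypersurface; this gives (ii)$\Leftrightarrow$(iii). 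Finally, for the equality cases: in the proof of Prop.\ \ref{nuy} one writes $y=ay'$, $w=bw'$ with $y',w'$ on a common section and $a+b=1$, $a,b>0$; strict concavity of $\mathscr{F}$ on that section forces $y'=w'$ whenever equality holds, i.e.\ $y$ and $w$ are proportional, the converse being immediate from positive homogeneity. For the reverse Cauchy--Schwarz inequality, equality in (\ref{rcs}) is by construction equality in (\ref{erc}), i.e.\ $\langle(p,\mathscr{F}^o(p)),(y,\mathscr{F}(y))\rangle=0$, which says that $(y,\mathscr{F}(y))\in\partial C^\times$ and $(p,\mathscr{F}^o(p))\in\partial(C^\times)^o$ form a polarly related pair; this is exactly the definition of $(p,y)\in R^\times$, and under (i) the relation $R^\times$ is moreover the graph of a bijection between $C/\mathbb{R}_+$ and $C^o/\mathbb{R}_+$.
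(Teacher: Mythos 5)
Your handling of polarity invariance, of the equivalence (i)$\Leftrightarrow$(ii), and of the equality cases is sound and close to the paper's own route: the paper likewise reads strict convexity of $C^\times$ off the absence of segments in $\p C^\times$ (splitting into the edge $\p C\times\{0\}$ and the graph part), and reads $C^1$-ness off uniqueness of supporting hyperplanes, with $\dd\mathscr{F}\to\infty$ forcing the supporting planes at the edge to be vertical; your explicit remark that $C^1$-ness of $C$ is needed for uniqueness of the vertical plane is, if anything, slightly more careful than the paper's compressed wording, and the reverse Cauchy--Schwarz equality case is essentially definitional since $R^\times$ is defined as the equality locus.

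The genuine gap is in (ii)$\Leftrightarrow$(iii). You assert that the clause $\mathscr{F}^{-1}(0)=\p C$ ``reads as'' the statement that $\mathscr{I}$ misses $\p C$ \emph{yet meets every non-compact section of $C\cup\{0\}$ through the origin}, and that the blow-up $\dd\mathscr{F}\to\infty$ is ``encoded in $\mathscr{I}$ being a complete $C^1$ strictly convex hypersurface''. Both claims are false, and this is exactly where the substance of the theorem lies. Counterexample: on $V=\mathbb{R}^2$ take $C=\{y^0\ge \vert y^1\vert\}\backslash 0$ and $\mathscr{F}(y)=((y^0)^2-(y^1)^2)/y^0$. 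Then $C$ is $C^1$ and strictly convex, $\mathscr{F}^{-1}(0)=\p C$, $\mathscr{F}$ is smooth on $\mathrm{Int}\, C$, continuous on $C$, and strictly concave on the section $y^0=1$ (where it equals $1-u^2$); yet $\p_1\mathscr{F}=-2y^1/y^0$ stays bounded near $\p C$, so the blow-up fails, and correspondingly the indicatrix $(y^0-1/2)^2-(y^1)^2=1/4$ is contained in $\{y^0-y^1>1/2\}$, hence misses the non-compact sector of $C\cup\{0\}$ cut off by the line $y^0-y^1=1/4$ --- even though this indicatrix \emph{is} a complete, $C^1$, strictly convex hypersurface. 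So the section-intersection clause of (iii) is neither a consequence of $\mathscr{F}^{-1}(0)=\p C$ nor captured by completeness: it is precisely the geometric translation of the gradient blow-up, and establishing that equivalence is the hardest part of the paper's proof. The paper does it by duality: if some non-compact sector $D\ni 0$ missed $\mathscr{I}$, then $\p D\cap \mathrm{Int}\, C$ must be parallel to $\ker p$ for some $p\in\p C^o$, whence $\mathscr{I}\subset\{y\colon \langle p,y\rangle<d<0\}$ and $C^\times\subset\{(y,z)\colon \langle p,y\rangle\le d z\}$; the bounding hyperplane is then a non-vertical supporting hyperplane of $C^\times$ at a point $(w,0)$ with $w\in\p C$ polarly related to $p$, contradicting $C^1$-ness of $C^\times$ at the edge (equivalently, the blow-up). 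Conversely, the section property forces strict convexity of $(C^\times)^o$ along $\p C^o\times\{0\}$, hence, again by duality, $C^1$-ness of $C^\times$ along $\p C\times\{0\}$. Your proposal needs this argument (or an equivalent one); as written, (ii)$\Leftrightarrow$(iii) is not established.
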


The section in $(iii)$ is obtained  by means of a hyperplane not necessarily passing through the origin. Actually, the proof shows that on $(iii)$ one can replace such general sections with those determined by hyperplanes parallel to $\ker p$ for  $p\in \p C^o$.

The proof is really more specific, and provides some equivalences not apparent from the statement. For instance, `$\mathscr{F}^{-1}(0)=\p C$, and $\dd \mathscr{F} \to \infty$ for $y \to \p C$' implies that  the indicatrix does not intersect $\p C$ but intersects every non-compact section of $C$ containing the origin.

\begin{proof}
The statement concerning the invariance under polarity is obvious from $(i)$ and standard results of convexity theory on the duality between differentiability and strict convexity \cite[Th.\ 26.3]{rockafellar70}. So we need only to prove the equivalences.

 $(i) \Leftrightarrow (ii)$.
The strict convexity of $C^\times$ can be expressed with the property that for  $Z\in \p C^\times$  and for every supporting hyperplane $P\ni Z$ of $C^\times$, the intersection $P\cap C^\times$ is one-dimensional. For $Z \in \p C\times\{0\}$ this property  is equivalent to $\mathscr{F}^{-1}(0)=\p C$. For $Z\in [\p C^\times]\backslash [\p C\times\{0\}]$ it is equivalent to the  strict concavity of $\mathscr{F}$ on one (and hence every) relatively compact section of $\mathrm{Int} C$.
The $C^1$ differentiability of  $C^\times$ is equivalent, at $Z\in [\p C^\times]\backslash [\p C\times\{0\}]$ to $\mathscr{F}\in C^1(\mathrm{Int} C)$, and at  $Z\in [\p C\times\{0\}] \subset \p C^\times$ to $\dd \mathscr{F} \to \infty$ for $y \to \p C$.

$(i)$ and $(ii)$ $\Rightarrow (iii)$. We know that $\mathscr{F}$ is positive on $\mathrm{Int} C$ so by positive homogeneity  $\mathscr{F}^{-1}(0)=\p C$ is equivalent to the condition that the indicatrix does not intersect $\p C$. The strict concavity of $\mathscr{F}$ on a relatively compact section of $\mathrm{Int} C$ implies that $\mathscr{F}$ satisfies the reverse triangle inequality on $\mathrm{Int} C$ with the usual equality case, so from positive homogeneity the indicatrix is strictly convex. Conversely, the strict convexity of the indicatrix is, by positive homogeneity, equivalent to the strict concavity of $\mathscr{F}$ on one (and hence every)  relatively compact section of $\mathrm{Int} C$.
The $C^1$ differentiability of $\mathscr{F}$ on $\mathrm{Int} C$ is equivalent to the $C^1$ differentiability of the indicatrix.

Let us prove that $\dd \mathscr{F} \to \infty$ for $y \to \p C$, implies that the indicatrix  intersects every non-compact section of $C\cup\{0\}$ containing the origin.
Suppose, by contradiction, that there is a hyperplane on $V$ cutting $C\cup\{0\}$ on two sectors, one of which, call it $D$, is non-compact, contains the origin and does not intersect $\mathscr{I}$. On $V\oplus \mathbb{R}$, consider the hyperplane passing through $[\p D\cap \mathrm{Int} C]\times \{1\}$ and the origin. One half-space determined by it contains $C^\times$, and its intersection with $V\times \{0\}$  is parallel to $[\p D\cap \mathrm{Int} C]$, and one of its half-spaces contains $C$. Thus $[\p D\cap \mathrm{Int} C]$ can only be parallel to the kernel of an element on $\p C^o$, call it $p$. So the indicatrix is contained in $\{y\colon \langle p, y\rangle< d<0\}$, for some $d$, while $C^\times$ is contained in $\{(y,z)\colon \langle p, y\rangle\le d z \}$, which contains some point $w\in \p C$ (polarly related with $p$), thus $C^\times$ is not $C^1$ at $w$, which in view of the already proved equivalence is the desired contradiction.

Conversely, suppose that the indicatrix  intersects every non-compact section of $C \cup\{0\}$ containing the origin, and let $p\in \p C^o\backslash 0$. Let $(p,z) \in V^*\oplus \mathbb{R}$, then for $z>0$, $\langle (p,z), (y, 1)\rangle=\langle p, y\rangle+z$ for every $y\in \mathscr{I}$ is neither non-negative nor non-positive, which implies that  $\langle (p,z), (y, \mathscr{F}(y))\rangle$ is not everywhere non-positive for $y\in \mathrm{Int} C$, and hence that $(p,z) \notin (C^\times)^o$, which implies that $(C^\times)^o$ is strictly convex on $\p C^o \times \{0\}$, and hence $C^\times$ is differentiable on $\p C \times \{0\}$.
\end{proof}

\begin{definition}
A  proper Lorentz-Minkowski space is said to be a $C^1$ strictly convex Lorentz-Minkowski space if the previous equivalent conditions hold true. More generally, it is said to be a [property] Lorentz-Minkowski space if $C^\times$ satisfies [property].
\end{definition}
The theory developed so far clarifies our philosophy in dealing with regularity conditions for the Lorentz-Minkowski structure: the additional properties on $(V,\mathscr{F})$  should be imposed taking into account their geometrical content in terms of the cone $C^\times$. Once again, particularly interesting are those conditions which are invariant under polarity or which have a clear polar counterpart. For instance, the strong convexity of $C^\times$ is equivalent to the strong smoothness of $(C^\times)^o$ (and conversely). Any of the cones is strongly convex and $C^2$ iff the other is.

\subsubsection{Legendre transform}

Let $(V,\mathscr{F})$ be a $C^1$ strictly convex Lorentz-Minkowski space.
Let $(p,y)$ be a $\times$-polar pair with $y\in \mathrm{Int} C$ and $p=\ell(y)$ and $y(s)=y+ s w$, where $w\in V\backslash 0$, so that for sufficiently small $\vert s\vert$,   $y(s)\in \mathrm{Int} C$. Since by polarity  \[\langle (p,   \mathscr{F}^o(p)), (y(s), \mathscr{F}(y(s))) \rangle\le 0,\] the function on the left-hand side is $C^1$ and reaches a maximum (zero) at $s=0$, thus  differentiating with respect  to $s$ and setting $s=0$    we obtain
\begin{equation} \label{jjh}
0=\langle p, w\rangle+ \mathscr{F}^o(p) \p_w \mathscr{F}= \langle p, w\rangle+ f'(\mathscr{F}(y)) \p_w \mathscr{F} ,
\end{equation}
 thus $p=\ell(y)=\dd  \mathscr{L}\vert_y$, where $\mathscr{L}\colon \mathrm{Int} C\to \mathbb{R}$,
\begin{equation}
\mathscr{L}(z)=-f(\mathscr{F}(z))
\end{equation}

Let us calculate the Legendre transform $\mathscr{H}\colon \mathrm{Int} C^o\to \mathbb{R}$ of  $\mathscr{L}$.
\begin{align*}
\mathscr{H}(\ell (y))&=y^\mu \frac{\p\mathscr{L}}{\p y^\mu}(y)-\mathscr{L}(y)=-f'(\mathscr{F}) y^\mu \frac{\p\mathscr{F}}{\p y^\mu}+f(\mathscr{F})=-f'(\mathscr{F}) \mathscr{F}+f(\mathscr{F})\\
&=-(x f'(x)-f(x))\vert_{x=\mathscr{F}}=-f^o(f'(\mathscr{F}))=-f^o(\mathscr{F}^o(\ell(y))),
\end{align*}
where we used the positive homogeneity of degree one of $\mathscr{F}$. We conclude that the Legendre transform of $\mathscr{L}$ is
\begin{equation}
\mathscr{H}(p)=-f^o(\mathscr{F}^o(p)).
\end{equation}
In most cases it will be possible to extend by continuity $\mathscr{L}$ to $\p C$, e.g.\ when $\p C=\mathscr{F}^{-1}(0)$ and $f$ extends continuously to the origin by setting $f(0)=0$.  An analogous observation holds for $\mathscr{H}$.

Suppose now that the Lorentz-Finsler space is $C^2$ and strongly convex.
 and that $f,f^o$ are $C^2$ and either both strongly convex or strongly concave.

Let us denote with $\dd^2$ the Hessian operator. We have shown in \cite{minguzzi15e} (see also \cite{beem70,laugwitz11}) that  $h\colon \mathrm{Int} C\to V^*\otimes V^*$
\begin{equation} \label{med}
h=-\frac{1}{\mathscr{F}}\dd^2 \mathscr{F}
\end{equation}
 is a metric of signature $(0,+,\ldots,+)$ which pulled back to  $\mathscr{I}$ provides the affine  metric of the indicatrix. The definition of $\mathscr{L}$ gives
\begin{equation} \label{men}
\dd^2 \mathscr{L}=f'(\mathscr{F})\mathscr{F} h- f''(\mathscr{F}) \dd \mathscr{F}\otimes \dd \mathscr{F}
\end{equation}
We know that  $\mathrm{sgn}f'=1$, so let $s=\mathrm{sgn} f''$. The metric $ \dd^2 \mathscr{L}$  is Riemannian for $s<0$ and Lorentzian for $s>0$. Since $\mathscr{H}$ is the Legendre dual of $\mathscr{L}$, the metric $\dd^2 \mathscr{H}$ is the inverse of $\dd^2\mathscr{L}$.  For instance, for $f$ as in Eq.\ (\ref{kql}), $\dd^2 \mathscr{L}$ is Riemannian for  $a<1$ (hence also in the logarithmic case $a=0$) and Lorentzian for $a>1$ (hence also in the standard Lorentz-Finsler case $a=2$).

\begin{remark}[Equivalence of some Finslerian relativistic theories] \label{eqq} $\empty$\\
Let $f\colon \mathbb{R}_+\to \mathbb{R}$  be $C^2$ and such that $f'(1)=1$, $f'>0, s:=\mathrm{sgn}(f'')\ne 0$ (e.g.\ $f=x^a/a$ with $a>1$, s=+1, or $f=\frac{1}{2} + \log x$, $s=-1$).
Let $\mathscr{F}$ be positive homogeneous and defined over a proper cone, and let $\mathscr{L}=-f(\mathscr{F})$. Suppose that $\dd^2\mathscr{L}$ has signature $(-s,+,\ldots, +)$. By Eq.\ (\ref{men}) the metric $h$ in (\ref{med}) is positive definite and so  $\mathscr{I}$ is strongly convex which implies that we are in the framework of the Lorentz-Minkowski spaces of this work.  Again by (\ref{men}) the Hessian $\dd^2\tilde{\mathscr{L}}$ where $\tilde{\mathscr{L}}=-\tilde f(\mathscr{F})$, has signature $(-\tilde s,+,\ldots, +)$ for any other choice of $\tilde f$ with the same properties. In particular, it is Lorentzian
for $f=x^2/2$ which is the usual Lorentz-Finsler choice (e.g.\ Beem \cite{beem70}).

This result clarifies that the kinematics of physical theories based on the function $\mathscr{L}$, even when defined with different choices of $f$, is essentially the same. The next result shows that the dynamics is also largely independent of $f$. In general, the choice of $f$ is related to the regularity of the theory at $\p C$, namely to the extendibility of the map $\ell$ to $\p C$, which, physically, it is connected to the correspondence velocity-momenta for lightlike particles, see Remark \ref{ren}.
\end{remark}

We summarize the previous results as follows

\begin{theorem}
 Let $f,f^o\colon \mathbb{R}_+\to \mathbb{R}$ be Legendre dual $C^1$ functions such that $f',f^o{}' \colon \mathbb{R}_+\to \mathbb{R}_+$ are strictly monotone bijections (thus $f,f^o$ are either both strictly convex or strictly concave), normalized so that $f'(1)=1=f^o{}'(1)$. On a $C^1$  strictly convex  Lorentz-Minkowski space the continuous maps
\begin{align*}
&\ell\colon \mathrm{Int} C \to \mathrm{Int} (C^o),\qquad  \ell(y)=\dd \mathscr{L}\vert_y,   &&\mathscr{L}:=-f(\mathscr{F}), \\
&\ell^o\colon \mathrm{Int} (C^o) \to
\mathrm{Int} C, \quad \ \ell^o(p)=\dd \mathscr{H}\vert_p,   &&\mathscr{H}:=-f^o(\mathscr{F}^o),
\end{align*}
are bijections and inverse of each other; they send the indicatrix $\mathscr{I}$ to $\mathscr{I}^o$, and conversely. The function $\mathscr{H}$ is the Legendre transform of $\mathscr{L}$, and conversely.

For a $C^2$ strongly convex Lorentz-Minkowski space, with $C^2$ strongly convex or concave  functions $f,f^o$, we have that $\dd^2 \mathscr{H}(\ell(y))$ is the inverse of $\dd ^2\mathscr{L}(y)$ and they are either both Riemannian or Lorentzian depending only on the sign $s=\mathrm{sgn} f''$ either negative or positive, respectively. Finally, working on $TM$, namely including an $x$-dependance of $\mathscr{L}$,  the spray
\begin{equation}
G^\alpha(x,y)=\frac{1}{2} \, g^{\alpha \beta} \Big( \frac{\p^2 \mathscr{L}}{\p x^\gamma \p y^\beta} y^\gamma-\frac{\p \mathscr{L}}{\p x^\beta} \Big)
\end{equation}
does not depend on $f$.


 For the choice given by Eq.\ (\ref{kql}) we have the identity \begin{equation} \label{mof}
b[\mathscr{H}(\ell(y))+\tfrac{1}{2}]=a[\mathscr{L}(y)+\tfrac{1}{2}],
 \end{equation}
 for every $y\in \mathrm{Int} C$.   Finally, for $a,b\ne 0$, dropping the additive constants in Eq.\ (\ref{kql}),  $\mathscr{L}$ is positive homogeneous of degree $a$,  $\mathscr{H}$ is positive homogeneous of degree $b$ and the identity (\ref{mof}) is replaced by  $ b\mathscr{H}(\ell(y))=a\mathscr{L}(y)$.
\end{theorem}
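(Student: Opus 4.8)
The plan is to assemble the statement from the computations already carried out in this subsection, treating the bijection/Legendre part, the Hessian part, the spray part, and the explicit power-law identities in turn. First I would settle the $C^1$ strictly convex case. By Theorem \ref{dpq} the hypothesis is equivalent to $C^\times$ being $C^1$ and strictly convex, so the $\times$-polar relation $R^\times$ is a genuine bijection between $C/\mathbb{R}_+$ and $C^o/\mathbb{R}_+$. The map $\ell$ was constructed precisely to realize $R^\times$ on directions while fixing the radial coordinate through $\mathscr{F}^o(\ell(y)) = f'(\mathscr{F}(y))$ (equivalently $\mathscr{F}(\ell^{-1}(p)) = f^o{}'(\mathscr{F}^o(p))$); since $f'$ and $f^o{}'$ are functional inverses by Legendre duality, composing the two normalizations returns the identity, so $\ell$ and $\ell^o$ are mutually inverse. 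That $\ell$ sends $\mathscr{I}$ to $\mathscr{I}^o$ is immediate from $f'(1)=1$, and conversely. Finally, the identification $\ell(y)=\dd\mathscr{L}\vert_y$ is exactly Eq.\ (\ref{jjh}), and the computation displayed just after it gives $\mathscr{H}(\ell(y)) = -f^o(\mathscr{F}^o(\ell(y)))$ together with $\mathscr{H}=y^\mu\partial_\mu\mathscr{L}-\mathscr{L}$, i.e.\ $\mathscr{H}$ is the Legendre transform of $\mathscr{L}$; the converse follows by the polar symmetry of the whole construction.

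Next, for the $C^2$ strongly convex case I would read off the signature directly from Eq.\ (\ref{men}), $\dd^2\mathscr{L} = f'(\mathscr{F})\mathscr{F} h - f''(\mathscr{F})\,\dd\mathscr{F}\otimes\dd\mathscr{F}$. Here $h$ from Eq.\ (\ref{med}) has signature $(0,+,\dots,+)$ with its single null direction along $y$ (Euler's relation gives $\dd^2\mathscr{F}\cdot y=0$), while $\dd\mathscr{F}(y)=\mathscr{F}(y)>0$ is nonzero there. Thus the first term, a positive multiple of $h$, is positive definite on $\ker\dd\mathscr{F}$ (a hyperplane transverse to $\mathbb{R}y$), and the second term supplies the missing direction along $y$ with sign $-\mathrm{sgn}f''=-s$; hence $\dd^2\mathscr{L}$ is nondegenerate of signature $(-s,+,\dots,+)$, Riemannian for $s<0$ and Lorentzian for $s>0$. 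Because $\mathscr{H}$ is the Legendre transform of $\mathscr{L}$, the standard reciprocity of Legendre transforms gives $\dd^2\mathscr{H}(\ell(y)) = (\dd^2\mathscr{L}(y))^{-1}$, which shares the signature.

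The spray independence is the step I expect to be the main obstacle, since it is the only one not already prepared in the text. The approach is a direct substitution of $\mathscr{L}=-f(\mathscr{F})$ into the spray formula, exploiting the homogeneity identities $\mathscr{F}_\beta y^\beta=\mathscr{F}$ and $\mathscr{F}_{\alpha\beta}y^\beta=0$ (subscripts denoting $y$-derivatives). These yield $g_{\alpha\beta}y^\beta=-f''\mathscr{F}\,\mathscr{F}_\alpha$, hence $g^{\alpha\beta}\mathscr{F}_\beta=-y^\alpha/(f''\mathscr{F})$. Feeding this into $G^\alpha$, the contribution carrying $f''$ collapses to a multiple of $y^\alpha$ and the explicit $f'$, $f''$ factors cancel against the compensating scalings hidden in $g^{\alpha\beta}$, leaving the spray of the choice $f=x^2/2$; this confirms $G^\alpha$ does not depend on $f$. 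I would present the cancellation cleanly by splitting $T_\beta := (\partial^2_{x^\gamma y^\beta}\mathscr{L})y^\gamma - \partial_{x^\beta}\mathscr{L}$ into its $f''$-part and its $f'$-part and contracting each separately with $\tfrac12 g^{\alpha\beta}$.

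Finally, the explicit identities are pure algebra. For the functions (\ref{kql}) one has $a[\mathscr{L}(y)+\tfrac12] = 1-\mathscr{F}^a$, while $\mathscr{F}^o(\ell(y)) = f'(\mathscr{F}) = \mathscr{F}^{a-1}$ and the conjugacy $\tfrac1a+\tfrac1b=1$ gives $(a-1)b=a$, so $\mathscr{H}(\ell(y)) = -f^o(\mathscr{F}^{a-1}) = -\tfrac12-\tfrac1b[\mathscr{F}^a-1]$ and $b[\mathscr{H}(\ell(y))+\tfrac12] = 1-\mathscr{F}^a$, which is Eq.\ (\ref{mof}). Dropping the additive constants makes $\mathscr{L}=-\tfrac1a\mathscr{F}^a$ and $\mathscr{H}=-\tfrac1b(\mathscr{F}^o)^b$ positive homogeneous of degrees $a$ and $b$ respectively, and the same exponent computation $(\mathscr{F}^{a-1})^b=\mathscr{F}^a$ turns the identity into $b\mathscr{H}(\ell(y)) = -\mathscr{F}^a = a\mathscr{L}(y)$.
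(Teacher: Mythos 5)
Your proposal is correct, and on everything except the spray it runs along the same lines as the paper: the paper's own proof explicitly treats only the last two paragraphs of the statement (the identity (\ref{mof}) and the spray), regarding the bijection, Legendre-transform and Hessian claims as a summary of the computations preceding the theorem, which are exactly the ones you invoke (Eq.\ (\ref{jjh}), the displayed evaluation of $\mathscr{H}(\ell(y))$, and Eqs.\ (\ref{med})--(\ref{men})); your power-law algebra is equivalent to the paper's one-liner $\mathscr{H}(\ell(y))=-f'(\mathscr{F})\mathscr{F}+f(\mathscr{F})=(a-1)\mathscr{L}(y)=\tfrac{a}{b}\mathscr{L}(y)$. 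The genuine divergence is the spray independence. The paper never inverts the metric: it takes a geodesic of the spray of $\mathscr{L}=-\mathscr{F}^2/2$, uses that $\mathscr{L}$ (the energy of a $2$-homogeneous Lagrangian) is constant along it, and observes via the chain rule that the Euler--Lagrange expression of $\tilde{\mathscr{L}}=h(\mathscr{L})=-\tilde f(\mathscr{F})$ then reduces to $\frac{\dd h'}{\dd t}\,\p\mathscr{L}/\p\dot x^\mu=0$; nondegeneracy of $\dd^2\tilde{\mathscr{L}}$ converts this into the $\tilde G$-spray equation, and arbitrariness of initial data gives $G=\tilde G$. Your route is a direct tensor computation, and it does close, but the phrase ``the $f'$, $f''$ factors cancel against the scalings hidden in $g^{\alpha\beta}$'' is precisely where all the work sits. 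Concretely, with $\phi:=y^\gamma\p_{x^\gamma}\mathscr{F}$, $S_\beta:=y^\gamma\p_{x^\gamma}\p_{y^\beta}\mathscr{F}-\p_{x^\beta}\mathscr{F}$ and $g^{(2)}_{\alpha\beta}:=-\tfrac12\,\p^2_{y^\alpha y^\beta}\mathscr{F}^2$ the metric of the choice $f=x^2/2$, your splitting reads $T_\beta=-f''\phi\,\mathscr{F}_\beta-f'S_\beta$; the $f''$-part collapses to $\frac{\phi}{2\mathscr{F}}y^\alpha$ as you say, but the $f'$-part needs two further facts: the rank-one inversion $f'g^{\alpha\beta}=\mathscr{F}g^{(2)\alpha\beta}-\frac{f'-f''\mathscr{F}}{f''}\,\frac{y^\alpha y^\beta}{\mathscr{F}^2}$ (which follows from $g_{\alpha\beta}=\frac{f'}{\mathscr{F}}g^{(2)}_{\alpha\beta}+(\frac{f'}{\mathscr{F}}-f'')\mathscr{F}_\alpha\mathscr{F}_\beta$ together with $\mathscr{F}_\alpha g^{(2)\alpha\beta}\mathscr{F}_\beta=-1$), and the identity $y^\beta S_\beta=0$, obtained by differentiating Euler's relation $y^\beta\p_{y^\beta}\mathscr{F}=\mathscr{F}$ in $x$. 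The latter is what annihilates the $f$-dependent rank-one term once contracted with $S_\beta$, leaving $G^\alpha=\frac{\phi}{2\mathscr{F}}\,y^\alpha-\tfrac12\,\mathscr{F}g^{(2)\alpha\beta}S_\beta$, which is manifestly independent of $f$ and equal to the $f=x^2/2$ spray. Comparing the two: your computation is self-contained linear algebra and exhibits the spray in closed form; the paper's variational argument is shorter, avoids inverting $g$, and explains conceptually why the result holds (constancy of $\mathscr{L}$ along geodesics makes all reparametrized Lagrangians $h(\mathscr{L})$ share their extremals), at the price of invoking energy conservation and the nondegeneracy of $\dd^2\tilde{\mathscr{L}}$.
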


Particularly interesting is the fact that the spray, and hence  the non-linear connection, does not depend on the pair $(f,f^o)$. It shows that the dynamics of Lorentz-Finsler gravitational theories  is largely independent of such choice. In fact it must be recalled that the non-linear connection of (Lorentz-)Finsler geometry follows from the spray, and the non-linear curvature follows from the non-linear connection. Many dynamical equations proposed in the literature are formulated in terms of the non-linear curvature.

\begin{proof}
Let us prove the identities in the last paragraph.  We have already calculated $\mathscr{H}(\ell(y))=-f'(\mathscr{F}) \mathscr{F}+f(\mathscr{F})$, thus if $f=x^a/a$  we have $\mathscr{H}(\ell(y))=(1-a)f(\mathscr{F}(y))=(a-1)\mathscr{L}(y)=\frac{a}{b} \mathscr{L}(y)$. For $f$  given by  Eq.\ (\ref{kql}) it is sufficient to redefine $\mathscr{L}$ and $\mathscr{H}$ in the previous expression by adding to them a suitable constant.

Let us prove the spray identity.
Let $x(t)$ be a geodesic for the spray $G$ induced by $\mathscr{L}=-\mathscr{F}^2/2$, with initial conditions $(x(0),\dot x(0))=(x_0,y_0)$, hence a solution of $\ddot x^\mu+G^\mu(x,\dot x)=0$ with such initial conditions. Let us check whether it is a stationary point for the action functional $\int \tilde{\mathscr{L}}$, where $\tilde {\mathscr{L}}=h(\mathscr{L})$.
The Euler-Lagrange equations are
\begin{align*}
\frac{\dd }{\dd t}\frac{\p \tilde {\mathscr{L}}}{\p \dot x^\mu}-\frac{\p \tilde {\mathscr{L}}}{\p x^\mu} =\frac{\dd h' }{\dd t} \frac{\p \mathscr{L}}{\p \dot x^\mu}+h[\frac{\dd }{\dd t}\frac{\p  \mathscr{L}}{\p \dot x^\mu}-\frac{\p  \mathscr{L}}{\p x^\mu} ]=\frac{\dd h' }{\dd t} \frac{\p \mathscr{L}}{\p \dot x^\mu}
\end{align*}
but $\mathscr{L}$ is constant over $x(t)$, and so is $h'(\mathscr{L})$. We conclude that  $x(t)$ is a solution of
\[
\frac{\dd }{\dd t}\frac{\p \tilde {\mathscr{L}}}{\p \dot x^\mu}-\frac{\p \tilde {\mathscr{L}}}{\p x^\mu}=0
\]
so provided $\tilde g=\dd^2 \tilde{\mathscr{L}}$ is non-degenerate, it solves the spray equation $\ddot x+2\tilde G(x,\dot x)=0$. Now, at $t=0$ we have
\[
G(x_0,y_0)=-\ddot x(0)/2=\tilde G(x_0,y_0)
\]
and since the initial conditions are arbitrary we conclude that the sprays coincide. Now, let $h(x)= -\tilde f(\sqrt{-2 x})$, where $\tilde s\ne 0$, so that $h(\mathscr{L})=-\tilde f(\mathscr{F})$. We have already shown that $\dd^2 \tilde f(\mathscr{F})$ is non-degenerate, thus the desired result follows.
\end{proof}

%


Notice that we have recovered a Legendre duality without using a convexity assumption on $\mathscr{L}$ (the Legendre-Fenchel generalization of the Legendre transform is not viable since $\mathscr{L}$ is not convex) or a $C^2$ assumption.



\begin{remark}[Differentiability at the boundary and  exponents choice] $\empty$  \label{ren}\\
Observe that on a $C^1$  strictly convex  Lorentz-Minkowski space,  $\mathscr{L}$ need not be differentiable in $\p C$. Let us consider the homogeneous case $f=x^a/a$, $f^o=x^b/b$. If $a,b>1$ and $\mathscr{L}$ is $C^1$ on $C$, and $\dd \mathscr{L}\ne 0$ everywhere on $\p C$, then the condition $\dd \mathscr{F} \to \infty$ for $y \to \p C$ easily follows from $\mathscr{F}^{-1}(0)=\mathscr{L}^{-1}(0)=\p C$. Moreover, $\ell$ can be extended by continuity to $C$.


The most convenient choice of exponents $a,b>1$ can sometimes be inferred from the differentiability of $\mathscr{L}$ at $\p C$. Suppose there is one choice such that the differential of $\dd \mathscr{L}$ is continuous on $C$ and  $-a\dd \mathscr{L}=\dd \mathscr{F}^{a} \ne 0$ at every point of $\p C$. This means that the map $\ell$ does not send any point of $\p C$ to zero. No larger or smaller exponent $a'$ could be used to the same effect, for if $a'>a$, setting $\alpha=\frac{a'}{a}>1$ we have $ \dd \mathscr{F}^{a'}= \dd (\mathscr{F}^{a})^\alpha= \alpha (\mathscr{F}^{a})^{\alpha-1} \dd \mathscr{F}^{a}$ which implies $ \dd \mathscr{F}^{a'} =0$ at $\p C$. Similarly, for $1<a'<a$  we have $\alpha<1$ and $ \dd \mathscr{F}^{a'} \to \infty$ at $\p C$. In other words, the existence of conjugate exponents with the mentioned property implies their uniqueness. Once the most convenient $a$-positive homogeneous function $\mathscr{L}$ has been identified one can introduce the dependence on the base manifold through  a variable $x$ and impose suitable dynamical equations.


In any case one can also argue \cite{minguzzi15e} that, physically speaking,  not having an extended bijection $\ell$ among the {\em closed} cones could really be an interesting feature, since such bijection does not seem to be observable. All boils down to the fact that in relativity physics massive particles have a natural affine parameter, the proper time, in fact structured ones can decay with characteristic half-times, while lightlike particles do not carry a clock and so do not need to be associated to an  affine parameter, just to a lightlike direction.
\end{remark}

\subsection{Stable recurrent set} \label{fir}
In this section we establish the equivalence between stable causality and the antisymmetry of Seifert's relation $J_S$. The proofs coincide with those given in \cite{minguzzi07} save for some  modifications required by the generalization from Lorentzian cones to general cones. Only minimal changes are required: notice that convex neighborhoods in \cite{minguzzi07} are not really used, the local non-imprisoning property pointed out in Prop.\ \ref{iiu} is sufficient.
We provide the proofs for completeness.

\begin{definition}
The Seifert violating  set $vJ_S\subset M$ is given by those $p\in M$ for which there is $q\ne p$ such that $(p,q)\in J_S$ and $(q,p)\in J_S$.
\end{definition}
Clearly, $vJ_S=\emptyset$ if and only if $J_S$ is antisymmetric. We recall that $p$ belongs to the stable  recurrent set if  for  every $C^0$ (equiv.\ locally Lipschitz) proper cone structure $C'>C$ there is a closed continuous $C'$-causal curve passing through $p$.

\begin{theorem} \label{ssp}
Let $(M,C)$ be a closed cone structure. The set $vJ_S$ is closed and coincides with the stable recurrent set.
\end{theorem}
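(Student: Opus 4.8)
The plan is to reduce everything to a single \emph{loop-cutting} device fed by the countable family of locally Lipschitz proper cone structures $\{C_k\}$ furnished by Th.\ \ref{sqd}, for which $C<C_{k+1}<C_k$, $C=\cap_k C_k$ and, crucially, $J_S=\cap_m \bar J_{C_m}$. The device is this: if a closed continuous causal curve for some structure $\le C_1$ passes through a point lying in a small neighborhood $U$ built as in Prop.\ \ref{iiu} for $C_1$, then—since $U$ contains no closed continuous $C_1$-causal curve—the loop must leave a coordinate ball $B$ with $\bar B\subset U$; cutting it at its first exit point $c\in\p B$ replaces the loop by two causal segments $p\to c$ and $c\to p$ whose common endpoint $c$ lies in the \emph{compact} set $\p B$ and is separated from $p$. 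First I would record the working criterion
\begin{quote}
$p\in vJ_S$ if and only if for every $k$ there is a closed continuous $C_k$-causal curve through $p$,
\end{quote}
which both assertions of the theorem will then invoke. Its forward direction is immediate: a witness $q\neq p$ with $(p,q),(q,p)\in J_S\subset J_{C_k}$ gives $C_k$-causal curves $p\to q$ and $q\to p$, whose concatenation is a nontrivial closed $C_k$-causal loop through $p$. For the converse I would cut, for each $k$, a closed $C_k$-loop at $c_k\in\p B$ so that $(p,c_k),(c_k,p)\in J_{C_k}$, extract by compactness of $\p B$ a subsequence $c_k\to c\neq p$, and note that for fixed $m$ and $k\ge m$ one has $J_{C_k}\subset J_{C_m}\subset\bar J_{C_m}$; passing to the limit yields $(p,c),(c,p)\in\bar J_{C_m}$ for all $m$, hence $(p,c),(c,p)\in\cap_m\bar J_{C_m}=J_S$ and $p\in vJ_S$.

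The coincidence of $vJ_S$ with the stable recurrent set then splits into two short arguments. If $p\in vJ_S$, pick the witness $q\neq p$; for any $C^0$ proper $C'>C$ one has $J_S\subset J_{C'}$, so concatenating the $C'$-causal curves $p\to q$ and $q\to p$ produces a closed $C'$-causal curve through $p$, i.e.\ $p$ is stably recurrent. Conversely, if $p$ is stably recurrent then in particular each $C_k>C$ admits a closed $C_k$-causal loop through $p$, and the criterion immediately gives $p\in vJ_S$.

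For closedness I would take $p_n\in vJ_S$ with $p_n\to p$ and verify the criterion at $p$. Fixing $k$, choose $U\ni p$ as in Prop.\ \ref{iiu} for $C_1$ and a ball $B$, $\bar B\subset U$, so that $p_n\in B$ for large $n$. By the criterion applied to $p_n$ at level $k+1$ there is a closed $C_{k+1}$-causal loop through $p_n$; cutting it at its first exit $c_n\in\p B$ gives $(p_n,c_n),(c_n,p_n)\in J_{C_{k+1}}$, and by compactness $c_n\to c\in\p B$, $c\neq p$. Thus $(p,c),(c,p)\in\bar J_{C_{k+1}}$, and Th.\ \ref{dxp} yields $\bar J_{C_{k+1}}\subset I_{C_k}\cup\Delta$; since $c\neq p$ we get $(p,c),(c,p)\in I_{C_k}$, and concatenating the corresponding $C_k$-timelike curves produces a closed $C_k$-causal loop through $p$. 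As $k$ was arbitrary, the criterion gives $p\in vJ_S$, so $vJ_S$ is closed.

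The main obstacle is that the bare definition of $vJ_S$ supplies no control on the second point: for a convergent sequence $p_n$ the witnesses $q_n$ could run off to infinity or collapse onto $p$, and $J_S$, being an abstract closed relation, offers no curve on which to take limits. The loop-cutting step is exactly what dissolves this difficulty, trading $q$ for a boundary point $c\in\p B$ that is simultaneously confined to a compact set and bounded away from $p$; after that, the closedness of the relations $\bar J_{C_m}$, the inclusion $\bar J_{C_{k+1}}\subset I_{C_k}\cup\Delta$ of Th.\ \ref{dxp}, and the representation $J_S=\cap_m\bar J_{C_m}$ carry the argument through. A secondary point requiring care is reconciling the countable family $\{C_k\}$ with the universally quantified definition of the stable recurrent set, which is precisely why the criterion is phrased via the $C_k$ and the equivalence with stable recurrence is handled as a separate step.
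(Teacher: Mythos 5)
Your proof is correct, and its engine is the same one the paper runs: cut the closed loops of the wider structures $C_k$ at the first exit point from a coordinate ball $\bar B\subset U$, with $U$ the local non-imprisoning neighborhood of Prop.\ \ref{iiu}, then use compactness of $\p B$ and the representation $J_S=\cap_k\bar J_{C_k}$ of Th.\ \ref{sqd}. The interesting divergence is in what you do after the cut. Where the paper invokes the limit curve theorem \ref{main} (through its final clause on $\cap_n\bar J_n$) to place the limiting pairs $(p,c)$, $(c,p)$ in $J_S$, you need no limit curve at all: from $(p,c_k),(c_k,p)\in J_{C_k}\subset J_{C_m}$ for $k\ge m$ and the closedness of each $\bar J_{C_m}$ you get the limit pairs in $\cap_m\bar J_{C_m}=J_S$ directly. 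This purely relational argument is a genuine simplification, made possible because the statement only requires membership of endpoint pairs in $J_S$, not a connecting curve (which the paper's route incidentally produces, giving $(p,q)\in J$). On the other hand, your closedness step is more roundabout than the paper's: there the paper cuts closed $C_k$-timelike loops through $p_k$ (supplied by Prop.\ \ref{paq}) and lands $(p,q),(q,p)\in J_S$ in one stroke by the same limit argument, whereas you re-verify your criterion at each level $k$, inserting Th.\ \ref{dxp} to convert the $\bar J_{C_{k+1}}$-pairs into closed $C_k$-timelike loops through $p$; this is valid, but a diagonal version of your own closure argument---cut the $C_n$-loop through $p_n$ and let $n\to\infty$---would reach $(p,c),(c,p)\in J_S$ directly. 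Finally, in the easy inclusion you use $J_S\subset J_{C'}$ where the paper uses Prop.\ \ref{paq}; both are fine, since causal (rather than timelike) loops suffice for the definition of the stable recurrent set.
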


\begin{proof}
Suppose that $p\in vJ_S$ and let $q\ne p$ be such that  $(p,q)\in J_S$ and $(q,p)\in J_S$, then by Prop.\ \ref{paq}  $(p,q)\in I_{C'}$ and $(q,p)\in I_{C'}$, thus there is a closed continuous $C'$-causal curve passing through $p$.

For the converse, Let $U\ni p$ be the non-imprisoning neighborhood constructed in Prop.\ \ref{iiu} and let $B\ni p$ be an open coordinate ball whose closure is contained in $U$. Let $C_k>C$ be the sequence of proper cone structures of Prop.\ \ref{sqd}  and let $\sigma_k$ be a closed continuous $C_k$-causal curve passing through $p$. Starting from $p$ let $q_k\in \sigma_k\cap \p B$ be the first escaping point from $B$. By the limit curve theorem \ref{main}  there is $q\in B$ and a continuous $C$-causal curve contained in $U$ joining $p$ to $q$, i.e. $(p,q)\in J\subset J_S$. Moreover, still by the limit curve theorem and Theorem \ref{sqd} $(q,p)\in J_S$.

Let $p_k\to p$ where $p_k\in vJ_S$. Let $C_k>C$ be the sequence of proper cone structure of Prop.\ \ref{sqd} and let $\sigma_k$ be closed $C_k$-timelike curves passing through $p_k$.  Let $U\ni p$ be the non-imprisoning neighborhood constructed in Prop.\ \ref{iiu} and let $B\ni p$ be an open coordinate ball whose closure is contained in $U$. Starting from $p_k$ let $q_k\in \sigma^s_k\cap \p B$ be the first escaping point from $B$. Up to subsequences $q_k\to q\in \p B$. By the same argument used above $(p,q)\in J_S$ and $(q,p)\in J_S$, thus $p\in vJ_S$.
\end{proof}

In the next proofs all the cone structures wider than $C$ are locally Lipschitz proper cone structures. A cone structure is strongly causal at $p$ if $p$ admits arbitrarily small causally convex open neighborhoods. It is strongly causal if it is strongly causal everywhere.

\begin{lemma} \label{spm}
Let $(M,C)$ be a closed cone structure. The set at which the cone structure is strongly causal is open.
\end{lemma}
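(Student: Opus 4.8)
The plan is to show that any point $p$ at which $(M,C)$ is strongly causal possesses an entire open neighborhood consisting of strongly causal points; this neighborhood will simply be a suitable causally convex neighborhood of $p$. The essential mechanism is that strong causality is a property that transfers cleanly between an open set and an ambient manifold once one arranges a causally convex ``buffer'' neighborhood of $p$, so that all causal curves relevant to nearby points are forced to stay inside that buffer.

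First I would invoke strong causality at $p$ together with Prop.\ \ref{iiu} and Remark \ref{roc}: choose a globally hyperbolic coordinate neighborhood $U\ni p$ of the type constructed in Prop.\ \ref{iiu}, and then, using strong causality at $p$, pick an open set $W$ with $p\in W\subset U$ that is causally convex in $M$. Since $(U,C\vert_U)$ is globally hyperbolic (Remark \ref{roc}) it is causally simple and hence strongly causal (Lemma \ref{mvh}); because strong causality trivially passes to open subsets, $(W,C\vert_W)$ is strongly causal at each of its points. It then remains to upgrade strong causality of $q\in W$ inside $(W,C\vert_W)$ to strong causality of $q$ inside $(M,C)$. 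For this, given any open $O\ni q$ in $M$, I would apply strong causality of $(W,C\vert_W)$ at $q$ to $O\cap W$ to obtain an open set $V$ with $q\in V\subset O$ that is causally convex \emph{in} $W$; a routine composition argument shows $V$ is causally convex in $M$. Indeed, any continuous causal curve in $M$ with endpoints in $V\subset W$ must, by causal convexity of $W$ in $M$, be contained in $W$, and then, being a causal curve contained in $W$ with endpoints in $V$, must be contained in $V$ by causal convexity of $V$ in $W$. Hence $(M,C)$ is strongly causal at $q$, and since $q\in W$ was arbitrary, $W$ lies in the strongly causal set, which is therefore open.

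The step I expect to require the most care is precisely this transfer of causal convexity, i.e.\ the observation that causal convexity in $W$ plus causal convexity of $W$ in $M$ yields causal convexity in $M$. Its validity hinges on the fact that the continuous causal curves of $(W,C\vert_W)$ are exactly the continuous $C$-causal curves of $M$ whose image lies in $W$, so that the two notions of causal convexity compose. The role of choosing $W$ inside the globally hyperbolic neighborhood of Prop.\ \ref{iiu} is only to guarantee strong causality of $(W,C\vert_W)$; the causal convexity of $W$ in $M$ (supplied directly by the definition of strong causality at $p$) is what makes the composition go through. No limit curve argument is needed, which is why the proof is short and purely order-theoretic.
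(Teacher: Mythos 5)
Your proof is correct, and it follows the same essential mechanism as the paper's: a two-layer confinement in which the causally convex buffer $W$ supplied by strong causality at $p$ forces curves with nearby endpoints into the model neighborhood of Prop.~\ref{iiu}, where an inner causally convex set then traps them. The composition step you flag as delicate (causal convexity of $V$ in $W$ plus causal convexity of $W$ in $M$ yields causal convexity of $V$ in $M$) is sound, as is the observation that continuous causal curves of $(W,C\vert_W)$ are exactly those of $(M,C)$ with image in $W$, and that strong causality restricts to open subsets. Where you differ from the paper is only in how the inner causally convex sets are certified: the paper produces them concretely as $g$-chronological diamonds of the flat Minkowski metric with wider cones constructed in Prop.~\ref{iiu}, using the elementary facts that $C$-causal curves in $U$ are $g$-causal and that $g$-diamonds are $g$-causally convex, so the proof is self-contained at that point; you instead invoke Remark~\ref{roc} and the chain globally hyperbolic $\Rightarrow$ causally simple $\Rightarrow$ strongly causal (Lemma~\ref{mvh}). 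This is valid and non-circular, since Th.~\ref{dao}, Th.~\ref{nnq}, Th.~\ref{mom} and Lemma~\ref{mvh} all precede Lemma~\ref{spm} and do not depend on it — indeed you could shortcut your own chain, since causal simplicity of $(U,C\vert_U)$ follows directly from the closedness of $J(U)$ in Th.~\ref{dao} together with causality inherited from the flat metric, without passing through global hyperbolicity at all. The trade-off: your version isolates a reusable abstract transfer principle for causal convexity, at the cost of heavier machinery than the paper's bare-hands diamond argument.
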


\begin{proof}
Let $(M,C)$ be strongly causal at $p$ and let $U\ni p$ be the non-imprisoning neighborhood constructed in Prop.\ \ref{iiu} (recall that $U$ is obtained from the chronological diamond of a local (flat) Minkowski metric $g$). We know that there is a causally convex open neighborhood $V\ni p$, $\bar V\subset U$, so if $q\in V$, and $U'$ is an open neighborhood of $q$ we can find a $g$-chronological diamond $Q$, such that $\bar Q\subset U'\cap V$. Then $Q$ is a causally convex neighborhood for $q$, in fact every continuous $C$-causal curve $\gamma$ starting and ending in $Q$ and leaving $Q$ cannot be entirely contained in $U$, for otherwise it would be a continuous $g$-causal curve, thus violating the $g$-causal convexity of $Q$ in $U$. But then $\gamma$ would escape and reenter $U$ and hence $V$ in contradiction to the $C$-causal convexity of $V$. Since $q\in V$ is arbitrary $(M,C)$ is strongly causal at every point of $V$ which finishes the proof.
\end{proof}

\begin{lemma} \label{paf}
Let $(M,C)$ be a closed cone structure.
If the locally Lipschitz proper cone structure $(M, \check C)$, $\check C> C$, is causal at $x$ then for every locally Lipschitz proper cone structure $C'$, $C<C'<\check C$, $(M,C')$
is strongly causal at $x$.
\end{lemma}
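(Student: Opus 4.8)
The plan is to prove the statement by contradiction, exploiting the local non-imprisoning structure guaranteed by Prop.\ \ref{iiu} together with the limit curve theorem \ref{main} applied to a sequence of cone structures squeezing down to $C'$. Suppose $(M,C')$ is not strongly causal at $x$ for some locally Lipschitz proper cone structure $C'$ with $C<C'<\check C$. Since $C'>C$, the neighborhood $U\ni x$ constructed in Prop.\ \ref{iiu} for the closed cone structure $C$ is still non-imprisoning for $C'$ (as $C'$-causal curves are $C$-causal in the relevant sense, or more simply because we can build $U$ directly for $C'$). By the failure of strong causality at $x$, there is a coordinate ball $B$, $\bar B\subset U$, and a sequence of continuous $C'$-causal curves $\sigma_n$ with endpoints $x_n\to x$, $z_n\to x$, each escaping $\bar B$.

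First I would extract, via the first escaping points $c_n\in\p B$ and compactness of $\p B$, a subsequence $c_k\to c\in\p B$ with $c\neq x$. Applying the limit curve theorem \ref{main} to the $C'$-causal segments of $\sigma_k$ joining $x_k$ to $c_k$ and $c_k$ to $z_k$, and using the non-imprisoning property of $U$ (which rules out case (ii) of the theorem), I obtain continuous $C'$-causal curves realizing $(x,c)\in \bar J_{C'}$ and $(c,x)\in\bar J_{C'}$. This is the analog of the argument in Lemma \ref{mvh} and Th.\ \ref{mwh}, but now carried out at the level of $C'$ rather than $C$.

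The key step is then to upgrade these closure relations into an actual $\check C$-causal cycle through $x$, contradicting causality of $\check C$ at $x$. Since $C<C'<\check C$, I can apply Th.\ \ref{dxp}: because $\bar J_{C'}\subset \bar J_{C'}$ and $C'<\check C$ with $\check C$ a $C^0$ (locally Lipschitz) proper cone structure, the inclusion $\overline{J_{C'}}\subset I_{\check C}\cup\Delta$ holds. Hence from $(x,c)\in\bar J_{C'}$ with $x\neq c$ we deduce $(x,c)\in I_{\check C}$, and similarly $(c,x)\in I_{\check C}$. By transitivity of $I_{\check C}$ this yields $(x,x)\in I_{\check C}$, i.e.\ a closed $\check C$-timelike curve through $x$, so $\check C$ is not causal at $x$, the desired contradiction.

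The main obstacle I anticipate is the careful bookkeeping of the intermediate cone structure in applying Th.\ \ref{dxp}: that theorem is stated for $\overline{J_{C}}\subset I_{C'}\cup\Delta$ with $C$ closed and $C'$ a $C^0$ proper cone structure satisfying $C<C'$. Here I must apply it with the \emph{roles} played by $C'$ (closed, narrower) and $\check C$ (wider, $C^0$), so I should verify the hypotheses match: $C'$ is indeed a closed cone structure (being locally Lipschitz and proper) and $C'<\check C$, so the theorem applies verbatim to give $\overline{J_{C'}}\subset I_{\check C}\cup\Delta$. A secondary subtlety is ensuring the limit curves are nondegenerate (i.e.\ that $c\neq x$ so that the limit curve does not contract to a point); this is guaranteed because $c\in\p B$ while $x$ is in the interior of $B$, so case (i) of the limit curve theorem applies with a genuine connecting curve. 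Once these two points are pinned down, the contradiction closes immediately and, by Lemma \ref{spm}, strong causality at $x$ in fact propagates to a neighborhood, though only the pointwise statement is needed here.
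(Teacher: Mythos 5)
Your proof is correct and follows essentially the same route as the paper's: contradiction via the non-imprisoning neighborhood of Prop.\ \ref{iiu}, a limit curve argument producing $(x,c)\in J_{C'}$ and $(c,x)\in\bar J_{C'}$, and then Th.\ \ref{dxp} to upgrade these to $I_{\check C}$-relations whose composition gives a closed $\check C$-timelike curve through $x$, contradicting causality of $\check C$ at $x$. The only (harmless) differences are that the paper tracks both the first escaping point $c$ and the last reentering point $d$, obtaining $(x,c),(d,x)\in J_{C'}$ and $(c,d)\in\bar J_{C'}$, while you use only $c$; and note that for your relation $(c,x)\in\bar J_{C'}$ neither the limit curve theorem nor non-imprisonment is needed (indeed non-imprisonment of $U$ would not apply to the segments leaving $U$), since closure membership follows directly from $(c_k,z_k)\in J_{C'}$ and $(c_k,z_k)\to(c,x)$.
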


\begin{proof}
If $(M,C')$ is not strongly causal at $x$ then  there is  a non-imprisoning neighborhood $U \ni x $ as in Prop.\ \ref{iiu} and a sequence of
continuous $C'$-causal curves $\sigma_n$ of endpoints $x_n,z_n$, with $x_n\to
x$, $z_n \to x$, not entirely contained in $U$. Let $B$, $\bar B\subset U$ be a coordinate ball of $x$.
 Let $c_n\in \p {B}$ be the first
point at which $\sigma_n$ escapes $\bar B$, and let $d_n$ be the last
point at which $\sigma_n$ reenters $\bar B$. Since $\p {B}$ is compact
there are $c,d \in \p B$, and  a subsequence $\sigma_k$ such that
$c_k \to c$, $d_k \to d$. By the limit curve theorem $(x,c), (d,x)\in J_{C'}$, while $(c,d)\in \bar J_{C'}$. By Th.\ \ref{dxp} there is a closed $\check C$-timelike curve passing through $x$, a contradiction.
\end{proof}

We obtain another proof of the strong causality of stably causal closed cone structures (f.\ Th.\ \ref{mwh}).

\begin{corollary} \label{jjb}
Any stably causal closed cone structure $(M,C)$ is  strongly causal.
\end{corollary}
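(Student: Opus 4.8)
The statement to prove is Corollary \ref{jjb}: any stably causal closed cone structure is strongly causal. The plan is to obtain this as an immediate consequence of Lemma \ref{paf}, which I may freely invoke since it precedes the corollary in the excerpt.

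First I would unpack the definition of stable causality: there exists a $C^0$ proper cone structure $\check C > C$ which is causal. By Theorem \ref{ddo} I can assume $\check C$ is locally Lipschitz (it shows the $C^0$ condition in such statements can be upgraded to locally Lipschitz), so $(M,\check C)$ is a causal locally Lipschitz proper cone structure with $\check C > C$. In particular $(M,\check C)$ is causal at every point $x \in M$.

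Next, the key step is to choose an intermediate cone structure. Applying Theorem \ref{ddo} once more (or Proposition \ref{doo}/Theorem \ref{sqd}), I can find a locally Lipschitz proper cone structure $C'$ with $C < C' < \check C$. Now fix any $x \in M$. Since $(M,\check C)$ is causal, it is in particular causal at $x$, so Lemma \ref{paf} applies directly: with the roles matched ($\check C$ the outer causal structure, $C'$ the intermediate one), Lemma \ref{paf} yields that $(M,C')$ is strongly causal at $x$. Because strong causality is a property that passes to narrower cone structures — every continuous $C$-causal curve is a continuous $C'$-causal curve, so a $C'$-causally convex neighborhood is automatically $C$-causally convex — it follows that $(M,C)$ is strongly causal at $x$ as well. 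The arbitrariness of $x$ gives strong causality of $(M,C)$ on all of $M$.

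The proof is essentially a one-line deduction, so there is no serious obstacle; the only thing requiring a moment's care is the monotonicity remark, namely that strong causality for the wider structure $C'$ descends to the narrower $C$. This is immediate from the fact that causal convexity of a neighborhood $V$ with respect to the larger cones (more curves to test against) is the stronger condition, hence implies causal convexity with respect to $C$. I would state this explicitly rather than leave it implicit, since it is the conceptual content that connects Lemma \ref{paf} (phrased for $C'$) to the desired conclusion about $C$.
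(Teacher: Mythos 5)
Your proof is correct and follows exactly the paper's own argument: the paper takes the causal $\check C > C$ from stable causality, inserts an intermediate locally Lipschitz proper cone structure $C < C' < \check C$, applies Lemma \ref{paf} to get strong causality of $(M,C')$, and then concludes strong causality of $(M,C)$. Your only addition is to make explicit the monotonicity step (that a $C'$-causally convex neighborhood is automatically $C$-causally convex, since $C$-causal curves are $C'$-causal), which the paper leaves implicit in its final ``thus''.
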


\begin{proof}
By the assumption there is $\check  C>C$ causal, and we can find $C<C'<\check C$, which is strongly causal by Lemma \ref{paf}, thus $(M,C)$ is itself strongly causal.
\end{proof}

\begin{theorem} \label{nkk}
Let $(M,C)$ be a closed cone structure. $(M,C)$ is stably causal if and only if  $J_S$ is antisymmetric (i.e.\ $vJ_S=\emptyset$).
\end{theorem}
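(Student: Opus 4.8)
The plan is to prove both implications, with the harder direction being that antisymmetry of $J_S$ implies stable causality. The easy direction is immediate: if $(M,C)$ is stably causal then by definition there is a $C^0$ proper cone structure $C'>C$ which is causal, so $C'$ has no closed continuous causal curves. Since $J_S=\cap_{C''>C} J_{C''}$ and $C'$ is one such admissible cone structure, antisymmetry of $J_{C'}$ forces antisymmetry of $J_S$; more directly, if $(p,q)\in J_S$ and $(q,p)\in J_S$ with $p\neq q$ then by Proposition \ref{paq} we would have $(p,q)\in I_{C'}$ and $(q,p)\in I_{C'}$, producing a closed $C'$-timelike curve and contradicting causality of $C'$. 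Hence $vJ_S=\emptyset$.

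For the converse, I would argue by contraposition: assume $(M,C)$ is \emph{not} stably causal and deduce that $vJ_S\neq\emptyset$. By Theorem \ref{ssp} we know $vJ_S$ coincides with the stable recurrent set and is closed, so it suffices to show the stable recurrent set is non-empty. The strategy is to produce, for every locally Lipschitz proper $C'>C$, a point through which there passes a closed continuous $C'$-causal curve, and then extract a common such point by a compactness/limit argument. First I would fix the sequence $\{C_k\}$ of locally Lipschitz proper cone structures from Theorem \ref{sqd}, with $C<C_{k+1}<C_k$ and $C=\cap_k C_k$. Since $(M,C)$ is not stably causal, no $C_k$ is causal (indeed if some $C_k$ were causal then $C$ would be stably causal); so for each $k$ there is a closed continuous $C_k$-causal curve $\sigma_k$. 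The key obstacle is controlling where these closed curves live, since a priori they could wander off to infinity as $k$ grows, so that no single recurrent point emerges in the limit.

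To overcome this I would invoke Lemma \ref{spm} and Lemma \ref{paf}: the set of strong causality is open, and at any point where some $C_k$ is causal all narrower structures are strongly causal there. Let $A$ be the (closed) set of points at which $(M,C)$ fails to be strongly causal; by Corollary \ref{jjb} this set is non-empty under the assumption of non-stable-causality (since stable causality would force strong causality everywhere). The plan is to show that the closed curves $\sigma_k$ must accumulate on $A$, or more precisely that $A$ meets the stable recurrent set. Concretely, take $p\in A$; at $p$ strong causality fails, so by the argument in the proof of Theorem \ref{mwh} (using the non-imprisoning neighborhood $U$ of Proposition \ref{iiu} and a coordinate ball $B$ with $\bar B\subset U$) there is a sequence of continuous $C$-causal curves with endpoints converging to $p$ that escape $B$, whence via the limit curve theorem \ref{main} and Theorem \ref{dxp} one obtains $q\in \partial B$ with $(p,q)\in J_S$ and $(q,p)\in J_S$. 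Since $p\neq q$, this exhibits $p\in vJ_S$, completing the contraposition.

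The technical heart, and the step I expect to require the most care, is the limit-curve extraction guaranteeing that the escape point $q$ is genuinely distinct from $p$ and that both relations $(p,q)$ and $(q,p)$ land in $J_S$ rather than merely in $\bar J$. Here I would lean on Theorem \ref{dxp}, which gives $\overline{J_C}\subset I_{C'}\cup\Delta$ for every $C^0$ proper $C'>C$, so that accumulation data in $\bar J$ upgrades to membership in $J_{C'}$ for all such $C'$, hence in $J_S$ by Proposition \ref{paq}. The non-imprisoning property of the neighborhood $U$ from Proposition \ref{iiu} is exactly what prevents the limit curve from collapsing to a point and ensures $q\in\partial B$ is reached, so $q\neq p$. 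Putting these pieces together yields $vJ_S\neq\emptyset$, and by contraposition antisymmetry of $J_S$ implies stable causality, which closes the equivalence.
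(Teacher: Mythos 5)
Your easy direction is correct and coincides with the paper's. The hard direction, however, has a genuine gap at its pivotal step. You argue by contraposition: not stably causal $\Rightarrow$ the set $A$ of points where strong causality fails is non-empty $\Rightarrow$ (correctly, via the argument of Theorem \ref{mwh} and Theorem \ref{dxp}) some $p\in A$ lies in $vJ_S$. The problem is the first implication. You justify it by Corollary \ref{jjb}, but that corollary says stable causality \emph{implies} strong causality; what you need is the converse, namely that strong causality everywhere implies stable causality. This converse is false: strong causality is strictly weaker than stable causality (this is exactly why they occupy distinct rungs of the causal ladder in Theorem \ref{cau}, and classical Lorentzian examples, e.g.\ in Hawking--Ellis, exhibit strongly causal spacetimes that are not stably causal). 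So the set $A$ can perfectly well be empty while $(M,C)$ fails to be stably causal, and your contraposition never gets started. Note also that you correctly identified the real difficulty yourself --- the closed $C_k$-causal curves $\sigma_k$ need not accumulate at a common point --- but the device you invoke to resolve it is precisely the invalid step.

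The paper's proof goes the other way around and is constructive: assuming $vJ_S=\emptyset$ (equivalently, by Theorem \ref{ssp}, emptiness of the stable recurrent set), every point $x$ admits a widening $\check C_x>C$ that is causal \emph{at} $x$; Lemma \ref{paf} and Lemma \ref{spm} then give strong causality of intermediate widenings on open neighborhoods $U_x$. The technical heart, which your proposal lacks, is a globalization: one exhausts $M$ by compact sets $K_n$, extracts finite subcovers, and glues the corresponding widened cone structures by locally Lipschitz convex combinations (relative to a hyperplane field from Proposition \ref{ohg}) over the shells $K_{n+1}\setminus K_n$, arranging that any closed curve for the glued structure $C'$ would be a closed $C_i$-causal curve inside a region where $C_i$ is causal --- a contradiction. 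This patching argument is what bridges the gap between pointwise/local causal widenings and a single globally causal widening, and it cannot be replaced by an appeal to strong causality.
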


\begin{proof}
Suppose that $(M,C)$ is stably causal then there is $C'>C$ such that $(M,C')$ is causal, thus $J_{C'}$ is antisymmetric and since $J_S\subset J_{C'}$, $J_S$ is antisymmetric.

For the converse suppose that $J_S$ is antisymmetric. In the course of the proof we shall have to take convex combinations of cones. By Prop.\ \ref{ohg} there is a Lipschitz 1-form $\omega$ such that the distribution of hyperplanes $P=\omega^{-1}(1)$ cuts $C$ as well as some $C'>C$ in compacts sets. The convex combinations of cones should be understood relative to $P$, as explained in the paragraph before Prop.\ \ref{doo}.

 By Th.\ \ref{ssp} for every $x\in M$ there is a ($x$ dependent) $\check C_x>C$ such that
$(M,\check C_x)$ is causal at $x$. By Lemma \ref{paf},
taking $C_x$ such that $C< C_x<\check{C}_x$, $(M,C_x)$ is strongly
causal at $x$  and hence it is strongly causal in an open
neighborhood $U_x$ of $x$ (Lemma \ref{spm}).

Let $K$ be a compact set. From the open covering $\{U_y, y \in K\}$, a
finite covering can be extracted $\{U_{y_1}, U_{y_2}, \dots,
U_{y_k}\}$. A cone structure $C_K>C$, on $M$ can be found such that for $
i=1,\ldots k$, $C_K<C_{y_i}$ on $M$. Thus $(M,C_K)$ is strongly
causal, and hence causal, on a open set $A=\cup_i U_{y_i}
\supset K$. Let $(C_n,K_n, A_n)$ be a sequence of cone fields $C_n>C$,
$C_{n+1}<C_{n}$, and strictly increasing compact sets and open sets
$K_n\subset A_n \subset K_{n+1}$, such that $(M,C_n)$ is
causal on $A_n$, and $\cup_n K_n=M$ (for instance introduce a
complete Riemannian metric and let $K_n$ contain the balls $\bar B(o,n)$ of radius
$n$ centered at $o \in M$). Let $\chi_n:M \to
[0,1]$ be locally Lipschitz functions such that $\chi_n=1$ on $K_n$, and
$\chi_n=0$ outside an open set $B_n$ such that
\[\cdots \subset K_n\subset B_n \subset \bar{B}_n\subset  A_n \subset
K_{n+1}\subset B_{n+1} \subset \bar{B}_{n+1} \subset  A_{n+1}
\subset \cdots.\] We construct a cone field $C'>C$ on $M$ as follows.
The cone structure $C'$ on $K_{n+1}\backslash B_n$ coincides with $C_{n+1}$, and on $B_{n}\backslash K_{n}$ its intersection with $P$ is given by $\chi_n
\tilde C_n+(1-\chi_n)  \tilde C_{n+1}$.

The spacetime $(M,C')$ is causal otherwise there would be a
closed continuous $C'$-causal curve $\gamma$. Let $i$ be the minimum
integer such that $\bar{B}_i \cap \gamma \ne \emptyset$, and let $p
\in \bar{B}_i \cap \gamma$. Then $\gamma$ is also a closed
continuous $C_i$-causal curve in $(M,C_i)$, thus $C_i$-causality is violated at
$p \in \bar{B}_i \subset A_i$ a contradiction.
Thus $(M,C')$ is causal so $(M,C)$ is stably causal.
\end{proof}

\subsection{Hawking's averaging for closed cone structures} \label{aho}
In 1968 Stephen Hawking showed how to construct a time function by taking a suitable average of volume functions relative to wider (round) cone structures \cite{hawking68,hawking73}. In this section we wish to show that the method still works for general closed cone structures.
This result might not be immediately obvious since the original proofs used the  existence of convex neighborhoods.  Once again we show that the result  depends only on the local non-imprisoning properties of spacetime.

Actually, in the mentioned works  Hawking did not provide the details of the proof of the lower semi-continuity of his function; a proof which is not so trivial after all. So this section is likely to be useful to any researcher looking for an introduction  to this technique.

Unfortunately, whereas in the regular $C^2$ theory we have convex neighborhoods at our disposal and Hawking's time function can be shown to be locally anti-Lipschitz and hence smoothable \cite{chrusciel13}, here we lack convex neighborhoods and so we are unable to prove such a property. Still by using a product trick we shall be able to construct  locally anti-Lipschitz functions by using an averaging argument similar to Hawking's, cf.\ Sec.\ \ref{mis}. Therefore, this section will be useful since the average idea will return in some key arguments of this work.

Let $(M,C)$ be a stably causal closed cone structure. By stable causality and  Prop.\ \ref{sqd} there is    a  locally Lipschitz proper cone structure $C_3>C$ which is itself stably causal. By Prop.\ \ref{sqd} we can find a locally Lipschitz proper cone structure $C<C_0<C_3$.
The convex combination $C_a=(1-\frac{a}{3}) C_0+\frac{a}{3} C_3$ (with respect to some smooth hyperplane section $P$) will also be a locally Lipschitz proper cone structure for $a\in [0,3]$. Let $\mu$ be a unit measure on $M$, on every chart absolutely continuous with respect the Lebesgue measure induced by the chart. The Hawking's time function is defined with the next expression. For $p\in M$ let
\[
 t(p)=\int_1^2\theta(p,a) \dd a, \ \textrm{ where } \ \theta(p,a)=\mu(I^{-}_{C_a}(p)).
\]

\begin{theorem} \label{has}
Let $(M,C)$ be a stably causal closed cone structure, then $t$ is a time  function for $(M,C')$ where $C'$ is a  locally Lipschitz proper cone structure such that $C'>C$ .
\end{theorem}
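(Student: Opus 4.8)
The plan is to show that Hawking's averaged volume function $t(p)=\int_1^2 \mu(I^-_{C_a}(p))\,\dd a$ is continuous and strictly increasing along $C'$-causal curves for a suitable locally Lipschitz proper cone structure $C'$ with $C<C'<C_0$. The argument decomposes into three pieces: lower semi-continuity of $\theta(\cdot,a)$, upper semi-continuity of $\theta(\cdot,a)$, and strict monotonicity of the integrated function. Throughout, the key technical tool replacing convex neighborhoods is the local non-imprisoning property of the neighborhoods constructed in Prop.\ \ref{iiu}, together with the limit curve theorem \ref{main}.

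For \emph{lower semi-continuity} of $p\mapsto\theta(p,a)=\mu(I^-_{C_a}(p))$ I would mimic the argument already carried out for Geroch's function in Th.\ \ref{ger}. Given $\epsilon>0$ and $p$, choose by inner regularity a compact $K\subset I^-_{C_a}(p)$ with $\mu(I^-_{C_a}(p)\setminus K)<\epsilon$. Since $C_a$ is proper, each $q\in K$ lies in $I^-_{C_a}(r)$ for some $r\in I^-_{C_a}(p)$ (openness of the chronological relation and the (*) condition), so $K$ is covered by finitely many sets $I^-_{C_a}(r_i)$; then $O=\cap_i I^+_{C_a}(r_i)$ is a neighborhood of $p$ on which $K\subset I^-_{C_a}(\cdot)$, giving $\theta(\cdot,a)\ge\theta(p,a)-\epsilon$ on $O$. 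For \emph{upper semi-continuity} I would again follow Th.\ \ref{ger}: take a compact $K\subset M\setminus J^-_{C_a}(p)$ with $\mu(M\setminus(K\cup J^-_{C_a}(p)))\le\epsilon$, and use a limit-curve argument to produce a neighborhood $O\ni p$ with $J^-_{C_a}(O)\cap K=\emptyset$; this is exactly where non-imprisonment of the Prop.\ \ref{iiu} neighborhoods does the work that convex neighborhoods used to do. Continuity in $a$ is routine since the cones vary continuously and the measure is absolutely continuous, so $t$ is continuous by dominated convergence.

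The \emph{strict monotonicity} is the step where I must be careful to produce the \emph{wider} cone structure $C'$ announced in the statement. If $(p,q)\in J_{C'}$ with $C'$ chosen so that $C<C'<C_a$ for all $a\in[1,2]$, then $q\in I^+_{C_a}(p)$ for each such $a$ (by Th.\ \ref{dxp}, since a $C'$-causal curve becomes $C_a$-timelike), whence $I^-_{C_a}(p)\subset I^-_{C_a}(q)$ and the inclusion is strict on an open set; this gives $\theta(p,a)<\theta(q,a)$ for every $a\in[1,2]$, so integrating, $t(p)<t(q)$. Concretely I would fix $C'$ to be a locally Lipschitz proper cone structure with $C<C'<C_1\le C_a$ (recall $C_a$ is increasing in $a$), whose existence follows from Prop.\ \ref{sqd} and Th.\ \ref{ddo}; then $t$ increases strictly along every continuous $C'$-causal curve, i.e.\ $t$ is a time function for $(M,C')$.

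The main obstacle I anticipate is the upper semi-continuity argument, since it is the only place requiring a genuine limit-curve/non-imprisonment input rather than formal measure manipulation. Specifically, one must rule out the scenario where points $q_n\to p$ admit $C_a$-causal curves reaching into a far compact set $K$: the limit curve theorem \ref{main} would otherwise yield a continuous $C_a$-causal curve from $p$ into $K$, contradicting $K\cap J^-_{C_a}(p)=\emptyset$. Making this contradiction precise requires that the relevant limit curve, if inextendible, be excluded by the global hyperbolicity (hence non-imprisonment) of the Prop.\ \ref{iiu} neighborhoods; carrying this through uniformly enough to control the whole integral over $a\in[1,2]$ is the delicate point, but it is handled exactly as in the proof of Th.\ \ref{ger} applied fiber-by-fiber in $a$.
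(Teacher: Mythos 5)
Your fiber-by-fiber strategy has a genuine gap at exactly the point you flag as delicate: the upper semi-continuity of $\theta(\cdot,a)=\mu(I^-_{C_a}(\cdot))$ for \emph{fixed} $a$ is false under the stated hypotheses, and it cannot be rescued by the local non-imprisonment of the Prop.\ \ref{iiu} neighborhoods. The Geroch argument in Th.\ \ref{ger} that you invoke rests on global hyperbolicity: one needs $J^-(D)\cap J^+(K)$ compact (with $D$ a compact neighborhood of $p$ and $K$ a distant compact set) in order to exclude case (ii) of the limit curve theorem \ref{main}, i.e.\ the alternative in which the curves from $K$ to $q_n\to p$ degenerate into a pair of inextendible limit curves escaping every compact set. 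Here $(M,C_a)$ is merely stably causal, that compactness is unavailable, and the curves in question join $p$ to a possibly far-away compact set, so no local property of small neighborhoods can control them. This is not a technical nuisance but a real obstruction: failure of upper semi-continuity of volume functions is tied to failure of reflectivity (cf.\ the remark before Prop.\ \ref{jjw}, and the standard example of Minkowski space with a spacelike segment removed, which is stably causal but not reflective), and stable causality does not imply reflectivity. A useful sanity check: if each $\theta(\cdot,a)$ were continuous, the averaging over $a$ would be pointless, since a single volume function would already be a time function; the whole reason Hawking's construction averages is that the individual $\theta(\cdot,a)$ jump.

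The paper's proof never claims fiber-wise continuity. Instead it compares volumes at \emph{shifted} values of the cone parameter: Claims 1 and 2 establish, for $q$ near $p$, inclusions of the form $I^-_{(U,C_a)}(p)\cap\p B\subset I^-_{(U,C_{a+1/N})}(q)\cap\p B$ (and the reversed one), where $B$ is a small ball around $p$ inside a non-imprisoning neighborhood $U$. Two mechanisms make this work. First, all limit curve arguments are confined to $\bar U$, where the limits are controlled by Prop.\ \ref{iiu}, and Th.\ \ref{dxp} ($\bar J_{C_{a_1}}\subset I_{C_{a_2}}\cup\Delta$ for $a_1<a_2$) converts the closure relations produced by limit curves into timelike relations for a strictly wider cone — this is the room that replaces compactness. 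Second, the monotonicity of $\theta(p,\cdot)$ in $a$ and the integration over $a\in[1,2]$ absorb the parameter shift $1/N$ and the ball volume $\mu(B)$ into an $\epsilon$ error, yielding continuity of the average $t$ even though each $\theta(\cdot,a)$ may be discontinuous. Your lower semi-continuity argument (which does work fiber-wise, by the standard covering argument) and your monotonicity step (which matches the paper's final step, taking $C'$ with $C<C'<C_1$ and using Th.\ \ref{dxp} plus causality of $C_a$ to get $\theta(p,a)<\theta(q,a)$) are fine; but without the parameter-shift comparison the proof of continuity cannot be completed.
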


Actually, the proof of continuity does not use the assumption of stable causality.

\begin{proof}
We start with a claim.

{\em Claim 1}. Let $\epsilon>0$ and $N\ge 2/\epsilon$. Let $B$ be a coordinate ball centered at $p$, of volume $\mu(B)\le \epsilon/2$, whose  closure is contained in the non-imprisoning neighborhood $U$ for the cone structure $C_3$ constructed in Prop.\ \ref{iiu}.  There is a neighborhood $G\ni p$, $G\subset U$, such that for $ q\in G$
\[
 I_{(U, C_{a})}^-(p) \cap \p B \subset I^-_{(U, C_{a+{1}/{N}})}(q) \cap \p B , \ \textrm {for every } \ a\in[1,2].
\]

{\em Proof of Claim 1}. Let $a_1,a_2\in [0,3]$, $a_1<a_2$, clearly by Th.\ \ref{dxp}
\[ J_{(U, C_{a_1})}^-(p) \cap \p B \subset I_{(U, C_{a_2})}^-(p) \cap \p B .
\]
The set on the left-hand side is compact by Th.\ \ref{xxl} and due to $J_S$ being closed. Let us denote it $K$. It is covered by sets of the form $I_{(U, C_{a_2})}^-(r)$ with $r\in I_{(U, C_{a_2})}^-(p)$, thus there is a finite covering $\{I_{(U, C_{a_2})}^-(r_i)\}$. The set  $\mathcal{G}(a_1,a_2):= \cap_i I_{(U, C_{a_2})}^+(r_i)$ is such that for every $q\in \mathcal{G}(a_1,a_2)$
\[
I_{(U, C_{a_1})}^-(p) \cap \p B \subset K\subset I_{(U, C_{a_2})}^-(q) \cap \p B .
\]
Let us regard $[1,2+\frac{1}{N}]$ as the union of intervals of length $1/(2N)$, $\mathcal{I}_k=[1+\frac{k}{2N}, 1+\frac{k+1}{2N}]$, $k=0,\cdots, 2N+1$, in such a way that inside every interval $[a,a+\frac{1}{N}]$ for $a\in [1,2]$ we can find an $\mathcal{I}_{\bar k}$ interval for some $\bar k$. Let
\[G=\cap_{k} \mathcal{G}(1+\tfrac{k}{2N}, 1+\tfrac{k+1}{2N}), \]
For $a\in [1,2]$ and $q\in G$
\begin{align*}
I_{(U, C_{a})}^-(p) \cap \p B &\subset I_{(U, C_{1+{\bar k}/(2N)})}^-(p) \cap \p B \\ &\subset I_{(U, C_{1+({\bar k}+1)/(2N)})}^-(q) \cap \p B \subset I_{(U, C_{a+1/N})}^-(q) \cap \p B .
\end{align*}

{\em Lower semi-continuity}. Let $\epsilon >0$, $N>2/\epsilon$, $B$ and $G$ as in Claim 1. For $a\in [1,2]$ and $q\in G$,
\[
I_{(U, C_{a})}^-(p) \cap \p B \subset I^-_{(U, C_{a+{1}/{N}})}(q) \cap \p B ,
\]
thus
$I^{-}_{C_a}(p) \backslash \bar B\subset I^{-}_{C_{a+1/N}}(q) \backslash \bar B$ which implies $I^{-}_{C_a}(p)\subset I^{-}_{C_{a+1/N}}(q) \cup \bar B$, which taking the volume gives
\[
\theta(p,a)\le \theta(q,a+1/N)+ \mu(B)\le  \theta(q,a+\tfrac{\epsilon}{2})+ \tfrac{\epsilon}{2}
\]
Integrating
\[
t(p)=\int_1^2 \theta(p,a) \dd a\le \int_1^2\theta(q,a) \dd a+\int_2^{2+\tfrac{\epsilon}{2}}\theta(q,a) \dd a+\tfrac{\epsilon}{2}\le t(q)+\epsilon.
\]

{\em Claim 2}. Let $\epsilon>0$ and $N\ge 2/\epsilon$. Let $B$ be a coordinate ball centered at $p$, of volume $\mu(B)\le \epsilon/2$, whose  closure is contained in the non-imprisoning neighborhood $U$ for the cone structure $C_3$ constructed in Prop.\ \ref{iiu}.  There is a neighborhood $G\ni p$, $G\subset U$, such that
\[
I^-_{(U, C_a)}(G) \cap \p B \subset  I_{(U, C_{a+1/N})}^-(p) \cap \p B , \ \textrm {for every } \ a\in[1,2].
\]

{\em Proof of Claim 2}. Let $a_1,a_2\in [1,3]$, $a_1<a_2$, clearly by Th.\ \ref{dxp}
\[
J_{(U, C_{a_1})}^-(p) \cap \p B\subset I_{(U, C_{a_2})}^-(p) \cap \p B.
\]
By the limit curve theorem and the non-imprisoning property of $U$ there must be a neighborhood  $\mathcal{G}(a_1,a_2)$ such that
\[
J_{(U, C_{a_1})}^-(\mathcal{G}(a_1,a_2)) \cap \p B\subset I_{(U, C_{a_2})}^-(p) \cap \p B.
\]
Let us regard $[1,2+\frac{1}{N}]$ as the union of intervals of length $1/(2N)$, $\mathcal{I}_k=[1+\frac{k}{2N}, 1+\frac{k+1}{2N}]$, $k=0,\cdots, 2N+1$, in such a way that inside every interval $[a,a+\frac{1}{N}]$ for $a\in [1,2]$ there is an interval $\mathcal{I}_{\bar k}$ for some $\bar k$. Let
\[G=\cap_{k} \mathcal{G}(1+\tfrac{k}{2N}, 1+\tfrac{k+1}{2N}), \]
then
\begin{align*}
I^-_{(U, C_a)}(G) \cap \p B &  \subset  I^-_{(U, C_a)}(\mathcal{G}(1+\tfrac{\bar k}{2N}, 1+\tfrac{\bar k+1}{2N})) \cap \p B  \\
& \subset I^-_{(U, C_{1+{\bar k}/(2N)})}(\mathcal{G}(1+\tfrac{\bar k}{2N}, 1+\tfrac{\bar k+1}{2N})) \cap \p B  \subset I^-_{(U, C_{1+(\bar k+1)/(2N)})}(p)\cap \p B  \\
& \subset I^-_{(U, C_{a+{1}/{N}})}(p)\cap \p B
\end{align*}

{\em Upper semi-continuity}. Let $\epsilon >0$, $N>2/\epsilon$, $B$ and $G$ as in Claim 2. For $a\in [1,2]$ and $q\in G$,
\[
I^-_{(U, C_a)}(q) \cap \p B \subset I^-_{(U, C_{a+{1}/{N}})}(p)\cap \p B,
\]
thus $I^-_{ C_a}(q) \backslash \bar B \subset I^-_{C_{a+1/N}}(p)\backslash \bar B $ which implies $I^-_{ C_a}(q) \subset I^-_{C_{a+1/N}}(p)\cup \bar B$, which taking the volume gives
\[
\theta(q,a)\le \theta(p,a+1/N)+ \mu(B)\le  \theta(p,a+\tfrac{\epsilon}{2})+ \tfrac{\epsilon}{2}
\]
Integrating
\[
t(q)=\int_1^2 \theta(q,a) \dd a\le \int_1^2\theta(p,a) \dd a+\int_2^{2+\tfrac{\epsilon}{2}}\theta(p,a) \dd a+\tfrac{\epsilon}{2}\le t(p)+\epsilon.
\]

{\em Time function}. Since $C_3$ is causal, $C_2$ is strongly causal (Lemma \ref{paf}), so if $p\in M$ and $U$ is a $C_2$-causally convex neighborhood of $p$, given any continuous $C_{1/2}$-causal curve contained in $U$ and starting from $p$, the endpoint $q$ must have a larger value of $t$, $t(q)>t(p)$, as for every $a\in [1,2]$, $\mu(I^{-}_{C_a}(q))>\mu( I^{-}_{C_a}(p))$.
%
%
%
%
%
\end{proof}

\subsection{Anti-Lipschitzness and the product trick} \label{mis}

In Sec.\ \ref{ngd} we have defined the notion of Lorentz-Finsler space $(M,\mathscr{F})$, and in Sec.\ \ref{mvb} the notions of stable distance and stable Lorentz-Finsler space. In this section
we write $\mathscr{F}'>\mathscr{F}$, with no mention to the cone domains, if
$(M,\mathscr{F})$ is a closed Lorentz-Finsler space, $(M,\mathscr{F}')$ is a locally Lipschitz proper Lorentz-Finsler space, and $C'{}^\times > C^\times$, which is equivalent to
$C'>C$ and $\mathscr{F}'>\mathscr{F}$ on $C$.

Let us introduce a different cone structure in $M^\times=M\times\mathbb{R}$ defined at $P=(p,r)$ by
\begin{equation} \label{xyb}
C{}^\downarrow_{P}=\{(y,z) \colon y\in C_p\cup \{0\}, \ z \le \mathscr{F}(y) \}\backslash \{(0,0)\}.
\end{equation}

\begin{theorem} \label{soq}
Let  $(M,\mathscr{F})$ be a stably causal closed Lorentz-Finsler space, then $(M^\times, C{}^\downarrow)$ is strongly causal.
\end{theorem}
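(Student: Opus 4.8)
```latex
The plan is to relate the causality of $(M^\times, C^\downarrow)$ to that of the previously studied cone structure $(M^\times, C^\times)$ defined by Eq.~(\ref{nhz}), together with the translational structure in the fiber direction. Observe that $C^\downarrow$ differs from $C^\times$ only by the removal of the lower half $z < -\mathscr{F}(y)$; more precisely, a continuous causal curve $t\mapsto X(t)=(x(t),r(t))$ for $C^\downarrow$ satisfies $\dot x \in C_{x}$ and $\dot r \le \mathscr{F}(\dot x)$ almost everywhere, so its projection $x$ to $M$ is a continuous causal curve for $(M,C)$, exactly as in Prop.~\ref{maq}. Thus the projection $\pi_1\colon M^\times \to M$ sends $C^\downarrow$-causal curves to $C$-causal curves, and the fiber coordinate $r$ need not be monotone along such a curve (unlike the $C^\times$ case, where $|\dot r|\le \mathscr{F}(\dot x)$).

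First I would reduce strong causality of $(M^\times,C^\downarrow)$ to an analysis in a product neighborhood. Since $(M,\mathscr{F})$ is a stably causal closed cone structure, by Th.~\ref{mwh} (or Cor.~\ref{jjb}) the base $(M,C)$ is strongly causal, hence every $p\in M$ admits an arbitrarily small $C$-causally convex non-imprisoning neighborhood $U$ of the type constructed in Prop.~\ref{iiu}, carrying a flat Minkowski metric $g$ with $C\vert_U < C^g\vert_U$. I would then take a product neighborhood $W = U\times (r_0-\delta, r_0+\delta) \ni P=(p,r_0)$ in $M^\times$ and show it is $C^\downarrow$-causally convex for $\delta$ small. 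Given a $C^\downarrow$-causal curve $X=(x,r)$ with endpoints in $W$, its projection $x$ stays in $U$ by the $C$-causal convexity of $U$. The key quantitative control is the fiber excursion: since $\dot r \le \mathscr{F}(\dot x)$ and $\dot x \in C\vert_U$, and $\mathscr{F}$ is bounded by $a\sum_\mu|\dot x^\mu|$ on the relatively compact $U$ (upper semi-continuity and positive homogeneity), the $h$-arc length of $x$ is bounded by the universal constant $\delta_h(U)$ of Prop.~\ref{iiu}, so $\int \mathscr{F}(\dot x)\,\dd t \le a'\,\delta_h(U) =: L(U)$. This bounds how far $r$ can increase. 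The subtlety is that $r$ can also \emph{decrease} arbitrarily (the cone $C^\downarrow$ allows $\dot r$ very negative), so I cannot directly bound $|r(t)-r_0|$ from the length alone.

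To handle the downward direction I would exploit the temporal function coming from stable causality. By stable causality and Th.~\ref{nio} (equivalence of stable causality with existence of a temporal function), there is a smooth temporal function $\tau$ on $(M,C)$, so $\dd\tau(y)>0$ for every $y\in C$. The function $T(x,r) := \tau(x) + \lambda r$ on $M^\times$ then satisfies, along a $C^\downarrow$-causal curve, $\dot T = \dd\tau(\dot x) + \lambda\dot r$; choosing $\lambda>0$ small enough that $\lambda\,\mathscr{F}(y) < \dd\tau(y)$ on the $h$-unit causal vectors over the relatively compact $U$ (possible since both sides are positive homogeneous and $\dd\tau$ is bounded below, $\mathscr{F}$ above, on that compact set), I get $\dot T = \dd\tau(\dot x)+\lambda\dot r \ge \dd\tau(\dot x)-\lambda\mathscr{F}(\dot x) > 0$. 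Hence $T$ is strictly increasing along every $C^\downarrow$-causal curve in $U\times\mathbb{R}$, which together with the upper length bound $L(U)$ on the $r$-increase pins $r$ into a controlled interval: $r$ cannot rise more than a fixed amount, and it cannot fall because that would force $T$ to decrease. This yields the $C^\downarrow$-causal convexity of $W$ once $\delta$ is chosen commensurate with $L(U)$ and the oscillation of $\tau$ on $U$.

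The main obstacle I anticipate is precisely the asymmetry of $C^\downarrow$: one must control the fiber variable in \emph{both} directions, and the length functional alone only controls the upward excursion. The clean resolution is the local temporal function $T=\tau+\lambda r$, which converts the one-sided length bound into two-sided confinement by forcing monotonicity of $T$; verifying the inequality $\lambda\,\mathscr{F}(y) < \dd\tau(y)$ uniformly over the relevant compact set of causal directions is the one genuine estimate, and it rests on the strong causality of the base (for the non-imprisoning neighborhoods) and on stable causality (for the existence of $\tau$). Once $W$ is shown causally convex and $p,r_0$ were arbitrary with $U,\delta$ arbitrarily small, strong causality of $(M^\times,C^\downarrow)$ follows.
```
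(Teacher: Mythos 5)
There is a genuine gap, and it sits exactly at what you call ``the one genuine estimate.'' Your local function $T(x,r)=\tau(x)+\lambda r$ is \emph{not} monotone along $C^\downarrow$-causal curves, because the inequality $\dot T=\dd\tau(\dot x)+\lambda\dot r\ge \dd\tau(\dot x)-\lambda\mathscr{F}(\dot x)$ implicitly uses $\dot r\ge-\mathscr{F}(\dot x)$, which is the defining constraint of $C^\times$ (Eq.~(\ref{nhz})), not of $C^\downarrow$ (Eq.~(\ref{xyb})). The cone $C^\downarrow$ only imposes $\dot r\le\mathscr{F}(\dot x)$ and contains the whole downward vertical half-line $(0,z)$, $z<0$; along such a fiber segment $\dot T=\lambda\dot r<0$, so $T$ strictly decreases no matter how small $\lambda$ is. You have effectively confused $C^\downarrow$ with $C^\times$ at the decisive moment (your own opening paragraph correctly notes that $r$ need not be monotone, but the estimate then contradicts it). The fix is to reverse the sign: for $C^\downarrow$ the future fiber direction is \emph{decreasing} $r$, so the candidate time function is $T=\tau-\lambda r$, for which $\dot T=\dd\tau(\dot x)-\lambda\dot r\ge\dd\tau(\dot x)-\lambda\mathscr{F}(\dot x)>0$ is a correct bound (this matches the pattern $F((p,r))=f(p)-r$ with $f$ strictly steep, cf.\ Th.~\ref{mfy} and Lemma~\ref{idf}). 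A secondary but real defect: even with the corrected $T$, the product $W=U\times(r_0-\delta,r_0+\delta)$ is \emph{never} causally convex (unless $\mathscr{F}$ vanishes on $C\vert_U$): a curve starting near the top can raise $r$ by a positive amount $\le L(U)$, exit through $r=r_0+\delta$, and then drop back vertically into $W$. Strong causality instead requires the ``slab'' neighborhoods $T^{-1}(-\epsilon,\epsilon)\cap(U\times\mathbb{R})$, whose causal convexity follows from the causal convexity of $U$ (projection argument) together with the monotonicity of $T$ between its endpoint values; these slabs are relatively compact and shrink to $P$, which is what the definition needs.

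The second gap is logical: you obtain $\tau$ from Th.~\ref{nio} (stable causality $\Rightarrow$ smooth temporal function), but inside this paper that implication is proved via Th.~\ref{vkg}~$\leftarrow$~Th.~\ref{vkf}~$\leftarrow$~Th.~\ref{mih}~$\leftarrow$~Th.~\ref{ssh}, and the proofs of Th.~\ref{ssh} and Th.~\ref{mih} invoke precisely Theorem~\ref{soq} (they use that $C^\downarrow_a$-causal curves escaping $V\times\mathbb{R}$ cannot reenter, and the strong causality of $(M^\times,C_a^\downarrow)$). So as a proof of Theorem~\ref{soq} your argument is circular. The paper avoids this by never using a global temporal function: it takes a $C$-causally convex globally hyperbolic $V$ (only Th.~\ref{mwh}, which is upstream, is needed from stable causality) and builds the local time function on $V\times\mathbb{R}$ directly, with level sets the vertical translates of a local $C^\downarrow$-spacelike hypersurface through $P$ — existence of such a hypersurface uses only the sharpness of $C^\downarrow$. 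Your argument is repaired the same way: replace the global $\tau$ by the coordinate function $x^0$ of Prop.~\ref{iiu} on $U$, which satisfies $\dd x^0>\lambda\mathscr{F}$ on $C\vert_{\bar U}$ for small $\lambda>0$ (compactness of the causal unit-sphere bundle over $\bar U$ and upper semi-continuity of $\mathscr{F}$), and use $T=x^0-\lambda r$ with the slab neighborhoods above. With those three corrections (sign, neighborhood shape, local instead of global steep function) your outline coincides with the paper's proof.
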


\begin{proof}
It is sufficient to prove strong causality  at $P=(p,0)\in M^\times$, $p\in M$. We already know that stable causality implies strong causality, cf.\ Th.\ \ref{mwh}. Let $V$ be a  $C$-causally convex open set which is also globally hyperbolic. One can easily construct a time function $\tau$ on $V\times \mathbb{R}$, for sufficiently small $V$, e.g.\ one whose level sets are obtained by vertically translating a local $C{}^\downarrow$-spacelike $C^1$ hypersurface passing through $P$ which intersects $V\times \mathbb{R}$ on a relatively compact set (notice that $C{}^\downarrow$ is sharp). Let $\tau$ be such that $\tau(P)=0$.

Let $U=\tau^{-1}(-\delta,\delta)\subset V\times\mathbb{R}$ be an open neighborhood of $P$, and let us consider a parametrized $C{}^\downarrow$-causal curve $\Gamma$ which starts from some point of $U$. We have to show that it cannot reenter $U$ once it escapes $U$. First we show that it cannot escape $V\times\mathbb{R}$.
The curve $\Gamma$ is a absolutely continuous  with derivative in $C{}^\downarrow$ a.e. 
so its projection $\gamma$ (the projection to $M$ is Lipschitz, and the composition $g\circ f$, with $f$ AC and $g$ Lipschitz, is AC) is absolutely continuous with derivative in $C'\cup\{0\}$ a.e., and so $\gamma$ is a continuous causal curve.
As a consequence, if $\Gamma$ escapes  $V\times\mathbb{R}$, then $\gamma$ escapes $V$ and so it cannot reenter it.

But if $\Gamma$ remains in $V\times\mathbb{R}$, $\tau$ is increasing over it, so once it escapes $U$ it cannot reenter it.
\end{proof}%

\begin{theorem} \label{ssh}
Let  $(M,\mathscr{F})$ be a closed Lorentz-Finsler space, then $(M,C)$ is stably causal if and only if $(M^\times, C{}^\downarrow)$ is stably causal.
\end{theorem}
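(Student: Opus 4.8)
**The plan is to prove the equivalence via a chain of implications, exploiting the product structure and the already-established results on $C^\times$ and Seifert relations.**

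The statement to prove is: for a closed Lorentz-Finsler space $(M,\mathscr{F})$, stable causality of $(M,C)$ is equivalent to stable causality of $(M^\times, C^\downarrow)$. Let me sketch the two directions.

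\textbf{Forward direction $(M,C)$ stably causal $\Rightarrow (M^\times, C^\downarrow)$ stably causal.} The plan is to use Theorem \ref{nio}: stable causality is equivalent to the existence of a smooth temporal function. So I would start from a smooth temporal function $\tau\colon M\to\mathbb{R}$ for $(M,C)$, meaning $\dd\tau$ is positive on $C_p$ at every $p$. The key idea is to lift $\tau$ to $M^\times$ by setting $T(p,r)=\tau(p)+\lambda r$ for a suitably chosen (possibly position-dependent, but let me try constant first) $\lambda>0$. I need $\dd T$ to be positive on $C^\downarrow$. For $(y,z)\in C^\downarrow_{(p,r)}$ we have $y\in C_p\cup\{0\}$ and $z\le\mathscr{F}(y)$, and $(y,z)\neq(0,0)$. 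Computing, $\dd T(y,z)=\dd\tau(y)+\lambda z$. The subtlety is that $z$ can be very negative (it is only bounded \emph{above} by $\mathscr{F}(y)$), so I cannot simply use a positive $\lambda$ and hope $\dd T>0$; a large negative $z$ would ruin positivity. This suggests that the naive lift fails and $T$ should instead penalize the $z$-direction differently. Looking at the cone $C^\downarrow$, positivity of a 1-form $\alpha=\dd\tau+\mu\,\dd r$ on $C^\downarrow$ requires, when $y=0$ (forcing $z<0$ since $(0,z)\in C^\downarrow$ needs $z\le\mathscr{F}(0)=0$ and $(0,0)$ excluded, so $z<0$), that $\mu z>0$, i.e.\ $\mu<0$. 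Then for $y\in C_p$, I need $\dd\tau(y)+\mu z>0$ for all $z\le\mathscr{F}(y)$, which (since $\mu<0$) is worst at $z\to-\infty$ — impossible. Hence \emph{no} smooth temporal function exists on $C^\downarrow$ of this simple form, which forces me to reconsider: I expect that stable causality of $(M^\times,C^\downarrow)$ must be established through its characterization via antisymmetry of the Seifert relation (Theorem \ref{nio}(ii)) or emptiness of the stable recurrent set (Theorem \ref{nio}(iv)), rather than by exhibiting a temporal function directly.

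\textbf{The main obstacle} is precisely the asymmetry of $C^\downarrow$: it is a ``downward'' cone allowing arbitrarily negative $z$, so temporal functions behave differently than for $C^\times$. The cleanest route, I believe, is to relate closed causal curves. I would argue that a closed $C^\downarrow$-causal curve projects to a closed $C$-causal curve, and conversely use the projection property from Theorem \ref{maq}/Theorem \ref{soq}: any $C^\downarrow$-causal curve $\Gamma(t)=(\gamma(t),r(t))$ has $\gamma$ a continuous $C$-causal curve with $\dot r\le\mathscr{F}(\dot\gamma)$ a.e. For stability, I would pass to wider cone structures: show that $(C^\downarrow)'>C^\downarrow$ can be arranged so that its projection induces $C'>C$, and vice versa, using Proposition \ref{cso} (which lets me widen $\mathscr{F}$ to $\mathscr{F}'$ and correspondingly widen the associated cone structures in $M^\times$). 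Concretely, I would show: $(M^\times,C^\downarrow)$ stably causal $\iff$ there is $C^\downarrow{}'>C^\downarrow$ with no closed causal curves $\iff$ there is $\mathscr{F}'>\mathscr{F}$ and $C'>C$ such that the corresponding $C^\downarrow{}'$ is causal. Then project: a closed $C'$-causal curve in $M$ lifts to a closed $C^\downarrow{}'$-causal curve (take $r\equiv$ const, since $(y,0)\in C^\downarrow{}'$ whenever $y\in C'$ as $\mathscr{F}'\ge0$), so causality of $C^\downarrow{}'$ forces causality of $C'$. Conversely, if $C'$ is causal then $C^\downarrow{}'$ is causal: a closed $C^\downarrow{}'$-loop projects to a closed $C'$-loop unless it projects to a point $p$, but a curve over a single fiber $\{p\}\times\mathbb{R}$ with tangent in $C^\downarrow{}'$ requires $\dot r\le\mathscr{F}'(0)=0$ with $(0,\dot r)\neq 0$, so $\dot r<0$ strictly, hence $r$ is strictly decreasing and the curve cannot close.

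\textbf{Assembling the argument.} I would therefore organize the proof as two short lemmas. First: for any $\mathscr{F}'>\mathscr{F}$ inducing $C'>C$, the cone structure $C^\downarrow{}'$ is causal if and only if $C'$ is causal — proved by the fiber/projection dichotomy just described, where the crucial observation is that over a single fiber the $z$-coordinate is strictly monotone decreasing, precluding closed curves, while genuinely moving loops project to genuine $M$-loops. Second: stable causality of $(M,C)$ is equivalent to the existence of \emph{some} $C'>C$ causal (by definition together with Theorem \ref{ddo} allowing locally Lipschitz representatives), and using Proposition \ref{cso} this is matched with the existence of $C^\downarrow{}'>C^\downarrow$ causal, which by definition is stable causality of $(M^\times,C^\downarrow)$. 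The only technical care needed is to ensure the widening operations on $\mathscr{F}$ and on the cone structures are compatible — that enlarging $C$ to $C'$ and choosing $\mathscr{F}'>\mathscr{F}$ yields $C^\downarrow{}'>C^\downarrow$ as a genuine widening in the slit tangent bundle of $M^\times$ — but this is exactly guaranteed by Proposition \ref{cso} and the definition $\mathscr{F}'>\mathscr{F}$ meaning $C'^\times>C^\times$, which I would transcribe for the $C^\downarrow$ construction. I expect the fiber-monotonicity observation to be the conceptual heart, and the cone-widening bookkeeping to be the only place where routine verification is required.
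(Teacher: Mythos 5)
Your first lemma (causality of $C^{\downarrow}{}'$ is equivalent to causality of $C'$, via the fiber-monotonicity dichotomy) is fine, and your opening observation about why a naive lift $\tau(p)+\lambda r$ fails correctly identifies the source of difficulty. But the proposal breaks down at exactly the step you dismiss as ``routine verification'': the claim that choosing $\mathscr{F}'>\mathscr{F}$ and $C'>C$ yields $C^{\downarrow}{}'>C^{\downarrow}$ as a genuine widening in the slit tangent bundle of $M^\times$. This is false, and Proposition \ref{cso} does not give it: that proposition concerns $C^\times$, whose cones are pinched between $|z|\le\mathscr{F}(y)$ and therefore do widen strictly, whereas \emph{every} cone of the form $C^{\downarrow}$ contains the downward vertical ray $(0,z)$, $z<0$, on its \emph{boundary}. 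Hence $C^{\downarrow}\not\subset\mathrm{Int}\,C^{\downarrow}{}'$ no matter how much you enlarge $C$ and $\mathscr{F}$; the paper states this explicitly just after introducing the cones $C_a^{\downarrow}$ (``it is not true that $C^\downarrow_a<C^\downarrow_{a'}$\dots since both share the downward vertical vectors''). Consequently, causality of $C^{\downarrow}{}'$ does not, by the definition of stable causality, give you stable causality of $(M^\times,C^{\downarrow})$: for that you need a causal proper cone structure $D$ with $C^{\downarrow}<D$, i.e.\ $C^{\downarrow}\subset\mathrm{Int}\,D$.

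The gap is not cosmetic, because any admissible $D$ must contain the vertical ray in its interior, hence contains vectors $(\epsilon v,-1)$ whose $M$-projection points in \emph{arbitrary} directions. Your projection dichotomy (a closed loop either projects to a closed $C'$-causal loop or lives in a single fiber where $r$ strictly decreases) then collapses: a closed $D$-causal curve can trade descent in $z$ against motion in $M$ that is causal for no cone structure at all, so ruling out such loops cannot be done by projecting to a causality statement on $M$. This is precisely why the paper abandons the ``widen-and-project'' route and instead follows your own initial instinct: it constructs a Hawking-type averaged volume time function $t^{\downarrow}(P)=-\int_1^2\mu\bigl(I^+_{C_a^{\downarrow}}(P)\bigr)\,\dd a$ on $(M^\times,C_1^{\downarrow})$ for a family $C_a>C$ obtained from stably causal locally Lipschitz widenings of $\mathscr{F}$, proves its continuity by limit-curve arguments, and then invokes the deep equivalence of Theorem \ref{nio} (existence of a time function $\Rightarrow$ stable causality) applied to $(M^\times,C_1^{\downarrow})$; since $C^{\downarrow}\le C_1^{\downarrow}$, the causal structure strictly wider than $C_1^{\downarrow}$ furnished by that theorem is also strictly wider than $C^{\downarrow}$. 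To repair your proof you would have to supply an argument of comparable strength for cones $D$ that open around the vertical ray, which is the actual content of the theorem.
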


The proof is really the first step in the proof of the next theorem so it introduces a few more structures than strictly required.
\begin{proof}
It is clear that the stable causality of $(M^\times, C{}^\downarrow)$ implies the stable causality of $(M,C)$ so we shall be concerned with the other direction.

Let $\mu$ be a strictly positive unit measure on $M^\times$, absolutely continuous with respect to the Lebesgue measure of any chart. Let  $(M,\mathscr{F})$ be a stable closed Lorentz-Finsler space and let $\mathscr{F}'>\mathscr{F}$ be such that $(M,\mathscr{F}')$ is a stably causal   locally Lipschitz proper Lorentz-Finsler space. Let
$\mathscr{F}_3>\mathscr{F}_0>\mathscr{F}'$  be other stably causal   locally Lipschitz proper Lorentz-Finsler spaces.


 By Prop.\ \ref{ohg} we can find a Lipschitz 1-form $\omega$ such that $P=\omega^{-1}(1)$ is a distribution of planes cutting  $C_3$ over compact subsets. In particular $P^\times=P\times \mathbb{R}$ cuts $C_3^\times$ over compact subsets. The next convex combinations of cones are defined with respect to $P^\times$.
 Let  $C_a$, and $\mathscr{F}_a\colon C_a \to [0,+\infty)$ be defined through the  convex combination $C^\times_a=(1-\frac{a}{3})C^\times_0+\frac{a}{3} C_3^\times$, $a\in[0,3]$. We have ${\mathscr{F}}_a<{\mathscr{F}}_{a'}$ for $a<a'$.

Let $C_a{}^\downarrow_{P}=\{(y,z) \colon y\in (C_a)_p\cup\{0\}, \ z \le \mathscr{F}_a(y) \}\backslash \{(0,0)\}$ where $P=(p,r)$.
Observe that it is not true that $C^\downarrow_a<C^\downarrow_{a'}$, for $a<a'$, since both share the downward vertical vectors $(0,z)$, $z< 0$. Still, we are going to construct a time function on $(M^\times, C_a^\downarrow)$ by using an averaging procedure analogous to that employed by Hawking \cite{hawking68,hawking73} in which, however, cones do not open in the fiber direction.

By strong causality (distinction suffices) of $(M^\times, C^\downarrow_a)$ the function  $t^\downarrow_a(P)=-\mu(I^+_{C^\downarrow_a }(P))$, is increasing over every $C^\downarrow_a$-causal curve and $t^\downarrow_a<t^\downarrow_{a'}$ for $a<a'$. However, it is not necessarily continuous, so the idea is to take the average
\[
t^\downarrow(P)=\int_1^2 t^\downarrow_a(P) \dd a = -\int_1^2 \mu(I^+_{C^\downarrow_a }(P)) \dd a .
\]
It suffices to prove continuity at $P=(p,0)$, $p\in M$. Let $\epsilon >0$,
 let $V$ be a  $C_3$-causally convex (hence $C_a$-causally convex)  open neighborhood of $p$, constructed as in Prop.\ \ref{iiu} to get a bounded  $h$-arc length of $C_3$-causal curves contained in $V$ where $h$ is a  Riemannian metric such that  $\mathscr{F}_3(\cdot)\le \Vert \cdot \Vert_h$ on $TM$.
From the proof of Theorem \ref{soq} we know that if $V$ is sufficiently small, every continuous $C_3^\downarrow$-causal curve escaping $W:=V\times \mathbb{R}$ cannot reenter it so it  intersects $\p W$ only once. We take $V$ so small that $\mu (W) <\epsilon/2$. Now, observe that $Q\in (J^\downarrow_a)^+(P)$, if $Q=(q,r)$, where there is a  continuous $C_a$-causal curve $\gamma$ connecting $p$ to $q$ and $r \le \ell_a(\gamma)$. So for $P_r=(p,r)$ we also have $Q\in (J^\times_a)^+(P_r)$, in other words $(J^\downarrow_a)^+(P)=\cup_{r\le 0} (J^\times_a)^+(P_r)$.

For $a,a'\in [0,3]$, $a<a'$, we have by Th.\ \ref{dxp}
\[
(J^\times_a)^+(P)\cap \p W\subset (I^\times_{a'})^+(P)\cap \p W.
\]
For sufficiently large $\delta$ both sides of this inclusion are contained in the compact boundary of the $C^\downarrow_3$-causally convex set $D= \tau^{-1}(-\delta,\delta)\subset V\times\mathbb{R}$  constructed in the proof of the previous theorem (because there is a Riemannian metric $h$ such that $\mathscr{F}_3(y)\le \Vert y\Vert_h$ on $TV$, and the $h$-arc length of $C_3$-causal curves is bounded on $V$). (Notice that both sides in the previous inclusion could be written with respect to the relations $J^\times_a(O)$ or $I^\times_{a'}(O)$, where $O$ is a relatively compact neighborhood of $D$ since $D$ is $C^\times_3$-causally convex convex.)

By the same limit curve argument presented in the proofs of the claims in Th.\ \ref{has},  there is an open neighborhood $\mathscr{A}(a,a')\ni P$ such that
\[
(J^\times_a)^+(Q)\cap \p W\subset (I^\times_{a'})^+(P)\cap \p W, \quad \forall Q \in \mathcal{A}(a,a').
\]
and an open neighborhood $\mathscr{B}(a,a')\ni P$ such that
\[
(J^\times_a)^+(P)\cap \p W\subset (I^\times_{a'})^+(Q)\cap \p W, \quad \forall Q \in \mathcal{B}(a,a').
\]
By translational invariance similar inclusions hold for $P_r$ in place of $P$, where the novel sets $\mathcal{A}_r$, $\mathcal{B}_r$, are the translates of $\mathcal{A}$ and $\mathcal{B}$. Thus
\begin{align*}
(J^\downarrow_a)^+(Q)\cap \p W &\subset (I^\downarrow_{a'})^+(P)\cap \p W, \quad \forall Q \in \mathcal{A}(a,a'), \\
(J^\downarrow_a)^+(P)\cap \p W&\subset (I^\downarrow_{a'})^+(Q)\cap \p W \quad \forall Q \in \mathcal{B}(a,a').
\end{align*}
Let $N$ be an integer such that $N>2/\epsilon$, and let us regard $[1,2+\frac{1}{N}]$ as the union of intervals $\mathcal{I}_k=[1+\frac{k}{2N}, 1+\frac{k+1}{2N}]$, $k=0,\cdots, 2N+1$, in such a way that inside every interval $[a,a+\frac{1}{N}]$ for $a\in [1,2]$ there is an interval $\mathcal{I}_{\bar k}$  for some $\bar k$. Let \[A=\cap_{k} \mathcal{A}(1+\tfrac{k}{2N}, 1+\tfrac{k+1}{2N}), \qquad B=\cap_{k} \mathcal{B}(1+\tfrac{k}{2N}, 1+\tfrac{k+1}{2N}).\]

{\em Lower semi-continuity}. Let $Q\in A$ and $a\in [1,2]$; choosing $\mathcal{I}_k\subset [a,a+\frac{1}{N}]$, we have $Q\in \mathcal{A}(1+\tfrac{k}{2N},1+ \tfrac{k+1}{2N})$ and
\begin{align*}
(J^\downarrow_a)^+(Q)\cap \p W &\subset (J^\downarrow_{1+k/(2N)})^+(Q)\cap \p W  \subset (I^\downarrow_{1+(k+1)/(2N)})^+(P)\cap \p W \\& \subset(I^\downarrow_{a+{1}/{N}})^+(P)\cap \p W.
\end{align*}
Thus $(I^\downarrow_a)^+(Q) \backslash W\subset (I^\downarrow_{a+{1}/{N}})^+(P)\backslash W$ hence
\[
\mu((I^\downarrow_a)^+(Q)) \le \mu((I^\downarrow_{a+{1}/{N}})^+(P))+\mu(W)\le \mu((I^\downarrow_{a+\frac{\epsilon}{2}})^+(P))+\tfrac{\epsilon}{2}.
\]
That is, for every $Q\in A$ and $a \in [1,2]$
\[
-t^\downarrow_a(Q) \le - t^\downarrow_{a+\frac{\epsilon}{2}}(P)+\tfrac{\epsilon}{2} ,
\]
 and averaging (notice that $-1\le t^\downarrow_s\le 0$)
\begin{align*}
-t^\downarrow(Q)=- \int_1^2 t^\downarrow_a(Q) \dd a\le -t^\downarrow(P)-\int_2^{2+\frac{\epsilon}{2}}  t^\downarrow_{s} (P)\dd s +\frac{\epsilon}{2}\le  -t^\downarrow(P)+\epsilon,
\end{align*}
which proves the lower semi-continuity.

{\em Upper semi-continuity}. Let $Q\in B$ and let $a\in [1,2]$; choosing $\mathcal{I}_k\subset [a,a+\frac{1}{N}]$, we have $Q\in \mathcal{B}(1+\tfrac{k}{2N}, 1+\tfrac{k+1}{2N})$ and
\begin{align*}
(I^\downarrow_a)^+(P)\cap \p W&\subset  (I^\downarrow_{1+k/(2N)})^+(P)\cap \p W \subset (I^\downarrow_{1+(k+1)(2N)})^+(Q)\cap \p W \\
&\subset (I^\downarrow_{a+1/N})^+(Q) \cap \p W.
\end{align*}
Thus $(I^\downarrow_a)^+(P)\backslash W \subset (I^\downarrow_{a+{1}/{N}})^+(Q)\backslash W$ hence
\[
\mu((I^\downarrow_a)^+(P)) \le \mu((I^\downarrow_{a+{1}/{N}})^+(Q))+\mu(W)\le \mu((I^\downarrow_{a+\frac{\epsilon}{2}})^+(Q))+\tfrac{\epsilon}{2}
\]
That is, for every $Q\in B$ and $a \in [1,2]$
\[
-t^\downarrow_a(P)\le  - t^\downarrow_{a+\frac{\epsilon}{2}}(Q)+\tfrac{\epsilon}{2}
\]
and averaging (notice that $-1\le t^\downarrow_s\le 0$)
\begin{align*}
-t^\downarrow(P)\le -t^\downarrow(Q)-\int_2^{2+\frac{\epsilon}{2}}  t^\downarrow_{s} (Q)\dd s +\frac{\epsilon}{2}\le  -t^\downarrow(Q)+\epsilon,
\end{align*}
which proves the upper semi-continuity.
Thus $t^\downarrow$ is a time function for $(M^\times, C_1^\downarrow)$ which, therefore, is stably causal.
\end{proof}

\begin{theorem}(existence of anti-Lipschitz functions in stable spacetimes) \label{mih} \\
Let $(M,\mathscr{F})$ be a stable closed Lorentz-Finsler space, and let $(M,\mathscr{F}')$ be a stable locally Lipschitz proper Lorentz-Finsler space such that $\mathscr{F}'>\mathscr{F}$ (it exists as shown in Prop.\ \ref{con}). Then there is a continuous function  $t\colon M\to \mathbb{R}$ which is strictly $\mathscr{F}'$-anti-Lipschitz, namely such that for every continuous $C'$-causal curve $\sigma\colon [0,1] \to M$
\[
t(\sigma(1))-t(\sigma(0))>\int_\sigma \mathscr{F}'(\dot \sigma) \dd t.
\]
Moreover,  (a) given two points such that $(p,q)\notin J_S$ we can find $t$ so that $t(p)>t(q)$, and (b)  given $p\in M$ and an open neighborhood $O\ni p$ we can find $\check t$ and $\hat t$ continuous strictly $\mathscr{F}'$-anti-Lipschitz functions such that  $p \in [\{q\colon \check t(q)<0\}\cap \{q\colon \hat t(q)>0\}] \subset O$.
\end{theorem}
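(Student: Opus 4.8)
The plan is to run the product trick of Sec.\ \ref{mis}, but to combine the \emph{future} and \emph{past} averaged volume functions in the manner of Geroch (cf.\ Th.\ \ref{ger}) so that the resulting function on $M^\times=M\times\mathbb{R}$ becomes Cauchy along each vertical fiber; its zero level set will then be a genuine graph $r=t(p)$, and the graphing function $t$ will be the desired anti-Lipschitz function. First I would fix, using Th.\ \ref{con} and Prop.\ \ref{cso}, stably causal locally Lipschitz proper Lorentz-Finsler spaces $\mathscr{F}'<\mathscr{F}_0<\mathscr{F}_3<\bar{\mathscr{F}}$ with $\mathscr{F}<\mathscr{F}'$ and $\bar{\mathscr{F}}$ stable (so $\bar D<\infty$, whence every intermediate distance $d_a$ is finite), form the convex combinations $C_a^\times=(1-\tfrac a3)C_0^\times+\tfrac a3 C_3^\times$, $a\in[0,3]$ (relative to a hyperplane field as in Prop.\ \ref{ohg} and \ref{doo}), and the associated downward product cones $C_a^\downarrow$ of Eq.\ (\ref{xyb}). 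By Th.\ \ref{soq} and \ref{ssh} the cone structure $(M^\times,C_1^\downarrow)$ is stably, indeed strongly, causal, and every $C_a$ is strictly wider than $C'$, leaving room for a \emph{strict} bound.

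Fixing a strictly positive unit measure $\mu$ on $M^\times$, absolutely continuous in every chart, I would set $t^+_a(P)=-\mu(I^+_{C^\downarrow_a}(P))$ and $t^-_a(P)=\mu(I^-_{C^\downarrow_a}(P))$ and define the Geroch-type average $\tau(P)=\log\big(\int_1^2 \mu(I^-_{C^\downarrow_a}(P))\,\dd a\big)-\log\big(\int_1^2 \mu(I^+_{C^\downarrow_a}(P))\,\dd a\big)$. Continuity of $\tau$ is obtained by precisely the averaging and limit-curve estimates used for the two claims in Th.\ \ref{has} and reproduced in Th.\ \ref{ssh}; monotonicity of the volumes shows that $\tau$ strictly increases over every $C_1^\downarrow$-causal curve, and, because $C^\downarrow$ permits free vertical descent $(0,z)$ with $z<0$, that $\tau$ is strictly decreasing along each fiber $\{p\}\times\mathbb{R}$.

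The crux, and the step where the stability hypothesis $D<\infty$ is indispensable, is to show that $\tau$ is Cauchy along each fiber, i.e.\ $\tau(p,r)\to-\infty$ as $r\to+\infty$ and $\tau(p,r)\to+\infty$ as $r\to-\infty$. Using the analogue for $C^\downarrow$ of Eq.\ (\ref{kki}) (cf.\ Prop.\ \ref{maq}) one has $I^+_{C^\downarrow_a}((p,r))\subset\{(q,s)\colon q\in J^+(p),\ s<r+d_a(p,q)\}$, and since $\mu$ is finite and each $d_a(p,q)$ is finite, dominated convergence gives $\mu(I^+_{C^\downarrow_a}((p,r)))\to 0$ as $r\to-\infty$; the past volume is treated dually. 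Hence $\tau$ attains every real value, in particular $0$, exactly once on each fiber, so $\tau^{-1}(0)$ is the graph of a function $t\colon M\to\mathbb{R}$, which is continuous because $\tau$ is continuous and strictly monotone in $r$. This Cauchy-along-fibers property is the main obstacle; everything else is either the mechanism of the product trick or a repetition of estimates already established for Hawking's averaged function.

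For strict $\mathscr{F}'$-anti-Lipschitzness, given a continuous $C'$-causal curve $\sigma\colon[0,1]\to M$ I would lift it to $\Sigma(s)=(\sigma(s),\,t(\sigma(0))+\int_0^s \mathscr{F}'(\dot\sigma)\,\dd s')$; since $\mathscr{F}'<\mathscr{F}_1$ strictly and $C'<C_1$, the velocity $(\dot\sigma,\mathscr{F}'(\dot\sigma))$ lies in $\mathrm{Int}\,C_1^\downarrow$, so $\Sigma$ is $C_1^\downarrow$-timelike with $\Sigma(0)\in\tau^{-1}(0)$, whence $\tau(\Sigma(1))>\tau(\Sigma(0))=0=\tau(\sigma(1),t(\sigma(1)))$; as $\tau$ strictly decreases in $r$ this forces $t(\sigma(0))+\int_\sigma \mathscr{F}'(\dot\sigma)<t(\sigma(1))$, the claimed inequality. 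Finally, claims (a) and (b) I would obtain by localizing $\mu$: for (a), using the stability of the Seifert relation $J_S$ to pick $\mathscr{F}'$ with $(p,q)\notin J_{C'}$ and weighting $\mu$ so that the fibers over $p$ and $q$ are distinguished, forcing $t(p)>t(q)$; for (b), building a future-adapted $\hat t$ and a past-adapted $\check t$ from measures concentrated near $p$, together with the local non-imprisoning diamonds of Prop.\ \ref{iiu}, so that $\{q\colon\check t(q)<0\}\cap\{q\colon\hat t(q)>0\}$ is a neighborhood of $p$ squeezed inside $O$.
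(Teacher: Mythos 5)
Your construction of the main anti-Lipschitz function is, in essence, the paper's own proof: the paper (continuing the proof of Th.\ \ref{ssh}) builds exactly the two averaged volume functions $t^\downarrow(P)=-\int_1^2\mu(I^+_{C_a^\downarrow}(P))\,\dd a$ and $t^\uparrow$ (from the pasts), takes the Geroch-type combination $\tau=\log\vert t^\uparrow/t^\downarrow\vert$, uses stability (finiteness and upper semi-continuity of the wider stable distances, via Th.\ \ref{con}) to show $t^\downarrow\to 0$ as $r\to-\infty$ and $t^\uparrow\to 0$ as $r\to+\infty$ so that $\tau$ is fiberwise Cauchy, and reads off $t$ as the graphing function of $\tau^{-1}(0)$; anti-Lipschitzness is then obtained by lifting causal curves, just as you do. Your minor variants (a single stable dominating $\bar{\mathscr{F}}$ instead of taking each $\mathscr{F}_a$ stable; monotone/dominated convergence in place of the paper's compact-exhaustion estimate; lifting with the $\mathscr{F}'$-length rather than the $\mathscr{F}_1$-length) are all legitimate and change nothing essential. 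Your sketch of (a) also matches the paper's mechanism (displace the measure $\mu^\downarrow$, keeping it strictly positive, away from $\overline{J_3^+(p)}\times\mathbb{R}$ so as to raise the graphing value over $p$).

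The one genuine soft spot is part (b). As literally stated --- ``building a future-adapted $\hat t$ and a past-adapted $\check t$ from measures concentrated near $p$'' --- this cannot work: if the measure is supported (or essentially concentrated) near $p$, the averaged volume functions are constant on large regions, so the resulting functions fail to be \emph{strictly} $\mathscr{F}'$-anti-Lipschitz globally, and the Geroch quotient $\log\vert t^\uparrow/t^\downarrow\vert$ is even ill-defined where a future or past volume vanishes. The paper's proof of (b) instead keeps the globally strictly anti-Lipschitz $t$ already constructed (with respect to $\mathscr{F}''=\mathscr{F}_1>\mathscr{F}'$) and \emph{adds} to it a localized Hawking-type function $\tau\ge 0$ (built as in Sec.\ \ref{aho} on $M$, not on $M^\times$, from a measure supported in $I_1^-(p)\cap V$ for a carefully chosen $C_2$-causally convex compact $V$): the sum $\hat t=t+\tau$ of a strictly anti-Lipschitz function and an isotone function is still strictly anti-Lipschitz, and since $\tau$ vanishes identically outside $J^+_2(V)$ one gets $\{\hat t>0\}\subset O\cup\{t>0\}$, dually $\{\check t<0\}\subset O\cup\{t<0\}$, whence the intersection lies in $O$. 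This additive decomposition (global anti-Lipschitz plus localized isotone correction), together with the causal-convexity bookkeeping of Prop.\ \ref{iiu} and Th.\ \ref{dao} needed to make $\tau$ vanish outside a controlled set, is the idea missing from your sketch; without it the sign sets of your localized functions cannot be squeezed into $O$ while preserving strict anti-Lipschitzness everywhere.
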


The idea is to show that there is a time function on $(M^\times, C'{}^\downarrow)$, such that its zero level set $S_0$ intersects exactly once every $\mathbb{R}$-fiber of $M^\times$. This set $S_0$ regarded as a graph over $M$ provides the anti-Lipschitz  time function.

\begin{remark} \label{sof}
The function $t$ constructed in this theorem is really stably locally anti-Lipschitz (Sec.\ \ref{nug}). Indeed let $\check C$ be a locally Lipschitz proper cone structure such that $C<\check C< C'$, then the indicatrix $\mathscr{F}'{}^{-1}(1)$ intersects $\check C$ in a compact set. Let $h$ be a Riemannian metric whose unit balls contain such intersection then for every $\check C$-causal vector $y$, $\mathscr{F}'(y)\ge \Vert y\Vert_h$, thus if $\sigma\colon [0,1] \to M$ is a continuous  $\check C$-causal curve, $t(\sigma(1))-t(\sigma(0))>\ell'(\sigma)\ge \ell^h(\sigma)$.
\end{remark}

\begin{proof}
This proof is the continuation of the previous one. The only difference is that in the first step we let $\mathscr{F}',\mathscr{F}_3,\mathscr{F}_0$, be   stable   locally Lipschitz proper Lorentz-Finsler spaces, which exist by Th.\ \ref{con} (hence $D',D_3,D_0$ are finite). In the previous proof we constructed
 $t^\downarrow$,  a time function for $(M^\times, C_1^\downarrow)$. Observe that the particular shape of the cone $C_1^\downarrow$, that is the fact that it contains a vertical half-line, implies that the level sets of $t^\downarrow$ can intersect the fiber at most once, in fact the fibers are $C_1^\downarrow$-causal and so $t^\downarrow$ strictly increases over every fiber. Unfortunately, the level sets of $t^\downarrow$ might `go to infinity' before crossing some fibers. This circumstance is cured as follows.

Let  $t^\uparrow$ be the  time function that one would obtain taking the opposite cones  on $M^\times$. Both are time function on $(M^\times, C_1^\downarrow)$, where $t^\downarrow$ uses the measure of the chronological futures to build the time function, while $t^\uparrow$ uses the measure of the chronological pasts. The important point is that over a given fiber $(p,r)$, $r\in \mathbb{R}$, $t^\downarrow \to 0$ for $r \to -\infty$, and  $t^\uparrow \to 0$ for $r \to +\infty$. Let us prove this claim for $t^\downarrow$, the other claim being proved dually. Let $\epsilon>0$, and let $K\times [-G,G]$ be a compact set such that $p\in K$ and $\mu(M^\times\backslash K\times [-G,G])<\epsilon$. Since ${D}_3$ is upper semi-continuous, ${D}_3(p, \cdot)$ has an upper bound $R$ on $K$. Hence for every $a\in [1,2]$, $d_a(p, \cdot)< R$ on $K$.  As a consequence $(I^\downarrow_a)^+((p,-R-G))\cap \{K\times [-G,G]\}=\emptyset$, for every $a\in [1,2]$, which implies $\vert t ^\downarrow((p,-R-G))\vert<\epsilon$. The $C_1^\downarrow$-time  function $\tau=\log \vert t^\uparrow/t^\downarrow\vert $ is   continuous and strictly monotone over every $\mathbb{R}$-fiber with image $(-\infty,+\infty)$ as it goes to $\pm \infty$ for $r\to \mp \infty$ (the future direction for $C^\downarrow_a$ over the fiber corresponds to decreasing $r$). The level set $S_0=\tau^{-1}(0)$ being $C_1^\downarrow$-acausal  provides the graph of the searched function $t$.  In fact, let $\sigma$ be a continuous $C_1$-causal curve $\sigma\colon [0,1]\to M$, then $(\sigma(t),\ell_1(\sigma\vert_{[0,t)}))$ is a continuous $C_1$-causal curve. By definition of $t$, $(\sigma(0),t(\sigma(0))) \in S_0$. Function $\tau$ increases  over $\sigma$, thus $t(\sigma(1))>t(\sigma(0))+\ell_1(\sigma))$. Since $\mathscr{F}_1>\mathscr{F}'$ the first statement is proved.


Let us prove (a). Suppose to have been given $(p,q)\notin J_S$ then we can choose $C_3$ in the above construction in such a way that $(p,q)\notin \bar J_3$. Moreover, in the definition of $t^\downarrow$ and $t^\uparrow$ we are free to use different measures $\mu^\downarrow$ and $\mu^\uparrow$. We are going to alter $\mu^\downarrow$ by dispacing it over $M\times \mathbb{R}$ while keeping the extra coordinate invariant.
 Since $O:=I_3^+(q)\backslash \overline{ J_3^+(p)}\ne \emptyset$ we can move most of the measure (not all since its density with respect to Lebesgue has to be positive) $\mu^\downarrow(\overline{ J_3^+(p)}\times \mathbb{R})$ to the fiber of the open set $O$ so non-decreasing $\vert t^\downarrow\vert $ over the fiber of $q$ while decreasing as much as desired $\vert t^\downarrow\vert$ over the fiber of $p$.
 Defining $\tau=\log \vert t^\uparrow/t^\downarrow\vert $ the operation is used to alter the graphing function $t$ of the level set $S_0=\tau^{-1}(0)$ over $p$ and $q$, in such a way that $t(p)>t(q)$.

Let us prove (b). In the proof of the first statement we have found $\mathscr{F}'':=\mathscr{F}_1>\mathscr{F}'$, and a function $t$ such that for every continuous $C''$-causal curve $\sigma\colon [0,1]\to M$, $t(\sigma(1))-t(\sigma(0))>\ell''(\sigma)$, in particular, for every $C'$-causal curve $\sigma\colon [0,1]\to M$, $t(\sigma(1))-t(\sigma(0))>\ell'(\sigma)$. We introduce locally Lipschitz proper cone structures $C_0$ and $C_3$, not to be confused with those appearing in the previous steps of this proof (which we do not use anymore), such that $C'<C_0<C_3<C''$. Let $p\in M$, and $O\ni p$. Without loss of generality we  can assume $t(p)=0$. From $C_0$ and $C_3$ we define $C_a$, $a\in [0,3]$, introduce a measure $\mu$ and  build a  $C'$-time function $\tau$ a la Hawking as done in Sec.\ \ref{aho}, then $t+\tau$ is also $C'$-anti-Lipschitz.

Since $C_2$ is stably causal it is strongly causal and so there is a $C_2$-causally convex open neighborhood $U\subset O$, $p\in O$, such that $U$ is $C_2$-non-imprisoning and $J_2(U)$ is closed, cf.\ Prop.\ \ref{iiu} and \ref{dao}.
Let $B\ni p$ be a compact neighborhood, $B\subset U$, then since  $J_2(U)$ is closed, $J^+_2(p,U)\cap \p B$ is  a compact set. As $t$ increases over every $C_2$-causal curve (because $C_2<C''$), it is positive at every point of $J^+_2(p,U)\cap \p B$ and hence there is $\epsilon>0$ such that $J_2^+(p,U)\cap \p B$ stays in  the region $t>\epsilon$. By the limit curve theorem there is a neighborhood $U'\subset B$, $p\in U'$, such that $J^+_2(U',U)\cap \p B$ stays in  the region $t>\epsilon$. An analogous argument in the past case leads to the definition of the set $U''$. Thus let $V\subset U'\cap U''$, $p\in V$, be a $C_2$-causally convex compact neighborhood,   $J^+_2(V,U)\cap \p B$ stays in  the region $t>\epsilon$ (and $J^-_2(V,U)\cap \p B$ stays in  the region $t<-\epsilon$), thus
$J_2^+(V)\subset O\cup t^{-1}((\epsilon,+\infty))$ since $t$ is a $C_2$-time function, and similarly in the past case.

Let $\mu$ be supported in  $I^-_1(p) \cap V$, then recalling that  $\tau(r)=\int_1^2\mu(I^-_a(r))\dd a$ we have $\tau=0$ outside $J_2^+(V)$. By construction $\tau\ge 0$, thus $\{q\colon \tau(q)>0\}\subset O \cup \{q\colon t(q)>0\}$. Defining $\hat t=t+\tau$, we have $\{q\colon \hat t(q)>0\}\subset O \cup \{q\colon t(q)>0\}$.

A similar construction with $\mu$ supported in  $I^+_1(p) \cap V$ but constructing Hawking's function with the opposite cones $\tau(r)=-\int_1^2\mu(I^+_a(r))\dd a$, gives a function $\check t=t+\tau$, $\tau(p)<0$, such that $\{q\colon \check t(q)<0\}\subset O \cup \{q\colon t(q)<0\}$. Thus $p \in [\{q\colon \check t(q)<0\}\cap \{q\colon \hat t(q)>0\}] \subset O$.

\end{proof}

\subsection{Smoothing anti-Lipschitz functions}

For the next theorem and  corollary J.\ Grant, P.\ Chrusciel and the author should be credited, since it is really a polished and improved version of our theorem \cite[Th.\ 4.8]{chrusciel13}. I didn't change the original wording where it wasn't necessary. The new proof makes  manifest an important feature hidden in the original proof, namely the possibility of bounding the derivative of the smoothing function. Furthermore, it holds for general cone structures. Other techniques useful for smoothing increasing functions can be found in \cite{camilli09,bernard16}.

\begin{theorem} \label{moz}
  Let $({ M},C)$ be a closed cone structure and
 let $\tau\colon M\to \mathbb{R}$ be a continuous function. Suppose that there is a $C^0$ proper cone structure $\hat C>C$ and continuous functions  homogeneous of degree one on the fiber $\underline F, \overline F\colon \hat C\to \mathbb{R}$ such that for every  $\hat C$-timelike  curve $x\colon [0,1]\to M$
 \begin{equation} \label{mos}
 \int_x \underline F(\dot x) \dd t\le \tau(x(1))-\tau(x(0))\le \int_x \overline F(\dot x) \dd t.
 \end{equation}
 Let $h$ be an arbitrary Riemannian metric, then for every function $\alpha\colon { M} \to (0,+\infty)$ there exists
a smooth  function $\hat{\tau}$ such that $\vert \hat\tau-\tau\vert <\alpha$ and for every $v\in C$
\begin{equation} \label{kid}
\underline F(v)- \Vert v\Vert_h \le \dd \hat \tau(v) \le \overline F(v)+ \Vert v\Vert_h .
\end{equation}
Similar versions, in which some of the functions $\underline F, \overline F$ do not exist hold true. One has just to drop the corresponding inequalities in (\ref{kid}).
\end{theorem}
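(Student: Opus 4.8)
The plan is to combine a local convolution estimate with a partition-of-unity patching, in the spirit of the smoothing construction of Chru\'sciel, Grant and the author. First I would cover $M$ by coordinate charts $U_i$ of the type produced in Prop.\ \ref{iiu}, so that on each $U_i$ the chart carries a flat affine structure and a Minkowski metric with cones wider than $C$. The decisive geometric input is the hypothesis $\hat C > C$: since $C \subset \mathrm{Int}\, \hat C$ in the slit tangent bundle, every $v \in C_x$ — including lightlike $v \in (\partial C)_x$ — lies in $(\mathrm{Int}\, \hat C)_x$, and by continuity of the $C^0$ proper cone structure $\hat C$ (cf.\ Prop.\ \ref{low}, \ref{cob}) it stays in $(\mathrm{Int}\, \hat C)_y$ for all $y$ in a uniform neighborhood. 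Hence, after shrinking $U_i$ and fixing a small $\epsilon$, every straight coordinate segment $t \mapsto x - y + t v$ with $v \in C_x$ and $|y| \le \epsilon$ is a genuine $\hat C$-timelike curve, so the assumed inequalities (\ref{mos}) apply to it.

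Next I would carry out the local estimate. On each chart set $\tau_i^\epsilon = \tau * \rho_\epsilon$, the convolution (in coordinates) with a standard nonnegative mollifier supported in the $\epsilon$-ball; this is smooth and, by uniform continuity of $\tau$, satisfies $|\tau_i^\epsilon - \tau| < \alpha/3$ for $\epsilon$ small. For $v \in C_x$ I would write the difference quotient
\[
\frac{\tau_i^\epsilon(x+sv) - \tau_i^\epsilon(x)}{s} = \int \rho_\epsilon(y)\,\frac{\tau(x+sv-y) - \tau(x-y)}{s}\, \dd y,
\]
bound each integrand from above and below through (\ref{mos}) applied to the segment $x - y + t v$, and let $s \to 0^+$ (the limit exists since $\tau_i^\epsilon$ is smooth), obtaining
\[
\int \rho_\epsilon(y)\,\underline F|_{x-y}(v)\, \dd y \le \dd\tau_i^\epsilon|_x(v) \le \int \rho_\epsilon(y)\,\overline F|_{x-y}(v)\, \dd y.
\]
Using the continuity and fiberwise homogeneity of $\underline F, \overline F$ on the compact set of $h$-unit vectors of $C$ over $\overline{U_i}$, the right- and left-hand sides differ from $\overline F|_x(v)$ and $\underline F|_x(v)$ by at most $\omega_i(\epsilon)\,\Vert v\Vert_h$, where $\omega_i(\epsilon) \to 0$. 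Choosing $\epsilon = \epsilon_i$ small enough yields the local two-sided bound $\underline F(v) - \tfrac12\Vert v\Vert_h \le \dd\tau_i^{\epsilon_i}(v) \le \overline F(v) + \tfrac12\Vert v\Vert_h$.

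Then I would patch. Take a locally finite refinement with a subordinate smooth partition of unity $\{\chi_i\}$ and set $\hat\tau = \sum_i \chi_i\, \tau_i^{\epsilon_i}$. Differentiating, $\dd\hat\tau = \sum_i \chi_i\, \dd\tau_i^{\epsilon_i} + \sum_i (\tau_i^{\epsilon_i} - \tau)\, \dd\chi_i$, where I used $\sum_i \dd\chi_i = 0$ to insert $\tau$. The first sum is a convex combination of the local differentials, so it inherits the bounds $\underline F(v) - \tfrac12\Vert v\Vert_h \le \sum_i \chi_i \dd\tau_i^{\epsilon_i}(v) \le \overline F(v) + \tfrac12\Vert v\Vert_h$. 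For the second sum I would shrink each $\epsilon_i$ further, \emph{after} the partition is fixed, so that at every point $\sum_i |\tau_i^{\epsilon_i} - \tau|\,\Vert \dd\chi_i\Vert_h < \tfrac12$; this is possible because only finitely many $\chi_i$ are nonzero near each point and the uniform error $|\tau_i^{\epsilon_i} - \tau|$ can be made arbitrarily small independently of the fixed $\Vert \dd\chi_i\Vert_h$. Since this term is bounded by $\tfrac12\Vert v\Vert_h$ and the running estimate $|\tau_i^{\epsilon_i} - \tau| < \alpha/3$ also gives $|\hat\tau - \tau| < \alpha$, adding the two contributions produces exactly (\ref{kid}). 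The main obstacle I expect is this last patching step: the derivative budget in (\ref{kid}) is a fixed tolerance $\Vert v\Vert_h$, and the naive term $\sum_i \tau_i^{\epsilon_i} \dd\chi_i$ carries the potentially large factors $\Vert \dd\chi_i\Vert_h$. The resolution is the order of quantifiers — the partition (hence the $\dd\chi_i$) is chosen first and the mollification scales $\epsilon_i$ last, so that smallness of the \emph{uniform} approximation error defeats the size of the partition derivatives; note that both the bad term and the allowed error scale linearly in $\Vert v\Vert_h$, so the budget is respected. The lightlike directions, usually the delicate case in such arguments, cause no trouble here precisely because $\hat C > C$ converts every $C$-causal segment into a $\hat C$-timelike one to which the hypothesis directly applies. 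Finally, the stated variants in which $\underline F$ or $\overline F$ is absent follow by simply discarding the corresponding one-sided estimate throughout.
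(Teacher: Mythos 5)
Your proposal is correct and follows essentially the same route as the paper's proof: local mollification of $\tau$ in charts, made legitimate by the fact that straight coordinate segments with $C$-causal directions are $\hat C$-timelike (the paper uses a translation-invariant intermediate cone $\check C$ with $C<\check C<\hat C$ for the uniformity you invoke via compactness), followed by partition-of-unity patching in which the identity $\sum_i \dd\chi_i = 0$ turns the troublesome term into $\sum_i(\tau_i^{\epsilon_i}-\tau)\,\dd\chi_i$, controlled by choosing the mollification scales \emph{after} the partition is fixed. This is precisely the paper's decomposition into the terms $I(s)$ and $I\!I(s)$ and its choice of the constants $\eta_j$ relative to $R_\ell$ and $N_\ell$.
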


Since $h$ is arbitrary the last inequality can be made as stringent as desired, e.g.\ redefining the metric through multiplication by a small conformal factor.

%
\begin{proof}
Let $p\in { M}$  and let $\{x^\mu\}$ be local coordinates in a neighborhood of $p$. We rescale the coordinates in such a way that in a relatively compact neighborhood $V$ of $p$, the coordinate ball of $T_q V$, for every $q\in V$, contains the $h$-unit balls.

Let $B_p(3\epsilon(p))\subset V$ be a coordinate ball.
The coordinates split $TM$ over  $\overline{B_p(3\epsilon(p))}$ as $\overline{B_p(3\epsilon(p))}\times \mathbb{R}^{n+1}$ (which admits the coordinate sphere subbundle $\overline{B_p(3\epsilon(p))}\times \mathbb{S}^{n}$). The second projection, as induced by the local coordinate system, provides an identification of the fibers. Notice that if $(q,v)\in \overline{B_p(3\epsilon(p))}\times\mathbb{S}^{n}$ then $ \Vert v\Vert_h(q)\ge 1$.

 At $p$ we can find $\check C_p$ such that $C_p<\check C_p<\hat C_p$. By upper semi-continuity of $C$ and continuity of $\hat C$ the constant $\epsilon$ can be chosen so small that if we define $\check C=\overline{B_p(3\epsilon(p))}\times \check C_p$, then we still have $C<\check C<\hat C$ over the neighborhood. Since $\check C$ is translationally invariant if $v$ is $C$-causal at $q\in B_p(3\epsilon(p))$ then  the tangent vector to the curve $q'(s)=q_0'+vs$, $q_0'\in B_p(3\epsilon(p))$  is $\check C$-causal and hence $\hat C$-timelike as long as $q'(s)$ stays in the neighborhood.


Finally, $\underline F$  is a   continuous function, positive homogeneous of degree one, determined by its value on the compact set $\{\overline{B_p(3\epsilon(p))}\times \mathbb{S}^{n}\}\cap \hat C$, where it is uniformly continuous, so we can find $\epsilon$ so small and $\delta >0$ such that for every
$(q,v), (q',v')\in \{\overline{B_p(3\epsilon(p))}\times \mathbb{S}^{n}\}\cap \hat C$ with $d_{\mathbb{S}^{n}}(v,v')<\delta$, we have
\[
\vert\underline F(q',v')-\underline F(q,v)\vert < 1/2\le \Vert v\Vert_h(q)/2,
\]
and similarly for $\overline F$. In particular, if $v'=v$
\[
-\Vert v\Vert_h(q)/2<\underline F(q',v)-\underline F(q,v) < \Vert v\Vert_h(q)/2,
\]
%
which must also hold for $v$ not necessarily coordinate normalized since all functions appearing in this expression are positive homogeneous of degree one.

By $\sigma$-compactness there is  a locally finite
covering of ${ M}$ consisting of coordinate balls $\{{\mathscr O}_i:={B_{p_i}(\epsilon_i)}\}$, where $\epsilon_i$ is as above.
Let $\varphi_i$ be a smooth partition of unity subordinate
to the cover $\{{\mathscr O}_i\}$. Choose some
$0<\eta_j<\epsilon_j$. In local coordinates on ${\mathscr O}_j$ let $\tau_j$
be defined by convolution with an even
non-negative smooth function $\chi$,
supported in the coordinate ball of radius one, with integral one:
\[
 \tau_j (x) =\left\{
               \begin{array}{ll}
 \frac 1 {\eta_j^{n+1}} \int_{B_{p_j}(3\epsilon_j)} \chi\left(\frac{y-x}{\eta_j}\right) \tau (y)\, d^{n+1}y, & \hbox{$x\in B_{p_j}(2\epsilon_j)$;} \\
                 0, & \hbox{otherwise.}
               \end{array}
             \right.
\]
We define the smooth function
\[
 \hat \tau:= \sum_j \varphi_j \tau_j
 \;.
\]
The non-vanishing terms at each point are finite in number, and
$\hat\tau$ converges pointwisely to $\tau$ as we let the constants
$\eta_j$ converge to zero. The idea is to control the constants
$\eta_j$ to get the desired properties for $\hat\tau$.

Let $x\in M$ and $v\in C_x$, where $\Vert v\Vert_h=1$.
There is $j$ such that $x\in {\mathscr O}_j=B_{p_j}(\epsilon_j)$. In local coordinates  $v$ reads  $v = v^\mu
\partial_\mu$,  so that the $C^1$ curve $x^\mu (s)= x^\mu + v^\mu s$ is
$\hat C$-timelike
 as long as it stays within $B_{p_j}(3\epsilon_j)$. We observe that $s$ is not the $h$-arc length parametrization of the curve, however it will be sufficient to observe that $\Vert \frac{\dd }{\dd s}\Vert_h=\Vert v\Vert_h=1$ at $x$.


We write:
\begin{align*}
 \hat \tau(x(s))-\hat \tau(x)&= \underbrace{\sum_j \big(\varphi_j (x(s))-\varphi_j (x)\big)\tau_j(x(s))}_{=:I(s)} +
  \underbrace{
  \sum_j \varphi_j (x )\big(\tau_j(x(s))-\tau_j(x)\big)
   }_{=:I\!I(s)}
   \;.
\end{align*}
We have at $x\in {\mathscr O}_j$, (here we use Eq.\ (\ref{mos}))
\begin{align*}
 \lim_{s\to 0} \frac{I\!I (s) }{s}
     &=
  \lim_{s\to 0} \frac 1 s
  \sum_k \varphi_k (x )\big(\tau_k(x+ v s)-\tau_k(x))
\\
& =
  \lim_{s\to 0} \frac 1 s
  \sum_k \frac {\varphi_k (x )}{\eta_k ^{n+1}}\int_{B_{0}(\epsilon_k)} \chi\left(\frac{z}{\eta_k}\right) \big( \underbrace{\tau (x+vs +z)-\tau (x+z)}_{\ge \int_0^s \underline F(x+z+tv,v)  \dd t  }\big)
   \, d^{n+1}z
\\
&\ge
  \sum_k \frac {\varphi_k (x )}{\eta_k ^{n+1}}\int_{B_{0}(\epsilon_k)} \chi\left(\frac{z}{\eta_k}\right) \underline F(x+z,v)
   \, d^{n+1}z.
\end{align*}
Thus
\begin{align*}
 \lim_{s\to 0} \frac{I\!I (s) }{s}-\underline F(x,v)
&\ge
  \sum_k \frac {\varphi_k (x )}{\eta_k ^{n+1}}\int_{B_{0}(\epsilon_k)} \chi\left(\frac{z}{\eta_k}\right) [\underline F(x+z,v) - \underline F(x,v)]
   \, d^{n+1}z\\
   &\ge  \sum_k \frac {\varphi_k (x )}{\eta_k ^{n+1}}\int_{B_{0}(\epsilon_k)} \chi\left(\frac{z}{\eta_k}\right) (- \Vert v\Vert_h(x)/2)
   \, d^{n+1}z\ge -\Vert v\Vert_h(x) /2
\end{align*}
So we arrive at
\[
\underline F(x,v)-\Vert v\Vert_h(x)/2\le \lim_{s\to 0} \frac{I\!I (s) }{s} \le  \overline F(x,v)+\Vert v\Vert_h(x)/2
\]
where the second inequality is obtained following analogous calculations and using the second inequality in (\ref{mos}).
%

%

For every $j$ let
\[
R_j:=\sup_{k\,:\,{\mathscr O}_k\cap {\mathscr O}_j\ne \emptyset}\sup_{x\in
\overline{{\mathscr O}_j}} \Vert \nabla^h \varphi_k(x)\Vert_h\; ,
\]
let $N_j$ be the number of distinct sets ${\mathscr O}_k$ which have
non-empty intersection with ${\mathscr O}_j$, and let us choose  $\eta_j$ so
small that
\[
\sup_{x\in \overline{{\mathscr O}_j}} |\tau(x)-\tau_j(x)|<
\min_{\ell:{\mathscr O}_\ell\cap {\mathscr O}_j\ne \emptyset} \{\,\frac{1}{N_\ell}
\inf_{\overline{{\mathscr O}_\ell}}\alpha, \,\frac{1}{2N_\ell
R_\ell}\}\; .
\]
Let $\chi_k$ be the characteristic function of ${\mathscr O}_k$, so that
$\varphi_k\le \chi_k$.  The sets ${\mathscr O}_j$ and
$\overline{{\mathscr O}_j}$ intersect the same sets of the covering
$\{{\mathscr O}_i\}$, which are $N_j$ in number, thus
\begin{align*}
 \sup_{x\in \overline{{\mathscr O}_j}}  \sum_{k:{\mathscr O}_k\cap
{\mathscr O}_j\ne \emptyset} \!\!\![\chi_k(x) |\tau(x)-\tau_k(x)|]&\le \! \!\!
\sum_{k:{\mathscr O}_k\cap {\mathscr O}_j\ne \emptyset}\, \sup_{x\in
\overline{{\mathscr O}_k}}|\tau(x)-\tau_k(x)|\\
&\le\!\!\!
\sum_{k\,:\,{\mathscr O}_k\cap {\mathscr O}_j\ne \emptyset} \frac{1}{2R_j N_j}=
\frac{1}{2R_j}
 \;.
\end{align*}


Then at $x\in {\mathscr O}_j$, (recall that $\Vert v\Vert_h=1$ at $x$)
\begin{align*}
\bigg |
\lim_{s\to 0} \frac{I (s) }{s} \bigg|
     &= \bigg |
  \lim_{s\to 0} \sum_k \frac{\varphi_k (x(s))-\varphi_k (x)} s\, \tau_k(x(s))
 \bigg|
\\
& = \bigg |
  \sum_k v\big(\varphi_k(x)\big) \tau_k (x )\bigg |
= \bigg |\sum_k v\big(\varphi_k(x)\big) \big[\tau(x)-\big(\tau(x)
-\tau_k (x )\big)\big]
 \bigg|
\\
 &=  \bigg |\underbrace{v\bigg(\sum_k \varphi_k(x) \bigg)}_{=  v(1)  = 0}\tau (x )
- \sum_k v\big(\varphi_k(x)\big) (\tau(x) -\tau_k (x ) )
 \bigg|
\\
 & \le  \sum_k |v\big(\varphi_k(x)\big)| \,|\tau(x) -\tau_k (x )|=\sum_{k\,:\,{\mathscr O}_k\cap {\mathscr O}_j\ne \emptyset} |v\big(\varphi_k(x)\big)|\, |\tau(x) -\tau_k (x )|
\\
 &\le  R_j \sum_{k:{\mathscr O}_k\cap {\mathscr O}_j\ne \emptyset}  \chi_k(x) |\tau(x) -\tau_k (x )|\le \frac{1}{2}=\frac{\Vert v\Vert_h(x)}{2}
  \;.
\end{align*}
Hence, for   every $x\in M$ and every $C$-causal vector $v\in
T_x{ M}$ of $h$-length one,
 we have
\begin{equation}
\underline F(x,v)-\Vert v\Vert_h(x)\le v(\hat\tau)  \le  \overline F(x,v)+\Vert v\Vert_h(x).
\end{equation}
By positive homogeneity we can drop the condition  $\Vert v\Vert_h(x)=1$ and so this equation holds for every $C$-causal vector $v$.

Finally, for every $x\in {M}$, there is some $j$ such
that $x\in {\mathscr O}_j$, hence
\begin{align*}
\vert \tau(x)-\hat\tau(x)\vert&=\vert \sum_k \varphi_k(x)[\tau(x)-\tau_k(x)] \vert\le \sum_{k:{\mathscr O}_k\cap {\mathscr O}_j\ne \emptyset} \sup_{x\in \overline{{\mathscr O}_k}} |\tau(x)-\tau_k(x)|\\
&\le \sum_{k:{\mathscr O}_k\cap {\mathscr O}_j\ne \emptyset} \frac{1}{N_j}
\inf_{\overline{{\mathscr O}_j}} \alpha\le \alpha(x) \sum_{k:{\mathscr O}_k\cap
{\mathscr O}_j\ne \emptyset} \frac{1}{N_j}= \alpha(x) \; .
\end{align*}
%


\end{proof}

Note that the differentiability degree  of $\hat\tau$ depends only upon the
differentiability degree of ${ M}$, regardless of the regularity of $C$.


\begin{theorem} \label{vkf}
 Every  stable closed Lorentz-Finsler space $(M,\mathscr{F})$  admits a smooth strictly $\mathscr{F}$-steep function $t$. Moreover, if $(p,q)\notin J_S$ we can find $t$ such that $ t(p)>t(q)$.
 Finally, for every $p\in M$ and  every open neighborhood $O\ni p$ we can find smooth strictly $\mathscr{F}$-steep  functions $\check t, \hat t$  such that $p\in [\{q\colon \check t(q)<0\}\cap \{q\colon \hat t(q)>0\}] \subset O$.
\end{theorem}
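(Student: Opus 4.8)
The plan is to bootstrap Theorem \ref{vkf} directly from Theorem \ref{mih} by applying the smoothing machinery of Theorem \ref{moz}. The key observation is that Theorem \ref{mih} already delivers, under the stability hypothesis, a continuous function $t$ which is strictly $\mathscr{F}'$-anti-Lipschitz for some locally Lipschitz proper $\mathscr{F}'>\mathscr{F}$, together with the two refinements (a) and (b) that handle the $J_S$-separation and the neighborhood-localization. So the entire burden here is to upgrade ``continuous strictly $\mathscr{F}'$-anti-Lipschitz'' to ``smooth strictly $\mathscr{F}$-steep'' while preserving these two extra properties. First I would fix a locally Lipschitz proper cone structure $\check C$ with $C<\check C<C'$ and, by Remark \ref{sof}, note that the functions supplied by Theorem \ref{mih} are in fact stably locally anti-Lipschitz, so there is a $\check C>C$ relative to which the anti-Lipschitz estimate holds.

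Next I would invoke Theorem \ref{moz} with the cone structure $\hat C=\check C$, with the lower fiber-homogeneous function taken to be $\underline F=\mathscr{F}'$ (the upper function $\overline F$ can be dropped, as the statement of Theorem \ref{moz} permits). The inequality $(\ref{mos})$ that Theorem \ref{moz} requires is precisely the anti-Lipschitz estimate $\int_x \mathscr{F}'(\dot x)\,\dd t\le \tau(x(1))-\tau(x(0))$ guaranteed by Theorem \ref{mih}. Theorem \ref{moz} then produces a smooth $\hat t$ with $\dd\hat t(v)\ge \mathscr{F}'(v)-\Vert v\Vert_h$ for every $v\in C$, for an arbitrary Riemannian metric $h$. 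Since $\mathscr{F}'>\mathscr{F}$ on $C$, the gap $\mathscr{F}'-\mathscr{F}$ is a positive continuous homogeneous function bounded below on each compact section of $C$; after rescaling $h$ by a small conformal factor (as the remark following Theorem \ref{moz} allows), one arranges $\Vert v\Vert_h\le \mathscr{F}'(v)-\mathscr{F}(v)$ for all $v\in C$, whence $\dd\hat t(v)\ge \mathscr{F}(v)$ with strict inequality, i.e.\ $\hat t$ is strictly $\mathscr{F}$-steep. This gives the first sentence.

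For the $J_S$-separation statement I would start from the function $t$ produced by Theorem \ref{mih}(a), which is continuous, strictly $\mathscr{F}'$-anti-Lipschitz, and satisfies $t(p)>t(q)$ when $(p,q)\notin J_S$. Applying the above smoothing with the tolerance function $\alpha$ chosen so small that $2\sup_M\alpha<t(p)-t(q)$ yields a smooth strictly $\mathscr{F}$-steep $\hat t$ with $|\hat t-t|<\alpha$, and hence $\hat t(p)>\hat t(q)$. For the localization statement (b), Theorem \ref{mih}(b) supplies continuous strictly $\mathscr{F}'$-anti-Lipschitz $\check t,\hat t$ with $p\in[\{\check t<0\}\cap\{\hat t>0\}]\subset O$; here I would smooth each separately, choosing $\alpha$ small enough that the strict sign conditions at $p$ and the openness of the defining sets are preserved. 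Concretely, since $\check t(p)<0$ and $\hat t(p)>0$ strictly, and since the sublevel/superlevel sets are open, one picks $\alpha$ bounded above by a fraction of $\min\{-\check t(p),\hat t(p)\}$ and small enough that the perturbed sets still sandwich $p$ inside $O$; the smoothing preserves strict $\mathscr{F}$-steepness exactly as before.

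I expect the main obstacle to be purely bookkeeping rather than conceptual: one must verify that smoothing with a uniform error bound $\alpha$ does not destroy the two geometric configurations (the order $t(p)>t(q)$ and the nesting $p\in[\{\check t<0\}\cap\{\hat t>0\}]\subset O$). The delicate point is that Theorem \ref{mih}(b) only controls the defining sets up to inclusion in $O$, so after perturbation I need the \emph{new} sublevel sets to remain inside $O$; this forces $\alpha$ to be chosen not merely small pointwise but small relative to how the level sets of $\check t,\hat t$ approach $\partial O$. Because the sets $\{\check t<0\}$ and $\{\hat t>0\}$ are open and contain $p$ strictly in their interior while their intersection is already contained in $O$, a sufficiently small uniform $\alpha$ suffices, but making this precise requires a short compactness argument on $\overline O\setminus(\text{the interior configuration})$. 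Apart from this, the conformal rescaling of $h$ to absorb $\mathscr{F}'-\mathscr{F}$ into the error term in $(\ref{kid})$ is the only other step needing care, and it is routine given the remark after Theorem \ref{moz}.
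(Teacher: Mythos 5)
Your treatment of the first statement and of the $J_S$-separation statement follows essentially the paper's own route: Theorem \ref{mih} supplies a continuous strictly $\mathscr{F}'$-anti-Lipschitz function, Theorem \ref{moz} (with $\underline F=\mathscr{F}'$ and $\overline F$ dropped) smooths it, the gap $\mathscr{F}'-\mathscr{F}$ is absorbed into the metric error term, and the separation $t(p)>t(q)$ survives because the smoothing error is taken smaller than a fixed fraction of the initial margin. One small slip there: from $\Vert v\Vert_h\le \mathscr{F}'(v)-\mathscr{F}(v)$ you only obtain $\dd\hat t(v)\ge \mathscr{F}'(v)-\Vert v\Vert_h\ge\mathscr{F}(v)$, which is not strict; to get strict $\mathscr{F}$-steepness you need a genuine margin, e.g.\ $\Vert v\Vert_h\le\tfrac{1}{2}\bigl(\mathscr{F}'(v)-\mathscr{F}(v)\bigr)$, whence $\dd\hat t(v)\ge\mathscr{F}(v)+\Vert v\Vert_h>\mathscr{F}(v)$. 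This is exactly how the paper arranges it (a metric $\gamma$ with $\mathscr{F}'(v)-\tfrac{1}{2}\Vert v\Vert_\gamma>\mathscr{F}(v)$ on $C$, then smoothing with $h=\gamma/4$), so this is trivial to repair.

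The genuine gap is in the final (localization) statement. Smoothing $\check{\tilde t}$ and $\hat{\tilde t}$ symmetrically with a uniform error $\alpha$ only gives $\{q\colon\check t(q)<0\}\cap\{q\colon\hat t(q)>0\}\subset\{q\colon\check{\tilde t}(q)<\alpha\}\cap\{q\colon\hat{\tilde t}(q)>-\alpha\}$, and these enlarged sets do \emph{not} shrink into $O$ as $\alpha\to 0$: their intersection over all $\alpha>0$ is the non-strict set $\{q\colon\check{\tilde t}(q)\le 0\}\cap\{q\colon\hat{\tilde t}(q)\ge 0\}$, which part (b) of Theorem \ref{mih} does not control at all — it can meet $\p O$ or stick past it, and on the (generally noncompact) complement of $O$ there is no uniform positive lower bound for $\max\{\check{\tilde t},-\hat{\tilde t}\}$, so no single $\alpha$ works and your proposed compactness argument has nothing compact to run on ($\overline O$ need not be compact, and the hypothesis constrains only the strict-inequality set). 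The mechanism is already visible in $\mathbb{R}^2$ with $\check{\tilde t}=x$, $\hat{\tilde t}=-x+e^{-y^2}$, $O=\{x<0\}$: the strict intersection equals $\{x<0\}\subset O$, yet $\{\check{\tilde t}<\alpha\}\cap\{\hat{\tilde t}>-\alpha\}=\{x<\alpha\}\not\subset O$ for every $\alpha>0$. The paper closes this with a one-sided approximation trick that your proposal is missing: apply Theorem \ref{moz} to the \emph{shifted} functions $\check{\tilde t}+\vert\check{\tilde t}(p)\vert/2$ and $\hat{\tilde t}-\hat{\tilde t}(p)/2$ with errors at most $\vert\check{\tilde t}(p)\vert/2$ and $\hat{\tilde t}(p)/2$ respectively, so that the smooth outputs satisfy $\check t\ge\check{\tilde t}$ and $\hat t\le\hat{\tilde t}$ pointwise while still $\check t(p)<0<\hat t(p)$; then $\{\check t<0\}\cap\{\hat t>0\}\subset\{\check{\tilde t}<0\}\cap\{\hat{\tilde t}>0\}\subset O$ holds automatically, with no compactness needed. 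Your argument requires this (or an equivalent monotone-approximation device) to go through.
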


We recall that the strictly $\mathscr{F}$-steep functions are temporal.

\begin{proof}
By Th.\ \ref{mih} there is  $(M,\mathscr{F}')$ a stable locally Lipschitz proper Lorentz-Finsler space such that $\mathscr{F}'>\mathscr{F}$ and a continuous function  $\tilde t\colon M\to \mathbb{R}$ which is strictly $\mathscr{F}'$-anti-Lipschitz, namely such that for every continuous $C'$-causal curve $\sigma\colon [0,1] \to M$
\[
\tilde t(\sigma(1))-\tilde t(\sigma(0))>\int_\sigma \mathscr{F}'(\dot \sigma) \dd t.
\]
Let $\gamma$ be a Riemannian metric whose balls contain $\mathscr{I}'\cap C$ where $\mathscr{I}'=\mathscr{F}'^{-1}(1)$ is the indicatrix of $(M, \mathscr{F}')$. Moreover, let us choose the unit balls of $\gamma$ so large, or equivalently $\gamma$ so small, that $\mathscr{F}'(v) - \frac{1}{2}\Vert v \Vert_\gamma> \mathscr{F}(v)$ for $v \in C$ (on $T_pM$ it holds in a compact transverse section of $C_p$ and hence everywhere on $C_p$ by positive homogeneity).
Let $h=\gamma/4$, by Th.\ \ref{moz} we can find a smooth function $t$ such that $\mathscr{F}'(v) - \Vert v \Vert_h \le \dd {t} (v)$ for every $v\in C$,
thus $\mathscr{F}(v)<\dd {t} (v)$, which means that $t$ is strictly $\mathscr{F}$-steep.

The penultimate statement follows from the penultimate statement of Th.\ \ref{mih}, which guarantees that $\tilde t$ above can be chosen so that $\tilde t(p)-\tilde t(q)>3\epsilon >0$. Then by Th.\ \ref{moz} and the previous point we can find a smooth strictly $\mathscr{F}$-steep  function $t$ such that $\vert t-\tilde t\vert<\epsilon$, so that $ t(p)- t(q)>\epsilon >0$.

The final statement follows from the final statement of Th.\ \ref{mih}, which guarantees that  we can find continuous strictly $\mathscr{F}'$-anti-Lipschitz functions $\check{\tilde{ t}}$, $\hat{\tilde{ t}}$ such that $p \in [\{q\colon \check{ \tilde{t}}(q)<0\}\cap \{q\colon \hat{\tilde{t}}(q)>0\}] \subset O$.  Then by Th.\ \ref{moz} and the first paragraph of this proof we can find  smooth strictly $\mathscr{F}$-steep  functions $\check t$ and $\hat t$ which approximate $\check{\tilde{ t}}+\vert\check{\tilde{ t}}(p)\vert /2$ and $\hat{\tilde{ t}}-\hat{\tilde{ t}}(p)/2$ respectively, with an error at most $\vert \check{\tilde{ t}}(p)\vert/2$ (resp.\ $\hat{\tilde{ t}}(p)/2$) so that $\check{\tilde{ t}} \le \check t$ and $\hat t\le \hat{\tilde{ t}}$. Thus $p \in [\{q\colon \check{{t}}(q)<0\}\cap \{q\colon \hat{{t}}(q)>0\}] \subset [\{q\colon \check{ \tilde{t}}(q)<0\}\cap \{q\colon \hat{\tilde{t}}(q)>0\}] \subset O$
\end{proof}

\begin{theorem} \label{vkg}
 Every  stably causal closed cone structure $(M,C)$ admits a smooth temporal function $t$. Moreover, if $(p,q)\notin J_S$ we can find $t$ such that $ t(p)>t(q)$.
 Finally, for every $p\in M$ and  every open neighborhood $O\ni p$ we can find smooth temporal functions $\check t, \hat t$  such that $[\{q\colon \check t(q)<0\}\cap \{q\colon \hat t(q)>0\}] \subset O$.
\end{theorem}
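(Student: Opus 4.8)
Theorem \ref{vkg} is the closed-cone-structure analogue of Theorem \ref{vkf}, obtained by the trivial special case $\mathscr{F}=0$. So my plan is to reduce the statement about a stably causal closed cone structure $(M,C)$ to the already-proven statement about stable closed Lorentz--Finsler spaces.

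The first step is to observe that a stably causal closed cone structure can be promoted to a stable closed Lorentz--Finsler space at essentially no cost. Indeed, given $(M,C)$ stably causal, take $\mathscr{F}\equiv 0$ on $C$. Then $(M,\mathscr{F})$ is a closed Lorentz--Finsler space by Proposition \ref{hll} (upper semi-continuity of $C$ and of the identically-zero $\mathscr{F}$), and it is stably causal in the Lorentz--Finsler sense because its causal cone structure is just $(M,C)$. Moreover, as noted in the remark immediately following Theorem \ref{maa}, one can always take $\mathscr{F}=0$ to obtain a stable closed Lorentz--Finsler space: with $\mathscr{F}=0$ the stable distance $D$ vanishes identically (all lengths are zero), hence $D<\infty$, which is precisely the stability condition in the definition of stable closed Lorentz--Finsler space. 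Thus $(M,0)$ is stable.

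The second step is simply to apply Theorem \ref{vkf} to $(M,0)$ and read off the three conclusions. A strictly $\mathscr{F}$-steep function with $\mathscr{F}=0$ is a $C^1$ function $t$ with $\dd t(y)>0=\mathscr{F}(y)$ for every $y\in C$, i.e.\ $\dd t$ is positive on the causal cone $C$; this is exactly the definition of a smooth temporal function, cf.\ item (d) in Section \ref{nug}. Hence the first conclusion of Theorem \ref{vkf} yields a smooth temporal function $t$. The second conclusion of Theorem \ref{vkf} gives, for $(p,q)\notin J_S$, a strictly $0$-steep (hence temporal) function $t$ with $t(p)>t(q)$. The third conclusion gives, for every $p\in M$ and open $O\ni p$, smooth strictly $0$-steep (temporal) functions $\check t,\hat t$ with $p\in[\{q\colon\check t(q)<0\}\cap\{q\colon\hat t(q)>0\}]\subset O$, which is the desired statement.

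There is no genuine obstacle here, since all the analytic work (the averaging/product trick of Theorem \ref{mih} and the smoothing of Theorem \ref{moz}) has already been carried out in the Lorentz--Finsler generality that contains the cone-structure case. The only point requiring a line of care is verifying that the specialization $\mathscr{F}=0$ really lands inside the hypotheses of Theorem \ref{vkf}: namely that $(M,0)$ is a \emph{stable} closed Lorentz--Finsler space and not merely a closed one. This follows, as explained above, because $D\equiv 0$ when $\mathscr{F}\equiv 0$, so finiteness of $D$ is automatic once stable causality of $(M,C)$ is assumed. Once that verification is made, the three statements transcribe verbatim under the dictionary ``strictly $0$-steep $=$ temporal.''
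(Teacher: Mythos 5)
Your reduction is exactly the paper's: its proof of this theorem reads, in full, ``Set $\mathscr{F}=0$, then by Th.~\ref{maa} $(M,\mathscr{F})$ is stable and the result follows from Th.~\ref{vkf} by replacing `strictly $\mathscr{F}$-steep' with `temporal' as they are equivalent for $\mathscr{F}=0$.'' The dictionary you set up (strictly $0$-steep $=$ temporal) and the transcription of the three conclusions of Theorem~\ref{vkf} are fine. The problem is the step you yourself single out as ``the only point requiring a line of care'': your justification that $(M,0)$ is \emph{stable} is wrong. You argue that ``with $\mathscr{F}=0$ the stable distance $D$ vanishes identically (all lengths are zero), hence $D<\infty$.'' This confuses $d$ with $D$. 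What vanishes identically when $\mathscr{F}=0$ is the Lorentz--Finsler distance $d$. The stable distance is $D(p,q)=\inf_{\mathscr{F}'>\mathscr{F}} d'(p,q)$, where the infimum runs over locally Lipschitz \emph{proper} Lorentz--Finsler spaces with $C'{}^\times>C^\times$; since here $C^\times=C\times\{0\}$, every competitor $\mathscr{F}'$ is forced to be strictly positive on $C$, so the $d'$ are by no means zero, and none of them need be finite a priori. In fact, by rescaling any admissible $\mathscr{F}'$ with a small positive constant one sees that for $\mathscr{F}=0$ the function $D$ takes only the values $0$ and $+\infty$, so the identity $D\equiv 0$ that you assert is \emph{equivalent} to the stability you are trying to establish: as written, the argument is circular.

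The missing content is precisely Theorem~\ref{maa} (stable representatives in the stably causal conformal class): one must exhibit \emph{some} locally Lipschitz proper $\mathscr{F}'>0$, over a suitable $\hat C>C$, whose distance is finite. The paper does this by covering $M$ with compact shells, covering each shell by finitely many $\hat C$-causally convex non-imprisoning neighborhoods, and choosing a conformal factor $\alpha$ decaying fast enough (smaller than $1/(N_m L_m 2^m)$ on the $m$-th shell) that every continuous $\hat C$-causal curve has $\alpha\hat{\mathscr{F}}$-length at most $2$; this is a genuine argument, not an automatic consequence of stable causality. Your proof is repaired by deleting the ``automatic'' claim and invoking Theorem~\ref{maa} (or the remark following it, which you cite but then override with the incorrect justification) at exactly this point --- which is what the paper does.
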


\begin{proof}
Set $\mathscr{F}=0$, then by Th.\ \ref{maa} $(M,\mathscr{F})$ is stable and the result follows from Th.\ \ref{vkf} by replacing ``strictly $\mathscr{F}$-steep'' with ``temporal'' as they are equivalent for $\mathscr{F}=0$.
\end{proof}

We have a similar result for globally hyperbolic spacetimes.

\begin{theorem} \label{xbh}
Let $(M,\mathscr{F})$ be a globally hyperbolic closed Lorentz-Finsler space and let $h$ be a complete Riemannian metric on $M$. Then
there is a smooth Cauchy $h$-steep strictly $\mathscr{F}$-steep (hence temporal) function $t$. Moreover, if $(p,q)\notin J$ we can find $t$ such that $ t(p)>t(q)$.
 Finally, for every $p\in M$ and  every open neighborhood $O\ni p$ we can find smooth Cauchy $h$-steep strictly $\mathscr{F}$-steep  functions  $\check t, \hat t$  such that $[\{q\colon \check t(q)<0\}\cap \{q\colon \hat t(q)>0\}] \subset O$.
\end{theorem}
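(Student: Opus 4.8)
The plan is to deduce everything from Theorem \ref{vkf} by passing to a cleverly enlarged Lorentz-Finsler function on the \emph{same} cone structure $C$. The point is that the three desired properties are of two kinds: strict $\mathscr{F}$-steepness involves $\mathscr{F}$, while the Cauchy and $h$-steep properties involve only $C$ and the prescribed complete metric $h$ (recall from Sec.\ \ref{nug}(f) that an $h$-steep function is automatically Cauchy when $h$ is complete, and that strict steepness implies temporality). Since global hyperbolicity is a property of $C$ alone, I would replace $\mathscr{F}$ by a Finsler function $\tilde{\mathscr{F}}$ on $C$ with $\tilde{\mathscr{F}}\ge \max\{\mathscr{F}, \Vert\cdot\Vert_h\}$; a strictly $\tilde{\mathscr{F}}$-steep function $t$ then satisfies $\dd t(y)>\tilde{\mathscr{F}}(y)\ge \Vert y\Vert_h$ and $\dd t(y)>\mathscr{F}(y)$ for every $y\in C$, hence is simultaneously temporal, $h$-steep (so Cauchy), and strictly $\mathscr{F}$-steep.

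The delicate point is that $\max\{\mathscr{F}, \Vert\cdot\Vert_h\}$ is in general not fiberwise concave (the norm is convex), so it is not itself a Lorentz-Finsler function. I would sidestep this by dominating with a \emph{fiberwise-linear} majorant, which is trivially concave. Concretely, take a locally Lipschitz $1$-form $\omega$ positive on $C$ (Prop.\ \ref{ohg}) and set $\tilde{\mathscr{F}}=\alpha\,\omega$ for a suitable positive function $\alpha\colon M\to(0,\infty)$. On each fiber the $h$-unit section $\Sigma_x=\{y\in C_x\colon \Vert y\Vert_h=1\}$ is compact, $\omega_x$ has a positive minimum there and $\mathscr{F}(x,\cdot)$, being upper semi-continuous, attains a finite maximum, so the pointwise requirement $\alpha(x)\ge \sup_{y\in\Sigma_x}\max\{\mathscr{F}(x,y),1\}/\omega_x(y)=:g(x)$ is finite. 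Using upper semi-continuity of $\mathscr{F}$ and of the cone map $x\mapsto C_x$ one checks $g$ is locally bounded, so a smooth $\alpha\ge g$ exists by paracompactness; then $\tilde{\mathscr{F}}\ge\max\{\mathscr{F},\Vert\cdot\Vert_h\}$ on $C$. Being a product of a smooth function and a locally Lipschitz form, $\tilde{\mathscr{F}}$ is continuous, and together with the upper semi-continuous $C$ it defines a closed Lorentz-Finsler space (Prop.\ \ref{hll}); since its cone structure is the given globally hyperbolic $C$, it is stable by Theorem \ref{mab}.

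It then remains to invoke Theorem \ref{vkf} for $(M,\tilde{\mathscr{F}})$: it yields a smooth strictly $\tilde{\mathscr{F}}$-steep function $t$, which by the inequalities above is the desired smooth Cauchy $h$-steep strictly $\mathscr{F}$-steep (hence temporal) function. The two separation clauses transfer verbatim: since $C$ is globally hyperbolic we have $J=J_S$ (Theorem \ref{mom}), and $J_S$ depends only on $C$, so for $(p,q)\notin J=J_S$ the ``moreover'' clause of Theorem \ref{vkf} produces such a $t$ with $t(p)>t(q)$, while its last clause produces $\check t,\hat t$ with $p\in[\{q\colon \check t(q)<0\}\cap\{q\colon \hat t(q)>0\}]\subset O$; all of these are $h$-steep and strictly $\mathscr{F}$-steep by the same domination.

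I expect the main obstacle to be purely technical: verifying that the dominating weight $g$ is locally bounded (hence admits a smooth majorant $\alpha$) from the upper semi-continuity of $\mathscr{F}$ and of the distribution $x\mapsto C_x$, uniformly over the varying compact sections $\Sigma_x$. The conceptual crux --- that one must dominate by a fiberwise-concave, i.e.\ here linear, function rather than by the naive non-concave maximum --- is what makes the reduction to Theorem \ref{vkf} go through, and all the genuinely hard analysis (the product trick of Sec.\ \ref{mis} and the derivative-controlled smoothing of Theorem \ref{moz}) is already packaged inside that theorem.
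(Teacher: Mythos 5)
Your proposal is correct, but it takes a genuinely different route from the paper. The paper's own proof never reduces to Theorem \ref{vkf}: it picks a globally hyperbolic locally Lipschitz proper Lorentz-Finsler space $(M,\mathscr{F}')$ with \emph{wider} cones $C'>C$ (available by the stability of global hyperbolicity, Th.\ \ref{mom}), chosen quantitatively so that $\mathscr{F}'>2\mathscr{F}$ and $\mathscr{F}'\ge 2\Vert\cdot\Vert_h$ on $C$; it then runs the machinery directly, applying Th.\ \ref{mih} to get a continuous strictly $\mathscr{F}'$-anti-Lipschitz function and Th.\ \ref{moz} to smooth it with error term $\Vert\cdot\Vert_h$, so that $\dd t(v)\ge \mathscr{F}'(v)-\Vert v\Vert_h\ge \mathscr{F}'(v)/2$, and the factor $2$ simultaneously absorbs the smoothing error into strict $\mathscr{F}$-steepness and $h$-steepness; the two separation clauses come from the corresponding clauses of Th.\ \ref{mih} together with $\epsilon$-approximation. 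You instead stay on the \emph{same} cone $C$, dominate $\max\{\mathscr{F},\Vert\cdot\Vert_h\}$ by a fiberwise-linear (hence concave) majorant $\tilde{\mathscr{F}}=\alpha\,\omega$ built from Prop.\ \ref{ohg}, observe that $(M,\tilde{\mathscr{F}})$ is a closed Lorentz-Finsler space (Prop.\ \ref{hll}) which is stable by Th.\ \ref{mab} since global hyperbolicity is a property of $C$ alone, and then invoke Th.\ \ref{vkf} as a black box; strict $\tilde{\mathscr{F}}$-steepness then yields all three properties at once, and the clauses transfer because $J=J_S$ under global hyperbolicity (Th.\ \ref{mom}) and $J_S$ depends only on $C$. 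Your key observations --- that the naive maximum fails fiberwise concavity and that a linear majorant repairs this, and that \emph{strict} steepness loses no margin so no factor-$2$ bookkeeping is needed --- are exactly what makes the reduction sound; the local boundedness of your weight $g$ follows as you say from compactness of $C\cap\{\Vert y\Vert_h=1\}$ over compact base sets, upper semi-continuity of $\mathscr{F}$, and positivity of $\omega$. What each approach buys: yours is shorter and more modular, exhibiting Th.\ \ref{xbh} as an essentially formal corollary of Th.\ \ref{vkf}; the paper's direct proof keeps the intermediate anti-Lipschitz function and the explicit margins visible, which is what justifies its remark that ``one can control the lower bound on the steepness of the temporal function'' --- though in fact your domination trick demonstrates the same controllability, arguably more transparently.
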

The proof clarifies that in general one can control the lower bound on the steepness of the temporal function.

\begin{proof}
Let $(M,\mathscr{F}')$ be any locally Lipschitz proper Lorentz-Finsler space such that $C'>C$ which is globally hyperbolic. Here $\mathscr{F}'$ is chosen sufficiently large so that $\mathscr{F}'> 2 \mathscr{F}$ on $C$, and $\mathscr{I}'\cap C$,  with $\mathscr{I}'=\mathscr{F}'{}^{-1}(1)$, is contained in the unit ball of $4h$. As a consequence, for every $v\in C$, $2 \Vert v \Vert_h \le  \mathscr{F}'(v)$. By Theorem \ref{mab} $(M,\mathscr{F}')$ is stable, and by Theorem \ref{mih} there is a continuous function  $ \tilde t\colon M\to \mathbb{R}$ which is strictly $\mathscr{F}'$-anti-Lipschitz, namely such that for every continuous $C'$-causal curve $\sigma\colon [0,1] \to M$
\[
\tilde t(\sigma(1))-\tilde t(\sigma(0))>\int_\sigma \mathscr{F}'(\dot \sigma) \dd t.
\]
By Theorem \ref{moz} we can find a smooth function $t$ such that for every $v\in C$
\begin{equation}
\mathscr{F}'(v)- \Vert v\Vert_h \le \dd  t(v)
\end{equation}
but $\mathscr{F}'(v)- \Vert v\Vert_h \ge \mathscr{F}'(v)/2 > \mathscr{F}(v)$ and $\mathscr{F}'(v)- \Vert v\Vert_h \ge \mathscr{F}'(v)/2 \ge \Vert v \Vert_h$. The Cauchy property follows from the last inequality.

By Th.\ \ref{mom} in a globally hyperbolic closed cone structure $J_S=J$. The penultimate statement follows from the penultimate statement of Th.\ \ref{mih}, which guarantees that $\tilde t$ above can be chosen so that $\tilde t(p)-\tilde t(q)>3\epsilon >0$. Then by Th.\ \ref{moz} and the previous point we can find a smooth Cauchy $h$-steep strictly $\mathscr{F}$-steep  function $t$ such that $\vert t-\tilde t\vert<\epsilon$, so that $ t(p)- t(q)>\epsilon >0$.

The final statement follows from the final statement of Th.\ \ref{mih}, which guarantees that  we can find continuous strictly $\mathscr{F}'$-anti-Lipschitz functions $\check{\tilde{ t}}$, $\hat{\tilde{ t}}$ such that $p \in [\{q\colon \check{ \tilde{t}}(q)<0\}\cap \{q\colon \hat{\tilde{t}}(q)>0\}] \subset O$.  Then by Th.\ \ref{moz} and the first paragraph of this proof we can find  smooth Cauchy $h$-steep strictly $\mathscr{F}$-steep  functions $\check t$ and $\hat t$ which approximate $\check{\tilde{ t}}+\vert\check{\tilde{ t}}(p)\vert /2$ and $\hat{\tilde{ t}}-\hat{\tilde{ t}}(p)/2$ respectively, with an error at most $\vert \check{\tilde{ t}}(p)\vert/2$ (resp.\ $\hat{\tilde{ t}}(p)/2$) so that $\check{\tilde{ t}} \le \check t$ and $\hat t\le \hat{\tilde{ t}}$. Thus $p \in [\{q\colon \check{{t}}(q)<0\}\cap \{q\colon \hat{{t}}(q)>0\}] \subset [\{q\colon \check{ \tilde{t}}(q)<0\}\cap \{q\colon \hat{\tilde{t}}(q)>0\}] \subset O$.
\end{proof}


\begin{corollary} \label{nin}
Let $(M,C)$ be a globally hyperbolic proper cone structure. Then $M$ is smoothly diffeomorphic to $S\times \mathbb{R}$, where  $S$ is smoothly diffeomorphic to any Cauchy hypersurface, the projection to the first factor has smooth timelike curves as fibers, and the projection to the second factor is function $t$ of Th.\ \ref{xbh}.
\end{corollary}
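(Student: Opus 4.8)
The plan is to produce the splitting by flowing a suitably normalized timelike vector field across the level sets of the function $t$ supplied by Theorem \ref{xbh}. First I would fix a complete Riemannian metric $h$ on $M$ and apply Theorem \ref{xbh} with $\mathscr{F}=0$ (legitimate since, setting $\mathscr{F}=0$, $(M,\mathscr{F})$ is a globally hyperbolic closed Lorentz-Finsler space whose base cone structure is the given globally hyperbolic $(M,C)$, exactly as in the proof of Theorem \ref{vkg}). This yields a smooth Cauchy temporal function $t\colon M\to\mathbb{R}$ with $\dd t(y)\ge\Vert y\Vert_h$ for all $y\in C$; in particular $\dd t$ is positive on $C$, hence never vanishes, and $t$ is a time function whose level sets are acausal. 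Setting $S:=t^{-1}(0)$, this is a nonempty smooth embedded hypersurface (nonempty because the Cauchy property forces $t$ to be surjective onto $\mathbb{R}$), and it is a Cauchy hypersurface: $t$ is strictly increasing on continuous causal curves, so $S$ is acausal, and every inextendible causal curve has $t$-image $\mathbb{R}$ and hence meets $S$, giving $D(S)=M$.

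Next I would invoke the existence of a smooth $C$-timelike vector field $V$ on the proper cone structure $(M,C)$ (the fact used in the last paragraph of the proof of Theorem \ref{ger}) and renormalize it to $\widetilde V:=V/\dd t(V)$, which is well defined and smooth because $\dd t(V)>0$, still timelike (the cones are invariant under multiplication by positive functions), and satisfies $\dd t(\widetilde V)\equiv1$. Along any integral curve $\gamma$ of $\widetilde V$ one then has $\tfrac{\dd}{\dd s}t(\gamma(s))=1$, so $t(\gamma(s))=t(\gamma(0))+s$. The key step is to deduce that $\widetilde V$ is complete: a maximal integral curve that is incomplete leaves every compact set in finite parameter time, hence is an inextendible timelike curve, and then the Cauchy property of $t$ forces $t\to\pm\infty$ along it, contradicting the fact that $t$ remains bounded over any finite parameter interval. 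I expect this completeness argument to be the main (and essentially the only nontrivial) obstacle; the remaining steps are routine differential topology.

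Finally, with $\phi_s$ the complete flow of $\widetilde V$, I would define $\Phi\colon S\times\mathbb{R}\to M$ by $\Phi(p,s)=\phi_s(p)$. Because $t(\Phi(p,s))=t(p)+s=s$, injectivity is immediate (the $\mathbb{R}$-coordinate is recovered as $t$, and $\phi_s$ is a diffeomorphism of $M$), while surjectivity follows from completeness together with $t(\gamma(s))=t(\gamma(0))+s$ being onto $\mathbb{R}$ along each integral curve: any $q\in M$ equals $\phi_{t(q)}\big(\phi_{-t(q)}(q)\big)$ with $\phi_{-t(q)}(q)\in S$. Since $\widetilde V$ is transverse to $S$ (because $\dd t(\widetilde V)=1\ne0$), $\Phi$ is a local diffeomorphism, and its explicit smooth inverse $q\mapsto\big(\phi_{-t(q)}(q),t(q)\big)$ exhibits it as a global diffeomorphism. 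Under $\Phi$ the projection to $\mathbb{R}$ is precisely $t$, and the fibers of the projection to $S$ are the integral curves $s\mapsto\phi_s(p)$ of $\widetilde V$, which are smooth timelike curves (each is a $C^1$ solution of $\dot x\in\mathrm{Int}\,C_x$). It remains only to note that $S$, being a smooth Cauchy hypersurface, is smoothly diffeomorphic to any Cauchy hypersurface, by the proposition stating that for a proper cone structure any two Cauchy hypersurfaces are diffeomorphic; this completes the identification $M\cong S\times\mathbb{R}$ with all the asserted properties.
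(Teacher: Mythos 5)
Your proof is correct and takes essentially the same route as the paper: set $S=t^{-1}(0)$ for the function $t$ of Th.~\ref{xbh}, and use the flow of a smooth timelike vector field (available since the cone structure is proper) to exhibit the diffeomorphism $S\times\mathbb{R}\to M$. You are in fact more careful than the paper's own terse proof, which leaves implicit both the normalization $\dd t(\widetilde V)=1$ needed for the second projection to literally equal $t$ and the completeness of the flow; your derivation of completeness from the Cauchy property of $t$ is exactly the right way to fill in those details.
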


\begin{proof}
Let $t$ be the function constructed in the previous theorem and let $S_0=t^{-1}(0)$.  Since $t$ is smooth and temporal, $S_0$ can be endowed with a smooth structure which makes the immersion smooth. Let $V$ be a smooth timelike vector field. The integral curves of $V$ intersect $S_0$ only once and provide a smooth projection $\pi\colon M\to S_0$. Let $\varphi_a\colon M\to M$ be the 1-parameter family of smooth diffeomorphisms generated by $V$, then the map $ S_0\times \mathbb{R} \to M$, given by $(s,t)\mapsto \varphi_t(s)$ is a smooth diffeomorphism.
\end{proof}

\subsection{Equivalence between $K$-causality and stable causality} \label{las}

Sorkin and Woolgar \cite{sorkin96} introduced the $K$ relation as the smallest closed and transitive relation containing the causal relation $J$. Similar concepts had been introduced in dynamical system theory where one spoke of {\em Auslander's prolongations} \cite{auslander64}.
The antisymmetry of $K$ is called $K$-{\em causality} in analogy with stable causality which corresponds to the antisymmetry of the Seifert relation $J_S$, see Sec.\ \ref{fir}. The relations $K$ and $J_S$ are both closed and transitive so it is natural to ask if they coincide \cite{sorkin96}.
The conjecture due to R. Low was indeed proved in Lorentzian geometry where we gave two proofs, an entirely topological one \cite{minguzzi08b}, and a much simpler one  \cite{minguzzi09c} which made use of smoothability results for time functions and Auslander-Levin's theorem \cite{auslander64,levin83}. In this section we give a new proof which does not use smoothability results for time functions   but contains some elements of the proof in \cite{minguzzi09c}. As a result the novel proof applies also to closed cone structures.

To start with, we recall that a utility function $f$ is an isotone function such that $x\le y$ and $y\nleq x$ implies $f(x)<f(y)$, then the  Auslander-Levin's theorem is
 \begin{theorem} \label{lev} (Auslander-Levin)
Let $X$ be a second countable locally compact Hausdorff space, and
$R$ a closed preorder on $X$, then there exists a continuous utility
function. Moreover, denoting with $\mathcal{U}$  the set of
continuous utilities
 we have that the
preorder $R$ can be recovered from the continuous utility functions,
namely there is a {\em multi-utility representation}
\begin{equation}
(x,y) \in R \Leftrightarrow \forall u \in \mathcal{U}, \ u(x)\le
u(y).
\end{equation}
\end{theorem}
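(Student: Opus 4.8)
The plan is to prove the Auslander--Levin theorem in two stages: first the existence of a single continuous utility function for a closed preorder $R$ on a second countable locally compact Hausdorff space $X$, and then the multi-utility representation, which is an easy consequence of the existence statement applied after a suitable perturbation of the problem. I will not invoke any smoothing machinery, since $X$ is merely a topological space; the whole argument is topological and order-theoretic.

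For the \emph{existence} statement I would follow the standard route through continuous increasing separation. Since $X$ is second countable, locally compact and Hausdorff, it is normal and admits a countable base; fixing a metric $d$ compatible with the topology (it is metrizable by Urysohn) is convenient. The key tool is a Nachbin-type lemma: for a closed preorder $R$ on such a space, any two points $x,y$ with $(x,y)\notin R$ can be separated by a continuous isotone function $g$ with $g(x)<g(y)$ respecting the order, i.e. $(a,b)\in R\Rightarrow g(a)\le g(b)$. First I would establish this separation lemma: the complement of $R$ is open in $X\times X$, so around $(x,y)\notin R$ one finds a product neighborhood disjoint from $R$; using normality together with the closedness and transitivity of $R$ one builds, by a dyadic Urysohn-style interpolation indexed by increasing closed sets $R^+(\cdot)$ and $R^-(\cdot)$ (which are closed by Theorem~\ref{xix}), a continuous isotone $g$ that strictly separates the two points. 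I would then enumerate a countable family of pairs $\{(x_n,y_n)\}$ detecting all failures of the relation $R$ and all strict order increments --- concretely, pairs with $(x_n,y_n)\notin R$ drawn from a countable dense set, arranged so that whenever $x\le y$ and $y\nleq x$ there is some $n$ with $x_n,y_n$ close to $x,y$ and $g_n(x)<g_n(y)$. Taking a normalized convergent sum $f=\sum_n 2^{-n} g_n/\|g_n\|_\infty$ of the associated isotone separators yields a single continuous isotone function; the care in choosing the family guarantees that $f$ is in fact a \emph{utility}, i.e. strict in the order wherever $x\le y$ but $y\nleq x$.

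For the \emph{multi-utility representation} the nontrivial inclusion is: if $u(x)\le u(y)$ for every continuous utility $u\in\mathcal U$, then $(x,y)\in R$. I would prove the contrapositive. Suppose $(x,y)\notin R$. By the separation lemma above there is a continuous isotone $g$ with $g(x)<g(y)$; but $g$ need only be isotone, not a utility. To upgrade it, I would add to $g$ a small multiple $\epsilon f$ of the global utility $f$ constructed in the first stage, choosing $\epsilon>0$ small enough that the strict inequality $g(x)<g(y)$ is preserved (possible since the gap $g(y)-g(x)$ is a fixed positive number and $f$ is bounded). The function $u=g+\epsilon f$ is then isotone and strict on nontrivial order increments, hence a continuous utility, and it satisfies $u(x)<u(y)$... wait: this only gives one direction. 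The correct construction is the symmetric one: to witness $(x,y)\notin R$ I need a utility $u$ with $u(x)>u(y)$ or at least $u(x)\nleq u(y)$. Since $(x,y)\notin R$, either $x$ and $y$ are $R$-incomparable or $(y,x)\in R$ with $(x,y)\notin R$. In the second case any utility $u$ gives $u(y)\le u(x)$ and, by strictness, $u(y)<u(x)$, so $u(x)\le u(y)$ fails, as required. In the incomparable case, the separator $g$ can be built with $g(x)>g(y)$ (run the separation lemma with the roles exchanged, which is legitimate precisely because $(x,y)\notin R$), and then $u=g+\epsilon f$ with $\epsilon$ small is a utility violating $u(x)\le u(y)$.

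The main obstacle is the \textbf{separation lemma} --- producing, for each pair outside the closed preorder, a continuous \emph{isotone} (order-preserving) function strictly separating them. Plain normality gives continuous separation of disjoint closed sets but does not respect the order; the point is to interpolate along a dyadically indexed increasing family of $R$-closed ``cuts'' so that the resulting Urysohn function is monotone along $R$. Here the transitivity of $R$ and the closedness of $R^{\pm}(K)$ for compact (and, via local compactness plus second countability, more general) sets, as in Theorem~\ref{xix}, are exactly what make the cuts nest properly. Once this lemma and the countable bookkeeping are in place, assembling the utility by a weighted series and deducing the multi-utility representation are routine.
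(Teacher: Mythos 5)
First, a point of comparison: the paper does \emph{not} prove this statement. It imports it as a known result, citing Auslander and Levin, so there is no internal proof to measure your attempt against; it must stand on its own. Your overall architecture is the standard one for modern proofs of this theorem -- a Nachbin-type separation lemma, a countable family of separating isotone functions extracted via second countability (hereditary Lindel\"of), a weighted series $\sum_n 2^{-n}g_n$ to produce one utility, and the perturbation $u=g+\epsilon f$ to get the multi-utility representation -- and your case analysis in the multi-utility step (the case $(y,x)\in R$ versus the incomparable case) is handled correctly. However, your separation lemma is stated with the wrong orientation, and as written it is false: take $X=\{x,y\}$ discrete with $R=\Delta\cup\{(y,x)\}$; then $(x,y)\notin R$, but every isotone $g$ satisfies $g(y)\le g(x)$, so no isotone $g$ can have $g(x)<g(y)$. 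The correct statement is: for $(x,y)\notin R$ there is a continuous isotone $g$ with $g(y)<g(x)$ (morally, $g=1$ on the increasing set $R^+(x)$ and $g=0$ on the decreasing set $R^-(y)$, which are disjoint by transitivity). Your later uses can all be repaired by consistently swapping the roles of the two points, so this is a recoverable bookkeeping error rather than a conceptual one.

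The genuine gap is the separation lemma itself, which you rightly call ``the main obstacle'' but then dispatch with ``normality together with the closedness and transitivity of $R$ \ldots dyadic Urysohn-style interpolation.'' This is exactly where the whole difficulty of the Auslander--Levin theorem lives, and the sketch does not work as described, for two reasons. (i) Theorem \ref{xix} gives closedness of $R^{\pm}(K)$ only for \emph{compact} $K$; a dyadic interpolation on a non-compact space needs increasing/decreasing hulls of \emph{closed} sets to be closed, which fails in general. (ii) Normality of $X$ together with closedness of $R$ does \emph{not} imply that a disjoint closed decreasing set and closed increasing set can be separated by a continuous isotone function: Nachbin's theorem guarantees this for compact Hausdorff spaces, and it can fail beyond that setting. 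The actual proof must use local compactness and $\sigma$-compactness essentially -- one exhausts $X$ by compact sets, applies Nachbin's compact-case separation on each piece, and performs a careful inductive gluing so that the resulting functions are globally isotone (this is the substance of Levin's argument, and of the ``normally preordered spaces'' route). Until that step is supplied, your proof has no foundation: everything else in the argument is routine once the separation lemma is granted, but the lemma is the theorem.
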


As a first step we prove that the continuous $J_S$-utilities are precisely the time functions.

\begin{lemma} \label{pag}
Let $(M,C)$ be a closed cone structure.
If there is a time function then $(M,C)$
is strongly causal.
\end{lemma}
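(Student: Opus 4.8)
The plan is to follow the same limit-curve scheme used in the proofs of Lemma \ref{mvh}, Theorem \ref{mwh} and Lemma \ref{rom}, replacing the causality ingredient specific to each of those results by the monotonicity of the time function $\tau$. The only preliminary observation I need is that a time function is isotone and remains isotone under closure of the causal relation: if $(p,q)\in J$ then either $p=q$ or a nonconstant continuous causal curve joins them, so in any case $\tau(p)\le\tau(q)$, whereas $\tau(p)<\tau(q)$ whenever the joining curve is nonconstant; taking limits of sequences in $J$ then shows $(p,q)\in\bar J\Rightarrow\tau(p)\le\tau(q)$. Thus $\tau$ increases weakly along $\bar J$ and strictly along any nontrivial continuous causal curve.

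Now I would suppose that $(M,C)$ fails to be strongly causal at some $x$. Exactly as in Proposition \ref{iiu} and the cited lemmas, this produces a relatively compact non-imprisoning neighborhood $U\ni x$, a coordinate ball $B$ with $\bar B\subset U$, and a sequence of continuous causal curves $\sigma_n$ from $x_n\to x$ to $z_n\to x$ which are not contained in $U$ and hence escape $\bar B$; in particular they do not contract to a point. Applying the limit curve theorem (Theorem \ref{main}) to the constant sequence $C_n=C$ yields two alternatives. In case (i) a subsequence converges to a continuous causal curve $\sigma$ with both endpoints $x$ which, since it meets $\partial B$, is nonconstant: then $\tau$ strictly increases from $\sigma(0)=x$ to $\sigma(a)=x$, i.e. $\tau(x)<\tau(x)$, absurd. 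In case (ii) I obtain a future inextendible continuous causal curve $\sigma^x$ issued from $x$ and a past inextendible continuous causal curve $\sigma^z$ ending at $x$, with $(\tilde x,\tilde z)\in\bar J$ for every $\tilde x\in\sigma^x$ and $\tilde z\in\sigma^z$.

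In case (ii) I would choose $\tilde x\in\sigma^x$ and $\tilde z\in\sigma^z$ distinct from $x$; this is possible because inextendible continuous causal curves are nonconstant, having infinite $h$-arc length (Corollary \ref{dox}). Travelling along the nontrivial causal segments of $\sigma^z$ and $\sigma^x$ through $x$ gives $\tau(\tilde z)<\tau(x)<\tau(\tilde x)$, while $(\tilde x,\tilde z)\in\bar J$ forces $\tau(\tilde x)\le\tau(\tilde z)$; chaining, $\tau(\tilde x)\le\tau(\tilde z)<\tau(x)<\tau(\tilde x)$, a contradiction. Hence $(M,C)$ is strongly causal. The one point that requires care—and the reason the bare relation $(x,c),(c,x)\in\bar J$ obtained in the proof of Theorem \ref{mwh} does not suffice here—is that $\bar J$ only yields the weak inequality $\le$ for $\tau$; the strict contradiction must come from combining that weak $\bar J$-inequality with the strict increase of $\tau$ along the genuine limit causal curves $\sigma^x,\sigma^z$ passing through the common point $x$. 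Verifying the orientation of the $\bar J$-pair produced by the limit curve theorem, so that it opposes the strict chain through $x$, is the main thing to check.
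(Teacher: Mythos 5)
Your proof is correct and follows essentially the same route as the paper's: both run the standard limit-curve argument for a strong-causality violation and derive the contradiction by playing the strict increase of the time function along genuine causal curves against the weak inequality inherited under limits (i.e.\ along $\bar J$). The only organizational difference is that the paper splits each $\sigma_n$ at its first exit point $c_n$ and last reentry point $d_n$ of the ball, obtaining $(x,c),(d,x)\in J$ and $(c,d)\in\bar J$ directly and concluding from $t(d)<t(x)<t(c)$ versus $t(c)\le t(d)$, whereas you invoke the dichotomy of Theorem \ref{main} on the whole curves and reach the same contradiction through the chain $\tau(\tilde x)\le\tau(\tilde z)<\tau(x)<\tau(\tilde x)$.
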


\begin{proof}
If $(M,C)$ is not strongly causal at $x$ then  there is  a non-imprisoning neighborhood $U \ni x $ as in Prop.\ \ref{iiu} and a sequence of
continuous $C'$-causal curves $\sigma_n$ of endpoints $x_n,z_n$, with $x_n\to
x$, $z_n \to x$, not entirely contained in $U$. Let $B$, $\bar B\subset U$ be a coordinate ball of $x$.
 Let $c_n\in \p {B}$ be the first
point at which $\sigma_n$ escapes $\bar B$, and let $d_n$ be the last
point at which $\sigma_n$ reenters $\bar B$. Since $\p {B}$ is compact
there are $c,d \in \p B$, and  a subsequence $\sigma_k$ such that
$c_k \to c$, $d_k \to d$. By the limit curve theorem $(x,c), (d,x)\in J$, while $(c,d)\in \bar J$, which is impossible since $t(d)<t(c)$ but the sequence $\sigma_k$ has starting points  close to $c$ and reaches points close to $d$.
\end{proof}

\begin{lemma} \label{soh}
Let $(M,C)$ be a closed cone structure which admits a time function $t$.
 Let $h$ be a complete Riemannian metric and let $\epsilon >0$. Let $R_\epsilon=\{(p,q)\colon d^h(p,q)<\epsilon\}$, let $U$ be an open relatively compact set, then there is a locally Lipschitz  proper cone structure $C'>C$ such that $\overline{J_{C'}( U)}\backslash R_\epsilon\subset \{(p,q)\in \bar U\times \bar U\colon t(p)< t(q)\}$.
\end{lemma}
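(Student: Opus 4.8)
The goal is to show that given a time function $t$, a complete Riemannian metric $h$, and $\epsilon>0$, for any relatively compact open set $U$ there is a locally Lipschitz proper cone structure $C'>C$ whose stable causal relation, restricted to $U$, stays below $t$ away from the diagonal-neighborhood $R_\epsilon$. The plan is to argue by contradiction combined with the limit curve theorem and the machinery of Theorem \ref{sqd}: if no admissible $C'$ works, then along a sequence of shrinking cone structures $C_k \downarrow C$ one produces limit continuous $C$-causal curves violating the strict monotonicity of $t$.

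\textbf{The approach.} Let $\{C_k\}$ be a countable family of locally Lipschitz proper cone structures as in Theorem \ref{sqd}, with $C<C_{k+1}<C_k$ and $C=\cap_k C_k$, so that $J_S=\cap_k \bar J_k$. Suppose for contradiction that for every admissible $C'>C$ (in particular for every $C_k$) the desired inclusion fails. Since $\bar J_{C_k}(U)\subset \overline{J_{C_k}(U)}$ and these relations shrink to $J_S(U)$, for each $k$ we can pick a pair $(p_k,q_k)\in \overline{J_{C_k}(U)}\setminus R_\epsilon$ with $t(p_k)\ge t(q_k)$. Because $\bar U$ is compact, after passing to a subsequence $p_k\to p$, $q_k\to q$ with $(p,q)\in \bar U\times \bar U$, $d^h(p,q)\ge \epsilon$ (so $p\ne q$), and by continuity of $t$, $t(p)\ge t(q)$.

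\textbf{Key steps, in order.} First I would promote each pair $(p_k,q_k)\in \overline{J_{C_k}(U)}$ to an actual connecting continuous $C_k$-causal curve: using that the closure of $J_{C_k}(U)$ is controlled by limit curves (Theorem \ref{main}), and shrinking $\epsilon$-perturbations of the endpoints, I would obtain continuous $C_k$-causal curves $\sigma_k$ from points near $p_k$ to points near $q_k$ contained in a fixed relatively compact neighborhood of $\bar U$. Since $p\ne q$ the curves do not contract to a point, so the limit curve theorem \ref{main} applies. I would then handle its two alternatives. In case (i) I obtain a continuous $C$-causal curve from $p$ to $q$; since $t$ is a time function it must strictly increase along this curve, giving $t(p)<t(q)$, contradicting $t(p)\ge t(q)$. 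In case (ii) I obtain a future-inextendible $C$-causal curve $x^p$ from $p$ and a past-inextendible one $x^q$ ending at $q$, with $(p',q')\in \cap_n\bar J_n=J_S$ for all $p'\in x^p$, $q'\in x^q$; picking $p'\ne p$ on $x^p$ and $q'\ne q$ on $x^q$, the time function strictly increases along each sub-curve, so $t(p)<t(p')\le t(q')<t(q)$, again contradicting $t(p)\ge t(q)$. Either way the monotonicity of $t$ is violated, so some $C_k=:C'$ must satisfy the claim.

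\textbf{The main obstacle.} The delicate point is not the limit-curve dichotomy itself but passing from the closure $\overline{J_{C_k}(U)}$ to genuine connecting curves while keeping everything inside a fixed compact set, so that Theorem \ref{main} is applicable uniformly across the diagonal subsequence. Here I would exploit that $U$ is relatively compact and that, by Proposition \ref{iiu} and the construction of the $C_k$, continuous $C_k$-causal curves joining points of $\bar U$ either stay in a controlled compact region or escape to infinity; the strict inequality $d^h(p,q)\ge\epsilon$ guards against the degenerate contracting case. A secondary subtlety is ensuring the chosen $C'$ is itself locally Lipschitz and proper with $C'>C$, which is automatic since each $C_k$ in the Theorem \ref{sqd} family has exactly these properties. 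Once the compactness bookkeeping is in place, the strict monotonicity of the time function does all the work of producing the contradiction.
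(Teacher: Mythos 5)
Your overall strategy---extract the sequence $C_k$ of Theorem \ref{sqd}, pick bad pairs $(p_k,q_k)\in\overline{J_{C_k}(U)}\setminus R_\epsilon$ with $t(p_k)\ge t(q_k)$, pass to a limit $(p,q)$ in the compact set $\bar U\times\bar U$, and run the limit curve theorem against the strict monotonicity of $t$---is a legitimate sequential version of the paper's argument (the paper instead applies the finite intersection property to the compact sets $\{t(p)\ge t(q)\}\cap[\overline{J_{C'}(U)}\setminus R_\epsilon]$ over the whole directed family $C'>C$, identifies $\cap_{C'>C}\overline{J_{C'}(U)}$ with $J_S(U)$ via Prop.\ \ref{paq}, and uses $J_S(U)\subset J(\bar U)$). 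However, your treatment of alternative (ii) of the limit curve theorem is circular, and this is a genuine gap.

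In case (ii) you write $t(p)<t(p')\le t(q')<t(q)$, where the middle inequality uses $(p',q')\in\cap_n\bar J_n=J_S$ together with the claim that $t$ is $J_S$-isotone. That claim is not available here: a time function is only known to increase along continuous $C$-causal curves, and a pair in $J_S$ need not be joined by any such curve. The implication ``$(p,q)\in J_S\Rightarrow t(p)\le t(q)$'' is precisely (part of) Theorem \ref{aob}, which the paper derives \emph{from} Lemma \ref{soh} via Lemma \ref{lmn}; invoking it inside the proof of Lemma \ref{soh} is circular. The correct disposal of case (ii) is to show it cannot occur at all: the curves realizing $J_{C_k}(U)$ are by definition contained in $U$, so the inextendible limit curves of case (ii) would lie in the compact set $\bar U$, i.e.\ be imprisoned there. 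This contradicts non-imprisonment of $(M,C)$, which holds because the existence of a time function implies strong causality (Lemma \ref{pag}), and strong causality implies non-total imprisonment (Theorem \ref{cau}). The paper opens its proof with exactly this observation (``$(M,C)$ is strongly causal so it is non-imprisoning''), whereas you never invoke it---your appeal to Prop.\ \ref{iiu} only gives \emph{local} non-imprisonment, which does not suffice. With case (ii) excluded by the imprisonment argument, only case (i) remains and your contradiction $t(p)<t(q)$ goes through.
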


\begin{proof}
We know that $(M,C)$ is strongly causal so it is non-imprisoning.
Suppose the inclusion does not hold, then for every locally Lipschitz  proper cone structure $C'>C$,  $ \{(p,q)\in \bar U\times \bar U\colon t(p)\ge t(q)\}\cap [\overline{J_{C'}( U)}\backslash R_\epsilon]\ne \emptyset$. But this set is compact being a closed subset of $\bar U\times \bar U$, and the family obtained for $C'>C$ has the finite intersection property, thus
\begin{align*}
\emptyset &\ne \cap_{C'>C}\{(p,q)\in \bar U\times \bar U\colon t(p)\ge t(q)\}\cap [\overline{J_{C'}( U)}\backslash R_\epsilon]\\
&=\{(p,q)\in \bar U\times \bar U\colon t(p)\ge t(q)\}\cap [\cap_{C'>C}\overline{J_{C'}( U)}\backslash R_\epsilon]\\
& \subset \{(p,q)\in \bar U\times \bar U\colon t(p)\ge t(q)\}\cap [J(\bar U)\backslash R_\epsilon]
\end{align*}
since by Prop.\ \ref{paq} $J_S(U)=\cap_{C'>C}\overline{J_{C'}(U)}$ and $J_S( U)\subset J(\bar{U})\cap ( U\times U)$, by Th.\ \ref{sqd}, the limit curve theorem \ref{main} and the non-imprisoning property of $\bar U$. Thus there are $p,q\in \bar U$ such that $d^h(p,q)\ge \epsilon$ connected by a continuous causal curve entirely contained in $\bar{U}$ with starting point $p$ and ending point $q$, and moreover $t(p)\ge t(q)$. But $t$ is a time function and $p\ne q$, thus $t(p)<t(q)$, a contradiction.
\end{proof}

The next important lemma states that provided we do not consider  points that are too close, a time function $t$ preserves its increasing property for a wider cone structure.

\begin{lemma} \label{lmn}
Let $(M,C)$ be a closed cone structure which admits a time function $t$.
 Let $h$ be a complete Riemannian metric and let $\epsilon>0$. Let $R_\epsilon=\{(p,q)\colon d^h(p,q)<\epsilon\}$,  then there is a locally Lipschitz proper cone structure $C'>C$ such that ${J_{C'}}\backslash R_\epsilon\subset \{(p,q)\colon t(p)< t(q)\}$.
\end{lemma}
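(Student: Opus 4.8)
The plan is to globalize the local statement of Lemma \ref{soh} by the same exhaustion-and-patching technique used in the proofs of Theorems \ref{nkk} and \ref{rem}. First I would fix the complete Riemannian metric $h$ and a base point $o\in M$, and set $K_n=\bar B(o,n\epsilon)$, so that consecutive level spheres $\p K_n$, $\p K_{n+1}$ are at $h$-distance $\epsilon$ and every shell $S_n:=K_{n+1}\setminus\mathrm{Int}\,K_n$ has $h$-width $\epsilon$. For each $n$ I would apply Lemma \ref{soh} to the relatively compact band $W_n:=\mathrm{Int}\,K_{n+2}\setminus K_{n-1}\supset\overline{S_n}$ to obtain a locally Lipschitz proper cone structure $C_n>C$ with $\overline{J_{C_n}(W_n)}\setminus R_\epsilon\subset\{(p,q)\in\overline{W_n}\times\overline{W_n}:t(p)<t(q)\}$; since narrowing a cone structure only shrinks its causal relation and hence preserves this inclusion, the $C_n$ can be taken nested, $C_{n+1}<C_n$, each still controlling its own band. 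Using a Lipschitz $1$-form $\omega$ positive on a common wider structure (Prop.\ \ref{ohg}) and the convex-combination construction of Prop.\ \ref{doo} relative to $P=\omega^{-1}(1)$, I would then patch the $C_n$ into a single locally Lipschitz proper cone structure $C'>C$ arranged so that $C'<C_n$ on the shell $S_n$ for every $n$, exactly as the global cone fields are assembled in Theorems \ref{ddo}, \ref{nkk} and \ref{rem}.

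The engine of the verification is the following shell-crossing observation. If $\gamma$ is a $C'$-causal curve and $\delta$ is a sub-arc of $\gamma$ contained in $\overline{S_n}$ whose endpoints lie on the two different bounding spheres of $S_n$, then $\delta\subset W_n$, the arc $\delta$ is $C_n$-causal because $C'<C_n$ on $S_n$, and its endpoints are at $h$-distance $\ge\epsilon$; hence Lemma \ref{soh} forces $t$ to increase strictly across $\delta$. The crucial point is that each full crossing is governed by the cone $C_n$ \emph{of that very shell}, so the fact that $C'$ is wider than $C_n$ on the inner shells never intervenes. Given a pair $(p,q)\in J_{C'}$ with $d^h(p,q)\ge\epsilon$ and a connecting $C'$-causal curve $\gamma$ (a compact, hence bounded, arc), I would run the incremental argument of Theorem \ref{rem}: following $\gamma$ forward, every complete traversal of a shell raises $t$, while the residual end-segments containing $p$ and $q$, together with the turning segments, are handled by a direct application of Lemma \ref{soh} inside the relevant band $W_n$, using the local non-imprisonment and bounded-length properties of the neighborhoods of Prop.\ \ref{iiu} (and the strong causality guaranteed by Lemma \ref{pag}) to keep the accounting finite. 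Assembling these increments yields $t(p)<t(q)$.

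The main obstacle is precisely this last bookkeeping. Unlike the purely topological closed-curve argument for causality in Theorem \ref{nkk}, here I must control the actual values of the rough, merely continuous function $t$ along a $C'$-causal curve that may repeatedly enter and leave a shell \emph{without} ever crossing it, so that many sub-$\epsilon$ excursions could a priori each lower $t$. I expect to defeat this either by invoking the bounded $h$-length of $C_n$-causal curves inside the compact bands $\overline{W_n}$ (so that within a fixed shell the curve makes only finitely many excursions, each of which, once $\epsilon$-long, is again governed by Lemma \ref{soh}), or, more cleanly, by approximating $\gamma$ through the limit curve theorem \ref{main} and Prop.\ \ref{sqd} by a curve along which the radial coordinate is better controlled; in either case the non-imprisonment furnished by Prop.\ \ref{iiu} is what excludes the degenerate imprisoned behaviour and keeps the number of relevant segments finite. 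Once the $R_\epsilon$-inclusion is established for this $C'$, the lemma follows.
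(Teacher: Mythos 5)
You have correctly identified the weak point of your own argument, but neither of your proposed repairs closes it, so the proposal as it stands has a genuine gap. The core problem is that $t$ is a time function for $C$ only: along a $C'$-causal curve, with $C'>C$, the function $t$ has no monotonicity whatsoever, and Lemma \ref{soh} constrains it only on pairs of points at $h$-distance $\ge\epsilon$ lying in a common controlled set, and then only by an \emph{unquantified} strict inequality. In your thin-shell decomposition a $C'$-causal curve need not cross any shell at all: it can spend arbitrarily long stretches making sub-$\epsilon$ excursions into and out of $S_n$ through the same bounding sphere, and on each such excursion $t$ may decrease by an arbitrary amount. Since the increments of $t$ over the genuine crossings carry no lower bound, ``finitely many excursions'' (your repair (a)) is useless --- a single uncontrolled excursion can wipe out all the unquantified gains --- and in fact even finiteness can fail, because an absolutely continuous curve of finite $h$-length can meet a sphere $\p K_n$ in infinitely many maximal sub-arcs. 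Your repair (b) is not viable either: the limit curve theorem \ref{main} produces limits of sequences of causal curves; it gives no license to replace a given $C'$-causal curve joining $p$ to $q$ by one whose radial coordinate is monotone or otherwise controlled. The same defect infects your treatment of the end-segments at $p$ and $q$, for which your width-$\epsilon$ shells leave no slack.

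The missing idea --- and the way the paper's proof is actually organized --- is to choose the covering regions and the marked points so that the marks \emph{partition the entire curve} into segments, each lying in a single controlled region with endpoints automatically far apart, so that uncontrolled pieces never arise. Concretely, the paper covers $M$ by \emph{thick overlapping} annuli $U_i=B(o,(i+7)\epsilon)\setminus\bar B(o,i\epsilon)$ of width $7\epsilon$, applies Lemma \ref{soh} to each $U_i$ to get $C_i$, and takes $C'<C_i$ on $\bar U_i$ (a locally finite set of conditions). Along $\sigma$, the mark $p_1=p$ is placed in an annulus $U_{i_1}$ in which it lies at depth $\ge 3\epsilon$; the next mark $p_2$ is the \emph{first escape point} of $\sigma$ from the sub-annulus $\bar B(o,(i_1+6)\epsilon)\setminus B(o,(i_1+1)\epsilon)$, so the whole segment between $p_1$ and $p_2$ stays in $U_{i_1}$ while $d^h(p_1,p_2)\ge 2\epsilon$; then one re-centers at $p_2$ in a fresh annulus and repeats. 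Every point of $M$ lies $\ge 3\epsilon$ deep in some $U_i$, so the scheme never stalls; the finite $h$-length of $\sigma$ together with the $\ge\epsilon$ separation of consecutive marks makes the partition finite; and $q$ is either appended as a final mark (if $d^h(p_m,q)\ge\epsilon$) or merged with $p_m$, the built-in $2\epsilon$ margin giving $d^h(p_{m-1},q)\ge\epsilon$ by the triangle inequality. With this ``deep point / first escape'' bookkeeping in place of your shell crossings, each consecutive pair lies in $J_{C_{i_k}}(U_{i_k})\setminus R_\epsilon$, the chain $t(p)=t(p_1)<t(p_2)<\cdots<t(q)$ closes, and the lemma follows; the remaining ingredients of your proposal (Lemma \ref{soh}, the patching via Prop.\ \ref{ohg} and Prop.\ \ref{doo}) are exactly as in the paper.
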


\begin{proof}
Let $o\in M$ and Let $U_i=B(o,(i+7)\epsilon )\backslash \bar B (o,i \epsilon)$ be a sequence of open relatively compact sets constructed with $h$-balls centered at $o$. Notice that $\cup_i U_i=M$.
By Lemma \ref{soh} we can find a locally Lipschitz  proper cone structure $C_i>C$ such that $J_{C_i}(U_i)\backslash R_\epsilon\subset  \{(p,q)\in \bar U_i\times \bar U_i\colon t(p)< t(q)\}$. Let $C'>C$ be chosen so that $C'<C_i$ on $\bar U_i$. On every compact set this is a finite number of conditions so $C'$ exists.  Let us consider a $C'$-causal curve  $\sigma$ connecting $p$ to $q$.
Let $p_1=p$, then $p_1\in  U_{i_1}$ where $i_1$ is chosen so that $p_1$ is at distance at least $3\epsilon$ from $\p U_{i_1}$. Let $p_2$ be the first escaping point from $\bar B(o,(i_1+6)\epsilon )\backslash  B (o,(i_1+1) \epsilon)$ and choose $i_2$ so that $p_2\in U_{i_2}$ is at distance at least $3\epsilon$ from $\p U_{i_2}$, following $\sigma$ from $p_2$, let $p_3$ be the first escaping point from $\bar B(o,(i_2+6)\epsilon )\backslash  B (o,(i_2+1) \epsilon)$, and so on.  The succession of points $\{p_k\}$ over $\sigma$ are such that $d^h(p_k, p_{k+1})\ge \epsilon$, so since the $h$-length of $\sigma$ is bounded, the segments are finite in number. If $p_m\in U_{i_m}$ is  the last point of the sequence then $q\in \bar B(o,(i_m+6)\epsilon )\backslash  B (o,(i_m+1) \epsilon)\subset U_{i_m}$. If $d(p_m, q)\ge \epsilon$ we set $p_{m+1}=q$, otherwise we redefine $p_m=q$, so that $p_{m-1}, q\in U_{i_{m-1}}$. Then for every $k$, $d^h(p_k, p_{k+1})\ge \epsilon$, with $(p_k,p_{k+1})\in J({U}_{i_k})$.
By Lemma \ref{soh} $ t(p_k)<t(p_{k+1})$ which implies $t(p)<t(q)$.
\end{proof}

\begin{theorem} \label{aob}
Let $(M,C)$ be a closed cone structure which admits a time function $t$.
Then $J_S\backslash \Delta\subset \{(p,q)\colon t(p)< t(q)\}$, so $(M,C)$ is stably causal and the continuous $J_S$-utilities are precisely the time functions.
\end{theorem}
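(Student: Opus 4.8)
The plan is to extract everything from Lemma \ref{lmn}, which already encodes the interaction between the time function $t$ and the widened cone structures; the remaining assertions will be short formal consequences.

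First I would establish the central inclusion $J_S\backslash \Delta\subset \{(p,q)\colon t(p)< t(q)\}$. Fix $(p,q)\in J_S$ with $p\ne q$ and let $h$ be a complete Riemannian metric. Since $p\ne q$ we may choose $\epsilon>0$ with $0<\epsilon<d^h(p,q)$, so that $(p,q)\notin R_\epsilon$. By Lemma \ref{lmn} there is a locally Lipschitz proper cone structure $C'>C$ with $J_{C'}\backslash R_\epsilon\subset \{(p,q)\colon t(p)< t(q)\}$. Because $C'>C$ we have $J_S\subset J_{C'}$ (from the definition $J_S=\cap_{C''>C} J_{C''}$), hence $(p,q)\in J_{C'}\backslash R_\epsilon$ and therefore $t(p)<t(q)$. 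The point is that although the cone $C'$ depends on the scale $\epsilon$, the pair $(p,q)$ is fixed in advance and bounded away from the diagonal, so a single application of the lemma suffices.

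From this inclusion stable causality is immediate: if $(p,q)\in J_S$ and $(q,p)\in J_S$ held with $p\ne q$, the inclusion would force both $t(p)<t(q)$ and $t(q)<t(p)$, a contradiction; hence $J_S$ is antisymmetric and $(M,C)$ is stably causal by Theorem \ref{nkk}. In particular $J\subset J_S$ is antisymmetric, so there are no closed continuous causal curves, i.e.\ causality holds. Next I would prove that the continuous $J_S$-utilities coincide with the time functions. The argument of the first step works for any time function $t'$, so every time function satisfies $J_S\backslash\Delta\subset\{(p,q)\colon t'(p)<t'(q)\}$; this makes $t'$ a $J_S$-utility, since $(p,q)\in J_S$ gives $t'(p)\le t'(q)$ (strict off the diagonal, equality on it), and $(p,q)\in J_S$ with $(q,p)\notin J_S$ forces $p\ne q$, whence $t'(p)<t'(q)$. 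For the converse, let $f$ be a continuous $J_S$-utility and $\gamma\colon[0,1]\to M$ a continuous causal curve with $0\le s<s'\le 1$. Then $(\gamma(s),\gamma(s'))\in J\subset J_S$; causality forbids $\gamma(s)=\gamma(s')$ (otherwise $\gamma|_{[s,s']}$ would be a closed causal curve), so this pair lies off the diagonal, and antisymmetry of $J_S$ rules out $(\gamma(s'),\gamma(s))\in J_S$. The utility property then yields $f(\gamma(s))<f(\gamma(s'))$, so $f$ increases strictly along $\gamma$ and is a time function.

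The main obstacle is concentrated entirely in the first inclusion, and it is dispatched by Lemma \ref{lmn}: the delicate point is that one cannot widen the cones uniformly while preserving the increase of $t$, since that property may fail for $h$-close points, and the resolution is to excise an $\epsilon$-neighborhood $R_\epsilon$ of the diagonal, which is harmless precisely because the fixed pair $(p,q)$ is separated from the diagonal by $d^h(p,q)>0$. Once this inclusion is secured, stable causality and the utility characterization follow by the elementary order-theoretic arguments above.
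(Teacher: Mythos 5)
Your proposal is correct and follows essentially the same route as the paper: the key inclusion $J_S\backslash\Delta\subset\{(p,q)\colon t(p)<t(q)\}$ is obtained exactly as in the paper by fixing the pair, choosing $\epsilon\le d^h(p,q)$, and invoking Lemma \ref{lmn} once, after which antisymmetry of $J_S$, stable causality via Theorem \ref{nkk}, and the two directions of the utility characterization follow by the same order-theoretic reasoning. The only difference is that you make explicit the use of causality (no closed continuous causal curves) to rule out $\gamma(s)=\gamma(s')$ in the converse direction, a point the paper leaves implicit.
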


\begin{proof}
Let $(p,q)\in J_S\backslash \Delta$ and let $\epsilon>0$ be chosen so that $\epsilon\le  d^h(p,q)$. By Lemma \ref{lmn}  there is $C'>C$ such that $J_{C'}\backslash R_\epsilon \subset \{(p,q)\in\colon t(p)< t(q)\}$. But $(p,q)\notin R_\epsilon$ and $(p,q)\in J_{C'}$, thus $t(p)<t(q)$. The proved inclusion $J_S\backslash \Delta\subset \{(p,q)\colon t(p)< t(q)\}$ implies the antisymmetry of $J_S$, hence stable causality. It also means that every time function is a $J_S$-utility. For the converse, under stable causality a continuous $J_S$-utility $t$ is a $J_S$-isotone function (hence $J$-isotone) such that if $(p,q)\in J_S\backslash \Delta$ then $t(p)< t(q)$, but $J\subset J_S$, thus this property implies  $(p,q)\in J\backslash \Delta \Rightarrow t(p)< t(q)$, that is $t$ is a time function.
\end{proof}

As a second step we prove that the continuous $K$-utilities are precisely the time functions.
The next lemmas appeared in \cite{minguzzi09c}.

\begin{lemma} \label{pkw}
Let $(M,C)$ be a non-imprisoning closed cone structure. Let $(p,q)\in K$ then
either $(p,q)\in J$ or for every relatively compact open
set $B\ni p$ there is $r \in \p {B}$ such that
$p<r$ and $(r,q)\in K$.
\end{lemma}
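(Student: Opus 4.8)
The plan is to prove the dichotomy for \emph{all} pairs in $K$ simultaneously by the standard ``good set'' technique: exhibit a closed preorder containing $J$ whose members all satisfy the conclusion, and then invoke the minimality of $K$. Concretely, I would set
\[
K'=J\cup\Big\{(x,y):\ \forall\ \text{relatively compact open}\ B\ni x,\ \exists\, r\in\p B,\ x<r\ \text{and}\ (r,y)\in K\Big\},
\]
and aim to show that $K'$ is a reflexive, transitive, topologically closed relation with $J\subseteq K'$. Granting this, the minimality of $K$ (the smallest closed preorder containing $J$) gives $K\subseteq K'$, so every $(p,q)\in K$ lies in $K'$; since $K'$ is the union of $J$ with the ``pushing'' set, this is exactly the asserted dichotomy. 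A preliminary observation to record first is that the pushing set is contained in $K$: applying its defining property to any $B\ni x$ yields $r$ with $(x,r)\in J\subseteq K$ and $(r,y)\in K$, whence $(x,y)\in K$ by transitivity of $K$. Thus $K'\subseteq K$, which will be used repeatedly.

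For transitivity I would check the cases for $(x,y),(y,z)\in K'$. The case $J\circ J$ is transitivity of $J$; the remaining cases all reduce to producing, for a given relatively compact open $B\ni x$, a boundary point $r\in\p B$ with $x<r$ and $(r,z)\in K$. When the first factor $(x,y)$ is in the pushing set one simply pushes $x$ to some $r\in\p B$ with $(r,y)\in K$, and then combines $(r,y)\in K$ with $(y,z)\in K$ (using $K'\subseteq K$) to get $(r,z)\in K$. When $(x,y)\in J$ and $(y,z)$ is pushing, one distinguishes whether $y\in B$ (push $y$ inside $B$) or $y\notin B$ (take the first exit point of the causal curve $x\to y$ from $B$ as $r$, so that $(r,y)\in J$ and hence $(r,z)\in K$). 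These arguments are topological, relying only on the transitivity of $J$ and $K$ and on the existence of a first exit point of a causal curve from an open set.

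The main obstacle is the closedness of $K'$, which is where the limit curve theorem and the non-imprisoning hypothesis enter. Given $(x_n,y_n)\to(x,y)$ with $(x_n,y_n)\in K'\subseteq K$, closedness of $K$ gives $(x,y)\in K$; if $(x,y)\in J$ we are done, so assume $(x,y)\notin J$ (in particular $x\neq y$) and fix a relatively compact open $B\ni x$. For large $n$, $x_n\in B$. If infinitely many $(x_n,y_n)\in J$ were joined by causal curves staying in the compact set $\bar B$, then Theorem~\ref{main}, together with non-imprisonment (which excludes alternative (ii), whose limit curves would be imprisoned in $\bar B$), would produce a continuous causal curve from $x$ to $y$, contradicting $(x,y)\notin J$. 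Hence for large $n$ each $(x_n,y_n)$ either is pushing or lies in $J$ with a curve that exits $B$; in both situations, after taking the \emph{first exit point} $r_n\in\p B$ of the relevant curve and using $(r_n,\text{endpoint})\in J$ together with transitivity of $K$, I obtain $r_n\in\p B$ reached from $x_n$ by a causal curve contained in $\bar B$ and satisfying $(r_n,y_n)\in K$. Since $\p B$ is compact I may pass to a subsequence $r_n\to r\in\p B$; closedness of $K$ gives $(r,y)\in K$, and a final application of the limit curve theorem to the curves $x_n\to r_n$ in $\bar B$ (again excluding imprisonment, and with $x\neq r$ since $x\in B$, $r\in\p B$) yields $x<r$. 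Thus $(x,y)$ has the pushing property for the arbitrary $B$, so $(x,y)\in K'$ and $K'$ is closed. The delicate point throughout is the first-exit-point replacement, which is what keeps all the approximating causal curves inside the fixed compact set $\bar B$ so that Theorem~\ref{main} is applicable; non-imprisonment is then exactly what rules out the escape-to-infinity alternative of the limit curve theorem.
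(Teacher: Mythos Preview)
Your proof is correct and follows essentially the same strategy as the paper: define the auxiliary relation $K'$ (the paper's $R$), verify it is a closed preorder containing $J$, and invoke minimality of $K$. The only notable difference is in the closedness step: the paper passes to a subsequence lying entirely in $J$ or entirely in the pushing set and then handles \emph{both} alternatives (i)/(ii) of the limit curve theorem in each case, whereas you use the first-exit-point replacement to force all approximating causal segments into $\bar B$, so that non-imprisonment rules out alternative~(ii) outright---a slightly cleaner bookkeeping but not a different idea.
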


\begin{proof}
Consider the relation
\begin{align*}
R=\{(p,q)\in K\colon  &\ (p,q)\in J \textrm{ or for every
relatively compact open set } B\ni p \\
&  \textrm{ there is } r \in \p {B} \textrm{ such that } p<r
\textrm{ and } (r,q)\in K \}.
\end{align*}
It is easy to check that $J\subset R\subset K$. We are
going to prove that $R$ is closed and transitive. From that and
from the minimality of $K$ it follows $R=K$ and hence
the desired result.

Transitivity: assume $(p,q) \in R$ and $(q,s) \in R$. If
$(p,s) \in J$ there is nothing to prove.  Otherwise we have
$(p,q)\notin J$ or $(q,s) \notin J$.

If $(p,q)\notin J$ for every $B\ni p$  open relatively compact set there is $r \in \p {B}$ such that $p<r$ and
$(r,q)\in K$, thus $(r,s) \in K$ and hence $(p,s) \in
R$.

It remains to consider the case $(p,q)\in J$ and $(q,s) \notin
J^{+}$. If $p=q$ then $(p,s)=(q,s) \in R$. Otherwise, $p<q$ and for every $B\ni p$  open relatively compact set we have two possibilities, whether $q\notin B$ or $q\in B$.
If $q \notin B$ the causal curve $\gamma$ joining $p$ to $q$
intersects $\p {B}$ at a point $r \in \p {B}$ (possibly coincident
with $q$ but different from $p$). Thus $p<r$, $(r,q) \in J$,
hence $p<r$ and $(r,s) \in K$. If instead $q \in B$, since $(q,s)\in
R\backslash J$, there is $r \in \p {B}$ such that $q<r$ and
$(r,s) \in K$, moreover, since $p\le q$, we have $p<r$. Since the
searched conclusion ``$p<r$, $r\in \p B$ and $(r,s)\in K$'' holds in both cases, we conclude $(p,s) \in R$.

Relation $R$ is closed: let $(p_n,q_n) \to (p,q)$, $(p_n,q_n) \in R$. Assume,
by contradiction, that $(p,q) \notin R$, then $p\ne q$ as
$J\subset R$. Without loss of generality we can assume two
cases: (a) $(p_n,q_n) \in J$ for all $n$; (b) $(p_n,q_n) \notin
J$  for all $n$.

(a) Let $B\ni p$ be an open relatively compact set. For sufficiently
large $n$, $p_n \ne q_n$ and $p_n \in B$. By the limit curve theorem
 either there is a limit continuous causal curve
joining $p$ to $q$, and thus $p<q$ (a contradiction), or there is a
future inextendible continuous causal curve $\sigma^p$ starting from
$p$ such that for every $p'\in \sigma^p$, $(p',q) \in
\overline{J}$. Since $(M,C)$ is non-imprisoning,
$\sigma^p$ intersects $\p {B}$ at some point $r$. Thus $p<r$ and
since $(r,q)\in \overline{J}\subset K$ we have $(p,q)\in
R$, a contradiction.

(b) Let $B\ni p$ be an open relatively compact set. For sufficiently
large $n$, $p_n \ne q_n$ and $p_n \in B$. Since $(p_n,q_n) \in
R\backslash J$ there is $r_n \in \p {B}$, $p_n<r_n$, and $(r_n,q_n)\in
K$. Without loss of generality we can assume $r_n \to r \in
\p {B}$, so that $(r,q) \in K$. Arguing as in (a) either $p<r$
(and $(r,q)\in K$) or there is $r'\in \p {B}$ such that $p<r'$
and $(r',r) \in \overline{J}\subset K$, from which it
follows that $(r',q) \in K$. Because of the arbitrariness of $B$,
$(p,q)\in R$, a contradiction. \end{proof}

\begin{lemma} \label{jba}
Let $(M,C)$ be a closed cone structure.
\begin{description}
\item[(a)] Let $\tilde{t}$ be a continuous function such that $x\le y \Rightarrow
\tilde{t}(x)\le \tilde{t}(y)$. If $(p,q) \in K$ then
$\tilde{t}(p)\le \tilde{t}(q)$.
\item[(b)] Let $t$ be a time function on $(M,C)$. If
$(p,q) \in K$ then  $p=q$ or $t(p)<t(q)$.
\end{description}
\end{lemma}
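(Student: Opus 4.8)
The plan is to treat (a) as a soft minimality argument and then bootstrap (b) from it using the peripheral structure of $K$ supplied by Lemma \ref{pkw}.

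For (a), I would introduce the relation
\[
R=\{(x,y)\in M\times M\colon \tilde t(x)\le \tilde t(y)\}.
\]
The whole point is that $R$ is a closed preorder containing $J$, so by the very definition of $K$ as the smallest closed preorder containing $J$ one must have $K\subset R$, which is exactly the assertion. Concretely: $R$ is reflexive and transitive because $\le$ is; it is closed because $\tilde t$ is continuous, so $R=(\tilde t\times\tilde t)^{-1}(\{(a,b)\colon a\le b\})$ is the preimage of a closed subset of $\mathbb{R}^2$; and it contains $J$ by the isotonicity hypothesis. Hence $(p,q)\in K\subset R$ yields $\tilde t(p)\le\tilde t(q)$.

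For (b), I would first observe that a time function $t$ is in particular $J$-isotone: if $(x,y)\in J$ with $x\ne y$ there is a continuous causal curve from $x$ to $y$ over which $t$ strictly increases, and $t(x)=t(y)$ trivially when $x=y$. Thus part (a) applies with $\tilde t=t$ and gives $t(p)\le t(q)$ whenever $(p,q)\in K$. It remains to promote this to a strict inequality when $p\ne q$. Since a time function exists, $(M,C)$ is strongly causal by Lemma \ref{pag}, hence non-imprisoning, so Lemma \ref{pkw} is at our disposal. I would then split into its two cases. If $(p,q)\in J$ with $p\ne q$, there is a genuine continuous causal curve joining them, so $t(p)<t(q)$ directly. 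If $(p,q)\notin J$, Lemma \ref{pkw} produces, for any relatively compact open $B\ni p$, a point $r\in\partial B$ with $p<r$ and $(r,q)\in K$; then $p\ne r$ (as $p\in B$, $r\in\partial B$) forces $t(p)<t(r)$ by the time-function property, while $(r,q)\in K$ gives $t(r)\le t(q)$ by part (a), so $t(p)<t(q)$.

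The genuinely load-bearing step is the case $(p,q)\notin J$, where strictness cannot be read off directly but must be extracted from an intermediate escaping point $r$ furnished by the prolongation structure of $K$; this is precisely where Lemma \ref{pkw} (and through it the non-imprisoning property, hence Lemma \ref{pag}) enters. Part (a), by contrast, is a purely formal minimality argument and presents no real difficulty.
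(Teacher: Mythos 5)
Your proof is correct, and on part (b) it is structurally more direct than the paper's. For (a) you and the paper run the same minimality argument, with a cosmetic difference: the paper works with $\tilde R=\{(p,q)\in K\colon \tilde t(p)\le\tilde t(q)\}$, shows $J\subset\tilde R\subset K$ with $\tilde R$ closed and transitive, and concludes $\tilde R=K$; you instead exhibit the full relation $R=\{\tilde t(x)\le\tilde t(y)\}$ as a closed preorder containing $J$ and invoke the definition of $K$ as the smallest such, which gives $K\subset R$ in one step. For (b) the paper repeats the relation-theoretic template: it sets $R=\{(p,q)\in K\colon p=q \textrm{ or } t(p)<t(q)\}$, proves $R$ is closed and transitive, and again appeals to minimality of $K$ to get $R=K$. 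You bypass this second minimality layer entirely and argue pointwise: given $(p,q)\in K$ with $p\ne q$, either $(p,q)\in J$ and strictness is immediate, or $(p,q)\notin J$ and Lemma \ref{pkw} (available because Lemma \ref{pag} gives strong causality, hence non-imprisonment) supplies the escaping point $r\in\p B$ with $p<r$ and $(r,q)\in K$, whence $t(p)<t(r)\le t(q)$ by the time-function property and part (a). This is legitimate, and in fact it is exactly the content of the paper's closedness step: there the sequence $(p_n,q_n)$ is used only to place the limit pair in $K$, after which the contradiction argument is precisely your case analysis, so the paper's extra minimality wrapper is logically dispensable once Lemma \ref{pkw} is in hand. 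What the paper's uniform template buys is robustness and parallelism — the same scheme handles (a), where no pointwise structural description of $K$ exists and minimality is genuinely unavoidable, and (b) alike; what your route buys is brevity and transparency about where strictness enters, namely through the intermediate point $r$ produced by the prolongation structure of $K$.
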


\begin{proof}

Proof of (a). Consider the relation
\[
\tilde{R}=\{(p,q)\in K\colon  \ \tilde{t}(p)\le \tilde{t}(q) \}.
\]
Clearly $J\subset \tilde{R}\subset K$ and $\tilde{R}$ is
transitive.

Let us prove that $\tilde{R}$ is closed. If $(x_n,z_n) \in
\tilde{R}$ is a sequence such that $(x_n,z_n) \to (x,z)$, then
passing to the limit $\tilde{t}(x_n)\le \tilde{t}(z_n)$ and using
the continuity of $\tilde{t}$ we get $\tilde{t}(x)\le \tilde{t}(z)$,
moreover since $K$ is closed, $(x,z) \in K$, which implies
$(x,z) \in \tilde{R}$, that is $\tilde{R}$ is closed.

Since $J\subset \tilde{R}\subset K$, and $\tilde{R}$
is closed and transitive, by using the minimality of $K$ it
follows that $\tilde{R}=K$. As a consequence, if $(p,q)\in
K$ then $\tilde{t}(p)\le \tilde{t}(q)$.

Proof of (b).  By lemma \ref{pag}, since $t$ is a time function
$(M,C)$ is strongly causal and thus non-imprisoning. Consider
the relation
\[
R=\{(p,q)\in K\colon \ p=q \textrm{ or } t(p)<t(q) \} .
\]
Clearly $J\subset {R}\subset K$ and ${R}$ is transitive.
Let us prove that $R$ is closed by keeping in mind  the result given by (a) that we just
obtained.  Let $(p_n,q_n) \in {R}\subset
K$ be a sequence such that $(p_n,q_n) \to (p,q)$. As $K$ is
closed, $(p,q) \in K$. If, by contradiction, $(p,q) \notin
R$ then $(p,q) \notin J$,  thus by lemma \ref{pkw}, chosen
an open relatively compact set $B\ni p$ there is $r\in \p {B}$,
with $p<r$, $(r,q)\in K$, thus $t(p) <t(r)\le t(q)$ and hence
$(p,q)\in R$, a contradiction.

Since $J\subset {R}\subset K$, and ${R}$ is closed
and transitive, by using the minimality of $K$ it follows that
${R}=K$. As a consequence, if $(p,q)\in K$ then either
$p=q$ or $t(p)< t(q)$.
\end{proof}

\begin{theorem} \label{ndc}
Let $(M,C)$ be a closed cone structure which is  $K$-causal, then the continuous $K$-utilities are precisely the continuous $J$-utilities (namely,
 the time functions). Similarly, the continuous $K$-isotone function are precisely the continuous $J$-isotone functions.
\end{theorem}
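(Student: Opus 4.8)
The plan is to prove both set-equalities by elementary double inclusions, with the genuine work already carried out in Lemmas \ref{pkw} and \ref{jba}. First I would record two structural facts that the hypothesis consumes. Since $J\subset K$ and $K$ is a preorder, $K$ is reflexive, so for a pair with $(p,q)\in K$ the conditions $(q,p)\notin K$ and $p\neq q$ are equivalent \emph{exactly} under $K$-causality (antisymmetry of $K$): if $p\neq q$ then antisymmetry forbids $(q,p)\in K$, while $(q,p)\notin K$ forces $p\neq q$ because $(p,p)\in K$. Moreover $K$-causality makes $(M,C)$ causal, since antisymmetry of $K$ restricts to $J\subset K$ and rules out closed causal curves; under causality a continuous $J$-utility coincides with a time function (for $(p,q)\in J$, $p\neq q$, causality gives $(q,p)\notin J$, hence strict increase over every causal curve, and conversely a time function is $J$-isotone with the required off-diagonal strictness), which justifies the parenthetical identification.

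For the isotone statement I would note that $J\subset K$ makes every continuous $K$-isotone function $J$-isotone, and that the converse is precisely Lemma \ref{jba}(a): a continuous $J$-isotone $\tilde t$ satisfies $(p,q)\in K\Rightarrow \tilde t(p)\le \tilde t(q)$. Hence the two classes of continuous isotone functions coincide with no further argument (in fact this half needs only a closed cone structure).

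For the utility statement I would prove the two inclusions separately. To see that a continuous $K$-utility $f$ is a time function: $f$ is $J$-isotone because $J\subset K$, and if $(p,q)\in J$ with $p\neq q$ then $(p,q)\in K$ and, by the structural fact above, $(q,p)\notin K$, so the utility condition gives $f(p)<f(q)$; thus $f$ increases strictly over every continuous causal curve. To see that a time function $t$ is a continuous $K$-utility: Lemma \ref{jba}(b) states $(p,q)\in K\Rightarrow p=q$ or $t(p)<t(q)$, which simultaneously yields $K$-isotonicity and, whenever $(p,q)\in K$ with $(q,p)\notin K$ (so $p\neq q$), the strict inequality $t(p)<t(q)$ demanded of a utility.

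The theorem is therefore essentially bookkeeping once the supporting lemmas are in hand; the only delicate point — and the place where $K$-causality is actually used — is the conversion of the weak utility condition into strict increase over causal curves via the equivalence of $(q,p)\notin K$ with $p\neq q$. Combined with Theorem \ref{aob}, which identifies the continuous $J_S$-utilities with the time functions, this result feeds into the multi-utility representation of Auslander--Levin (Theorem \ref{lev}) to force $K=J_S$, but that deduction lies outside the present statement.
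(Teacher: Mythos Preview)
Your proof is correct and follows essentially the same approach as the paper: both directions of the utility equality rest on Lemma \ref{jba}(b) and on the observation that under $K$-causality the condition $(q,p)\notin K$ is equivalent to $p\neq q$ for $(p,q)\in K$, while the isotone equality is Lemma \ref{jba}(a) together with $J\subset K$. Your write-up is more explicit than the paper's about why $K$ is reflexive and why $K$-causality implies causality (hence why the parenthetical identification of $J$-utilities with time functions is legitimate), but the substance is identical.
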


\begin{proof}
A $K$-utility is a function $u$ which satisfies (i) $(x,y)\in
K\Rightarrow u(x)\le u(y)$ and (ii) $(x,y)\in K$ and $(y,x)
\notin K\Rightarrow u(x)<u(y)$. Since the spacetime is
$K$-causal this condition is equivalent to $(x,y)\in K
\Rightarrow x=y$ or $u(x)<u(y)$. Thus by Lemma \ref{jba} point (b),
every time function is a continuous $K$-utility. Conversely, in
a $K$-causal spacetime a continuous $K$-utility satisfies $x<y$
$\Rightarrow (x,y)\in K\backslash\Delta \Rightarrow u(x)<u(y)$
and hence it is a time function. The last statement is just Lemma \ref{jba} (a).
\end{proof}

Finally we are able to prove the next important result.

\begin{theorem} \label{bhd}
Let $(M,C)$ be a closed cone structure. The following properties are equivalent:
\begin{itemize}
\item[(i)] Stable causality,
\item[(ii)] Antisymmetry of $J_S$,
\item[(iii)] Antisymmetry of $K$ ($K$-causality),
\item[(iv)]  Emptyness of the stable recurrent set,
\item[(v)]  Existence of a time function,
\item[(vi)] Existence of a smooth temporal function,
\end{itemize}
 Moreover, in this case $J_S=K=T_1=T_2$ where
 \begin{align*}
 T_1&=\{(p,q)\colon t(p)\le t(q) \textrm{ with } t \ \textrm{time function} \}, \\
 T_2&=\{(p,q)\colon t(p)\le t(q) \textrm{ with } t \ \textrm{smooth temporal function} \}.
 \end{align*}
\end{theorem}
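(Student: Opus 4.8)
The plan is to assemble the six equivalences from the lemmas of this section into a single cycle of implications and then to deduce the four-fold identity $J_S=K=T_1=T_2$ from the Auslander--Levin multi-utility representation. First I would close the core loop among (i)--(vi). The equivalence (i) $\Leftrightarrow$ (ii) is exactly Th.\ \ref{nkk}, while (ii) $\Leftrightarrow$ (iv) is immediate from Th.\ \ref{ssp}, which identifies the stable recurrent set with $vJ_S$ and shows the latter is empty precisely when $J_S$ is antisymmetric. For (ii) $\Rightarrow$ (iii) I would use only the trivial inclusion $K\subset J_S$ (both are closed preorders containing $J$, and $K$ is the smallest such), so antisymmetry of $J_S$ forces antisymmetry of $K$. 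To complete the loop I would then prove (iii) $\Rightarrow$ (v) $\Rightarrow$ (i) and (i) $\Rightarrow$ (vi) $\Rightarrow$ (v).

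For (iii) $\Rightarrow$ (v): the relation $K$ is a closed preorder and $M$ is second countable, locally compact and Hausdorff, so Auslander--Levin's theorem (Th.\ \ref{lev}) furnishes a continuous $K$-utility $u$; under $K$-causality, Th.\ \ref{ndc} identifies every continuous $K$-utility with a time function, so $u$ witnesses (v). The implication (v) $\Rightarrow$ (i) is Th.\ \ref{aob}, and (i) $\Rightarrow$ (vi) is Th.\ \ref{vkg}. Finally (vi) $\Rightarrow$ (v) is routine: a smooth temporal function $\tau$ has $\dd\tau$ positive on $C$, hence $\tau(x(1))-\tau(x(0))=\int\dd\tau(\dot x)\,\dd t>0$ over any continuous causal curve $x$, so $\tau$ is a time function. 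This chain renders (i)--(vi) equivalent.

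For the concluding identity I would invoke the multi-utility half of Auslander--Levin together with the earlier characterizations of continuous utilities. Working in the stably causal regime, apply the representation to the closed preorder $J_S$: one has $(p,q)\in J_S$ iff $u(p)\le u(q)$ for every continuous $J_S$-utility $u$; by Th.\ \ref{aob} the continuous $J_S$-utilities are exactly the time functions, so $J_S=T_1$. The same representation applied to $K$, combined with Th.\ \ref{ndc} identifying continuous $K$-utilities with time functions, gives $K=T_1$. Hence $J_S=K=T_1$. It remains to fold in $T_2$: since every smooth temporal function is a time function, the defining condition of $T_1$ is the more stringent, whence $T_1\subset T_2$; conversely, if $(p,q)\notin J_S$ the last assertion of Th.\ \ref{vkg} produces a smooth temporal $t$ with $t(p)>t(q)$, so $(p,q)\notin T_2$, giving $T_2\subset J_S=T_1$. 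Therefore $T_1=T_2$ and the four relations coincide.

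The main obstacle I anticipate is conceptual rather than computational: recognizing that the two closed preorders $J_S$ and $K$ share the identical family of continuous utilities---namely the time functions---so that the multi-utility representation collapses them onto the same set $T_1$. This is precisely where R.\ Low's conjecture $J_S=K$ is resolved, and it rests entirely on Theorems \ref{aob} and \ref{ndc}, whose proofs (through the auxiliary Lemmas \ref{soh}, \ref{lmn}, \ref{pkw} and \ref{jba}) carry the real weight; the present theorem is then essentially a bookkeeping assembly, the only delicate point being to check that the hypotheses of Auslander--Levin (second countability, local compactness, Hausdorffness, and closedness of the preorder) are satisfied in both applications.
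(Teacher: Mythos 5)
Your proposal is correct and takes essentially the same approach as the paper: the same key results (Theorems \ref{nkk}, \ref{ssp}, \ref{vkg}, \ref{aob}, \ref{ndc} and Auslander--Levin's theorem) carry all the weight, and the multi-utility argument yielding $J_S=K=T_1=T_2$ is identical. The only difference is cosmetic: you close the cycle with (ii) $\Rightarrow$ (iii) via the inclusion $K\subset J_S$ (valid, since $J_S$ is a closed preorder containing $J$ and $K$ is the smallest such), whereas the paper uses (v) $\Rightarrow$ (iii) via Lemma \ref{jba}(b) --- a lemma your argument still relies on implicitly, since it enters the proof of Theorem \ref{ndc}.
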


\begin{proof}
$(i) \Leftrightarrow (ii)$ is Th. \ref{nkk}.  $(ii) \Leftrightarrow (iv)$ is Th.\ \ref{ssp}.
By Th.\ \ref{vkg} $(i) \Rightarrow (vi)$, and clearly $(vi)\Rightarrow (v)$. By Lemma \ref{jba} (b), $(v)\Rightarrow (iii)$. Finally, by Auslander-Levin's theorem $(iii)$ implies the existence of a continuous $K$-utility, namely a time function (Th.\ \ref{ndc}) so by Th.\ \ref{aob} $(M,C)$ is stably causal, hence $(i)$.
The last statement follows from the multi-utility representation in Auslander-Levin's theorem. Applying it  to the closed order $J_S$ jointly with Th.\ \ref{aob} gives $J_S=T_1$. Applying it to the closed order $K$ jointly with Th.\ \ref{ndc} gives $K=T_1$.
It remains to prove $J_S=T_2$. Clearly, $T_1\subset T_2$, thus $J_S\subset T_2$.
Let $(p,q)\notin J_S$, by the last statement of Th.\ \ref{vkg} we can find a smooth temporal function such that $t(p)>t(q)$, that is $(p,q)\notin T_2$, which concludes the proof.
\end{proof}
%

%
%
We recall that causal simplicity means: $J$ closed and antisymmetric.
\begin{theorem} \label{mbg}
Let $(M,C)$ be a  causally simple closed cone structure, then  $(M,C)$ is stably causal and $J=J_S=K=T_1=T_2$.
\end{theorem}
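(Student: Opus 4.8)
The plan is to observe that causal simplicity already supplies every hypothesis needed to run the machinery developed earlier, so that the statement reduces to combining the causal ladder with Theorem~\ref{nio}. Concretely, I would split the claim into two independent assertions: first, that $(M,C)$ is stably causal, and second, that $J=K$. Granting stable causality, Theorem~\ref{nio} immediately yields $J_S=K=T_1=T_2$; hence the full chain $J=J_S=K=T_1=T_2$ follows the moment $J=K$ is established.

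For stable causality I would simply invoke the causal ladder, Theorem~\ref{cau}: there the implications Causally simple $\Rightarrow$ Causally continuous $\Rightarrow$ Causally easy $\Rightarrow$ Stably causal are proved for an arbitrary closed cone structure. Thus a causally simple $(M,C)$ is in particular stably causal, and no separate argument is warranted.

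The only genuinely new input is the identity $J=K$. Here I would use the defining minimality of $K$ as the smallest closed preorder containing $J$. By the causal simplicity hypothesis $J$ is closed; it is reflexive by the convention that $p=q$ gives $(p,q)\in J$; and it is transitive because two continuous causal curves sharing an endpoint concatenate into a continuous causal curve. Therefore $J$ is itself a closed preorder containing $J$, and minimality forces $K\subseteq J$. Since the inclusion $J\subseteq K$ always holds, we conclude $J=K$. Combining this with $J_S=K=T_1=T_2$ from Theorem~\ref{nio} gives $J=J_S=K=T_1=T_2$, as desired.

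The proof is short and presents no real obstacle: the substantive content is entirely outsourced to the earlier results, namely the causal ladder and the equivalences of Theorem~\ref{nio}. If anything, the point deserving the most care is the bookkeeping that ensures the hypotheses of those theorems are genuinely in force — in particular the observation that closedness of $J$ is precisely what upgrades the general inclusion $J\subseteq K$ to an equality, while the antisymmetry of $J$ contributes nothing beyond what is already encoded in stable causality.
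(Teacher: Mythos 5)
Your proposal is correct, and its core step is exactly the paper's: since causal simplicity makes $J$ a closed, transitive, reflexive relation, the minimality of $K$ forces $K\subseteq J$, hence $K=J$; the identities $J_S=K=T_1=T_2$ are then quoted from Theorem~\ref{nio}. Where you diverge is in how you obtain stable causality. The paper does not invoke the causal ladder at all: having $K=J$, it simply notes that the antisymmetry of $J$ (part of causal simplicity) is the antisymmetry of $K$, i.e.\ $K$-causality, and then stable causality is the implication (iii)$\Rightarrow$(i) of Theorem~\ref{nio}. Your route instead runs the ladder chain causally simple $\Rightarrow$ causally continuous $\Rightarrow$ causally easy $\Rightarrow$ stably causal of Theorem~\ref{cau}. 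This is legitimate — there is no circularity, because the ladder's proof rests only on Theorem~\ref{nio}, which is established before Theorem~\ref{mbg} — but it is heavier than needed: each ladder rung carries its own machinery (reflectivity, Propositions~\ref{jjw} and~\ref{aah}, limit curve arguments), and moreover those rungs themselves terminate in the very same appeal to $K$-causality $\Rightarrow$ stable causality from Theorem~\ref{nio} that the paper uses directly. Since you have already proved $K=J$, the one-line conclusion is available to you and makes the ladder detour redundant. One small correction to your closing remark: the antisymmetry of $J$ is not idle bookkeeping. In the paper's argument it is precisely what upgrades $K=J$ to $K$-causality, and in your own argument it is an essential hypothesis of the ladder implications (causality, i.e.\ antisymmetry of $J$, is built into causal simplicity and is used, e.g., in Lemma~\ref{mvh}); it is an input to stable causality, not something subsumed by it.
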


\begin{proof}
Under the assumption $J$ is  closed, thus it is the smallest closed transitive relation containing $J$, hence $K=J$. But under the assumption $J$ is antisymmetric hence $K$ is antisymmetric (stable causality).
The equality of relations follows from the previous theorem.
\end{proof}


%
%
%
%


\subsubsection{From time to temporal functions}


This section can be completely skipped. Its main purpose is to present a different method for constructing a temporal function in cone structures admitting a time function, provided one has proved with a different method that globally hyperbolic proper cone structures admit Cauchy temporal functions. The method only works for proper cone structures so the result is really weaker than that obtained in Th.\ \ref{bhd}. Nevertheless, the proof is instructive and passes through the notion of domain of dependence.
\begin{theorem} \label{mbn}
Suppose that every globally hyperbolic proper cone structure admits a Cauchy temporal function.
Let $(M,C)$ be a proper cone structure which admits a time function $t$, then it is stably causal hence it admits a temporal function (Th.\ \ref{vkg}). Moreover, $T=\{(p,q)\colon t(p)\le t(q) \textrm{ for every } t \in \mathcal{T}\}$
is independent of whether $\mathcal{T}$ represents the set of time or temporal functions.  
\end{theorem}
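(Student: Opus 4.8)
The plan is to show that each of the two families determines the same relation, namely the Seifert relation $J_S$. Write $T_{\mathrm{time}}$ and $T_{\mathrm{temp}}$ for the relation $T$ computed with $\mathcal{T}$ equal to the family of time functions, resp.\ of temporal functions. First I would record the elementary inclusion of families: every temporal function is a time function. Indeed, if $\tau$ is temporal and $x\colon[0,1]\to M$ is a continuous causal curve, then $\dd\tau(\dot x)>0$ almost everywhere (since $\dot x\in C$ a.e.\ and $\dd\tau$ is positive on $C$), whence $\tau(x(1))-\tau(x(0))=\int_0^1\dd\tau(\dot x)\,\dd t>0$; thus $\tau$ strictly increases over every continuous causal curve. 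Consequently the family of temporal functions is contained in the family of time functions, and since $T$ is order-reversing in $\mathcal{T}$ (enlarging $\mathcal{T}$ shrinks $T$), we obtain $T_{\mathrm{time}}\subseteq T_{\mathrm{temp}}$.

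Next I would identify $T_{\mathrm{time}}$ with $J_S$. Since $(M,C)$ admits a time function it is stably causal (Th.\ \ref{aob}), so $J_S$ is a closed preorder and, by that same theorem, the continuous $J_S$-utilities are exactly the time functions. Applying the multi-utility representation of the Auslander--Levin theorem (Th.\ \ref{lev}) to the closed preorder $J_S$ on the second countable, locally compact, Hausdorff manifold $M$ yields $(p,q)\in J_S \Leftrightarrow t(p)\le t(q)$ for every time function $t$; that is, $T_{\mathrm{time}}=J_S$. Combined with the previous paragraph this gives $J_S=T_{\mathrm{time}}\subseteq T_{\mathrm{temp}}$. (Note that the opposite-direction inclusion cannot be extracted for free from closedness: $T_{\mathrm{temp}}$ is a closed preorder containing $J$, so minimality of $K$ only yields $K\subseteq T_{\mathrm{temp}}$, i.e.\ again $J_S\subseteq T_{\mathrm{temp}}$.)

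It remains to prove $T_{\mathrm{temp}}\subseteq J_S$, equivalently: for every pair $(p,q)\notin J_S$ there is a temporal function $\tau$ with $\tau(p)>\tau(q)$. This separation is the crux and the main obstacle, and it is here that the proper hypothesis and the standing assumption (globally hyperbolic proper cone structures admit Cauchy temporal functions) must enter. Given $(p,q)\notin J_S$, Prop.\ \ref{paq} furnishes a locally Lipschitz proper $C'>C$, which we may take stably causal, with $q\notin\overline{J^+_{C'}(p)}=\overline{I^+_{C'}(p)}$ (the last equality by Th.\ \ref{soa}), so that $p\in\partial F$ while $q\in M\backslash\bar F$ for the future set $F=I^+_{C'}(p)$. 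The idea would be to realize a stably acausal hypersurface $S$ that separates $p$ from $q$ and sits inside a globally hyperbolic domain of dependence $D(S)$ (globally hyperbolic by Th.\ \ref{mmm}), where the standing assumption supplies a Cauchy temporal function vanishing on $S$ and positive on $I^+(S)\ni p$; this local temporal function, glued to the globally defined temporal function produced in the first part of the present theorem so as to keep the differential positive on $C$, would yield a global temporal function $\tau$ positive at $p$ and negative at $q$. Since a temporal function for $C'$ is a fortiori temporal for the narrower $C$, such $\tau$ separates $(p,q)$, giving $T_{\mathrm{temp}}\subseteq J_S$.

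Putting the three inclusions together, $T_{\mathrm{time}}=J_S=T_{\mathrm{temp}}$, which is the assertion. I expect the genuine difficulty to be concentrated entirely in the last paragraph: the gluing of the locally defined Cauchy temporal function to a global one while preserving positivity of $\dd\tau$ on $C$ is exactly the delicate point, and it is what the domain-of-dependence method of this section is designed to handle without recourse to the anti-Lipschitz smoothing of Th.\ \ref{moz}. (If one is willing to invoke that smoothing machinery instead, the separation of non-$J_S$-related pairs by smooth temporal functions is already the content of the penultimate statement of Th.\ \ref{vkg}, and the argument closes at once; the value of the present route is that it needs only the globally hyperbolic case as an input.)
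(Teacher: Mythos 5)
Your first two steps are sound: every temporal function is a time function, hence $T_{\mathrm{time}}\subset T_{\mathrm{temp}}$, and the identification $T_{\mathrm{time}}=J_S$ via Th.\ \ref{aob} and Auslander--Levin (Th.\ \ref{lev}) is correct (though the paper never needs this detour: it proves $T_{\mathrm{temp}}\subset T_{\mathrm{time}}$ directly, by promoting any separation $t(p)>t(q)$ by a time function to a separation by a temporal function). The crux, however, is your third step, and there you have a genuine gap: what you offer is a sketch, and it would fail as written. Concretely: (i) your candidate hypersurface $S=\p I^+_{C'}(p)$ contains $p$ itself and is only achronal, whereas Th.\ \ref{mmm} requires an \emph{acausal} topological hypersurface; (ii) $q\notin\bar F$ does not place $q$ causally below $S$ --- it may lie in a region causally unrelated to $S$, so nothing fixes the sign of your would-be separating function at $q$; (iii) a Cauchy temporal function ``vanishing on $S$'' cannot exist when $S$ is merely Lipschitz, since $S$ would then be an open subset of the smooth level set $\tau^{-1}(0)$ and hence itself smooth; (iv) the gluing ``so as to keep the differential positive on $C$'' is exactly the step you leave open. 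Your fallback via Th.\ \ref{vkg} does close the argument logically, but it invokes precisely the anti-Lipschitz smoothing machinery this theorem is designed to avoid, rendering the standing hypothesis on globally hyperbolic structures (and the properness assumption) irrelevant --- as you yourself note.

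The paper's proof supplies the two ideas you are missing. First, the separating hypersurface is a \emph{level set of the given time function}: choosing $a$ with $t(q)<a<t(p)$, the set $S_a=t^{-1}(a)$ is acausal, edgeless, contains neither point, and places $p$ and $q$ in the two open regions $\{t>a\}$ and $\{t<a\}$, which is what controls the signs later. Second, no gluing of temporal functions is ever attempted. On the globally hyperbolic $D(S_a)$ (Th.\ \ref{mmm}) the standing hypothesis gives a Cauchy temporal function $t_a$; one compresses it to $\tau_a=\varphi(t_a)\in[-1,1]$, with $\varphi'>0$ on $(-1,1)$ and $\varphi=\pm1$ outside, and extends $\tau_a$ to all of $M$ by the constants $+1$ on $\{t>a\}\setminus D(S_a)$ and $-1$ on $\{t<a\}\setminus D(S_a)$; the extension is smooth because $t_a$, being Cauchy, diverges as one approaches the Cauchy horizons from inside $D(S_a)$ (and $S_a\cap H^{\pm}(S_a)=\emptyset$ by Th.\ \ref{jdq}). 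The resulting global $\tau_a$ is \emph{not} temporal --- it only satisfies $\dd\tau_a\ge 0$ on $C$, being locally constant off a strip --- but this suffices twice over: a global temporal function $g$ is built by summing countably many such $\tau_{a_i}$, whose strips $\tau_{a_i}^{-1}((-1,1))$ cover $M$, with weights taming the $C^1$ norms on compact sets (each point lies in a strip where one summand has $\dd\tau_{a_i}$ positive on $C$, while all other summands contribute non-negatively); and the separation of the given pair is then achieved by $\tau=g+b\,\tau_{a_1}$, with $a_1\in(t(q),t(p))$ and $b>0$ large, which is automatically temporal because adding a function with non-negative differential on $C$ to a temporal function keeps the differential positive on $C$. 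This additive trick is what dissolves the ``preserve positivity while gluing'' obstacle that you correctly identified but did not overcome.
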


\begin{proof}
Let $S_a=t^{-1}(a)$, with $a\in t(M)$, then $S_a$ is acausal and $D(S_a)$  is an open set (Prop.\ \ref{maq}, Th.\ \ref{aoq} and \ref{mmm}). Hence $D(S_a)$ endowed with the induced cone distribution  is a globally hyperbolic proper cone structure. By assumption there is a smooth Cauchy time function $t_a$ on it.
 Let $\tau_a\colon D(S_a)\to \mathbb{R}$, be given by $\tau_a=\varphi\circ t_a$, where $\varphi(x)$ is a smooth function such that $\varphi =1$ for $x\ge 1$, $\varphi=-1$ for $x\le -1$, and $\varphi'>0$ on $(-1,1)$. It is smooth and strictly increasing over causal curves in a neighborhood of $t^{-1}_a(0)$.

 The function $\tau_a$ can be extended to a smooth function all over $M$ by setting $\tau_a=1$ on $\{p\colon t_a(p)\ge 1\}\backslash D(S_a)$ and $\tau_a=-1$ on $\{p\colon t_a(p)\le  -1\}\backslash D(S_a)$. In fact, given  $q\in H^+(S)$ we have just to show that $\tau_a=1$ in a neighborhood of $q$ (and similarly for $q\in H^-(S)$). We know that $H^+(S)$ is a locally Lipschitz hypersurface, thus we can find a coordinate neighborhood of  $q$, diffeomorphic to a  cylinder $C=A\times [0,1]\subset M$ whose $[0,1]$ fiber are timelike as generated by a timelike Lipschitz vector field $V$. On the portion of integral line of $V$ passing through $q$ which is contained in $D(S_a)$, $t_a$ goes to infinity as the evaluation point approaches $q$. Thus taking the base $A$ sufficiently small, by continuity of $t_a$ we can choose the cylinder so that $t _a>1$ on $A\times \{0\}$, and hence everywhere on $C$.  This fact proves that $\tau_a$ is indeed  extended to a smooth function.

Let $T_a=\tau_a^{-1}((-1,1))$, then $M$ is covered by such open strips, and since it is Lindel\"of it admits a countable subcovering $\{T_{a_i}\}$.
Let $o\in M$ and let $h$ be a complete Riemannian metric. Let us set $C_i=1+\sup_{\bar B(o, i)} \Vert D \tau_{a_i}\Vert_h $ and define
\[
\tau=\sum_{i} \frac{1}{2^i C_i} \tau_{a_i}.
\]
Since $\vert\tau_{a_i}\vert \le 1$ the right-hand side converges uniformly so $\tau$ is continuous.
Let $K$ be a compact set then there is some $n$ such that $K\subset B(o,n)$ and we have
\begin{align*}
\sup_{\bar B(o,n)}\sum_{i=1}^\infty \frac{1}{2^i C_i} \Vert D \tau_{a_i} \Vert_h &\le \underbrace{\sup_{\bar B(o,n)}\sum_{i=1}^n \frac{1}{2^i C_i} \Vert D \tau_{a_i} \Vert_h}_{<\infty} +\sum_{i=n+1}^\infty \frac{1}{2^i C_i}  \underbrace{ \sup_{ \bar B(o,n)} \Vert D \tau_{a_i} \Vert_h}_{\le C_i} \\
&< \infty .
\end{align*}
This inequality shows that the series defining $\tau$ converges in $C^1$ norm on
every compact set, resulting in a $C^1$ function. Since every point belongs to some $T_{a_i}$ and  on every $T_{a_i}$, $\dd \tau_{a_i}=\phi'(t_a) \dd t_a $ which is positive on $C$, $\tau$ is temporal.

Next, since every temporal function is a time function we need only to prove one inclusion, which follows from the next result: suppose that there are $p,q\in M$ such that there exist a time function $t$ such that $t(p)>t(q)$ then we can find a temporal function $\tau$ such that $\tau(p)>\tau(q)$.
In fact let us use $t$ as initial time function in the above construction. Choosing $a$ in such a way that $t(q)<a<t(p)$ and including $T_{a_1}$ in the covering, with $a_1=a$ we have that $\tau_{a_1}$ is a smooth isotone function such that $\tau_{a_1}(q)<0<\tau_{a_1}(p)$, with non-negative $\dd \tau_{a_1}$ on $C$. Now given any temporal function $g$, and a constant $b>0$, $\tau=  g+b\tau_{a_1}$ is temporal  and for sufficient large $b$,   $\tau(q)<\tau(p)$.
\end{proof}

\subsection{The regular ($C^{1,1}$) theory} \label{xxo}
This is section is meant to summarize what is known under stronger regularity conditions on the cone distribution.

Let us consider a $C^0$ distribution of proper cones and assume that the differentiability degree of the local immersion defining the hypersurface $\p C_x\subset T_xM\backslash 0$, is  $C^{3,1}$ at every $x$, while the immersion defining the hypersurface $\p C \subset TM$ is $C^{1,1}$. The pair $(M,C)$ is also called {\em cone structure}, and for brevity we say that it is {\em regular} or $C^{1,1}$.

A regular {\em Lorentz-Finsler space} (space) is a pair $(M,\mathscr{F})$ where $\mathscr{F}\colon C\to [0,+\infty)$ is positive homogeneous of degree one, $\p C=\mathscr{F}^{-1}(0)$, $(M,C)$ is a regular cone structure and $\mathscr{L}:=-\frac{1}{2} \mathscr{F}^2$ has $C^{1,1}$ Lorentzian vertical Hessian on $C$. Here $\mathscr{F}$ is the {\em Finsler function}, while $\mathscr{L}$ is the {\em Finsler Lagrangian}.

Under these differentiability conditions on the boundary of the cone, $\mathscr{L}$ can be extended all over $TM\backslash 0$ preserving the differentiability degree of the Hessian and its Lorentzian nature. In this way the manifold is converted into a Lorentz-Finsler space in Beem's sense \cite{minguzzi14h}. Physically, what really matters is the function $\mathscr{F}$ on the cone, nevertheless, some results such as the existence of convex neighborhoods are better expressed using a Lorentz-Finsler space in Beem's sense \cite{minguzzi13d}, so it is useful to know that one can pass to the latter type of space.

Every regular cone structure comes from a regular Lorentz-Finsler space. The argument has been used a  few times, the details on how to construct the Finsler function can be found in  \cite[Prop.\ 13]{minguzzi15e}. Of course, the Finsler function compatible with the cone structure is not unique.

Geodesics are the stationary points for the Lagrangian $\mathscr{L}$. Reference \cite{minguzzi15e} points out that the notion of unparametrized lightlike geodesic does not depend on the Finsler Lagrangian, just on $(M,C)$ and that such geodesics are locally achronal. Thus the notions of causal curve, timelike curve, lightlike curve, unparametrized lightlike geodesics, really refer to the cone structure $(M,C)$.

The results for Lorentz-Finsler theory developed in \cite{minguzzi13d} \cite[Sec.\ 6]{minguzzi15}  show that most of causality theory, particulary those portions that do not use the notion of curvature but just maximization and limit curve arguments \cite[Remark 2]{minguzzi14h}, pass through to the Lorentz-Finsler case (in fact, many results which involve curvature also do \cite{minguzzi15} but they require more care, see also \cite{kunzinger15,kunzinger15b,graf17} for singularity theorems in the Lorentzian case). This result is due to the existence of convex neighborhoods in this theory and to the fact that the exponential map is well defined and provides a local lipeomorphism \cite{minguzzi13d,kunzinger13,kunzinger13b}.
Since every cone structure comes from a Lorentz-Finsler space we have
\begin{remark}
All the results on causality theory for regular Lorentz-Finsler spaces developed in  \cite{minguzzi13d} \cite[Sec.\ 6]{minguzzi15} which do not involve in their statement the Finsler function $\mathscr{F}$ but just the future causal  cone pass through to the cone structure case.
\end{remark}
For instance, the geodesic equation has a unique solution for every initial condition, lightlike geodesics do not branch,  convex neighborhoods do exist, Cauchy horizons are generated by lightlike geodesics,  lightlike geodesics are locally achronal, the chronological relation is open, and so on.

Some of the properties hold under fairly weaker assumptions as we have shown in this work.


\section{Applications}
In this section we apply the previous results to two important problems.

\subsection{Functional representations and the distance formula}

On a topological ordered space $(M,\mathscr{T}, R)$ we say that a family of continuous $R$-isotone functions $\mathcal{F}$  {\em represents the order} $R$ (a reflexive, transitive and antisymmetric relation)  on $M$ if
\begin{itemize}
\item[(i)] $(p,q)\in R$ if and only if  $f(p)\le f(q)$ for all $f\in \mathcal{F}$,
\end{itemize}
and that it {\em represents the topology} if
\begin{itemize}
\item[(ii)] for every $p\in M$ and open neighborhood $O\ni p$ we can find $\check f, \hat f \in \mathcal{F}$ such that $[\{q\colon \check f(q)<0\}\cap \{q\colon \hat f(q)>0\}] \subset O$.
\end{itemize}
Notice that if the representation holds for a family, it also holds for a larger family.
If both topology and order are represented by the continuous isotone functions the topological ordered space is called a {\em completely regularly ordered space} \cite{nachbin65}. These topological ordered spaces are important since they are equivalent to the quasi-uniformizable spaces, namely to  those spaces which  can be Nachbin compactified.

A rewording of Theorem \ref{vkg} is
\begin{theorem} \label{swq}
In a stably causal closed cone structure $(M,C)$ the Seifert order $J_S$ and the manifold topology are represented by the smooth temporal functions. As a consequence, $(M,\mathscr{T}, J_S)$ is a completely regularly ordered space.
\end{theorem}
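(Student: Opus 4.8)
The plan is to assemble the three conclusions of Theorems \ref{vkg} and \ref{bhd} into the two representation requirements (i) and (ii) for the family $\mathcal{F}$ of smooth temporal functions, and then to read off the completely regularly ordered conclusion.

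First I would verify the order representation (i). Smooth temporal functions are $C^1$ functions whose differential is positive on $C$, hence they increase strictly over continuous causal curves, i.e.\ they are time functions. By Theorem \ref{aob} every time function $t$ satisfies $J_S\backslash \Delta\subset \{(p,q)\colon t(p)<t(q)\}$, so $(p,q)\in J_S$ implies $t(p)\le t(q)$, with equality exactly when $p=q$; thus every smooth temporal function is $J_S$-isotone, which gives one inclusion. For the converse I would invoke the identity $J_S=T_2$ already established in Theorem \ref{bhd}, where $T_2$ is the set of pairs $(p,q)$ such that $t(p)\le t(q)$ for every smooth temporal function $t$; concretely, if $(p,q)\notin J_S$ then the second conclusion of Theorem \ref{vkg} furnishes a smooth temporal $t$ with $t(p)>t(q)$. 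Combining the two inclusions yields that $(p,q)\in J_S$ if and only if $t(p)\le t(q)$ for all $t\in\mathcal{F}$, which is precisely (i).

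Next I would obtain the topology representation (ii) directly, since it is the final statement of Theorem \ref{vkg}: for each $p\in M$ and each open neighborhood $O\ni p$ it produces smooth temporal functions $\check t,\hat t$ with $[\{q\colon \check t(q)<0\}\cap\{q\colon \hat t(q)>0\}]\subset O$. No additional argument is needed here.

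Finally, for the completely regularly ordered conclusion I would note that under stable causality $J_S$ is a closed preorder which is moreover antisymmetric (Theorem \ref{nio}), hence a closed partial order, so that $(M,\mathscr{T},J_S)$ is a topological ordered space. Since the subfamily $\mathcal{F}$ of smooth temporal functions consists of continuous $J_S$-isotone functions and already represents both $J_S$ and $\mathscr{T}$, and since a representation valid for a subfamily remains valid for any larger family, the full family of continuous isotone functions represents the order and the topology; this is the definition of a completely regularly ordered space. The only step that demands care is to use $J_S$-isotonicity, via Theorem \ref{aob}, rather than the weaker $J$-isotonicity, because the order being represented is the Seifert order $J_S$ and not $J$; there is no genuine analytical obstacle, as the hard smoothing and separation work is already packaged in Theorems \ref{vkg} and \ref{bhd}.
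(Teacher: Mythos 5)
Your proposal is correct and follows essentially the paper's own route: the paper presents Theorem \ref{swq} as a direct rewording of Theorem \ref{vkg}, with the order representation resting on the identity $J_S=T_2$ from Theorem \ref{nio}/\ref{bhd} (equivalently, isotonicity via Theorem \ref{aob} plus the separation statement of Theorem \ref{vkg}). You have merely made explicit the assembly that the paper leaves implicit, including the observation that a representation by a subfamily persists for the larger family of all continuous $J_S$-isotone functions.
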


The quasi-uniformizability of stably causal spacetimes was proved as a corollary of a more general result in \cite{minguzzi12d} (where the proof did not depend on the roundness of the cone). Here we have restricted the family of representing functions, thereby getting a stronger result. For what concerns the representation of just the order other related works are \cite{minguzzi09c,bernard16}.

A rewording of Theorem \ref{xbh} is
\begin{theorem}
Let $(M,\mathscr{F})$ be a globally hyperbolic closed Lorentz-Finsler space and let $h$ be a complete Riemannian metric on $M$. Then
the causal order $J$  and the manifold topology are represented by the smooth Cauchy $h$-steep strictly $\mathscr{F}$-steep temporal functions.
\end{theorem}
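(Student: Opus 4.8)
The plan is to show that the final theorem is essentially a reformulation of Theorem~\ref{xbh} in the language of order and topology representation. Indeed, the theorem to be proved asserts that for a globally hyperbolic closed Lorentz-Finsler space the causal order $J$ and the manifold topology are represented by the smooth Cauchy $h$-steep strictly $\mathscr{F}$-steep temporal functions, and Theorem~\ref{xbh} already provides exactly the functions needed, together with the two separation-type statements (one about pairs $(p,q)\notin J$ and one about neighborhoods). First I would recall that, because the cone structure is globally hyperbolic, Theorem~\ref{mom} gives $J_S=J$, so that $J$ is already a closed order and the notion of representation makes sense in the sense of Sec.~\ref{fir}. The family $\mathcal{F}$ to be used is the collection of all smooth Cauchy $h$-steep strictly $\mathscr{F}$-steep temporal functions, which is non-empty by Theorem~\ref{xbh}.

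To verify condition (i) (representation of the order), I would argue both inclusions. Every function $t\in\mathcal{F}$ is in particular temporal, hence a time function, hence $J$-isotone, so $(p,q)\in J$ forces $t(p)\le t(q)$ for all $t\in\mathcal{F}$. For the converse, suppose $(p,q)\notin J=J_S$; then the penultimate statement of Theorem~\ref{xbh} guarantees a function $t\in\mathcal{F}$ with $t(p)>t(q)$, so the representation criterion (i) fails only on genuinely non-causally-related pairs. This gives the ``if and only if'' of (i). For condition (ii) (representation of the topology), I would invoke directly the final statement of Theorem~\ref{xbh}: for every $p\in M$ and every open neighborhood $O\ni p$ there exist smooth Cauchy $h$-steep strictly $\mathscr{F}$-steep functions $\check t,\hat t\in\mathcal{F}$ with
\[
[\{q\colon \check t(q)<0\}\cap \{q\colon \hat t(q)>0\}] \subset O,
\]
which is exactly (ii) after possibly shifting by constants so that the relevant value at $p$ is placed inside $(\check t<0)\cap(\hat t>0)$; here one only needs that adding a constant to an $h$-steep strictly $\mathscr{F}$-steep temporal function preserves all these properties, which is immediate since the defining conditions involve only $\dd t$.

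The routine check is that the class $\mathcal{F}$ is closed under the operations implicitly used, namely addition of real constants (needed so that the level $0$ can be arranged conveniently at $p$), and that each such function, being Cauchy, is genuinely a temporal function with image adapted to the representation definitions in the opening of Sec.~\ref{fir}; none of these raise difficulties. I do not expect a serious obstacle here, since all the analytic content has already been extracted in Theorem~\ref{xbh} via the smoothing of anti-Lipschitz functions and the stability of global hyperbolicity. The only point requiring a little care — and the closest thing to a ``hard part'' — is making explicit that the separation statement of Theorem~\ref{xbh}, which is phrased for a fixed neighborhood, yields property (ii) verbatim, and that the order separation is controlled precisely by $J=J_S$ rather than some weaker relation; both are handled by citing $J=J_S$ from Theorem~\ref{mom} and the two auxiliary statements of Theorem~\ref{xbh}. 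Thus the proof reduces to assembling these ingredients, and I would close by remarking that, as in Theorem~\ref{swq}, this makes $(M,\mathscr{T},J)$ a completely regularly ordered space.
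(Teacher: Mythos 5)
Your proposal is correct and takes essentially the same approach as the paper: the paper offers no separate proof, introducing this statement as ``a rewording of Theorem \ref{xbh}'', and your assembly—isotonicity of temporal functions for one direction of (i), the penultimate statement of Theorem \ref{xbh} for the converse, and its final statement for (ii)—is exactly that rewording made explicit. (Your detour through $J_S=J$ via Theorem \ref{mom} is harmless but unnecessary, since Theorem \ref{xbh} is phrased directly in terms of $J$.)
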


Let $r^+=\textrm{max}\{r,0\}$. On a closed Lorentz-Finsler space $(M,\mathscr{F})$ we say that {\em the stable distance is represented by} $\mathcal{F}$ if for every $p,q\in M$
\begin{equation} \label{djj}
 D(p,q)=\mathrm{inf} \big\{[f(q)-f(p)]^+\colon \ f \in \mathscr{\mathcal{F}}\big\}.
\end{equation}
If $D$ is represented and $\mathcal{F}$ consists of continuous and isotone functions we have for $f\in \mathcal{F}$, and $(p,q)\in J$, $f(q)-f(p)\ge d(p,q)\ge \int_x \mathscr{F}(\dot x)\dd t$, where $x$ is any $C$-causal connecting curve. This fact suggests to try to represent the stable distance using $\mathscr{F}$-steep temporal functions.

Notice that in Eq.\ (\ref{djj}) the right-hand side is upper semi-continuous so the distance that has chances to be representable is $D$ rather than $d$.


\begin{proposition} \label{ihg}
Let $(M,\mathscr{F})$ be a proper Lorentz-Finsler space. If $\tau\colon M\to \mathbb{R}$ is $\mathscr{F}$-steep and temporal, then $\mathscr{F}^o(-\dd \tau)\ge 1$.
\end{proposition}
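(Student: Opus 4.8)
The plan is to reduce everything to a pointwise statement in each tangent space and then combine two facts already available in Section \ref{zzp}: the characterization (\ref{haa}) of the interior of the polar cone, and the infimum formula for $\mathscr{F}^o$ in the corollary to Proposition \ref{nut}. First I would fix $p\in M$ and observe that, because $(M,\mathscr{F})$ is a proper Lorentz-Finsler space, the cone $C^\times_P$ of (\ref{nhz}) is a proper cone in $T_pM\oplus\mathbb{R}$ for every $P=(p,r)$. Unravelling the definition, its non-empty interior forces $\mathrm{Int}\,C_p\neq\emptyset$ and $\mathscr{F}(y_0)>0$ for some interior $y_0$, while its convexity forces $\mathscr{F}|_{C_p}$ to be concave; together with positive homogeneity this shows that $(T_pM,\mathscr{F}|_{C_p})$ is a proper Lorentz-Minkowski space, so the fiberwise polar $\mathscr{F}^o|_p$ and all the machinery of Section \ref{zzp} are available.

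The first real step is to locate $-\dd\tau|_p$ inside the domain of $\mathscr{F}^o$, in fact inside $\mathrm{Int}\,C_p^o$. Since $\tau$ is temporal, $\dd\tau$ is positive on $C_p$, hence $\langle -\dd\tau,y\rangle=-\dd\tau(y)<0$ for every $y\in C_p$, and $-\dd\tau|_p\neq 0$ because $C_p\neq\emptyset$. By the characterization (\ref{haa}) of $\mathrm{Int}\,C^o$ this says exactly $-\dd\tau|_p\in\mathrm{Int}\,C_p^o$, so $\mathscr{F}^o(-\dd\tau)$ is well defined and positive.

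The second step applies the corollary to Proposition \ref{nut}, which for $w\in\mathrm{Int}\,C_p^o$ gives $\mathscr{F}^o(w)=\inf_{y\in C_p}\big(-\langle w,y\rangle/\mathscr{F}(y)\big)$, with the stated convention that the ratio equals $+\infty$ when $\mathscr{F}(y)=0$. Taking $w=-\dd\tau|_p$ and using $-\langle -\dd\tau,y\rangle=\dd\tau(y)$, this becomes
\[
\mathscr{F}^o(-\dd\tau)=\inf_{y\in C_p}\frac{\dd\tau(y)}{\mathscr{F}(y)}.
\]
Now the $\mathscr{F}$-steepness of $\tau$ (definition (f) of Section \ref{nug} with $f=\mathscr{F}$) says $\dd\tau(y)\geq\mathscr{F}(y)$ for all $y\in C$, so every term with $\mathscr{F}(y)>0$ satisfies $\dd\tau(y)/\mathscr{F}(y)\geq 1$, while the terms with $\mathscr{F}(y)=0$ contribute $+\infty$; hence the infimum is at least $1$, which is the desired conclusion. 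I expect the only delicate point to be the fiberwise reduction of the opening paragraph — confirming that properness of the Lorentz-Finsler space really delivers a proper Lorentz-Minkowski space in each $T_pM$, so that the corollary to Proposition \ref{nut} genuinely applies — since once that is settled the estimate is an immediate two-line combination of the interior characterization and the steepness hypothesis.
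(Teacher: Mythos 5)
Your proposal is correct and follows essentially the same route as the paper: both arguments place $-\dd\tau$ in $\mathrm{Int}(C^o)$ via temporality and then combine $\mathscr{F}$-steepness with the polar duality of $C^\times$ and $(C^\times)^o$, the only difference being that you access the attainment fact through the corollary to Proposition \ref{nut} (the infimum formula for $\mathscr{F}^o$, whose nontrivial direction is exactly the attainment coming from Remark \ref{oqf}), whereas the paper invokes Remark \ref{oqf} directly to produce the polarly related pair $Y=(y,\mathscr{F}(y))$ and evaluates there. Your explicit fiberwise reduction (properness of the Lorentz-Finsler space giving a proper Lorentz-Minkowski space in each $T_pM$) is a correct preliminary that the paper leaves implicit, and it is indeed justified by the stated equivalence in the definition of proper Lorentz-Minkowski space.
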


\begin{proof}
 Since $\tau$ is temporal  $-\dd \tau \in \mathrm{Int} (C^o)$. By definition of $\mathscr{F}^o$ we  have \[(-\dd \tau, \mathscr{F}^o(-\dd \tau)) \in \p (C^\times)^o.\] By applying Remark \ref{oqf}  to the pair of proper cones $C^\times$ and $(C^\times)^o$, we get that there is $Y\in \p C^\times$ such that $\langle (-\dd \tau, \mathscr{F}^o(-\dd \tau)) , Y \rangle=0$, and $Y$ can be chosen of the form $Y=(y,\mathscr{F}(y))$ with $y\in C$, thus $0<\dd \tau(y)=\mathscr{F}^o(-\dd \tau)\mathscr{F}(y)$, where the first inequality follows from $-\dd \tau \in \mathrm{Int} (C^o)$, hence $\mathscr{F}(y)\ne 0$. But $\tau$ is also $\mathscr{F}$-steep, thus $\mathscr{F}(y)\le \mathscr{F}^o(-\dd \tau)\mathscr{F}(y)$ which implies $1\le \mathscr{F}^o(-\dd \tau)$.
\end{proof}

\begin{theorem} \label{xzo}
Let $(M,\mathscr{F})$ be a closed Lorentz-Finsler space.
If there is a  $\mathscr{F}$-steep temporal function $f$, then $(M,\mathscr{F})$ is stable and  for every $p,q\in M$
\[
 D(p,q) \le [f(q)-f(p)]^+.
\]

\end{theorem}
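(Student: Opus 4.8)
The plan is to prove the two assertions separately: first that $(M,\mathscr{F})$ is stable, and then the inequality $D(p,q)\le[f(q)-f(p)]^+$. In fact the inequality already forces $D<\infty$, so once stable causality is in hand, stability of $(M,\mathscr{F})$ follows from it. My only causality input is the anti-Lipschitz data attached to $f$: since $f$ is temporal it is, by the characterization of temporal functions as the $C^1$ stably locally anti-Lipschitz ones, locally anti-Lipschitz with respect to some $C^0$ proper cone structure $C'>C$; by $\sigma$-compactness (Definition (e)) I may fix a Riemannian metric $\hat h$ with $f(\gamma(1))-f(\gamma(0))\ge \ell^{\hat h}(\gamma)$ for every $C'$-causal curve $\gamma\colon[0,1]\to M$, and correspondingly $\dd f(v)\ge\|v\|_{\hat h}>0$ for every $v\in C'$. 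In particular $f$ strictly increases along every nonconstant $C'$-causal curve, so $(M,C')$ is causal and $(M,C)$ is stably causal. The two facts I shall reuse are that every $C'$-causal curve $x$ from $p$ to $q$ satisfies $\ell^{\hat h}(x)\le f(q)-f(p)$, and that along it $\int_x \dd f(\dot x)\,\dd t=f(q)-f(p)$.

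The heart of the argument is the construction of the approximating Finsler structures. The naive hope of making $f$ itself $\mathscr{F}'$-steep for some $\mathscr{F}'>\mathscr{F}$ is impossible: on the coincidence locus $\{\dd f=\mathscr{F}\}$ any concave $\mathscr{F}'\le \dd f$ must satisfy $\mathscr{F}'\le\mathscr{F}$, contradicting $\mathscr{F}'>\mathscr{F}$. So I allow a controlled error. Fix (by Prop.\ \ref{cso}) a locally Lipschitz proper $\mathscr{F}_1>\mathscr{F}$ on a cone $C_1$ with $C<C_1<C'$, and for $\lambda>0$ form the fibrewise convex combination $\mathscr{F}'_\lambda:=(1-t)\,\mathscr{F}_1+t\,\dd f$ with weight $t=t(x)\in(0,1)$ depending on the base point. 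As a convex combination of the concave $\mathscr{F}_1$ and the linear $\dd f$ it is concave and positively homogeneous, and since $\mathscr{F}_1>\mathscr{F}$ and $\dd f\ge\mathscr{F}$ on $C$ one gets $\mathscr{F}'_\lambda>\mathscr{F}$ on all of $C$ (including $\p C$), whence every boundary point of $C^\times$ is interior to $C^\times_\lambda{}'$, i.e.\ $\mathscr{F}'_\lambda>\mathscr{F}$ in the required sense. Writing $\mathscr{F}'_\lambda=\dd f+(1-t)(\mathscr{F}_1-\dd f)$ and choosing $1-t(x)=\lambda/(1+A(x))$, with $A(x)=\sup\{(\mathscr{F}_1-\dd f)^+(y)/\|y\|_{\hat h}:y\in C_{1,x}\}$ finite by compactness of the fibre section, gives the key pointwise bound $\mathscr{F}'_\lambda(y)\le \dd f(y)+\lambda\|y\|_{\hat h}$ on $C_1$. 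Since $\mathscr{F}'_\lambda$ is only continuous (as $\dd f$ is), I finally replace it, again by Prop.\ \ref{cso}, by a locally Lipschitz proper $\mathscr{F}''$ with $\mathscr{F}<\mathscr{F}''<\mathscr{F}'_\lambda$; then $\mathscr{F}''>\mathscr{F}$, $\mathscr{F}''\le \dd f+\lambda\|\cdot\|_{\hat h}$, and its cone lies inside $C'$.

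The length estimate then closes the argument. For any $\mathscr{F}''$-causal curve $x$ from $p$ to $q$ (necessarily $C'$-causal), the two facts of the first paragraph give $\ell''(x)=\int_x\mathscr{F}''(\dot x)\,\dd t\le \int_x\dd f(\dot x)\,\dd t+\lambda\int_x\|\dot x\|_{\hat h}\,\dd t\le (f(q)-f(p))+\lambda\,\ell^{\hat h}(x)\le(1+\lambda)\,[f(q)-f(p)]^+$; here the existence of a nonconstant connecting curve forces $f(q)>f(p)$, the case $p=q$ admits only the constant curve by causality, and if no connecting curve exists then $d''(p,q)=0$. In every case $d''(p,q)\le(1+\lambda)\,[f(q)-f(p)]^+$, so $D(p,q)\le d''(p,q)\le(1+\lambda)\,[f(q)-f(p)]^+$; letting $\lambda\to0^+$ yields $D(p,q)\le[f(q)-f(p)]^+$. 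As the right-hand side is finite, $D<\infty$, and together with stable causality this shows $(M,\mathscr{F})$ is stable.

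The step I expect to be the main obstacle is the construction of the concave approximating Finsler functions. The genuine obstruction is precisely the locus $\{\dd f=\mathscr{F}\}$, where steepness is non-strict: one cannot keep a candidate $\mathscr{F}'$ simultaneously concave, below $\dd f$, and strictly above $\mathscr{F}$. Overcoming this is what forces the convex-combination-with-$\mathscr{F}_1$ device, and a single constant weight $t$ will not do on the noncompact $M$ because $\mathscr{F}_1-\dd f$ need not be $\hat h$-bounded, so the base-dependent weight $t(x)\to1$ is essential. The anti-Lipschitz bound $\ell^{\hat h}(x)\le f(q)-f(p)$ is then exactly what converts the residual error term $\lambda\|\cdot\|_{\hat h}$ into the harmless factor $(1+\lambda)$; note that this is where finite $\hat h$-length of the competing curves is guaranteed without any global hyperbolicity assumption.
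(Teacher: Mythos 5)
Your proof is correct, but it follows a genuinely different route from the paper's. The paper dissolves what you identify as the main obstacle --- the coincidence locus $\{\dd f=\mathscr{F}\}$ --- with a one-line trick: since $f$ is temporal, $\dd f>0$ on $C$, so $\tilde f=(1+\epsilon)f$ is \emph{strictly} $\mathscr{F}$-steep; one then proves $D(p,q)\le(1+\epsilon)[f(q)-f(p)]^+$ and lets $\epsilon\to 0^+$. Thus your claimed ``impossibility'' of the naive approach only rules out keeping $f$ fixed; the slack can equally well be put into the function rather than into the Finsler structure, precisely because the target inequality survives a limit. For the strictly steep $\tilde f$ the compact sections $\dd\tilde f_x^{-1}(1)\cap C_x$ miss the indicatrix $\mathscr{I}_x$, so a locally Lipschitz proper $\mathscr{F}'>\mathscr{F}$ keeping $\tilde f$ strictly $\mathscr{F}'$-steep is obtained simply by taking $\mathscr{I}'$ close to $\mathscr{I}$; the paper then deduces $\dd\tilde f(\dot\gamma)\ge\mathscr{F}'(\dot\gamma)$ from the polar machinery (Prop.~\ref{ihg} and the reverse Cauchy--Schwarz inequality; plain $\mathscr{F}'$-steepness would in fact already suffice here) and integrates. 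You instead keep $f$ and put the slack into the cone/Finsler data, constructing $\mathscr{F}''>\mathscr{F}$ with $\mathscr{F}''\le \dd f+\lambda\Vert\cdot\Vert_{\hat h}$ via a base-dependent convex combination with an auxiliary $\mathscr{F}_1$, and you absorb the error term $\lambda\,\ell^{\hat h}$ with the anti-Lipschitz bound $\ell^{\hat h}\le f(q)-f(p)$ along $C'$-causal curves. What your route buys: it bypasses the Lorentz--Minkowski duality theory altogether, relying only on the anti-Lipschitz characterization of temporal functions, and it isolates clearly why competing curves have finite $\hat h$-length without any global hyperbolicity. What it costs: a longer and more delicate construction, and one detail you must add --- the weight $t(x)$, equivalently $A(x)$, has to be chosen \emph{continuous}, since otherwise $\mathscr{F}'_\lambda$ is not a $C^0$ proper Lorentz--Finsler function and Prop.~\ref{cso} cannot be invoked to produce the locally Lipschitz $\mathscr{F}''$. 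This is harmless ($A$ is continuous, being a supremum of a continuous function over the continuously varying compact unit sections of the locally Lipschitz cone $C_1$; alternatively, replace $A$ by any continuous majorant), but as written it is an unstated step in an otherwise sound argument.
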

By Th.\ \ref{aob}  the right-hand side is positive if $(p,q)\in J_S\backslash \Delta$.

\begin{proof}
The existence of $f$ implies that $(M,C)$ is stably causal.
If $(p,q)\notin J_S$ or $p=q$ by Th.\ \ref{upq} we have $D(p,q)=0$ so the inequality is satisfied. Let $(p,q)\in J_S\backslash \Delta$, by Th.\ \ref{aob} $f(q)-f(p)>0$. We prove the result for $f$ {\em strictly} $\mathscr{F}$-steep temporal function. If $f$ is not so then $\tilde f=(1+\epsilon) f$, $\epsilon>0$, is, then once $D(p,q) \le (1+\epsilon) [f(q)-f(p)]^+$ is proved it is sufficient to take the limit $\epsilon \to 0$.
We are going  to prove that there is a locally Lipschitz proper Lorentz-Finsler space $\mathscr{F}'$, $\mathscr{F}<\mathscr{F}'$, such that $d'(p,q) \le f(q)-f(p)$. Since $f$ is a strictly $\mathscr{F}$-steep temporal function on every tangent space $T_xM$, the indicatrix $\mathscr{I}_x=\mathscr{F}_x^{-1}(1)$ does not intersect the compact section $\dd f^{-1}_x(1)\cap C_x$.  Thus taking the indicatrix of  $\mathscr{I}'$ sufficiently close to  $\mathscr{I}$, the same property is shared by $\mathscr{F}'$ which means that $f$ is a strictly $\mathscr{F}'$-steep temporal function. By Prop.\ \ref{ihg} $\mathscr{F}'{}^o(-\dd f) \ge 1$ and
%
%
%
for
every $C'$-causal curve $\gamma$ connecting $p$ to $q$
\begin{align*}
f(q)-f(p)&=\int \dd f(\dot \gamma) \dd t \ge \int \mathscr{F}'{}^o(-\dd f) \mathscr{F}'(\dot \gamma) \dd t \\
&\ge \int  \mathscr{F}'(\dot \gamma) \dd t \ge  \ell'(\gamma)
\end{align*}
where we used the reverse Cauchy-Schwarz inequality for $(M,\mathscr{F}')$, cf.\ Prop.\ \ref{nut}. Thus taking the supremum over $\gamma$ we find that for every $p,q\in M$, $d'(p,q)\le [f(q)-f(p)]^+$ which implies $D<+\infty$, namely $(M,\mathscr{F})$ is stable. Since $D(p,q)\le d'(p,q)$ we have proved the desired result.
\end{proof}

On Sec.\ \ref{mis} we have  defined on $M^\times=M\times \mathbb{R}$  a cone structure $C^\downarrow$ which at $P=(p,r)$ is given by
\begin{equation}
C{}^\downarrow_{P}=\{(y,z) \colon y\in C_p\cup \{0\}, \ z \le \mathscr{F}(y) \}\backslash \{(0,0)\}.
\end{equation}
There we have shown that  $(M,C^\downarrow)$ is stably causal if and only if $(M,C)$ is stably causal, cf.\ Th.\ \ref{ssh}.

\begin{theorem} \label{mfy}
Let $(M,\mathscr{F})$ be a closed Lorentz-Finsler space. Then $(M,C^\downarrow)$ is globally hyperbolic  if and only if $(M,C)$ is globally hyperbolic.
\end{theorem}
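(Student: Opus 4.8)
The plan is to verify the characterization $(\alpha)$ of Definition \ref{sot} — non-imprisonment together with boundedness of the causally convex hull of bounded sets — on each side, exploiting two elementary features of $C^\downarrow$. Since $\mathscr{F}\ge 0$, every $C$-causal curve $\gamma$ lifts to a \emph{horizontal} $C^\downarrow$-causal curve $t\mapsto(\gamma(t),c)$ at any constant height $c$ (indeed $\dot r=0\le \mathscr{F}(\dot\gamma)$), and the purely vertical downward direction $(0,-1)$ is always $C^\downarrow$-causal. I write $\pi_1\colon M^\times\to M$ for the projection, fix a complete Riemannian metric $h$ on $M$ and the complete metric $h^\times=h\oplus \dd z^2$ on $M^\times$, and recall that a $C^\downarrow$-causal curve $\Gamma=(\gamma,r)$ satisfies $\dot\gamma\in C\cup\{0\}$ and $\dot r\le \mathscr{F}(\dot\gamma)$ almost everywhere.

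I would treat the easier implication first: $(M^\times,C^\downarrow)$ globally hyperbolic $\Rightarrow (M,C)$ globally hyperbolic. For non-imprisonment, given a future inextendible $h$-arc length parametrized $C$-causal curve $\gamma$ trapped in a compact $K$, the horizontal lift $\Gamma=(\gamma,0)$ is $C^\downarrow$-causal, trapped in $K\times\{0\}$, and has infinite $h^\times$-arc length (equal to the $h$-arc length of $\gamma$), hence is future inextendible by Corollary \ref{dox}; this contradicts non-imprisonment of $(M^\times,C^\downarrow)$. For the hull condition, given a bounded $B\subset M$ I would set $B^\times=B\times\{0\}$ and check $\big(J^+(B)\cap J^-(B)\big)\times\{0\}\subseteq (J^\downarrow)^+(B^\times)\cap (J^\downarrow)^-(B^\times)$ by horizontally lifting the connecting $C$-causal curves; as the right-hand side is bounded, so is $J^+(B)\cap J^-(B)$.

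For the converse, $(M,C)$ globally hyperbolic $\Rightarrow(M^\times,C^\downarrow)$ globally hyperbolic, I would again verify $(\alpha)$. Non-imprisonment is the main obstacle, precisely because the projection of a $C^\downarrow$-causal curve need \emph{not} be a $C$-causal curve: it may be stationary on sets of positive measure. Suppose $\Gamma=(\gamma,r)$ is future inextendible and imprisoned in a compact $K^\times\subset K\times[-R,R]$, and split according to the $h$-length $L_\gamma=\int_0^\infty\|\dot\gamma\|_h\,\dd t$ of the projection. If $L_\gamma=\infty$, deleting the stationary intervals and reparametrizing by $h$-arc length turns $\gamma$ into a genuine continuous $C$-causal curve of infinite $h$-length trapped in $K$, hence future inextendible by Corollary \ref{dox}, contradicting non-imprisonment of $(M,C)$. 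If $L_\gamma<\infty$, then choosing a constant $a>0$ with $\mathscr{F}(y)\le a\|y\|_h$ on $C|_K$ (finite by upper semi-continuity and positive homogeneity of $\mathscr{F}$, as in Proposition \ref{ppp}) gives $\int_0^\infty(\dot r)^+\,\dd t\le\int_0^\infty\mathscr{F}(\dot\gamma)\,\dd t\le aL_\gamma<\infty$; boundedness of $r$ then forces $\int_0^\infty|\dot r|\,\dd t<\infty$, so both $\gamma$ and $r$ converge and $\Gamma$ has finite $h^\times$-arc length, again contradicting its inextendibility by Corollary \ref{dox}.

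Finally, for the hull condition in this direction, given a bounded $B^\times\subset \bar K\times[\alpha,\beta]$ with $K$ compact, I would use the projection together with the length inequality: if $((p_1,r_1),(q,s))\in J^\downarrow$ then integrating $\dot r\le\mathscr{F}(\dot\gamma)$ along the connecting curve yields $(p_1,q)\in J$ and $s-r_1\le d(p_1,q)$, and symmetrically from the past, so $\pi_1\big((J^\downarrow)^+(B^\times)\cap (J^\downarrow)^-(B^\times)\big)$ lies in the causally convex hull $J^+(K)\cap J^-(K)$ of $K$, which is compact by global hyperbolicity of $(M,C)$. On this compact hull $d$ is finite and bounded by some $D_0$ (Proposition \ref{ppp}), whence the admissible heights $s$ satisfy $\alpha-D_0\le s\le\beta+D_0$; thus the causally convex hull of $B^\times$ is bounded. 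This establishes $(\alpha)$ for $(M^\times,C^\downarrow)$ and completes the equivalence. The only delicate point throughout is the possible stationarity of projected $C^\downarrow$-causal curves, which is exactly what the length dichotomy in the previous paragraph resolves; the remaining steps are transcriptions of the $C^\times$ arguments of Proposition \ref{ppp}.
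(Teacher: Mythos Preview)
Your argument is correct. You verify characterization $(\alpha)$ directly, and the only genuinely delicate step---the non-imprisonment of $(M^\times,C^\downarrow)$ when $(M,C)$ is globally hyperbolic, where the projected curve may be stationary on a set of positive measure---is handled cleanly by your length dichotomy together with Corollary~\ref{dox}. The hull argument is the natural adaptation of the $C^\times$ case in Proposition~\ref{ppp}.

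The paper's proof is entirely different and much shorter: it uses the characterization of global hyperbolicity through Cauchy temporal functions and Cauchy hypersurfaces (Theorem~\ref{xxy}). For the forward implication one takes a Cauchy temporal function $f$ on $(M,C)$---in fact a strictly $\mathscr{F}$-steep one, supplied by Theorem~\ref{xbh}---and checks that $F(p,r)=f(p)-r$ is Cauchy temporal for $C^\downarrow$ (steepness is needed so that $\dd f(y)-z>\mathscr{F}(y)-z\ge 0$ on $C^\downarrow$). For the converse one intersects a Cauchy hypersurface of $(M^\times,C^\downarrow)$ with $\{r=0\}$. Your route trades brevity for self-containedness: it avoids the heavy machinery of Theorems~\ref{xxy} and~\ref{xbh} and would work even in a presentation where those results are not yet available, at the cost of the extra case analysis on the projection. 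The paper's route, by contrast, makes the result an immediate corollary of the time-function theory, and via Lemma~\ref{idf} fits naturally into the product-trick philosophy used throughout Sections~\ref{mis}--\ref{las}.
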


\begin{proof}
Let $f$ be a Cauchy temporal function for $(M,C)$ then $F((p,r))=f(p)-r$ is a Cauchy temporal function for $(M,C^\downarrow)$, thus if $(M,C)$ is globally hyperbolic then $(M,C^\downarrow)$ is globally hyperbolic. For the other direction, the intersection of a Cauchy hypersurface for $(M,C^\downarrow)$ with  $r=0$ provides a Cauchy hypersurface for $(M,C)$.
\end{proof}

\begin{lemma} \label{mwr}
Let $(M,\mathscr{F})$ be a closed Lorentz-Finsler space.
The relation
\[
R=\{((p,r),(p',r'))\colon (p,p')\in J_S \textrm{ and } r'-r \le D(p,p')\}
\]
satisfies $R\subset J^\downarrow _S$ where $J^\downarrow _S$ is the Seifert relation for $(M,C^\downarrow)$.
\end{lemma}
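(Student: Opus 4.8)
The plan is to reduce the inclusion to a family of approximate connectivity statements, one for each enlarged cone structure on $M^\times$, and then to produce the connecting curves by lifting causal curves of $M$ into $M^\times$. First I would note that $(M^\times,C^\downarrow)$ is itself a closed cone structure: upper semi-continuity of $C$ and $\mathscr{F}$ makes the fibre bundle $C^\downarrow$ closed in the slit tangent bundle, and each fibre is sharp and convex (it is the hypograph of the concave function $\mathscr{F}$ over $C_p\cup\{0\}$, with only the downward vertical ray adjoined). Hence Prop.~\ref{paq} applies and gives $J^\downarrow_S=\cap_{\tilde C^\downarrow>C^\downarrow}\bar J_{\tilde C^\downarrow}$, where by Th.~\ref{ddo} the intersection may be taken over locally Lipschitz proper cone structures $\tilde C^\downarrow>C^\downarrow$ on $M^\times$. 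So it suffices to fix one such $\tilde C^\downarrow$ and prove $((p,r),(p',r'))\in\bar J_{\tilde C^\downarrow}$.

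The core of the argument, and the step where the real work lies, is the following fibre construction. Put $I=[\min\{r,r'\},\max\{r,r'\}]$. Since $(0,-1)\in C^\downarrow\subset\mathrm{Int}\,\tilde C^\downarrow$ and $\tilde C^\downarrow$ is convex, $\tilde C^\downarrow$ is downward closed. For each $p$ the compact section of $C^\downarrow_p$ lies in $\mathrm{Int}\,\tilde C^\downarrow_{(p,s)}$ for every $s$; by continuity of $\tilde C^\downarrow$, compactness of $I$, and Prop.~\ref{low}, it lies uniformly over $s\in I$ in the interior, i.e. inside $\cap_{s\in I}\tilde C^\downarrow_{(p,s)}$ with some room. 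I would exploit this room to fit at each $p$ a \emph{proper} downward cone slightly wider than $C^\downarrow_p$ into $\cap_{s\in I}\tilde C^\downarrow_{(p,s)}$ (downward closedness guarantees that once the upper rim and the downward ray are inside, the whole hypograph is), and then globalize exactly as in the proofs of Prop.~\ref{ohg}, Prop.~\ref{cso} and Th.~\ref{ddo}, using convex combinations relative to a hyperplane (Prop.~\ref{doo}) with common upper bound $\cap_{s\in I}\tilde C^\downarrow_{(p,s)}$. The output is a locally Lipschitz proper Lorentz-Finsler structure $(\hat C,\hat{\mathscr{F}})$ with $\hat{\mathscr{F}}>\mathscr{F}$ (equivalently $\hat C^\times>C^\times$) whose downward cone obeys $\hat C^\downarrow_p\subset\tilde C^\downarrow_{(p,s)}$ for all $p$ and all $s\in I$. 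The only genuine subtlety is the dependence of $\tilde C^\downarrow$ on the fibre coordinate, which I neutralize by restricting attention to the compact interval $I$; the last paragraph explains why $I$ is all that the connecting curve visits.

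Finally I would build the curve. Since $(p,p')\in J_S=\cap_{C'>C}J_{C'}\subset J_{\hat C}$, there is a $\hat C$-causal curve $x$ from $p$ to $p'$, and because $\hat{\mathscr{F}}>\mathscr{F}$ is a locally Lipschitz proper Lorentz-Finsler structure, it is one of the competitors in the infimum defining $D$, whence $\hat d(p,p')\ge D(p,p')\ge r'-r$. Given $\epsilon>0$ I choose $x$ with $\ell_{\hat{\mathscr{F}}}(x)>r'-r-\epsilon$ (possible as $\sup_x\ell_{\hat{\mathscr{F}}}(x)=\hat d(p,p')\ge r'-r$) and lift it as in Prop.~\ref{maq}: set $\hat X(t)=(x(t),\rho(t))$ with $\rho(0)=r$, $\dot\rho\le\hat{\mathscr{F}}(\dot x)$, reaching the admissible endpoint $r''=\min\{r',\,r+\ell_{\hat{\mathscr{F}}}(x)\}\le r'$, so that $r''-r\le\ell_{\hat{\mathscr{F}}}(x)$ and $r''>r'-\epsilon$. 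Then $\rho$ is monotone with image in $I$, the tangent $(\dot x,\dot\rho)$ lies in $\hat C^\downarrow_{x(t)}\subset\tilde C^\downarrow_{\hat X(t)}$, hence $\hat X$ is a $\tilde C^\downarrow$-causal curve from $(p,r)$ to $(p',r'')$ with $r''\to r'$ as $\epsilon\to0$ (and $r''=r'$ outright when $r'\le r$, where one may even use the purely vertical downward segment, or when the supremum is attained). Therefore $((p,r),(p',r'))\in\bar J_{\tilde C^\downarrow}$, and since $\tilde C^\downarrow>C^\downarrow$ was arbitrary, $((p,r),(p',r'))\in J^\downarrow_S$, which is the desired inclusion $R\subset J^\downarrow_S$.
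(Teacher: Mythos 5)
Your proof is correct, and it reaches the inclusion by a route that differs from the paper's precisely at the one step where the argument is delicate. Both arguments share the same skeleton: reduce $J^\downarrow_S$ to an intersection of closures $\overline{J_{\tilde C^\downarrow}}$ over wider locally Lipschitz proper cone structures on $M^\times$ (Prop.~\ref{paq} together with Th.~\ref{ddo}), and obtain membership in each closure by lifting nearly maximizing causal curves of $M$ into $M^\times$, exactly as in Prop.~\ref{maq}. The difference is the order of quantifiers and the bridge between product-form structures and arbitrary structures on $M^\times$. The paper first fixes $\mathscr{F}'>\mathscr{F}$ and proves $R\subset\overline{J'{}^\downarrow}$ by your lifting argument, and then asserts, without proof, that every locally Lipschitz proper cone structure $D>C^\downarrow$ on $M^\times$ admits some $\mathscr{F}'>\mathscr{F}$ with $C^\downarrow\le C'{}^\downarrow<D$, whence $\cap_{\mathscr{F}'>\mathscr{F}}\overline{J'{}^\downarrow}\subset\cap_{D>C^\downarrow}\overline{J_D}=J^\downarrow_S$. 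You instead fix the arbitrary $\tilde C^\downarrow>C^\downarrow$ first and manufacture a fiber-translation-invariant structure $\hat{\mathscr{F}}>\mathscr{F}$ satisfying $\hat C^\downarrow_q\subset\tilde C^\downarrow_{(q,s)}$ only for $s$ in the compact interval $I$ determined by the endpoints, which suffices because your lifted curve has monotone fiber coordinate with image in $I$. This localization is not cosmetic: the paper's bridging claim appears to fail as a global statement, because the gap between $\tilde C^\downarrow_{(p,s)}$ and $C^\downarrow_p$ may pinch to zero as $|s|\to\infty$ (take, e.g., convex combinations $(1-\lambda(s))C^\downarrow_p+\lambda(s)C_{\mathrm{big}}$ with $\lambda(s)\to 0$), in which case $\cap_{s\in\mathbb{R}}\tilde C^\downarrow_{(p,s)}=C^\downarrow_p$ and no translation-invariant structure strictly wider than $C^\downarrow$ fits beneath $\tilde C^\downarrow$ on all of $M^\times$. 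Compactness of $I$, exploited through Prop.~\ref{low} and the convex-combination machinery of Prop.~\ref{doo}, Prop.~\ref{ohg} and Prop.~\ref{cso}, is exactly what makes your fitting possible. In short, the paper's route buys brevity; yours buys an airtight treatment of the fiber dependence and, in effect, supplies the justification that the paper's one-line bridging step omits.
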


\begin{proof}
Let  $((p,r),(p',r'))\in R$ then for every locally Lipschitz proper Lorentz-Finsler space $\mathscr{F}'>\mathscr{F}$, as $(p,p')\in J_S$ we have $(p,p')\in J'$ and since $r'-r\le D(p,p')$ we have also $r'-r\le d(p,p')$. For every $\epsilon>0$ we can find a continuous $C'$-causal curve $x$ such that $(r'-\epsilon)-r\le \ell(x) <  d(p,p')$, thus $((p,r),(p',r'-\epsilon)) \in J'{}^\downarrow$, which implies  $((p,r),(p',r')) \in \overline{J'{}^\downarrow}$, and hence $R\subset \cap_{\mathscr{F}'>\mathscr{F}}  \overline{J'{}^\downarrow}$. For every locally Lipschitz proper cone structure $D>C^\downarrow$ we can find $\mathscr{F}'>\mathscr{F}$ such that $D> C'{}^\downarrow\ge C^\downarrow$, thus $\cap_{\mathscr{F}'>\mathscr{F}}  \overline{J'{}^\downarrow} \subset \overline{J_D}$ and given the arbitrariness of $D$, we have by Prop.\ \ref{paq}, $R\subset \cap_{\mathscr{F}'>\mathscr{F}}  \overline{J'{}^\downarrow} \subset J_S$.
\end{proof}

The Seifert relation on $(M^\times, C^\downarrow)$ is nicely related with the stable distance.

\begin{theorem} \label{sov}
Let $(M,\mathscr{F})$ be a stably causal closed Lorentz-Finsler space.
The Seifert relation of $(M^\times, C^\downarrow)$ is given by
\begin{equation} \label{vyv}
J_S^\downarrow=\{((p,r),(p',r'))\colon (p,p')\in J_S \textrm{ and } r'-r \le D(p,p')\}.
\end{equation}
\end{theorem}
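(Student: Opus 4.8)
By Lemma \ref{mwr} we already have $R\subseteq J_S^\downarrow$, so the entire content of the theorem is the reverse inclusion $J_S^\downarrow\subseteq R$. The plan is to prove, for every locally Lipschitz proper Lorentz-Finsler space $\mathscr{F}'>\mathscr{F}$, the single inclusion
\[
J_S^\downarrow\subseteq \bar J_{C'^\downarrow},
\]
and then to intersect over all such $\mathscr{F}'$. Indeed, arguing as in Prop.\ \ref{maq} but for the one-sided cone $C^\downarrow$, a $C'^\downarrow$-causal curve from $(p,r)$ to $(p',r')$ projects to a $C'$-causal curve from $p$ to $p'$ with $r'-r\le d'(p,p')$ (there is no lower bound, since the fibers are $C^\downarrow$-causal downward), whence $\bar J_{C'^\downarrow}=\{((p,r),(p',r'))\colon (p,p')\in\bar J_{C'},\ r'-r\le d'(p,p')\}$. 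Since $\cap_{\mathscr{F}'>\mathscr{F}}\bar J_{C'}=J_S$ by Prop.\ \ref{paq}, and $\inf_{\mathscr{F}'>\mathscr{F}} d'(p,p')=D(p,p')$ by the very definition of the stable distance, intersecting the right-hand sides yields exactly $R$. Thus the whole proof reduces to the displayed per-$\mathscr{F}'$ inclusion.

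Before attacking it, it is convenient to record two elementary symmetries of $C^\downarrow$. The vertical translations $(q,\rho)\mapsto(q,\rho+s)$ preserve $C^\downarrow$, so $J_S^\downarrow$ is translation invariant and depends on $((p,r),(p',r'))$ only through $(p,p',r'-r)$. Moreover the fibers $\{q\}\times\mathbb{R}$ are $C^\downarrow$-causal in the direction of decreasing $\rho$, so by transitivity the set of admissible increments for fixed $(p,p')$ is a closed downward ray; hence $J_S^\downarrow=\{((p,r),(p',r'))\colon r'-r\le \Delta(p,p')\}$ for some $\Delta\colon M\times M\to[-\infty,+\infty]$, and the theorem becomes the assertion that $\Delta=D$ on $J_S$ and $\Delta=-\infty$ off $J_S$. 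Lemma \ref{mwr} already gives $\Delta\ge D$ on $J_S$, so only the upper bound on $\Delta$ and its vanishing off $J_S$ remain, and both are packaged into the inclusion above.

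To prove $J_S^\downarrow\subseteq \bar J_{C'^\downarrow}$ I would fix $\mathscr{F}'>\mathscr{F}$, choose intermediate locally Lipschitz proper structures $C<C_k<C'$ with $C_k\downarrow C$, $\mathscr{F}_k\downarrow\mathscr{F}$, and build a decreasing sequence of locally Lipschitz proper cone structures $D_k>C^\downarrow$ on $M^\times$ with $\cap_k D_k=C^\downarrow$ (Th.\ \ref{sqd} applied to the closed cone structure $(M^\times,C^\downarrow)$), arranged so that $D_k$ coincides with $C_k^\downarrow$ outside an $\epsilon_k$-neighborhood of the downward vertical ray, with $\epsilon_k\to 0$. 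Given $(P,P')\in J_S^\downarrow=\cap_k\bar J_{D_k}$ one obtains, for each $k$, a $D_k$-causal curve $\Gamma_k=(x_k,\rho_k)$ whose endpoints converge to $P$ and $P'$; parametrizing by $h^\times$-arc length for a complete Riemannian metric on $M^\times$, one applies the limit curve theorem \ref{main} with the cones $D_k\downarrow C^\downarrow$. In case (i) the limit is a $C^\downarrow$-causal curve $\Gamma$ joining $P$ to $P'$; its $M$-projection is then a genuine $C$-causal curve from $p$ to $p'$ (Prop.\ \ref{maq}), and $C^\downarrow$-causality forces $r'-r\le \ell(\pi_1\Gamma)\le d(p,p')\le D(p,p')$, so $(P,P')\in R$ with room to spare.

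The hard part is everything else. First, any $D_k>C^\downarrow$ must open up a full neighborhood of the downward ray, so the projected velocities $\dot x_k$ are \emph{not} confined to $C_k$: along the ``leak'' portions the curve may move in arbitrary $M$-directions, paying for it with a steep descent in $\rho$. One must therefore show that this leakage is harmless — its $M$-displacement is controlled by $\epsilon_k$ times the $\rho$-descent, while the $\rho$-descent contributes non-positively to $r'-r$ — so that in the limit the projection is a bona fide $C$-causal curve and the surviving $\rho$-increment still satisfies $r'-r\le d'(p,p')$. Second, one must dispose of case (ii) of the limit curve theorem, in which the limit splits into a future-inextendible curve from $P$ and a past-inextendible curve to $P'$; here I would invoke the non-imprisoning property flowing from the stable causality of $(M^\times,C^\downarrow)$ (Th.\ \ref{ssh}, Th.\ \ref{mwh}) together with the vertical monotonicity recorded above to either exclude this case or reduce it to case (i). Controlling the projection leakage near the vertical ray and handling the inextendible case are, I expect, the two genuine obstacles; the rest is bookkeeping with Prop.\ \ref{maq}, Prop.\ \ref{paq}, and the definition of $D$.
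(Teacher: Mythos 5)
Your reduction of the theorem to the per-$\mathscr{F}'$ inclusion $J_S^\downarrow\subseteq \bar J_{C'^\downarrow}$ is where the proposal breaks down, for two reasons. First, this inclusion is not a formal consequence of the definition of the Seifert relation: the cones $C'{}^\downarrow$ are \emph{not} wider than $C^\downarrow$ in the sense $C^\downarrow< C'{}^\downarrow$, because both contain the downward vertical rays $(0,z)$, $z<0$, in their boundary (the fiber component of $C'{}^\downarrow$ is the sharp cone $C'_p\cup\{0\}$, which has empty interior in the vertical direction). So $C'{}^\downarrow$ is not among the structures over which $J_S^\downarrow$ is an intersection, and the inclusion you need is essentially as hard as the theorem itself. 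Second, your intermediate identification $\bar J_{C'^\downarrow}=\{((p,r),(p',r'))\colon (p,p')\in\bar J_{C'},\ r'-r\le d'(p,p')\}$ is unjustified: the right-hand side is closed only if $d'$ is upper semi-continuous, which is not available for a merely stably causal $C'$ (in the paper this kind of identity, Eq.\ (\ref{jaw}), is established only \emph{assuming} upper semi-continuity of the distance, cf.\ Th.\ \ref{ape}). Finally, the limit-curve strategy you sketch for the hard inclusion is not carried out, and the one quantitative control you do propose fails: along the ``leak'' portions near the vertical ray the $M$-displacement is bounded by $\epsilon_k$ times the $\rho$-descent, but the descent of the connecting $D_k$-causal curves is a priori \emph{unbounded} (a curve may plunge by $N_k$, drift sideways, and climb back at rate $\mathscr{F}_k$), so $\epsilon_k\cdot N_k$ need not tend to zero, the $h^\times$-lengths need not stay bounded, and case (ii) of the limit curve theorem cannot be excluded by the argument indicated. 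You name these as ``obstacles'' but do not overcome them; they are precisely the difficulty.

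The paper's proof sidesteps all of this with a minimality argument that you should compare with. Since $(M^\times,C^\downarrow)$ is stably causal (Th.\ \ref{ssh}), Theorem \ref{bhd} gives $J_S^\downarrow=K^\downarrow$, the \emph{smallest} closed transitive relation containing $J^\downarrow$. The candidate relation $R$ (right-hand side of (\ref{vyv})) is closed because the \emph{stable} distance $D$ is upper semi-continuous (Th.\ \ref{upq}(d) --- this is exactly the property that fails for $d'$ and dooms your route), transitive by the reverse triangle inequality for $D$ (Th.\ \ref{upq}(c)), and contains $J^\downarrow$ since $J^\downarrow\subset\{((p,r),(p',r'))\colon (p,p')\in J,\ r'-r\le d(p,p')\}\subset R$. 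Minimality then yields $K^\downarrow\subset R$, and Lemma \ref{mwr} gives the reverse inclusion, so $R=K^\downarrow=J_S^\downarrow$. In short: the missing idea in your proposal is to trade the intersection-over-wider-cones characterization of $J_S^\downarrow$ (ill-suited here, since the natural comparison cones $C'{}^\downarrow$ are not wider) for the characterization of $J_S^\downarrow$ as the minimal closed transitive relation containing $J^\downarrow$, and then to exploit the already-proved good properties of $D$ rather than of the $d'$.
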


\begin{proof}
By Th.\ \ref{ssh}  $(M^\times, C^\downarrow)$ is stably causal, thus  by Th.\ \ref{bhd}, $K^\downarrow=J_S^\downarrow$
Let $R$ be the right-hand side of Eq.\ (\ref{vyv}). By Prop.\ \ref{mwr} $R\subset J_S$.
Moreover, $R$ is closed (by the upper semi-continuity of $D$) and transitive (by Th.\ \ref{upq} (c)) and contains $\{((p,r),(p',r'))\colon (p,p')\in J \textrm{ and } r'-r \le d(p,p')\}$ which contains  $J^\downarrow$, thus $K^\downarrow\subset R$. We conclude $R=K^\downarrow=J_S^\downarrow$.
\end{proof}

The next result clarifies why {\em strictly} $\mathscr{F}$-steep temporal functions are of relevance.

\begin{lemma} \label{idf}
Let $(M,\mathscr{F})$ be a closed Lorentz-Finsler space. The smooth temporal functions $F$ on $(M,C^\downarrow)$ whose level sets intersect every $\mathbb{R}$-fiber exactly once and such that $F((q,r_2))-F((q,r_1))=r_1-r_2$, $\forall q\in M$, $r_1,r_2\in \mathbb{R}$, are put in one-to-one correspondence with the smooth strictly $\mathscr{F}$-steep temporal functions $f$ on $(M,\mathscr{F})$, through the condition $F^{-1}(0)=\cup_q (q,f(q))$ or equivalently $F((q,r))=f(q)-r$.
The statement remains valid if we add {\em Cauchy} in front of $F$ and $f$ above.
\end{lemma}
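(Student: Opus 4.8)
The plan is to set up the two assignments $f\mapsto F$ and $F\mapsto f$ explicitly, check that each lands in the asserted class, and verify that they are mutually inverse; the level-set and fibre statements are then immediate bookkeeping. Given a smooth strictly $\mathscr{F}$-steep temporal $f$, I would define $F((q,r))=f(q)-r$. It is smooth and satisfies the translation identity $F((q,r_2))-F((q,r_1))=r_1-r_2$ by inspection, and $F^{-1}(0)$ is the graph of $f$, which meets each fibre exactly once. The only substantive point is that $F$ is temporal for $C^\downarrow$: at $P=(q,r)$ one has $\mathrm{d}F_P(y,z)=\mathrm{d}f_q(y)-z$, and on $C^\downarrow_P$ either $y\in C_q$, in which case $z\le\mathscr{F}(y)$ forces $\mathrm{d}F(y,z)\ge\mathrm{d}f(y)-\mathscr{F}(y)>0$ by strict steepness, or $y=0$ and $z<0$, giving $\mathrm{d}F(0,z)=-z>0$. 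Conversely, given a smooth temporal $F$ with the translation identity, the identity yields $F((q,r))=F((q,0))-r$, so setting $f(q):=F((q,0))$ recovers $F((q,r))=f(q)-r$; evaluating the temporality inequality on the admissible vectors $(y,\mathscr{F}(y))\in C^\downarrow$ gives $\mathrm{d}f_q(y)>\mathscr{F}(y)$ for every $y\in C_q$, i.e.\ $f$ is strictly $\mathscr{F}$-steep, hence temporal. The two constructions are visibly inverse to one another, which establishes the stated bijection.

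For the Cauchy refinement I would show that, under the correspondence $F((q,r))=f(q)-r$, the function $F$ is Cauchy for $(M^\times,C^\downarrow)$ iff $f$ is Cauchy for $(M,C)$. The direction ``$F$ Cauchy $\Rightarrow f$ Cauchy'' is the easy one: given an inextendible continuous $C$-causal curve $\gamma$, its \emph{horizontal lift} $\Gamma(t)=(\gamma(t),r_0)$ is a continuous $C^\downarrow$-causal curve, since $(\dot\gamma,0)\in C^\downarrow$ because $0\le\mathscr{F}(\dot\gamma)$; as the projection of $\Gamma$ leaves every compact set of $M$, $\Gamma$ is inextendible in $M^\times$ (Cor.\ \ref{dox}), so $F\circ\Gamma=f\circ\gamma-r_0$ has image $\mathbb{R}$, whence $f\circ\gamma$ does too.

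The delicate direction is ``$f$ Cauchy $\Rightarrow F$ Cauchy'', which is exactly the claim already invoked at the start of the proof of Th.\ \ref{mfy}. Here I would take an inextendible $C^\downarrow$-causal curve $\Gamma=(\gamma,\rho)$, note that $F\circ\Gamma$ is strictly increasing (being temporal), and aim to show $F\circ\Gamma\to\pm\infty$ at the two ends. The projection $\gamma$ is a continuous $C$-causal curve, possibly with rests where $\dot\gamma=0$. If $\gamma$ stays in a compact set then, since Cauchyness of $f$ forces global hyperbolicity hence non-imprisonment, $\gamma$ has finite $h$-length, and completeness of the $h^{\times}$-arc length together with $\dot\rho\le\mathscr{F}(\dot\gamma)$ drives $\rho\to-\infty$ (resp.\ $+\infty$), so that $F=f\circ\gamma-\rho\to+\infty$ (resp.\ $-\infty$); if instead $\gamma$ leaves every compact set, its rest-free reparametrization is inextendible, so $f\circ\gamma\to\pm\infty$ by Cauchyness of $f$. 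I expect the control of the fibre coordinate $\rho$ relative to $f\circ\gamma$ in this second case to be the main obstacle: one must exclude that $\Gamma$ runs off to infinity while $F\circ\Gamma$ remains bounded, and this is precisely where the strict steepness of $f$ and the global-hyperbolicity consequences of its Cauchyness (Th.\ \ref{mfy} and \ref{xxy}) have to be combined quantitatively, as in Th.\ \ref{mfy}. Finally, the ``Cauchy/Cauchy'' restriction propagates through both constructions of the bijection unchanged, so the restricted correspondence is again one-to-one.
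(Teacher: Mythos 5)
Your first part is correct and is essentially the paper's own argument: the bijection $F\leftrightarrow f$, the verification that temporality of $F$ on $C^\downarrow$ is equivalent to strict $\mathscr{F}$-steepness of $f$ (by evaluating $\dd F$ on $(y,\mathscr{F}(y))$ and on $(0,z)$, $z<0$), and the direction ``$F$ Cauchy $\Rightarrow f$ Cauchy'' via horizontal curves all match the paper; the only cosmetic difference is that you get smoothness of $f$ from the translation identity $F((q,r))=F((q,0))-r$, while the paper invokes transversality of $F^{-1}(0)$ to the fibers. Your treatment of the case in which the projection $\gamma$ stays in a compact set is also sound.

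The case you explicitly left open is a genuine gap, and in fact it cannot be closed: the implication ``$f$ smooth Cauchy strictly $\mathscr{F}$-steep temporal $\Rightarrow F=f-r$ Cauchy'' is false as stated. Take $M=\mathbb{R}^2$ with the Minkowski cone $C=\{y\colon y^0\ge \vert y^1\vert\}\setminus\{0\}$, $\mathscr{F}(y)=\sqrt{(y^0)^2-(y^1)^2}$, and $f(x)=x^0+\arctan x^0$. Then $\dd f(y)=\bigl(1+(1+(x^0)^2)^{-1}\bigr)y^0>y^0\ge\mathscr{F}(y)$ on $C$, so $f$ is smooth, temporal, strictly $\mathscr{F}$-steep and Cauchy; yet along the inextendible $C^\downarrow$-causal (lightlike) curve $\Gamma(t)=\bigl((t,0),t\bigr)$ one finds $F\circ\Gamma(t)=f((t,0))-t=\arctan t$, which is bounded, so $F$ is not Cauchy. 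The mechanism is exactly the one you flagged: along the lift $\Gamma(t)=\bigl(\gamma(t),\ell(\gamma\vert_{[0,t]})\bigr)$ the growth of $F$ equals $\int[\dd f(\dot\gamma)-\mathscr{F}(\dot\gamma)]\,\dd t$, and pointwise strict positivity of $\dd f-\mathscr{F}$ carries no uniform margin, so this integral may converge even though $f\circ\gamma\to\infty$. Note also that your fallback of importing the estimate from Th.~\ref{mfy} is circular: the relevant step in the proof of Th.~\ref{mfy} \emph{is} this very claim, asserted without proof, and the paper's own one-sentence justification of this direction of the lemma (``the projection \dots is a continuous causal curve unless $r$ goes to infinity'') overlooks precisely the scenario above, in which the projection is inextendible but $\rho$ absorbs the growth of $f\circ\gamma$. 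So your gap coincides with a gap in the paper.

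What rescues the Cauchy clause is a uniform steepness hypothesis, e.g.\ $\dd f(y)\ge \mathscr{F}(y)+\Vert y\Vert_h$ for all $y\in C$, with $h$ complete. Then both of your cases merge into one argument: writing $s=\mathscr{F}(\dot\gamma)-\dot\rho\ge 0$ a.e., one has $\dd(F\circ\Gamma)/\dd t\ge \Vert\dot\gamma\Vert_h+s$, so if $F\circ\Gamma$ were bounded in the future direction then $\int\Vert\dot\gamma\Vert_h\,\dd t<\infty$ and $\int s\,\dd t<\infty$; the first bound confines $\gamma$ to a compact set $K$, where $\mathscr{F}\le C_K\Vert\cdot\Vert_h$ by upper semi-continuity and homogeneity, hence $\int\vert\dot\rho\vert\,\dd t<\infty$ and $\Gamma$ has finite $h^\times$-arc length, contradicting inextendibility via Cor.~\ref{dox}. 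Functions with this uniform margin are what the construction of Th.~\ref{xbh} actually yields (after strengthening the choice there to $\mathscr{F}'\ge 2\mathscr{F}$ and $\mathscr{F}'\ge 4\Vert\cdot\Vert_h$, so that $\dd f\ge\mathscr{F}'-\Vert\cdot\Vert_h\ge\mathscr{F}+\Vert\cdot\Vert_h$), so the applications in Th.~\ref{aad} survive; but the Cauchy half of the lemma needs this stronger hypothesis, not bare strict steepness, and no proof attempt from the stated hypotheses alone can succeed.
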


\begin{proof}
It is clear that the condition $F((q,r_2))-F((q,r_1))=r_1-r_2$ determines $F$ provided  the hypersurface $F^{-1}(0)$ is given, in fact $F((q,r))=f(q)-r$. If $F$ is temporal, $\dd F$ is positive on $C^\downarrow$, thus $F^{-1}(0)$ is transverse to every $\mathbb{R}$-fiber, and so $f$ is differentiable and smooth. Moreover, since $\dd F$ is positive on $C^\downarrow$, $0<\dd F((y, \mathscr{F}(y)))$ for $y\in C\backslash 0$, which reads $\dd f(y)> \mathscr{F}(y)$, so $f$ is strictly steep. The converse is analogous, since the steep inequality is strict, $\dd F$ is positive on $C^\downarrow$, and so it is temporal. If $F$ is Cauchy then $f$ is Cauchy, just set $r=0$ so that $F((q,0))=f(q)$ and consider the continuous $C^\downarrow$-causal curves which lie in $r=0$. Conversely, if $f$ is Cauchy $F$ is bound to increase to infinity over every continuous causal curve because the projection of a continuous causal curve is a  continuous causal curve unless function $r$ goes to  infinity over the curve.
\end{proof}

The main idea for proving the distance formula is this: the formula is really the statement on the representability of $J_S^\downarrow$ through smooth temporal functions on $M^\times$ in disguise.

\begin{theorem} \label{aas}
Let $(M,\mathscr{F})$ be a closed Lorentz-Finsler space and let $\mathscr{S}$ be the family of smooth strictly  $\mathscr{F}$-steep temporal functions. The Lorentz-Finsler space  $(M,\mathscr{F})$ is stable if and only if $\mathscr{S}$  is non-empty. In this case $\mathscr{S}$   represents
\begin{itemize}
\item[(a)] the order $J_S$, namely $(p, q)\in J_S \Leftrightarrow f(p)\le f(q), \ \forall f \in \mathscr{S}$;
\item[(b)] the manifold topology, namely for every open set $O\ni p$ we can find $f,h \in \mathscr{S}$ in such a way that
$p\in \{q\colon f(q)>0\}\cap \{q\colon h(q)<0\}\subset O$;
\item[(c)] the stable distance, in the sense that  the distance formula holds true: for every $p,q\in M$
\begin{equation}
 D(p,q)=\mathrm{inf} \big\{[f(q)-f(p)]^+\colon \ f \in \mathscr{S}\big\}.
\end{equation}
\end{itemize}
Moreover, {\em strictly} can be dropped.
\end{theorem}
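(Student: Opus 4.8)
The plan is to read off (a), (b) and the equivalence ``stable $\Leftrightarrow \mathscr{S}\neq\emptyset$'' almost directly from the machinery already in place, and to concentrate the real work on the distance formula (c), which I regard as the representability of the Seifert relation of $(M^\times,C^\downarrow)$ by the special temporal functions of Lemma \ref{idf}. For the easy statements: if $(M,\mathscr{F})$ is stable then $\mathscr{S}\neq\emptyset$ by Th.\ \ref{vkf}, while conversely any $f\in\mathscr{S}$ is in particular $\mathscr{F}$-steep, so $(M,\mathscr{F})$ is stable by Th.\ \ref{xzo}. For (a), if $(p,q)\in J_S$ then, since every $f\in\mathscr{S}$ is a smooth temporal function and $J_S=T_2$ (Th.\ \ref{nio}), we get $f(p)\le f(q)$; conversely, if $(p,q)\notin J_S$ the separation statement of Th.\ \ref{vkf} furnishes $f\in\mathscr{S}$ with $f(p)>f(q)$. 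Statement (b) is precisely the final statement of Th.\ \ref{vkf} (put $f=\hat t$ and $h=\check t$). Finally, once (a)--(c) hold for the strictly steep family, dropping ``strictly'' only enlarges $\mathscr{S}$ by functions that remain temporal and $\mathscr{F}$-steep: the direction $(p,q)\in J_S\Rightarrow f(p)\le f(q)$ and the bound $D(p,q)\le[f(q)-f(p)]^+$ (Th.\ \ref{xzo}) persist for the larger family, while the separation and infimum statements can only improve, so all three representations survive.

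Now the core, the distance formula (c). The inequality $D(p,q)\le \inf\{[f(q)-f(p)]^+\colon f\in\mathscr{S}\}$ is immediate from Th.\ \ref{xzo}. For the reverse inequality I would first dispose of the case $(p,q)\notin J_S$: there $D(p,q)=0$ by Th.\ \ref{upq}(a), and the $f\in\mathscr{S}$ with $f(p)>f(q)$ produced above gives $[f(q)-f(p)]^+=0$, so the infimum is $0$. In the main case $(p,q)\in J_S$ I would fix $c>D(p,q)$ and pass to the product $(M^\times,C^\downarrow)$, which is stably causal by Th.\ \ref{ssh}. By the description of the Seifert relation in Th.\ \ref{sov}, the points $(p,0)$ and $(q,c)$ satisfy $((p,0),(q,c))\notin J_S^\downarrow$ because $c-0=c>D(p,q)$. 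Hence it suffices to produce a temporal function on $(M^\times,C^\downarrow)$ of the special form $F((x,r))=f(x)-r$ of Lemma \ref{idf} separating them, $F((p,0))>F((q,c))$: this reads $f(p)>f(q)-c$, i.e.\ $f(q)-f(p)<c$, whence $[f(q)-f(p)]^+<c$, and letting $c\downarrow D(p,q)$ gives $\inf\le D(p,q)$.

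The whole difficulty therefore concentrates in this separation by a \emph{special} temporal function, and this is where the hint ``the formula is the representability of $J_S^\downarrow$ by smooth temporal functions on $M^\times$ in disguise'' is used. Since $(M^\times,C^\downarrow)$ is stably causal, Th.\ \ref{swq} already represents $J_S^\downarrow$ by the \emph{general} smooth temporal functions on $M^\times$, so a separating $G$ exists; the obstacle is to replace $G$ by one with $\partial_r F\equiv-1$, i.e.\ an element of $\mathscr{S}$ via Lemma \ref{idf}. To achieve this I would exploit the two special features of $C^\downarrow$: its invariance under vertical translations and the fact that it contains the whole downward vertical ray $(0,z)$, $z<0$, which forces every $C^\downarrow$-temporal function to be strictly decreasing along the fibers, so that its level sets are fibrewise graphs over $M$. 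Concretely I would run the product-trick construction of Th.\ \ref{mih} and Th.\ \ref{vkf} inside $M^\times$ (whose very output is a graphing, hence special-form, function, matching Lemma \ref{idf}), and adapt the measure-displacement device used to prove statement (a) of Th.\ \ref{mih}, now displacing mass so as to lower the graph of $f$ over $q$ relative to its value over $p$ until $f(q)-f(p)<c$.

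The main obstacle is exactly this quantitative control of the two graph heights by $c$: the qualitative sign separation of Th.\ \ref{mih}(a) must be sharpened into a bound tied to $D(p,q)$, and the delicate point is verifying that the averaging construction delivers a graph whose height difference between the fibre over $q$ and the fibre over $p$ stays below $c$ for $c$ arbitrarily close to $D(p,q)$. Everything else is bookkeeping already developed for $J_S^\downarrow$ in Th.\ \ref{sov} and for the infimum in Th.\ \ref{xzo}. As a sanity check on the direction, the bound $f(q)-f(p)\ge D(p,q)$ forced on every $f\in\mathscr{S}$ by Th.\ \ref{xzo} shows the infimum cannot drop below $D(p,q)$, so the construction is being asked to realize exactly the extremal value $D(p,q)$, consistently with the positivity of the right-hand side on $J_S\setminus\Delta$ recorded via Th.\ \ref{aob}.
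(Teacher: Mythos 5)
Your reduction is exactly the paper's: the equivalence and parts (a), (b) are read off from Th.~\ref{xzo}, Th.~\ref{vkf} and Th.~\ref{aob}; the inequality $D\le\inf$ is Th.~\ref{xzo}; and the core of (c) is correctly identified as the problem of separating $P=(p,0)$ from $Q=(q,D(p,q)+\epsilon)$, which satisfy $(P,Q)\notin J_S^\downarrow$ by Th.~\ref{sov}, by a temporal function on $(M^\times,C^\downarrow)$ of the special graph form of Lemma~\ref{idf}. Up to that point you are on the paper's track.

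But the one step that actually carries the proof is missing. You propose to obtain the special-form separating function by rerunning the averaging/measure-displacement construction of Th.~\ref{mih}(a) with quantitative control of the graph heights, and you explicitly flag this as ``the main obstacle'' without resolving it. That is a genuine gap, and the proposed route is problematic on its own terms: the displacement device in Th.~\ref{mih}(a) is designed for pairs that are \emph{not} $J_S$-related in $M$, whereas here $(p,q)\in J_S$ and one needs the height difference $f(q)-f(p)$ pinned to within $\epsilon$ of $D(p,q)$; nothing in the averaged volume functions ties the graph heights to the stable distance with that precision. The paper bypasses all of this with a short perturbation argument you never consider: take a \emph{bounded} general smooth temporal function $H\colon M^\times\to[0,1]$ with $H(P)>H(Q)$ (Th.~\ref{swq}), take \emph{any} special-form function $F((x,r))=f(x)-r$ coming from some $f\in\mathscr{S}$ (Th.~\ref{vkf} plus Lemma~\ref{idf}), normalized so $F(P)>0$, and set $G=H+kF$ with $0<k<\tfrac12[H(P)-H(Q)]/[1+|F(Q)|]$. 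Since $H$ is bounded, the level sets of $G$ still meet every $\mathbb{R}$-fiber, exactly once because $G$ is $C^\downarrow$-temporal; the choice of $k$ preserves $G(Q)<G(P)$; and after normalizing $G(P)=0$, the graphing function $g$ of $G^{-1}(0)$ is automatically a smooth strictly $\mathscr{F}$-steep temporal function, because the function $\tilde G$ sharing that zero level set but with unit slope along the fibers is again temporal (by the vertical translation invariance of $C^\downarrow$), so Lemma~\ref{idf} applies. Then $G(Q)<0$ gives $g(q)-g(p)=g(q)<D(p,q)+\epsilon$. Without this (or an equivalent) device, your proof of (c) does not close.
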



\begin{proof}
The first statement follows from Th.\ \ref{xzo} and  Th.\ \ref{vkf} in the strict and non-strict case. For the other results we prove them in the strict case which is  stronger.
(a). One direction is Th.\ \ref{aob}. For the other direction, suppose that $(p,q)\notin J_S$, by Th.\ \ref{vkf} there is a smooth strictly $\mathscr{F}$-steep temporal function $f$ such that $f(p)>f(q)$.

(b). This is the last statement of Th.\ \ref{vkf}.

(c). From (a) and Th.\ \ref{upq}(a) we know that the equality holds if $(p,q)\notin J_S$ or $p=q$. Suppose $(p,q)\in J_S\backslash \Delta$, due to Th.\ \ref{xzo} we have just to prove that for every $\epsilon>0$ there is $g \in \mathscr{S}$ such that $g(q)-g(p)\le D(p,q)+\epsilon$. Let $P=(p,0)$, $Q=(q, D(p,q)+\epsilon)$ so that by Th.\ \ref{sov} $(P,Q)\notin J^\downarrow_S$.
We know that on a stably causal closed cone structure the Seifert relation is represented by the smooth temporal functions, cf.\ Th.\ \ref{swq}. Thus there is $H\colon M^\times \to [0,1]$ smooth temporal function on $(M^\times, C^\downarrow)$ such that $H(P)>H(Q)$. 
We know from Theorem \ref{vkf} that there is a smooth strictly $\mathscr{F}$-steep temporal function $f$ on $M$ and by Lemma \ref{idf}   a smooth temporal function $F$  on $M^\times$ which intersects every $\mathbb{R}$-fiber and such that $F((q,r_2))-F((q,r_1))=r_1-r_2$, $\forall q\in M$, $r_1,r_2\in \mathbb{R}$. We can choose it to be positive on $P$, $F(P)>0$. Let $G=H+ kF$, $k>0$, then since $H$ is bounded, the level sets of $G$ intersect every $\mathbb{R}$-fiber and this can happen only once since $G$ is $C^\downarrow$-temporal. Let $0<k< \frac{1}{2} [H(P)-H(Q)]/[1+\vert F(Q)\vert]$, then
\[
G(Q)\le H(Q)+ \tfrac{1}{2} [H(P)-H(Q)]<H(P)<G(P).
\]
We can redefine $G$ by adding a constant to it  and so assume $G(P)=0$.  Let $G^{-1}(0)=\cup_r(r,g(r))$, then $g(p)=0$. Notice that $\tilde G$, the function which shares with $G$ the zero level set but which satisfies $\tilde G((q,r_2))-\tilde G ((q,r_1))=r_1-r_2$ is also  a smooth temporal function, thus by Lemma \ref{idf} $g$ is a smooth strictly $\mathscr{F}$-steep temporal function. Since $G(Q)<0$ we have $D(p,q)+\epsilon> g(q)=g(q)-g(p)$, which is what we wished to prove.
\end{proof}

\begin{remark}
A related result is the following. If $(M,\mathscr{F})$ is  a stably causal  closed Lorentz-Finsler space, then $(M^\times, C^\downarrow)$ is stably causal, thus, by Th.\ \ref{swq}, the topology and order on $M^\times$ are represented by the smooth temporal functions on $(M^\times, C^\downarrow)$, which for that matter can be replaced by the smooth temporal functions $F$ on $(M, C^\downarrow)$, which satisfy $F((q,r_2))-F((q,r_1))=r_1-r_2$, $\forall q\in M$, $r_1,r_2\in \mathbb{R}$. Under the stable condition they can also be chosen so that  their level sets intersect exactly  once every $\mathbb{R}$-fiber.
\end{remark}

If the Lorentz-Finsler space is globally hyperbolic the representing functions can be chosen Cauchy.

\begin{theorem} \label{aad}
Let $(M,\mathscr{F})$ be a globally hyperbolic closed Lorentz-Finsler space and let $\mathscr{S}$ be the family of smooth Cauchy  strictly $\mathscr{F}$-steep temporal functions $f$ which are $h_f$-steep for some complete Riemannian metric dependent on the function.  The Lorentz-Finsler space $(M,\mathscr{F})$ is stable  and  $\mathscr{S}$  is non-empty. Moreover, $\mathscr{S}$   represents
\begin{itemize}
\item[(a)] the causal order $J$, namely $(p, q)\in J \Leftrightarrow f(p)\le f(q), \ \forall f \in \mathscr{S}$;
\item[(b)] the manifold topology, namely for every open set $O\ni p$ we can find $f,h \in \mathscr{S}$ in such a way that
$p\in \{q\colon f(q)>0\}\cap \{q\colon h(q)<0\}\subset O$;
\item[(c)] the  distance, in the sense that  the distance formula holds true: for every $p,q\in M$
\begin{equation}
 d(p,q)=\mathrm{inf} \big\{[f(q)-f(p)]^+\colon \ f \in \mathscr{S}\big\}.
\end{equation}
\end{itemize}
Moreover, {\em strictly} can be dropped.
\end{theorem}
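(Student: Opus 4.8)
The plan is to run the argument of the stable distance formula (Th.~\ref{aas}) inside the globally hyperbolic world, where the Seifert data collapse onto the causal data. First I would record the simplifications: a globally hyperbolic closed Lorentz-Finsler space is stable by Th.~\ref{mab}, so $\mathscr{S}$ is non-empty by Th.~\ref{xbh}; moreover $J_S=J$ by Th.~\ref{mom} and $D=d$ by Th.~\ref{upq}(g). With these, (a) and (b) are immediate from Th.~\ref{xbh}: one inclusion in (a) is the isotonicity of the functions of $\mathscr{S}$, the reverse is the separating clause ``$(p,q)\notin J\Rightarrow\exists f,\ f(p)>f(q)$'' of Th.~\ref{xbh}, and (b) is the last statement of Th.~\ref{xbh}. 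For (c) it suffices to treat $(p,q)\in J\setminus\Delta$, since for $p=q$ or $(p,q)\notin J$ both sides vanish by (a) and Th.~\ref{upq}(a).

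The easy inequality in the formula is pure steepness: every $f\in\mathscr{S}$ is $\mathscr{F}$-steep and temporal, so Th.~\ref{xzo} (read with $D=d$) gives $d(p,q)=D(p,q)\le[f(q)-f(p)]^+$, hence $d(p,q)\le\inf_{\mathscr{S}}[f(q)-f(p)]^+$.

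The content is the reverse inequality, which I would get with the product trick. By Th.~\ref{mfy} the cone structure $(M^\times,C^\downarrow)$ is again globally hyperbolic, and combining Th.~\ref{sov} with $J_S=J$ and $D=d$ yields $J^\downarrow=\{((p,r),(p',r'))\colon (p,p')\in J,\ r'-r\le d(p,p')\}$. Fixing $\epsilon>0$ and setting $P=(p,0)$, $Q=(q,d(p,q)+\epsilon)$, this description shows $(P,Q)\notin J^\downarrow$. Applying Th.~\ref{xbh} to $(M^\times,C^\downarrow)$ with the complete metric $\tilde h=h\oplus\dd r^2$ produces a smooth Cauchy $\tilde h$-steep temporal function $H$ with $H(P)>H(Q)$; combining it with the lift $F((q,r))=f(q)-r$ of a smooth Cauchy $h$-steep strictly $\mathscr{F}$-steep $f$ and forming $G=H+kF$ ($k>0$ small), the bound $\dd G\ge \dd H\ge\Vert\cdot\Vert_{\tilde h}$ on $C^\downarrow$ makes $G$ a $\tilde h$-steep (hence Cauchy) temporal function, so each level set is a Cauchy hypersurface meeting every $\mathbb{R}$-fiber exactly once. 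Normalising $G^{-1}(0)=\{(q,g(q))\}$ with $g(p)=0$, the translation-invariant companion $\tilde G((q,r))=g(q)-r$ and Lemma~\ref{idf} make $g$ a smooth Cauchy strictly $\mathscr{F}$-steep temporal function, and $G(Q)<G(P)=0$ forces $g(q)-g(p)<d(p,q)+\epsilon$.

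The hard part will be verifying that $g$ really lies in $\mathscr{S}$, i.e.\ that it is $h_g$-steep for some \emph{complete} metric. Since $\tilde G$ and $G$ share the level set $G^{-1}(0)$, at its points $\dd\tilde G=\lambda\,\dd G$ with $\lambda=1/(-\p_r G)>0$, whence $\dd g(y)=\lambda(q)\,\dd G|_{(q,g(q))}((y,0))\ge\lambda(q)\,\Vert y\Vert_h$ from the $\tilde h$-steepness of $G$; thus $g$ is $(\lambda^2h)$-steep, and completeness of $\lambda^2h$ amounts to a positive lower bound on $\lambda$, equivalently an \emph{upper} bound on $-\p_r G=-\p_r H+k$. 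Steepness only delivers the lower bound $-\p_r H\ge 1$, so the upper bound is the real obstacle — and it is exactly what the two-sided derivative control of the smoothing Theorem~\ref{moz} supplies. I would therefore not invoke Th.~\ref{xbh} as a black box but build $H$ by smoothing a separating anti-Lipschitz function (Th.~\ref{mih} and the averaging of Sec.~\ref{mis}) through Th.~\ref{moz}, retaining a uniform two-sided bound on its vertical derivative, so that $\lambda$ stays bounded below and $\lambda^2h$ is complete. Finally, ``strictly'' may be dropped because enlarging $\mathscr{S}$ to all Cauchy $\mathscr{F}$-steep temporal functions can only lower the infimum, while the upper bound of the second paragraph (which uses only $\mathscr{F}$-steepness) persists; so the infimum is unchanged.
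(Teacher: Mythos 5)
Your proposal tracks the paper's own proof very closely: the reductions (stability via Th.\ \ref{mab}, non-emptiness, $J_S=J$, $D=d$), parts (a) and (b) from Th.\ \ref{xbh}, the easy inequality from Th.\ \ref{xzo}, and the core of (c) via the product trick — Th.\ \ref{mfy}, Th.\ \ref{sov}, the combination $G=H+kF$, and extraction of $\tilde g$ from $G^{-1}(0)$ through Lemma \ref{idf} — are all exactly the paper's steps. Your one variant, taking $H$ to be a Cauchy $\tilde h$-steep temporal function on $(M^\times,C^\downarrow)$ from Th.\ \ref{xbh} rather than the paper's \emph{bounded} temporal function from Th.\ \ref{swq}, is harmless and even tidies the argument (with a Cauchy $H$ the level sets of $G$ automatically meet every fiber, whereas the paper needs $H$ bounded and $F$ Cauchy to get this).

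The genuine gap is in your treatment of the last step, membership of the extracted function in $\mathscr{S}$, i.e.\ steepness with respect to a \emph{complete} Riemannian metric. Your computation $\dd g(y)\ge\lambda(q)\Vert y\Vert_h$ with $\lambda=1/(-\p_r G)$ is correct, and so is the diagnosis that one needs a global upper bound on $-\p_r H$. But your proposed source for that bound does not exist in the paper's toolkit: the upper inequality in Th.\ \ref{moz} (Eq.\ (\ref{kid})) is \emph{conditional} on the pre-smoothed function satisfying the upper integral bound in Eq.\ (\ref{mos}), i.e.\ on it being Lipschitz along timelike curves of a wider cone with a controlled rate $\overline F$. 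The anti-Lipschitz functions produced by Th.\ \ref{mih} and the averaging of Sec.\ \ref{mis} come with \emph{no} such upper control: they are graphing functions of level sets of Geroch-type quotients of volume functions, and acausality of such a graph in $(M^\times,C^\downarrow)$ encodes precisely the lower (anti-Lipschitz) bound and nothing above it; volume functions need not be Lipschitz at all. Even granting some local upper bound, completeness of $\lambda^2h$ requires a bound uniform over all of $M$, which a continuous $\overline F$ would not supply. So the step "retain a uniform two-sided bound on the vertical derivative" cannot be carried out as described. The paper sidesteps the whole issue with a one-line perturbation you did not consider: set $g=\tilde g+\delta f$ with $\delta>0$ small, where $f$ is the already-constructed smooth Cauchy $h$-steep strictly $\mathscr{F}$-steep function. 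Since $\dd\tilde g\ge\mathscr{F}\ge 0$ on $C$ and $\dd f\ge\Vert\cdot\Vert_h$, the function $g$ is steep for the complete metric $\delta^2 h$ (and still Cauchy, strictly $\mathscr{F}$-steep, temporal), while the strict inequality $\tilde g(q)-\tilde g(p)<d(p,q)+\epsilon$ survives the perturbation for $\delta$ sufficiently small. With that replacement your argument closes; without it, the construction produces a function that may fail to lie in $\mathscr{S}$.
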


By Th.\ \ref{mfy} $(M,C^\downarrow)$ is globally hyperbolic. Furthermore, by Th.\ \ref{mab} and Th.\ \ref{xbh}  if  $h$ is a complete Riemannian metric then the first statement and (a) and (b) hold with the family  $\mathscr{S}$ restricted to the smooth Cauchy $h$-steep strictly $\mathscr{F}$-steep temporal functions.

\begin{proof}
We prove all results in the strict case which is  stronger. Let $h$ be a complete Riemannian metric. The first statement and (a) and (b) are going to be proved for the smaller family for which $h_f=h$.
The first statement follows from   Th.\ \ref{mab} and Th.\ \ref{xbh}.
(a). One direction is Th.\ \ref{aob}. The other direction is the second statement of Th.\ \ref{xbh}.

(b). This is the last statement of Th.\ \ref{xbh}.

(c). In a globally hyperbolic closed cone structure $d=D$ (Th.\ \ref{upq}) and $J_S=J$ (Th.\ \ref{mbg}).  Let $f$ be a smooth Cauchy $h$-steep strictly $\mathscr{F}$-steep temporal functions which has been already shown to exist, then $ F((p,r))=f(p)-r$ is Cauchy on $(M,C^\downarrow)$ which proves that $(M,C^\downarrow)$ is globally hyperbolic and hence $J^\downarrow=J_S^\downarrow$.
 From (a) and Th.\ \ref{upq}(a) we know that the equality holds if $(p,q)\notin J$ or $p=q$. Suppose $(p,q)\in J\backslash \Delta$, due to Th.\ \ref{xzo} we have just to prove that for every $\epsilon>0$ there is $g \in \mathscr{S}$ such that $g(q)-g(p)\le d(p,q)+\epsilon$. Let $P=(p,0)$, $Q=(q, d(p,q)+\epsilon)$ so that by Th.\ \ref{sov} $(P,Q)\notin J^\downarrow_S$.
We know that on a stably causal closed cone structure the Seifert relation is represented by the smooth temporal functions, cf.\ Th.\ \ref{swq}. Thus there is $H\colon M^\times \to [0,1]$ smooth temporal function on $(M^\times, C^\downarrow)$ such that $H(P)>H(Q)$. 

Function $F$ on $M^\times$  is  smooth temporal and Cauchy. We add to  it a constant so that, $F(P)>0$. Let $G=H+ kF$, $k>0$, then since $H$ is bounded $G$ is Cauchy. In particular the level sets of $G$ intersect every $\mathbb{R}$-fiber exactly once. Let $0<k< \frac{1}{2} [H(P)-H(Q)]/[1+\vert F(Q)\vert]$, then
\[
G(Q)\le H(Q)+ \tfrac{1}{2} [H(P)-H(Q)]<H(P)<G(P).
\]
We can redefine $G$ by adding a constant to it  and so assume $G(P)=0$.  Let $G^{-1}(0)=\cup_r(r, \tilde g(r))$, then $\tilde g(p)=0$. Notice that $\tilde G$, the function which shares with $G$ the zero level set but which satisfies $\tilde G((q,r_2))-\tilde G ((q,r_1))=r_1-r_2$ is also  a smooth Cauchy temporal function, thus, by Lemma \ref{idf}, $\tilde g$ is a smooth Cauchy strictly $\mathscr{F}$-steep temporal function. Since $G(Q)<0$ we have $d(p,q)+\epsilon> \tilde g(q)=\tilde g(q)-\tilde g(p)$. Though $\tilde g$ is not necessarily $h_{\tilde g}$-steep for some complete Riemannian metric $h_{\tilde g}$, the function $g=\tilde g+\delta f$, $\delta>0$, is $h_g$-steep where $h_g=\delta h$ is a complete Riemannian metric. For sufficiently small $\delta$ the inequality $d(p,q)+\epsilon>  g(q)-g(p)$ is satisfied. \ \
\end{proof}

Let us investigate the validity of the distance formula for $d$ replacing $D$.

\begin{lemma} \label{qaq}
Let $(M,\mathscr{F})$ be a locally Lipschitz proper Lorentz-Finsler space such that $\mathscr{F}(\p C)=0$, then on $(M^\times, C^\downarrow)$ the chronological relation is given by
\[
I^\downarrow=\{((p,r),(p',r'))\colon (p,p')\in I \textrm{ and }  r'-r < d(p,p')\} ,
\]
while its closure satisfies
\[
\overline{I^\downarrow}=\overline{J^\downarrow} \supset \{((p,r),(p',r'))\colon (p,p')\in \bar I \textrm{ and }  r'-r \le d(p,p')\} .
\]
\end{lemma}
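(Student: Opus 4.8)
The plan is to begin with the structural preliminary that $(M^\times,C^\downarrow)$ is itself a \emph{locally Lipschitz proper} cone structure, and to pin down its timelike cone; this is the step I expect to require the most care. Since $(M,\mathscr{F})$ is locally Lipschitz proper, the base cone structure $(M,C)$ is locally Lipschitz proper (Sec.\ \ref{ngd}), while the graph $\{z=\mathscr{F}(y)\}$ is the common ``upper cap'' of $C^\times$ and $C^\downarrow$ and varies in a locally Lipschitz fashion as a set because $C^\times$ does (even though the base dependence of $\mathscr{F}$ need \emph{not} be locally Lipschitz). As $C^\downarrow$ is obtained by extending this cap downward over $C_p\cup\{0\}$, one checks directly that it is sharp, convex, closed and with non-empty interior, and that its fibrewise interior is $\mathrm{Int}\,C^\downarrow_P=\{(y,z)\colon y\in\mathrm{Int}\,C_p,\ z<\mathscr{F}(y)\}$; by Prop.\ \ref{cob} this coincides with the bundle interior $(\mathrm{Int}\,C^\downarrow)_P$. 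In particular the velocity of any $C^\downarrow$-timelike curve lies in this set.

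For the first displayed identity I would prove both inclusions. If $((p,r),(p',r'))\in I^\downarrow$, a $C^\downarrow$-timelike curve $X=(x,\rho)$ projects (arguing as in Prop.\ \ref{maq}) to a curve with $\dot x\in\mathrm{Int}\,C$ and $\dot\rho<\mathscr{F}(\dot x)$ a.e.; hence $x$ is timelike, so $(p,p')\in I$, and integrating the inequality over the full-measure set of differentiability points gives $r'-r<\int\mathscr{F}(\dot x)\,\dd t=\ell(x)\le d(p,p')$. Conversely, assume $(p,p')\in I$ and $r'-r<d(p,p')$; properness gives $d(p,p')>0$, so I pick $R$ with $\max\{r'-r,0\}<R<d(p,p')$ and apply Th.\ \ref{ddc} to obtain a timelike curve $x$ from $p$ to $p'$ with $\ell(x)>R>r'-r$. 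Setting $\rho(t)=r+\frac{r'-r}{\ell(x)}\,\ell(x|_{[0,t]})$ produces $X=(x,\rho)$ running from $(p,r)$ to $(p',r')$ with $\dot\rho=\frac{r'-r}{\ell(x)}\mathscr{F}(\dot x)<\mathscr{F}(\dot x)$ wherever $\mathscr{F}(\dot x)>0$, i.e.\ $\dot X\in\mathrm{Int}\,C^\downarrow$; after rounding corners (Prop.\ \ref{oaw}) this is a $C^\downarrow$-timelike curve, so $((p,r),(p',r'))\in I^\downarrow$.

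The equality $\overline{I^\downarrow}=\overline{J^\downarrow}$ is then immediate from Th.\ \ref{soa}, applied to the locally Lipschitz proper cone structure $(M^\times,C^\downarrow)$ identified in the preliminary step.

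Finally I would establish the inclusion of $R:=\{((p,r),(p',r'))\colon (p,p')\in\bar I,\ r'-r\le d(p,p')\}$ into $\overline{I^\downarrow}$, which combined with the previous paragraph yields $R\subset\overline{J^\downarrow}$. The tools are the lower semicontinuity of $d$ (Th.\ \ref{ddb}) and the characterization of $I^\downarrow$ just proved. Fix $((p,r),(p',r'))\in R$ and choose $(q_n,q'_n)\in I$ with $(q_n,q'_n)\to(p,p')$, possible since $(p,p')\in\bar I$. If $r'-r<d(p,p')$, keep $r'_n=r'$; lower semicontinuity gives $\liminf_n d(q_n,q'_n)\ge d(p,p')>r'-r$, so $r'_n-r<d(q_n,q'_n)$ for large $n$. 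In the borderline case $r'-r=d(p,p')$ I set $r'_n=r'-1/n$ and use a diagonal selection: by lower semicontinuity along sequences in $I$ tending to $(p,p')$ one can choose $(q_n,q'_n)$ within $1/n$ of $(p,p')$ with $d(q_n,q'_n)>d(p,p')-1/n=r'_n-r$. In either case $((q_n,r),(q'_n,r'_n))\in I^\downarrow$ and converges to $((p,r),(p',r'))$, so $R\subset\overline{I^\downarrow}=\overline{J^\downarrow}$, as claimed.
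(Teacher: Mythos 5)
Your proof is correct, and it differs from the paper's in two places worth recording. For the inclusion $\supset$ in the first identity, the paper also starts from Th.\ \ref{ddc}, but it uses that theorem to reach the point $(p',r+R)$, $r'-r<R<d(p,p')$, by a $C^\times$-timelike (hence $C^\downarrow$-timelike) curve and then descends the $\mathbb{R}$-fiber, concluding with the composition rule $J^\downarrow\circ I^\downarrow\subset I^\downarrow$; your direct lift $\rho(t)=r+\frac{r'-r}{\ell(x)}\,\ell(x\vert_{[0,t]})$ of the timelike curve reaches $(p',r')$ in one step, using only that $\mathscr{F}>0$ on timelike vectors, so it bypasses the composition rule for this step (and since the lift is already piecewise $C^1$ with timelike velocity, your appeal to Prop.\ \ref{oaw} is superfluous, though harmless). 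The more substantive divergence is in the closure inclusion: the paper never invokes lower semi-continuity of $d$; instead it observes that if $(p,p')\in\p I$ then any connecting causal curve is a lightlike geodesic whose tangents are lightlike (Th.\ \ref{aam}), so the hypothesis $\mathscr{F}(\p C)=0$ forces $d(p,p')=0$, hence $r'\le r$ at such boundary pairs, after which any sequence $(p_i,p'_i)\in I$ converging to $(p,p')$ works because $d>0$ on $I$. You instead use Th.\ \ref{ddb} and treat interior and boundary pairs uniformly via the $1/n$-perturbation and diagonal selection. Both routes consume the hypothesis $\mathscr{F}(\p C)=0$ (yours through Th.\ \ref{ddb}, the paper's through Th.\ \ref{aam}); yours is more uniform, while the paper's yields the extra fact that $d$ vanishes on $\p I$. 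Finally, your explicit preliminary that $(M^\times,C^\downarrow)$ is a locally Lipschitz proper cone structure, with the identification of its timelike cone, is left implicit in the paper even though it is genuinely needed there too (both for $\overline{I^\downarrow}=\overline{J^\downarrow}$ via Th.\ \ref{soa} and for the paper's own use of $J^\downarrow\circ I^\downarrow\subset I^\downarrow$), so making it explicit strengthens the exposition.
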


\begin{proof}
Since $ \mathrm{Int} C^\downarrow_{(p,r)}=\{(y,z) \colon y\in \mathrm{Int} C_p, \ z <  \mathscr{F}(y) \}$,  integration over a $C^\downarrow$-timelike curve on $M^\times$, easily gives the inclusion $\subset$.

For the other direction, by the proof of Th.\ \ref{ddc} for every $0<R<d(p,p')$ we can find a $C^\times$-timelike (and hence $C^\downarrow$-timelike) curve connecting $P=(p,r)$ to $(p', r+R)$, thus, since the $\mathbb{R}$-fibers are  continuous causal curves and $J\circ I\subset I$, all the points of the form $(p', r')$ with $r'-r<d(p,p')$ belong to $(I^\downarrow)^+((p,r))$.

Let $G=\{((p,r),(p',r'))\colon (p,p')\in I \textrm{ and }  r'-r <d(p,p')\}$. By the previous formula $\bar G = \overline{I^\downarrow}\supset I^\downarrow$.
The closure $\bar G$ is given by the pairs $((p,r),(p',r'))$, for which there are sequences  $((p_i,r_i),(p'_i,r'_i)) \to ((p,r),(p',r'))$ with $(p_i,p'_i)\in I$ and $r'_i-r_i <d(p_i,p'_i)$. Notice that if $(p,p')\in \p I$ then $d(p,p')=0$ because if there is a connecting causal curve it is a lightlike geodesic and so it cannot have timelike tangent anywhere (Th.\ \ref{aam}), thus given a sequence $(p_i,p'_i)\in I$, $(p_i,p'_i)\to (p,p')$ we have that $((p_i,r),(p'_i,r'))$ with $r'\le r$ belongs to $G$ and converges to $((p,r),(p',r'))$, thus $\bar G$ includes $\{((p,r),(p',r'))\colon (p,p')\in \bar I \textrm{ and }  r'-r \le d(p,p')\}$.
\end{proof}

\begin{theorem} \label{ape}
Let $(M,\mathscr{F})$ be a distinguishing locally Lipschitz proper Lorentz-Finsler space such that $\mathscr{F}(\p C)=0$. The following conditions are equivalent
\begin{enumerate}
\item $D=d$,
\item  $d$ is upper semi-continuous,
\end{enumerate}
and they imply the causal continuity of $(M,C)$.
\end{theorem}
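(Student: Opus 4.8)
The plan is to prove the equivalence together with the final clause, routing the implication to causal continuity through reflectivity. The direction $(1)\Rightarrow(2)$ is immediate: $D$ is always upper semi-continuous by Th.\ \ref{upq}(d), so if $D=d$ then $d$ is upper semi-continuous. For the clause ``they imply causal continuity'' I would invoke Prop.\ \ref{imp}: under $(2)$ the upper semi-continuity of $d$ forces reflectivity, and since $(M,C)$ is distinguishing by hypothesis, it is causally continuous by the causal ladder Th.\ \ref{cau}. (Under $(1)$ one may instead quote Th.\ \ref{upq}(e), but since the two conditions are equivalent either route suffices.) Thus everything reduces to the hard implication $(2)\Rightarrow(1)$, i.e.\ that upper semi-continuity of $d$ upgrades $d\le D$ (Th.\ \ref{upq}(f)) to $D=d$.

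First I would extract the structural consequences of $(2)$. By Th.\ \ref{ddb} the distance $d$ is lower semi-continuous, so under $(2)$ it is continuous; in particular $d=0$ on $\bar J\setminus J$. As above $(M,C)$ is reflective and causally continuous, hence stably causal. Reflectivity makes $\bar J$ a closed, reflexive, transitive relation containing $J$ (Th.\ \ref{tra}), so by minimality $K=\bar J$; and stable causality gives $K=J_S$ (Th.\ \ref{nio}). The crucial gain is therefore the identity $J_S=\bar J$.

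The engine is the product $(M^\times,C^\downarrow)$. Since $(M,C)$ is stably causal, so is $(M^\times,C^\downarrow)$ (Th.\ \ref{ssh}), whence $K^\downarrow=J_S^\downarrow$ (Th.\ \ref{nio}) and Th.\ \ref{sov} gives $J_S^\downarrow=\{((p,r),(p',r'))\colon(p,p')\in J_S,\ r'-r\le D(p,p')\}$. On the other hand, feeding upper semi-continuity of $d$ into the closure computation of Lemma \ref{qaq} pins down $\overline{J^\downarrow}=\{((p,r),(p',r'))\colon(p,p')\in\bar J,\ r'-r\le d(p,p')\}$: indeed $r'-r=\lim(r'_i-r_i)\le\limsup d(p_i,p'_i)\le d(p,p')$ along any approximating sequence, which upgrades the inclusion of Lemma \ref{qaq} to an equality (using $\bar I=\bar J$, Th.\ \ref{soa}). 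I would then show $\overline{J^\downarrow}$ is transitive: it is reflexive and closed, and for a chain $(p,p'),(p',p'')\in\bar J$ one has $(p,p'')\in\bar J$ by transitivity of $\bar J$, while the needed threshold estimate $r''-r\le d(p,p'')$ follows from an extended reverse triangle inequality $d(p,p')+d(p',p'')\le d(p,p'')$ valid on $\bar J$. This is the step where I expect to do real work, and I would prove it using reflectivity and continuity of $d$: choose $p_k\to p$ with $(p_k,p')\in J$ (possible since $p\in\overline{J^-(p')}$) and $p''_k\to p''$ with $(p',p''_k)\in J$ (since $p''\in\overline{J^+(p')}$), apply the genuine reverse triangle inequality (\ref{lfr}) to the concatenation $(p_k,p')$, $(p',p''_k)$, and pass to the limit using continuity of $d$. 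Granting transitivity, $\overline{J^\downarrow}$ is a closed preorder containing $J^\downarrow$, so $K^\downarrow\subset\overline{J^\downarrow}$; the reverse inclusion is automatic, giving $\overline{J^\downarrow}=K^\downarrow=J_S^\downarrow$.

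Comparing the two descriptions of $J_S^\downarrow$ then yields the result. Because $J_S=\bar J$, the two relations have the same base projection, and evaluating the fibre over a fixed $(p,p')\in J_S=\bar J$ at the extreme height $r'-r=D(p,p')$ forces $D(p,p')\le d(p,p')$; together with $d\le D$ this gives $D=d$ on $J_S$, while off $J_S$ both vanish (Th.\ \ref{upq}(a)). Hence $D=d$ everywhere, which is $(1)$. The main obstacle, as indicated, is the passage from the honest reverse triangle inequality (which needs both legs genuinely in $J$) to its closure on $\bar J$; everything hinges on having continuity of $d$ and the reflectivity-supplied approximating causal curves, and on the prior identification $J_S=\bar J$ that makes the two product relations comparable fibrewise.
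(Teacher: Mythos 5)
Your proposal is correct and follows essentially the same route as the paper's proof: both reduce $(2)\Rightarrow(1)$ to identifying $\overline{J^\downarrow}$ with $\{((p,r),(p',r'))\colon (p,p')\in\bar J,\ r'-r\le d(p,p')\}$ via the upper semi-continuity of $d$ and Lemma \ref{qaq}, observing that this closed transitive relation containing $J^\downarrow$ must equal $K^\downarrow=J_S^\downarrow$, and then comparing with the description of $J_S^\downarrow$ in Th.\ \ref{sov} to conclude $D=d$. The only difference is that you make explicit the extension of the reverse triangle inequality to pairs in $\bar J$ (via reflectivity and continuity of $d$), which is exactly what the transitivity claim for the product relation requires and which the paper leaves implicit.
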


Under the assumption $d$ is lower semi-continuous but this fact will not be used.

\begin{proof}
$1 \Rightarrow 2$. This direction is immediate from Th.\ \ref{upq}.
$2 \Rightarrow 1$. We know that
\begin{align*}
J^\downarrow & \subset \{((p,r),(p',r'))\colon (p,p')\in J \textrm{ and }  r'-r \le d(p,p')\}\\
&\subset  \{((p,r),(p',r'))\colon (p,p')\in \bar J \textrm{ and }  r'-r \le d(p,p')\}.
\end{align*}
 But the latter relation is closed because $d$ is upper semi-continuous (notice $d$ is defined on $M\times M$, thus also on $\p J$), thus $\overline{J^\downarrow} \subset \{((p,r),(p',r'))\colon (p,p')\in \bar J \textrm{ and }  r'-r \le d(p,p')\}$. The reverse inclusion follows from Lemma \ref{qaq}, so
\begin{equation} \label{jaw}
\overline{J^\downarrow}=\overline{I^\downarrow}=\{((p,r),(p',r'))\colon (p,p')\in \bar I \textrm{ and } r'-r \le d(p,p')\}
\end{equation}

Due to Th.\ \ref{imp} $(M,C)$ is reflective (hence causally continuous and stably causal), thus $\bar J=\bar I$ is transitive (Th.\ \ref{tra}). Since the relation in display is closed and transitive and contains $J^\downarrow$, we have $J^\downarrow_S=K^\downarrow=  \overline{J^\downarrow}$, where the first equality is due to stable causality.
 From causal continuity $J_S=\bar J$, and by the equality $\overline{J^\downarrow}= J_S^\downarrow$ and Th.\ \ref{sov} we get $D=d$.
\end{proof}

\begin{corollary} \label{cot}
Let $(M,\mathscr{F})$ be a distinguishing locally Lipschitz proper Lorentz-Finsler space such that $\mathscr{F}(\p C)=0$ and such that $d$ is finite and continuous, then $(M,\mathscr{F})$ is stable.
\end{corollary}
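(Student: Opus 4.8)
The plan is to read off the conclusion directly from Theorem \ref{ape} together with the causal ladder, so that no new machinery is needed. Since by hypothesis $d$ is continuous, it is in particular upper semi-continuous, so condition (2) of Theorem \ref{ape} is satisfied. First I would invoke that theorem to obtain simultaneously its condition (1), namely the equality $D=d$, and its stated consequence, namely that $(M,C)$ is causally continuous.

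Next I would extract stable causality from causal continuity. By the causal ladder (Theorem \ref{cau}) one has the chain: Causally continuous $\Rightarrow$ Causally easy $\Rightarrow$ Stably causal, so $(M,C)$, and hence $(M,\mathscr{F})$, is stably causal. One small bookkeeping remark is worth recording here: a locally Lipschitz proper Lorentz-Finsler space is in particular a closed Lorentz-Finsler space, because local Lipschitzness implies upper semi-continuity of both $C$ and $\mathscr{F}$ (cf.\ Prop.\ \ref{hll}), so the very notion of stability applies to $(M,\mathscr{F})$.

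Finally, the finiteness of the stable distance is immediate: from $D=d$ and the hypothesis that $d$ is finite we conclude $D<+\infty$. Combining stable causality with $D<+\infty$ gives, by the definition of a stable closed Lorentz-Finsler space (the definition preceding Theorem \ref{con}), that $(M,\mathscr{F})$ is stable, as desired.

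I do not expect a genuine obstacle in this argument, since the substantive content — the equivalence of the identity $D=d$ with the upper semi-continuity of $d$, and the passage to causal continuity — is already packaged inside Theorem \ref{ape}. The only points demanding any care are purely organizational: verifying that $(M,\mathscr{F})$ qualifies as a closed Lorentz-Finsler space (so that ``stable'' is meaningful), and checking that the two defining conditions of stability, namely stable causality and $D<+\infty$, are precisely the two facts produced by the steps above.
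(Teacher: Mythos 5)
Your proof is correct and follows essentially the same route as the paper: the paper's own (very terse) proof also applies Theorem \ref{ape} to get causal continuity and $D=d$, hence $D<+\infty$, and concludes stability. Your extra steps — making explicit the passage from causal continuity to stable causality via the causal ladder, and checking that the space is a closed Lorentz-Finsler space so that stability is well-defined — are just the details the paper leaves implicit.
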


\begin{proof}
If  $(M,\mathscr{F})$ has  a finite and continuous Lorentz-Finsler  distance $d$, then $(M,C)$ is causally continuous and $D=d$ by Theorem \ref{ape}, and so $D$ is finite. Thus $(M,\mathscr{F})$ is stable.
\end{proof}

The next result clarifies what are the simple spacetimes in the sense of Parfionov and Zapatrin \cite{parfionov00}.
\begin{theorem} \label{xhg}
Let $(M,\mathscr{F})$ be a distinguishing locally Lipschitz proper Lorentz-Finsler space such that $\mathscr{F}(\p C)=0$, and let $\mathscr{S}$ be the family of   $\mathscr{F}$-steep temporal functions.
The distance formula
\begin{equation} \label{dor}
 d(p,q)=\mathrm{inf} \{[f(q)-f(p)]^+\colon f \in \mathscr{S}  \}.
\end{equation}
holds if and only if  $(M,\mathscr{F})$  has  a finite and continuous Lorentz-Finsler distance $d$ (hence it is causally continuous), in which case $d=D$.
\end{theorem}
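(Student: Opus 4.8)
The plan is to establish the biconditional by treating the two implications separately, in both cases reducing everything to the already-proved representation results for the stable distance $D$ together with the characterization of the identity $d=D$ supplied by Theorem \ref{ape}. The key observation is that the distance formula for $\mathscr{S}$ is morally the formula for $D$ (Theorem \ref{aas}), and that $d=D$ holds precisely when $d$ is upper semi-continuous.

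First I would prove the ``if'' direction: assume $(M,\mathscr{F})$ has a finite and continuous Lorentz-Finsler distance $d$. Since $d$ is continuous it is in particular upper semi-continuous, so Theorem \ref{ape} (implication $2\Rightarrow 1$) yields $d=D$ together with causal continuity, while Corollary \ref{cot} shows that $(M,\mathscr{F})$ is stable, whence $D<\infty$. I would then compare the family $\mathscr{S}$ of all ($C^1$) $\mathscr{F}$-steep temporal functions with the subfamily $\mathscr{S}_0\subset\mathscr{S}$ of smooth strictly $\mathscr{F}$-steep temporal functions appearing in Theorem \ref{aas}. By Theorem \ref{aas}(c) applied to the stable space, $D(p,q)=\inf_{f\in\mathscr{S}_0}[f(q)-f(p)]^+$; since $\mathscr{S}_0\subseteq\mathscr{S}$, enlarging the family can only lower the infimum, giving $\inf_{f\in\mathscr{S}}[f(q)-f(p)]^+\le D(p,q)=d(p,q)$. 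For the reverse inequality I would invoke Theorem \ref{xzo}, which asserts $D(p,q)\le[f(q)-f(p)]^+$ for every $\mathscr{F}$-steep temporal function $f$; taking the infimum over $\mathscr{S}$ gives $d(p,q)=D(p,q)\le\inf_{f\in\mathscr{S}}[f(q)-f(p)]^+$. The two inequalities produce the distance formula (\ref{dor}).

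For the ``only if'' direction I would assume that (\ref{dor}) holds and read off the conclusions. The first observation is that $\mathscr{S}$ cannot be empty, for otherwise the right-hand side would equal $+\infty$ while the left-hand side satisfies $d(p,p)=0$; thus there exists an $\mathscr{F}$-steep temporal function, and Theorem \ref{xzo} forces $(M,\mathscr{F})$ to be stable, whence $D<\infty$. Next, each map $(p,q)\mapsto[f(q)-f(p)]^+$ is continuous, so the right-hand side of (\ref{dor}), being an infimum of continuous functions, is upper semi-continuous; consequently $d$ is upper semi-continuous. Theorem \ref{ape} then gives $d=D$ and causal continuity, and $D<\infty$ makes $d$ finite. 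Finiteness and continuity of $d$ follow by combining $d=D$ with the upper semi-continuity of $D$ (Theorem \ref{upq}(d)) and the lower semi-continuity of $d$ (Theorem \ref{ddb}).

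The proof is essentially an assembly of prior results, so there is no single hard computational step; the delicate points I anticipate are purely bookkeeping ones. I must verify that the family used in Theorem \ref{aas} is genuinely a subfamily of $\mathscr{S}$ (smooth strictly $\mathscr{F}$-steep temporal functions are in particular $C^1$ and $\mathscr{F}$-steep, hence lie in $\mathscr{S}$), and I must dispose of the empty-family edge case that guarantees stability in the ``only if'' direction. The conceptual content, namely that upper semi-continuity of $d$ is exactly what upgrades $d$ to $D$ and thereby turns the $D$-representation into a genuine $d$-representation, is already contained in Theorem \ref{ape}, so the main obstacle is simply to marshal these ingredients in the correct logical order.
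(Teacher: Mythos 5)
Your proposal is correct and follows essentially the same route as the paper's proof: both directions hinge on Theorem \ref{ape} (upper semi-continuity of $d$ is equivalent to $d=D$, with causal continuity), the ``if'' part being obtained from stability (Corollary \ref{cot}) and the representation of $D$ (Theorem \ref{aas}), and the ``only if'' part from the upper semi-continuity of an infimum of continuous functions together with Theorems \ref{xzo} and \ref{ddb}. The only differences are bookkeeping: you make explicit the passage from the smooth strictly steep family of Theorem \ref{aas} to the larger $C^1$ family $\mathscr{S}$ (via the inclusion of families and Theorem \ref{xzo}), a step the paper leaves implicit, and you rule out $\mathscr{S}=\emptyset$ using $d(p,p)=0$ where the paper instead appeals to distinction and Prop.\ \ref{iiu}.
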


It should be observed that under global hyperbolicity the distance formula does not require the locally Lipschitz proper condition, cf.\ Th.\ \ref{aad}.

\begin{proof}
If  $(M,\mathscr{F})$ has  a finite and continuous Lorentz-Finsler  distance $d$, then $(M,C)$ is causally continuous and $D=d$ by Theorem \ref{ape}, and so $D$ is finite. Thus $(M,\mathscr{F})$ is stable and by Theorem \ref{aas} the distance formula holds true. For the converse, distinction implies that $d(p,q)<+\infty$ for some $p,q\in M$ (by Prop.\ \ref{iiu}), thus since the distance formula holds, the family of steep temporal functions is non-empty and hence  the spacetime $(M,\mathscr{F})$ is stably causal and $d$ is finite.
The right-hand side of Eq.\ (\ref{dor}) being the infimum of a family of continuous function is upper semi-continuous, thus $d$ is upper semi-continuous.
By Th.\ \ref{ddb} $d$ is lower semi-continuous and by  Prop.\ \ref{ape} $d=D$ and $(M,C)$ is causally continuous.
\end{proof}

\subsubsection{Distance formula for stably causal spacetimes}
In this section we give the distance formula for stably causal closed Lorentz-Finsler spaces. Contrary to the formula for stable spacetimes it requires unbounded representing functions.

Given a preorder $R$ on $M$ we have defined the notion of $R$-isotone and $R$-utility function  $f\colon M\to \mathbb{R}$. The definition can be easily generalized to functions $f\colon M\to [-\infty, \infty]$ since the order $\le$ of the real line can be trivially extended. Such functions will be called extended $R$-isotone functions or extended $R$-utility functions.

 \begin{lemma} \label{kqa}
 Let $(M,C)$ be a stably causal closed cone structure, and let $t$ be an extended continuous $J$-isotone function. For every  $a,b\in\mathbb{R}$ and $\epsilon>0$ we can find a locally Lipschitz proper cone structure $C'>C$ such that $J_{C'}^+(t^{-1}([a,\infty]))  \subset t^{-1}([a-\epsilon, \infty])$ and $ J_{C'}^-(t^{-1}([-\infty,b])) \subset t^{-1}([-\infty, b+\epsilon])$.
 \end{lemma}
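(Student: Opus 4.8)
The plan is to treat the two inclusions separately and combine them at the end: if $C'_+>C$ realizes the future inclusion and $C'_->C$ the past one, then any locally Lipschitz proper cone structure $C'$ with $C<C'<C'_+\cap C'_-$ (which exists by Th.~\ref{ddo}) realizes both, since $J_{C'}\subset J_{C'_+}\cap J_{C'_-}$. So I would focus on the future part. Because $t$ is extended-valued, it is convenient to replace it by $\tilde t=\phi\circ t$, where $\phi\colon[-\infty,\infty]\to[-1,1]$ is a fixed continuous strictly increasing bijection; then $\tilde t$ is a \emph{bounded} continuous $J$-isotone function, $t^{-1}([a,\infty])=\tilde t^{-1}([\tilde a,1])$ with $\tilde a=\phi(a)$, and the target inclusion is equivalent to the statement that along every $C'$-causal curve $\sigma$ with $\tilde t(\sigma(0))\ge\tilde a$ one has $\tilde t(\sigma(1))>\tilde a-\tilde\epsilon$, where $\tilde\epsilon:=\phi(a)-\phi(a-\epsilon)>0$. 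In other words, setting $A:=\tilde t^{-1}([\tilde a,1])$, I must control the \emph{drop} of $\tilde t$ along $C'$-causal curves issued from $A$.

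The core is a compact drop-control lemma: for every compact $K$ and $\delta>0$ there is a locally Lipschitz proper cone structure $C_K\ge C$, with $C_K>C$ on $K$ and $C_K=C$ off a neighbourhood of $K$, such that along every $C_K$-causal curve $\gamma$ contained in $K$ one has $\tilde t(\gamma(1))\ge\tilde t(\gamma(0))-\delta$. I would prove this by contradiction: a failure for all such adapted structures, taken along a sequence $\tilde C_m\downarrow C$ as in Th.~\ref{sqd}, produces $\tilde C_m$-causal curves $\gamma_m\subset K$ with $\tilde t(\gamma_m(1))<\tilde t(\gamma_m(0))-\delta$. The endpoints lie in the compact set $K$, so after passing to a subsequence they converge to $x,y\in K$ with $\tilde t(y)\le\tilde t(x)-\delta$, whence $x\ne y$ and the curves do not shrink to a point. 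The limit curve theorem Th.~\ref{main} then yields either a $C$-causal curve from $x$ to $y$, contradicting $J$-isotonicity (which forces $\tilde t(x)\le\tilde t(y)$), or an inextendible $C$-causal curve imprisoned in $K$, contradicting the non-total imprisonment that follows from stable causality through the causal ladder Th.~\ref{cau}.

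Next I would globalise by interpolation adapted to a locally finite cover. Since $(M,C)$ is stably causal, hence strongly causal, fix a locally Lipschitz proper $\hat C>C$ and a locally finite cover $\{V_i\}$ of $M$ by relatively compact $\hat C$-causally convex open sets (Prop.~\ref{iiu}, Remark~\ref{nff}). For each $i$ apply the compact lemma to $\bar V_i$ with $\delta_i=\tilde\epsilon/2^{i+1}$, obtaining $C^{(i)}$; then, using a partition of unity and Prop.~\ref{doo}, build a locally Lipschitz proper $C'$ with $C<C'<\hat C$ and $C'\le C^{(i)}$ on $V_i$. The decisive point is that $C'<\hat C$ makes each $V_i$ also $C'$-causally convex, so any $C'$-causal $\sigma\colon[0,1]\to M$ meets each $V_i$ in a \emph{connected} subinterval of its domain, on which it is $C^{(i)}$-causal inside $\bar V_i$ and therefore drops $\tilde t$ by at most $\delta_i$. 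Covering the compact image $\sigma([0,1])$ by finitely many of the $V_i$, extracting a chain of overlapping connecting intervals and telescoping, the total drop of $\tilde t$ along $\sigma$ is bounded by $\sum_i\delta_i<\tilde\epsilon$ — each $V_i$ contributing at most once, thanks to the connected-intersection property — which gives $\tilde t(\sigma(1))>\tilde t(\sigma(0))-\tilde\epsilon$ and hence the future inclusion.

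The main obstacle is precisely this globalisation: a $C'$-causal curve issued from $A$ may be arbitrarily long and traverse unboundedly many regions, so bounding the pointwise drop of $\tilde t$ over a single compact set is not enough, and a naive uniform enlargement of $C$ would let small per-region drops accumulate. What keeps the accumulated drop below $\tilde\epsilon$ independently of the length of $\sigma$ is the combination of summable allowances $\delta_i$ with the causal-convexity bound that each $V_i$ is entered at most once. The compact lemma itself is routine once non-imprisonment is invoked; the real care lies in the adapted construction of $C'$ and in the extended-to-bounded reduction through $\phi$, which also disposes of the ambiguities of extended arithmetic when $t$ attains $\pm\infty$.
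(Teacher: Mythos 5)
Your proof is correct, but it takes a genuinely different route from the paper's. The paper does not redo a compact-control-plus-covering argument at this point: it uses stable causality to produce a time function $\tau\colon M\to[-1,1]$, forms $\tilde\tau=\tanh t+c\,\tau$ with $c$ small (a time function, being the sum of a continuous isotone function and a time function), chooses the complete Riemannian metric $h$ so that the level sets $t^{-1}(a)$ and $t^{-1}(a-\epsilon)$ (and the $b$-levels) are more than $\epsilon$ apart in $d^h$, and then simply invokes the previously established Lemma \ref{lmn}: there is a locally Lipschitz proper $C'>C$ with $J_{C'}\backslash R_\epsilon\subset\{(p,q)\colon\tilde\tau(p)<\tilde\tau(q)\}$. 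For $(p,r)\in J_{C'}$ the dichotomy is then metric: if $d^h(p,r)<\epsilon$, the choice of $h$ forbids dropping below $a-\epsilon$; if not, $\tilde\tau$ increases, and since $\vert\tau\vert\le 1$ and $c$ is small this forces $\tanh t(r)>\tanh a-2c>\tanh(a-\epsilon)$. Your scheme replaces all of this: a compact drop-control lemma (limit curve theorem plus non-imprisonment, exactly as you argue) is globalized over a locally finite cover by relatively compact $\hat C$-causally convex sets with geometrically summable allowances $\delta_i$, causal convexity guaranteeing that a $C'$-causal curve meets each $V_i$ in a single parameter interval, so that drops cannot accumulate. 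What your route buys is independence from Lemma \ref{lmn} and from time functions altogether (no additive trick $\tanh t+c\tau$ is needed); what it costs is redoing a covering-interpolation argument that structurally parallels the paper's own proof of Lemma \ref{lmn} (annuli plus the compact Lemma \ref{soh} there, causally convex sets plus summable drops here), so the paper's proof is much shorter given its toolkit.

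Two fixable slips. First, your compact lemma's object, a ``locally Lipschitz proper cone structure $C_K\ge C$ with $C_K=C$ off a neighbourhood of $K$'', cannot exist as stated when $C$ is merely upper semi-continuous; but the adaptedness is never used in your globalization (the constraint $C'\le C^{(i)}$ is only imposed on $V_i$), so just take $C_K>C$ globally locally Lipschitz proper; the contradiction argument via the sequence of Th.\ \ref{sqd} is unchanged. Second, when combining $C'_+$ and $C'_-$ through Th.\ \ref{ddo} you should justify that $C'_+\cap C'_-$ is a $C^0$ proper cone structure (its lower semi-continuity follows from Prop.\ \ref{low}, using $C\subset \mathrm{Int}\,C'_+\cap\mathrm{Int}\,C'_-=\mathrm{Int}(C'_+\cap C'_-)$), or avoid the issue by imposing both families of constraints in a single interpolation.
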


\begin{proof}
Let $h$ be a complete Riemannian metric such that the $d^h$ distance between $t^{-1}(a)$ and $t^{-1}(a-\epsilon)$, and between $t^{-1}(b)$ and $t^{-1}(b+\epsilon)$ is larger than $\epsilon$. Let $\tau\colon M\to[-1,1] $ be a time function, then $\tilde \tau=\tanh t+c \tau$, $c>0$, being the sum of a continuous isotone function and a time function is a time function, where we have set by convention $\tanh \pm \infty=\pm 1$. We choose $c$ so small that $ \tanh a -2 c>\tanh (a-\epsilon)$, and $ \tanh b +2 c<\tanh (b+\epsilon)$.  By Lemma \ref{lmn} there is a locally Lipschitz proper cone structure $C'>C$ such that, having defined $R_\epsilon=\{(p,q)\colon d^h(p,q)<\epsilon\}$,  we have ${J_{C'}}\backslash R_\epsilon\subset \{(p,q)\colon \tilde\tau (p)< \tilde \tau(q)\}$.
Let $p\in t^{-1}([a,\infty])$ and $r\in J_{C'}^+(p)$. If $d^h(p,r)<\epsilon$ then $r\in  t^{-1}([a-\epsilon, \infty])$ and we have finished. Otherwise   $(p,r)\in J_{C'}\backslash R_\epsilon$,  thus $\tilde\tau(p)<\tilde \tau(r)$. From this inequality we have $\tanh t(r)>\tanh t(p) +c [\tau(p)-\tau(r)] \ge \tanh a -2 c>\tanh (a-\epsilon)$, that is $t(r)>a-\epsilon$. Similarly, let $q\in t^{-1}([-\infty,b])$  and  $r\in  J_{C'}^-(q)$. If $d^h(r,q)<\epsilon$ then $r\in  t^{-1}([-\infty,b+\epsilon])$ and we have finished.  Otherwise, $(r,q)\in J_{C'}\backslash R_\epsilon$,  thus $\tilde \tau(r)<\tilde \tau(q)$. From this inequality we have $ \tanh t(r)<\tanh t(q) +c [\tau(q)-\tau(r)] \le \tanh b +2 c<\tanh (b+\epsilon)$, that is $t(r)<b+\epsilon$, which concludes the proof.
\end{proof}

\begin{lemma}
Let $(M,\mathscr{F})$ be a stably causal closed Lorentz-Finsler space, and let $t$ be an extended continuous $J$-isotone function. Let $B=t^{-1}(\mathbb{R})$ be its finiteness domain, then $D\vert_{B\times B}=D^B$ where $D^B$ is the Lorentz-Finsler stable distance for $(B,\mathscr{F}\vert_B)$.
\end{lemma}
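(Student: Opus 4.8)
The plan is to prove the two inequalities $D^B\leq D|_{B\times B}$ and $D|_{B\times B}\leq D^B$ separately, where throughout $p,q\in B=t^{-1}(\mathbb{R})$, which is an open submanifold of $M$ (the preimage of the open set $(-\infty,+\infty)\subset[-\infty,+\infty]$), so that $(B,\mathscr{F}|_B)$ is again a stably causal closed Lorentz-Finsler space and $D^B$ is well posed. The inequality $D^B\leq D$ is the soft direction. For any locally Lipschitz proper $\mathscr{F}'>\mathscr{F}$ on $M$, the restriction $\mathscr{F}'|_B$ is a locally Lipschitz proper Lorentz-Finsler structure on $B$ with $\mathscr{F}'|_B>\mathscr{F}|_B$, and every continuous $C'$-causal curve in $B$ from $p$ to $q$ is also one in $M$, with the same length; hence $d_{\mathscr{F}'|_B,B}(p,q)\leq d'(p,q)$. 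Since $\{\mathscr{F}'|_B:\mathscr{F}'>\mathscr{F}\text{ on }M\}$ is a subfamily of all wider structures on $B$, taking infima yields $D^B(p,q)\leq\inf_{\mathscr{F}'>\mathscr{F}}d_{\mathscr{F}'|_B,B}(p,q)\leq\inf_{\mathscr{F}'>\mathscr{F}}d'(p,q)=D(p,q)$.

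For the reverse inequality I first dispose of the trivial case. If $(p,q)\notin J_S$ then $D(p,q)=0$ by Th.\ \ref{upq}(a); moreover $(p,q)\notin J_S^B$, since a $C'|_B$-causal curve in $B$ joining $p$ to $q$ would be a $C'$-causal curve in $M$, so $D^B(p,q)=0$ as well by Th.\ \ref{upq}(a) applied to $B$. Thus I may assume $(p,q)\in J_S$. Under stable causality $J_S=K$ (Th.\ \ref{nio}), and by Lemma \ref{jba}(a) the extended continuous $J$-isotone function $t$ is $K$-isotone; hence $a:=t(p)\leq t(q)=:b$, both finite. Fixing $\epsilon'>0$, I apply Lemma \ref{kqa} to $(M,C)$ and $t$ at the levels $a,b$ to obtain a locally Lipschitz proper cone structure $C_1>C$ with $J^+_{C_1}(t^{-1}([a,\infty]))\subset t^{-1}([a-\epsilon',\infty])$ and $J^-_{C_1}(t^{-1}([-\infty,b]))\subset t^{-1}([-\infty,b+\epsilon'])$. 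Since $t(p)=a$ and $t(q)=b$, any $C_1$-causal curve $\gamma$ from $p$ to $q$ satisfies $\gamma\subset J^+_{C_1}(p)\cap J^-_{C_1}(q)\subset K_t:=t^{-1}([a-\epsilon',b+\epsilon'])$, a closed subset of $M$ contained in the open set $B$. The same trapping holds for every structure narrower than $C_1$, because its causal curves are $C_1$-causal.

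It remains to build a competitor $\mathscr{F}'$ on $M$. Using Prop.\ \ref{cso} I upgrade $C_1$ to a locally Lipschitz proper Lorentz-Finsler structure $\mathscr{F}_1>\mathscr{F}$ whose cone is $\leq C_1$ (so trapping still applies). Given $\delta>0$, choose $\tilde{\mathscr{F}}>\mathscr{F}|_B$ on $B$ with $d_{\tilde{\mathscr{F}},B}(p,q)\leq D^B(p,q)+\delta$, and then, by directedness of $\{\cdot>\mathscr{F}|_B\}$ on $B$ (as remarked after the definition of $D$), replace it by $\mathscr{F}^*>\mathscr{F}|_B$ with $\mathscr{F}^*<\tilde{\mathscr{F}}$ and $\mathscr{F}^*<\mathscr{F}_1|_B$; thus $C^{*\times}<C_1^\times$ on $B$ and $d_{\mathscr{F}^*,B}(p,q)\leq D^B(p,q)+\delta$. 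Next I glue $\mathscr{F}^*$ (on an open set containing $K_t$ with closure in $B$) to $\mathscr{F}_1$ elsewhere, by convex combinations of the product cones on $M^\times$ (Prop.\ \ref{doo}), obtaining a $C^0$ proper Lorentz-Finsler structure $\check{\mathscr{F}}$ on $M$ with $\mathscr{F}<\check{\mathscr{F}}$, $\check{\mathscr{F}}=\mathscr{F}^*$ on $K_t$, and product cone $\leq C_1^\times$ throughout; then Prop.\ \ref{cso} yields a locally Lipschitz proper $\mathscr{F}'$ with $\mathscr{F}<\mathscr{F}'<\check{\mathscr{F}}$, so $C'<C_1$ on $M$ and $C'^\times<C^{*\times}$ on $K_t$. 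By the trapping, every $C'$-causal curve $\gamma$ from $p$ to $q$ lies in $K_t\subset B$, where it is $C^*$-causal with $\ell'(\gamma)\leq\ell_{\mathscr{F}^*}(\gamma)\leq d_{\mathscr{F}^*,B}(p,q)$. Taking the supremum gives $d'(p,q)\leq d_{\mathscr{F}^*,B}(p,q)\leq D^B(p,q)+\delta$, whence $D(p,q)\leq D^B(p,q)+\delta$; letting $\delta\to0$ closes the argument.

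I expect the main obstacle to be the construction of $\mathscr{F}'$ in the last paragraph: producing a single globally defined locally Lipschitz proper structure on $M$ that simultaneously stays below $C_1$ everywhere and below $\mathscr{F}^*$ on the closed set $K_t\subset B$. The key device making the two constraints compatible (and avoiding the intersection of unrelated proper cones) is the extra condition $\mathscr{F}^*<\mathscr{F}_1|_B$ secured by directedness in $B$, together with the precise containment $K_t\subset B$ delivered by Lemma \ref{kqa}; correctly invoking that lemma so that no connecting curve can escape $K_t$ is the delicate step the whole reduction rests on.
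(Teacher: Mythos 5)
Your proof is correct, and its pivot is the same as the paper's: Lemma \ref{kqa}, applied at the levels $a=t(p)$, $b=t(q)$, produces a wider locally Lipschitz proper cone structure all of whose connecting causal curves from $p$ to $q$ are trapped in $t^{-1}([a-\epsilon,b+\epsilon])\subset B$, and the two stable distances are then compared by taking infima over structures narrower than this one. The difference is one of completeness rather than of route. After the trapping step the paper's proof consists of the single phrase ``taking the infimum over $\mathscr{F}'$ gives $D(p,q)=D^B(p,q)$''; the inequality $D\ge D^B$ is indeed immediate from this (each restriction $\mathscr{F}'\vert_B$ is a competitor for $D^B$, which is also your soft direction), but the reverse inequality $D\le D^B$ silently uses that such global structures are cofinal, for the purpose of computing $D^B(p,q)$, among the competitors $\mathscr{F}''>\mathscr{F}\vert_B$ that live only on $B$ and need not extend to $M$. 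Your third paragraph supplies exactly this missing cofinality: directedness on $B$ gives $\mathscr{F}^*$ below both the near-optimal $\mathscr{F}''$ and the restriction of a global Finslerization $\mathscr{F}_1$ of a cone inside $\tilde C$ (Prop.\ \ref{cso}); the vertical-hyperplane convex combination of product cones (Prop.\ \ref{doo}, the same device used in the proof of Th.\ \ref{ssh}) glues $\mathscr{F}^*$ over the trapping region to $\mathscr{F}_1$ elsewhere, the transition taking place inside $B$ where both are defined; and Prop.\ \ref{cso} then squeezes a locally Lipschitz competitor underneath. So what you flag as the ``main obstacle'' is a genuine step that the paper elides, and your treatment of it is sound; the remaining pieces (the restriction argument for $D^B\le D$, and the case $(p,q)\notin J_S$, where in fact $D(p,q)=0\le D^B(p,q)$ already suffices for the direction at issue) match the paper's logic.
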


\begin{proof}
Let $p,q\in B$, $a=t(p)$, $b=t(q)$, then by Lemma \ref{kqa} there is $\tilde C>C$ such that every continuous $\tilde C$-causal curve connecting $p$ to $q$ is entirely contained in $B$. As a consequence, for every $\mathscr{F}'>\mathscr{F}$ such that, $C<C'<\tilde C$, $d'(p,q)=d'{}^B(p,q)$ which taking the infimum over $\mathscr{F}'$ gives $D(p,q)=D^B(p,q)$, and given the arbitrariness of $p,q\in B$, $D\vert_{B\times B}=D^B$.
\end{proof}

\begin{theorem} \label{cib}
Let $(M,\mathscr{F})$ be a stably causal closed Lorentz-Finsler space, and let $t$ be an extended continuous $J$-isotone function which is $\mathscr{F}$-steep and temporal wherever it is finite.
With the convention $\infty-\infty=\infty$ we have for every $p,q\in M$
\begin{equation} \label{kff}
D(p,q)\le [ t(q)-t(p)]^+  .
\end{equation}
\end{theorem}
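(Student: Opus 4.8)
The plan is to separate the argument according to whether the values $t(p)$ and $t(q)$ are finite. Writing $B=t^{-1}(\mathbb{R})$ for the (open) finiteness domain, the inequality is trivial whenever the right-hand side is $+\infty$, so only two regimes require work: the case $p,q\in B$, where I reduce to the already-proved finite statement, and the case in which the right-hand side equals $0$ while at least one of $t(p),t(q)$ is infinite, where I show $(p,q)\notin J_S$ and invoke $D(p,q)=0$.

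First I would treat $p,q\in B$. Since $B$ is open, $(B,C\vert_B)$ is again a closed cone structure, and by hypothesis $t\vert_B$ is an $\mathscr{F}\vert_B$-steep temporal function on $B$. Theorem \ref{xzo} applied to $(B,\mathscr{F}\vert_B)$ then yields $D^B(p,q)\le [t(q)-t(p)]^+$ for all $p,q\in B$, where $D^B$ is the stable distance of the restricted space. By the preceding lemma $D\vert_{B\times B}=D^B$, and since $t(p),t(q)$ are finite the bracket is the ordinary difference; combining the two gives the claim on $B\times B$.

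Next I would dispose of the infinite cases. Using the convention $\infty-\infty=\infty$, the right-hand side is $+\infty$ — and the inequality trivial — unless $t(q)<+\infty$ and $t(p)>-\infty$; outside $B\times B$ this forces $[t(q)-t(p)]^+=0$, so it remains to prove $D(p,q)=0$, for which by Theorem \ref{upq}(a) it suffices to show $(p,q)\notin J_S$. Suppose $(p,q)\in J_S$, so $(p,q)\in J_{C'}$ for every $C'>C$. If $t(p)=+\infty$ (hence $t(q)<+\infty$), I pick a finite $a$ and $\epsilon>0$ with $a-\epsilon>t(q)$ and $t(p)\ge a$; Lemma \ref{kqa} then provides $C'>C$ with $J_{C'}^+(t^{-1}([a,\infty]))\subset t^{-1}([a-\epsilon,\infty])$, whence $q\in J_{C'}^+(p)$ gives $t(q)\ge a-\epsilon>t(q)$, a contradiction. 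The symmetric possibility $t(q)=-\infty$ (hence $t(p)>-\infty$) is excluded the same way using the past inclusion $J_{C'}^-(t^{-1}([-\infty,b]))\subset t^{-1}([-\infty,b+\epsilon])$ of Lemma \ref{kqa} with a finite $b$ and $\epsilon>0$ satisfying $b+\epsilon<t(p)$. In either case $(p,q)\notin J_S$ and $D(p,q)=0=[t(q)-t(p)]^+$.

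The main obstacle is not analytic but organizational: the hypotheses give control of $t$ only where it is finite, so the finite regime must be handled on the open set $B$ via the restriction identity $D\vert_{B\times B}=D^B$, while the boundary behaviour at $\{t=\pm\infty\}$ cannot be reached by steepness and must instead be forced through the stable separation property of Lemma \ref{kqa}. Keeping the extended-real bookkeeping consistent with the convention $\infty-\infty=\infty$ — in particular verifying that every infinite configuration other than $\{t(p)=+\infty,\ t(q)<+\infty\}$ and $\{t(q)=-\infty,\ t(p)>-\infty\}$ yields $+\infty$ on the right — is the one place where care is needed.
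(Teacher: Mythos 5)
Your proof is correct, and its core — the finite regime — coincides with the paper's: restrict to $B=t^{-1}(\mathbb{R})$, invoke the identity $D\vert_{B\times B}=D^B$ from the preceding lemma, and apply Theorem \ref{xzo} to the steep temporal function $t\vert_B$ on $(B,\mathscr{F}\vert_B)$. Where you diverge is in the infinite-value bookkeeping. The paper splits on $J_S$-membership: for $(p,q)\in J_S$ it observes that, by Theorem \ref{ndc} (continuous $J$-isotone $=$ continuous $J_S$-isotone under stable causality, applied to $\tanh t$), the extended function $t$ is $J_S$-isotone, so $t(p)\le t(q)$; this single inequality makes every configuration with an infinite value yield $[t(q)-t(p)]^+=\infty$, and the case $(p,q)\notin J_S$ is killed by $D=0$. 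You instead split on the value of the right-hand side and, in the only dangerous configurations ($t(p)=+\infty$ with $t(q)<+\infty$, or $t(q)=-\infty$ with $t(p)>-\infty$), prove $(p,q)\notin J_S$ by contradiction using the stable separation Lemma \ref{kqa} together with $J_S\subset J_{C'}$, then conclude $D(p,q)=0$ from Theorem \ref{upq}(a). The two routes are logically equivalent — your Lemma \ref{kqa} argument is in effect a hands-on re-derivation of the special case of extended $J_S$-isotonicity that is needed — but the paper's appeal to Theorem \ref{ndc} dispatches all infinite cases in one line, whereas your version buys independence from Theorem \ref{ndc} at the cost of an explicit case analysis; both stay entirely within lemmas proved in this section, so either would compile into the paper without further work.
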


\begin{proof}
Let us consider the case $(p,q)\in J_S$.
The $J_S$-isotone continuous functions coincide with the $J$-isotone continuous functions by Th.\ \ref{ndc}, thus $t(p)\le t(q)$ which implies the validity of the inequality whenever $t(p)$ or $t(q)$ is not finite, since then $[t(q)-t(p)]^+=\infty$.  If they are both finite then defining $B= t^{-1}(\mathbb{R})$, we have $p,q\in B$, which implies $D(p,q)=D^B(p,q)$. But $t\vert_B$ is a $\mathscr{F}$-steep temporal function, thus by Th.\ \ref{xzo}, $D^B(p,q)\le t(q)-t(p)=[t(q)-t(p)]^+$. Finally, if  $(p,q)\notin J_S$, $D=0$ in which case Eq.\ (\ref{kff})  is satisfied.
\end{proof}

We recall that by Th. \ref{sov} if $(M,\mathscr{F})$ is a stably causal closed Lorentz-Finsler space
the Seifert relation of $(M^\times, C^\downarrow)$ is given by
\begin{equation}
J_S^\downarrow=\{((p,r),(p',r'))\colon (p,p')\in J_S \textrm{ and } r'-r \le D(p,p')\}.
\end{equation}

\begin{lemma} \label{mvo}
Let $F$ be a smooth temporal function on $(M^\times, C^\downarrow)$. Let $a\in \mathbb{R}$, let $O$ be the projection to $M$ of $F^{-1}(a)$, and let $O^+\subset M$ (resp.\ $O^- \subset M$) consist of those points $p$ over whose fiber $F>a$ (resp.\ $F<a$). The function $f\colon M\to [-\infty,+\infty]$ defined by $F^{-1}(a)=(p,f(p))$ on $O$, and $\pm \infty$ on $O^
\pm$, is a   continuous $J$-isotone function which is smooth strictly  $\mathscr{F}$-steep temporal on $O$.
\end{lemma}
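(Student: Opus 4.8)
The plan is to exploit the single structural feature of $C^\downarrow$ that drives everything. By (\ref{xyb}) the vertical vector $(0,-1)$ lies in $C^\downarrow$ (take $y=0$, $z=-1\le \mathscr{F}(0)=0$), so along each $\mathbb{R}$-fiber $\{p\}\times\mathbb{R}$ the future direction is that of \emph{decreasing} $r$. Since $F$ is temporal, $\dd F$ is positive on $C^\downarrow$, and writing $\dd F=\dd_M F+F_r\,\dd r$ with $F_r:=\p F/\p r$ we get $F_r=\dd F(0,1)=-\dd F(0,-1)<0$. Thus $F$ is strictly decreasing along every fiber. Consequently each fiber meets $F^{-1}(a)$ at most once, and $M$ is partitioned into $O$, $O^+$ and $O^-$ exactly as in the statement, $f$ being finite on $O$ and $\pm\infty$ on $O^\pm$. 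Also recall that $F$ increases strictly over continuous $C^\downarrow$-causal curves $\Gamma$, since $\tfrac{\dd}{\dd t}F(\Gamma)=\dd F(\dot\Gamma)>0$ almost everywhere.

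First I would treat $f$ on $O$. Because $F_r\neq 0$ along $F^{-1}(a)$, the implicit function theorem shows that $O$ is open and that $F(p,f(p))=a$ defines $f$ smoothly there (uniqueness of the fiber solution comes from strict vertical monotonicity). Differentiating $F(p,f(p))=a$ gives $\dd_M F(y)+F_r\,\dd f(y)=0$, i.e.\ $\dd f(y)=-\dd_M F(y)/F_r$. For $y\in C$, temporality applied to the boundary vector $(y,\mathscr{F}(y))\in C^\downarrow$ reads $\dd_M F(y)+F_r\,\mathscr{F}(y)>0$, hence $\dd_M F(y)>|F_r|\,\mathscr{F}(y)$ and $\dd f(y)=\dd_M F(y)/|F_r|>\mathscr{F}(y)$. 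So $f$ is strictly $\mathscr{F}$-steep on $O$, and strictly steep functions are temporal (Sec.\ \ref{nug}(f)); this is Lemma \ref{idf} localized to $O$.

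Next I would establish continuity of $f$ as a map into $[-\infty,+\infty]$ by checking that $\{f<a'\}$ and $\{f>a'\}$ are open for every real $a'$. The key identities are $\{f<a'\}=\{p:F(p,a')<a\}$ and $\{f>a'\}=\{p:F(p,a')>a\}$, which follow from strict vertical monotonicity by inspecting the three regions separately: on $O$ one uses $a'>f(p)\Leftrightarrow F(p,a')<a$; on $O^+$ both sides fail (since $F>a$ on the whole fiber); on $O^-$ both hold. Since $p\mapsto F(p,a')$ is the restriction of the smooth $F$ to the slice $r=a'$, it is continuous, so both sets are open and $f$ is continuous.

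The remaining, and most delicate, point is $J$-isotonicity, where the values $\pm\infty$ must be handled consistently. Given $(p,q)\in J$ and a continuous $C$-causal curve $\gamma$ from $p$ to $q$, for each fixed $r$ its horizontal lift $\Gamma(t)=(\gamma(t),r)$ is a continuous $C^\downarrow$-causal curve, because $\dot\Gamma=(\dot\gamma,0)$ with $\dot\gamma\in C$ and $0\le\mathscr{F}(\dot\gamma)$; hence $F(p,r)\le F(q,r)$ for every $r$. I would then combine this with strict monotonicity in $r$: if $q\in O^+$ then $f(q)=+\infty$ and nothing is to prove; if $q\in O^-$ the inequality forces $F<a$ on the entire fiber of $p$, so $p\in O^-$ and $f(p)=f(q)=-\infty$; if $q\in O$, taking $r=f(q)$ gives $F(p,f(q))\le a$, which rules out $p\in O^+$, yields $f(p)\le f(q)$ by monotonicity when $p\in O$, and gives $f(p)=-\infty$ when $p\in O^-$. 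In all cases $f(p)\le f(q)$. The main obstacle is precisely this last bookkeeping with infinite values: its consistency rests on the horizontal-lift trick together with the strict vertical monotonicity forced by $(0,-1)\in C^\downarrow$.
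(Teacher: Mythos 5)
Your proof is correct, but it reaches the central inequality by a genuinely different and more elementary route than the paper's. The paper obtains $F((p,c))\le F((q,c))$ for all $c$ from the Seifert machinery: by Th.\ \ref{sov}, $(p,q)\in J_S$ gives $((p,c),(q,c))\in J_S^\downarrow$ (since $D\ge 0$), and temporal functions are isotone for the Seifert relation (Th.\ \ref{ndc}, Th.\ \ref{bhd}); this yields the stronger conclusion that $f$ is an extended $J_S$-isotone function. You instead lift a connecting continuous causal curve horizontally, $\Gamma(t)=(\gamma(t),r)$, note that $\dot\Gamma=(\dot\gamma,0)\in C^\downarrow$ a.e.\ because $0\le\mathscr{F}(\dot\gamma)$, and conclude $F((p,r))\le F((q,r))$ from temporality alone; this gives only $J$-isotonicity, but that is exactly what the lemma asserts, and it is all that the subsequent representation theorem uses (there, $J$-isotonicity is upgraded to $J_S$-isotonicity via $\tanh f$ and Th.\ \ref{ndc} anyway). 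Your version therefore avoids Th.\ \ref{sov} and the stable-causality input entirely, relying only on the definition of $C^\downarrow$ and the vertical vector $(0,-1)\in C^\downarrow$; the case analysis with $O$, $O^\pm$ and the infinite values is then the same as in the paper. Your continuity argument, identifying $\{f<a'\}=\{p\colon F((p,a'))<a\}$ and $\{f>a'\}=\{p\colon F((p,a'))>a\}$ through strict vertical monotonicity, is more explicit than the paper's one-line appeal to the closedness of the level sets of $F$, and your implicit-function-theorem computation giving $\dd f(y)>\mathscr{F}(y)$ on $C$ is precisely the content of Lemma \ref{idf}, which the paper simply cites. In short: the paper's route buys $J_S$-isotonicity for free, while yours buys self-containedness, at no cost to any downstream use of the lemma.
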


\begin{proof}
 Let $(p,q)\in J_S$, then for every $c$, $P=(p,c)$ and $Q=(q,c)$ are such that $(P,Q)\in  J_S^\downarrow$, which implies $F(P)\le F(Q)$. Thus, if $p \in O^+$, then it cannot hold that  $q \in O^-$, since in the fiber of the former point $F>a$ while in that of the latter $F<a$. Similarly, if $p\in O$  it cannot hold that $q \in O^-$, since with $c=f(p)$, we would have that $F(P)=a$ and $F(Q)<a$. Analogously, we obtain that if $q\in O^-\cup O$ we cannot have that $p\in O^+$. Thus if any among $p$ or $q$ does not belong to $O$, we have $f(p)\le f(q)$. If they both belong to $O$, with $c=f(p)$ we have $a=F(P)\le F(Q)$ which implies $c\le f(q)$, that is $f(p)\le f(q)$.
 We conclude that $f$ is indeed $J_S$-isotone. Its continuity follows because  the level sets of $F$ are closed, while it is  strictly  $\mathscr{F}$-steep temporal on $O$ due to the temporality of $F$, see also Lemma \ref{idf}. \ \
\end{proof}

\begin{theorem}
Let $(M,\mathscr{F})$ be a stably causal closed Lorentz-Finsler space. Let ${\mathscr{S}}$ be the family of extended continuous $J$-isotone functions $f\colon M\to [-\infty,+\infty]$ which are smooth strictly  $\mathscr{F}$-steep temporal functions wherever they are finite.
The family ${\mathscr{S}}$   represents
\begin{itemize}
\item[(a)] the order $J_S$, namely $(p, q)\in J_S \Leftrightarrow f(p)\le f(q), \ \forall f \in {\mathscr{S}}$;
\item[(b)] the manifold topology, in fact  for every open set $O\ni p$ and $\epsilon>0$ we can find $f,h \in {\mathscr{S}}$  such that
\[p\in \{q\colon f(q)>0\}\cap \{q\colon h(q)<0\}\subset \{q\colon f(q)\ge h(q)\} \subset  O \]
with $\vert f\vert $ and $\vert h\vert$ bounded by $\epsilon$ in neighborhood of $\{q\colon f(q)\ge h(q)\}$;
\item[(c)] the stable distance, in the sense that  the distance formula holds true: for every $p,q\in M$ (with the convention $\infty-\infty=\infty$)
\begin{equation}
 D(p,q)=\mathrm{inf} \big\{[f(q)-f(p)]^+\colon \ f \in {\mathscr{S}}\big\}.
\end{equation}
\end{itemize}
\end{theorem}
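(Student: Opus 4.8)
The plan is to transfer everything to the product cone structure $(M^\times,C^\downarrow)$, which is stably causal by Th.\ \ref{ssh} and whose Seifert relation is computed explicitly in Th.\ \ref{sov}, namely $J_S^\downarrow=\{((p,r),(p',r'))\colon (p,p')\in J_S,\ r'-r\le D(p,p')\}$. The two facts I will lean on are that $J_S^\downarrow$ and the manifold topology of $M^\times$ are represented by the smooth temporal functions on $(M^\times,C^\downarrow)$ (Th.\ \ref{swq} applied to this product), and that Lemma \ref{mvo} turns any level set of a smooth temporal function on $M^\times$ into a member of $\mathscr{S}$, i.e.\ an extended continuous $J$-isotone function which is smooth strictly $\mathscr{F}$-steep temporal wherever it is finite. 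Throughout I use that stable causality gives $J_S=K$ (Th.\ \ref{bhd}) and strong causality (Th.\ \ref{mwh}), and that along a fibre the $C^\downarrow$-future direction is that of decreasing $r$, so temporal functions on $M^\times$ are strictly decreasing in $r$.

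For (a) the forward implication is pure order theory: every $f\in\mathscr{S}$ is extended continuous $J$-isotone, and since $J_S=K$ is the smallest closed transitive relation containing $J$, the (trivially extended) Lemma \ref{jba}(a) shows $f$ is $K=J_S$-isotone, so $(p,q)\in J_S$ forces $f(p)\le f(q)$. For the converse I use the product trick: if $(p,q)\notin J_S$ then by Th.\ \ref{sov} the points $P=(p,0)$, $Q=(q,0)$ satisfy $(P,Q)\notin J_S^\downarrow$, so by Th.\ \ref{swq} there is a smooth temporal $H$ on $M^\times$ with $H(P)>H(Q)$; choosing $a$ strictly between $H(Q)$ and $H(P)$ and letting $f$ be the level function of $H$ at $a$ supplied by Lemma \ref{mvo}, the monotonicity of $H$ in $r$ yields $f(p)>0>f(q)$, so $f\in\mathscr{S}$ separates $p$ from $q$.

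Part (c) is then short. The inequality $D(p,q)\le[f(q)-f(p)]^+$ for every $f\in\mathscr{S}$ is exactly Th.\ \ref{cib} (strict steepness being stronger than the steepness assumed there), hence $D(p,q)\le\inf_{\mathscr{S}}[f(q)-f(p)]^+$; when $D(p,q)=+\infty$ this already forces equality, and when $(p,q)\notin J_S$ or $p=q$ the value is $0=D(p,q)$ by Th.\ \ref{upq}(a) together with part (a). For $(p,q)\in J_S\backslash\Delta$ with $D(p,q)<\infty$ and $\epsilon>0$, set $P=(p,0)$, $Q=(q,D(p,q)+\epsilon)$; Th.\ \ref{sov} gives $(P,Q)\notin J_S^\downarrow$, so Th.\ \ref{swq} yields a smooth temporal $H$ on $M^\times$ with $H(P)>H(Q)$. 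Subtracting the constant $H(P)$ and taking the level function $g$ of $H-H(P)$ at $0$ via Lemma \ref{mvo} gives $g(p)=0$ and, since $H(Q)<H(P)$ and $H$ decreases in $r$, a value $g(q)<D(p,q)+\epsilon$ (possibly $-\infty$); hence $[g(q)-g(p)]^+\le D(p,q)+\epsilon$, and $\epsilon\to0$ closes the formula. The point of allowing extended values is precisely that $g$ need not cross every fibre, so no globally defined $\mathscr{F}$-steep temporal function (unavailable outside the stable case) is needed.

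The real work is part (b). The idea is to perturb a single smooth temporal function $F$ on $M^\times$, normalized so that $F(p,0)=0$, by slowly varying functions pulled back from $M$: for smooth $\psi\colon M\to\mathbb{R}$ the function $F-\psi\circ\pi$ stays temporal as long as $\Vert\dd\psi\Vert$ is below the locally positive lower bound of $\dd F$ on the boundary vectors $(y,\mathscr{F}(y))$ of $C^\downarrow$, a condition automatic where $\psi$ is constant and, on the compact support, securable by scaling. Taking level functions $f,h$ of $F-\psi_f\circ\pi$ and $F-\psi_h\circ\pi$ at level $0$, Lemma \ref{mvo} gives $f,h\in\mathscr{S}$ and $f(q)\ge h(q)$ exactly where $\psi_f(q)\le\psi_h(q)$; I will arrange $\chi:=\psi_h-\psi_f$ to equal $2c$ at $p$ and to become negative outside a small compact neighbourhood $V$ of $p$ with $\bar V\subset O$, so that $\{f\ge h\}=\{\chi\ge0\}\subset O$ and $p\in\{f>0\}\cap\{h<0\}$. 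The delicate point, which I expect to be the main obstacle, is the simultaneous bookkeeping of three competing requirements: $\chi$ must change sign inside the prescribed $O$ (localization), its gradient must stay below the temporality threshold of $F$, and $|f|,|h|$ must be at most $\epsilon$ near $\{f\ge h\}$. These are reconciled by first shrinking $V$ so that $|F(\cdot,0)|$ is tiny on $\bar V$, and only then choosing the amplitude $c$ small enough to control both the gradient of $\chi$ and the displacement of the level sets; the positive homogeneity of the cone cross-section keeps the threshold bounded below as $V$ shrinks toward $p$.
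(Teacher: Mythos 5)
Your parts (a) and (c) are essentially the paper's own argument: reduce to $(M^\times,C^\downarrow)$, use Th.\ \ref{sov} to encode $J_S$ and $D$ inside $J_S^\downarrow$, separate non-related pairs by a smooth temporal function on the product (Th.\ \ref{vkg}, which the paper cites through Th.\ \ref{swq}), and read off an element of $\mathscr{S}$ from a level set via Lemma \ref{mvo}, with Th.\ \ref{cib} supplying $D\le\inf$. (The paper additionally observes that $Q'=(q,D(p,q))$ satisfies $(P,Q')\in J_S^\downarrow$, so $F(P)\le F(Q')$ and hence $f(q)\ge D(p,q)$ is finite; your variant tolerating $g(q)=-\infty$ also works, and that case is in fact excluded by the isotonicity you established in (a).) Part (b) is where you genuinely depart. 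The paper applies the topology-representing half of Th.\ \ref{vkg} directly on the product with target $O^\times=O\times(-\epsilon,\epsilon)$: it takes two temporal functions $\check T,\hat T$ on $M^\times$ whose sandwich set $\{\check T\le 0\}\cap\{\hat T\ge 0\}$ is compact and contained in $O^\times$; since that sandwich set is exactly $\{(q,r)\colon \check f(q)\le r\le \hat f(q)\}$, its containment in $O\times(-\epsilon,\epsilon)$ delivers $\{\hat f\ge\check f\}\subset O$ and the $\epsilon$-bounds in one stroke. You instead perturb a single temporal function $F$ by pullbacks $\psi\circ\pi$ of compactly supported bump functions, preserving temporality via a gradient threshold and localizing the crossing set $\{\chi\ge 0\}$ by hand. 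Your route is more elementary: it needs only one temporal function on the product rather than the full topology representation there (indeed the paper invokes a compactness-strengthened form of Th.\ \ref{vkg} that is not literally stated in that theorem), at the cost of the analytic bookkeeping you flag. One technical caution on that bookkeeping: take the uniform positive lower bound for $\dd F$ over the compact unit-sphere bundle of $C^\downarrow$ above $\bar V$ (compact because $C^\downarrow$ is a closed cone structure), rather than over the boundary vectors $(y,\mathscr{F}(y))$ as you phrase it, since $\mathscr{F}$ is only upper semi-continuous and the set of such boundary vectors need not be compact; with that adjustment, and your order of quantifiers (shrink $V$ so $|F(\cdot,0)|$ is small first, then choose the amplitude $c$), part (b) closes correctly.
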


\begin{proof}
$(a)$.  One direction is clear since if  $f\in \mathscr{S}$ then $\tanh f$ is a $J$-isotone function and hence a $J_S$-isotone function by Th.\ \ref{ndc}, which implies that $f$ is an extended $J_S$-isotone function. For the other direction let $(p,q)\notin J_S$, by Th.\ \ref{sov} $P=(p,0)$ and $Q=(q,0)$ are such that $(P,Q)\notin J_S^\downarrow$. Thus by Th.\ \ref{vkg} there is a smooth temporal function $F$ on $(M^\times, C^\downarrow)$ such that $F(P)>F(Q)$.  Let $a=F(P)$ and let $f$ be the function of Lemma \ref{mvo}, then $f(p)=0$ and $f\in \mathscr{S}$. Since $F(Q)<a$ we have $f(q)=-\infty$ or $f(q)$ is finite where, due to $F(P)>F(Q)$,  we  must have $f(p)=0>f(q)$. In any case $f(p)>f(q)$.

$(b)$. Let $P=(p,0)$, $\epsilon>0$ and let $O^\times =O\times(-\epsilon,\epsilon)$. By Th.\ \ref{vkg}  we can find smooth temporal functions $\check T, \hat T$ on $(M^\times,C^\downarrow)$  such that $P\in [\{Q\colon \check T(Q)<0\}\cap \{Q\colon \hat T(Q)>0\}]$ and such that $ [\{Q\colon \check T(Q)\le 0\}\cap \{Q\colon \hat T(Q)\ge 0\}]$ is compact and contained in $O^\times$. Let us denote by $\check O$ and $\hat O$ the projections of the level sets of $\check T^{-1}(0)$ and $\hat T^{-1}(0)$.
There is an open set $\tilde O\subset O$ which
satisfies all the above properties for $O$ and additionally $\tilde O\subset \check O\cap \hat O$. We redefine $\tilde O \to O$.
The mentioned level sets over $O$ read $\{q\in O\colon (q,\check f(q))\}$ and $\{q\in O\colon (q,\hat f(q))\}$, where $\check f,\hat f\colon M\to [-\infty,\infty]$ are as in Lemma \ref{mvo}, and finite over $O$. The condition $P\in [\{Q\colon \check T(Q)<0\}\cap \{Q\colon \hat T(Q)>0\}]$ implies $ \check f(p)< 0< \hat f(p)$ while the condition  $[\{Q\colon \check T(Q)\le 0\}\cap \{Q\colon \hat T(Q)\ge 0\}]\subset O^\times$ implies $\{q\colon \hat f(q)\ge \check f(q)\}\subset O$  with $\vert \hat{f} \vert $ and $\vert \check{f} \vert$ bounded by $\epsilon$ in neighborhood of $\{q\colon \hat f(q)\ge \check f(q)\}$. Thus  $ p\in \{q\colon  \hat{f}(q)>0\}\cap \{q\colon \check{f}(q)<0\}\subset \{q\colon \hat f(q)\ge \check f(q)\} \subset O$.

$(c)$. The direction $\le$ is just Th.\ \ref{cib}. For the other direction, suppose that $(p,q)\notin J_S$, then by the proof of $(a)$ there is $f\in \mathscr{S}$ such that  $f(p)=0$ and $f(q)<f(p)$, thus $[f(q)-f(p)]^+=0$, which proves that equality is attained. Let us consider the case $(p,q)\in J_S$. It is sufficient to prove that  if $D(p,q)<\infty$ for every  $\epsilon >0$ we can find $f\in \mathscr{S}$, finite on $p$ and $q$ such that $f(q)-f(p)<D(p,q)+\epsilon$. Let $P=(p,0)$, $Q=(q, D(p,q)+\epsilon)$, then $(P,Q)\notin J_S^\downarrow$, thus there is $F$ smooth temporal function on $(M^\times, C^\downarrow)$ such that $F(P)>F(Q)$. Notice that $Q'=(p, D(p,q))$ is such that $(P,Q')\in J_S^\downarrow$, thus $F(P)\le F(Q')$ by Th.\ \ref{ndc}. Let $a=F(P)$ and let us consider the level set $F^{-1}(a)$ and the associated function $f\in \mathscr{S}$ as in Lemma \ref{mvo}. Then $f(p)=0$, $D(p,q)\le f(q)<D(p,q)+\epsilon$, which implies $[f(q)-f(p)]^+<D(p,q)+\epsilon$. \ \ \,
\end{proof}

\subsection{Lorentzian embeddings}
In this section we solve the problem of characterizing the Lorentzian spacetimes isometrically embeddable in Minkowski. In this section the cone $C$ is round and $\mathscr{F}(v)=\sqrt{-g(v,v)}$, for $v\in C$, where $g$ is a Lorentzian metric. The $\mathscr{F}$-steep temporal functions will just be called ``steep temporal functions''. We also recall that $N_0(n)$ is the Nash dimension.

We recall the result by M\"uller and S\'anchez \cite{muller11} whose proof has been sketched in the Introduction.
\begin{theorem} \label{sob}
Let $(M,g)$ be an $n+1$-dimensional Lorentzian spacetime  endowed with a $C^{k}$, $3\le k\le \infty$, metric. $(M,g)$ admits a $C^k$ isometric embedding in Minkowski spacetime $E^{N,1}$ for some $N$ if and only if $(M,g)$ admits a $C^k$ steep temporal function. In that case $N=N_0(n+1)$ would do.
\end{theorem}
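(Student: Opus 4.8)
The plan is to prove the two implications separately, using the two equivalent descriptions of an $\mathscr{F}$-steep temporal function recalled in the Introduction: $t$ is steep temporal iff $\dd t$ is positive on $C$ and $-g^{-1}(\dd t,\dd t)\ge 1$, equivalently (by the reverse Cauchy--Schwarz inequality, Prop.\ \ref{nut}) iff $\dd t(y)\ge \mathscr{F}(y)=\sqrt{-g(y,y)}$ for every $y\in C$. The only external input is the Nash embedding theorem quoted at the start of the paper, applied in dimension $n+1=\dim M$, which yields an isometric imbedding of any $C^k$ Riemannian $(n+1)$-manifold into $E^{N_0(n+1)}$.

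For the direct implication, suppose $i\colon M\to E^{N,1}$ is a $C^k$ isometric embedding; after composing with the reflection $x^0\mapsto -x^0$ if necessary I may assume $i$ preserves the time orientation. I would set $f:=x^0\circ i$, which is $C^k$ since $x^0$ is a linear coordinate. For $y\in C$ the vector $w:=\dd i(y)$ is future causal in $E^{N,1}$ with $-\eta(w,w)=-g(y,y)$ by the isometry $i^*\eta=g$, and since $x^0$ is itself a steep temporal function on Minkowski spacetime ($-\eta^{-1}(\dd x^0,\dd x^0)=1$) the second characterization gives $\dd f(y)=\dd x^0(w)\ge\sqrt{-\eta(w,w)}=\mathscr{F}(y)$; moreover $\dd f(y)>0$ because $w$ is future directed. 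Hence $f$ is a $C^k$ steep temporal function. The only thing to verify is that steepness ``passes to submanifolds'', and this is exactly the computation just displayed.

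For the converse, let $t$ be a $C^k$ steep temporal function, so that $\beta:=-1/g^{-1}(\dd t,\dd t)$ satisfies $0<\beta\le 1$. I would introduce the symmetric tensor $g_R:=g+4\,\dd t^2$ and first check it is a Riemannian metric. The decomposition $TM=\ker \dd t\oplus \mathbb{R}\,\nabla^g t$ is $g$-orthogonal and is respected by $4\,\dd t^2$; on $\ker\dd t$ one has $g_R=g$, which is positive definite there because $\dd t$ is timelike, while on the line $\mathbb{R}\,\nabla^g t$ one computes $g_R(\nabla^g t,\nabla^g t)=-\beta^{-1}+4\beta^{-2}=(4-\beta)\beta^{-2}>0$ using $\beta\le 1$. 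Thus $g_R$ is positive definite. Writing $i_n\colon M\to E^{N_0(n+1)}$ for a Nash isometric embedding of $(M,g_R)$ into Euclidean space with metric $g_E$, I would set $i\colon M\to E^{N_0(n+1),1}$, $p\mapsto(2t(p),i_n(p))$. Then
\[
i^{*}\eta=-\,\dd(2t)^2+i_n^{*}g_E=-4\,\dd t^2+g_R=g,
\]
so $i$ is an isometric immersion; it is injective and a homeomorphism onto its image because $i_n$ already is, hence an embedding into $E^{N,1}$ with $N=N_0(n+1)$.

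The step I expect to be delicate is the regularity bookkeeping in the converse. Since forming $\dd t$ costs one derivative, $g_R=g+4\,\dd t^2$ is a priori only $C^{k-1}$ even though $g$ is $C^k$, and the Nash theorem then produces $i_n$, hence $i$, of class $C^{k-1}$ rather than $C^k$; the factor $2t$ is of course $C^k$, so the obstruction sits entirely in the Riemannian part. For $k=\infty$ this is a non-issue and the argument is complete as stated. For finite $k$ matching the regularities requires extra care, for instance a smoothing of $g_R$ compatible with the orthogonal decomposition $g=-4\,\dd t^2+g_R$ so that the pullback identity $i^*\eta=g$ is preserved while the full $C^k$ regularity of $i$ is recovered. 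By contrast, the positive-definiteness of $g_R$ and the steepness-to-submanifold computation in the direct implication are routine.
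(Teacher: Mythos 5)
Your proof follows the same route as the paper's (the sketch given in the Introduction and attributed to M\"uller and S\'anchez): in one direction the restriction $f=x^0\circ i$ is steep temporal because steepness passes to submanifolds via the reverse Cauchy--Schwarz characterization, and in the other direction one Nash-embeds an auxiliary Riemannian metric and sets $i=(2t,i_n)$. Your version is in fact slightly more careful on two counts: you fix the time orientation explicitly, and your formula $g_R=g+4\,\dd t^2=(4-\beta)\,\dd t^2+\bar g$ is the consistent one --- the coefficient $(4-\beta^2)$ printed in the Introduction must read $(4-\beta)$, since otherwise the pullback $-4\,\dd t^2+g_R$ would equal $-\beta^2\dd t^2+\bar g\neq g$.

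The regularity gap you flag in the converse for finite $k$ is genuine, but the remedy you suggest --- smoothing $g_R$ ``compatibly with the decomposition'' --- cannot work as stated: replacing $g_R$ by any smooth approximation $\tilde g_R$ turns $(2t,\tilde i_n)$ into an isometric embedding of $(M,-4\,\dd t^2+\tilde g_R)$ rather than of $(M,g)$, so the exact identity $i^*\eta=g$ is irretrievably lost. The correct move is to upgrade $t$, not $g_R$, and the paper's own machinery does exactly this: by Th.\ \ref{xzo} the existence of a steep temporal function (only $C^1$ is needed) already implies that $(M,g)$ is stable, and by Th.\ \ref{vkf} every stable spacetime admits a \emph{smooth} strictly steep temporal function $\tilde t$. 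Running your construction with $\tilde t$ in place of $t$, the metric $g_R=g+4\,\dd\tilde t^{\,2}$ is now exactly as regular as $g$, i.e.\ $C^k$, Nash gives a $C^k$ isometric embedding $i_n$ of $(M,g_R)$ into $E^{N_0(n+1)}$, and $i=(2\tilde t,i_n)$ is a $C^k$ isometric embedding into $E^{N_0(n+1),1}$ with $i^*\eta=g$ holding exactly. This is also how the paper deploys Theorem \ref{sob} downstream: in Section 4.2 the theorem is always fed a smooth steep temporal function produced by Th.\ \ref{vkf}, never the a priori merely $C^k$ one.
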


The necessary conditions for the embedding are outlined by the following result.

\begin{lemma} \label{dow}
Let $(M,g)$ be a $C^1$ Lorentzian submanifold of $E^{N,1}$, then $(M,g)$ is  stable.
\end{lemma}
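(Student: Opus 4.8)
The plan is to produce directly on $(M,g)$ a steep temporal function and then invoke Theorem \ref{xzo}, which guarantees that the mere existence of an $\mathscr{F}$-steep temporal function forces stability. In other words, I would reduce the lemma to the observation, already sketched in the Introduction, that the ambient time coordinate restricts to a steep temporal function on any Lorentzian submanifold.

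First I would settle the regularity bookkeeping so as to land in the hypotheses of Theorem \ref{xzo}. Since $M$ is a $C^1$ submanifold of $E^{N,1}$, its tangent spaces vary continuously and the ambient Minkowski metric $\eta$ is constant, so the induced metric $g=\eta\vert_{TM}$ is continuous; hence $\mathscr{F}(v)=\sqrt{-g(v,v)}$ is a continuous, concave, positive homogeneous function with $\mathscr{F}^{-1}(0)=\p C$. By Proposition \ref{jss} the associated cone structure $C$ is continuous, and by Proposition \ref{hll} the pair $(M,\mathscr{F})$ is a $C^0$ proper (in particular closed) Lorentz-Finsler space, the time orientation being inherited from $E^{N,1}$ (future $=$ increasing $x^0$).

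Next I would exhibit the steep temporal function as the restriction of the ambient time coordinate. On $E^{N,1}$ the canonical coordinate $x^0$ satisfies $-\eta^{-1}(\dd x^0,\dd x^0)=1\ge 1$, so by the reverse Cauchy-Schwarz inequality $\dd x^0(w)\ge \sqrt{-\eta(w,w)}$ for every $\eta$-causal $w$, while $\dd x^0(w)=w^0>0$ for every future $\eta$-causal $w\ne 0$; thus $x^0$ is a smooth steep temporal function on Minkowski. Set $f=x^0\vert_M$, which is $C^1$ because $M$ is $C^1$. The key observation is the cone inclusion: any $g$-causal vector $y\in C_p\subset T_pM\subset T_pE^{N,1}$ is also $\eta$-causal, since $\eta(y,y)=g(y,y)\le 0$. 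Consequently
\[
\dd f(y)=\dd x^0(y)\ge \sqrt{-\eta(y,y)}=\sqrt{-g(y,y)}=\mathscr{F}(y),
\]
so $f$ is $\mathscr{F}$-steep, and $\dd f(y)=y^0>0$ on $C_p\setminus\{0\}$ shows that $f$ is temporal.

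Finally, with $(M,\mathscr{F})$ a closed Lorentz-Finsler space carrying the $\mathscr{F}$-steep temporal function $f$, Theorem \ref{xzo} yields at once that $(M,\mathscr{F})=(M,g)$ is stable (and, as a byproduct, $D(p,q)\le [f(q)-f(p)]^+$ for all $p,q$). I expect no serious obstacle here: the only delicate points are the regularity verification of the second paragraph and the elementary check that steepness and temporality pass to the submanifold through the cone inclusion $C_p\subset(\text{ambient causal cone})$. The real content of the lemma is precisely the equivalence between embeddability, existence of steep temporal functions, and stability, which is already packaged in Theorem \ref{xzo}.
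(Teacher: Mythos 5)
Your proof is correct, but it takes a genuinely different route from the paper's own argument. The paper never mentions steep functions in its proof of this lemma: it perturbs the ambient metric directly, setting $\bar g'=-(1+\epsilon)(\dd x^0)^2+\sum_{i\ge 1}(\dd x^i)^2$ and pulling back $g'=i^*\bar g'$ to $M$; then $g'>g$, the spacetime $(M,g')$ is causal because $(\mathbb{R}^{N+1},\bar g')$ is, and $d'(p,q)<+\infty$ because causal curves of $(M,g')$ push forward to causal curves of the ambient widened Minkowski space, whose Lorentzian distance is finite. This verifies the two clauses of the definition of stability (stable causality and $D\le d'<\infty$) directly, using nothing beyond the definitions. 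You instead restrict the ambient time coordinate, check that it is a $C^1$ $\mathscr{F}$-steep temporal function, and invoke Theorem \ref{xzo}. This is legitimate: Theorem \ref{xzo} is proved before the lemma and does not depend on it, and it only requires $C^1$ regularity of the steep temporal function, so the $C^1$ hypothesis on the submanifold is exactly enough; your regularity bookkeeping via Propositions \ref{jss} and \ref{hll} is also sound. What your route buys is the explicit bound $D(p,q)\le[x^0(q)-x^0(p)]^+$ and a tighter conceptual link with the M\"uller--S\'anchez characterization; what it costs is reliance on the machinery of Theorem \ref{xzo}, whose proof internally performs a cone-widening construction of essentially the same kind that the paper's proof of this lemma carries out directly and more economically. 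One small imprecision to fix: the time orientation of $(M,g)$ is part of its data and need not be the one ``inherited'' from increasing $x^0$; since $M$ is connected, its future cones lie either all in the ambient future cone or all in the ambient past cone, so you should say that either $x^0\vert_M$ or $-x^0\vert_M$ is the desired steep temporal function.
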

\begin{proof}
Let $\bar g=-(\dd x^0)^2+\sum_{i\ge 1}(\dd x^i)^2$ be the metric on $E^{N,1}$ in its canonical coordinates, $g=i^* \bar g$. For $\epsilon>0$, let $\bar g'=-(1+\epsilon)(\dd x^0)^2+\sum_{i\ge 1}(\dd x^i)^2$ and $g'=i^* \bar g'$, then $g'>g$ but $(M,g')$ is still causal since $(\mathbb{R}^{N+1}, \bar g')$ is. Moreover, for every $p,q\in M$, $d'(p,q)<+\infty$ since $i(p)$ and $i(q)$ are at finite Lorentzian distance on $(\mathbb{R}^{N+1}, \bar g')$ and the $i$-image of a causal connecting curve on $M$ is a causal connecting curve on $(\mathbb{R}^{N+1}, \bar g')$. As a consequence, $D<+\infty$.
\end{proof}

Th.\ \ref{vkf} on the existence of steep time functions on stable spacetimes leads us to our main embedding result which establishes that the stable spacetimes are the Lorentzian submanifolds of Minkowski spacetime.
\begin{theorem}
Let $(M,g)$ be an $n+1$-dimensional Lorentzian spacetime  endowed with a $C^{k}$, $3\le k\le \infty$, metric. $(M,g)$ admits a $C^k$ isometric embedding in Minkowski spacetime $E^{N,1}$, for some $N>0$, if and only if $(M,g)$ is stable. In that case $N=N_0(n+1)$ would do.
\end{theorem}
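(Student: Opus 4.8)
The plan is to reduce both implications to results already in place: the Müller–Sánchez Theorem \ref{sob}, which trades isometric embeddings for steep temporal functions, Lemma \ref{dow}, which records the necessity of stability, and Theorem \ref{vkf}, which manufactures smooth strictly steep temporal functions on stable Lorentz–Finsler spaces. Throughout I would view $(M,g)$ as the Lorentz–Finsler space $(M,\mathscr{F})$ with $\mathscr{F}(v)=\sqrt{-g(v,v)}$ on the round causal cone $C$. Since $g$ is $C^k$ with $k\ge 3$, it is in particular locally Lipschitz, so by Prop.\ \ref{jss} the cone structure is locally Lipschitz, $C$ is proper (the cone being ellipsoidal), and $\mathscr{F}(\p C)=0$; thus $(M,\mathscr{F})$ is a locally Lipschitz proper Lorentz–Finsler space, which is more than enough regularity to invoke the later theorems.

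For the ``only if'' direction there is essentially nothing to do: if $(M,g)$ admits a $C^k$ isometric embedding into $E^{N,1}$ then it is a $C^1$ Lorentzian submanifold of Minkowski spacetime, and Lemma \ref{dow} gives at once that $(M,g)$ is stable.

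For the ``if'' direction I would first record that the stability of $(M,g)$ in the sense of the Introduction---stability of causality and of the finiteness of the Lorentzian distance under $C^0$ perturbations of the metric---is, by the very definition of the stable distance $D$ together with the terminology-justifying Theorem \ref{con}, precisely the assertion that $(M,\mathscr{F})$ is a \emph{stable closed Lorentz--Finsler space}, i.e.\ stably causal with $D<\infty$. This places us squarely inside the hypotheses of Theorem \ref{vkf}, which then yields a smooth strictly $\mathscr{F}$-steep function $t$. Taking the manifold $C^\infty$ (legitimate up to diffeomorphism), the adjective ``smooth'' means $C^\infty$, so $t$ is $C^\infty$ and hence $C^k$. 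Because a strictly $\mathscr{F}$-steep function is temporal, $\dd t$ is positive on $C$, while strict steepness gives $\dd t(y)>\mathscr{F}(y)=\sqrt{-g(y,y)}$ for every $y\in C$; thus $t$ is a $C^k$ steep temporal function in the Müller–Sánchez sense. Feeding $t$ into Theorem \ref{sob} produces the desired $C^k$ isometric embedding into $E^{N,1}$ with $N=N_0(n+1)$.

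The conceptual heart of the argument---and the only point demanding genuine care---is the identification in the third paragraph. One must verify that ``stable spacetime'' (the perturbative formulation of the Introduction) and ``stable closed Lorentz–Finsler space'' (the condition $D<\infty$ together with stable causality) really coincide for $\mathscr{F}=\sqrt{-g}$. Stability of causality is the familiar stable causality; the delicate half is that the persistence of the finiteness of the Lorentzian distance under cone-widening is captured exactly by $D=\inf_{\mathscr{F}'>\mathscr{F}}d'<\infty$, with Theorem \ref{con} supplying the converse statement that this finiteness genuinely survives a further widening $\bar{\mathscr{F}}>\mathscr{F}$. Once this bookkeeping between the two formulations of stability is made explicit, the theorem follows formally from Theorems \ref{vkf} and \ref{sob}, with no additional analytic input required.
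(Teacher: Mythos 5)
Your proposal is correct and follows essentially the same route as the paper's proof: necessity via Lemma \ref{dow}, and sufficiency by combining Theorem \ref{vkf} (existence of a smooth strictly steep temporal function on a stable spacetime) with the M\"uller--S\'anchez Theorem \ref{sob}. The extra bookkeeping you supply—identifying the Introduction's perturbative notion of stability with the formal condition ``stably causal and $D<\infty$'' via Theorem \ref{con}, and checking the regularity of the induced Lorentz--Finsler structure—is implicit in the paper's conventions for that section, so it elaborates rather than alters the argument.
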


\begin{proof}
One direction is proved in Lemma \ref{dow}. If $(M,g)$ is stable then by Th.\ \ref{vkf} there is a smooth steep temporal function, and by Theorem \ref{sob} there exists the  embedding.
\end{proof}

\begin{corollary}
Every {$n+1$}-dimensional  distinguishing Lorentzian spacetime $(M,g)$ endowed with a $C^{k}$, $3\le k\le \infty$, metric and for which $d$ is finite and continuous (for instance, a globally hyperbolic spacetime)   admits a $C^k$ isometric embedding into  $E^{N,1}$ for some $N>0$. Here $N=N_0(n+1)$ would do.
\end{corollary}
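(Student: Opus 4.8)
The plan is to recognize this corollary as a direct consequence of the characterization of embeddable spacetimes as the stable ones (the preceding theorem), combined with the earlier distance-theoretic result Corollary \ref{cot}. The whole task reduces to checking that a distinguishing $C^k$ Lorentzian spacetime with finite continuous distance satisfies the hypotheses of Corollary \ref{cot}, concluding stability, and then invoking the embedding theorem. There is no genuinely new analytic content; the work is bookkeeping of regularity conditions.

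First I would cast $(M,g)$ as a Lorentz-Finsler space in the sense of Sec.~\ref{ngd}: the associated cone structure $C$ is the round causal cone distribution and the Finsler function is $\mathscr{F}(v)=\sqrt{-g(v,v)}$ for $v\in C$. Since $g$ is $C^k$ with $k\ge 3$, in particular $g$ is locally Lipschitz, so by Prop.~\ref{jss} the cone structure $C$ is locally Lipschitz; being round it is proper. Thus $(M,\mathscr{F})$ is a locally Lipschitz proper Lorentz-Finsler space. Moreover $\mathscr{F}$ vanishes exactly on the lightlike vectors, i.e.\ $\mathscr{F}(\p C)=0$, so the regularity hypotheses of Corollary \ref{cot} are met.

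Next I would identify the Lorentz-Finsler distance with the ordinary Lorentzian distance: the length $\ell(x)=\int \mathscr{F}(\dot x)\,\dd t=\int \sqrt{-g(\dot x,\dot x)}\,\dd t$ is precisely the Lorentzian length of a causal curve, so $d$ coincides with the Lorentzian distance, which by hypothesis is finite and continuous. With the distinguishing assumption in force, Corollary \ref{cot} then yields that $(M,\mathscr{F})$ is stable. Finally, the preceding embedding theorem (the stable spacetimes are exactly the $C^k$ isometric submanifolds of $E^{N,1}$, with $N=N_0(n+1)$) produces the desired embedding. For the parenthetical example, a globally hyperbolic spacetime is strongly causal, hence distinguishing, and has finite continuous $d$ (e.g.\ by Prop.~\ref{ppp}); alternatively one may bypass Corollary \ref{cot} entirely and appeal directly to Th.~\ref{mab}, which guarantees stability in the globally hyperbolic case.

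Since every nontrivial step has already been established in the cited results, the only point requiring care — which I would regard as the ``main obstacle'', though a mild one — is verifying that the three regularity conditions of Corollary \ref{cot}, namely distinguishing, locally Lipschitz proper, and $\mathscr{F}(\p C)=0$, all transfer automatically from the $C^k$ Lorentzian data, and that the Lorentz-Finsler distance is literally the Lorentzian distance, so that the finiteness and continuity hypotheses apply verbatim without any reinterpretation.
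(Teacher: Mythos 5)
Your proposal is correct and follows essentially the same route as the paper: apply Corollary \ref{cot} to conclude stability and then invoke the preceding embedding theorem, with the globally hyperbolic case handled as an instance of finite continuous distance (the paper simply cites the classical fact from Beem--Ehrlich--Easley, while your fallback via Th.~\ref{mab} is equally valid). The only minor slip is citing Prop.~\ref{ppp} for continuity of $d$ in the globally hyperbolic case — it gives only finiteness — but this does not affect the argument since your alternative route covers it.
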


\begin{proof}
By Cor.\ \ref{cot} $(M,g)$ is stable. It is well known  \cite{beem96} that the  Lorentzian distance function is continuous and finite in  globally hyperbolic spacetimes.
\end{proof}

The next result due to M\"uller and S\'anchez can also be regarded as a consequence of Lemma \ref{dow} and of Th.\ \ref{maa} on the existence of stable representatives in the conformal class of a stably causal spacetime.

\begin{corollary}
A  $C^{k+1}$, $3\le k\le \infty$, Lorentzian spacetime $(M,g)$  is stably causal if and only if it admits a conformal embedding into  $E^{N,1}$ for some $N>0$.
\end{corollary}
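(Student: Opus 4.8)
The plan is to deduce the result from the main embedding theorem (stable spacetimes are the Lorentzian submanifolds of $E^{N,1}$), from Lemma \ref{dow}, and from Theorem \ref{maa}, exploiting the fact that both stable causality and the conformal class of $g$ are determined by the cone structure alone. The paper's remark preceding the statement already signals this route.

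First, for the implication from conformal embeddability to stable causality, I would suppose $i\colon M\to E^{N,1}$ is a conformal embedding, so that the induced metric $i^*\bar g$ (with $\bar g$ the canonical flat metric) is conformal to $g$, say $i^*\bar g=\Omega\, g$ for some $\Omega\colon M\to\mathbb{R}_+$. Then $(M,i^*\bar g)$ is a $C^1$ Lorentzian submanifold of $E^{N,1}$, hence stable by Lemma \ref{dow}, and in particular stably causal. Since $g$ and $i^*\bar g$ have the same light cones, they define the same cone structure $(M,C)$ and the same causal relation $J$; as stable causality is a property of $(M,C)$ alone, $(M,g)$ is stably causal.

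For the converse I would regard $(M,g)$ as the round, locally Lipschitz proper Lorentz-Finsler space $(M,\mathscr{F})$ with $\mathscr{F}(v)=\sqrt{-g(v,v)}$ on $C$. Stable causality of $(M,g)$ is stable causality of this Lorentz-Finsler space, so Theorem \ref{maa} applies and yields a smooth $\alpha\colon M\to\mathbb{R}_+$ such that $(M,\alpha\mathscr{F})$ is stable. The key identification is that $\alpha\mathscr{F}$ is precisely the Lorentz-Finsler function of the conformally rescaled metric $\alpha^2 g$, since $\sqrt{-(\alpha^2 g)(v,v)}=\alpha\sqrt{-g(v,v)}=\alpha\mathscr{F}(v)$. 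Thus $(M,\alpha^2 g)$ is a stable Lorentzian spacetime, and because $\alpha$ is smooth it retains the differentiability $C^{k+1}$ of $g$. Applying the main embedding theorem to $(M,\alpha^2 g)$ at differentiability level $k+1$ (admissible because $3\le k$ gives $4\le k+1\le\infty$) produces a $C^{k+1}$ isometric embedding $i\colon M\to E^{N,1}$, $N=N_0(n+1)$, with $i^*\bar g=\alpha^2 g$; this $i$ is exactly a conformal embedding of $(M,g)$.

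I expect the only genuinely delicate point to be the regularity bookkeeping: one must verify that the smoothness of the conformal factor $\alpha$ supplied by Theorem \ref{maa} does not lower the differentiability of the metric, so that $\alpha^2 g$ remains $C^{k+1}$ and the embedding theorem can be invoked one differentiability level above $k$. This is also what forces the hypothesis to be stated with $C^{k+1}$ rather than $C^k$. Everything else—the conformal invariance of stable causality and the identification of the Lorentz-Finsler rescaling $\mathscr{F}\mapsto\alpha\mathscr{F}$ with the metric rescaling $g\mapsto\alpha^2 g$—is immediate from the definitions.
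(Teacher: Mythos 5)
Your proof is correct and is exactly the argument the paper intends: the remark preceding the corollary derives it from Lemma \ref{dow} (Lorentzian submanifolds of $E^{N,1}$ are stable, combined with the conformal invariance of stable causality) and Theorem \ref{maa} (a stable representative $\alpha^2 g$ in the conformal class, which is then isometrically embedded by the main embedding theorem), which are precisely your two directions, including the identification $\alpha\mathscr{F}\leftrightarrow\alpha^2 g$. The only quibble is your closing speculation about the regularity: since $\alpha$ is smooth, your argument loses no derivatives and would run verbatim for a $C^k$ metric, so the $C^{k+1}$ hypothesis is inherited from the statement of M\"uller--S\'anchez rather than forced by this proof.
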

\begin{figure}[ht!]
\begin{center}
 \includegraphics[width=5cm]{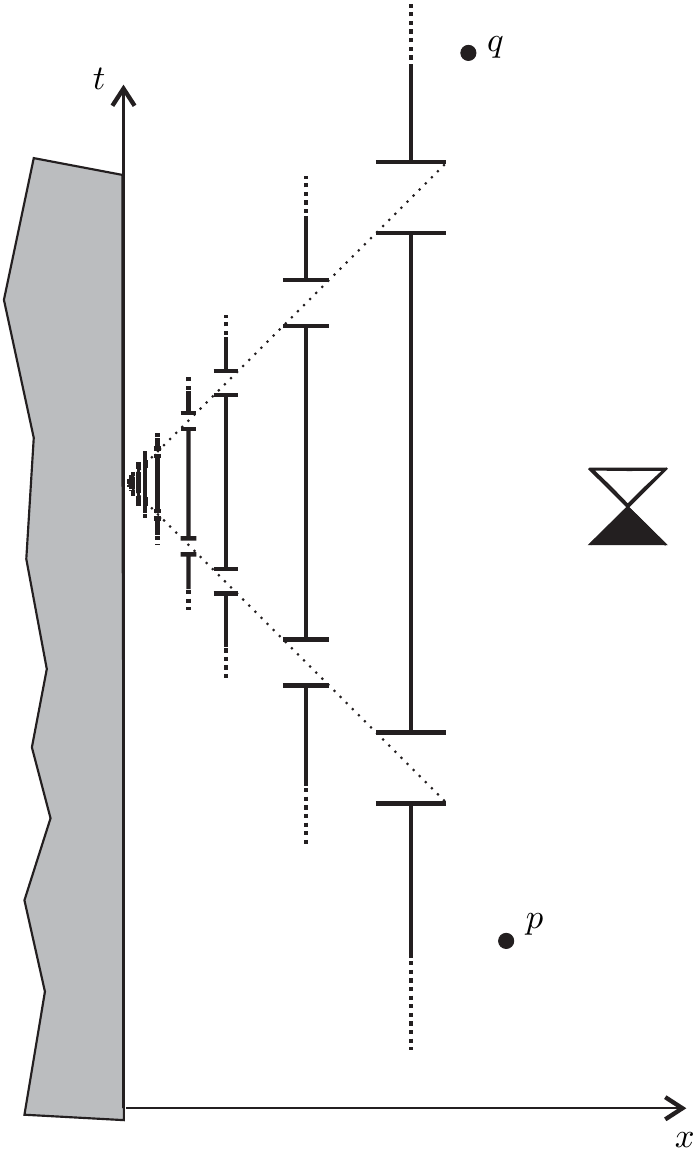}
\end{center}
\caption{An example of stably causal spacetime for which $d$ is finite but unstably so which, therefore, is not isometrically embeddable in Minkowski spacetime. The example is conformal to the displayed subset of 1+1 Minkowski spacetime, where the thick black lines and the region $x\le 0$ have been removed.} \label{acausal}
\end{figure}

\begin{example} \label{cuw}
It is natural to ask whether in stably causal Lorentzian spacetimes the finiteness of $D$ is implied by the finiteness of $d$, and so whether the finiteness of $d$ can be sufficient for the embeddability in Minkowski spacetime. The next example answers this in the negative. Let $M$ be the open set of Minkowski 1+1 spacetime depicted in Figure \ref{acausal}. The metric on $M$ is conformal to Minkowski $g=\varphi(x)(-\dd t^2+\dd x^2)$, $\phi>0$. The open set $M$ is contained in the region $x>0$ from which we remove segments and half-lines. We are really removing the same vertical element, repeated and rescaled a countable number of times.  This vertical element presents two `gates' which the causal curves of $(M,g)$ cannot traverse, thus  $M$ gets separated into a sequence of causally unrelated strips. However, $g'$-causal curves for $g'>g$ can indeed traverse the gates. Let $\varphi$ be bounded on $x>\epsilon$ for every $\epsilon>0$. If $\varphi\to +\infty$ sufficiently fast for $x\to 0$, e.g.\ $\varphi=1/x^4$, then $d$ is finite but $D$ is infinite for some pairs, e.g.\ $D(p,q)=+\infty$. The reason is that by opening the cones there are curves connecting $p$ to $q$ which pass as many gates as desired, acquiring arbitrarily large Lorentzian length thanks to their vertical development near $x=0$. Since $d'(p,q)=+\infty$ for every $g'>g$, we have $D(p,q)=+\infty$.
\end{example}

\begin{example} \label{syu}
In \cite[Sec.\ 6.1]{minguzzi12f} it has been proved that for any constants $a,b>0$, the 3+1-dimensional spacetime $\mathbb{R}^2\times \mathbb{R}^2\backslash\{(0,0)\}$,  $g=a (\dd w^2+\dd z^2) -2 \dd y \dd x+2 \frac{ab}{\sqrt{w^2+z^2}} \dd x^2 $ is causally simple but not globally hyperbolic, and that it has a finite and continuous Lorentzian distance. By the previous corollary it provides a non-trivial example of stable non-globally hyperbolic spacetime embeddable in Minkowski.
\end{example}

\subsection*{Acknowledgments}
The content of this work was presented at the meetings ``Non-regular spacetime geometry", Firenze, June 20-22, 2017, and ``Advances in General Relativity", ESI Vienna, August 28 - September 1, 2017.





\def\cprime{$'$}

\end{document}